\newtheorem{thm}{Theorem}[section]
\newtheorem{prop}[thm]{Proposition}
\newtheorem{lem}[thm]{Lemma}
\newtheorem{cor}[thm]{Corollary}
\theoremstyle{definition}
\newtheorem{definition}[thm]{Definition}
\numberwithin{equation}{section}
\theoremstyle{remark}
\newtheorem{remark}{Remark}[section]
\DeclareMathOperator{\ai}{Ai}
\DeclareMathOperator{\re}{Re}
\DeclareMathOperator{\im}{Im}
\DeclareMathOperator{\disc}{Discr}
\DeclareMathOperator{\const}{const}
\DeclareMathOperator{\result}{Resultant}
\DeclareMathOperator{\trace}{Tr}
\DeclareMathOperator{\area}{Area}
\DeclareMathOperator{\res}{Res}
\DeclareMathOperator{\supp}{supp}
\newcommand{\Boh}{\mathcal{O}}
\newcommand{\R}{{\mathbb R}}
\newcommand{\C}{{\mathbb C}}
\newcommand{\N}{{\mathbb N}}
\newcommand{\D}{\mathbb D}
\newcommand{\ga}{\gamma}
\newcommand{\Ga}{\Gamma}
\newcommand{\la}{\lambda}
\newcommand{\de}{\delta}
\newcommand\restr[2]{{
  \left.\kern-\nulldelimiterspace 
  #1 
  \vphantom{\big|} 
  \right|_{#2} 
  }}
 \newcommand\pder[2]{{
  \frac{\partial #1}{\partial #2} 
  }}
\title{The mother body phase transition in the normal matrix model}
\author[P.~Bleher]{Pavel M. Bleher}
\address[PB]{Department of Mathematical Sciences, Indiana University-Purdue University Indianapolis, 402 N. Blackford St., Indianapolis, IN 46202, USA.}
\email{pbleher@iupui.edu}
\author[G.~Silva]{Guilherme L.~F.~Silva}
\address[GS]{KU Leuven, Department of Mathematics, Celestijnenlaan 200B bus 2400, B-3001 Leuven, Belgium}
\email{guilherme.silva@wis.kuleuven.be}
\date{}
\keywords{Random matrix theory, normal matrix model, mother body problem, Schwarz function, Riemann-Hilbert problems, multiple orthogonal polynomials, 
trajectories of quadratic differentials.}
\subjclass[2010]{Primary: 60B20; Secondary:  30C99, 30Exx, 30F30, 31A15, 44A60.}
\begin{document}

\begin{abstract}
The normal matrix model with algebraic potential has gained a lot of attention recently, partially in virtue of its connection to several other topics as 
quadrature domains, inverse potential problems and the Laplacian growth.

In this present paper we consider the normal matrix model with cubic plus linear potential. In order to regularize the model, we follow Elbau \& Felder and 
introduce a cut-off. In the large size limit, the eigenvalues of the model accumulate uniformly within a certain domain $\Omega$ that we determine explicitly by finding the 
rational parametrization of its boundary. 

We also study in details the mother body problem associated to $\Omega$. It turns out that the mother body measure $\mu_*$ displays a novel phase transition that we call the {\it 
mother body phase transition}: although $\partial \Omega$ evolves analytically, the mother body measure undergoes a ``one-cut  to three-cut'' phase 
transition. 

To construct the mother body measure, we define a quadratic differential $\varpi$ on the associated spectral curve, and embed $\mu_*$ into its critical graph. Using 
deformation techniques for quadratic differentials, we are able to get precise information on $\mu_*$. In particular, this allows us to determine the phase diagram for 
the mother body phase transition explicitly.

Following previous works of Bleher \& Kuijlaars and Kuijlaars \& López, we consider multiple orthogonal polynomials associated with the normal matrix model. Applying the 
Deift-Zhou nonlinear steepest descent method to the associated Riemann-Hilbert problem, we obtain strong asymptotic formulas for these polynomials. Due to the presence of the 
linear term in the potential, there are no rotational symmetries in the model. This makes the construction of the associated $g$-functions significantly more involved, and the 
critical graph of $\varpi$ becomes the key technical tool in this analysis as well. 

\end{abstract}

\maketitle
\tableofcontents

\section{Introduction}

We are interested in the eigenvalues of the normal matrix model given by the probability distribution
\begin{equation}\label{normal_matrix_ensemble}
d\pi_n(M)=\frac{1}{\widetilde Z_n}e^{-n\trace \mathcal V(M)}dM,
\end{equation}
where $M$ is an $n\times n$ normal matrix and $\mathcal V$ is a given function of $M$. Its induced joint probability 
distribution on the eigenvalues $\lambda=(\lambda_1,\hdots,\lambda_n)\in \C^n$ is given explicitly by
\begin{equation}\label{normal_matrix_ensemble_eigenvalues}
d\pi_n(\la)=\frac{1}{Z_n}\prod_{j<k}|\lambda_j-\lambda_k|^2 e^{-n\sum_{j=1}^n \mathcal V(\lambda_j)} d\lambda,
\end{equation}
where $d\lambda$ is the Lebesgue measure on $\C^n$ and $Z_n$ is the corresponding partition function 
\cite{chau_zaboronsky_normal_matrix_model,elbau_felder_density_eigenvalues_normal_matrices}.
 
This model has been studied in the literature for different choices of the potential $\mathcal V$ and under various perspectives
\cite{ameur_hedenmalm_makarov_fluctuations,ameur_hedenmalm_makarov_ward_identities,hedenmalm_makarov_coulomb_gas,chau_zaboronsky_normal_matrix_model,lee_teodorescu_wiegmann,
marchetti_pereira_veneziani,marchetti_pereira_veneziani_2,roman_riser_thesis}. Of particular interest is the choice 
\begin{equation}\label{entire_potential}
\mathcal V(z)=\frac{1}{t_0}(|z|^2-V(z)-\overline{V(z)}), \quad z\in \C, \quad V(z)=\sum_{k=1}^{d+1} \frac{t_k}{k}z^k,\quad t_{d+1}\neq 0.
\end{equation}

As formally observed by Kostov, Krichever, Mineev-Weinstein, Wiegmann and Zabrodin \cite{kostov_et_al_tau_function}, in this situation the eigenvalues should accumulate on a 
domain $\Omega=\Omega(t_0,t_1,\hdots,t_d)$, whose boundary $\partial \Omega$ evolves in time $t_0>0$ according to the Laplacian growth model with given harmonic moments
\begin{equation}\label{harmonic_moments_area_integral}
\area(\Omega)=\pi t_0,\qquad -\frac{1}{\pi}\iint_{\C\setminus \Omega}\frac{dA(z)}{z^k}= t_k,\quad k=1,2,3,\hdots,
\end{equation}
where we set $t_{j}=0$, for $j\geq d+2$, and $dA$ is the Lebesgue measure on $\C$.

However, the model \eqref{normal_matrix_ensemble} for $\mathcal V$ given by \eqref{entire_potential} is in general purely formal. If $d\geq 2$, the density in 
\eqref{normal_matrix_ensemble} is not integrable, hence the normal matrix model is ill-defined. To overcome this essential issue, Elbau and Felder 
\cite{elbau_felder_density_eigenvalues_normal_matrices} proposed to consider a {\it cut off} model. Instead of integrating 
\eqref{normal_matrix_ensemble} over the whole set of normal matrices, they consider \eqref{normal_matrix_ensemble} as a distribution over normal matrices whose eigenvalues 
are constrained to lie within a fixed bounded domain $D\subset \C$. In this setup, the model becomes well-defined and the eigenvalue density 
\eqref{normal_matrix_ensemble_eigenvalues} can be rewritten as
\begin{equation}\label{normal_matrix_ensemble_eigenvalues_cut_off}
d\pi_n(\la)=\frac{1}{Z_n}\prod_{j<k}|\lambda_j-\lambda_k|^2 \prod_{j=1}^n \chi_D(\lambda_j) e^{-n \mathcal V(\lambda_j)} d\lambda,
\end{equation}
where $\chi_D$ is the characteristic function of $D$. Let
\begin{equation*}
d\mu_\lambda(z)=\frac{1}{n}\, \sum_{j=1}^n \de(z-\lambda_j)dA(z)
\end{equation*}
be a probability atomic measure on $D$ with atoms at the points $\la_j$ of an eigenvalue configuration $\la$ and
\begin{equation}
H(\mu)=\iint_{x\not=z} \log\frac{1}{|z-x|}d\mu(x)d\mu(z)+\int \mathcal V(z)d\mu(z)
\end{equation}
the Coulomb gas Hamiltonian, where $\mu$ is an arbitrary probability measure on  $D$, giving a distribution of
the Coulomb gas particles. Then formula \eqref{normal_matrix_ensemble_eigenvalues_cut_off} can be written as 
\begin{equation*}
d\pi_n(\la)=\frac{1}{Z_n}e^{-n^2 H(\mu_\lambda)} d\lambda,
\end{equation*}
The factor $n^2$ in the exponent suggests that, as $n\to\infty$, the measure $d\pi_n(\la)$ concentrates in a
shrinking neighborhood of the equilibrium measure $\mu_0$, which is the probability measure minimizing 
\begin{equation}\label{energy_functional}
\iint \log\frac{1}{|z-x|}d\mu(x)d\mu(z)+\int \mathcal V(z)d\mu(z)
\end{equation}
over all probability measures supported on $D$. This concentration phenomenon has been proved rigorously for instance in 
\cite{elbau_felder_density_eigenvalues_normal_matrices,hedenmalm_makarov_coulomb_gas} under different assumptions.

Under the additional requirements that
\begin{enumerate}
  \item the boundary $\partial D$ of the cut off is sufficiently smooth,
  \item the potential $\mathcal V$ has exactly one minimum 
in the cut off $D$, and 
  \item the time parameter $t_0$ is \textit{sufficiently small}, 
\end{enumerate}
Elbau and Felder proved that the (unique) probability measure on $D$ minimizing \eqref{energy_functional} has the form
\begin{equation}\label{limiting_measure_eigenvalues}
d\mu_0(z)=\frac{1}{\pi t_0}\chi_{\Omega}(z)dA(z),
\end{equation}
where $\chi_{\Omega}$ is the characteristic function of a simply connected domain $\Omega=\Omega(t_0,V)$ contained in $D$, whose boundary $\partial \Omega$ is a 
{\it polynomial curve} of degree $d$: there exists a rational function of the form
\begin{equation}\label{parametrization_polynomial_curve_general}
h(w)=rw + a_0 + \frac{a_1}{w}+\cdots + \frac{a_d}{w^d}, \quad w\in \C,\quad r>0, \ a_d\neq 0,
\end{equation}
which is injective on the boundary of the unit disc $\D$ and such that $\partial \Omega = h(\partial \D)$.
Moreover, $h$ gives a conformal map of $\C\setminus \D$ onto  $\C\setminus\Omega$. Furthermore, still for sufficiently small time $t_0$, Elbau and Felder proved rigorously the 
connection of $\Omega$ with the Laplacian growth as in \eqref{harmonic_moments_area_integral}.

Concerning local statistics, Elbau \cite{elbau_thesis} pointed out that the eigenvalue distribution \eqref{normal_matrix_ensemble_eigenvalues_cut_off} is a bona fide 
determinantal point process with kernel
$$
K_n(z,w)=e^{-\frac{n}{2}(\mathcal V(z)+\mathcal V(w))}\sum_{j=0}^n \frac{q_{j,n}(z)\overline{q_{j,n}(w)}}{h_{n,j}},
$$
where $q_{j,n}(z)=z^j+\hdots$ are {\it planar orthogonal polynomials in the external field $\mathcal V$} (or simply POP's)
\begin{equation}\label{planar_orthogonality_conditions}
\iint_D q_{j,n}(z)\overline{q_{k,n}(z)}e^{-n\mathcal V(z)}dA(z)=h_{n,j}\delta_{j,k}, \quad j,k\in \N.
\end{equation}

A natural question is to understand the behavior of the polynomials $(q_{n,n})$, and in particular of their zeros, as $n\to \infty$. Elbau showed that any weak limit $\mu_*$ of 
the zero counting measures
\begin{equation*}
d\mu_n(z)=\frac{1}{n}\, \sum_{q_{n,n}(w)=0} \de(z-w)dA(z)
\end{equation*}
should be supported in $D$ and satisfy the mother body property for $\mu_0$, namely
\begin{equation}\label{mother body_problem_general}
\int \log|s-z|d\mu_0(s)=\int \log|s-z|d\mu_*(s),\quad z\in \C\setminus D.
\end{equation}

The goal of the present paper is to study in details the cubic plus linear model
\begin{equation}\label{generic_cubic_potential}
\mathcal V(z)=\frac{1}{t_0}(|z|^2-2\re V(z)),\qquad V(z)=\frac{z^3}{3}+t_1 z,\quad -\frac{3}{4}< t_1 < \frac{1}{4}\ ,
\end{equation}
for values of $t_0$ up to a critical time $t_{0,crit}=t_{0,crit}(t_1)$. The restriction on $t_1$ above comes from the corresponding potential $\mathcal V$: as a simple analysis 
shows, if either $t_1\leq -3/4$ or $t_1\geq 1/4$, then the potential $\mathcal V$ has no local minimum. Consequently, for any choice of cut off $D$, the corresponding eigenvalues 
should accumulate on the boundary of $D$ as $n\to \infty$, so that the limiting shape of eigenvalues $\Omega$ intersects $\partial D$ and is very sensitive to the precise choice 
of 
$D$.

For $t_1$ as in \eqref{generic_cubic_potential}, we are able to determine precisely the underlying phase diagram in the $(t_0,t_1)$-plane, as is shown in 
Figure~\ref{phase_diagram_full}. Given $t_1$ as above and for all positive values of $t_0$ up to the critical time $t_{0,crit}=t_{0,crit}(t_1)$, we find the parametrization $h$ 
in 
\eqref{parametrization_polynomial_curve_general} of the corresponding polynomial curve $\partial \Omega$, and consequently we obtain the associated limiting eigenvalue 
distribution \eqref{limiting_measure_eigenvalues} explicitly. When $t_0\to t_{0,crit}$, the boundary of $\partial \Omega$ creates either one (if $t_1>0$) or two (if $t_1<0$) 
cusps: we call this phenomenon the {\it cusp phase transition}. We are able to compute the critical curve $(t_{0,crit},t_1)$ explicitly.

\begin{figure}[t]
 \centering
 \begin{overpic}[scale=1]
{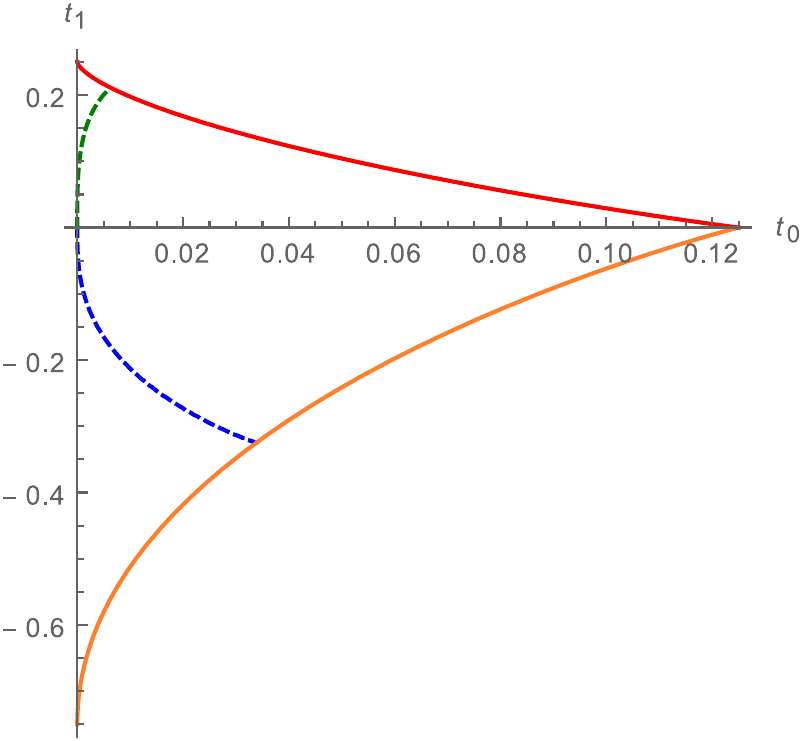}
 \end{overpic}
 \caption{Phase diagram on the $(t_0,t_1)$-plane: the solid curves are the pairs $(t_{0,crit},t_1)$, which correspond to the cusp phase transition. The dashed curves are the 
points of the form $(\tilde t_{0,crit},t_1)$ and they correspond to the mother body phase transition. We should remark that the dashed curves on the 
upper and lower half planes are \emph{not analytic continuation of each other}: the curve on the upper half plane 
is algebraic, whereas the one on the lower half plane is transcendental.}\label{phase_diagram_full}
\end{figure}

For all values of $(t_0,t_1)$ in our phase diagram, we also study the mother body equation \eqref{mother body_problem_general} in detail. We construct a measure $\mu_*$ satisfying 
\eqref{mother body_problem_general} and whose support consists of a finite union of analytic arcs. Furthermore, we find another critical time $\tilde t_{0,crit}=\tilde 
t_{0,crit}(t_1)<t_{0,crit}$ such that if $t_0<\tilde t_{0,crit}$, then $\supp\mu_*$ consists of one analytic arc, whereas for $t_0>\tilde t_{0,crit}$ the set $\supp\mu_*$ consists 
of three analytic arcs meeting at a common point. We call this transition the {\it mother body phase transition}.

The critical value $\tilde t_{0,crit}$ is depicted in Figure~\ref{phase_diagram_full}. We emphasize that the boundary of $\partial \Omega$ depends analytically on the parameters 
$(t_0,t_1)$ in the phase diagram. So what our results show is that the measure $\mu_*$ solving the mother body problem \eqref{mother body_problem_general} displays a 
phase 
transition that is not felt by $\partial \Omega$. We refer to Figure~\ref{phase_diagram_complete} for a visualization of this transition. To our knowledge, this is the first time 
such a phenomenon is described.

\begin{figure}[t]
 \centering
 \begin{overpic}[scale=1]
{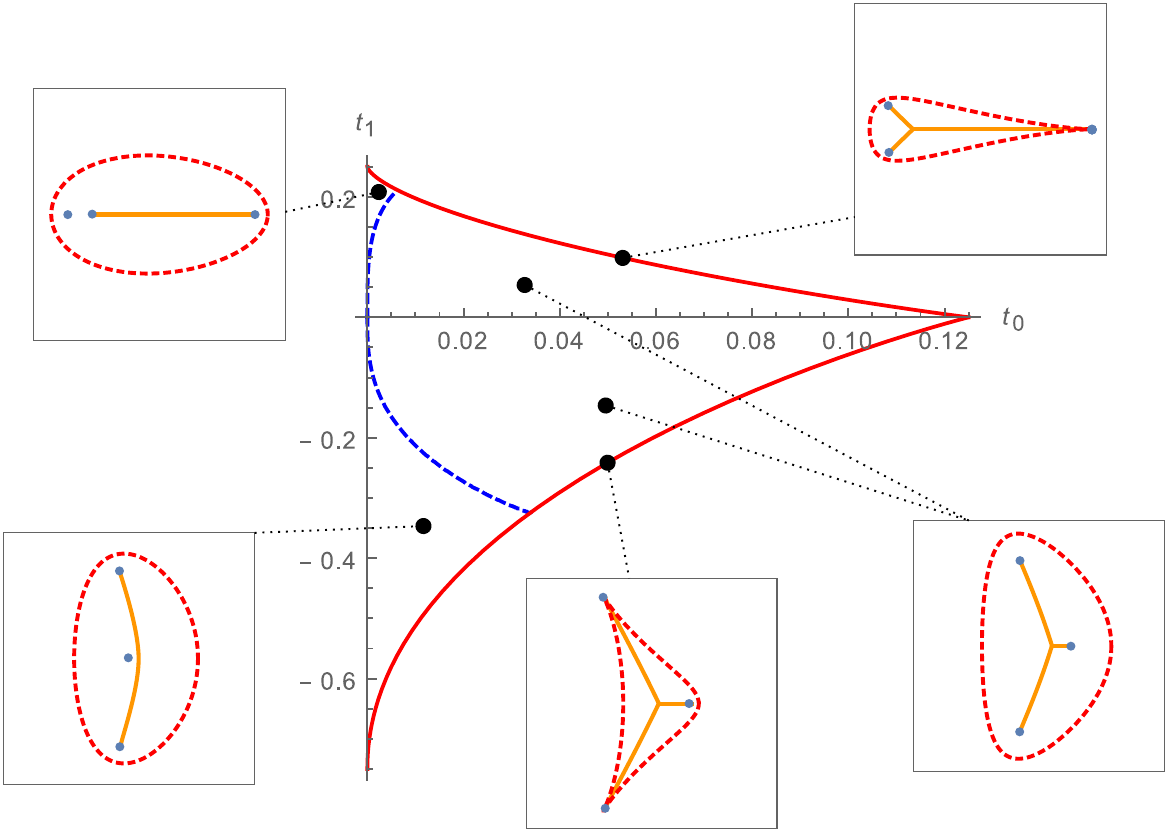}
 \end{overpic}
 \caption{The phase diagram and the various topological configurations of $\partial \Omega$ (dashed lines in each frame) and $\supp\mu_*$ ( 
solid lines in each frame). The dots inside each frame are the branch points of the associated spectral curve.}\label{phase_diagram_complete}
\end{figure}

The construction of the measure $\mu_*$ is, in our opinion, our main technical contribution. We construct a quadratic differential on the three-sheeted Riemann surface (a.k.a. 
spectral curve) associated with the model, and lift the problem \eqref{mother body_problem_general} to the trajectories of this quadratic differential. Following the 
recent work of Martínez-Finkelshtein and the second author \cite{martinez_silva},  we use deformation techniques to describe the critical graph of this quadratic differential.  
When we project some of these trajectories back to the complex plane, we recover the measure $\mu_*$. This 
critical graph displays some phase transitions; these transitions are determined by the critical value $\tilde t_{0,crit}$ and correspond to the phase transitions of 
$\supp\mu_*$.

We follow previous works of Bleher and Kuijlaars \cite{bleher_kuijlaars_normal_matrix_model} and Kujlaars and López \cite{kuijlaars_lopez_normal_matrix_model} and introduce a new 
sequence of polynomials $(P_{n,n})$ which has, at the heuristic level, the same asymptotic behavior as the sequence $(q_{n,n})$ in 
\eqref{planar_orthogonality_conditions}. We characterize this sequence 
$(P_{n,n})$ in terms of multiple orthogonality of non-hermitian type, and using the Deift-Zhou steepest descent method we obtain strong asymptotic formulas for these polynomials. 
As one of the consequences, we prove that the sequence of zero counting measures for $(P_{n,n})$ converges weakly to the measure $\mu_*$.

The case $t_1=0$ was studied before by Bleher and Kuijlaars \cite{bleher_kuijlaars_normal_matrix_model}, and it plays a substantial role here as well. Many auxiliary results 
require separate proofs depending whether $t_1<0$ or $t_1>0$, and a complete analysis would make the already lengthy paper much longer. So our main focus is in the case 
$t_1>0$, whose main results are stated in Section~\ref{section_statement_of_results}. The corresponding main results for $t_1<0$ are stated and discussed in Section 
\ref{section_negative_t}, but their proofs are analogous and not provided.

\section{Statement of main results}\label{section_statement_of_results}

\subsection{Phase diagram of the cubic model}\label{section_phase_diagram}

For the choice of potential \eqref{entire_potential} with $V$ as in \eqref{generic_cubic_potential}, the rational function $h$ in 
\eqref{parametrization_polynomial_curve_general} assumes the form
$$
h(w)=r w +a_0 +\frac{a_1}{w}+\frac{a_2}{w^2}.
$$

According to Elbau and Felder \cite[page~442]{elbau_felder_density_eigenvalues_normal_matrices}, the coefficients of $h$ should be related to the normal matrix model 
\eqref{normal_matrix_ensemble} with cubic potential \eqref{generic_cubic_potential} through the system of equations
\begin{equation}\label{system_elbau_felder}
\left\{
\begin{aligned}
& \frac{a_2}{r^2} =1, \\
& \frac{a_1}{r}-\frac{2a_0a_2}{r^2} = 0,\\
& a_0-\frac{a_0a_1}{r}-\frac{a_2(2a_1r-a_0^2)}{r^2} = t_1,\\
& r^2-a_1^2-2a_2^2  = t_0.
\end{aligned}
\right.
\end{equation}
This system of equations is obtained by computing the (expected) exterior harmonic moments of $\partial \Omega$ in terms of the rational function $h$. Solving in terms of $r$, 
it gives us
\begin{equation}\label{values_a1_a2}
a_1=2 r a_0,\quad a_2 =r^2, \quad a_0=\frac{1-4r^2+\delta \sqrt{(1-4r^2)^2-t_1}}{2}.
\end{equation}
where $\delta =\pm 1$. When $t_1\to 0$, we expect $a_0=0$ \cite{bleher_kuijlaars_normal_matrix_model}; thus $\delta=-1$, and $h$ reduces to
\begin{equation}\label{rational_parametrization}
h(w)=rw + a_0 + \frac{2a_0 r}{w}+\frac{r^2}{w^2},
\end{equation}
where
\begin{equation}\label{definition_a_0}
a_0 = a_0(t_0,t_1) =  \frac{1-4r^2-\sqrt{(1-4r^2)^2-4t_1}}{2},
\end{equation}
and $r=r(t_0,t_1)$ is to be determined. Using the values \eqref{values_a1_a2} in the last equation in \eqref{system_elbau_felder}, after a lengthy calculation we conclude that $r$ 
should be a root of the polynomial
\begin{multline}\label{definition_polynomial_p}
p(x) =  128 x^{10}-124 x^8+ (64 t_0 -16 t_1 + 36)x^6 \\ 
         + \left(16 t_1^2+8 t_1-28 t_0 -3\right)x^4 +  t_0(2-8 t_1)  x^2 + t_0^2
\end{multline}
When $t_1\to 0$, we compare again with the results by Kuijlaars and the first author \cite{bleher_kuijlaars_normal_matrix_model} to get that $r$ should be the smallest positive 
root of $p$. An analysis of the discriminant of $p$ then leads us to consider the domain $\mathcal F$ on the $(t_0,t_1)$-plane, bounded by the segments
\begin{align*}
(t_0,0), & \quad 0\leq t_0 \leq 1/8, \\
(0,t_1), & \quad 0\leq t_1\leq 1/4,
\end{align*}
and the critical curve $\Gamma_c$, parametrized by
\begin{equation}\label{definition_critical_curve_phase_transition}
\Gamma_c: \quad t_0=-6s^4+4s^3,\quad t_1=4s^3-3s^2+1/4, \quad 0 \leq s \leq 1/2. 
\end{equation}

\begin{prop}\label{proposition_definition_r}
For $(t_0,t_1)\in \mathcal F$, the polynomial $p$ in \eqref{definition_polynomial_p} has a smallest positive root $r=r(t_0,t_1)$, which is simple.
\end{prop}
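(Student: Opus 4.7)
The plan is to exploit two structural features of $p$: it is even in $x$, and it is quadratic in the parameter $t_0$. First I set $y = x^2$ and $q(y) := p(\sqrt{y})$, so that $q$ is a polynomial of degree $5$ in $y$ with positive leading coefficient and $q(0) = t_0^2$. Positive real zeros of $p$ correspond bijectively (with multiplicities) to positive real zeros of $q$, so the task reduces to producing a smallest positive zero of $q$ that is simple. Writing $q$ as a quadratic polynomial in $t_0$, one has $q(y) = t_0^2 + y\,A(y;t_1)\,t_0 + y^2\,B(y;t_1)$ with $A,B$ explicit polynomials read off from \eqref{definition_polynomial_p}, giving two analytic branches
\begin{equation*}
t_0^\pm(y;t_1) = \tfrac{y}{2}\Bigl(-A(y;t_1) \pm \sqrt{A(y;t_1)^2 - 4 B(y;t_1)}\Bigr),
\end{equation*}
whose branch points in $y$ (the zeros of $A^2 - 4B$) are exactly the values of $y$ at which $q$ has a double root.

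The argument then proceeds by continuity from the Bleher-Kuijlaars slice $t_1 = 0$. There, \cite{bleher_kuijlaars_normal_matrix_model} provides, for each $0 < t_0 < 1/8$, a smallest positive simple root $r(t_0, 0)$ of $p$. For $(t_0, t_1)$ in the interior of $\mathcal{F}$, I would show that $A(y;t_1)^2 - 4B(y;t_1)$ does not vanish at $y = r(t_0, t_1)^2$; equivalently, $\partial_x p \neq 0$ at $x = r$, and the implicit function theorem extends $r$ analytically from the segment $\{(t_0, 0) : 0 < t_0 < 1/8\}$ throughout the interior of $\mathcal{F}$. Since simple positive roots cannot reach $0$ without forcing $q(0) = t_0^2 = 0$, and cannot leave the real axis without first meeting the discriminant locus, the extension remains simple and positive on the full interior by a straightforward monodromy/continuity argument on the simply connected domain $\mathcal{F}^\circ$; the boundary cases follow by continuity, away from the degenerate corner $(0,1/4)$.

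The main obstacle is the identification of $\Gamma_c$ with the first locus, as one moves from $t_1 = 0$, on which the smallest positive root of $q$ collides with another root. The parametrization \eqref{definition_critical_curve_phase_transition} is, as one can guess, a uniformization of $\Gamma_c$ by the merging root itself: substituting $y = s^2$ into the simultaneous system $q(y) = 0$, $q'(y) = 0$ and solving in $(t_0, t_1)$ should return \eqref{definition_critical_curve_phase_transition} after eliminating the merging root. Morally this is the resultant computation $\operatorname{Res}_x(p, \partial_x p) = 0$, and while tedious, it is elementary: once $s$ is fixed the system becomes linear in $t_0$ and polynomial in $t_1$. A sanity check at the endpoints $s = 0$ and $s = 1/2$ confirms consistency with the corners $(0, 1/4)$ and $(1/8, 0)$ of $\mathcal{F}$, and at $(t_0, t_1) = (1/8, 0)$ one verifies directly that $y = 1/4$ is a double root of $q$, agreeing with the Bleher-Kuijlaars critical time and pinning down the correct branch.
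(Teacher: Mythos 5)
Your change of variables $y=x^2$, $q(y)=p(\sqrt y)$, and the broad continuity strategy are indeed how the paper proceeds: it works with the degree-$5$ polynomial $\widetilde p(x)=p(\sqrt x)$ (your $q$), computes its discriminant in the main variable, and tracks the smallest positive root by continuity. However, your proposal contains one genuine error and then defers essentially all of the substantive work to a computation you have not carried out.

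The error is the claimed equivalence between ``$\partial_x p\neq 0$ at $x=r$'' and ``$A(y;t_1)^2-4B(y;t_1)\neq 0$ at $y=r^2$.'' The quantity $A^2-4B$ is the discriminant of $q(y,\cdot)$ viewed as a quadratic \emph{in $t_0$}; its vanishing at a fixed $y_0$ says that the two values of $t_0$ making $y_0$ a root of $q$ coincide. This is unrelated to whether $y_0$ is a multiple root of $q(\cdot,t_0)$ for that fixed $t_0$. In fact, at a simple zero of $A^2-4B$ the two branches $t_0^\pm(y)$ have a square-root singularity, and from
\begin{equation*}
q(y,t_0^*)=\bigl(t_0^*-t_0^+(y)\bigr)\bigl(t_0^*-t_0^-(y)\bigr)
=\Bigl[\tfrac{y_0A(y_0)-yA(y)}{2}\Bigr]^2-\tfrac{y^2}{4}\bigl(A^2-4B\bigr)
\end{equation*}
one sees that $q(\cdot,t_0^*)$ vanishes to \emph{first} order at $y_0$, i.e.\ the root is simple; conversely, a genuine double root of $q$ in $y$ occurs where an analytic branch $t_0^\pm$ has $(t_0^\pm)'(y)=0$, with no vanishing of $A^2-4B$. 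The object you actually need is $\disc(q;y)$, a polynomial in $(t_0,t_1)$. The paper computes it explicitly as $\disc(\widetilde p;x)=16384\,t_0^2\,p_1(t_0)\,p_2(t_0)\,p_3(t_0)$ in \eqref{discriminant_p}; your $A^2-4B$ is a polynomial in $(y,t_1)$ and cannot play this role.

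Second, the remark ``while tedious, it is elementary'' severely underestimates what remains. It is not enough to observe that $\Gamma_c$ sits in the discriminant locus; one must show that $\disc(\widetilde p;x)$ does not vanish \emph{anywhere} in the interior of $\mathcal F$, and that on $\Gamma_c$ the root that becomes double is the one being tracked. The paper does this through several non-routine lemmas: showing $p_2,p_3$ have no zeros in $(0,1/8)$ (Lemma~\ref{lemma_discriminant_p}); locating the three roots of $p_1$ and identifying the largest with the parametrization \eqref{definition_critical_curve_phase_transition} of $\Gamma_c$ (Lemma~\ref{lemma_roots_p_1}); ruling out triple roots via resultants (Lemma~\ref{lemma_triple_root}); and then following all five roots of $\widetilde p$ by a continuity argument anchored at $t_0\to 0$, where $\widetilde p(x)=x^2 q(x)$ factors through an explicit cubic, split into three cases according to the sign of $\disc(q;x)$ (Lemma~\ref{lemma_tilde_p}). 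The paper anchors at $t_0=0$ rather than your $t_1=0$; either anchor works, but only after the discriminant-locus analysis is actually in place, and that analysis is the heart of the proof, not a footnote.
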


Proposition \ref{proposition_definition_r} is proved in Section \ref{analysis_polynomial_p}. Theorem \ref{theorem_monotonicity_r} in Section \ref{analysis_polynomial_p} gives an 
important refinement of Proposition \ref{proposition_definition_r}.

The curve $\Gamma_c$ corresponds to the cusp phase transition for $t_1>0$. A plot of the region $\mathcal F$ and the critical curve $\Gamma_c$ are displayed in 
Figure~\ref{phase_diagram}. 

\begin{figure}[t]
 \centering
 \begin{overpic}[scale=1]
{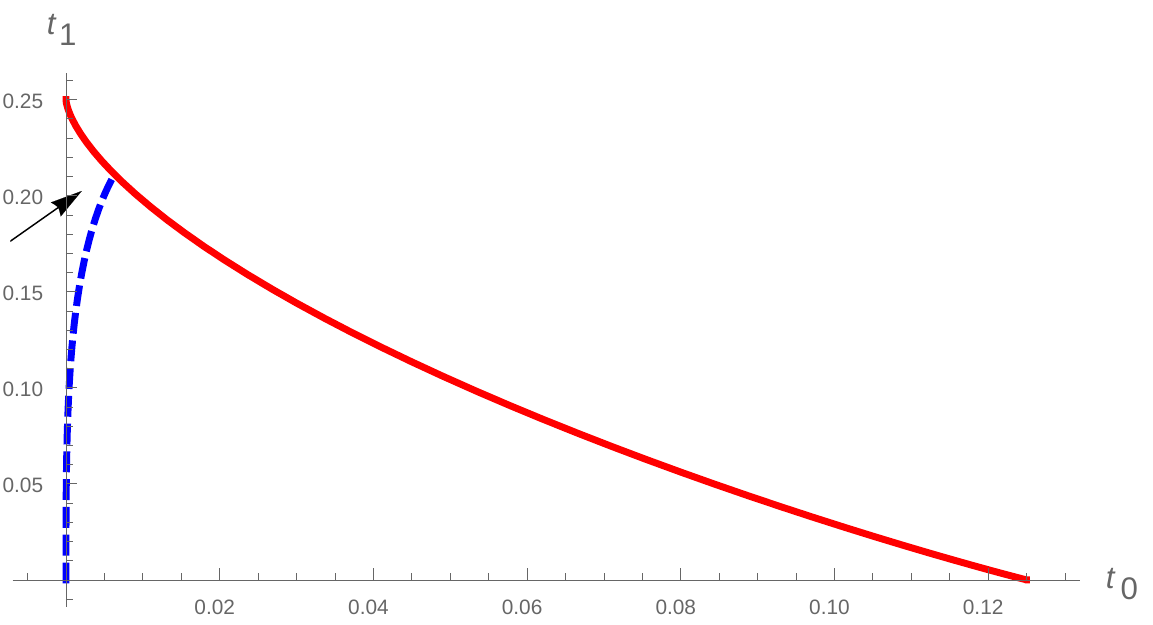}
\put(40,23){$\Gamma_c$}
\put(8,25){$\gamma_c$}
\put(25,15){$\mathcal F_1$}
\put(-3,32){$\mathcal F_2$}
 \end{overpic}
 \caption{Phase diagram on the $(t_0,t_1)$-plane: The solid curve is $\Gamma_c$ given in \eqref{definition_critical_curve_phase_transition} and the dashed curve is $\gamma_c$ 
given in \eqref{critical_curve_limiting_zero_distribution}. The region determined by $\Gamma_c$, $\gamma_c$ and the $t_0$-axis is the three-cut region 
$\mathcal F_1$, whereas the region between $\Gamma_c$, $\gamma_c$ and the $t_1$-axis is the one-cut region $\mathcal F_2$. Axes are scaled differently.}\label{phase_diagram}
\end{figure}

Since the function $r=r(t_0,t_1)$ is a simple root of the polynomial $p$, it is analytic with respect to both variables $t_0,t_1$, as long as $(t_0,t_1)\in \mathcal F$, and it is 
continuous up to the boundary of $\mathcal F$. The function $r$ plays a fundamental role in the rest of the paper.

\subsection{The limiting boundary of eigenvalues as a polynomial curve}

As it was heuristically explained in Section~\ref{section_phase_diagram}, the rational function $h$ in \eqref{rational_parametrization}, with coefficients $a_0$ and $r$ as in 
\eqref{definition_a_0} and Proposition~\ref{proposition_definition_r}, should give the parametrization of the polynomial curve $\partial \Omega$ for the potential 
\eqref{generic_cubic_potential}. This is rigorously established by our next result.

\begin{thm}\label{theorem_rational_parametrization_polynomial_curve}
For $(t_0,t_1)\in \mathcal F$ and $r,a_0$ given respectively by Proposition \ref{proposition_definition_r} and equation \eqref{definition_a_0}, the rational function $h$ is 
injective on $\overline \C\setminus \D$. The image $h(\partial \D)$ is an analytic curve whose interior is a simply connected domain $\Omega$ with area given by
$$
\area(\Omega)=\pi t_0.
$$

Moreover, the exterior harmonic moments of $\Omega$ with respect to any point $\zeta \in\Omega$ are given by
\begin{equation}\label{harmonic_moments}
\frac{1}{2\pi i}\int_{\partial \Omega} \frac{\overline z}{(z-\zeta)^k} dz=
\begin{cases}
t_1+\zeta^2, & k=1, \\
2\zeta, & k=2,\\
1,   & k=3, \\
0,   & k\geq 4.
\end{cases}
\end{equation}
\end{thm}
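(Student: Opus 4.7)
The plan is to prove, in order: (i) $h$ has no critical points in $\overline{\C}\setminus\D$; (ii) $h|_{\partial \D}$ is injective, so $h(\partial \D)$ is a simple closed analytic curve; (iii) $h$ is a conformal bijection from $\overline{\C}\setminus \D$ onto the unbounded component of $\C\setminus h(\partial \D)$; and then (iv) use this parametrization together with a Laurent expansion at infinity to compute the area and harmonic moments via the residue theorem.

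\textbf{Conformal bijectivity.} Differentiating, $h'(w)=r-2a_0 r/w^2-2r^2/w^3$, so the finite critical points of $h$ are the roots of the cubic $q(w):=w^3-2a_0 w-2r$. For (i), I would show that all three roots of $q$ lie strictly inside $\D$ for $(t_0,t_1)\in\mathcal F$. A direct elimination between $|w|=1$, $q(w)=0$, and the defining identity \eqref{definition_a_0} for $a_0$ shows that $q$ has a root on $|w|=1$ precisely when $p(r)=0$ for the polynomial $p$ in \eqref{definition_polynomial_p}. Since Proposition~\ref{proposition_definition_r} characterizes $r$ as the smallest positive root of $p$ and $r$ varies continuously in $(t_0,t_1)$, continuity from the benchmark case $t_1=0$, small $t_0$ (where $q(w)=w^3-2r$ has three roots of modulus $(2r)^{1/3}<1$, cf.~\cite{bleher_kuijlaars_normal_matrix_model}) forces all zeros of $q$ to remain inside $\D$ throughout $\mathcal F$. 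Step (ii) is then an immersion statement (from (i)) plus a continuity argument: $h|_{\partial\D}$ is injective at the benchmark and, since no critical values of $h$ reach $\partial\D$ inside $\mathcal F$, a new self-intersection cannot be created. Step (iii) is the argument principle applied to $h(w)-\zeta$: for $\zeta$ in the unbounded component of $\C\setminus h(\partial\D)$ the equation has exactly one root in $|w|>1$, while for $\zeta$ in the bounded component none. Defining $\Omega$ as the bounded component then yields the required conformal bijection $\overline{\C}\setminus \D\to\overline{\C}\setminus\overline\Omega$.

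\textbf{Area and harmonic moments.} Parametrize $\partial\Omega$ by $z=h(w)$, $|w|=1$ counterclockwise; since $h(w)\sim rw$ near $\infty$ with $r>0$ and $h$ is orientation-preserving, this agrees with the standard orientation of $\partial\Omega$. As $r,a_0\in\R$ on $\mathcal F$, for $|w|=1$ one has $\overline{h(w)}=h(1/w)=r^2w^2+2a_0 r w+a_0+r/w$. Green's theorem gives
\[
\area(\Omega) = \frac{1}{2i}\oint_{\partial\Omega}\overline z\,dz = \frac{1}{2i}\oint_{|w|=1}h(1/w)\,h'(w)\,dw,
\]
and the residue at $w=0$ equals $r^2-4a_0^2 r^2-2r^4$, which by the last equation of \eqref{system_elbau_felder} is $t_0$. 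For the moments, the change of variables $z=h(w)$ gives the form identity
\[
\frac{h(1/w)\,h'(w)}{(h(w)-\zeta)^k}\, dw \;=\; \frac{S(z)}{(z-\zeta)^k}\, dz,
\]
where $S(z):=h(1/h^{-1}(z))$ is the Schwarz function of $\partial\Omega$. By step (iii), $h(w)-\zeta$ has no zero in $|w|\geq 1$ when $\zeta\in\Omega$, so $S$ extends meromorphically to $\overline{\C}\setminus\overline\Omega$ with a double pole at $\infty$. Inverting $h$ at infinity as $w=z/r-a_0/r+O(1/z)$ and substituting into $S(z)=r^2w^2+2a_0 r w+a_0+r/w$ yields
\[
S(z)=z^2 + a_0(1-a_0-4r^2) + \frac{r^2}{z} + O(z^{-2}), \qquad z\to\infty,
\]
and squaring the defining relation \eqref{definition_a_0} gives the algebraic identity $a_0(1-a_0-4r^2)=t_1$. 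Only the polynomial part $z^2+t_1$ of $S$ contributes to the residue of $S(z)(z-\zeta)^{-k}$ at $\infty$ for $k\geq 1$, and extracting the coefficient of $z^{-1}$ case-by-case reproduces exactly the four values listed in \eqref{harmonic_moments}.

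\textbf{Main obstacle.} The genuine technical point is part (i): converting the characterization of $r$ as the smallest positive root of $p$ into the geometric statement that all three zeros of $q(w)=w^3-2a_0 w-2r$ stay strictly inside $\D$ throughout $\mathcal F$, and connecting the boundary of this property with the cusp curve $\Gamma_c$. Once the conformal parametrization is in place, the area formula and the harmonic moments are algebraic consequences of the Elbau--Felder system \eqref{system_elbau_felder} and of the Laurent expansion of the Schwarz function at infinity.
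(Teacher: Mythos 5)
Your decomposition (conformal bijectivity, then Green's theorem and residues at infinity via the Schwarz function) is the same skeleton the paper uses, and the area and moment calculations are correct in detail, including the identity $a_0(1-4r^2-a_0)=t_1$.

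However, two parts of your ``conformal bijectivity'' step are off. First, the assertion that ``$q$ has a root on $|w|=1$ precisely when $p(r)=0$'' cannot be right as written: $r$ is \emph{defined} to be a root of $p$, so $p(r)=0$ holds throughout $\mathcal F$, while the zeros of $q(w)=w^3-2a_0w-2r$ should stay strictly in $\D$ there. What is actually true is that a zero of $q$ on $\partial\D$ forces $w=1$ (the conjugate-pair case leads to $a_0=(4r^2-1)/2<0$, impossible) and hence $1-2a_0-2r=0$; this is exactly the cusp equality \eqref{equality_r_a_0_cusp}, which holds only on the boundary curve $\Gamma_c$, not in the open region $\mathcal F$. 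The paper exploits this directly: Theorem \ref{theorem_monotonicity_r} gives the strict inequality $2a_0+2r<1$ on $\mathcal F$, and then Lemma \ref{lemma_zeros_derivative_h} applies Rouch\'e to $w^3-2a_0w-2r$ versus $w^3$ on $\partial\D$ to conclude at once that all critical points of $h$ lie inside the unit disc. This is shorter and more robust than the continuity-from-the-benchmark argument you propose, and it sidesteps any discussion of the discriminant of $p$.

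Second, your step (ii) is not automatic: an immersion of $\partial\D$ deformed continuously without developing critical points can still acquire a \emph{non-local} self-intersection (a figure-eight), so ``no critical values reach $\partial\D$, hence no new self-intersection'' does not follow by itself. The paper avoids this by appealing to the general fact that a rational $h$ of the form \eqref{parametrization_polynomial_curve_general} with $r>0$ is injective on $\overline{\C}\setminus\overline{D}_{R_0}$ once $R_0<1$, where $R_0$ is the critical radius (Corollary \ref{corollary_h_parametrization}); this univalence-outside-the-critical-radius fact, which encapsulates the degree/argument-principle computation you sketch in step (iii), is the clean way to close this gap. Once $h$ is known to be a biholomorphism of $\overline\C\setminus\overline\D$ onto $\overline\C\setminus\overline\Omega$ (this is also what Theorem \ref{theorem_schwarz_function} uses to produce the Schwarz function $S$), your residue computations for the area and for \eqref{harmonic_moments} go through exactly as in the paper.
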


Theorem \ref{theorem_rational_parametrization_polynomial_curve} is proved in Section~\ref{section_analysis_rational_parametrization}.

The equality $\partial \Omega=h(\partial \D)$ with $h$ rational and injective on $\partial \D$ says that $\partial \Omega$ is a {\it polynomial curve}, in the sense of Elbau and 
Felder \cite{elbau_felder_density_eigenvalues_normal_matrices}. We refer to Figure~\ref{chain_polynomial_curves} for examples.

\begin{figure}[t]
\begin{overpic}[scale=1]{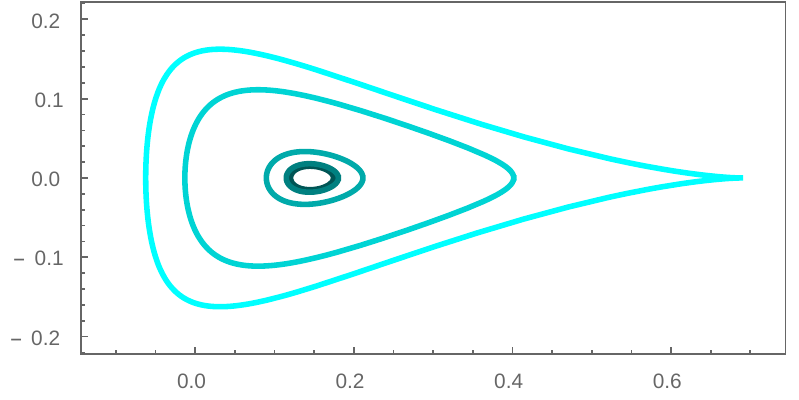}
\end{overpic}
\caption{The boundary $\partial \Omega$ corresponding to $t_1=\frac{1}{8}$ and $t_0=\frac{1}{2500}, \frac{1}{1700}, \frac{1}{500}, \frac{1}{50}, \frac{5}{128}$ (from 
dark to bright color, respectively). For the pair $(t_0,t_1)=(\frac{5}{128},\frac{1}{8})$, which belongs to the critical curve $\Gamma_c$, a cusp is created at the boundary. 
Numerical 
output.}\label{chain_polynomial_curves}
\end{figure}

If $0\in \Omega$ and $k\geq 3$, then the integrals in \eqref{harmonic_moments_area_integral} are convergent for $\zeta=0$ and Green's Theorem applied to $\C\setminus\Omega$ gives 
us the formula
$$
-\frac{1}{\pi}\iint_{\C\setminus\Omega} \frac{dA(z)}{z^k}=\frac{1}{2\pi i}\int_{\partial \Omega} \frac{\overline z}{z^k}\; dz,
$$
leading to the connection previously mentioned in \eqref{harmonic_moments_area_integral}.

For a measure $\nu$ on the complex plane, we denote by
\begin{equation}\label{definition_log_potential}
 U^{\nu}(z)=\int \log\frac{1}{|s-z|}d\nu(s),\quad z\in \C,
\end{equation}
its logarithmic potential, which is harmonic in $\C\setminus\supp\nu$ and superharmonic in $\C$ \cite{Saff_book}.

Define the measure $\mu_0$ by the formula
\begin{equation}\label{definition_planar_measure}
d\mu_0(z)=\frac{1}{\pi t_0}\chi_\Omega(z)dA(z).
\end{equation}

\begin{thm}\label{thm_equilibrium_measure}
 Suppose $(t_0,t_1)\in \mathcal F$. Consider the potential $\mathcal V$ in \eqref{entire_potential} for the cubic polynomial $V$ in \eqref{generic_cubic_potential}. There exist 
an open neighborhood $D$ of $\overline{\Omega}$ and a constant $l$ such that
 \begin{align}
 2U^{\mu_0}(z)+\mathcal V(z)  & = l,\quad z\in \overline{\Omega}, \label{variational_equality_planar_measure} \\
 2U^{\mu_0}(z)+\mathcal V(z) & > l,\quad z\in D\setminus \overline{\Omega}.\label{variational_inequality_planar_measure}
 \end{align}
\end{thm}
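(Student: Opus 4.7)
The plan is to derive both variational conditions directly from Theorem~\ref{theorem_rational_parametrization_polynomial_curve}, using the explicit exterior harmonic moments of $\Omega$ in \eqref{harmonic_moments}. The equality \eqref{variational_equality_planar_measure} on $\overline\Omega$ will follow from a combination of the Cauchy--Pompeiu formula with \eqref{harmonic_moments}, while the strict inequality \eqref{variational_inequality_planar_measure} will follow from a local analysis near $\partial\Omega$.

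For the equality, first note that since $\mu_0$ has density $1/(\pi t_0)$ on $\Omega$, standard potential theory gives $\Delta U^{\mu_0}=-2/t_0$ on $\Omega$, while $V$ holomorphic gives $\Delta \mathcal V=4/t_0$ on all of $\C$. Hence $2U^{\mu_0}+\mathcal V$ is harmonic on $\Omega$. To upgrade harmonicity to constancy I would compute, for $z\in\Omega$,
\begin{equation*}
\partial_z\bigl(2U^{\mu_0}(z)+\mathcal V(z)\bigr) \;=\; -C^{\mu_0}(z) + \frac{1}{t_0}\bigl(\overline{z} - V'(z)\bigr),
\end{equation*}
where $C^{\mu_0}(z):=\int (z-s)^{-1}\,d\mu_0(s)$ is the Cauchy transform of $\mu_0$. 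Applying the Cauchy--Pompeiu formula to $f(s)=\overline s$ on $\Omega$ yields
\begin{equation*}
C^{\mu_0}(z) \;=\; \frac{1}{\pi t_0}\iint_{\Omega}\frac{dA(s)}{z-s} \;=\; \frac{\overline{z}}{t_0} - \frac{1}{t_0}\cdot\frac{1}{2\pi i}\int_{\partial\Omega}\frac{\overline{s}}{s-z}\,ds.
\end{equation*}
Substituting this identity and invoking \eqref{harmonic_moments} with $k=1$ and $\zeta=z$ to identify the boundary integral with $z^2+t_1=V'(z)$, all terms cancel: $\partial_z(2U^{\mu_0}+\mathcal V)\equiv 0$ on $\Omega$. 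Being real-valued with vanishing $\partial_z$-derivative, this function is constant on $\Omega$, and I denote this constant by $l$. Continuity of $U^{\mu_0}$ then extends the equality to $\overline\Omega$, proving \eqref{variational_equality_planar_measure}.

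For the inequality, set $F:=2U^{\mu_0}+\mathcal V-l$. Then $F\equiv 0$ on $\overline\Omega$, while $\Delta F = 4/t_0>0$ on $\C\setminus\overline\Omega$, so $F$ is strictly subharmonic in the exterior of $\Omega$. Since the density of $\mu_0$ is bounded, $U^{\mu_0}$ is $C^1$ on $\C$, so $\nabla F$ also vanishes along $\partial\Omega$. At any $z_0\in\partial\Omega$, write $\nu,\tau$ for the outward unit normal and unit tangent. Differentiating twice the identity $F\circ\gamma\equiv 0$ for an arc-length parametrization $\gamma$ of $\partial\Omega$, and using $\nabla F=0$ on $\partial\Omega$, we obtain $F_{\tau\tau}|_{\partial\Omega}^{\mathrm{ext}}=0$, hence $F_{\nu\nu}|_{\partial\Omega}^{\mathrm{ext}}=\Delta F = 4/t_0>0$. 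A Taylor expansion along the outward normal then yields $F(z)\sim (2/t_0)\,\mathrm{dist}(z,\partial\Omega)^{2}$ as $z$ approaches $\partial\Omega$ from the exterior, so $F>0$ in some tubular exterior neighborhood of $\partial\Omega$. By compactness of $\partial\Omega$ this neighborhood can be made uniform, and setting $D$ equal to the union of $\Omega$ with such a neighborhood yields \eqref{variational_inequality_planar_measure}.

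The main technical point, in my view, is the passage from subharmonicity plus $C^1$ matching at $\partial\Omega$ to the strict inequality: because both $F$ and $\nabla F$ vanish on $\partial\Omega$, a Hopf-type boundary-point argument cannot be applied directly and one is forced to work at the second-order level. This works cleanly here because $\partial\Omega$ is analytic, being the image of the rational parametrization $h$ of Theorem~\ref{theorem_rational_parametrization_polynomial_curve}; the tangential/normal decomposition is smooth and the second-order expansion above can be made uniform in $z_0\in\partial\Omega$.
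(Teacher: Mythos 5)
Your proof is correct and takes essentially the same approach as the paper: vanishing of $\partial_z(2U^{\mu_0}+\mathcal V)$ inside $\Omega$ (the paper obtains this from a Cauchy-transform identity proved the same way, via Cauchy--Pompeiu and the Schwarz function, where you instead invoke \eqref{harmonic_moments} directly) yields constancy, followed by a second-order expansion in the normal direction at $\partial\Omega$ exploiting that the Laplacian equals $4/t_0>0$ in the exterior. The paper packages the boundary step through the explicit real-analytic extension $\widetilde{\mathcal U}_0$ built from the Schwarz function $S$ and analyzes its Hessian; your tangent/normal Hessian decomposition of $F$ from the exterior, justified by analyticity of $\partial\Omega$, carries the same regularity content and produces the same second-order behavior.
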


As an immediate consequence, we recover the connection with the normal matrix model.

\begin{thm}\label{theorem_density_eigenvalues}
Suppose $(t_0,t_1)\in \mathcal F$, and $\mathcal V$ and $V$ are as in Theorem~\ref{thm_equilibrium_measure}. Suppose in 
addition that a given domain $D$ contains $\overline{\Omega}$ and satisfies the conclusions of Theorem~\ref{thm_equilibrium_measure}. Then the measure $\mu_0$ in 
\eqref{definition_planar_measure} is the limiting eigenvalue distribution of the normal matrix model with cubic potential \eqref{generic_cubic_potential} and cut off $D$.
\end{thm}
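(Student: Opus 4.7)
The plan is to reduce the statement to the variational characterization of the equilibrium measure, which is already supplied by Theorem~\ref{thm_equilibrium_measure}, and then invoke the standard concentration result for cut-off normal matrix models.

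First I would recall the general potential-theoretic framework behind the cut-off model \eqref{normal_matrix_ensemble_eigenvalues_cut_off}: as explained in the introduction (following Elbau--Felder \cite{elbau_felder_density_eigenvalues_normal_matrices} and Hedenmalm--Makarov \cite{hedenmalm_makarov_coulomb_gas}), the normalized eigenvalue counting measures $d\mu_\lambda$ converge weakly (almost surely, with respect to $d\pi_n$) to the unique probability measure $\mu_{\mathrm{eq}}$ on $\overline{D}$ that minimizes the weighted logarithmic energy \eqref{energy_functional}. The minimizer is characterized by the Euler--Lagrange conditions: there exists a constant $\ell$ such that
\begin{equation*}
2U^{\mu_{\mathrm{eq}}}(z)+\mathcal V(z)=\ell \text{ on } \supp \mu_{\mathrm{eq}}, \qquad 2U^{\mu_{\mathrm{eq}}}(z)+\mathcal V(z)\geq \ell \text{ q.e.\ on } \overline{D},
\end{equation*}
and conversely any probability measure on $\overline{D}$ satisfying these two conditions must coincide with $\mu_{\mathrm{eq}}$.

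Next I would verify that the measure $\mu_0$ in \eqref{definition_planar_measure} satisfies exactly these variational (in)equalities on the given cut-off $D$. Since $\supp\mu_0=\overline{\Omega}\subset D$ by construction, and because $D$ is assumed to contain $\overline{\Omega}$ and to satisfy the conclusions of Theorem~\ref{thm_equilibrium_measure}, we have
\begin{equation*}
2U^{\mu_0}(z)+\mathcal V(z)=\ell \text{ on } \overline{\Omega}=\supp\mu_0, \qquad 2U^{\mu_0}(z)+\mathcal V(z)> \ell \text{ on } D\setminus\overline{\Omega},
\end{equation*}
which are precisely the Euler--Lagrange conditions, with a strict inequality outside the support. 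Hence by the uniqueness part of the variational characterization, $\mu_0=\mu_{\mathrm{eq}}$, and Theorem~\ref{theorem_density_eigenvalues} follows from the aforementioned convergence theorems.

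There is essentially no obstacle here beyond correctly matching the hypotheses: the only point that requires a brief justification is that the variational inequality, which Theorem~\ref{thm_equilibrium_measure} provides on the open set $D\setminus\overline{\Omega}$, is sufficient to guarantee uniqueness of $\mu_0$ as the global minimizer over probability measures on $\overline D$, and not merely as a local minimum. This is standard and follows from the strict convexity of the logarithmic energy on the class of finite signed measures of compact support and total mass zero; combined with the equality on $\supp\mu_0$, this promotes $\mu_0$ to the unique global minimizer. No separate computation is needed.
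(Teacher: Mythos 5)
Your proposal follows essentially the same route as the paper: you observe that Theorem~\ref{thm_equilibrium_measure} supplies exactly the Euler--Lagrange variational conditions characterizing the equilibrium measure on $D$ in the field $\mathcal V$, conclude by uniqueness that $\mu_0$ is that equilibrium measure (the paper cites \cite[Theorem~I.3.3]{Saff_book} for this), and then invoke the known concentration result for the cut-off normal matrix model (the paper cites \cite[Theorem~4.1]{elbau_felder_density_eigenvalues_normal_matrices}). Your extra remark on strict convexity is a correct, if slightly more verbose, way of supplying the uniqueness step that the paper handles by reference.
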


Theorems \ref{thm_equilibrium_measure} and \ref{theorem_density_eigenvalues} are proved in Section \ref{section_proof_s_property}. The evolution of the domain $\Omega$ in time 
$t_0>0$ is displayed in Figure~\ref{chain_polynomial_curves}.

\subsection{Spectral curve}

The pairs of points
$$
(\xi,z)=(h(w^{-1}),h(w)),\quad w\in \overline \C
$$
are expected to be solutions of an algebraic equation of the form
$$
F(\xi,z)=0,
$$
where $F$ is a symmetric polynomial in $\xi$ and $z$, with $\deg_z F=\deg_\xi F=3$. This equation is known in random matrix terminology as the {\it spectral curve} or {\it master 
loop equation}. Using equations \eqref{system_elbau_felder}--\eqref{definition_a_0}, after a 
lengthy calculation we arrive at 
\begin{equation}\label{spectral_curve}
F(\xi,z):=\xi^3+z^3-z^2\xi^2-t_1(\xi^2+z^2)-(1+t_0)z\xi+(B+t_1)(\xi+z)+A=0,
\end{equation}
where
\begin{equation}\label{equation_B}
 B = 4 a^3_0 r^2+4 a^2_0 r^4+4 a_0 r^4-a_0 r^2
\end{equation}
and
\begin{multline}\label{equation_A}
 A = -\left(a_0^4 \left(1-4 r^2\right)-2 a_0^3 \left(1-2 r^2\right)^2 \right. \\
      \left. +a_0^2 \left(-4 r^6+6 r^4-3 r^2+1\right)+r^2 \left(r^2-1\right)^3\right),
\end{multline}
$r=r(t_0,t_1)$ is as in Proposition \ref{proposition_definition_r} and $a_0$ is given in \eqref{definition_a_0}. 

For each $z$, there are three solutions $\xi_1,\xi_2,\xi_3$ to \eqref{spectral_curve}, labeled according to the expansions
\begin{align}\label{asymptotics_xi}
& \xi_1(z)=z^2+t_1+\frac{t_0}{z}+\mathcal O(z^{-2}), & \nonumber \\
& \xi_2(z)=-z^{1/2}+\frac{t_1}{2z^{1/2}}-\frac{t_0}{2z}+\mathcal O(z^{-3/2}), & \mbox{ as } z\to \infty, \\
& \xi_3(z)=z^{1/2}-\frac{t_1}{2z^{1/2}}-\frac{t_0}{2z}+\mathcal O(z^{-3/2}),\nonumber
\end{align}
where the square root is considered with a branch cut along the negative axis, and the branch is chosen such that $z^{1/2}>0$ along the positive axis.
In particular, the solution $\xi_1$ is meromorphic at $z=\infty$, whereas $\xi_2,\xi_3$ are branched at $z=\infty$.

A map $w\mapsto (\psi(w),\phi(w))$, $\psi,\phi$ rational, is a {\it rational parametrization} of \eqref{spectral_curve} if
$$
F(\psi(w),\phi(w))=0,\quad w\in \overline \C.
$$

A rational parametrization as above is called {\it proper} if every but a finite number of pairs $(\xi,z)$ satisfying \eqref{spectral_curve} is generated by
$(\xi,z)=(\psi(w),\phi(w))$ for exactly one choice $w\in \overline \C$.

\begin{thm}\label{theorem_schwarz_function}
Suppose that $(t_0,t_1)\in \mathcal F$. The map 
\begin{equation}\label{rational_function_bijection}
w\mapsto (\xi,z)=(h(w^{-1}),h(w)),\quad w \in \overline \C,
\end{equation}
where $h$ is given in \eqref{rational_parametrization}, is a proper rational parametrization of the algebraic equation \eqref{spectral_curve}.

Moreover, the function $\xi_1$ is the Schwarz function of $\partial \Omega$. That is, there exists a simply connected domain $E\subset \overline \C$, containing $\partial \Omega$ 
and the point $\infty$, 
and such that $\xi_1$ admits a meromorphic continuation to $E$, with pole only at $\infty$, and satisfies
\begin{equation}\label{equation_schwarz_function}
\xi_1(z)=\overline z,\quad z\in \partial \Omega.
\end{equation}
\end{thm}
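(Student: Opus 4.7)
\emph{Plan overview.} The proof splits into two parts: first the rational parametrization via a direct symmetric-function computation, then the identification of $\xi_1$ with the Schwarz function of $\partial\Omega$ via Schwarz reflection and the reality of the coefficients of $h$.

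\emph{Parametrization and the spectral curve.} Multiplying $h(w)=z$ by $w^2/r$ turns it into the cubic
\[
r w^3 + (a_0 - z) w^2 + 2 a_0 r w + r^2 = 0,
\]
so for generic $z$ there are three preimages $w_1,w_2,w_3 \in \overline{\C}$ obeying Vieta
\[
\sum_i w_i = \frac{z - a_0}{r}, \qquad \sum_{i<j} w_i w_j = 2 a_0, \qquad \prod_i w_i = -r.
\]
Setting $\xi_i := h(1/w_i) = r^2 w_i^2 + 2 a_0 r w_i + a_0 + r/w_i$, one expresses each elementary symmetric polynomial of $(\xi_i)$ as a polynomial in $z$ via Newton's identities. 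Identifying the resulting monic cubic in $\xi$ with the one extracted from \eqref{spectral_curve} simultaneously verifies $F(h(1/w), h(w))\equiv 0$ and the explicit formulas \eqref{equation_B}--\eqref{equation_A} for $A$ and $B$. The identities $t_1 = a_0(1 - 4 r^2 - a_0)$, obtained by squaring \eqref{definition_a_0}, and $t_0 = r^2(1 - 4 a_0^2 - 2 r^2)$, which is the fourth equation of \eqref{system_elbau_felder} rewritten using $a_1=2a_0 r$, $a_2=r^2$, are used to close the algebra. Properness of the parametrization is then immediate from Theorem~\ref{theorem_rational_parametrization_polynomial_curve}: each $z \in \overline{\C}\setminus\overline{\Omega}$ has a unique preimage $w=h^{-1}(z)\in\overline{\C}\setminus\overline{\D}$, so the map $w\mapsto (h(1/w),h(w))$ is generically injective.

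\emph{Schwarz function.} By Theorem~\ref{theorem_rational_parametrization_polynomial_curve}, $h\colon \overline{\C}\setminus\D \to \overline{\C}\setminus\Omega$ is a conformal bijection and $\partial\Omega = h(\partial\D)$ is an analytic Jordan curve; in particular $h'\neq 0$ on $\partial\D$, and Schwarz reflection gives an $\varepsilon > 0$ such that $h$ is injective on the open set $U_\varepsilon := \{w\in\overline{\C} : |w| > 1-\varepsilon\}$. Set
\[
E := h(U_\varepsilon),
\]
an open, simply connected subset of $\overline{\C}$ containing $\partial\Omega$ and $\infty$, and define
\[
\xi_1(z) := h\!\left(\tfrac{1}{h^{-1}(z)}\right), \qquad z\in E.
\]
The map $w\mapsto h(1/w)$ is holomorphic on $U_\varepsilon$ except for a double pole at $w=\infty$ (from the double pole of $h$ at $0$), which corresponds via $h^{-1}$ to $z=\infty$; so $\xi_1$ is meromorphic on $E$ with a unique pole at $\infty$. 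Because $r,a_0 \in \R$, the function $h$ has real coefficients, so $\overline{h(w)} = h(\overline{w})$; writing $z = h(e^{i\theta})\in\partial\Omega$,
\[
\xi_1(z) = h(e^{-i\theta}) = h(\overline{e^{i\theta}}) = \overline{h(e^{i\theta})} = \overline{z},
\]
establishing \eqref{equation_schwarz_function}. To match with the branch in \eqref{asymptotics_xi}, one inverts $h$ at infinity to get $h^{-1}(z) = z/r - a_0/r + O(1/z)$ and substitutes; the leading expansion $z^2 + (a_0 - a_0^2 - 4 a_0 r^2) + r^2(1 - 4 a_0^2 - 2 r^2)/z + O(1/z^2)$ coincides with $z^2 + t_1 + t_0/z + O(1/z^2)$ by the two key identities above.

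\emph{Main obstacle.} The principal technical difficulty lies in the symmetric-function identification in Step~1: the expressions \eqref{equation_B}--\eqref{equation_A} are quartic in $a_0$ and require careful bookkeeping of degree-six polynomial expressions in the $w_i$, with several nontrivial cancellations that resolve only after simultaneously invoking both the relation for $t_1$ coming from \eqref{definition_a_0} and the closing equation for $t_0$ in \eqref{system_elbau_felder}.
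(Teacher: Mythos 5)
Your proof of the Schwarz-function part is essentially the paper's argument: both rely on $h$ being a biholomorphism from a neighborhood of $\overline\C\setminus\D$ to a neighborhood of $\overline\C\setminus\Omega$, and on the reality of the coefficients of $h$ to deduce $\xi_1(z)=\overline z$ on $\partial\Omega$. The only cosmetic difference is that you invoke Schwarz reflection to get injectivity on $\{|w|>1-\varepsilon\}$, whereas the paper gets the same neighborhood more directly from the critical radius bound $R_0<1$ of Lemma~\ref{lemma_zeros_derivative_h} and Corollary~\ref{corollary_h_parametrization} (since $h$ is a rational function it is already globally defined, so Schwarz reflection is not the natural tool—what is needed is just that all critical points of $h$ lie strictly inside $\D$, which that lemma supplies). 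Your branch identification via the explicit $1/z$-expansion is a valid alternative to the paper's observation that $\xi_1$ is the unique solution unbranched at $\infty$, and your computed coefficients $a_0-a_0^2-4a_0r^2=t_1$ and $r^2(1-4a_0^2-2r^2)=t_0$ check out against \eqref{system_change_coordinates_a}--\eqref{system_change_coordinates_b}.

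The genuine gap is in properness. You claim it is ``immediate'' from the injectivity of $h$ on $\overline\C\setminus\overline\D$, but that only excludes two of the three cases. Suppose $w_1\neq w_2$ and $\big(h(w_1^{-1}),h(w_1)\big)=\big(h(w_2^{-1}),h(w_2)\big)$. If $|w_1|,|w_2|>1$, injectivity of $h$ outside $\overline\D$ forces $w_1=w_2$; if $|w_1|,|w_2|<1$, the same applied to $1/w_1,1/w_2$ forces $w_1=w_2$. But the case $|w_1|>1>|w_2|$ (so $|1/w_1|<1<|1/w_2|$) is not ruled out: $h$ has degree $3$, so generically $z\in\overline\C\setminus\overline\Omega$ has one $h$-preimage outside $\overline\D$ and two inside, and the argument gives no reason why one of the inside preimages cannot combine with the outside one to produce the same pair $(\xi,z)$. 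Ruling this out requires comparing degrees, which is exactly what the paper does: since $\max\{\deg_\xi F,\deg_z F\}=3=\max\{\deg h(w),\deg h(w^{-1})\}$, properness follows from the general degree criterion for rational parametrizations (\cite[Theorem~4.21]{book_rational_curves}). Your Vieta/symmetric-function computation is a legitimate (if more laborious) alternative to the paper's unstated ``straightforward computation'' for verifying $F(h(w^{-1}),h(w))\equiv 0$, but by itself it does not yield properness; you still need the degree comparison or an equivalent argument.
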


Theorem~\ref{theorem_schwarz_function} is proved in Section~\ref{section_analysis_rational_parametrization}.

When $(t_0,t_1)\in \Gamma_c$, the function $\xi_1$ becomes branched in $\partial \Omega$ (see Section~\ref{section_behavior_critical_time} below).

In particular, the existence of the Schwarz function implies that $\overline \C\setminus \overline\Omega$ is a quadrature domain, we refer the reader to 
\cite{aharonov_shapiro,lee_makarov} for more details.

\subsection{Phase transition of the spectral curve}

The curve
\begin{equation}\label{critical_curve_limiting_zero_distribution}
\gamma_c: \quad t_0=-2s^{12}+s^{6}-9s^{10},\quad t_1=\frac{3}{2}s^{2}-\frac{9}{4}s^{4}-6s^{8},\quad 0\leq s \leq \frac{1}{2}
\end{equation}
splits the parameter region $\mathcal F$ into two parts $\mathcal F_1$, $\mathcal F_2$. The first part, $\mathcal F_1$, consists of points $(t_0,t_1)$ that lie to the right of 
$\gamma_c$, whereas the second part, $\mathcal F_2$, consists of points that lie to the left of $\gamma_c$, see Figure~\ref{phase_diagram}. For reasons that will become apparent 
later, we call $\mathcal F_1$ the {\it three-cut} region and $\mathcal F_2$ the {\it one-cut} region.

\begin{thm}\label{theorem_singular_points_spectral_curve}
For $(t_0,t_1)\in \mathcal F\setminus \gamma_c$, the spectral curve \eqref{spectral_curve} has three branch points $z_0,z_1,z_2$ of order two, and no other branch 
points. These points are located as follows.
\begin{enumerate}[label=(\roman*)]
\item For $(t_0,t_1)\in \mathcal F_1$,  
$$
\im z_1<0,\quad \overline{z}_2=z_1,\quad z_0>0,\quad z_0,z_1,z_2\in \Omega. 
$$

\item For $(t_0,t_1)\in \mathcal F_2$,
$$
z_2<z_1<z_0,\quad z_0>0,\quad z_0,z_1 \in \Omega.
$$

\item For $(t_0,t_1)\in \gamma_c$, \eqref{spectral_curve} has a branch point $z_0>0$ of order two and a branch point $z_1=z_2\in \R$ of order three, with $z_1<z_0$. Furthermore, 
$z_0,z_1\in \Omega$.
\end{enumerate}

Moreover, $\infty$ is always a branch point of order two of \eqref{spectral_curve}.

Finally, for $(t_0,t_1)\in\mathcal F$, \eqref{spectral_curve} has three critical points $\hat z_0,\hat z_1,\hat z_2 \in \C\setminus \overline \Omega$, satisfying
$$
\hat z_0>z_0, \quad \hat z_1,\hat z_2\in \C\setminus \R,\quad \im \hat z_1<0,\quad \overline{\hat z}_2=z_1.
$$
\end{thm}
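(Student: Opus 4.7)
The plan is to exploit the rational parametrization from Theorem~\ref{theorem_schwarz_function}: since $w \mapsto (h(1/w), h(w))$ is a proper parametrization of \eqref{spectral_curve}, the branch points of the $z$-projection are exactly the critical values of $w \mapsto h(w)$. A direct computation gives
$$h'(w) = \frac{r}{w^3}\bigl(w^3 - 2 a_0 w - 2 r\bigr) =: \frac{r\,Q(w)}{w^3},$$
so the three finite branch points are $z_j = h(w_j^*)$, where the $w_j^*$ are the roots of $Q(w) := w^3 - 2 a_0 w - 2r$. The expansions \eqref{asymptotics_xi}, with $\xi_2,\xi_3 \sim \pm z^{1/2}$, produce the order-two branch point at $\infty$; and since the spectral curve is rational, Riemann--Hurwitz for the degree-three cover $\overline\C_w \to \overline\C_z$ yields total ramification $4$, so these four points exhaust all branch points.

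The three cases are governed by the discriminant
$$\disc(Q) = 32 a_0^3 - 108 r^2.$$
Substituting the parametrization \eqref{critical_curve_limiting_zero_distribution} of $\gamma_c$ into \eqref{system_elbau_felder}, one verifies that along $\gamma_c$ one has $r = s^3$ and $a_0 = \tfrac{3}{2} s^2$, and in particular $\disc(Q) = 0$. A boundary sign analysis ($\disc(Q) < 0$ on $\{t_1 = 0\}$ where $a_0 = 0$; $\disc(Q) > 0$ near $\{t_0 = 0,\,t_1 > 0\}$ where $r \to 0$ and $a_0 > 0$), combined with $\gamma_c$ being the only zero locus of $\disc(Q)$ inside $\mathcal F$, identifies $\mathcal F_1$ with $\{\disc(Q) < 0\}$ and $\mathcal F_2$ with $\{\disc(Q) > 0\}$. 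In (i) Vieta's formulas give $w_0^* |w_1^*|^2 = 2r > 0$, forcing the real root $w_0^*$ to be positive; then $z_0 = h(w_0^*) \in \R$ and, because $h$ has real coefficients, $z_1 = \overline{z_2}$. In (ii), Vieta's formulas (sum $= 0$, product $= 2r$) yield one positive and two negative real roots of $Q$, and the ordering $z_2 < z_1 < z_0$ follows from the monotonicity pattern of $h|_\R$ on the intervals delimited by its critical points. In (iii), on $\gamma_c$ one has $Q(w) = (w + u)^2(w - 2u)$ with $r = u^3$, $a_0 = \tfrac{3}{2}u^2$, $u \in (0, 1/2)$; direct substitution gives $z_1 = z_2 = h(-u)$, $z_0 = h(2u)$, $z_0 - z_1 = \tfrac{27}{4}u^4 > 0$ and $z_0 > 0$, producing the asserted order-three branch point. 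The containment $z_j \in \Omega$ combines $|w_j^*| < 1$ (since $h$ has no critical points on $\overline\C \setminus \overline{\mathbb D}$ by Theorem~\ref{theorem_rational_parametrization_polynomial_curve}) with continuity in $(t_0,t_1)$ from the base case $t_1 = 0$ of \cite{bleher_kuijlaars_normal_matrix_model}: the $z_j$ can leave $\Omega$ only if $|w_j^*|$ crosses $1$, which would force a cusp on $\partial\Omega$ and thus $(t_0,t_1) \in \Gamma_c$.

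The critical points $\hat z_j$ are the $z$-coordinates of the three nodes of the embedded curve \eqref{spectral_curve}, i.e., points where two distinct branches of $F = 0$ cross without ramification. Under the parametrization such nodes correspond to unordered pairs $\{w_1, w_2\}$, $w_1 \neq w_2$, satisfying $h(w_1) = h(w_2)$ and $h(1/w_1) = h(1/w_2)$. Setting $p = w_1 + w_2$, $q = w_1 w_2$ and eliminating $p$, the system collapses to the single cubic
$$q^3 - 2 a_0 q^2 + 2 a_0 q - 1 = (q - 1)\bigl(q^2 + (1 - 2 a_0)\, q + 1\bigr) = 0.$$
The factor $q = 1$ forces $w_2 = 1/w_1$, so the corresponding node lies on the diagonal $\xi = z$ and its $z$-coordinate $\hat z_0$ is real (for interior $(t_0,t_1) \in \mathcal F$ the quantity $(1 - 2a_0)^2 - 4r^2$ stays positive, giving real $w_1 \neq w_2$). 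The quadratic factor has discriminant $(1 - 2a_0)^2 - 4$, which is negative throughout $\mathcal F$ because $a_0 \in (-\tfrac{1}{2}, \tfrac{3}{2})$ there; its roots $q_\pm$ are therefore complex conjugate with $|q_\pm| = 1$, yielding two conjugate nodes with $\hat z_1 = \overline{\hat z_2}$. Since $|w_1 w_2| = |q| = 1$ in each case, at least one of $w_1, w_2$ has $|w| > 1$, whence $\hat z_j \in h(\{|w| > 1\}) = \C \setminus \overline\Omega$ by Theorem~\ref{theorem_rational_parametrization_polynomial_curve}. The inequality $\hat z_0 > z_0$ is obtained by comparing the explicit formulas at a convenient boundary point and propagating by continuity. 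The main technical obstacle is the sharp global control of the signs of $\disc(Q)$, of $(1-2a_0)^2 - 4r^2$ and of $(1 - 2a_0)^2 - 4$ throughout $\mathcal F$; these rely on the monotonicity properties of $r(t_0, t_1)$ and $a_0(t_0, t_1)$ encoded in Theorem~\ref{theorem_monotonicity_r}.
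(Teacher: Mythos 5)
Your overall strategy closely tracks the paper's: both use the rational parametrization $(h(1/w),h(w))$, identify the finite branch points with the critical values of $h$ (zeros of the cubic $w^3-2a_0w-2r$), and locate the nodes by looking for pairs $w_1\neq w_2$ mapped to the same point. Your elimination via $p=w_1+w_2$, $q=w_1w_2$ to the factored cubic $(q-1)(q^2+(1-2a_0)q+1)=0$ is a tidier computation than the paper's substitution $v=w+1/w$ in \eqref{equations_singular_points_w_1}--\eqref{equations_singular_points_w_2}, and it cleanly isolates the self-symmetric real node ($q=1$, forcing $w_2=1/w_1$, hence the diagonal $\xi=z$) from the conjugate pair ($|q|=1$, $q\neq 1$); the sign controls via Theorem \ref{theorem_monotonicity_r} are the right ones. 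The discriminant/Vieta analysis of the three cases (i)--(iii) is essentially the paper's Lemma \ref{lemma_location_zeros_derivative_h}.

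There is, however, a genuine gap in the containment $z_j\in\Omega$. Your criterion $|w_j^*|<1$ plus ``$z_j$ can leave $\Omega$ only if $|w_j^*|$ crosses $1$'' does not work: $h(\D)$ is \emph{not} contained in $\Omega$ (the map $h$ is $3$-to-$1$, with a pole at $0$, and is not injective on $\D$), and Lemma \ref{lemma_zeros_derivative_h} shows $|w_j^*|<1$ holds \emph{throughout} $\mathcal F$, so your dichotomy would force $z_j\in\Omega$ unconditionally. But $z_j$ can cross $\partial\Omega$ via the \emph{third} preimage $\tilde w_j$ (the simple root of $h(w)=z_j$) crossing $\partial\D$, with $w_j^*$ remaining strictly inside; no cusp is created because the tangent point $\tilde w_j$ is a regular point of $h$. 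This is not hypothetical: in the $t_1<0$ region, Theorem \ref{theorem_singular_points_spectral_curve_negative_t} shows that $z_0$ can indeed exit $\Omega$ while $|w_0^*|<1$ still holds. The paper closes this gap by a structurally different argument: it first shows (Proposition \ref{proposition_branch_points_xi_functions}) that $\xi_1$, specifically, is branched at the relevant $z_j$'s (i.e.\ $\xi_1(z_j)=\xi_3(z_j)$ or $\xi_2(z_j)$), and then invokes Theorem \ref{theorem_schwarz_function}: $\xi_1$ is the Schwarz function, hence meromorphic on a neighborhood of $\overline\C\setminus\Omega$, so its branch points must lie in $\Omega$. This explains why in the one-cut case (ii) the statement only asserts $z_0,z_1\in\Omega$ and is silent about $z_2$: there $z_2$ is a branch point of $\xi_2,\xi_3$ but not of $\xi_1$, and the Schwarz-function argument gives no information. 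Your proof would need to establish which sheet of the spectral curve branches at each $z_j$ (equivalently, which side of the preimage structure $w_j^*$ lies on in the partition of Figure \ref{w_partition}) before it can conclude containment.
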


The proof of Theorem~\ref{theorem_singular_points_spectral_curve} is provided in Section~\ref{section_topology_spectral_curve}.
In Theorem \ref{theorem_singular_points_spectral_curve}, by a critical point $\hat z_j$ we mean that it satisfies
$$
\frac{\partial F}{\partial \xi}(\xi_k,\hat z_j)=\frac{\partial F}{\partial z}(\xi_k,\hat z_j)=0,
$$
for some choice of $\xi_k=\xi_k(\hat z_j)$ for which the pair $(\xi_k,\hat z_j)$ satisfies \eqref{spectral_curve}.

Theorem~\ref{theorem_singular_points_spectral_curve} can be summarized in the following manner. The critical curve $\gamma_c$ determines two different regimes for the spectral 
curve: for pairs $(t_0,t_1)$ to the left of $\gamma_c$ the spectral curve has three real branch points, whereas for $(t_0,t_1)$ to the right of $\gamma_c$ the spectral curve has 
one real and two non real branch points. At $\gamma_c$, these non real branch points coalesce. We refer the reader to Figure~\ref{figure_critical_points} for a depiction of the 
branch points and critical points.

\begin{figure}[t]
\begin{minipage}[c]{0.5\textwidth}
\centering
  \begin{overpic}[scale=1]
  {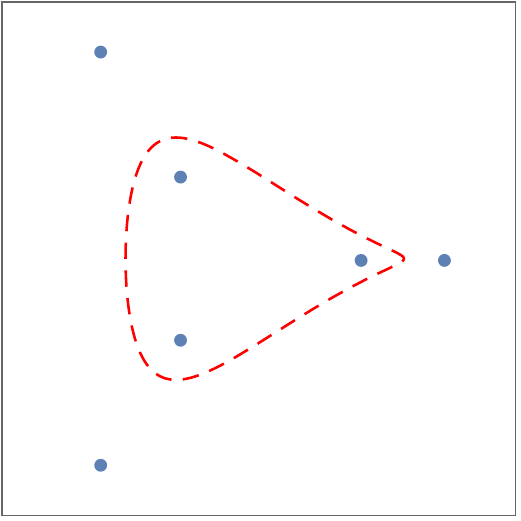}
\put(21,85){$\hat z_2$}
\put(89,49){$\hat z_0$}
\put(21,12){$\hat z_1$}
\put(37,64){$z_2$}
\put(60,49){$z_0$}
\put(37,35){$z_1$}
\put(14,55){$\partial\Omega$}
 \end{overpic}
\end{minipage}%
\begin{minipage}[c]{0.5\textwidth}
\centering
\begin{overpic}[scale=1]
{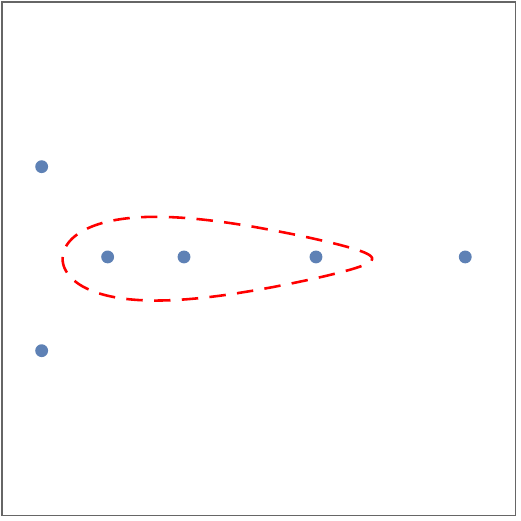}
\put(10,65){$\hat z_2$}
\put(91,49){$\hat z_0$}
\put(10,30){$\hat z_1$}
\put(20,45){$z_2$}
\put(52,49){$z_0$}
\put(37,51){$z_1$}
\put(37,35){$\partial\Omega$}
 \end{overpic}
 \end{minipage}
 \caption{The boundary of $\Omega$ and the branch points and critical points in the three-cut (left-hand panel) and one-cut (right-hand panel) cases.}\label{figure_critical_points}
 \end{figure}

\subsection{The parameters $(r,a_0)$ as a change of variables}

The functions $r=r(t_0,t_1)$ and $a_0=a_0(t_0,t_1)$, given by Proposition \ref{proposition_definition_r} and equation \eqref{definition_a_0}, respectively, can be seen as a 
change of variables. It turns out that we can express the inverse change of coordinates 
$(r,a_0)\mapsto (t_0,t_1)$ explicitly. 

\begin{prop}\label{proposition_change_of_coordinates}
 Suppose $(t_0,t_1)\in \overline{\mathcal F}$. The functions $r$ and $a_0$ satisfy the nonlinear system
 \begin{align}
  & 2r^4-r^2(1-4a_0^2)=-t_0 \label{system_change_coordinates_a}\\
  & a_0^2-(1-4r^2)a_0=-t_1 \label{system_change_coordinates_b}  
 \end{align}
\end{prop}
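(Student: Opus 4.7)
The plan is to verify both equations by direct algebra, invoking the defining formulas for $r$ and $a_0$ together with the derivation of the polynomial $p$ from the Elbau--Felder system \eqref{system_elbau_felder}.

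Equation \eqref{system_change_coordinates_b} follows instantly from the explicit formula \eqref{definition_a_0}. Rewriting that formula as $2a_0 - (1-4r^2) = -\sqrt{(1-4r^2)^2 - 4t_1}$, squaring both sides, and rearranging, I obtain $a_0^2 - (1-4r^2) a_0 + t_1 = 0$, which is \eqref{system_change_coordinates_b}. Equivalently, $a_0$ is by construction a root of the quadratic $X^2 - (1-4r^2) X + t_1$.

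Equation \eqref{system_change_coordinates_a} is --- after the substitutions $a_1 = 2ra_0$ and $a_2 = r^2$ coming from the first two equations of \eqref{system_elbau_felder} --- precisely the fourth equation $r^2 - a_1^2 - 2a_2^2 = t_0$ of that system, since
$$r^2 - (2ra_0)^2 - 2(r^2)^2 = r^2(1-4a_0^2) - 2r^4.$$
The discussion preceding Proposition~\ref{proposition_definition_r} shows that the polynomial $p$ in \eqref{definition_polynomial_p} is exactly what one obtains by starting from this fourth equation and eliminating $a_0$ via the quadratic relation \eqref{system_change_coordinates_b} already proved. Hence, since $r$ is a root of $p$, at least one of the two signs in the quadratic formula for $a_0$ produces a pair $(r,a_0)$ satisfying \eqref{system_change_coordinates_a}.

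To confirm that the sign fixed in \eqref{definition_a_0} is the correct one, I would argue by analytic continuation from the slice $t_1 = 0$: there formula \eqref{definition_a_0} gives $a_0 = 0$ (the smallest positive root satisfies $r<1/2$), and \eqref{system_change_coordinates_a} reduces to $2r^4 - r^2 + t_0 = 0$, which is precisely the relation defining $r$ in the Bleher--Kuijlaars case \cite{bleher_kuijlaars_normal_matrix_model}. Since $r$ and $a_0$ depend analytically on $(t_0,t_1)\in\mathcal F$ by Proposition~\ref{proposition_definition_r}, the identity \eqref{system_change_coordinates_a} then propagates from $\{t_1=0\}$ to all of $\mathcal F$ by analytic continuation, and to $\overline{\mathcal F}$ by continuity.

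The main obstacle I anticipate is exactly this sign-tracking step: one must exclude that along a continuation path the branch of $a_0$ jumps from the ``$\delta=-1$'' to the ``$\delta=+1$'' value in \eqref{values_a1_a2}. Such a jump would require two roots of $p$ to collide, i.e.\ $r$ to become a multiple root, but this is excluded by the simplicity assertion of Proposition~\ref{proposition_definition_r}. Consequently the branch of $a_0$ is fixed throughout $\mathcal F$ by its value at $t_1=0$, completing the argument.
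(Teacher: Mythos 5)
Your proof of \eqref{system_change_coordinates_b} is the same as the paper's.

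For \eqref{system_change_coordinates_a}, you take a genuinely different route. The paper's proof is geometric: it invokes $\area(\Omega)=\pi t_0$ from Theorem~\ref{theorem_rational_parametrization_polynomial_curve}, substitutes $z=h(w)$ in the contour integral $2i\area(\Omega)=\int_{\partial\Omega}\overline z\,dz$, and reads off $t_0=r^2-4a_0^2r^2-2r^4$ by a one-line residue computation for $h(w^{-1})h'(w)$ at $w=0$. Your route stays purely algebraic, working backward from the Elbau--Felder system and the definition of the polynomial $p$.

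This is where your argument has a gap. The step ``since $r$ is a root of $p$, at least one of the two signs in the quadratic formula for $a_0$ produces a pair satisfying \eqref{system_change_coordinates_a}'' tacitly assumes that $p$ is \emph{exactly} the polynomial obtained by substituting $a_0^2=(1-4r^2)a_0-t_1$ into the fourth Elbau--Felder equation and then rationalizing the resulting radical. The discussion preceding Proposition~\ref{proposition_definition_r} does not establish this: it is phrased heuristically (``after a lengthy calculation we conclude that $r$ \emph{should} be a root of $p$'') and the lengthy calculation is not recorded in the paper. To make your argument self-contained you would need to carry out this elimination explicitly and verify that its output is a constant multiple of $p$. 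Granted that identity, your continuity step is sound, though the appeal to simplicity of the root $r$ is more indirect than necessary: the quantity $(1-4r^2)^2-4t_1$ and the prefactor $r^2(1-4r^2)$ multiplying the radical are strictly positive on $\mathcal F$ by \eqref{equation_theorem_monotonicity_r}, so the sign of the radical contribution to \eqref{system_change_coordinates_a} cannot flip anywhere in $\mathcal F$, and the check at $t_1=0$ suffices. The paper's geometric route sidesteps the elimination-polynomial verification entirely and is the shorter argument.
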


The equations \eqref{system_change_coordinates_a}--\eqref{system_change_coordinates_b} are nothing but the last two equations in \eqref{system_elbau_felder}, taking into account 
the values of $a_1$ and $a_2$ in \eqref{values_a1_a2}. 

As a consequence of Proposition~\ref{proposition_change_of_coordinates}, we can compute our phase diagram in the $(r,a_0)$-plane, as it is established by the next Theorem and it 
is shown in Figure~\ref{figure_phase_diagram_a_r_plane}.

\begin{figure}[t]
 \centering
 \includegraphics[scale=1]{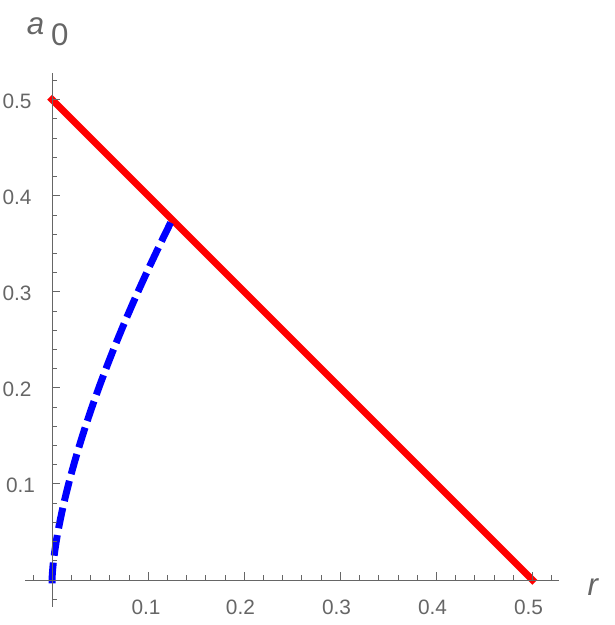}
 \caption{Image of the phase diagram in Figure~\ref{phase_diagram} through the change of variables $(t_0,t_1)\mapsto (r,a_0)$. The solid curve is the image of $\Gamma_c$ and the 
dashed curve is the image of $\gamma_c$. The region determined by the solid and dashed curves and the $r$-axis corresponds to $\mathcal F_1$, whereas the region between the 
solid and dashed curves and the $a_0$-axis corresponds to $\mathcal F_2$.}\label{figure_phase_diagram_a_r_plane}
\end{figure}

\begin{thm}\label{theorem_change_coordinates_phase_diagram}
On the $(r,a_0)$-plane, the curve $\Gamma_c$ assumes the form
\begin{equation}\label{cusp_critical_curve_positive_t1}
r=s,\quad a_0=\frac{1-2s}{2},\qquad 0<s<\frac{1}{2},
\end{equation}
and the curve $\gamma_c$ assumes the form
\begin{equation}\label{critical_curve_limiting_zero_distribution_a_r_plane}
r=s^3, \quad a_0=\frac{3}{2}s^2,\qquad 0<s<\frac{1}{2}.
\end{equation}
\end{thm}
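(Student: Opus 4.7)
The plan is to exploit Proposition~\ref{proposition_change_of_coordinates}, which furnishes the explicit forward map $(r,a_0) \mapsto (t_0,t_1)$ given by $t_0 = r^2(1-4a_0^2) - 2r^4$ and $t_1 = (1-4r^2)a_0 - a_0^2$. It then suffices to substitute the proposed parametrizations in the $(r,a_0)$-plane into these two formulas and verify that the images agree with the original parametrizations \eqref{definition_critical_curve_phase_transition} and \eqref{critical_curve_limiting_zero_distribution} of $\Gamma_c$ and $\gamma_c$ in the $(t_0,t_1)$-plane. Matching forward images is enough: Proposition~\ref{proposition_definition_r} singles out $r$ as the smallest positive root of $p$, and \eqref{definition_a_0} fixes $a_0$ with the sign choice $\delta=-1$, so the map $(t_0,t_1)\mapsto(r,a_0)$ is single-valued on $\overline{\mathcal F}$ and its inverse is precisely the map from Proposition~\ref{proposition_change_of_coordinates}; consequently, a parametric curve in either plane is sent to a parametric curve in the other in a one-to-one fashion, and I only need to identify one such parametrization.

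For the cusp curve $\Gamma_c$, I would set $r=s$ and $a_0=(1-2s)/2$. Then $1-4a_0^2 = 4s(1-s)$, after which the formula for $t_0$ collapses to $s^2 \cdot 4s(1-s) - 2s^4 = 4s^3 - 6s^4$, matching \eqref{definition_critical_curve_phase_transition}; similarly, the expression for $t_1$ becomes $(1-4s^2)(1-2s)/2 - (1-2s)^2/4$, which a short expansion reduces to $4s^3 - 3s^2 + 1/4$. For the mother body curve $\gamma_c$, I would set $r=s^3$ and $a_0=3s^2/2$. Then $1-4a_0^2 = 1-9s^4$ and $1-4r^2 = 1-4s^6$, and direct substitution gives
$$t_0 = s^6(1-9s^4) - 2s^{12} = s^6 - 9s^{10} - 2s^{12}, \qquad t_1 = (1-4s^6)\tfrac{3}{2}s^2 - \tfrac{9}{4}s^4 = \tfrac{3}{2}s^2 - \tfrac{9}{4}s^4 - 6s^8,$$
exactly as in \eqref{critical_curve_limiting_zero_distribution}.

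There is no significant obstacle; the theorem is essentially a reformulation of already-known parametric curves under the invertible change of coordinates guaranteed by Proposition~\ref{proposition_change_of_coordinates}. The only bookkeeping item is to confirm that the parameter $s$ ranges over the same interval $(0,1/2)$ in both descriptions, and that the endpoints correspond, which is immediate from continuity of the change of coordinates on $\overline{\mathcal F}$ together with a direct check at $s=0$ and $s=1/2$. The computations for the two curves are genuinely independent algebraic identities, but each is a one-line verification once the forward map of Proposition~\ref{proposition_change_of_coordinates} is in hand.
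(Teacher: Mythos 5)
Your substitutions into Proposition~\ref{proposition_change_of_coordinates} are computed correctly, but the logical step you call ``matching forward images is enough'' has a gap. The system \eqref{system_change_coordinates_a}--\eqref{system_change_coordinates_b} is a pair of polynomial identities satisfied by the canonical $(r,a_0)$; it is \emph{not} a bijective change of variables, and for a given $(t_0,t_1)$ it admits several real solution pairs $(r,a_0)$, of which only one is the canonical one (the smallest positive root of $p$ from Proposition~\ref{proposition_definition_r}, and the $\delta=-1$ root of the quadratic in \eqref{definition_a_0}). Verifying that your candidate $(r_c(s),a_{0,c}(s))$ maps forward onto $\Gamma_c$ (resp.\ $\gamma_c$) shows only that it is \emph{a} solution of the system, not that it equals the canonical $(r,a_0)$. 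Injectivity of $(t_0,t_1)\mapsto(r,a_0)$ (which does follow from its having a left inverse) does not give injectivity of the forward map on the whole $(r,a_0)$-plane, so ``I only need to identify one such parametrization'' is a non sequitur.

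The gap is closable. One should check that $a_{0,c}(s)$ is the \emph{smaller} root of $a_0^2-(1-4r_c(s)^2)a_0+t_1=0$: for $\Gamma_c$ the two roots are $\tfrac{1-2s}{2}$ and $\tfrac{1+2s-8s^2}{2}$, the former strictly smaller on $(0,\tfrac12)$, and a similar check works for $\gamma_c$; once $a_0$ is identified, \eqref{system_change_coordinates_b} determines $r^2$ and hence the positive $r$ is forced. The paper avoids the issue by characterizing the canonical $(r,a_0)$ on each critical curve \emph{before} any forward substitution: Theorem~\ref{theorem_monotonicity_r} gives $r=s$ on $\Gamma_c$, and \eqref{equality_r_a_0_cusp} gives $2a_0+2r=1$ there, whence $a_0=\tfrac{1-2s}{2}$ immediately; while in Lemma~\ref{lemma_location_zeros_derivative_h} the sign of $\disc(\tilde h;w)=4(8a_0^3-27r^2)$ is shown to separate $\mathcal F_1$ from $\mathcal F_2$, so that $a_0=\tfrac{3}{2}r^{2/3}$ exactly on $\gamma_c$, and reparametrizing by $r=s^3$ gives the claim. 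Thus the paper pins the pair $(r,a_0)$ down by a structural property (cusp formation; double zero of $h'$), rather than by reverse-engineering a candidate through the forward map.
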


The proofs of Proposition~\ref{proposition_change_of_coordinates} and Theorem~\ref{theorem_change_coordinates_phase_diagram} are given in 
Section~\ref{section_analysis_rational_parametrization}.

\subsection{The mother body problem}

We now focus our attention to the mother body problem \eqref{mother body_problem_general}. In what follows, given a set $E\subset \C$ we denote 
\begin{equation}\label{notation_conjugate_set}
E^*=\{z\in \C \; \mid \;\overline z\in E \} \quad \mbox{and} \quad \C_\pm=\{ z\in \C \; \mid \; \pm \im z >0 \}.
\end{equation}
Recall also that, according to Theorem~\ref{theorem_singular_points_spectral_curve}, $z_0,z_1$ and $z_2$ are the branch points of the spectral curve \eqref{spectral_curve}.

\begin{thm}\label{theorem_limiting_support_zeros}
There exists a contour $\Sigma_*$ with
\begin{equation}\label{inclusion_star_Omega}
 \Sigma_*\subset \Omega,
\end{equation}
and for which the solution $\xi_1$ in \eqref{asymptotics_xi} admits an analytic continuation to $\C\setminus \Sigma_*$ that satisfies 
\begin{equation}\label{s_property}
(\xi_{1+}(s)-\xi_{1-}(s))ds \in i\R,\quad s\in \Sigma_*,
\end{equation}
where $ds$ is a tangent vector to $\Sigma_*$ at the point $s$.
The contour $\Sigma_*$ is symmetric with respect to the real axis
$$
(\Sigma_*)^*=\Sigma_*
$$
and has the following geometric properties.
\begin{enumerate}[label=(\roman*)]
\item (Three-cut case) For $(t_0,t_1)\in \mathcal F_1$, the contour $\Sigma_*$ can be decomposed into
$$
\Sigma_*=\Sigma_{*,0}\cup \Sigma_{*,1}\cup \Sigma_{*,2},
$$
where each $\Sigma_{*,j}$ is a smooth oriented contour from a common point $z_*\in (-\infty,z_0)$ to the branch point $z_j$ and
$$
\Sigma_{*,0}=[z_*,z_0],\quad (\Sigma_{*,2})^*=\Sigma_{*,1}\subset \overline\C_-.
$$

\item (One-cut case) For $(t_0,t_1)\in \mathcal F_2 $, the contour $\Sigma_*$ is given by
$$
\Sigma_*=[z_1,z_0].
$$
\end{enumerate}

Moreover, the measure
\begin{equation}\label{limiting_measure_zeros}
d\mu_*(z)=\frac{1}{2\pi i t_0}(\xi_{1-}(z)-\xi_{1+}(z))dz,\quad z\in \Sigma_*
\end{equation}
is a probability measure on $\Sigma_*$. 
\end{thm}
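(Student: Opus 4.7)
The plan is to realize $\Sigma_*$ as a subset of the critical graph of a meromorphic quadratic differential $\varpi$ attached to the spectral curve \eqref{spectral_curve}. On the sheet parametrized by $\xi=\xi_j(z)$, define
\begin{equation*}
\varpi_j := (\xi_j(z)-\xi_k(z))(\xi_j(z)-\xi_l(z))\,dz^2,\qquad \{j,k,l\}=\{1,2,3\};
\end{equation*}
the product $\prod_j \varpi_j$ equals, up to a constant, $\disc_\xi F(\xi,z)\,dz^2$, so the three $\varpi_j$ glue into a globally defined meromorphic quadratic differential on the three-sheeted Riemann surface of \eqref{spectral_curve}. Its zeros are exactly the branch points $z_0,z_1,z_2$ from Theorem~\ref{theorem_singular_points_spectral_curve} with multiplicity dictated by the ramification order, and its only poles lie at $\infty$, where the local orders are read off from \eqref{asymptotics_xi}.

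The key observation is that \eqref{s_property} is equivalent to $\Sigma_*$ being a union of horizontal trajectories of $\varpi$ on the $\xi_1$-sheet: on an arc along which $\xi_1$ analytically continues across into $\xi_k$ ($k\in\{2,3\}$), the boundary values yield $(\xi_{1+}-\xi_{1-})^2\,dz^2 = (\xi_1-\xi_k)^2\,dz^2$, and \eqref{s_property} says precisely that this expression is a negative multiple of $ds^2$. Accordingly, I would look for $\Sigma_*$ inside the horizontal critical graph of $\varpi$, that is, the union of trajectories issuing from the zeros $z_0,z_1,z_2$. Since $F$ has real coefficients, the critical graph is symmetric under $z\mapsto\overline z$, so once a symmetric sub-graph is selected the equality $(\Sigma_*)^*=\Sigma_*$ is automatic.

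To pin down the correct trajectories I would employ the deformation technique of \cite{martinez_silva}. The case $t_1=0$ was treated in \cite{bleher_kuijlaars_normal_matrix_model} and provides an initial configuration; from there one tracks how the critical graph evolves as $(t_0,t_1)$ traverses $\mathcal F$. In the three-cut region $\mathcal F_1$, of the three horizontal trajectories emerging from each simple zero $z_j$ one selects those heading toward the real axis and shows, by continuity of the critical graph in the parameters, that they meet at a common real point $z_*\in(-\infty,z_0)$, yielding the tripod $\Sigma_{*,0}\cup\Sigma_{*,1}\cup\Sigma_{*,2}$. Across $\gamma_c$ the branch points $z_1,z_2$ coalesce into a triple zero of $\varpi$ (Theorem~\ref{theorem_singular_points_spectral_curve}(iii)), the tripod collapses, $z_*$ is absorbed, and the critical graph rearranges into the real segment $[z_1,z_0]$ of the one-cut region $\mathcal F_2$. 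The inclusion $\Sigma_*\subset\Omega$ is maintained throughout by Teichmüller-type estimates that rule out trajectories exiting $\Omega$ before reaching their designated endpoints.

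The hardest step is this global topological control across the transition at $\gamma_c$: one must exclude recurrent trajectories and escapes to $\infty$ and verify that the rearrangement is precisely the coalescence described and nothing more exotic. Once the geometry of $\Sigma_*$ is in place, analyticity of $\xi_1$ on $\C\setminus\Sigma_*$ follows because each component of $\Sigma_*$ serves as the branch cut joining the $\xi_1$-sheet to one other sheet, positivity of $d\mu_*$ is immediate from \eqref{s_property} with consistently chosen orientations, and the total mass is computed by contour deformation:
\begin{equation*}
\int_{\Sigma_*}d\mu_* \;=\; \frac{1}{2\pi i\,t_0}\oint_{|z|=R}\xi_1(z)\,dz,
\end{equation*}
after which the expansion $\xi_1(z)=z^2+t_1+t_0 z^{-1}+\Boh(z^{-2})$ from \eqref{asymptotics_xi} yields $\int d\mu_*=1$.
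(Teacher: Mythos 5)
Your overall strategy is the same as the paper's: encode \eqref{s_property} as a horizontal-trajectory condition for a quadratic differential on the three-sheeted Riemann surface, describe the critical graph by deforming from the symmetric case $t_1=0$ (where \cite{bleher_kuijlaars_normal_matrix_model} provides the answer) using the methodology of \cite{martinez_silva}, and then read off $\Sigma_*$ as a projection of selected critical trajectories. The total-mass computation by contour deformation at $\infty$ also matches.

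There is, however, a genuine gap at the claim that the inclusion $\Sigma_*\subset\Omega$ is ``maintained throughout by Teichm\"uller-type estimates that rule out trajectories exiting $\Omega$ before reaching their designated endpoints.'' The Teichm\"uller formula \eqref{teichmuller_formula} controls the topology of the critical graph of $\varpi$ on the Riemann surface $\mathcal R$, but $\varpi$ is defined purely from the spectral curve and has no information about the domain $\Omega$, whose boundary $\partial\Omega$ is not a trajectory, a level curve, or any distinguished set for $\varpi$. So nothing in the quadratic-differential analysis itself can prevent the projected arcs from leaving $\Omega$. The paper has to work substantially harder here: after establishing the critical graph and the basic properties of $\mu_*$, it computes the Cauchy transforms of $\mu_0$ and $\mu_*$ via the Schwarz function and $\xi_1$, proves that the total potential $\mathcal U_*=2U^{\mu_*}+\mathcal V$ has \emph{no local minimum} either on $\Sigma_*$ or anywhere in $\Omega$ (this uses \eqref{branched_solution_non_vanishing} and the location of the singular points $\hat z_j$ from Theorem~\ref{theorem_singular_points_spectral_curve}), and then runs a continuity-in-$t_1$ argument: if the inclusion first failed at some $\tilde t_1$, a point of $\partial\Omega\cap\Sigma_*$ would necessarily be a local minimum of $\mathcal U_*$ by the equilibrium inequality \eqref{variational_inequality_planar_measure}, a contradiction. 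This potential-theoretic step is a separate block of work and cannot be replaced by trajectory topology alone.

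A minor remark: positivity of $d\mu_*$ is not ``immediate from \eqref{s_property}''; the S-property only forces the density to be real of constant sign on each arc, and the sign is fixed by a continuity argument from $t_1=0$, which the paper makes explicitly.
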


The phase diagram displayed in Figure~\ref{phase_diagram_complete} shows several configurations of $\supp\mu_*$, and we refer to Figure~\ref{figure_support_mu_star} for 
more detailed numerical evaluations of $\supp\mu_*$, displaying the evolution of $\supp\mu_*$ in time $t_0$ while $t_1$ is kept fixed. 

\begin{figure}[t]
\begin{minipage}[c]{0.5\textwidth}
\centering
  \begin{overpic}[scale=1]
  {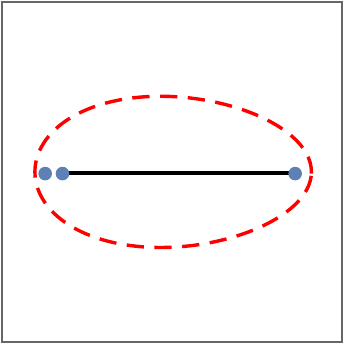}
 \end{overpic}
 \caption*{$(t_0,t_1)=(\frac{1}{2500},\frac{1}{8})\in\mathcal F_2$}
\end{minipage}%
\begin{minipage}[c]{0.5\textwidth}
\centering
 \begin{overpic}[scale=1]
  {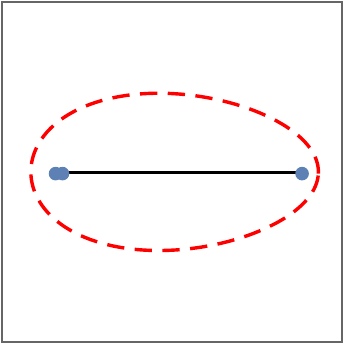}
 \end{overpic}
 \caption*{$(t_0,t_1)=(\frac{1}{1700},\frac{1}{8})\in\mathcal F_2$}
 \end{minipage}
 \begin{minipage}[c]{0.5\textwidth}
 \vspace{0.5cm}
 \centering
  \begin{overpic}[scale=1]
  {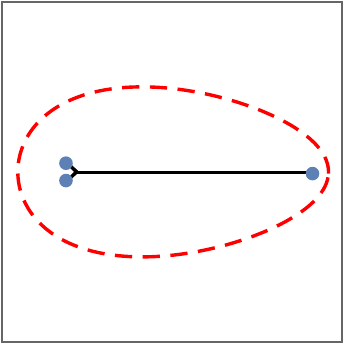}
 \end{overpic}
 \caption*{$(t_0,t_1)=(\frac{1}{500},\frac{1}{8})\in\mathcal F_1$}
\end{minipage}%
\begin{minipage}[c]{0.5\textwidth}
\vspace{0.5cm}
\centering
 \begin{overpic}[scale=1]
  {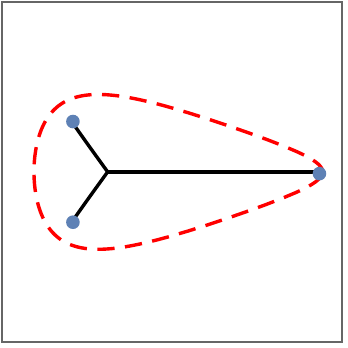}
 \end{overpic}
 \caption*{$(t_0,t_1)=(\frac{1}{50},\frac{1}{8})\in\mathcal F_1$}
 \end{minipage}
 \begin{minipage}[c]{1\textwidth}
 \vspace{0.5cm}
\centering
 \begin{overpic}[scale=1]
  {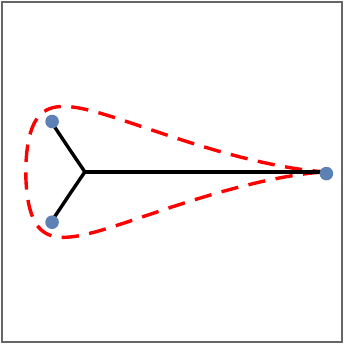}
 \end{overpic}
 \caption*{$(t_0,t_1)=(\frac{5}{128},\frac{1}{8})\in\Gamma_c$}
 \end{minipage}
 \caption{For the given values of $(t_0,t_1)$, the boundary $\partial \Omega$ (dashed contour), the support of the measure $\mu_*$ (solid lines) and the branch points $z_0$, 
$z_1$, $z_2$ (dots) are shown (the figures are scaled differently - compare with Figure~\ref{chain_polynomial_curves}). Note the transition when we move from $\mathcal F_2$ 
to $\mathcal F_1$. We stress that the contours of $\supp\mu_*$ outside the real line are not straight line segments. Numerical outputs.} \label{figure_support_mu_star}
\end{figure}

As we mentioned at the introduction, the construction of the measure $\mu_*$ in Theorem~\ref{theorem_limiting_support_zeros} is our main technical contribution, and it is quite 
involved. The first step, carried out in Section \ref{section_topology_spectral_curve}, is to construct the Riemann surface $\mathcal R$ for \eqref{spectral_curve}; along the way 
we also collect several results that are used later on. The sheet structure of $\mathcal R$ depends, in the terminology of Theorem \ref{theorem_limiting_support_zeros}, on whether 
we are in the three-cut or one-cut cases, and are explicitly given in Sections~\ref{section_sheet_structure_1} and \ref{section_sheet_structure_2}, 
respectively. In Section \ref{section_quadratic_differential}, we introduce a certain quadratic differential $\varpi$ that encodes $\mu_*$ on some of its trajectories. The main 
goal of Section \ref{section_quadratic_differential} is to describe the critical graph $\mathcal G$ of $\varpi$. This is done by first computing $\mathcal G$ for $t_1=0$ by 
``brute force'', and then deforming the parameter $t_1>0$, keeping track of all fundamental domains of $\mathcal G$. It turns out that the three-cut--to--one-cut phase 
transition for $\mu_*$ can be interpreted in terms of a phase transition for $\mathcal G$, and furthermore $\mathcal G$ also plays a fundamental role in the asymptotic analysis of 
the multiple orthogonal polynomials that will be introduced in a moment. In Section \ref{section_proof_s_property} we use this critical graph $\mathcal G$ to recover the measure 
$\mu_*$. 

It is a consequence of this analysis that the support $\Sigma_*$ of $\mu_*$, and in particular the point $z_*$, is determined as the projection of certain trajectories of 
$\varpi$. 
Roughly speaking, there are a number of points $y\in (-\infty,z_0)$ which are solutions to
$$
\re\int_{z_2}^{y}(\xi_1(s)-\xi_2(s))ds=0,
$$
where $\xi_1,\xi_2$ are (appropriate analytic continuations of) the functions in \eqref{asymptotics_xi}; the value $y=z_*$ is the largest of those solutions.
We refer the reader to Sections~\ref{section_quadratic_differential} and \ref{section_proof_s_property} for details.

The measure $\mu_*$ is connected to the normal matrix model by the following theorem.

\begin{thm}\label{theorem_balayage_relation}
The measure $\mu_*$ in \eqref{limiting_measure_zeros} relates to the limiting measure of eigenvalues $\mu_0$ \eqref{limiting_measure_eigenvalues} through the conditions
\begin{align}
U^{\mu_0}(z)=U^{\mu_*}(z), & \quad z\in \C\setminus \Omega, \label{mother_body_equation} \\
U^{\mu_0}(z)<U^{\mu_*}(z), & \quad z\in \Omega. \label{mother_body_inequality} 
\end{align}

\end{thm}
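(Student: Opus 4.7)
The plan is to derive both statements from the common origin of $\mu_0$ and $\mu_*$ in the analytic function $\xi_1$: it is the Schwarz function of $\partial\Omega$ associated with $\mu_0$ (Theorem~\ref{theorem_schwarz_function}) and supplies the density of $\mu_*$ via the jump formula \eqref{limiting_measure_zeros}.

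For the equality \eqref{mother_body_equation} off $\Omega$, I compute both Cauchy transforms and show they coincide on $\C\setminus\overline{\Omega}$. For $z$ in that set, Cauchy--Pompeiu applied to $\bar s/(s-z)$ together with $\bar s = \xi_1(s)$ on $\partial\Omega$ yields
\[
C^{\mu_0}(z) \;=\; \frac{1}{\pi t_0}\iint_\Omega\frac{dA(s)}{s-z} \;=\; \frac{1}{2\pi i\,t_0}\oint_{\partial\Omega}\frac{\xi_1(s)\,ds}{s-z} \;=\; \frac{z^2 + t_1 - \xi_1(z)}{t_0},
\]
the last step by pushing the contour outward to infinity in the exterior $E$, where $\xi_1$ is meromorphic with only a pole at $\infty$ and has the expansion \eqref{asymptotics_xi}. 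Separately, $\xi_1(z) - z^2 - t_1$ is analytic on $\C\setminus\Sigma_*$, decays like $t_0/z + \Boh(z^{-2})$ at infinity, and has jump $\xi_{1+} - \xi_{1-}$ across $\Sigma_*$; by Plemelj--Sokhotski together with \eqref{limiting_measure_zeros} this gives $C^{\mu_*}(z) = (z^2 + t_1 - \xi_1(z))/t_0$ on $\C\setminus\Sigma_*$. Hence $C^{\mu_0} \equiv C^{\mu_*}$ on $\C\setminus\overline{\Omega}$, and since $2\partial_z U^\mu = C^\mu$, the difference $U^{\mu_0}-U^{\mu_*}$ is locally constant there; both potentials being $-\log|z|+\boh(1)$ at infinity forces the constant to vanish, and continuity extends the equality to $\C\setminus\Omega$.

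For the strict inequality \eqref{mother_body_inequality}, set $\psi := U^{\mu_*} - U^{\mu_0}$. It vanishes on $\C\setminus\Omega$, and in $\Omega\setminus\Sigma_*$ a direct calculation gives $\partial_z\psi = (\bar z - \xi_1(z))/(2t_0)$ and $\Delta\psi = 2\pi\mu_0 = 2/t_0 > 0$, so $\psi$ is strictly subharmonic there. The Schwarz identity $\xi_1(z) = \bar z$ on $\partial\Omega$ forces $\partial_z\psi = 0$ on $\partial\Omega$, so $\psi$ is $C^1$ across $\partial\Omega$ with vanishing Cauchy data. A Taylor expansion inward, using that the Hessian of $\psi$ on $\partial\Omega$ has eigenvalues $0$ (tangential) and $2/t_0$ (inward normal)---which follows from $\xi_1'(z(s)) = \overline{z'(s)}/z'(s)$ for an arclength parametrization---then produces $\psi(z_\partial + t\mathbf{n}) = t^2/t_0 + \Boh(t^3)$ for $t > 0$, so $\psi > 0$ in a one-sided tubular neighborhood of $\partial\Omega$ inside $\Omega$.

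Extending $\psi > 0$ to all of $\Omega$ is the main obstacle. I combine the preceding Hopf-type estimate with the strong subharmonic maximum principle in $\Omega \setminus \Sigma_*$: the supremum of $\psi$ over $\overline{\Omega\setminus\Sigma_*}$ must be attained on $\Sigma_*$ and strictly exceed $0$, since $\psi > 0$ just inside $\partial\Omega$ while $\psi = 0$ on $\partial\Omega$. To rule out any interior zero of $\psi$ one then uses the strict positivity of the density $d\mu_*/ds = (\xi_{1-}-\xi_{1+})/(2\pi i t_0)$ on the interior arcs of $\Sigma_*$, an output of the $S$-property \eqref{s_property} and the critical-graph analysis of $\varpi$ developed earlier in the paper, applying the strong maximum principle to each connected component of any hypothetical sublevel set $\{\psi \leq 0\} \cap \Omega$ and deriving a contradiction with the Hopf-type behavior at $\partial\Omega$. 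Strict subharmonicity by itself does not forbid interior minima of a subharmonic function, so this final propagation genuinely relies on both the boundary analysis at $\partial\Omega$ and the detailed geometry of $\mu_*$ along $\Sigma_*$.
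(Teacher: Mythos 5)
Your treatment of the equality \eqref{mother_body_equation} is correct and follows the same route as the paper: compute $C^{\mu_0}$ by Green's theorem and the Schwarz-function identity $\bar s=\xi_1(s)$ on $\partial\Omega$, compute $C^{\mu_*}$ by deforming the jump integral to infinity, match the two, integrate, and fix the constant at infinity. One caveat the paper is careful about and you gloss over: the identity $\xi_1\equiv S$ on all of $\C\setminus\Omega$ requires knowing $\Sigma_*\subset\Omega$, and that inclusion is itself established in the paper by a separate continuity argument; it is not automatic from the construction.

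The inequality \eqref{mother_body_inequality} is where your proposal has a genuine gap, and you partially sense it yourself. After establishing the Hopf-type estimate (positivity of $\psi=U^{\mu_*}-U^{\mu_0}$ in an inner tubular neighborhood of $\partial\Omega$) and the strict positivity of $\Delta\psi=2/t_0$ in $\Omega\setminus\Sigma_*$, you attempt to propagate $\psi>0$ inward by the strong maximum principle applied to sublevel sets $\{\psi\le 0\}$. This step does not close. A strictly subharmonic function controls its \emph{maxima}, not its minima: $f(x,y)=x^2+y^2-1$ is strictly subharmonic, vanishes on the unit circle, and is negative inside. Nothing in strict subharmonicity plus boundary data $\psi|_{\partial\Omega}=0$ forbids a compactly contained island in $\Omega\setminus\Sigma_*$ where $\psi$ dips negative. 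The statement you appeal to---``deriving a contradiction with the Hopf-type behavior at $\partial\Omega$''---has no content, because a hypothetical negative island can sit far from $\partial\Omega$. The positivity of the density $d\mu_*/ds$ rules out $\Sigma_*$ being a minimum, but that is the easy part; the interior of $\Omega\setminus\Sigma_*$ is the issue.

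What the paper actually does is sharper, and you did not reproduce it. It shows (Lemma~\ref{lemma_minimum_total_potential_2}) that $\mathcal U_*$ (equivalently $\psi$, since $\mathcal U_0\equiv l$ on $\Omega$) has \emph{no critical point at all} in $\Omega\setminus\Sigma_*$, which is much stronger than what subharmonicity gives. At any such critical point $p$ one would have $\partial_z\mathcal U_*(p)=0$, i.e.\ $\xi_1(p)=\bar p$. Plugging this into the rational parametrization shows $(h(w^{-1}),h(w))$ and $(h(\tilde w^{-1}),h(\tilde w))$ with $\tilde w=1/\bar w\neq w$ are distinct points of $\mathcal R$ projecting to the same pair, forcing $p$ to be a zero of the discriminant $\mathcal D$. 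Since $p\notin\Sigma_*$ it is not a branch point, so $p$ must be one of the singular points $\hat z_j$; but by Theorem~\ref{theorem_singular_points_spectral_curve} all $\hat z_j$ lie in $\C\setminus\overline\Omega$. This algebraic input about the spectral curve is the true engine of the inequality, and your proposal never invokes it. Without it, the propagation argument is broken.
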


The proof of Theorem~\ref{theorem_balayage_relation} is given in Section~\ref{section_proof_s_property}.

Given a domain $G\subset \C$, a measure $\nu$ is called a {\it mother body} (or also {\it potential-theoretic skeleton}) of $G$ if \cite{gustafsson_lectures_balayage}
\begin{enumerate}
 \item[(M1)] $\supp\nu$ has null area measure;
 \item[(M2)] $\C\setminus \supp\nu$ is connected;
 \item[(M3)] $\supp\nu\subset G$;
 \item[(M4)] for $\mu_G$ the normalized area measure of $G$, it holds true 
 \begin{align*}
 & U^{\mu_G}(z)\leq U^{\nu}(z), \quad z\in \C, \\
 & U^{\mu_G}(z)= U^{\nu}(z), \quad z\in \C\setminus G.
 \end{align*}
\end{enumerate}

Conditions (M1)--(M4) are, generally speaking, quite demanding, and consequently domains with mother bodies are somewhat rare. Cases where the existence of the mother body is 
known include discs, convex polyhedra \cite{gustafsson_polyhedra} and ellipses \cite{sjodin_mother_body_algebraic}, and mother bodies have also been numerically obtained for 
certain oval shapes \cite{savina_shatalov_sternin}. 
Mother body measures appear in the context of quadrature domains \cite{gustafsson_quadrature}, inverse problems in geophysics \cite{zidarov_book}, zero distribution of orthogonal 
polynomials \cite{mhaskar_saff_zeros_extremal_polynomials,gustafsson_et_al_bergman}, among others. We refer the reader to the lecture notes 
\cite{gustafsson_lectures_balayage} by Gustafsson for more details.

Conditions (M1), (M2) and (M3) are satisfied for $\nu=\mu_*$ and $G=\Omega$. Equation~\eqref{mother_body_equation} is the same as \eqref{mother 
body_problem_general}, and together with \eqref{mother_body_inequality} they give (M4) and thus express that $\mu_*$ is a 
mother body for the domain $\Omega$. For this reason, we call the ``one-cut--to--three-cuts'' phase transition for 
$\supp\mu_*$ the {\it mother body phase transition}.

When the (boundary of the) domain $G$ displays some topological transition, it is natural to expect that its mother body measure displays some phase transition as well. 
In fact, this has already been described in several previous works in the context of random normal matrices
\cite{balogh_bertola_et_al_op_planar_measure,bleher_kuijlaars_normal_matrix_model,kuijlaars_tovbis_supercritical_normal_matrix_model,balogh_grava_merzi}, although in some cases 
without explicit mention. However, in our situation 
it is very interesting that the transition for $\mu_*$ occurs before any transition for $\Omega$. In other words, the limiting domain for the eigenvalues of the normal 
matrix model \eqref{normal_matrix_ensemble} does not feel the mother body phase transition: as it is assured by Theorem~\ref{theorem_rational_parametrization_polynomial_curve}, 
the 
boundary $\Omega$ depends analytically on the parameters $(t_0,t_1)$ even across the critical curve $\gamma_c$. 

To our knowledge, this is the first time a phase transition for the mother body, without any phase transition on the boundary of the underlying domain, is described in the 
literature. A somewhat related situation has already appeared in the work of Gustafsson and Lin \cite[Example 5.2]{gustafsson_lin}, where the authors identified, 
indirectly and without any detailed analysis, a phase transition for the branch points of the Schwarz function for a curve moving analytically in time: the 
branch points of the Schwarz function thus play the role of the end points of the mother body. Another similar phenomenon has been described in a previous work of 
the first author and Liechty \cite[Section~X]{bleher_liechty_PDWBC}, where they identified a phase transition for the zero distribution of the underlying orthogonal 
polynomials that is not reflected in the asymptotics of the partition function studied therein. In virtue of this latter work, it is also natural to expect that the partition 
function 
of \eqref{normal_matrix_ensemble} for the cubic potential \eqref{generic_cubic_potential} should not ``feel'' the mother body phase transition.

\subsection{Associated multiple orthogonality}\label{section_associated_MOP}

We follow \cite{bleher_kuijlaars_normal_matrix_model,balogh_bertola_et_al_op_planar_measure} and replace the planar orthogonality to an orthogonality over contours.

Construct a piecewise smooth curve 
\begin{equation}\label{definition_extension_skeleton}
\Sigma = {\Sigma}_0 \cup \Sigma_1 \cup \Sigma_2,
\end{equation}
where each set $\Sigma_j$ is a smooth oriented arc, starting at a common point $a_*\in\R$ and ending at the critical point $\hat z_j$ given by 
Theorem~\ref{theorem_singular_points_spectral_curve}, $j=0,1,2$. Recalling the notations introduced in 
\eqref{notation_conjugate_set}, we assume
\begin{equation}\label{conditions_contour_sigma}
a_*<\hat z_0,\quad {\Sigma}_0 = [a_*,\hat z_0], \quad \Sigma_1\subset \C_- \cup \{ a_* \},\quad \Sigma_2\subset \C_+\cup \{ a_* \}.
\end{equation}
we refer for instance to Figure~\ref{cut_off} for a possible configuration of $\Sigma$. At this moment the contour $\Sigma$ is rather arbitrary; in fact the conditions in 
\eqref{conditions_contour_sigma} are made for simplicity and could even be loosen. But later the contour $\Sigma$ will be chosen in an optimal way.

\begin{figure}[t]
 \centering
 \begin{overpic}[scale=1]
{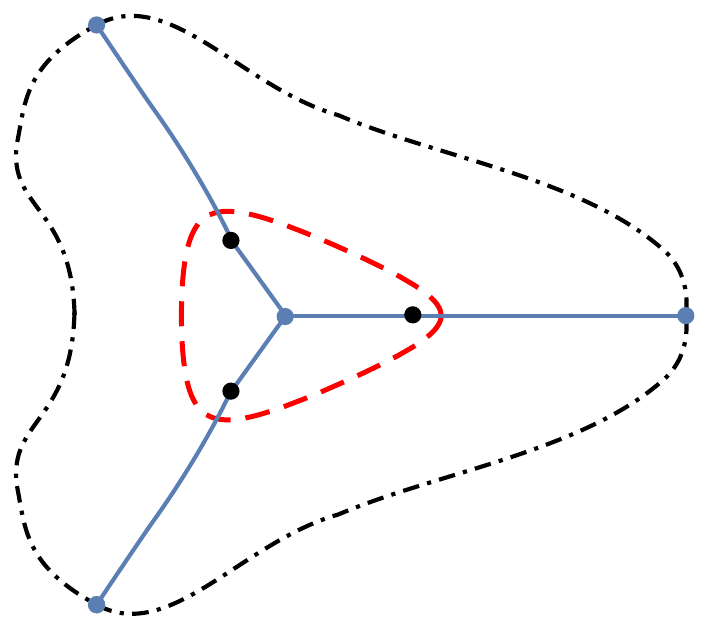}
\put(99,43){$\hat z_0$}
\put(10,-1){$\hat z_1$}
\put(9,87){$\hat z_2$}
\put(65,69){$\gamma_1$}
\put(65,25){$\gamma_2$}
\put(5,61.5){$\gamma_0$}
\put(21,75){$\Sigma_2$}
\put(29,22){$\Sigma_1$}
\put(77,46){$\Sigma_0$}
\put(50,54){$\partial \Omega$}
\put(41,46){$z_*$}
 \end{overpic}
 \caption{The dashed curve is $\partial \Omega$ and the dots inside it are the branch points $z_0, z_1$ and $z_2$ and the point $z_*$. The dot-dashed line is a possible 
configuration for $\partial D=\gamma_0\cup\gamma_1\cup\gamma_2$, and the points on $\partial D$ are the critical points $\hat z_0$, $\hat z_1$ and $\hat z_2$. The solid contour is 
a possible extension $\Sigma=\Sigma_0\cup\Sigma_1\cup \Sigma_2$ of the support $\Sigma_*$ of the mother body measure $\Sigma_*$.}\label{cut_off}
\end{figure}

As a general notational convention, for the rest of the paper we use cyclic notation mod $3$ without further mention when clear from the context. So for instance
$$
\hat z_3=\hat z_0,\quad \hat z_4=\hat z_1,\quad \hat z_5=\hat z_2, 
$$
and similarly for other quantities appearing later on.

Consider a compact, connected set $D\subset \C$, assuming in addition
$$
 \Sigma,\Omega\subset D,\quad \hat z_l\in \partial D, \ l=0,1,2,
$$
where $\Sigma$ is a curve as just explained above and $\Omega$ is the domain given by Theorem~\ref{theorem_rational_parametrization_polynomial_curve}. The domain $D$ should be 
interpreted as the cut off region in \eqref{planar_orthogonality_conditions}.

The points $\hat z_0,\hat z_1, \hat z_2$ split the boundary $\partial D$ into three curves $\gamma_0,\gamma_1,\gamma_2$. For $l=0,1,2$ (and again with cyclic notation mod $3$), 
the curve $\gamma_l$ is the oriented sub arc of $\partial D$ going from $\hat z_{l+2}$ to $\hat z_{l+1}$ that does not contain $\hat z_l$, see Figure~\ref{cut_off} for 
an example of a configuration of $D$, $\Sigma$ and $\Omega$. 

For the third root of unity $\omega$ and the directions at infinity $\infty_l$ given by
\begin{equation}\label{definition_infinities}
\omega=e^{\frac{2\pi i}{3}},\quad \infty_l=-\omega^{-l}\infty,\quad l=0,1,2,
\end{equation}
let $\Gamma_l$ be any unbounded oriented contour from $\infty_{l+1}$ to $\infty_{l+2}$, $l=0,1,2$. Also for $l=0,1,2$ and any non negative integer $k$, define
\begin{align}
w_{l,k,n}(z)        & = \int_{\Gamma_l} s^k e^{-\frac{n}{t_0}\left( sz -V(s)-V(z) \right)}ds,\quad z\in \C, \label{definition_nonhermitian_weights}\\
\widetilde w_{l,k,n}(z) & = \int_{\infty_l}^{\overline{z}} s^k e^{-\frac{n}{t_0}\left( sz-V(s)-V(z) \right)} ds,\quad z\in \C. \nonumber
\end{align}
where we recall that $V$ is given in \eqref{generic_cubic_potential}.

For $l=0,1,2$, the union of contours $\gamma_l\cup  \Sigma_{l+1}\cup \Sigma_{l+2}$ is the boundary of a domain $D_l\subset D$. An application of Green's Theorem to each of the 
pieces $D_0$, $D_1$, $D_2$ yields
\begin{prop}\label{proposition_2d_to_contour}
For any polynomial $Q$ and any integer $k\geq 0$, it is valid
\begin{multline}\label{green1}
2i\iint_D Q(z)\overline{z}^k e^{-\frac{n}{t_0}\left(|z|^2-2\re V(z) \right)} dA(z) \\ =\sum_{l=0}^2 \int_{ \Sigma_l}Q(z)w_{l,k,n}(z) dz + \sum_{l=0}^2 \int_{\gamma_l} 
Q(z)\widetilde w_{l,k,n}(z)dz.
\end{multline}
\end{prop}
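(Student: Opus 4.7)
\medskip

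\textbf{Proof plan for Proposition~\ref{proposition_2d_to_contour}.}

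The strategy is to rewrite the integrand as a $\bar z$-derivative and then apply Stokes' theorem on each piece $D_l$. Since $V$ has real coefficients, $\overline{V(z)}=V(\bar z)$, so the integrand can be written as
\begin{equation*}
Q(z)\bar z^{k}e^{-\frac{n}{t_0}(|z|^2-2\re V(z))}=Q(z)\bar z^{k}e^{-\frac{n}{t_0}(z\bar z-V(z)-V(\bar z))}.
\end{equation*}
For each $l=0,1,2$ I define the primitive
\begin{equation*}
G_l(z,\bar z):=\int_{\infty_l}^{\bar z}s^{k}e^{-\frac{n}{t_0}(sz-V(s)-V(z))}\,ds,
\end{equation*}
where the integration is taken along any path in the $s$-plane; convergence at $\infty_l$ follows because, for $s=-\om^{-l}t$ with $t\to+\infty$, the real part of the exponent is dominated by $-\frac{n}{3t_0}t^{3}$. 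Since the integrand is entire in $s$, the integral does not depend on the path, so $G_l$ is a well-defined function on $\C$. A direct differentiation yields $\partial_{\bar z}G_l(z,\bar z)=\bar z^{k}e^{-\frac{n}{t_0}(|z|^2-2\re V(z))}$.

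Next I apply Stokes' theorem on each subdomain $D_l$. Using $dz\wedge d\bar z=-2i\,dA$ and the fact that $Q$ is holomorphic, one computes $d\bigl(Q(z)G_l(z,\bar z)\,dz\bigr)=Q(z)\partial_{\bar z}G_l\,d\bar z\wedge dz=2i\,Q(z)\partial_{\bar z}G_l\,dA$, and Stokes gives
\begin{equation*}
2i\iint_{D_l}Q(z)\bar z^{k}e^{-\frac{n}{t_0}(|z|^2-2\re V(z))}\,dA(z)=\int_{\partial D_l}Q(z)G_l(z,\bar z)\,dz,
\end{equation*}
with $\partial D_l$ positively oriented. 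Summing over $l$ yields the 2D integral on the left-hand side of \eqref{green1}, and it remains to recognize the sum of boundary integrals on the right.

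The boundary of $D_l$ decomposes as $\gamma_l\cup\Sigma_{l+1}\cup\Sigma_{l+2}$. On $\gamma_l$ we have $G_l(z,\bar z)=\wt w_{l,k,n}(z)$ by definition, producing the second sum in \eqref{green1}. Each $\Sigma_l$ is shared by exactly two subdomains, namely $D_{l+1}$ and $D_{l+2}$; the induced orientations from the two sides are opposite, so the two contributions combine into
\begin{equation*}
\int_{\Sigma_l}Q(z)\bigl(G_{l+1}(z,\bar z)-G_{l+2}(z,\bar z)\bigr)\,dz,
\end{equation*}
provided the orientation on $\Sigma_l$ is chosen to agree with $\partial D_{l+1}$. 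Since $G_{l+1}-G_{l+2}=\int_{\infty_{l+1}}^{\infty_{l+2}}s^{k}e^{-\frac{n}{t_0}(sz-V(s)-V(z))}\,ds$, and any such contour from $\infty_{l+1}$ to $\infty_{l+2}$ is a valid choice of $\Gamma_l$ (the integrand being entire in $s$), this difference equals $w_{l,k,n}(z)$, producing the first sum in \eqref{green1}.

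The only genuinely delicate point is the bookkeeping of orientations on the three arcs $\Sigma_l$; everything else is a routine application of Stokes. This is settled once and for all by fixing a consistent ordering of the subdomains $D_0,D_1,D_2$ around the triple point $a_*$ compatible with the counterclockwise orientation of $\partial D$, and then verifying that the prescribed orientation of each $\Sigma_l$ (from $a_*$ toward $\hat z_l$) is indeed the one inherited from $\partial D_{l+1}$; if a given $\Sigma_l$ happens to carry the opposite orientation, the corresponding term in \eqref{green1} is invariant because $\Gamma_l$ can be reversed at the cost of reversing $w_{l,k,n}$.
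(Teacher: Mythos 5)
Your proof is correct and follows essentially the same route as the paper: apply the complex Green's (Stokes') theorem on each $D_l$ with $f=Q\cdot\widetilde w_{l,k,n}$, decompose the boundary integrals along $\gamma_l\cup\Sigma_{l+1}\cup\Sigma_{l+2}$, and combine the two contributions on each shared arc $\Sigma_l$ to obtain $\widetilde w_{l+1,k,n}-\widetilde w_{l+2,k,n}=w_{l,k,n}$. The only difference is presentational: you make the identity $\partial_{\bar z}G_l=\bar z^{k}e^{-\frac{n}{t_0}(|z|^2-2\re V(z))}$ explicit, whereas the paper invokes it implicitly by stating Green's formula $2i\iint_S\partial_{\bar z}f\,dA=\int_{\partial S}f\,dz$.
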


The analogous to Proposition~\ref{proposition_2d_to_contour} for the monomial case $V(z)=\frac{z^d}{d}$, $d\geq 3$, is treated in 
\cite[Proposition~1.1]{kuijlaars_lopez_normal_matrix_model}. For the benefit of the reader we include the proof for $V$ as in \eqref{generic_cubic_potential}, which is 
follows the same arguments presented in \cite{kuijlaars_lopez_normal_matrix_model}.

\begin{proof} Let us apply the complex Green's theorem
\begin{equation*}
2i\iint_S \frac{\partial f}{\partial \overline {z}}\, dA=\int_{\partial S} f\,dz
\end{equation*}
with $S=D_l$ and $f=Q(z)\widetilde w_{l,k,n}(z)$. Then we obtain that
\begin{equation*}
2i\iint_{D_l} Q(z)\overline{z}^k e^{-\frac{n}{t_0}\left( |z|^2-V(\overline{z})-V(z) \right)}\, dA(z)
=\int_{\partial D_l} Q(z)\widetilde w_{l,k,n}(z)\,dz.
\end{equation*}
Summing over $l=0,1,2$, we get that
\begin{equation}\label{green2}
2i\iint_{D} Q(z)\overline{z}^k e^{-\frac{n}{t_0}\left( |z|^2-V(\overline{z})-V(z) \right)}\, dA(z)
=\sum_{l=0}^2 \int_{\partial D_l} Q(z)\widetilde w_{l,k,n}(z)\,dz.
\end{equation}
The sum on the right can be partitioned as
\begin{multline*}
\sum_{l=0}^2 \int_{\partial D_l} Q(z)\widetilde w_{l,k,n}(z)\,dz
=\sum_{l=0}^2 \left[\int_{ \ga_l} Q(z)\widetilde w_{l,k,n}(z)\,dz\right. \\
\left.+\int_{ \Sigma_{l+2}} Q(z)\widetilde w_{l,k,n}(z)\,dz 
-\int_{ \Sigma_{l+1}} Q(z)\widetilde w_{l,k,n}(z)\,dz\right].
\end{multline*}
Combining integrals over $\Sigma_l$, we obtain that
\begin{multline*}
\sum_{l=0}^2 \int_{\partial D_l} Q(z)\widetilde w_{l,k,n}(z)\,dz
=\sum_{l=0}^2 \int_{ \ga_l} Q(z)\widetilde w_{l,k,n}(z)\,dz\\
+\sum_{l=0}^2 \int_{ \Sigma_l} Q(z)\left[\widetilde w_{l+1,k,n}(z)-\widetilde w_{l+2,k,n}(z)\right]\,dz.
\end{multline*}

We now observe that
\begin{equation*}
\begin{aligned}
\widetilde w_{l+1,k,n}(z)-\widetilde w_{l+2,k,n}(z)  & = \int_{\infty_{l+1}}^{\overline{z}} s^k e^{-\frac{n}{t_0}\left( sz-V(s)-V(z) \right)} ds \\
						     & \quad -\int_{\infty_{l+2}}^{\overline{z}} s^k e^{-\frac{n}{t_0}\left( sz-V(s)-V(z) \right)} ds \\
						     & = \int_{\Ga_l} s^k e^{-\frac{n}{t_0}\left( sz-V(s)-V(z) \right)} ds \\
						     & = w_{l,k,n}(z),
\end{aligned}
\end{equation*}
hence
\begin{equation}\label{green3}
\begin{aligned}
\sum_{l=0}^2 \int_{\partial D_l} Q(z)\widetilde w_{l,k,n}(z)\,dz
=\sum_{l=0}^2 \int_{ \ga_l} Q(z)\widetilde w_{l,k,n}(z)\,dz
+\sum_{l=0}^2 \int_{ \Sigma_l} Q(z)w_{l,k,n}(z)\,dz.
\end{aligned}
\end{equation}
Equation \eqref{green1} then follows from \eqref{green2} and \eqref{green3}.
\end{proof}

In particular, if $Q=q_{j,n}$ is the planar orthogonal polynomial \eqref{planar_orthogonality_conditions}, we conclude
\begin{equation}\label{contour_integral_with_boundary}
 \sum_{l=0}^2 \int_{ \Sigma_l}q_{j,n}(z)w_{l,k,n}(z) dz \; + \; \sum_{l=0}^2 \int_{\gamma_l}q_{j,n}(z) \widetilde 
w_{l,k,n}(z)dz=0,\quad k=0,\hdots,j-1.
\end{equation}

Led by the fact that the integrals coming from $\partial D=\cup \gamma_l$ should be negligible compared to the integrals over $ \Sigma$ when $n\to \infty$, we follow 
\cite{bleher_kuijlaars_normal_matrix_model,kuijlaars_lopez_normal_matrix_model} and neglect the integrals over $\partial D$ in 
\eqref{contour_integral_with_boundary}. This motivates the introduction of the following family of polynomials.

\begin{definition}\label{definition_mop}
We define $P_{j,n}$ to be the monic polynomial of degree $j$, if it exists, that satisfies
\begin{equation}\label{oc1}
\sum_{l=0}^2 \int_{ \Sigma_l}P_{j,n}(z) w_{l,k,n}(z)dz=0,\quad k=0,\hdots,j-1,
\end{equation}
where the weights $w_{l,k,n}$, $k,n\in\N$, $l=0,1,2$, are given in \eqref{definition_nonhermitian_weights}.
\end{definition}

\begin{prop}\label{proposition_multiple_orthogonality}
The polynomial $P_{j,n}$, if it exists, fulfills the non-hermitian multiple orthogonality conditions
\begin{equation}\label{multiple_orthogonality_conditions_j}
\begin{aligned}
\sum_{l=0}^2\int_{ \Sigma_l} P_{j,n}(z)z^k w_{l,0,n}(z)dz=0, & \quad k=0,\hdots, \left\lceil \frac{j}{2}  \right\rceil -1, \\
\sum_{l=0}^2\int_{ \Sigma_l} P_{j,n}(z)z^k w_{l,1,n}(z)dz=0, & \quad k=0,\hdots, \left\lfloor \frac{j}{2} \right\rfloor -1.
\end{aligned}
\end{equation}
\end{prop}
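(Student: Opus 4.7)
The plan is to derive a two-term reduction formula expressing each $w_{l,k,n}$ as a polynomial combination of $w_{l,0,n}$ and $w_{l,1,n}$, with polynomial coefficients in $z$ that are \emph{independent of $l$}, and then show that the $j$ conditions in \eqref{oc1} are triangularly equivalent to the system \eqref{multiple_orthogonality_conditions_j}.

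First, I would integrate by parts along $\Gamma_l$. The contour runs between two of the three directions $\infty_l$ in \eqref{definition_infinities}, each of which satisfies $\re s^3 \to -\infty$; hence $e^{\frac{n}{t_0}V(s)}$ decays super-exponentially and the boundary terms of
\begin{equation*}
\int_{\Gamma_l} \frac{d}{ds}\!\left[ s^k e^{-\frac{n}{t_0}(sz - V(s) - V(z))} \right] ds = 0
\end{equation*}
vanish. Expanding the derivative using $V'(s) = s^2 + t_1$ and rearranging produces the recurrence
\begin{equation*}
w_{l,k+2,n}(z) = (z - t_1)\, w_{l,k,n}(z) - \frac{k t_0}{n}\, w_{l,k-1,n}(z), \qquad k \ge 0,
\end{equation*}
with the convention that the second term drops for $k=0$. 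A straightforward induction then gives
\begin{equation*}
w_{l,k,n}(z) = p_k(z)\, w_{l,0,n}(z) + q_k(z)\, w_{l,1,n}(z),
\end{equation*}
where $p_k, q_k$ are polynomials in $z$ depending on $n, t_0, t_1$ but \emph{not on $l$}. Tracking the leading coefficients in the recurrence shows that for $k = 2m$ even the polynomial $p_k$ is monic of degree $m$ and $\deg q_k \le m - 1$, while for $k = 2m+1$ odd the polynomial $q_k$ is monic of degree $m$ and $\deg p_k \le m - 1$.

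Writing $I_\alpha(Q) := \sum_{l=0}^2 \int_{\Sigma_l} P_{j,n}(z)\, Q(z)\, w_{l,\alpha,n}(z)\, dz$ for $\alpha = 0, 1$, the orthogonality relations \eqref{oc1} become $I_0(p_k) + I_1(q_k) = 0$ for $k = 0, \ldots, j-1$. The parity structure above makes this system triangular: the even values $k = 0, 2, \ldots$ successively force $I_0(z^m) = 0$ for $m = 0, 1, \ldots, \lceil j/2 \rceil - 1$, while the odd values $k = 1, 3, \ldots$ successively force $I_1(z^m) = 0$ for $m = 0, 1, \ldots, \lfloor j/2 \rfloor - 1$, which is precisely \eqref{multiple_orthogonality_conditions_j}. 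The only place that deserves care is the justification of the integration by parts, but this is immediate from the choice of the directions $\infty_l$; everything else is an elementary triangular inversion, so no real obstacle is expected.
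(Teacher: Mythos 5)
Your proposal is correct and rests on exactly the same integration-by-parts identity as the paper's proof: your recurrence $w_{l,k+2,n}=(z-t_1)w_{l,k,n}-\frac{kt_0}{n}w_{l,k-1,n}$ is precisely the statement $\int_{\Gamma_l}\mathcal A(s^k)e^{\cdots}ds = z\,w_{l,k,n}(z)$ for the operator $\mathcal A(Q)=V'Q+\frac{t_0}{n}Q'$ that the paper introduces. The difference is one of bookkeeping direction: the paper works ``forward'' from the target relations, expanding $z^k w_{l,0,n}$ (respectively $z^k w_{l,1,n}$) as $\int_{\Gamma_l}\mathcal A^k(1)\,e^{\cdots}ds$ (respectively $\mathcal A^k(s)$), noting $\deg\mathcal A^k(Q)=\deg Q + 2k < j$, and invoking \eqref{oc1} directly; you instead solve the recurrence ``backward'' to write $w_{l,k,n}=p_k w_{l,0,n}+q_k w_{l,1,n}$ and then triangularize. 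Both buy the same result with essentially the same degree count; the paper's version avoids the explicit triangular inversion and is a little leaner. One small imprecision in your sketch: the even and odd $k$ cannot be processed in two separate passes as your phrasing suggests — the induction must interleave them (e.g. concluding $I_0(z^2)=0$ from $k=4$ requires $I_1(1)=0$ from $k=1$, since $q_4$ is a nonzero constant) — but the argument goes through cleanly once organized this way.
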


For a proof when $t_1=0$, we refer to \cite[Lemma~5.1]{bleher_kuijlaars_normal_matrix_model}. 
The case $t_1\neq 0$ is treated similarly. For the sake of completeness we give the proof.

\begin{proof}
Equation \eqref{oc1} reads
\begin{equation}\label{oc2}
\sum_{l=0}^2 \int_{ \Sigma_l}P_{j,n}(z)\int_{\Gamma_l} s^k e^{-\frac{n}{t_0}\left( sz -V(s)-V(z) \right)}dsdz=0,\quad k=0,\hdots,j-1.
\end{equation}
Let $Q(s)$ be any polynomial. Integration by parts gives the identity 
\begin{multline*}
\int_{\Gamma_l} Q'(s) e^{-\frac{n}{t_0}\left( sz -V(s)-V(z) \right)}ds \\
=\frac{n}{t_0}\int_{\Gamma_l} Q(s)\left( z-V'(s)\right)e^{-\frac{n}{t_0}\left( sz -V(s)-V(z) \right)}ds,
\end{multline*}
hence
\begin{multline}\label{oc4}
z\int_{\Gamma_l} Q(s) e^{-\frac{n}{t_0}\left( sz -V(s)-V(z) \right)}ds\\
=\int_{\Gamma_l} \left( V'(s)Q(s)+\frac{t_0 }{n}\,Q'(s)\right)e^{-\frac{n}{t_0}\left( sz -V(s)-V(z) \right)}ds. 
\end{multline}

Introduce the linear differential operator
\begin{equation}\label{oc5}
\mathcal A: Q(s)\mapsto  V'(s)Q(s)+\frac{t_0 }{n}\,Q'(s).
\end{equation}
Then the identity \eqref{oc4} can be written as
\begin{equation*}
z\int_{\Gamma_l} Q(s) e^{-\frac{n}{t_0}\left( sz -V(s)-V(z) \right)}ds
=\int_{\Gamma_l} \mathcal A(Q)(s)e^{-\frac{n}{t_0}\left( sz -V(s)-V(z) \right)}ds. 
\end{equation*}
Applying it $k$ times, we obtain that
\begin{equation*}
z^k\int_{\Gamma_l} Q(s) e^{-\frac{n}{t_0}\left( sz -V(s)-V(z) \right)}ds
=\int_{\Gamma_l} \mathcal A^k(Q)(s)e^{-\frac{n}{t_0}\left( sz -V(s)-V(z) \right)}ds, 
\end{equation*}
thus
\begin{multline*}
\sum_{l=0}^2\int_{ \Sigma_l} P_{j,n}(z)z^k \int_{\Gamma_l} Q(s) e^{-\frac{n}{t_0}\left( sz -V(s)-V(z) \right)}dsdz\\
=\sum_{l=0}^2\int_{ \Sigma_l} P_{j,n}(z)\int_{\Gamma_l} \mathcal A^k(Q)(s)e^{-\frac{n}{t_0}\left( sz -V(s)-V(z) \right)}dsdz. 
\end{multline*}
Observe that if $Q(s)$ is a polynomial, then $\mathcal A^k(Q)(s)$ is a polynomial as well, and since $\deg V'=2$, we obtain from \eqref{oc5} that
\begin{equation*}
\deg \mathcal A^k(Q)=\deg Q+2k,
\end{equation*}
hence if $Q(s)\equiv 1$ and $2k<j$, then orthogonality condition \eqref{oc2} implies that
\begin{multline*}
\sum_{l=0}^2\int_{ \Sigma_l} P_{j,n}(z)z^k \int_{\Gamma_l} e^{-\frac{n}{t_0}\left( sz -V(s)-V(z) \right)}dsdz\\
=\sum_{l=0}^2\int_{ \Sigma_l} P_{j,n}(z)\int_{\Gamma_l} \mathcal A^k(1)(s)e^{-\frac{n}{t_0}\left( sz -V(s)-V(z) \right)}dsdz=0. 
\end{multline*}
This proves the first identity in \eqref{multiple_orthogonality_conditions_j}. To prove the second one, we take $Q(s)\equiv s$
and any $k$ such that $2k+1<j$. Then the orthogonality condition \eqref{oc2} implies that
\begin{multline*}
\sum_{l=0}^2\int_{ \Sigma_l} P_{j,n}(z)z^k \int_{\Gamma_l} s e^{-\frac{n}{t_0}\left( sz -V(s)-V(z) \right)}dsdz\\
=\sum_{l=0}^2\int_{ \Sigma_l} P_{j,n}(z)\int_{\Gamma_l} \mathcal A^k(z)(s)e^{-\frac{n}{t_0}\left( sz -V(s)-V(z) \right)}dsdz=0. 
\end{multline*}
This proves \eqref{multiple_orthogonality_conditions_j}.
\end{proof}

We are mostly interested in the {\it diagonal} polynomials $P_{n,n}$. From Proposition \ref{proposition_multiple_orthogonality}
we obtain that the polynomial $P_{n,n}$ fulfills the non-hermitian multiple orthogonality conditions
\begin{equation}\label{multiple_orthogonality_conditions}
\begin{aligned}
\sum_{l=0}^2\int_{ \Sigma_l} P_{n,n}(z)z^k w_{l,0,n}(z)dz=0, & \quad k=0,\hdots, \left\lceil \frac{n}{2}  \right\rceil -1, \\
\sum_{l=0}^2\int_{ \Sigma_l} P_{n,n}(z)z^k w_{l,1,n}(z)dz=0, & \quad k=0,\hdots, \left\lfloor \frac{n}{2} \right\rfloor -1.
\end{aligned}
\end{equation}

We remark that, as a consequence of the construction of the functions $w_{l,k,n}$, the orthogonality conditions above do not depend on the precise choice of the contour 
$\Sigma$ as in \eqref{definition_extension_skeleton}, but only on its endpoints $\hat z_0,\hat z_1$ and $\hat z_2$.

As a consequence of \eqref{multiple_orthogonality_conditions}, $P_{n,n}$ is additionally characterized through a Riemann-Hilbert problem (shortly RHP). We apply the 
nonlinear Deift/Zhou steepest descent analysis \cite{deift_book,deift_its_zhou_riemann_hilbert_random_matrices_integrable_statistical_mechanics} to this RHP and obtain 
the existence and asymptotic information for the polynomial $P_{n,n}$ for $n$ sufficiently large. This analysis is 
carried out in Sections~\ref{section_riemann_hilbert_analysis_precritical} and 
\ref{section_riemann_hilbert_analysis_postcritical} for the three-cut and one-cut cases, respectively, and we refer the reader to these sections for more details. We also 
stress that several trajectories of the quadratic differential $\varpi$ constructed in Section~\ref{section_quadratic_differential} play a fundamental role in this asymptotic 
analysis. One of the outcomes of it is regarding the zero counting measure
\begin{equation}\label{counting_measures}
d\mu_n(z)=\frac{1}{n}\sum_{P_{n,n}(w)=0}\delta(z-w)dA(z).
\end{equation}

\begin{thm}\label{theorem_limiting_counting_measure}
 Suppose $(t_0,t_1)\in\mathcal F\setminus \gamma_c$. The sequence of zero counting measures $(\mu_n)$ converges weakly to the measure $\mu_*$ given by 
\eqref{limiting_measure_zeros}.
\end{thm}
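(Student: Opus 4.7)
The plan is to derive the weak convergence $\mu_n \to \mu_*$ as a direct consequence of strong asymptotics for $P_{n,n}$ obtained through the Deift--Zhou steepest descent analysis applied to the Riemann--Hilbert problem (RHP) that characterizes $P_{n,n}$ via the multiple orthogonality \eqref{multiple_orthogonality_conditions}. The relevant RHP is of size $3\times 3$: its $(1,1)$ entry is $P_{n,n}$, the remaining entries of the first row are Cauchy transforms of $P_{n,n}$ against the weights $w_{l,0,n}$ and $w_{l,1,n}$ over $\Sigma$, and the other two rows are built from auxiliary polynomials of degrees $\lceil n/2\rceil -1$ and $\lfloor n/2\rfloor -1$. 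The behavior at infinity is controlled by $\mathrm{diag}(z^n, z^{-\lceil n/2\rceil}, z^{-\lfloor n/2\rfloor})$.

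Next, one performs the standard chain of transformations $Y \mapsto T \mapsto S \mapsto R$. The normalization $Y \mapsto T$ is built from a vector of $g$-functions $(g_1, g_2, g_3)$ associated with the three sheets of the spectral curve \eqref{spectral_curve}, satisfying $g_j'(z) = \xi_j(z)$ with normalizations dictated by \eqref{asymptotics_xi}. The $S$-property \eqref{s_property} together with the critical graph of the quadratic differential $\varpi$ is precisely what guarantees that, after opening lenses along $\Sigma_*$, the jumps off $\Sigma_*$ are exponentially close to the identity. A global parametrix is then constructed on the Riemann surface associated with \eqref{spectral_curve}, and Airy-type local parametrices are placed at the branch points $z_0, z_1, z_2$ from Theorem \ref{theorem_singular_points_spectral_curve}. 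Matching yields a small-norm RHP for $R$ satisfying $R = I + O(1/n)$ uniformly, with the one-cut and three-cut cases handled by slightly different lens configurations as dictated by the topology of $\supp \mu_*$ described in Theorem \ref{theorem_limiting_support_zeros}.

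Tracing the transformations back produces the asymptotic identity
\[
\frac{1}{n} \log |P_{n,n}(z)| = -U^{\mu_*}(z) + c + o(1), \quad n \to \infty,
\]
locally uniformly on compact subsets of $\C \setminus \Sigma_*$, for a suitable constant $c$. The sequence $(\mu_n)$ is tight since the RH analysis shows that $P_{n,n}$ is non-vanishing outside any fixed neighborhood of $\Sigma_*$ for $n$ large. Therefore any weak limit $\nu$ has $\supp \nu \subset \Sigma_*$, and the $L^1_{\mathrm{loc}}$ convergence of the subharmonic functions $\frac{1}{n}\log|P_{n,n}|$, combined with the unicity of a compactly supported probability measure with prescribed logarithmic potential on the unbounded component of the complement of its support, forces $\nu = \mu_*$.

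The real work is concentrated in the RH analysis itself: the construction of the $g$-functions requires the detailed description of the critical graph of $\varpi$ and its phase transitions developed earlier, the lenses must be opened along trajectories of $\varpi$, and the one-cut versus three-cut dichotomy forces separate but parallel treatments. Once that framework is set up in Sections \ref{section_riemann_hilbert_analysis_precritical} and \ref{section_riemann_hilbert_analysis_postcritical}, the passage from strong asymptotics to weak convergence of the zero counting measures is classical, and Theorem \ref{theorem_limiting_counting_measure} becomes essentially a corollary.
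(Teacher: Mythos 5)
Your proposal is correct and follows essentially the same route as the paper: obtain strong asymptotics for $P_{n,n}$ on compacts of $\C\setminus\Sigma_*$ via the Deift--Zhou analysis (cf. Theorem \ref{thm_wave_factor}), deduce that zeros accumulate on $\Sigma_*$, and then identify any weak-$*$ subsequential limit $\nu$ with $\mu_*$ by showing $U^\nu = U^{\mu_*}$ on $\C\setminus\Sigma_*$ and invoking the unicity theorem for logarithmic potentials. One small remark: the paper's transformation chain is $Y\mapsto X\mapsto T\mapsto S\mapsto R$, where the intermediate step $Y\mapsto X$ (using the Airy function identities to decouple the jump matrix into $2\times 2$ blocks) is a nontrivial and load-bearing ingredient that your sketch absorbs silently into the "standard chain"; similarly, the paper first shows $\frac{1}{n}\log|P_{n,n}(z)| \to \frac{1}{t_0}\re(g_1(z)-V(z)-l_1)$ and then separately verifies, via the Cauchy transform identity $C^{\mu_*}(z)=-\frac{1}{t_0}(\xi_1(z)-V'(z))$, that this limit equals $-U^{\mu_*}(z)$ up to a constant, whereas you assert the identity directly. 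These are gaps in exposition rather than in the argument; the approach is sound.
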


By Theorem \ref{theorem_rational_parametrization_polynomial_curve}, we know that the restriction of $h$ to $\overline\C\setminus\D$ admits an inverse 
$\psi_1:\overline\C\setminus \Omega\to \overline \C \setminus \D$. The function $\psi_1$ is alternatively characterized as the conformal map from $\C\setminus 
\overline\Omega$ to $\C \setminus \overline\D$ that is uniquely determined by the conditions
\begin{equation*}
\lim_{z\to \infty}\psi_1(\infty)=\infty,\quad \lim_{z\to \infty} \psi_1'(z)=\frac{1}{r},
\end{equation*}
where $r$ is as in Proposition \ref{proposition_definition_r}.

In addition, define the multivalued analytic function
\begin{equation}\label{definition_factor_G}
G(z)=\int_{z_0}^z \xi_1(s)ds,\quad z\in \C\setminus \Sigma_*,
\end{equation}
where $\xi_1$ is as in \eqref{asymptotics_xi} and the path of integration is taken in $\C\setminus\Sigma_*$. As can be seen from the expansion \eqref{asymptotics_xi}, the residue 
of the function $\xi_1$ at $\infty$ is $-t_0$. In particular, this implies that $G$ is well defined modulo $2\pi i t_0$.

As for the uniform asymptotics for $P_{n,n}$, one of the consequences of our asymptotic analysis is given by the following result.
\begin{thm}\label{thm_wave_factor}
Suppose $(t_0,t_1)\in \mathcal F\setminus \gamma_0$. The map $\psi_1$ admits an analytic continuation to $\C\setminus \Sigma_*$, and for a certain constant $c$, the asymptotic 
formula
 \begin{equation}\label{uniform_asymptotics_polynomial}
 P_{n,n}(z)=\sqrt{r \psi_1'(z)}e^{\frac{n}{t_0}(G(z)-V(z)+c)}(1+\Boh(n^{-1})),
 \end{equation}
 holds true uniformly on compacts of $\C\setminus \Sigma_*$, where the branch of the square root is chosen with branch cut on $\Sigma_*$ and so that $\sqrt{r\psi_1'(z)}\to 1$ as 
$z\to\infty$.
\end{thm}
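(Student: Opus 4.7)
The plan is to apply the Deift-Zhou nonlinear steepest descent method to the Riemann-Hilbert problem (RHP) that encodes the non-hermitian multiple orthogonality \eqref{multiple_orthogonality_conditions}. Since $P_{n,n}$ is characterized as a type II multiple orthogonal polynomial with respect to the two systems of weights $\{w_{l,0,n}\}_{l=0}^{2}$ and $\{w_{l,1,n}\}_{l=0}^{2}$ on $\Sigma=\Sigma_0\cup\Sigma_1\cup\Sigma_2$, the natural encoding is a $3\times 3$ matrix RHP $Y(z)$ of Van Assche--Geronimo--Kuijlaars type, whose $(1,1)$-entry equals $P_{n,n}(z)$ and whose remaining rows involve appropriate Cauchy transforms of $P_{n,n}w_{l,k,n}$ across $\Sigma$.

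I would then perform the standard chain of transformations $Y\to T\to S\to R$. In the $T$-step one introduces three $g$-functions
\begin{equation*}
g_j(z)=\int^{z}\xi_j(s)\,ds,\qquad j=1,2,3,
\end{equation*}
with integration paths on the respective sheets of the spectral curve $\mathcal R$, normalized so that after conjugation the resulting jumps are bounded at $\infty$ and have the correct factorization structure on $\Sigma_*$; the identification $g_1(z)=G(z)$ with $G$ from \eqref{definition_factor_G} produces the exponential factor $e^{n(G-V+c)/t_0}$ in \eqref{uniform_asymptotics_polynomial}, with $c$ a Lagrange-multiplier type constant. The $S$-step opens lenses around $\Sigma_*$; the decay of the lens-side jumps is driven by the signs of $\re(g_j-g_k)$, and those signs, as well as the geometry of the regions in which they are strict, are read off precisely from the critical graph $\mathcal G$ of the quadratic differential $\varpi$ constructed in Section~\ref{section_quadratic_differential}. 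One is then free to deform each $\Sigma_l$ inside the appropriate cell of $\mathcal G$, and the ``optimal'' choice making all off-$\Sigma_*$ jumps exponentially close to the identity is guided by $\mathcal G$.

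For the outside (global) parametrix I would exploit the rational uniformization $w\mapsto (h(w^{-1}),h(w))$ of the spectral curve from Theorem~\ref{theorem_schwarz_function}: the inverse on the ``physical'' sheet is the conformal map $\psi_1$, and the three pre-images of $z$ under $h$ furnish the three columns of a matrix-valued Szegő-type solution to the external model problem. The factor $\sqrt{r\,\psi_1'(z)}$ in \eqref{uniform_asymptotics_polynomial} emerges naturally as the $(1,1)$-entry of this parametrix, the $r$ coming from the leading Laurent coefficient of $h$ at infinity. Local parametrices built out of Airy functions handle each of the simple branch points $z_0,z_1,z_2$ of $\mathcal R$ (Theorem~\ref{theorem_singular_points_spectral_curve}), with a further Airy parametrix near the branch point at infinity; in the critical one-cut case, the corresponding ``hard edge'' at $z_*$ also needs a model solution. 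A standard matching on small circles around these points then yields a final error $R=S\cdot(\text{parametrix})^{-1}$ with jumps $I+O(1/n)$, and small-norm theory gives $R=I+O(1/n)$ uniformly on compacts of $\C\setminus\Sigma_*$. Unravelling the transformations in the $(1,1)$-entry produces \eqref{uniform_asymptotics_polynomial}, and Theorem~\ref{theorem_limiting_counting_measure} follows as a corollary.

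The principal obstacle is the verification of the global variational inequalities $\re(g_j-g_k)>0$ off $\Sigma_*$ and the simultaneous arrangement of $\Sigma$ inside the correct sheet-cells of $\mathcal G$; without the detailed description of $\mathcal G$ from Section~\ref{section_quadratic_differential} the steepest descent simply cannot be closed, because it is precisely these inequalities (an S-property for the effective problem) that force the lens-side jumps to be exponentially small. Because the topology of $\mathcal G$ and of the sheet structure of $\mathcal R$ changes across the curve $\gamma_c$, the three-cut and one-cut regimes require genuinely different choices of lenses and of the global parametrix, which is why the actual analysis is split into Sections~\ref{section_riemann_hilbert_analysis_precritical} and \ref{section_riemann_hilbert_analysis_postcritical}. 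A minor secondary difficulty is the contribution of the $\widetilde w_{l,k,n}$ pieces that were dropped in passing from \eqref{contour_integral_with_boundary} to \eqref{oc1}, but since $P_{n,n}$ is defined directly by the contour orthogonality this does not enter the proof of \eqref{uniform_asymptotics_polynomial}.
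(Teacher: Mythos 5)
Your proposal captures the paper's approach faithfully in its essentials: the $3\times 3$ Van Assche--Geronimo--Kuijlaars RHP with $(Y)_{11}=P_{n,n}$, the $g$-functions built from the $\xi$-branches, the role of the critical graph $\mathcal G$ of $\varpi$ in placing $\Sigma$ and opening lenses, the global parametrix coming from the rational uniformization $h$ (which is exactly what produces $M_{1,1}=\sqrt{r\psi_1'}$), and the unravelling $P_{n,n}=(1+O(1/n))M_{1,1}\,e^{n(g_1-V-l_1)/t_0}$ with $g_1=G+c_1$ yielding $c=c_1-l_1$. The split into three-cut and one-cut regimes, and the reliance on the quadratic-differential analysis to obtain the required sign conditions, is also the paper's structure.

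Two small inaccuracies worth flagging. First, the paper's chain is $Y\to X\to T\to S\to R$; the step $Y\to X$ (built from the Airy relations $y_0+y_1+y_2=0$ and the auxiliary $y_3,y_4,y_5$, see \eqref{transformation_X}) is what collapses the jump to a $2\times 2$ block structure with scalar exponents, and it is a genuine algebraic step rather than a by-product of introducing $g$-functions; your proposal tucks it silently into the $T$-step. Second, neither the branch point $\infty^{(2)}=\infty^{(3)}$ nor the triple point $z_*$ requires a local parametrix: the branching at $\infty$ is absorbed into the prefactor $A(z)$ in the asymptotics of the RHP (see \eqref{definition_asymptotic_matrix_A}), and at $z_*$ both $S$ and $M$ are required only to be \emph{bounded} (this is built into the RHP conditions), so no ``hard edge'' model arises there. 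In the one-cut case the point that needs care is $z_2$, where nine jump rays meet, but the local parametrix there is still of Airy type. These are fine points of execution, not gaps in the plan; the overall route is the one the paper follows.
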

Theorems \ref{theorem_limiting_counting_measure} and \ref{thm_wave_factor} are proven in Section \ref{section_proof_theorem_limiting_counting_measure}, after the conclusion of the 
steepest descent analysis.

Relation \eqref{mother_body_equation} is comparable with results by Elbau~\cite[Lemma~5.1 and 
Theorem~5.3]{elbau_thesis}. As mentioned in the introduction, Elbau shows that any weak limit $\nu$ of the sequence of zero counting measures for 
the polynomials $(q_{n,n})$ in \eqref{planar_orthogonality_conditions} should also 
satisfy 
\eqref{mother_body_equation} (with $\mu_*$ replaced by $\nu$). Hence, what Theorem~\ref{theorem_limiting_counting_measure} says is that, at the level of weak asymptotics, our 
multiple orthogonal polynomials $(P_{n,n})$ have the same behavior as expected for the planar orthogonal polynomials $(q_{n,n})$. We also expect the formula 
\eqref{uniform_asymptotics_polynomial} to hold true if we replace $P_{n,n}$ by the polynomial $q_{n,n}$.

The restriction $(t_0,t_1)\in \mathcal F\setminus \gamma_c$ in Theorems \ref{theorem_limiting_counting_measure} and \ref{thm_wave_factor} is of technical nature. For these values 
of $(t_0,t_1)$, the density of $\mu_*$ vanishes as square root at the endpoints of its support, and as a consequence the required local parametrices in the steepest descent 
analysis are constructed out of solutions to the Airy differential equation. However, in the critical case $(t_0,t_1)\in \mathcal \gamma_c$ the density of $\mu_*$ vanishes with 
order $1/3$ at the endpoint $z_1=z_2$, and a different parametrix is required to complete the steepest descent analysis. This order of vanishing indicates that the local 
parametrix near $z_1=z_2$ should be constructed in terms of solutions to the Pearcey differential equation (see for instance \cite{bleher_kuijlaars_external_source_gaussian_III} 
and the references therein), but the rest of the analysis carried out here should work, after cosmetic changes, in this critical case as well. In particular, the conclusions 
of Theorems~\ref{theorem_limiting_counting_measure} and \ref{thm_wave_factor} should also hold true for $(t_0,t_1)\in \gamma_c$.

\subsection{Behavior at the boundary of the phase diagram}\label{section_behavior_critical_time}

When $(t_0,t_1)\in \Gamma_c$, the branch point $z_0$ and the double point $\hat z_0$ described in Theorem~\ref{theorem_singular_points_spectral_curve} come together at the 
boundary $\partial \Omega$. This corresponds to a branching of the Schwartz function $\xi_1$ on $\partial \Omega$, and explains the emerging of the cusp in $\partial 
\Omega$, as can be seen, for instance, in Figures~\ref{phase_diagram_complete}, \ref{chain_polynomial_curves} and \ref{figure_support_mu_star}.

Consequently, when $(t_0,t_1)\in \Gamma_c$ the support of $\supp\mu_*$ comes to the boundary of $\Omega$ as well, and the density of the measure $\mu_*$ vanishes with order $3/2$ 
at the endpoint $z_0=\hat z_0$. This coalescence (for $t_1=0$) is already described in the literature (see for instance 
\cite{kuijlaars_tovbis_supercritical_normal_matrix_model,lee_teodorescu_wiegmann,bleher_kuijlaars_normal_matrix_model}), and the local behavior of 
the polynomial $P_{n,n}$ (after suitable regularization) near this point is expected to be given in terms of solutions to the Painlevé I equation, see for instance 
\cite{duits_kuijlaars_painleve_I,bleher_deano_painleve_I} for related works. When $t_1>0$, nothing special happens near the other endpoints $z_1,z_2$ of 
$\supp\mu_*$.

\subsection{The S-property}\label{section_s_property}

Thanks to the precise manner we constructed the weights for the integrals in each contour $\Sigma_l$, the sum of integrals in \eqref{multiple_orthogonality_conditions} neither 
depends on the precise choice of contours $\Sigma_1,\Sigma_2$ and $\Sigma_3$ nor on their common point $a_*$. 
This freedom is reflected in Theorem~\ref{theorem_limiting_counting_measure}, which says that the zeros of $P_{n,n}$ accumulate on $\supp\mu_*$ 
regardless of the precise choice of the contour $\Sigma$. Furthermore, although the orthogonality conditions \eqref{multiple_orthogonality_conditions} do depend on the choice of 
endpoints $\hat z_0,\hat z_1$ and $\hat z_2$ for $\Sigma$, it becomes clear from our RH analysis that there is some flexibility in the choice of these points: 
in the large $n$ limit the behavior of $P_{n,n}$ does not depend on these endpoints, as long as they are selected within some regions determined by certain critical trajectories 
of the underlying quadratic differential. 

This freedom in the choice of the contour $\Sigma$ is characteristic for non-hermitian orthogonality, and it is reflected in the behavior of the zeros of the respective orthogonal 
polynomials. Among all possible choices of contours, the zeros, in the large $n$ limit, accumulate in a very particular one, determined by the so called {\it S-property}, as we 
discuss next.

Given a contour $\Sigma$ for the orthogonality \eqref{multiple_orthogonality_conditions}, construct 
three other oriented contours $L_j$, $j=0,1,2$, starting at the point $a_*$ in the inner sector between $\Sigma_{j-1}$ and $\Sigma_{j+1}$, and extending to $\infty$ along the 
directions $\infty_j$, $j=0,1,2$, as defined in \eqref{definition_infinities}. Assume in addition $L_j\cap L_k=L_j\cap \Sigma=\{a_*\}$ for $j\neq k$ and set $L=L_0\cup L_1\cup 
L_2$. We refer to Figure~\ref{figure_contour_x} in Section~\ref{section_transformation_Y_X} for an example of the configuration of $L$ and $\Sigma$.

For any pair of contours $(\Sigma,L)$ as above, we associate a class of pairs of measures $\mathcal M(\Sigma,L)=\{(\nu_1,\nu_2)\}$, where each pair 
$(\nu_1,\nu_2)$ satisfies the constraints
$$
|\nu_1|=2|\nu_2|=1,\quad \supp\nu_1\subset \Sigma,\quad \supp\nu_2\subset L.
$$

For a pair $(\nu_1,\nu_2)\in\mathcal M(\Sigma,L)$, we denote by
$$
I(\nu_1,\nu_2)=\iint \log\frac{1}{|s-z|}d\nu_1(s)d\nu_2(s)
$$
their mutual logarithmic energy and define the {\it vector energy}
$$
E(\nu_1,\nu_2)=I(\nu_1,\nu_1)+I(\nu_2,\nu_2)-I(\nu_1,\nu_2)-\frac{1}{t_0}\int \re(V(x)-\Psi(x))d\nu_1(x),
$$
where $\Psi(z)$ is an appropriate branch of $\frac{2}{3}(z-t_1)^{3/2}$.

The {\it vector equilibrium problem} for the pair $(\Sigma,L)$ and the energy $E(\cdot)$ asks for minimizing this vector energy on $\mathcal M(\Sigma,\Gamma)$. That is, asks for 
finding a pair $(\lambda_1,\lambda_2)\in \mathcal M(\Sigma,L)$, the so-called {\it vector equilibrium measure}, satisfying
$$
E(\lambda_1,\lambda_2)=\inf_{(\mu_1,\mu_2)\in \mathcal M(\Sigma,L)} E(\mu_1,\mu_2).
$$

We stress that the vector equilibrium measure $(\lambda_1,\lambda_2)$ depends on the pair of contours $(\Sigma,L)$. Finally, the {\it S-property problem} asks for finding a pair 
of 
contours $(\Sigma,L)$ for which the respective vector equilibrium measure $(\lambda_1,\lambda_2)$ satisfies the {\it S-properties}
\begin{multline*}
 \frac{\partial }{\partial n_+}\left( 2U^{\lambda_1}(z)-U^{\lambda_2}(z)-\frac{1}{t_0}\re(V(z)-\Psi(z))\right)= \\ \frac{\partial }{\partial n_-}\left( 
2U^{\lambda_1}(z)-U^{\lambda_2}(z)-\frac{1}{t_0}\re(V(z)-\Psi(z))\right),\quad z\in \supp\lambda_1
\end{multline*}
and
\begin{equation*}
 \frac{\partial }{\partial n_+}\left( 2U^{\lambda_2}(z)-U^{\lambda_1}(z)\right)= \frac{\partial }{\partial n_-}\left( 
2U^{\lambda_2}(z)-U^{\lambda_1}(z))\right),\quad z\in \supp\lambda_2,
\end{equation*}
where $n_\pm$ are the normal vectors to $\Sigma\cup L$ and $U^{\lambda_j}$ is defined in \eqref{definition_log_potential}. If the pair $(\Sigma,L)$ has the 
S-property as above, we call it a pair of {\it S-contours}

The S-property has been originally introduced in the context of Padé approximants by H. Stahl 
\cite{stahl_orthogonal_polynomials_complex_weight_function,stahl_extremal_domains,stahl_structure_extremal_domains} and further extended by A. Gonchar and E. Rakhmanov 
\cite{gonchar_rakhmanov_rato_rational_approximation} to non-hermitian orthogonality with varying weights, see also 
\cite{rakhmanov_orthogonal_s_curves,baratchart_yattselev_extremal_domains,kuijlaars_silva_s_curves,martinez_rakhmanov,huybrechs_kuijlaars_lejon,bertola_boutroux} for a more recent 
account of results when dealing with non-hermitian orthogonality.

However, the S-property for multiple orthogonality is much less clear. To our knowledge, the cases which have been studied so far are either restricted to multiple orthogonality 
with fixed (non varying) weights \cite{aptekarev_kuijlaars_van_assche_hermite_pade} or make strong symmetry assumptions for the weights 
\cite{aptekarev_bleher_kuijlaars_external_source_gaussian_II,bleher_kuijlaars_external_source_gaussian_I,bleher_delvaux_kuijlaars_external_source,
bleher_kuijlaars_normal_matrix_model}. For all of those, the S-property followed directly from the symmetry at hand.

In the present setting, given a pair of contours $(\Sigma,L)$ as above, the vector equilibrium energy exists, is unique and can be further characterized in terms of certain 
(Euler-Lagrange) variational conditions \cite{beckermann_et_al_equilibrium_problems,hardy_kuijlaars_vector_equilibrium}. However, finding the pair of S-contours is a much more 
delicate matter. To recover the S-property in our setting, we recall the condition \eqref{s_property}, which allows us to define the {\it positive} measure $\mu_*$ as in 
\eqref{limiting_measure_zeros}. From our 
analysis of the underlying quadratic differential in Section~\ref{section_quadratic_differential}, it is possible to construct two contours $(\Sigma,L)$ such 
that the respective vector equilibrium measure is of the form $(\mu_*,\lambda_*)$, where the measure $\mu_*$ is the one given by 
Theorem~\ref{theorem_limiting_support_zeros}. The contour $L$ satisfies the auxiliary condition
\begin{equation}\label{s_property_second_measure}
(\xi_{2+}(z)-\xi_{3+}(z))dz\in i\R,\quad z\in L,
\end{equation}
and the measure $\lambda_*$ can be constructed from this condition. It then follows that the S-property for this pair of contours $(\Sigma,L)$ is actually equivalent to the 
conditions \eqref{s_property} and \eqref{s_property_second_measure}. Consequently, the measure $\mu_*$ can also be interpreted in terms of the S-property above, and conditions 
\eqref{s_property} and \eqref{s_property_second_measure} can be regarded as {\it algebraic} S-properties. As we will see later, \eqref{s_property} and 
\eqref{s_property_second_measure} also play a fundamental role in the construction of the g-functions used in the forthcoming RH analysis.

\subsection{Statement of Results - $t_1<0$}\label{section_negative_t}

As we mentioned in the introduction, after appropriate modifications the results of Sections~\ref{section_phase_diagram}--\ref{section_s_property} are also valid for 
$t_1\in (-3/4,0)$, as it is discussed next. Although the proofs of these results for $t_1<0$ go very much along the same lines as their respective 
results for $t_1>0$, some technical results collected on the way would have to be proved again taking into account $t_1<0$, what would make this already lengthy paper much 
longer. So for the sake of presentation we decided to not provide proof of these results.

The curve
$$
\Gamma_c^-: \quad t_0=-16s^6+6s^4,\quad t_1=-12s^4+6s^2-\frac{3}{4},\quad 0\leq s \leq\frac{1}{2},
$$
together with the horizontal and vertical axes determine a bounded domain $\mathcal F^-$ on the $(t_0,t_1)$-plane, see Figure~\ref{figure_phase_diagram_negative_t_1}. Note in 
particular that $t_1<0$, whenever $(t_0,t_1)\in \mathcal F^-$.

\begin{figure}[t]
\begin{overpic}[scale=1]
 {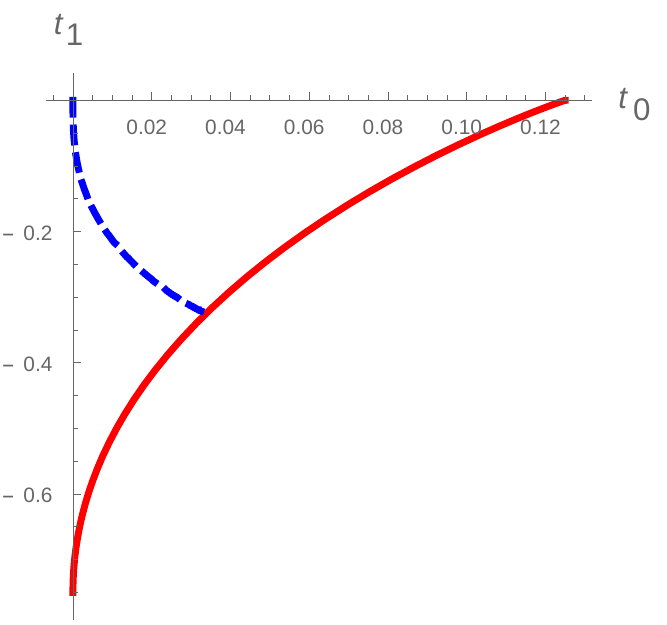}
 \put(22,54){$\gamma_c^-$}
 \put(52,58){$\Gamma_c^-$}
\put(35,65){$\mathcal F_1^-$}
 \put(13,40){$\mathcal F_2^-$}
 \end{overpic}
\caption{Phase diagram for $t_1$ negative. The dashed curve is $\gamma_c^-$, whereas the solid curve is $\Gamma_c^-$. The region to the right of $\gamma_c^-$ is $\mathcal F_1^-$, 
whereas the region to the left of 
$\gamma_c^-$ is $\mathcal F_2^-$.}\label{figure_phase_diagram_negative_t_1}
\end{figure}

When $(t_0,t_1)\in \mathcal F^-$, Proposition~\ref{proposition_definition_r} still holds true. That is, the polynomial $p$ still has a smallest positive root, always simple, that 
we 
keep denoting by $r=r(t_0,t_1)$. The coefficient $a_0$ in \eqref{definition_a_0} is well-defined, but now it becomes negative. Nevertheless, the rational function 
$h$ in \eqref{rational_parametrization} is also well-defined, and Theorems \ref{theorem_rational_parametrization_polynomial_curve}, \ref{thm_equilibrium_measure} and 
\ref{theorem_density_eigenvalues} hold true without any modification in their statements. However, we emphasize that the behavior of the roots of 
$p$ and of the coefficients $r$ and $a_0$, as functions of $(t_0,t_1)$, change when compared to $t_1>0$, so all the auxiliary results needed in their proofs have to be modified. 
We refer the reader to Figure~\ref{figure_chain_polynomial_curves_negative_t_1} where, for a certain choice of $t_1<0$, the evolution in time $t_0>0$ of the boundary 
$\partial \Omega$ is displayed. 

\begin{figure}[t]
\begin{overpic}[scale=1]{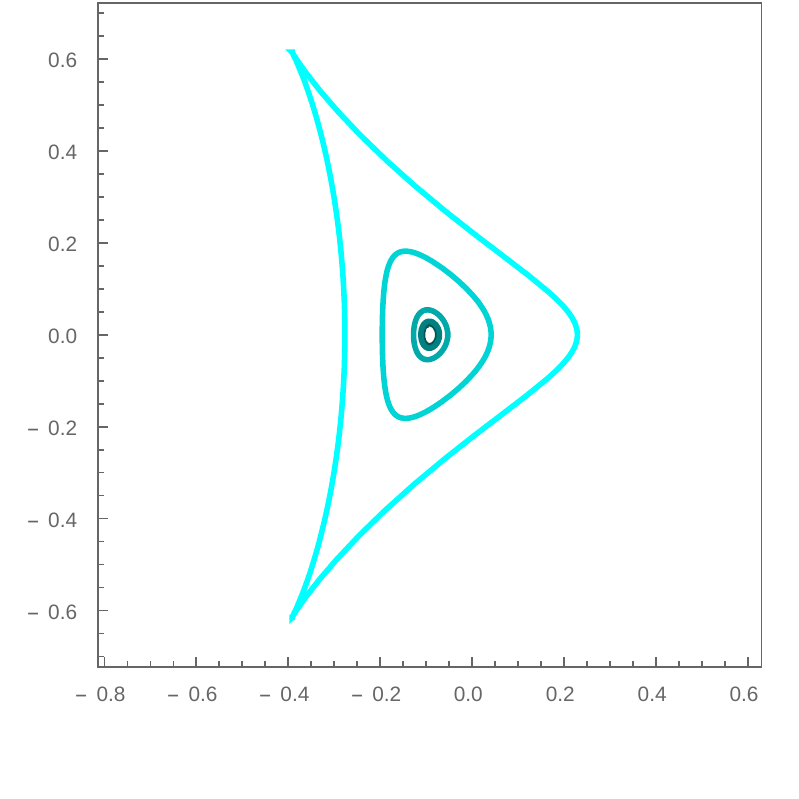}
\end{overpic}
\caption{The boundary $\partial \Omega$ corresponding to $t_1=-\frac{1}{10}$ and $t_0=\frac{1}{2500}, \frac{1}{1700}, \frac{1}{500}, \frac{1}{50}$ and 
$\frac{35+4\sqrt{30}}{1800}$. 
(respectively from dark to bright color). For the pair $(t_0,t_1)=\frac{35+4\sqrt{30}}{1800},-\frac{1}{10})$, which belongs to the critical curve $\Gamma_c^-$, cusps are 
created at the boundary. Numerical output.}\label{figure_chain_polynomial_curves_negative_t_1}
\end{figure}

The quantities $B=B(t_0,t_1)$, $A=A(t_0,t_1)$, given respectively in equations \eqref{equation_B} and \eqref{equation_A}, are still meaningful, and so is the spectral 
curve in \eqref{spectral_curve}, and Theorem~\ref{theorem_schwarz_function} also holds true for $(t_0,t_1)\in\mathcal F^-$. As for the branch points and critical points of the 
spectral curve, we have the following result.

\begin{thm}\label{theorem_singular_points_spectral_curve_negative_t}
 For $(t_0,t_1)\in \mathcal F^-$, the spectral curve \eqref{spectral_curve} has three simple branch points $z_0,z_1,z_2\in \mathcal \C$ with 
$$
z_0\in \R,\quad z_1\in \C_-,\quad z_2=\overline z_1, \quad z_1,z_2\in \Omega,
$$
and three singular points $\hat z_0,\hat z_1, \hat z_2\in \C\setminus \overline \Omega$ which satisfy
$$
\hat z_0\in \R,\quad \hat z_1\in \C_-,\quad z_2=\overline z_1.
$$
\end{thm}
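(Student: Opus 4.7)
The plan is to follow the strategy used for Theorem~\ref{theorem_singular_points_spectral_curve}, relying on the proper rational parametrization $w \mapsto (h(1/w), h(w))$ from Theorem~\ref{theorem_schwarz_function}. Under this parametrization, branch points of the projection $(\xi,z) \mapsto z$ correspond to critical points $w$ of $h$ in the $w$-plane, while singular points of the algebraic curve correspond to distinct pairs $(w_1, w_2)$ with $h(w_1) = h(w_2)$ and $h(1/w_1) = h(1/w_2)$.

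First I would analyze the branch points. Differentiating \eqref{rational_parametrization} reduces $h'(w) = 0$ to the cubic $w^3 - 2a_0 w - 2r = 0$, whose discriminant equals $32 a_0^3 - 108 r^2$. From \eqref{definition_a_0}, whenever $t_1 < 0$ one has $\sqrt{(1-4r^2)^2 - 4t_1} > 1 - 4r^2$, hence $a_0 < 0$ throughout $\mathcal F^-$. The discriminant is then strictly negative, so the cubic admits exactly one real root $w_0$ and a complex conjugate pair $w_1, w_2 = \overline{w_1}$. Since $h$ has real coefficients, the branch points satisfy $z_0 = h(w_0) \in \R$ and, after relabeling, $z_1 = h(w_1) \in \C_-$ with $z_2 = \overline{z_1}$. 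In contrast to the $t_1 > 0$ regime, the branch-point structure does not change across $\mathcal F^-$, so no separate one-cut scenario needs to be treated.

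Next I would handle the critical points. Writing the conditions $h(w_1) = h(w_2)$ and $h(1/w_1) = h(1/w_2)$, dividing through by $w_1 - w_2$, and passing to the symmetric coordinates $\pi = w_1 w_2$ and $\sigma = w_1 + w_2$, elimination yields $(\pi - 1)\bigl(\pi^2 + (1 - 2a_0)\pi + 1\bigr) = 0$. The branch $\pi = 1$, combined with $\sigma = (1 - 2a_0)/r$, gives a self-intersection on the diagonal $\xi = z$ whose common $h$-value is the real critical point $\hat z_0$. For the quadratic factor, the discriminant is $(1 - 2a_0)^2 - 4$; a direct check using \eqref{definition_a_0} and the defining inequalities of $\mathcal F^-$ confirms that $-1/2 < a_0 < 0$, hence this discriminant is strictly negative. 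The quadratic therefore has complex conjugate roots $\pi, \overline{\pi}$, and unwinding $\sigma = -(\pi+1)/r$ yields a complex conjugate pair of $(w_1, w_2)$-pairs whose $h$-images are the remaining critical points $\hat z_1 \in \C_-$ and $\hat z_2 = \overline{\hat z_1}$.

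The final step, which I expect to be the main technical obstacle, is locating the points relative to $\Omega$. Since $h$ conformally maps $\overline{\C}\setminus\D$ onto $\overline{\C}\setminus\overline{\Omega}$, one has $z \in \Omega$ iff $z = h(w)$ for some $w$ with $|w|<1$, and symmetrically $z \in \C \setminus \overline{\Omega}$ iff $|w| > 1$. For the branch points, Vieta on $w^3 - 2a_0 w - 2r = 0$ together with $w_2 = \overline{w_1}$ gives $-2(\re w_1)|w_1|^2 = 2r$ and $|w_1|^2 - 4(\re w_1)^2 = -2a_0$; setting $\rho = |w_1|^2$ then yields $\rho^3 + 2a_0 \rho^2 - 4r^2 = 0$, and the inclusion $z_1, z_2 \in \Omega$ reduces to showing that this cubic has a positive root strictly less than $1$, equivalently $1 + 2a_0 - 4r^2 > 0$ on $\mathcal F^-$. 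A parallel computation for the critical points, via the quadratics $w^2 - \sigma w + \pi = 0$ with the appropriate complex $\pi$, establishes $|w| > 1$ and places $\hat z_j$ outside $\overline{\Omega}$. The delicate part is verifying these sign inequalities uniformly across $\mathcal F^-$: the cleanest route is first to establish the analog of Theorem~\ref{theorem_change_coordinates_phase_diagram} for the curves $\Gamma_c^-$ and $\gamma_c^-$ by means of Proposition~\ref{proposition_change_of_coordinates}, thereby obtaining an explicit description of $\mathcal F^-$ in the $(r,a_0)$-plane, and only then pushing through the case analysis.
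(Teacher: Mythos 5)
Your overall strategy — use the proper rational parametrization $w\mapsto(h(1/w),h(w))$, read branch points off the critical points of $h$, and read singular points off pairs $(w_1,w_2)$ with $h(w_1)=h(w_2)$ and $h(1/w_1)=h(1/w_2)$ — is the right one and matches what the paper does for the $t_1>0$ analog (Theorem~\ref{theorem_singular_points_spectral_curve}). Your elimination via $\pi=w_1w_2$, $\sigma=w_1+w_2$ to $(\pi-1)\bigl(\pi^2+(1-2a_0)\pi+1\bigr)=0$ is correct and in fact a somewhat slicker reorganization of the paper's subtraction $\partial_\xi F-\partial_z F$ leading to $(\xi-z)(3\xi+3z+2z\xi+1+t_0-2t_1)=0$; the two routes recover the same two branches (diagonal node $\xi=z$ from $\pi=1$, and the off-diagonal pair from the quadratic with reciprocal roots). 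The deduction $a_0<0\Rightarrow 32a_0^3-108r^2<0\Rightarrow$ one real and one conjugate pair of critical points is also correct.

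There is, however, a genuine gap in the step locating $z_1,z_2$ relative to $\Omega$. You assert that $z\in\Omega$ iff $z=h(w)$ for some $|w|<1$. This is false: $h$ is a degree-$3$ rational map, and while $h$ is a bijection from $\overline\C\setminus\overline\D$ onto $\overline\C\setminus\overline\Omega$, its restriction to $\D$ covers points of $\C\setminus\overline\Omega$ twice (and points of $\Omega$ three times). The correct criterion is that $z\in\Omega$ iff \emph{all} preimages of $z$ lie in $\D$, equivalently $z\in\C\setminus\overline\Omega$ iff \emph{some} preimage lies outside $\overline\D$. For the branch point $z_1=h(w_1)$ the double preimage $w_1$ is automatically in $\D$ (critical points of $h$ are in $\D$ — this is the analog of Lemma~\ref{lemma_zeros_derivative_h}), so showing $\rho=|w_1|^2<1$ — your proposed condition $1+2a_0-4r^2>0$ — establishes something that is already automatic and does not decide whether $z_1\in\Omega$. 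What actually matters is the \emph{third} preimage $\tilde w_1=-r/w_1^2$ (cf.\ equation~\eqref{aux_equation_10} and Remark~\ref{remark_relation_simple_double_solution}), and the relevant inequality is $|\tilde w_1|=r/\rho<1$, i.e.\ $\rho>r$. In terms of your cubic $q(\rho)=\rho^3+2a_0\rho^2-4r^2$ this is $q(r)<0$, i.e.\ $r^2(r+2a_0-4)<0$, which does hold since $r<\tfrac12$ and $a_0<0$ — so the conclusion is still true, but your condition $q(1)>0$ is the wrong one. Your treatment of the singular points is essentially fine (there one wants \emph{some} preimage with $|w|>1$, and $|w_1w_2|=|\pi|=1$ hands this to you as long as neither factor lands on $\partial\D$, which you should verify). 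Note finally that the paper's own route for the $t_1>0$ analog avoids the auxiliary preimage entirely: it shows $z_1,z_2$ are branch points of the Schwarz function $\xi_1$, which is meromorphic on a neighborhood of $\overline\C\setminus\Omega$ by Theorem~\ref{theorem_schwarz_function}, so they must lie in $\Omega$. That argument is cleaner and also explains why $z_0$ is not claimed to lie in $\Omega$ here: for $t_1<0$ in the one-cut regime, $z_0$ may become a branch point of $\xi_2,\xi_3$ only, so the Schwarz-function constraint no longer applies to it.
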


Comparing Theorem~\ref{theorem_singular_points_spectral_curve_negative_t} with the three-cut case $(t_0,t_1)\in \mathcal F_1$ in 
Theorem~\ref{theorem_singular_points_spectral_curve}, the essential difference here is that the point $z_0$ is not always in the domain $\Omega$. For $t_1$ negative and small, the 
three 
points $z_0,z_1$ and $z_2$ are branch points of the Schwarz function $\xi_1$ of $\partial \Omega$. However, when $t_1$ decreases (while $t_0>0$ is kept fixed), this point $z_0$ 
becomes a branch point of the other two solutions $\xi_2,\xi_3$ of the spectral curve \eqref{spectral_curve}, but it is not anymore a branch point of $\xi_1$. If we keep 
decreasing $t_1$ the branch point $z_0$ might leave the domain $\Omega$, even before $t_1$ reaches the critical value $-3/4$. This phenomenon is reflected in the mother body 
measure $\mu_*$, as will be explained in a moment.

\begin{thm}
Suppose $(t_0,t_1)\in \mathcal F^-$ and $z_0$ and $z_2$ are the branch points given by Theorem~\ref{theorem_singular_points_spectral_curve_negative_t}. The implicit equation
\begin{equation}\label{implicit_equation_critical_curve_negative_t}
\re \int_{z_0}^{z_2}(\xi_1(s)-\xi_2(s))ds=0
\end{equation}
defines an analytic curve $\gamma_c^-$ on the $(t_0,t_1)$-plane, which connects the boundary $\Gamma_c^-$ to the origin.  
\end{thm}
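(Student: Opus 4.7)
The plan is to apply the real-analytic implicit function theorem to
$$F(t_0,t_1) := \re \int_{z_0}^{z_2}(\xi_1(s)-\xi_2(s))\,ds,$$
verifying in turn: (i) that $F$ is real-analytic on $\mathcal F^-$; (ii) that $\nabla F$ does not vanish on the zero set of $F$; (iii) that this zero set, inside $\overline{\mathcal F^-}$, is a connected arc reaching from the origin to a point of $\Gamma_c^-$.

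For (i), the branch points $z_0(t_0,t_1)$ and $z_2(t_0,t_1)$ are simple zeros of the discriminant of the spectral curve \eqref{spectral_curve} (Theorem~\ref{theorem_singular_points_spectral_curve_negative_t}) and therefore depend real-analytically on the parameters; the sheets $\xi_1,\xi_2$ are analytic in $s$ and in $(t_0,t_1)$ away from appropriately chosen branch cuts. Path-dependence is controlled on the Riemann surface of \eqref{spectral_curve}: any two admissible paths from $z_0$ to $z_2$ differ by a cycle whose period under $(\xi_1-\xi_2)\,ds$ is purely imaginary, by an extension of the algebraic S-property of Section~\ref{section_s_property} to the negative-$t_1$ regime, so $\re F$ is single-valued and real-analytic.

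For (ii), I would differentiate $F$ with respect to $t_0$ using Leibniz's rule; both the endpoint contributions and the interior integral can be expressed through $\partial_{t_0}\xi_j$, which in turn are recovered from the implicit spectral-curve equation \eqref{spectral_curve} and from the asymptotic expansions \eqref{asymptotics_xi}. A definite-sign argument for $\partial F/\partial t_0$ on $\mathcal F^-$ then follows from the explicit residue structure at infinity and from the sheet labeling of Sections~\ref{section_sheet_structure_1}--\ref{section_sheet_structure_2}, delivering real-analyticity of the zero set via the implicit function theorem.

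Finally for (iii), boundary behavior has to be analyzed on the two relevant pieces of $\partial\mathcal F^-$. As $(t_0,t_1)\to (0,0)$ the polynomial curve $\partial\Omega$ collapses and all branch points contract together, so $F\to 0$; the zero locus therefore accumulates at the origin. On $\Gamma_c^-$, the branch point $z_0$ coalesces with the critical point $\hat z_0$ (cf.~Section~\ref{section_behavior_critical_time}), and tracking this coalescence should yield a sign change of $F$ that forces the zero locus to reach $\Gamma_c^-$. The main obstacle is precisely this last step: it requires adapting the quadratic-differential deformation argument of Section~\ref{section_quadratic_differential} to $t_1<0$, starting from the explicitly solvable case $t_1=0$ and keeping track of the fact that $z_0$ eventually leaves the branch set of $\xi_1$ (as noted after Theorem~\ref{theorem_singular_points_spectral_curve_negative_t}). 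This topological shift is what makes the equation defining $\gamma_c^-$ transcendental rather than algebraic, consistent with the comment about Figure~\ref{phase_diagram_full}.
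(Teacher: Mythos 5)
The paper does not actually prove this theorem: Section~\ref{section_negative_t} explicitly declines to provide proofs for the $t_1<0$ results, and the only guidance is Remark~\ref{remark_critical_curve_negative_t1}. From that remark the paper's intended route is clear: pull the period integral back to the $w$-plane via the rational parametrization $z=h(w)$, exactly as done for $\tau_1$ in the proof of Proposition~\ref{proposition_tau_1} (equations~\eqref{aux_equation_9}--\eqref{aux_equation_21}), arriving at the explicit transcendental relation~\eqref{critical_curve_zeros_r_a_plane_negative_t} in the $(r,a_0)$-coordinates. Once~$F$ is rewritten in this closed form (a finite algebraic expression in $r,a_0,w_0$ plus a $\log(w_0^3/r)$ term, with $w_0$ algebraic over $(r,a_0)$), real-analyticity of the zero set is immediate, and the endpoints of the curve are identified by substituting the known parametrizations of $\Gamma_c^-$ and of the origin into the explicit formula (the origin is treated this way in Section~\ref{section_phase_transition_motherbody_curve}). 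This is also why the paper can assert that $\gamma_c^-$ is transcendental.

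Your proposal takes a genuinely different and more abstract route — the real-analytic implicit function theorem applied directly to the period integral $F$ — which would avoid the explicit computations, but as written it has two substantive gaps. Step~(ii) claims a sign for $\partial F/\partial t_0$ ``follows from the explicit residue structure at infinity and the sheet labeling''; this is just an assertion. The paper's technology for sign control of such period integrals in the $t_1>0$ regime (Appendix~\ref{appendix_widths}) is itself a mix of careful algebra and numerics, and nothing analogous for $t_1<0$ is available, so this step is not cheap. In the explicit-formula approach the IFT/non-degeneracy step is sidestepped entirely: the zero set is analytic because the defining equation is. Step~(iii) — showing the zero locus actually reaches $\Gamma_c^-$ — you explicitly flag as the main obstacle, and the argument you sketch (tracking the coalescence $z_0\to\hat z_0$) conflates the cusp transition $\Gamma_c^-$ with the mother body transition $\gamma_c^-$; what is needed is to verify that the explicit equation~\eqref{critical_curve_zeros_r_a_plane_negative_t} has a solution on the curve $r=s$, $a_0=-(1-4s^2)/2$ parametrizing $\Gamma_c^-$, and near $(0,0)$, which again is most naturally done from the closed form. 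A minor circularity: you invoke ``the algebraic S-property of Section~\ref{section_s_property} to the negative-$t_1$ regime'' to make $\re F$ single-valued, but that S-property is precisely the content of Theorem~\ref{theorem_construction_mother_body_negative_t}, which logically comes after; in the paper the single-valuedness of $\re F$ is built in from the start because one integrates the explicit meromorphic $1$-form $h(w^{-1})h'(w)\,dw$ between two fixed algebraic points.
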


A plot of the curve $\gamma_c^-$ can be seen in Figure~\ref{figure_phase_diagram_negative_t_1}. This curve determines two regions $\mathcal 
F_1^-,\mathcal F_2^-\subset \mathcal F^-$, labeled such that $\mathcal F_1^-$ ($\mathcal F_2^-$) consists of the points in $\mathcal F^-$ that are above (below) $\gamma_c^-$.

Proposition~\ref{proposition_change_of_coordinates} holds true for $(t_0,t_1)\in \mathcal F_-$ without any modification, and Theorem~\ref{theorem_change_coordinates_phase_diagram} 
assumes the following form.

\begin{thm}
On the $(r,a_0)$-plane, the critical curve $\Gamma_c^-$ is expressed as
$$
r=s, \quad a_0=-\frac{1-4s^2}{2},\qquad 0<s<\frac{1}{2},
$$
and the curve $\gamma_c^-$ is implicitly given by
\begin{multline}\label{critical_curve_zeros_r_a_plane_negative_t}
(r+w_0^3)\left[ (30a_0^2r^2 +4a_0^3)w_0^3+ra_0(r^2(12-8a_0)+10a_0)w_0^2\right. \\ \left. +6a_0r^2(5-r^2)w_0 +12r^3+3r^5\right]+3w_0^6r^2(1-4a_0^2-2r^2)\log\frac{w_0^3}{r}=0,
\end{multline}
where $w_0=w_0(r,a_0)$ is the unique real solution to $h'(w)=0$.
\end{thm}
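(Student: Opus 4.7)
The theorem has two parts.

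For the parametrization of $\Gamma_c^-$ in the $(r,a_0)$-plane: this curve corresponds to cusp formation on $\partial \Omega = h(\partial \D)$, which occurs exactly when $h'(w)=0$ for some $w\in \partial \D$. The equation $h'(w)=0$ is equivalent to $w^3 - 2 a_0 w - 2r = 0$. Setting $w = e^{i\theta}$ and separating imaginary and real parts yields, for $\sin \theta \neq 0$, the pair $4\cos^2 \theta = 1 + 2 a_0$ and $\cos \theta = -r$; eliminating $\theta$ gives $4 r^2 = 1 + 2 a_0$, that is, $a_0 = -(1 - 4 r^2)/2$. Writing $r = s$ yields the claimed parametrization.

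For the implicit equation defining $\gamma_c^-$: the plan is to express the defining integral $\int_{z_0}^{z_2}(\xi_1 - \xi_2)\,ds$ from \eqref{implicit_equation_critical_curve_negative_t} in the $w$-plane, via the rational parametrization $(\xi,z)=(h(w^{-1}),h(w))$. The branch points are $z_0 = h(w_0)$ and $z_2 = h(\hat w_2)$, where the critical points of $h$ are the roots of $w^3 - 2 a_0 w - 2r = 0$; for $a_0<0$ the discriminant $32 a_0^3 - 108 r^2$ is negative, so there is a unique real critical point $w_0$ and two complex conjugate critical points $\hat w_1 = \overline{\hat w_2}$. At a critical value $h(\hat w)$, two preimages of $h(\hat w)$ under $h$ coalesce at $\hat w$, while the third preimage equals $-r/\hat w^2$, thanks to the Vieta relation $w_1 w_2 w_3 = -r$ for the roots of $r x^3+(a_0-z)x^2+2a_0 r x + r^2 = 0$.

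Using $ds = h'(w)\,dw$ together with the sheet structure of the Riemann surface (as determined by the monodromy analysis carried out earlier in the paper, the relevant configuration has sheets $1,2$ meeting at $z_2$ while sheets $2,3$ meet at $z_0$), the $w$-endpoint contributions at $\hat w_2$ cancel between the sheet-$1$ and sheet-$2$ integrals, and the integral reduces to
\begin{equation*}
\int_{z_0}^{z_2}(\xi_1 - \xi_2)\,ds \;=\; H(w_0) - H(-r/w_0^2),
\end{equation*}
where $H$ is the antiderivative of the rational function $h(w^{-1})h'(w)$, obtained explicitly by direct expansion:
\begin{multline*}
H(w) = \frac{r^3 w^3}{3} + a_0 r^2 w^2 + a_0 r(1-2r^2)\, w + r^2(1-4a_0^2-2r^2)\log w \\
+ \frac{2 a_0 r(a_0+2r^2)}{w} + \frac{2 a_0 r^2}{w^2} + \frac{2 r^3}{3 w^3}.
\end{multline*}
Writing $H = P + r^2(1-4a_0^2-2r^2)\log w$, the real part of the log contribution to $H(w_0)-H(-r/w_0^2)$ is $r^2(1-4a_0^2-2r^2)\log(w_0^3/r)$. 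Multiplying the identity $\re\int = 0$ by $3 w_0^6$ clears the denominators $w_0^{-k}$ in $P$, yielding
\begin{equation*}
3 w_0^6\,[P(w_0)-P(-r/w_0^2)] + 3 w_0^6 r^2(1-4a_0^2-2r^2)\log(w_0^3/r) = 0.
\end{equation*}

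A key observation simplifying the algebraic part is the following: when $r=-w_0^3$ one has $-r/w_0^2 = w_0$, so $P(w_0)-P(-r/w_0^2)$ vanishes identically on this locus. Therefore the polynomial $3w_0^6[P(w_0)-P(-r/w_0^2)]$, viewed as a polynomial in $(w_0,r,a_0)$, is divisible by $(r+w_0^3)$. Carrying out the explicit expansion of $P(w_0)-P(-r/w_0^2)$ and polynomial long division by $(r+w_0^3)$ produces exactly the quotient in the first bracket of \eqref{critical_curve_zeros_r_a_plane_negative_t}.

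The main obstacle is the last mechanical computation: expanding $P(w_0)-P(-r/w_0^2)$, multiplying by $3 w_0^6$ and dividing by $(r+w_0^3)$ produces a polynomial in $(w_0,r,a_0)$ with a substantial number of monomials, and matching it term-by-term with the displayed expression requires careful bookkeeping. In addition, pinpointing the correct sheet configuration—needed for the reduction to $H(w_0)-H(-r/w_0^2)$—depends on the monodromy analysis of $\mathcal R$ performed elsewhere in the paper.
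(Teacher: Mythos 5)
Your overall strategy matches the route the paper indicates: Remark~\ref{remark_critical_curve_negative_t1} points the reader to the $\tau_1$-computation in the Appendix (equations~\eqref{aux_equation_9}--\eqref{aux_equation_21}), and your derivation --- change of variables $s=h(w)$, $\xi=h(1/w)$, the Vieta relation $\tilde w_0=-r/w_0^2$, the primitive $H$, and factoring out $(r+w_0^3)$ --- reproduces that route. Your derivation of the $\Gamma_c^-$ parametrization from the condition that $h'$ vanish on $\partial\D$ is also correct and checks out against \eqref{system_change_coordinates_a}--\eqref{system_change_coordinates_b}.

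There is, however, a genuine gap in the last algebraic step. You assert that long-dividing $3w_0^6\big[P(w_0)-P(-r/w_0^2)\big]$ by $(r+w_0^3)$, as polynomials in the three independent variables $(w_0,r,a_0)$, ``produces exactly the quotient in the first bracket of \eqref{critical_curve_zeros_r_a_plane_negative_t}.'' That is false: the quotient of the symbolic division has degree $9$ in $w_0$ (it is the unlabeled display just before \eqref{aux_equation_21} in the Appendix), whereas the displayed bracket has degree $3$. To pass from one to the other you must in addition reduce modulo the defining relation $w_0^3=2a_0w_0+2r$, i.e.\ $h'(w_0)=0$; the paper does exactly this (``replacing every multiple power of $3$'') to arrive at \eqref{aux_equation_21}. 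After that reduction the two polynomials agree as functions of $(r,a_0)$, so the implicit equation you obtain is ultimately the correct one, but the polynomial identity you claim does not hold, and executed literally your procedure would not reproduce the theorem's displayed formula. A secondary imprecision: ``sheets $2,3$ meet at $z_0$'' holds only for part of $\mathcal F^-$ (for $t_1$ small and negative, $z_0$ is still a branch point of $\xi_1$, per the discussion following Theorem~\ref{theorem_singular_points_spectral_curve_negative_t}); this at worst flips the sign of $H(w_0)-H(-r/w_0^2)$ and is harmless once you take real parts and set them to zero, but it should be stated more carefully.
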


The critical curves $\Gamma_c^-$ and $\gamma_c^-$ on the $(r,a_0)$-plane are displayed in Figure \ref{figure_phase_diagram_a_r_plane_negative_t1}. 

\begin{figure}[t]
 \centering
 \includegraphics[scale=1]{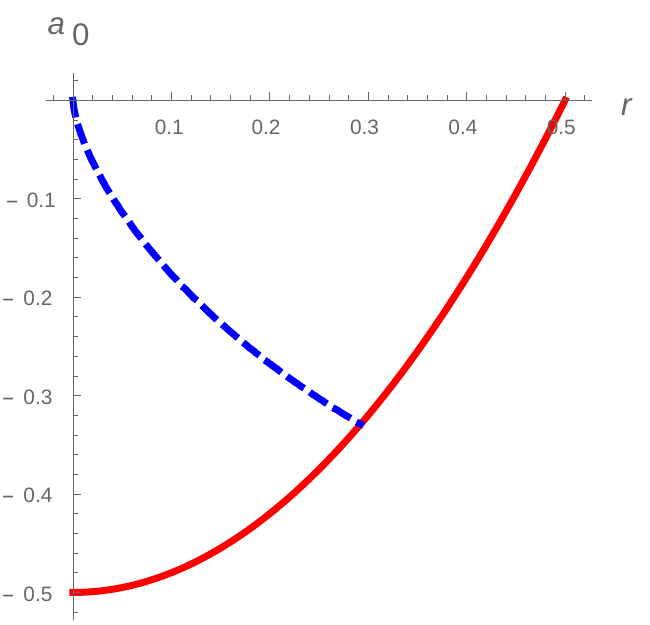}
 \caption{Image of the phase diagram in Figure~\ref{phase_diagram} through the change of variables $(t_0,t_1)\mapsto (r,a_0)$. The solid curve is the image of $\Gamma_c^-$ and the 
dashed curve is the image of $\gamma_c^-$. The region determined by the solid and dashed curves and the $r$-axis corresponds to $\mathcal F_1^-$, whereas the region between the 
solid and dashed curves and the $a_0$-axis corresponds to $\mathcal F_2^-$.}\label{figure_phase_diagram_a_r_plane_negative_t1}
\end{figure}

The function $w_0$ is algebraic in $r$ and $a_0$, and the presence of the $\log$ term in \eqref{critical_curve_zeros_r_a_plane_negative_t} indicates that the curve $\gamma_c^-$ is 
transcendental. In particular, $\gamma_c^-$ is not the analytic continuation of $\gamma_c$ (see also Section~\ref{section_phase_transition_motherbody_curve} below). For an 
indication on how to get \eqref{implicit_equation_critical_curve_negative_t}--\eqref{critical_curve_zeros_r_a_plane_negative_t}, we refer to Remark 
\ref{remark_critical_curve_negative_t1}.

We are ready to state the result equivalent to Theorem~\ref{theorem_limiting_support_zeros}.

\begin{thm}\label{theorem_construction_mother_body_negative_t}
 For $(t_0,t_1)\in \mathcal F^-$, there exists a contour $\Sigma_*\subset \Omega$ for which the function $\xi_1$ in \eqref{asymptotics_xi} admits an analytic 
continuation to $\C\setminus \Sigma_*$ that satisfies the property
 $$
 (\xi_{1+}(s)-\xi_{1-}(s))ds \in i\R,\quad z\in \Sigma_*.
 $$
 
 The contour $\Sigma_*$ is symmetric with respect to the real axis, and
 
\begin{enumerate}[label=(\roman*)]
  \item (Three-cut case) For $(t_0,t_1)\in \mathcal F_1^-$, the contour $\Sigma_*$ decomposes as
$$
\Sigma_*=\Sigma_{*,0}\cup \Sigma_{*,1}\cup \Sigma_{*,2},
$$
where $\Sigma_{*,l}$ is a smooth oriented contour from a common point $z_*\in (-\infty,z_0)$ to the branch point $z_l$ and
$$
\Sigma_{*,0}=[z_*,z_0],\quad (\Sigma_{*,2})^*=\Sigma_{*,1}\subset \overline\C_-.
$$

 \item (One-cut case) For $(t_0,t_1)\in \mathcal F_1^-$, the contour $\Sigma_*$ is a single analytic arc which connects the branch points $z_1$ and $z_2$ and intersects the real 
axis at a point $z_*>z_0$.
\end{enumerate}

Furthermore, the measure
\begin{equation*}
d\mu_*(z)=\frac{1}{2\pi i t_0}(\xi_{1-}(z)-\xi_{1+}(z))dz,\quad z\in \Sigma_*
\end{equation*}
is a probability measure on $\Sigma_*$.
\end{thm}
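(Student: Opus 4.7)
The plan is to mirror the construction used for $t_1>0$ in Theorem~\ref{theorem_limiting_support_zeros}, adapting it to the different branch-point configuration described in Theorem~\ref{theorem_singular_points_spectral_curve_negative_t}. First, I would build the three-sheeted Riemann surface $\mathcal{R}$ associated to the spectral curve \eqref{spectral_curve} and organize its sheet structure according to the labelling of $\xi_1,\xi_2,\xi_3$ from \eqref{asymptotics_xi}. A key geometric difference from the $t_1>0$ case is that, for $(t_0,t_1)\in \mathcal{F}^-$, the real branch point $z_0$ eventually becomes a branch point joining the sheets of $\xi_2$ and $\xi_3$ only, while the complex conjugate pair $z_1,z_2\in\Omega$ join $\xi_1$ to the other sheets. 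This dictates which sheets are relevant for the S-contour.

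Next, I would introduce on $\mathcal{R}$ the same type of quadratic differential $\varpi$ used in Section~\ref{section_quadratic_differential}, built from the differences $\xi_j-\xi_k$, whose finite critical points are (lifts of) the branch points and critical points in Theorem~\ref{theorem_singular_points_spectral_curve_negative_t}. The contour $\Sigma_*$ is then the projection to $\C$ of the appropriate critical trajectories of $\varpi$ lying on the sheets of $\xi_1$. To identify these trajectories globally I would use the deformation technique of \cite{martinez_silva}: start from $t_1=0^-$, where the critical graph is already known from the $t_1=0$ analysis of \cite{bleher_kuijlaars_normal_matrix_model}, and continuously deform $t_1$ into $\mathcal{F}^-$ while tracking every fundamental domain of $\mathcal{G}$. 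The implicit equation \eqref{implicit_equation_critical_curve_negative_t} defining $\gamma_c^-$ is precisely the condition for a critical trajectory emanating from $z_2$ to hit the real axis at $z_0$, and it separates the two topological regimes: in the three-cut region $\mathcal{F}_1^-$ three trajectories emanating from $z_0,z_1,z_2$ meet at a common point $z_*\in(-\infty,z_0)$, reproducing the $Y$-shape of Theorem~\ref{theorem_limiting_support_zeros}(i); in the one-cut region $\mathcal{F}_2^-$ the point $z_0$ is no longer a branch point of $\xi_1$, the $Y$ collapses, and a single trajectory from $z_1$ to $z_2$ crosses the real axis at a point $z_*>z_0$.

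Once $\Sigma_*$ is identified, the S-property $(\xi_{1+}-\xi_{1-})ds\in i\R$ is automatic from the fact that $\Sigma_*$ consists of horizontal trajectories of $(\xi_{1+}-\xi_{1-})^2\,dz^2$, and the symmetry $(\Sigma_*)^*=\Sigma_*$ follows from the reality of \eqref{spectral_curve}. The inclusion $\Sigma_*\subset\Omega$ would be obtained by ruling out trajectories that reach $\partial\Omega$, using the Schwarz-function identity \eqref{equation_schwarz_function}: on $\partial\Omega$ one has $\xi_1(z)=\bar z$, so $\xi_1$ is single-valued across $\partial\Omega$, forbidding an S-cut there. Positivity of $d\mu_*$ and the total mass $|\mu_*|=1$ then follow from the S-property together with the expansion \eqref{asymptotics_xi} (the residue of $\xi_1$ at infinity is $-t_0$), via a contour deformation to a large circle, exactly as in the $t_1>0$ case.

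The main obstacle is the deformation/topology analysis for $(t_0,t_1)\in\mathcal{F}_2^-$. In that regime the support $\Sigma_*$ crosses the real axis \emph{to the right} of $z_0$, a configuration not present for any $t_1\geq 0$; the role of $z_0$ switches sheets as one crosses $\gamma_c^-$, forcing a reshuffling of the fundamental domains of $\mathcal{G}$ that must be tracked carefully. The transcendental (rather than algebraic) nature of $\gamma_c^-$, visible through the $\log$ term in \eqref{critical_curve_zeros_r_a_plane_negative_t}, reflects precisely this subtler trajectory reorganization and is what makes the present case strictly harder than the $t_1>0$ analog, even though the overall architecture of the proof is the same.
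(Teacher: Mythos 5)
The paper explicitly states at the start of Section~\ref{section_negative_t} that the proofs for $t_1<0$ are omitted, so there is no paper proof to compare against; your proposal correctly identifies the methodology the authors claim would apply (lift to the spectral curve, construct the quadratic differential $\varpi$, deform the critical graph from $t_1=0$, and read off $\Sigma_*$ from the projected trajectories), and you correctly pinpoint the genuinely new feature of $\mathcal F_2^-$, namely that $\Sigma_*$ becomes a single arc crossing $\R$ at $z_*>z_0$ with $z_0$ no longer a branch point of $\xi_1$.

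Two refinements are worth noting. First, your justification of $\Sigma_*\subset\Omega$ --- that $\xi_1(z)=\overline z$ on $\partial\Omega$ ``forbids an S-cut there'' --- tacitly identifies the $\C\setminus\Sigma_*$-continuation of $\xi_1$ with the Schwarz function, which is exactly what must be established; in the $t_1>0$ case the paper instead derives the inclusion through the potential-theoretic Lemmas~\ref{lemma_minimum_total_potential}--\ref{lemma_minimum_total_potential_2} together with a continuity-in-$t_1$ argument, and this machinery would need to be carried over. Second, per Remark~\ref{remark_critical_curve_negative_t1}, the mother-body transition at $\gamma_c^-$ is driven by the vanishing of the width parameter $\tau_1$ (i.e.\ the strip domains $\mathcal S_1,\mathcal S_6$ shrinking to zero), not by a coalescence of critical points as at $\gamma_c$ for $t_1>0$; the sheet-switching of $z_0$ that you emphasize is a \emph{consequence} of this width degeneracy rather than its cause, so the deformation bookkeeping is of a different kind than in the $t_1>0$ analysis, and this distinction --- which also explains why $\gamma_c^-$ is transcendental while $\gamma_c$ is algebraic --- should be made explicit in the argument.
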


The arc $\Sigma_*$ for various choices of the parameter $(t_0,t_1)$ is displayed in Figure~\ref{figure_support_mu_star_one_cut}.

\begin{figure}[t]
\begin{minipage}[c]{0.5\textwidth}
\centering
  \begin{overpic}[scale=1]
  {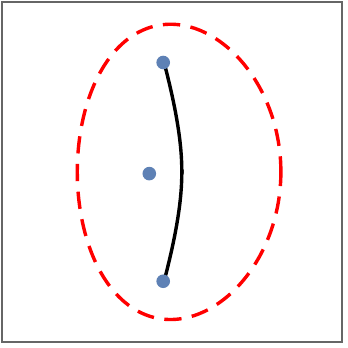}
 \end{overpic}
 \caption*{$(t_0,t_1)=(\frac{1}{2500},-\frac{1}{10})\in\mathcal F_2^-$}
\end{minipage}%
\begin{minipage}[c]{0.5\textwidth}
\centering
 \begin{overpic}[scale=1]
  {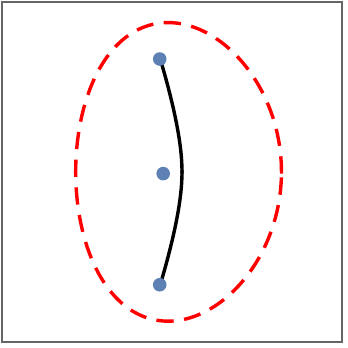}
 \end{overpic}
 \caption*{$(t_0,t_1)=(\frac{1}{1700},-\frac{1}{10})\in\mathcal F_2^-$}
 \end{minipage}
 \begin{minipage}[c]{0.5\textwidth}
 \vspace{0.5cm}
 \centering
  \begin{overpic}[scale=1]
  {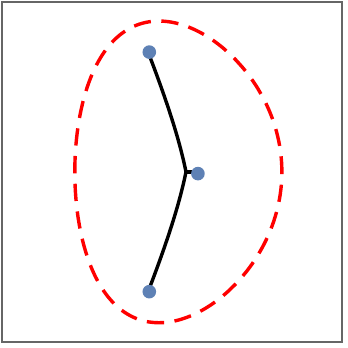}
 \end{overpic}
 \caption*{$(t_0,t_1)=(\frac{1}{500},-\frac{1}{10})\in\mathcal F_1^-$}
\end{minipage}%
\begin{minipage}[c]{0.5\textwidth}
\vspace{0.5cm}
\centering
 \begin{overpic}[scale=1]
  {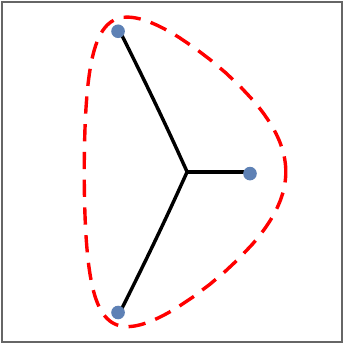}
 \end{overpic}
 \caption*{$(t_0,t_1)=(\frac{1}{50},-\frac{1}{10})\in\mathcal F_1^-$}
 \end{minipage}
 \begin{minipage}[c]{1\textwidth}
 \vspace{0.5cm}
\centering
 \begin{overpic}[scale=1]
  {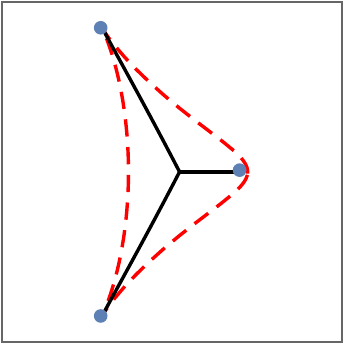}
 \end{overpic}
 \caption*{$(t_0,t_1)=(\frac{135+4\sqrt{30}}{1800},-\frac{1}{10})\in\Gamma_c^-$}
 \end{minipage}
 \caption{For the given values of $(t_0,t_1)$, the boundary $\partial \Omega$ (dashed contour), the support of the mother body measure $\mu_*$ (solid lines) and the branch points 
$z_0$, $z_1$, $z_2$ (dots) are shown (the figures are scaled differently - compare with Figure~\ref{chain_polynomial_curves}). Note the transition when we move from $\mathcal 
F_2^-$ to $\mathcal F_1^-$. We stress that the contours of $\supp\mu_*$ outside the real line are not straight line segments. Numerical outputs.} 
\label{figure_support_mu_star_one_cut}
\end{figure}

In the three-cut cases for $(t_0,t_1)\in \mathcal F_1$ and $(t_0,t_1)\in \mathcal F_1^-$ the geometry of the support $\Sigma_*$ of $\mu_*$ is essentially the same. However, when 
$(t_0,t_1)$ crosses $\gamma_c^-$ and we move to the one-cut case for $t_1<0$, the part of $\Sigma_*$ on the real line disappears, and we are only left with a single analytic 
arc, which is symmetric with respect to the real line. This is in contrast with the one-cut case for $t_1>0$, which corresponds to the shrinking of the arc outside the real line 
and reduction of $\Sigma_*$ to a real interval.

The point $z_*$ in Theorem~\ref{theorem_construction_mother_body_negative_t} moves continuously with $(t_0,t_1)$ and it is determined as one of the finitely many real solutions 
$y=z_*$ to the implicit equation
\begin{equation}\label{equation_determination_intersection_point_negative_t}
\int_{z_1}^{y}(\xi_1(s)-\xi_2(s))ds=0.
\end{equation}
The critical curve $\gamma_c^-$ is then determined by the condition $z_*=z_0$ (compare \eqref{equation_determination_intersection_point_negative_t} with 
\eqref{implicit_equation_critical_curve_negative_t}).

To conclude this section, we remark that Theorem~\ref{theorem_balayage_relation} also extends to $(t_0,t_1)\in \mathcal F^-$ for the respective measure $\mu_0$ and the measure 
$\mu_*$ given by Theorem~\ref{theorem_construction_mother_body_negative_t}.

\subsection{Phase transition along the mother body critical curve}\label{section_phase_transition_motherbody_curve}

As we mentioned above, the critical curves $\gamma_c$ and $\gamma_c^-$ are not analytical continuation of each other. More precisely, if we consider them as functions $t_0(t_1)$, 
then they are both analytic at $t_1=0$ and their values and their first two derivatives are equal to $0$ at $t_1=0$, but their third derivatives at $t_1=0$ are different. This can 
be characterized as a phase transition of the third order. To see this phase transition, let us evaluate the asymptotic behavior of the function $t_0(t_1)$ as $t_1\to +0$ and 
$t_1\to -0$.

We start with $\gamma_c$. When we approach the origin, we take into account the leading terms in \eqref{critical_curve_limiting_zero_distribution} in order to get
$$
t_0=s^6(1+\Boh(s^4)),\quad t_1=\frac{3}{8} s^2(1+\Boh(s^2)),\qquad \mbox{ as } s\to 0.
$$
Hence we have the approximation
\begin{equation}\label{first_order_approximation_positive_mother_body_critical_curve}
t_0 = \frac{8}{27} t_1^3(1+\Boh(t_1)),
\end{equation}
as we approach the origin along $\gamma_c$, where the implicit term is analytic in $t_1$.

The similar analysis for $\gamma_c^-$ is more involved. The relation $h'(w_0)=0$ gives us that
\begin{equation}\label{change_variables_a_w_r}
a_0=\frac{w_0^3-2r}{w_0}.
\end{equation}
Replacing this expression into \eqref{critical_curve_zeros_r_a_plane_negative_t} we arrive at
\begin{multline}\label{alternative_expression_negative_mother_body_critical_curve}
\left(r+w_0^3\right)\left(-24 r^3+ 2 r^5+36 r^4 w_0+\left(10 r^4+22 r^2\right) w_0^3 \right. \\ \left. -48 r^3 w_0^4-\left(4 r^3+r\right) w_0^6+15 r^2 w_0^7+w_0^9\right) 
\\ -6 r^2 w_0^{4} \left(4 r^2+\left(2 r^2-1\right) w_0^2-4 r w_0^3+w_0^6\right)\log \frac{w_0^3}{r}=0.
\end{multline}
We now make the ansatz, to be verified in a moment, that $w_0$ can be expressed as
\begin{equation}\label{change_variables_w_r}
w_0=(C r)^{1/3}
\end{equation}
along $\gamma_c^-$, where $C=C(r)$ is a positive function, to be determined later, which remains bounded away from $0$ and $\infty$ when $r\to 0$. Using the change of variables 
\eqref{change_variables_w_r} in \eqref{alternative_expression_negative_mother_body_critical_curve}, we get

\begin{multline*}
r^4(1+C) \left(-24+22 C-C^2+C^3+\left(2+10 C -4 C^2\right) r^2 \right. \\ \left. +C^{1/3}\left(36  -48 C+15 C^{2} \right) r^{4/3}\right)  
\\ + r^4 \left( 6 C^2 -C^{1/3}\left(24C-24 C^{2}+6 C^{3}\right) r^{4/3}-12 C^2 r^2\right) \log C=0,
\end{multline*}
that is, $C=C(r)$ should satisfy
\begin{equation}\label{implicit_equation_constant_c}
(1+C)(-24+22 C-C^2+C^3) + 6C^2\log C=-Q(r,C(r))
\end{equation}
where
\begin{multline*}
Q(r,c):=\left(2+10 c -4 c^2\right) r^2  +c^{1/3}\left(36  -48 c+15 c^{2} \right) r^{4/3} \\ - \left( c^{1/3}\left(24c-24 c^{2}+6 c^{3}\right) r^{4/3}+12 c^2 r^2\right) 
\log C
\end{multline*}
A simple application of the Implicit Function Theorem tells us that there exists a function $C=C(r)$ satisfying \eqref{implicit_equation_constant_c}. Consequently, we get that 
along $\gamma_c^-$, we can express $w_0$ as a function of $r$ as in \eqref{change_variables_w_r}. Furthermore, when $r\to 0$ it is easily seen that $Q(r,c)\to 0$, thus the 
constant $C_0=C(0)$ solves
$$
(1+C_0)(-24+22 C_0-C_0^2+C_0^3) + 6C_0^2\log C_0=0.
$$
Numerically, we see that
\begin{equation}\label{numerical_approximation_c0}
C_0=1.075\hdots .
\end{equation}
Hence, in virtue of \eqref{change_variables_w_r} and \eqref{change_variables_a_w_r} we get the first order approximation
\begin{equation}\label{first_order_approximation_w_a}
w_0= C_0^{1/3}r^{1/3}(1+\Boh(r)),\quad a_0= \frac{(C_0-2)}{2C_0^{1/3}}r^{2/3}(1+\Boh(r)),
\end{equation}
valid as $r\to 0$ along $\gamma_c^-$, and where the implicit terms are analytic in $r$. By \eqref{system_change_coordinates_a}--\eqref{system_change_coordinates_b}, we know that
\begin{align*}
t_0 & = -2r^4+r^2-\frac{(C_0-2)^2}{C_0^{2/3}}r^{10/3}(1+\Boh(r))=r^2(1+\Boh(r)), \\
t_1 & = a_0(1-a_0-4r^2)=\frac{(C_0-2)}{2C_0^{1/3}}r^{2/3}(1+\Boh(r)),
\end{align*}
so by the second equation in \eqref{first_order_approximation_w_a} we get the first order approximation
\begin{equation}\label{first_order_approximation_negative_mother_body_critical_curve}
t_0= \frac{8C_0}{(C_0-2)^3} t_1^3(1+\Boh(t_1)),
\end{equation}
as $(t_0,t_1)\to (0,0)$ along $\gamma_c^-$, and the implicit term is analytic in $t_1$. Using \eqref{numerical_approximation_c0} we also get
$$
\frac{8}{27}<\left|\frac{8C_0}{(C_0-2)^3}\right|.
$$
Thus, comparing \eqref{first_order_approximation_positive_mother_body_critical_curve} 
with \eqref{first_order_approximation_negative_mother_body_critical_curve}, we see that the tangent vector and the curvature of $\gamma_c$ and $\gamma_c^-$ coincide at the origin, 
but the derivative of their curvatures do not coincide. In the terminology of statistical mechanics, the origin $(t_0,t_1)=(0,0)$ determines a third order phase transition along 
the critical curve $\gamma_c\cup \gamma_c^-$.

\subsection{Setup for the remainder of the paper}

The rest of the paper is organized as follows. 

In Section \ref{section_algebraic_quantities} we derive several technical results on the functions $r$ and $a_0$, which are 
extensively used in the rest of the paper. Propositions \ref{proposition_definition_r} and \ref{proposition_change_of_coordinates} and Theorems 
\ref{theorem_rational_parametrization_polynomial_curve}, \ref{theorem_schwarz_function} and \ref{theorem_change_coordinates_phase_diagram} are proved in Section 
\ref{section_algebraic_quantities}. 

In Section \ref{section_topology_spectral_curve} we study the spectral curve \eqref{spectral_curve} and construct its associated Riemann 
surface $\mathcal R$ as a three-sheeted cover of the complex plane. This sheet structure depends on whether we are in the three-cut (Section \ref{section_sheet_structure_1}) or 
one-cut  (Section \ref{section_sheet_structure_2}) cases. Along the way, we also prove Theorem \ref{theorem_singular_points_spectral_curve} in Section 
\ref{section_topology_spectral_curve}.

In Section \ref{section_quadratic_differential} we introduce the quadratic differential $\varpi$ on the Riemann surface $\mathcal R$ (which was already mentioned after Theorem 
\ref{theorem_limiting_support_zeros}), and describe its critical graph. Using its critical graph, in Section 
\ref{section_proof_s_property} we prove Theorems \ref{thm_equilibrium_measure}, \ref{theorem_density_eigenvalues}, \ref{theorem_limiting_support_zeros} and  
\ref{theorem_balayage_relation}. 

In Sections \ref{section_riemann_hilbert_analysis_precritical} and \ref{section_riemann_hilbert_analysis_postcritical} we carry out the asymptotic analysis of the Riemann-Hilbert 
problem characterizing the multiple orthogonal polynomial $P_{n,n}$ in Proposition \ref{proposition_multiple_orthogonality}, in the three-cut and one-cut cases, respectively. 
This analysis also heavily relies on the critical graph of the quadratic differential $\varpi$. The final ingredient in the analysis of this Riemann-Hilbert problem is the 
so-called global parametrix, whose construction is provided in Section \ref{section_global_parametrix}. 

In Section \ref{section_proof_theorem_limiting_counting_measure} we use the outcome of the asymptotic analysis in order to prove Theorems~\ref{theorem_limiting_counting_measure} 
and \ref{thm_wave_factor}. 

Finally, in the Appendix \ref{appendix_widths} we study the width parameters used in Section \ref{section_quadratic_differential} to perform the deformation of the critical graph 
of $\varpi$.

\section{Limiting boundary of eigenvalues. Proof of Propositions \ref{proposition_definition_r} and \ref{proposition_change_of_coordinates} and Theorems 
\ref{theorem_rational_parametrization_polynomial_curve}, \ref{theorem_schwarz_function} and \ref{theorem_change_coordinates_phase_diagram}}\label{section_algebraic_quantities}

In order to prove the main results of this section we need some technical lemmas, which are also used in the next sections.

\subsection{Proof of Proposition \ref{proposition_definition_r}}\label{analysis_polynomial_p}

It is convenient to change variables for the polynomial $p$ in \eqref{definition_polynomial_p} and instead consider
\begin{multline}\label{definition_p_tilde}
\widetilde p(x)= p(\sqrt x)   =  128 x^{5}-124 x^4+ (-16 t_1+64 t_0 +36)x^3 \\ 
                     + \left(16 t_1^2+8 t_1-28 t_0 -3\right)x^2 +  t_0(2-8 t_1)  x + t_0^2. 
\end{multline}

With the help of Mathematica, the discriminant of $\widetilde p$ with respect to $x$ is computed
\begin{equation}\label{discriminant_p}
\disc(\widetilde p;x)=16384\; t_0^2 \; p_1(t_0)p_2(t_0)p_3(t_0),
\end{equation}
where
\begin{equation}\label{decomposition_discriminant_p}
\begin{aligned}
p_1(s) & = 8192 s^3+192 (64 t_1-7)s ^2-48 (1-4 t_1)^2s +(108 t_1-11) (4 t_1-1)^3  \\
p_2(s) & = 1728 s^2 -432(1+4t_1)s +(3+4t_1)^2(3+16t_1), \\
p_3(s) & = (8s-8t_1-1)^2.
\end{aligned}
\end{equation}

\begin{lem}\label{lemma_discriminant_p}
For $t_1>0$, the polynomials $p_2,p_3$ in \eqref{decomposition_discriminant_p} do not have zeros on $(0,1/8)$.
\end{lem}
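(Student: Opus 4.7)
The plan is to handle $p_3$ and $p_2$ separately, since they have very different structure.

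For $p_3(s) = (8s - 8t_1 - 1)^2$, the analysis is immediate: its only zero is at $s = t_1 + 1/8$, and since $t_1 > 0$ this point lies strictly to the right of $1/8$, hence outside $(0,1/8)$. No further argument is needed here.

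For $p_2(s) = 1728 s^2 - 432(1+4t_1)s + (3+4t_1)^2(3+16t_1)$, I would show that $p_2$ has no real zeros at all when $t_1 > 0$. Since $p_2$ is a quadratic in $s$ with positive leading coefficient, it suffices to prove that its discriminant in $s$ is negative. The discriminant equals
\begin{equation*}
\Delta = 432^2 (1+4t_1)^2 - 4 \cdot 1728 \cdot (3+4t_1)^2 (3+16t_1) = 6912 \bigl[27(1+4t_1)^2 - (3+4t_1)^2(3+16t_1)\bigr].
\end{equation*}
A direct expansion of the bracketed expression (both $27(1+4t_1)^2$ and $(3+4t_1)^2(3+16t_1)$ have constant term $27$, linear term $216 t_1$, and quadratic term $432 t_1^2$) yields the clean identity $27(1+4t_1)^2 - (3+4t_1)^2(3+16t_1) = -256\, t_1^3$. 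Therefore $\Delta = -1769472\, t_1^3 < 0$ for every $t_1 > 0$, which proves that $p_2$ has no real roots and in particular none in $(0,1/8)$.

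There is no genuine obstacle here; the only thing one must be careful about is the algebraic identity for the bracket, which is what makes the conclusion possible. The whole lemma reduces to one inspection (for $p_3$) plus one discriminant computation (for $p_2$). The result will be used in the forthcoming analysis to attribute all discriminant vanishing of $\widetilde p$ on $(0,1/8)$ to the factor $p_1$ alone, which is what drives the definition of the critical curve $\Gamma_c$.
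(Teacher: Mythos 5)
Your proof is correct and matches the paper's argument: both handle $p_3$ by a direct inspection of its unique root (or equivalently the sign of the linear factor on $(0,1/8)$) and both prove $p_2$ has no real roots by computing $\disc(p_2;s) = -1769472\,t_1^3 < 0$. The explicit algebraic simplification to $-256\,t_1^3$ inside the bracket is a nice verification the paper omits, but it is the same calculation.
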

\begin{proof}
The discriminant of $p_2$ in $s$ is
$$
-1769472 \; t_1^3,
$$
which is clearly negative for $t_1>0$, and hence $p_2$ does not have real zeros. The Lemma for $p_3$ follows trivially 
from the inequality
$$
8s-8t_1-1<-8t_1<0,\quad 0<s<1/8.
$$
\end{proof}

\begin{lem}\label{lemma_roots_p_1}
The three roots $s_1,s_2,t_{0,crit}$ of the polynomial $p_1$ in \eqref{decomposition_discriminant_p} satisfy
\begin{align}
s_1<0<s_2<t_{0,crit}, & \quad 0<t_1<\frac{11}{108}; \label{lemma_roots_p_1_equation_1_a}\\
s_1<s_2=0<t_{0,crit}, & \quad t_1=\frac{11}{108}; \label{lemma_roots_p_1_equation_1_b}\\
s_1 \leq s_2<0<t_{0,crit}, & \quad \frac{11}{108}<t_1<\frac{1}{4}; \quad \mbox{ and } s_1=s_2 \mbox{ only for } t_1=\frac{1}{8}  \label{lemma_roots_p_1_equation_1_c}
\end{align}

Moreover, the function 
$$
t_1\mapsto t_{0,crit}=t_{0,crit}(t_1),\quad 0<t_1<1/4,
$$ 
is decreasing and
\begin{equation}\label{lemma_roots_p_1_equation_7}
t_{0,crit}(0)=1/8, \quad t_{0,crit}(1/4)=0.
\end{equation}

Finally, the curve $\Gamma_c$ in \eqref{definition_critical_curve_phase_transition} is parameterized by $(t_{0,crit},t_1)$, that is,
\begin{equation}\label{lemma_roots_p_1_equation_2}
\Gamma_c: \ (t_0,t_1)=(t_{0,crit}(t_1),t_1),\quad 0<t_1<\frac{1}{4}.
\end{equation}
\end{lem}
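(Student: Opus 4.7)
The approach is to identify $t_{0,crit}(t_1)$ with the parametrization of $\Gamma_c$ proposed in \eqref{definition_critical_curve_phase_transition}, and then pin down the sign configuration of the other two roots of $p_1$ via the sign of $p_1(0)$ together with a discriminant analysis establishing that $p_1$ has three real roots for every $t_1 \in (0, 1/4)$.

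First, I would substitute $t_0 = T_0(w) := -6w^4 + 4w^3 = 2w^3(2-3w)$ and $t_1 = T_1(w) := 4w^3 - 3w^2 + 1/4$ into $p_1$ and verify by direct expansion the polynomial identity $p_1(T_0(w); T_1(w)) \equiv 0$ in $w$; the factorizations $T_0 = 2w^3(2-3w)$ and $1 - 4T_1 = 4w^2(3-4w)$ give clean formulas for the substituted powers and reduce this to a polynomial identity in $w$ that is straightforward to confirm symbolically. Since $T_0'(w) = 12w^2(1-2w) > 0$ and $T_1'(w) = 6w(2w-1) < 0$ on $(0, 1/2)$, the map $w \mapsto T_1(w)$ is a strictly decreasing bijection from $[0, 1/2]$ onto $[0, 1/4]$, and $t_{0,crit}(t_1) := T_0(T_1^{-1}(t_1))$ is a smooth, strictly positive, strictly decreasing function of $t_1$ on $(0, 1/4)$. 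The boundary values $T_0(0) = 0$ and $T_0(1/2) = 1/8$ then give \eqref{lemma_roots_p_1_equation_7} and the monotonicity statement.

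Second, I would analyze the sign of $p_1(0) = (108t_1 - 11)(4t_1 - 1)^3$. Since $4t_1 - 1 < 0$ on $(0, 1/4)$, $p_1(0)$ is positive, zero, or negative precisely when $t_1$ is less than, equal to, or greater than $11/108$. Because the leading coefficient of $p_1$ is positive, this fixes the parity of its positive real roots: exactly one positive root for $11/108 < t_1 < 1/4$, and either zero or two positive roots for $0 < t_1 < 11/108$. To exclude the ``zero positive roots'' case and simultaneously rule out non-real roots, I would compute $\disc(p_1; s)$ as an explicit polynomial in $t_1$ and establish a factorization of the form $\disc(p_1; s) = C \cdot (8t_1 - 1)^2 \cdot Q(t_1)$ with $Q$ strictly positive on $(0, 1/4)$; this forces three real roots throughout $(0, 1/4)$ and isolates the unique coalescence $s_1 = s_2$ at $t_1 = 1/8$.

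Combining these ingredients, the sign configuration of $s_1, s_2, t_{0,crit}$ follows by continuity in $t_1$. The branch $t_1 \mapsto T_0(T_1^{-1}(t_1))$ is continuous and strictly positive, so on $(11/108, 1/4)$ it must coincide with the unique positive root $t_{0,crit}$; as $t_1$ decreases through $11/108$, the constant term of $p_1$ flips sign and the root crossing zero bifurcates into a small positive $s_2$ and the negative continuation of $s_1$, while the parametrized branch remains bounded below by $T_0(T_1^{-1}(11/108)) > 0$ and thus persists as $t_{0,crit}$. Vieta's relations together with $p_1(0) < 0$ on $(11/108, 1/4)$ then force $s_1 \leq s_2 < 0 < t_{0,crit}$ in that range, with equality only at $t_1 = 1/8$ by the discriminant factorization. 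The main obstacle is the discriminant step: producing the factorization of $\disc(p_1; s)$ and verifying positivity of the cofactor $Q$ on $(0, 1/4)$ is elementary but computationally intensive; everything else reduces to sign and continuity arguments.
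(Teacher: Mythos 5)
Your proposal is correct and tracks the paper's overall strategy closely --- the discriminant factorization $\disc(p_1;s)=c\,t_1^3(1-8t_1)^2(1-4t_1)^3$, the sign analysis of $p_1(0)=(1-4t_1)^3(11-108t_1)$, the polynomial identity $p_1(T_0,T_1)\equiv 0$ from the parametrization, and continuity arguments are all present in the paper's proof. (Incidentally, your reading of the sign of $p_1(0)$ is the correct one; the paper's sentence ``$p_1(0)<0$ for $t_1<\tfrac{11}{108}$'' appears to have the inequality reversed.) The one place where you genuinely diverge from the paper, and arguably improve on it, is the monotonicity of $t_{0,crit}$. The paper differentiates the implicit relation $p_1(t_{0,crit},t_1)=0$ via the chain rule, needs the sign of $\partial p_1/\partial s$ at the largest root, and then has to rule out vanishing of $\partial p_1/\partial t_1$ by computing $\disc(p_1;t_1)=c\,s^2(8s-1)^3(1+32s)^2$ and invoking the bound $t_{0,crit}<1/8$ (itself established via a separate evaluation $p_1(1/8)>0$). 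You instead observe that $T_0'(w)=12w^2(1-2w)>0$ and $T_1'(w)=6w(2w-1)<0$ on $(0,1/2)$, so $T_0\circ T_1^{-1}$ is a composition of an increasing and a decreasing smooth bijection and is therefore smooth, decreasing, and strictly positive; the boundary values fall out immediately. This bypasses the resultant computation entirely, and the bound $t_{0,crit}<1/8$ is automatic since $T_0([0,1/2])=[0,1/8]$. Two small items worth tightening in a final write-up: (i) to exclude ``three positive roots'' when $p_1(0)<0$, you need the middle Vieta relation $s_1s_2+s_1t_{0,crit}+s_2t_{0,crit}=-48(1-4t_1)^2/8192<0$ in addition to the product (the product alone allows all-positive); (ii) to identify the coalescence at $t_1=1/8$ as $s_1=s_2$ (rather than involving $t_{0,crit}$), either appeal to the preceding sign configuration plus simplicity of the positive root, or note as the paper does that $p_1$ factors there as $\tfrac{1}{12}(1+32s)^2(128s-5)$.
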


\begin{proof}
For a positive constant $c$, the discriminant of $p_1$ (with respect to $s$) is given by

$$
\disc(p_1;s)=c \; t_1^3(1-8t_1)^2(1-4t_1)^3.
$$

In particular, $\disc(p_1;s)\geq 0$ for $0<t_1<1/4$, so $p_1$ has always three real roots for $0<t_1<1/4$, and these are all distinct if $t_1\neq 1/8$. For 
$t_1=1/8$, $p_1$ simply factors as
\begin{equation}\label{lemma_roots_p_1_equation_3}
p_1(s)=\frac{1}{12}(1+32s)^2(128s -5),
\end{equation}
so the largest positive root $t_{0,crit}$ of $p_1$ is always simple if $0<t_1<1/4$, and also $s_1=s_2=-\frac{1}{32}<0$ for $t_1=1/8$. Evaluating explicitly,
$$
p_1(0)=(1-4t_1)^3(11-108t_1),
$$
which means that $p_1(0)<0$ for $t_1<\frac{11}{108}$, and $p_1(0)\geq 0$ otherwise. Therefore, combining continuity and what we already have, we get the inequalities claimed in 
\eqref{lemma_roots_p_1_equation_1_a}--\eqref{lemma_roots_p_1_equation_1_c}.

We now verify that $t_{0,crit}<1/8$. For $t_1\in (11/108,1/4)$, the value $t_{0,crit}$ is the unique positive root of $p_1$. Since also 
$$
p_1(1/8)=64 t_1^2(108 t_1^2 -92t_1 +27)>0,\quad 0<t_1<\frac{1}{4},
$$
and the leading coefficient of $p_1$ is positive, we use continuity to conclude that the inequality $t_{0,crit}<1/8$ always holds true.

The derivative of $t_{0,crit}$ is computed via the chain rule,
\begin{equation}\label{lemma_roots_p_1_equation_5}
\frac{\partial }{\partial t_1}t_{0,crit}= - \frac{\frac{\partial p_1}{\partial t_1}}{\frac{\partial p_1}{\partial s}}.
\end{equation}

Since $t_{0,crit}$ is the largest root of $p_1$,
\begin{equation}\label{lemma_roots_p_1_equation_6}
\frac{\partial p_1}{\partial s}(t_{0,crit})>0,\quad 0<t_1<\frac{1}{4}.
\end{equation}

The derivative of $p_1$ with respect to $t_1$ is explicitly computed to be
$$
\frac{\partial p_1}{\partial t_1}(s)=48(256s^2 + s (8-32t_1) - (1-4t_1)^2(5-36t_1)).
$$

For $t_1=1/8$, \eqref{lemma_roots_p_1_equation_3} gives
\begin{equation}\label{lemma_roots_p_1_equation_4}
t_{0,crit}=\frac{5}{128},\quad \frac{\partial p_1}{\partial t_1}(t_{0,crit})\Big|_{t_1=1/8}=\frac{81}{24}.
\end{equation}

We claim that $\frac{\partial p_1}{\partial t_1}(t_{0,crit})$ is never zero. To the contrary, the discriminant $\disc(p_1;t_1)$ with respect to $t_1$ would 
be zero, and it is thus enough to show that $\disc(p_1;t_1)$ is different from zero. For a positive constant $c$,
$$
\disc(p_1;t_1)=c \; s^2(8s-1)^3(1+32s)^2;
$$
and since $t_{0,crit}<1/8$ we conclude
$$
\disc(p_1;t_1)\Big|_{s=t_{0,crit}}\neq 0,
$$
so $\frac{\partial p_1}{\partial t_1}(t_{0,crit})$ is never zero for $0<t_1<1/4$. From \eqref{lemma_roots_p_1_equation_4} and continuity,
$$
\frac{\partial p_1}{\partial t_1}(t_{0,crit})>0,\quad 0<t_1<\frac{1}{4}.
$$

Combining this last equation with \eqref{lemma_roots_p_1_equation_5} and \eqref{lemma_roots_p_1_equation_6}, we finally get that $t_{0,crit}$ is a decreasing function of $t_1$.

The limits \eqref{lemma_roots_p_1_equation_7} now follow directly from a combination of \eqref{lemma_roots_p_1_equation_1_a}--\eqref{lemma_roots_p_1_equation_1_c} and the 
explicitly expressions
$$
p_1(s)\Big|_{t_1=0}=(1-8s)^2(11+128s),\quad p_1(s)\Big|_{t_1=1/4}=64s^2(27+128s).
$$

To conclude, we prove \eqref{lemma_roots_p_1_equation_2}. Plugging the parametrization \eqref{definition_critical_curve_phase_transition} into the definition of $p_1$, one can 
easily verify
\begin{equation*}
p_1(t_0)=0,\quad (t_0,t_1)\in \Gamma_c,
\end{equation*}
so $t_0$ is always a root of $p_1$ if $(t_0,t_1)\in \Gamma_c$. For $s=1/4$ in \eqref{definition_critical_curve_phase_transition}, we compute explicitly 
$(t_0,t_1)=(5/128,1/8)=(t_{0,crit},t_1)$, 
hence 
$\Gamma_c$ intersects $(t_0,t_{0,crit})$ at $(5/128,1/8)$. Since $t_{0,crit}$ is always a simple root and the pair $(t_0,t_1)\in \Gamma_c$ must always give rise to a root of 
$p_1$, by continuity we conclude \eqref{lemma_roots_p_1_equation_2}
\end{proof}

As as consequence of Lemma~\ref{lemma_roots_p_1}, the parameter region $\mathcal F$ is alternatively described as
$$
\mathcal F= \left\{ (t_0,t_1) \; | \: 0<t_1<1/4,\quad 0<t_0<t_{0,crit}(t_1) \right\}.
$$

\begin{lem}\label{lemma_triple_root}
 For $(t_0,t_1)\in \mathcal F\cup \Gamma_c$, $t_1\neq 0,1/4$, the polynomial $\widetilde p$ in \eqref{discriminant_p} never has a triple root.
\end{lem}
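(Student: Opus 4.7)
The plan is to first reduce to the curve $\Gamma_c$ and then carry out a direct verification on the parametrization \eqref{definition_critical_curve_phase_transition}. A triple root of $\widetilde p$ in particular forces $\disc(\widetilde p;x)=0$. By \eqref{discriminant_p} together with Lemma \ref{lemma_discriminant_p} (which yields $p_2(t_0),p_3(t_0)\neq 0$) and $t_0>0$ throughout $\mathcal F\cup\Gamma_c$, this forces $p_1(t_0)=0$; by Lemma \ref{lemma_roots_p_1} this locates $(t_0,t_1)$ on $\Gamma_c$. The hypothesis $t_1\neq 0,1/4$ corresponds, via \eqref{definition_critical_curve_phase_transition}, to the parameter $s$ lying in the open interval $(0,1/2)$.

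Next I would substitute $t_0=-6s^4+4s^3$ and $t_1=4s^3-3s^2+1/4$ into the coefficients of $\widetilde p$. A direct calculation shows $\widetilde p(s^2)=\widetilde p'(s^2)=0$, so $x=s^2$ is a root of multiplicity at least two. To exclude its being a triple root, one computes
\[
\widetilde p''(s^2)=96\,s^2(2s-1)^2(s+1)(2s+1),
\]
which is strictly positive on $(0,1/2)$. Hence $x=s^2$ is a root of $\widetilde p$ of multiplicity exactly two for every $s\in(0,1/2)$.

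Finally, I need to exclude a triple root at some $x_0\neq s^2$. Such a triple root, together with the already-established double root at $s^2$, would exhaust all five roots of $\widetilde p$, giving $\widetilde p(x)=128(x-s^2)^2(x-x_0)^3$. Matching the $x^4$-coefficient then yields $-384\,x_0-256\,s^2=-124$, so $x_0=(31-64s^2)/96$, which is strictly positive for $s\in(0,1/2)$. On the other hand, matching constant terms gives $-128\,s^4\,x_0^3=t_0^2>0$, forcing $x_0^3<0$ and hence $x_0<0$, a contradiction.

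The main obstacle is the verification of the explicit factorization of $\widetilde p''(s^2)$ after the substitution of the $\Gamma_c$ parametrization; the polynomial in $s$ produced has several competing terms, and its simplification to the form $96\,s^2(2s-1)^2(s+1)(2s+1)$ requires patient bookkeeping. Once this identity is in hand, the remaining steps reduce to elementary coefficient matching.
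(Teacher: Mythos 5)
Your reduction to $\Gamma_c$ contains a genuine gap. You correctly argue that a triple root forces $\disc(\widetilde p;x)=0$, and that via \eqref{discriminant_p}, Lemma~\ref{lemma_discriminant_p}, and $0<t_0<1/8$ this forces $p_1(t_0)=0$. But $p_1(t_0)=0$ does \emph{not} locate $(t_0,t_1)$ on $\Gamma_c$: by Lemma~\ref{lemma_roots_p_1}, for $0<t_1<\tfrac{11}{108}$ the polynomial $p_1$ has a second positive root $s_2$ with $0<s_2<t_{0,crit}$, so $(s_2,t_1)$ lies in the interior of $\mathcal F$ and is not on $\Gamma_c$. Indeed, the paper's proof of Lemma~\ref{lemma_tilde_p} exhibits exactly this situation numerically (at $t_1=1/16$, $t_0=s_2\approx 0.0512$ the roots $x_3$ and $x_4$ of $\widetilde p$ coalesce), and invokes the very lemma you are trying to prove to rule out a triple root there. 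Your substitution of the parametrization $t_0=-6s^4+4s^3$, $t_1=4s^3-3s^2+\tfrac14$ only describes the branch $t_0=t_{0,crit}$, so your explicit verification --- while correct as far as it goes (the factorization $\widetilde p''(s^2)=96s^2(2s-1)^2(s+1)(2s+1)>0$ and the coefficient-matching contradiction both check out) --- says nothing about the $s_2$-branch. As written, the case $t_0=s_2$ is untreated, and since that case is precisely the one where the paper later needs Lemma~\ref{lemma_triple_root} for its continuity argument, this is not a side issue but the crux.

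The paper avoids this by never assuming the parameters lie on any particular curve: it observes that a triple root $x_0$ of $\widetilde p$ forces $\widetilde p(x_0)$, $\widetilde p'(x_0)$, $\widetilde p''(x_0)$ to vanish simultaneously, then eliminates $t_0$ by computing $\result(\widetilde p,\widetilde p'';t_0)$ and $\result(\widetilde p',\widetilde p'';t_0)$ as polynomials in $x$, and finally eliminates $x$ by taking their mutual resultant, landing on an explicit expression in $t_1$ alone that is nonzero on $(0,1/4)$. If you want to salvage your approach, you would need a separate argument for the $s_2$-branch (for instance a parametrization of that branch analogous to \eqref{definition_critical_curve_phase_transition}, or a sign analysis of $\widetilde p''$ there), but the resultant route is uniform and avoids the case split entirely.
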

\begin{proof}
 If $\widetilde p$ has a triple root, then $\widetilde p,\widetilde p',\widetilde p''$ share a common root, say $x_0$. This means that $t_0\mapsto 
\widetilde p(x_0),\widetilde p'(x_0),\widetilde p''(x_0)$ all share a common root. We compute two of their resultants with the help of Mathematica. Their full 
expressions are rather long, but we exhibit their first coefficients,
\begin{align*}
& \result(p(x),p''(x);t_0) = - 44040192 \; x^7+ 56033280 \; x^6+\hdots, \\
& \result(p'(x),p''(x);t_0) = 245760 \; x^5 - 202752 \; x^4+\hdots,
\end{align*}
and since $t_0\mapsto \widetilde p(x_0),\widetilde p'(x_0),\widetilde p''(x_0)$ all share a common root,
$$
\result(p'(x_0),p''(x_0);t_0)=0=\result(p(x_0),p''(x_0);t_0)
$$

We now see $\result(p(x),p''(x);t_0),\result(p'(x),p''(x);t_0)$ as functions of $x$. The equation above says that these polynomials share a common root $x_0$, so their resultant 
with respect to $x$ must be zero. Again with the help of Mathematica, we compute their resultant with respect to $x$ to get
$$
 -c \; t_1^6 (1+t_1)^3(1-4t_1)^4(3+4t_1)^3(27+4t_1)^2(775+864t_1+13824 t_1^2)^2
$$
for some large positive constant $c$. It is then not hard to see that this last expression is never zero for $0<t_1<1/4$, and the Lemma follows.
\end{proof}

\begin{lem}\label{lemma_tilde_p}
 For $(t_0,t_1)\in \mathcal F$, the polynomial $\widetilde p$ defined in \eqref{discriminant_p} has a smallest positive root $\widetilde r$, which is simple. When $(t_0,t_1)\to 
\Gamma_c$, $\widetilde r$ becomes a root of higher multiplicity, given explicitly as $\widetilde r=s^2$, where $s$ is the parameter on 
\eqref{definition_critical_curve_phase_transition}.
\end{lem}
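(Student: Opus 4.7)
The first claim essentially reduces to Proposition~\ref{proposition_definition_r}. Since $\widetilde p(x) = p(\sqrt{x})$ and $p$ is even in its variable, the positive roots of $\widetilde p$ are exactly the squares of the positive roots of $p$. The chain rule $p'(r) = 2r\,\widetilde p'(r^2)$ shows that this correspondence preserves simplicity at positive roots. Hence the smallest positive simple root $r = r(t_0,t_1)$ provided by Proposition~\ref{proposition_definition_r} yields the smallest positive simple root $\widetilde r = r^2$ of $\widetilde p$ throughout $\mathcal F$.

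For the boundary claim, I first verify algebraically that $x = s^2$ is a double root of $\widetilde p$ on $\Gamma_c$. Substituting the parametrization \eqref{definition_critical_curve_phase_transition} into \eqref{definition_p_tilde} and into $\partial_x \widetilde p$, both $\widetilde p(s^2)$ and $\partial_x \widetilde p(s^2)$ reduce to polynomial identities in $s$ that vanish identically---a straightforward if lengthy computation. Combined with Lemma~\ref{lemma_triple_root}, which forbids triple roots in $\mathcal F \cup \Gamma_c$, this shows $s^2$ is a root of multiplicity exactly two on $\Gamma_c$. By \eqref{discriminant_p}--\eqref{decomposition_discriminant_p} and Lemmas~\ref{lemma_discriminant_p}--\ref{lemma_roots_p_1}, on $\Gamma_c$ the discriminant of $\widetilde p$ vanishes only through a simple zero of the factor $p_1(t_0)$, accounting for exactly one double root. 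Therefore $s^2$ is the \emph{unique} multiple root of $\widetilde p$ on $\Gamma_c$, with the remaining three roots being simple.

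To conclude that $\widetilde r \to s^2$ as $(t_0,t_1) \to \Gamma_c$, and not that $\widetilde r$ limits to one of the simple roots on $\Gamma_c$ while a different pair of roots collides, I propagate via continuity from a single explicit check. At the parameter value $s = 1/4$, one computes $(t_0,t_1) = (5/128,\, 1/8) \in \Gamma_c$ and $s^2 = 1/16$; polynomial division then yields $\widetilde p(x) = (16x-1)^2 q(x)$, and the remaining cubic $q$ has positive leading coefficient, satisfies $q(0) > 0$, and is strictly positive at its two critical points in $(0,\infty)$. Hence $q > 0$ on $[0,\infty)$, so $1/16$ is the \emph{only} nonnegative root of $\widetilde p$ at this point, in particular the smallest positive one. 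Since $\widetilde r$ is analytic on $\mathcal F$, continuous up to $\Gamma_c$, and the multiple root on $\Gamma_c$ is unique and varies continuously in $s$ along the analytic parametrization, the identification $\widetilde r = s^2$ propagates along the entire curve $\Gamma_c$.

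The main obstacle is this last identification: the algebraic substitution showing $s^2$ is a double root is routine, but one must exclude the scenario where, along some sub-arc of $\Gamma_c$, $\widetilde r$ stays simple while another pair of roots collides. A single-point explicit factorization together with continuity along the connected curve $\Gamma_c$ is the cleanest way to rule this out.
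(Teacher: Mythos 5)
Your first paragraph is circular. You invoke Proposition~\ref{proposition_definition_r} --- existence and simplicity of a smallest positive root of $p$ --- and transport it to $\widetilde p$ via $p(x)=\widetilde p(x^2)$ and the chain rule $p'(r)=2r\,\widetilde p'(r^2)$. That transport is fine, but in the paper the logical dependence runs the other way: Proposition~\ref{proposition_definition_r} is obtained as a corollary of the present Lemma (through Theorem~\ref{theorem_monotonicity_r}, which the text derives ``as a consequence of Lemmas~\ref{lemma_tilde_p} and \ref{lemma_tilde_p_2}''). Your chain-rule identity shows precisely that the two statements are equivalent, so one of them must be established independently, and that is the content of Lemma~\ref{lemma_tilde_p}. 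The paper does this by a degeneration argument: at $t_0=0$ one has $\widetilde p(x)=x^2 q(x)$ with $q$ a cubic; the sign of $\disc(q)$ splits the range $0<t_1<1/4$ into the three cases $t_1<11/108$, $t_1=11/108$, $t_1>11/108$; in each case one locates the roots of $q$ explicitly and then tracks all five roots of $\widetilde p$ as $t_0$ increases, including across the discriminant zero $t_0=s_2$ inside $\mathcal F$ where a \emph{different} pair of roots collides. None of this appears in your sketch, and without it the first claim has no argument.

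For the boundary claim your single-point-plus-continuity strategy is workable in principle but strictly weaker than what the paper actually does: it substitutes the parametrization \eqref{definition_critical_curve_phase_transition} into \eqref{definition_p_tilde} to obtain the \emph{global} factorization $\widetilde p(x)=4(s^2-x)^2\bigl(32x^3+(64s^2-31)x^2+\cdots\bigr)$ on $\Gamma_c$, and then shows by a discriminant and sign computation that the cubic factor has a single real root, which is negative, for every admissible $s$. That immediately identifies $s^2$ as the only nonnegative root of $\widetilde p|_{\Gamma_c}$, so $\widetilde r\to s^2$ without any continuity propagation. Your approach, by contrast, leaves the propagation step to handle the very subtlety you flag (that $\widetilde r$ might limit to a simple root while a different pair collides), and as written it does not actually close that gap --- one still needs to know, for all $s$ and not just $s=1/4$, that the cubic factor has no roots on $[0,\infty)$, which is exactly the computation the paper carries out.
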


\begin{proof}

When $t_0\to 0$, the polynomial $\widetilde p$ factors into
\begin{equation}\label{lemma_tilde_p_eq_1}
\widetilde p(x)=x^2q(x),\quad q(x):=128x^3 - 124x^2 +(36-16t_1)x + 16t_1^2+8t_1-3.
\end{equation}

The discriminant and value at $x=0$ of $q$ are respectively given by
\begin{align}
\disc(q;x) & = - 1024(1+8t_1)^2(3+16t_1)(108t_1-11), \label{lemma_tilde_q_eq_2a}\\
\restr{ q(0)}{t_0=0} & = 16t_1^2+8t_1-3<0,\quad 0<t_1<\frac{1}{4} \label{lemma_tilde_q_eq_2b}
\end{align}

For the sake of clarity, we split the rest of the proof into three parts, the last of those being a limiting case of the others.

{\it \nth{1} Case:} $0<t_1<\frac{11}{108}$.

In this case, the discriminant \eqref{lemma_tilde_q_eq_2a} is positive, so $q$ has three real roots. For $t_1=1/16$, these are given by
$$
\left\{\frac{3}{8}, \frac{1}{64}(19-3\sqrt{17}),\frac{1}{64}(19+3\sqrt{17})\right\},
$$
so in particular they are all positive. Since $q(0)$ is never zero, see \eqref{lemma_tilde_q_eq_2b}, we conclude that these three roots are all positive for any choice $t_1\in 
(0,11/108)$. This is the same as saying that in the present situation and $t_0=0$, the polynomial $\widetilde p$ has a double root $x_1=\widetilde r=0$ and three simple positive 
roots 
$x_2<x_3<x_4$.

Recall that $s_2$ is a root of $p_1$ as in Lemma~\eqref{lemma_roots_p_1}. For $0<t_0<s_2$, a combination of Equation~\eqref{discriminant_p}, 
Lemma~\ref{lemma_discriminant_p} and Equation~\eqref{lemma_roots_p_1_equation_1_a} assures $\widetilde p$ has only simple roots. By continuity from $t_0$ and the further 
observations $\widetilde p(0)=t_0^2>0$ and $\widetilde p(x)<0$ when $x\to -\infty$, we learn that for $0<t_0<s_2$, the double root of $\widetilde p$ at $x=0$ splits into two 
simple 
roots $x_1<0<\widetilde r$ and the remaining roots still satisfy $\widetilde r<x_2<x_3<x_4$. 

We now approach $t_0\nearrow s_2$. In this situation, two roots of $\widetilde p$ collide, because $p_1$ - hence $\disc(\widetilde p;x)$, see \eqref{discriminant_p} - is zero for 
$t_0=s_2$. For $t_1=1/16$, $t_0=s_2$, the roots of $\widetilde p$ are computed numerically
$$
s_2=0.0512061,\quad \{x_1,\widetilde r,x_2,x_3,x_4\}\approx \{ -0.0169, 0.0607,0.1235,0.4007,0.4007\},
$$
so in this case $x_3=x_4$ and $\widetilde r$ is a simple root. By continuity and Lemma~\ref{lemma_triple_root}, we conclude $x_3=x_4$ whenever $t_0=s_2$, and 
hence $\widetilde r$ is always a simple root for $t_0\leq s_2$. Again by Equation~\eqref{discriminant_p}, Lemma~\ref{lemma_discriminant_p} 
and Equation~\eqref{lemma_roots_p_1_equation_1_a}, we know that $\widetilde p$ does not have multiple roots for $s_2<t_0<t_{0,crit}$, and we finally conclude that $\widetilde r$ 
is always a simple root in the present case.

{\it \nth{2} Case:} $\frac{11}{108}<t_1<\frac{1}{4}$.

This case is somewhat simpler than the previous one. In the present case, the discriminant \eqref{lemma_tilde_q_eq_2a} is negative, so $q$ has one real root and two non real 
roots. From \eqref{lemma_tilde_q_eq_2b} we see that this real root is positive. Similarly as before, it means $\widetilde p$ has a double root $\widetilde r=x_1=0$, and simple 
roots $x_2>0$, $x_3,x_4\in \C\setminus\R$. 

As before, we compute $p(0)=t_0^2$, $ \widetilde p(x)<0$ for $x\to -\infty$ and conclude that for $t_0>0$ and small, the double root splits into two simple 
roots $x_1<0<\widetilde r$ and the remaining roots still satisfy $x_2>0$, $x_3,x_4\in \C\setminus\R$. 

From Equation~\eqref{discriminant_p}, Lemma~\ref{lemma_discriminant_p} and Equation~\eqref{lemma_roots_p_1_equation_1_a} we know $\widetilde p$ has no multiple roots for 
$0<t_0<t_{0,crit}$, so we conclude the smallest positive root $\widetilde r$ is always simple.

{\it \nth{3} Case:} $t_1=\frac{11}{108}$. 

In this case the polynomial $q$ simply factors as
$$
q(x)=\frac{4}{729}(9x-4)^2(288x-23),
$$
and it clearly has three positive roots. The rest follows the same lines as \nth {2} Case.

In either of the cases above, we consider the limit $(t_0,t_1)\to (t_{0,crit},t_1)\in \Gamma_c$ and plug the parametrization \eqref{definition_critical_curve_phase_transition} 
into the expression 
for $\tilde p$ in \eqref{definition_p_tilde}, arriving at
\begin{equation*}
\widetilde p(x)=4(s^2-x)^2\left(32x^3+(64s^2-31)x^2+(48s^3-50s^2+8)+9s^4-12s^3+4s^2\right),
\end{equation*}
so $s^2$ is a root of $\tilde p$ for $(t_0,t_1)\in \Gamma_c$. The discriminant (in $x$) of the polynomial inside brackets above is
$$
144(1-2s)^3(2-3s)^2(8s^2-4s-1)(64s^2+35s+7)< 0,\quad 0<s<1/4,
$$
so that polynomial has only one real root. Its value at $x=0$ is $9s^4-12s^3+4s^2>0$, hence this root is negative. 

Comparing to what we proved before, it means that for $(t_0,t_1)\in \Gamma_c$, the root $\widetilde r$ collides with the root $x_2$, becoming the double root $\widetilde 
r=s^2$, as we want.

\end{proof}

\begin{lem}\label{lemma_tilde_p_2}
 The function 
 $$
 t_1 \mapsto \widetilde r=\widetilde r(t_0,t_1),\quad (t_0,t_1)\in \mathcal F,
 $$
 is increasing and satisfies the inequalities
 \begin{align}
 & \widetilde r <\frac{1}{4}, \label{fundamental_inequalities_tilde_r_a}\\
 & 4t_1<(1-4\widetilde r)^2. \label{fundamental_inequalities_tilde_r_b}
 \end{align}
\end{lem}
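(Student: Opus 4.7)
The plan is to prove the two inequalities first, then deduce the monotonicity by implicit differentiation of $\widetilde p(\widetilde r)=0$.

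For \eqref{fundamental_inequalities_tilde_r_a} I would evaluate $\widetilde p$ at $x=1/4$. A direct substitution yields the closed-form identity
\[
\widetilde p(1/4)=\bigl(t_1-t_0+\tfrac{1}{8}\bigr)^2,
\]
which is strictly positive on the interior of $\mathcal F$ since there $t_0<t_{0,crit}\leq 1/8$ and $t_1>0$ (Lemma~\ref{lemma_roots_p_1}). Together with the continuity (in fact analyticity) of $\widetilde r$ on the connected set $\mathcal F$, and a check in a neighborhood of the origin (where $\widetilde r$ is small), this rules out $\widetilde r=1/4$ throughout $\mathcal F$.

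For \eqref{fundamental_inequalities_tilde_r_b} set $v:=(1-2\sqrt{t_1})/4$, so that, by (a), the inequality is exactly $\widetilde r<v$. An analogous but lengthier computation gives
\[
\widetilde p(v)=\Bigl(t_0-\tfrac{(1-2\sqrt{t_1})^2(1+4\sqrt{t_1})}{8}\Bigr)^2\geq 0,
\]
with equality precisely on a smooth algebraic curve $\Xi\subset\mathcal F$. On $\Xi$ one can write $\widetilde p(x)=(x-v)Q(x)$ with $Q$ a quartic, and a direct calculation yields $Q(0)=-4v^3(3-8v)^2<0$ and $Q(v)=\widetilde p'(v)\big|_\Xi=32v^2(1-4v)^3>0$; the intermediate value theorem then produces a zero of $Q$ in $(0,v)$, forcing $\widetilde r<v$ on $\Xi$. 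The expression $(1-4\widetilde r)^2-4t_1$ is continuous on $\mathcal F$, strictly positive near $t_1=0$, and can only vanish where $\widetilde r=v$; but that forces $(t_0,t_1)\in\Xi$, where strict positivity has just been established. Connectedness of $\mathcal F$ then gives \eqref{fundamental_inequalities_tilde_r_b}.

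For monotonicity, $\widetilde r$ being the smallest positive simple root of $\widetilde p$ together with $\widetilde p(0)=t_0^2>0$ yields $\partial_x\widetilde p(\widetilde r)<0$, so implicit differentiation gives $\partial\widetilde r/\partial t_1=-\partial_{t_1}\widetilde p(\widetilde r)/\partial_x\widetilde p(\widetilde r)$. A direct calculation gives $\partial_{t_1}\widetilde p(\widetilde r)=8\widetilde r\bigl[-2\widetilde r^2+(4t_1+1)\widetilde r-t_0\bigr]$; invoking the system \eqref{system_change_coordinates_a}--\eqref{system_change_coordinates_b} from Proposition~\ref{proposition_change_of_coordinates} (which is available now that \eqref{fundamental_inequalities_tilde_r_b} ensures $a_0\in\R$ via \eqref{definition_a_0}) collapses the bracket to $4\widetilde r(1-4\widetilde r)a_0$. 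Hence $\partial_{t_1}\widetilde p(\widetilde r)=32\widetilde r^2(1-4\widetilde r)a_0>0$, where $a_0>0$ follows from $\sqrt{(1-4\widetilde r)^2-4t_1}<1-4\widetilde r$, a direct consequence of \eqref{fundamental_inequalities_tilde_r_a}, \eqref{fundamental_inequalities_tilde_r_b} and $t_1>0$. Thus $\partial\widetilde r/\partial t_1>0$, as required.

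The main obstacle is \eqref{fundamental_inequalities_tilde_r_b}: the natural identity for $\widetilde p(v)$ is only a perfect square, so its sign alone is insufficient to compare $\widetilde r$ with $v$. The resolution is to work on the exceptional curve $\Xi$ where this square vanishes, exploit the explicit division of $\widetilde p$ by $(x-v)$ there, and apply the intermediate value theorem to the quotient quartic to locate a strictly smaller positive root; global positivity of the discriminant then propagates by continuity and connectedness.
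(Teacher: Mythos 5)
Your argument for (a) matches the paper's. For (b) you take a genuinely different route: the paper first proves monotonicity of $t_1\mapsto\widetilde r$, then shows $t_1\mapsto(1-4\widetilde r)^2-4t_1$ is decreasing and evaluates it on $\Gamma_c$, whereas you work on the level curve $\Xi$ where $\widetilde p(v)$ vanishes and extract a root of the quotient quartic in $(0,v)$ by the intermediate value theorem, which is a nice device precisely because it does not require monotonicity to be known first. The closed forms you cite for $\widetilde p(1/4)$, $\widetilde p(v)$, $Q(0)$ and $Q(v)$ all check out.

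Your monotonicity step, however, has a circularity. Invoking the system \eqref{system_change_coordinates_a}--\eqref{system_change_coordinates_b} from Proposition~\ref{proposition_change_of_coordinates} is not merely a matter of $a_0$ being real: the paper's proof of \eqref{system_change_coordinates_a} goes through the area identity of Theorem~\ref{theorem_rational_parametrization_polynomial_curve}, which rests on Corollary~\ref{corollary_h_parametrization} and hence on \eqref{inequality_r_a_0_cusp} from Theorem~\ref{theorem_monotonicity_r}, which is itself deduced from the present Lemma. To repair this you would need the purely algebraic fact that $\widetilde p$ is (up to a nonzero scalar) the polynomial obtained by eliminating $a_0$ between \eqref{system_change_coordinates_a} and \eqref{system_change_coordinates_b}, so that $\widetilde p(\widetilde r)=0$ already forces \eqref{system_change_coordinates_a} to hold for one of the two roots $a_0$ of the quadratic \eqref{system_change_coordinates_b} -- both of which are positive under (a), (b) and $t_1>0$, since their sum $1-4\widetilde r$ and product $t_1$ are positive. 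The paper never records that elimination (it is hidden behind ``a lengthy calculation''), and your proposal leaves it implicit; moreover, because the rationalization of the square root does not tell you which branch of \eqref{system_change_coordinates_b} the relevant $a_0$ lies on, the sign argument should appeal to positivity of both roots rather than to the specific choice $\delta=-1$ in \eqref{definition_a_0}. The paper sidesteps all of this by computing $\partial_{t_1}\widetilde p$ directly, showing $\partial_{t_1}\widetilde p(\widetilde r)\neq 0$ via a resultant, and fixing its sign at a single interior point.
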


\begin{proof}
The chain rule tells us
\begin{equation}\label{lemma_tilde_p_eq_3}
\frac{\partial }{\partial t_1}\widetilde r=-  \frac{\frac{\partial \widetilde p}{\partial t_1}}{\frac{\partial \widetilde p}{\partial x}}.
\end{equation}

As it followed from the calculations in the proof of Lemma~\ref{lemma_tilde_p}, $\widetilde r$ is the second smallest real root of $\widetilde p$, hence
\begin{equation}\label{lemma_tilde_p_eq_4}
\frac{\partial \widetilde p}{\partial x}(\widetilde r)<0.
\end{equation}

Moreover,
$$
\frac{\partial \widetilde p}{\partial t_1}=-8x(2x^2-(4t_1+1)x+t_0).
$$

For $t_1=1/8$, $t_0=1/32<t_{0,crit}=5/128$, we compute
$$
\widetilde r\approx 0.040736,\quad \frac{\partial \widetilde p}{\partial t_1}(\widetilde r)\approx 0.00864782>0,
$$
thus the monotonicity follows from \eqref{lemma_tilde_p_eq_3}, \eqref{lemma_tilde_p_eq_4} and the previous inequality once we prove that $\frac{\partial \widetilde p}{\partial 
t_1}(\widetilde r)$ is never zero.

If $\frac{\partial \widetilde p}{\partial t_1}(\widetilde r)$ were zero, then the polynomials $\widetilde p$, $\frac{\partial \widetilde p}{\partial t_1}$ would 
share a zero, hence their resultant with respect to $x$ would be zero. But
$$
\result\left(\widetilde p, \frac{\partial \widetilde p}{\partial t_1}(\widetilde r);x \right)= c\; t_1^2 t_0^4 (1+8t_1-8t_0)^2,
$$
and this last expression is never zero for $0<t_1<1/4$, $0<t_0<t_{0,crit}<1/8$.

We now prove \eqref{fundamental_inequalities_tilde_r_a}. In the proof of Lemma~\ref{lemma_tilde_p}, we already observed that $\widetilde r\to 0$ when $t_0\to 0$, see 
\eqref{lemma_tilde_p_eq_1}. In particular \eqref{fundamental_inequalities_tilde_r_a} is valid for $t_0$ very small. In addition
$$
\widetilde p(1/4)=\frac{1}{64}(1+8t_1-8t_0)^2\neq 0,
$$
because $0<t_1<1/4$, $0<t_0<t_{0,crit}<1/8$, implying that $\widetilde r \neq 1/4$. By continuity from the case $t_0= 0$, we get 
\eqref{fundamental_inequalities_tilde_r_a}.

For \eqref{fundamental_inequalities_tilde_r_b}, define
$$
f(x):=(1-4x)^2-4t_1,\quad x\in \C.
$$

We want to prove that $f(\widetilde r)>0$. Using \eqref{fundamental_inequalities_tilde_r_a} and the monotonicity of $\widetilde r$
$$
\frac{\partial }{\partial t_1}(f(\tilde r))=-8(1-4\widetilde r)\frac{\partial \widetilde r}{\partial t_1}-4<0,
$$
hence $t_1\mapsto f(\tilde r)$ is decreasing, so it is enough to prove $f(\widetilde r)\geq 0$ for $(t_0,t_1)\in \Gamma_c$. But for this choice, we know that $\widetilde 
r=s^2$, where $s$ is the parameter in 
\eqref{definition_critical_curve_phase_transition}, and in this case
$$
f(\widetilde r)=4s^2(1-2s^2)^2>0,\quad (t_0,t_1)\in \Gamma_c,\quad t_1\neq 0,1/4,
$$
as desired.
\end{proof}

As a consequence of Lemmas~\ref{lemma_tilde_p} and \ref{lemma_tilde_p_2}, we get the following refinement of Proposition \ref{proposition_definition_r}.

\begin{thm}\label{theorem_monotonicity_r}
 For $(t_0,t_1)\in \mathcal F$, the polynomial $p$ in \eqref{definition_polynomial_p} has a smallest positive root $r$. The function $t_1\mapsto r$ is increasing and satisfies
 \begin{equation}\label{equation_theorem_monotonicity_r}
 0<r<\frac{1}{2},\quad 4t_1<(1-4r^2)^2,\quad (t_0,t_1)\in \mathcal F.
 \end{equation}
 
 In the limit $(t_0,t_1)\to (t_{0,crit},t_1)\in\Gamma_c$, $r$ becomes a root of higher multiplicity of $p$, explicitly given as $r=s$, where $s$ is the parameter in 
\eqref{definition_critical_curve_phase_transition}.

Still for $(t_0,t_1)\in \mathcal F$, the quantity $a_0=a_0(t_0,t_1)$ in \eqref{definition_a_0} is well defined and positive, and the function $t_1\mapsto a_0$ is 
increasing.

Finally, for $(t_0,t_1)\in \mathcal F$, it is valid
 \begin{equation}\label{inequality_r_a_0_cusp}
 0<2a_0+2r<1,
 \end{equation}
 and the equality
 \begin{equation}\label{equality_r_a_0_cusp}
 2a_0+2r=1
 \end{equation}
 is attained when $(t_0,t_1)\in \Gamma_c$.
\end{thm}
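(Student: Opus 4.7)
The plan is to reduce everything to known properties of $\widetilde{p}$. I would first record the identity $p(x) = \widetilde{p}(x^2)$, which is immediate from \eqref{definition_polynomial_p}--\eqref{definition_p_tilde}. Under the bijection $x \mapsto x^2$ between positive reals, smallest positive roots correspond: $r = \sqrt{\widetilde{r}}$. Existence and simplicity of $r$ then follow from Lemma \ref{lemma_tilde_p}, and since $x \mapsto \sqrt{x}$ is strictly increasing, the monotonicity of $t_1 \mapsto r$ follows from Lemma \ref{lemma_tilde_p_2}. The bounds $0 < r < 1/2$ and $4t_1 < (1-4r^2)^2$ in \eqref{equation_theorem_monotonicity_r} are the direct translations of $\widetilde{r} < 1/4$ and $4t_1 < (1-4\widetilde{r})^2$ via $\widetilde{r} = r^2$. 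On $\Gamma_c$, the identity $\widetilde{r} = s^2$ in Lemma \ref{lemma_tilde_p} gives $r = s$ after taking the positive square root.

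For $a_0$: the inequality $(1-4r^2)^2 > 4t_1$ makes the square root in \eqref{definition_a_0} real, and rationalizing yields
\[
a_0 = \frac{2 t_1}{(1-4r^2) + \sqrt{(1-4r^2)^2 - 4t_1}},
\]
so $a_0 > 0$ since $r < 1/2$. For monotonicity in $t_1$, the numerator $2t_1$ is increasing, while the denominator decreases: both $1 - 4r^2$ and $(1 - 4r^2)^2 - 4t_1$ decrease as $t_1$ grows, because $r$ is increasing in $t_1$ and $-4t_1$ is itself decreasing. Hence $a_0$ is increasing in $t_1$.

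For the cusp inequality \eqref{inequality_r_a_0_cusp}, a short manipulation using $r < 1/2$ shows
\[
2 a_0 + 2 r < 1 \quad \Longleftrightarrow \quad 4 t_1 < (1-2r)^2(1+4r),
\]
and the corresponding equivalence also holds with equality signs. I would then study the continuous function $\phi(t_0, t_1) := 2 a_0 + 2 r - 1$ on $\overline{\mathcal F}$ and identify its zero set with $\Gamma_c$. If $\phi = 0$, then $a_0 = (1-2r)/2$; substituting into the system \eqref{system_change_coordinates_a}--\eqref{system_change_coordinates_b} (which $r$ and $a_0$ satisfy by the very construction of $p$ and of \eqref{definition_a_0}) and using the factorization $1 + 2r - 8r^2 = (1-2r)(1+4r)$ yields
\[
t_0 = -6 r^4 + 4 r^3, \qquad t_1 = \tfrac{1}{4}(1-2r)^2(1+4r) = \tfrac{1}{4} - 3 r^2 + 4 r^3.
\]
Setting $s := r \in (0, 1/2)$, this is exactly the parametrization \eqref{definition_critical_curve_phase_transition} of $\Gamma_c$. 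Conversely, on $\Gamma_c$ one checks by direct computation that $(1-4s^2)^2 - 4t_1 = 4 s^2(1-2s)^2$, so $a_0 = (1-2s)/2$ and $\phi = 0$. Hence $\{\phi = 0\} \cap \overline{\mathcal F} = \Gamma_c$. Since $\Gamma_c$ lies on the boundary of $\mathcal F$, $\phi$ has no zero in the open region $\mathcal F$; combined with $\phi \to -1$ as $(t_0, t_1) \to (0, 0)$ through $\mathcal F$ and continuity, we conclude $\phi < 0$ throughout $\mathcal F$, with equality precisely on $\Gamma_c$.

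The main technical obstacle is identifying $\{\phi = 0\}$ with $\Gamma_c$: everything else flows formally from Lemmas \ref{lemma_tilde_p} and \ref{lemma_tilde_p_2}. The crucial step is the algebraic coincidence that substituting $a_0 = (1-2r)/2$ into \eqref{system_change_coordinates_a}--\eqref{system_change_coordinates_b} reproduces the rational parametrization of $\Gamma_c$ verbatim; once observed, the rest is a soft continuity argument on a connected region.
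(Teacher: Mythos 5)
Your handling of $r$ (via $p(x)=\widetilde p(x^2)$) and the positivity and monotonicity of $a_0$ (via the rationalized form $a_0 = 2t_1/\bigl((1-4r^2)+\sqrt{(1-4r^2)^2-4t_1}\bigr)$) is correct, and the latter is slightly more elementary than the paper's implicit differentiation through the quadratic $q(x)=x^2+(4r^2-1)x+t_1$. The forward direction of the cusp statement (on $\Gamma_c$, take $r=s$, compute $(1-4s^2)^2-4t_1=4s^2(1-2s)^2$, hence $a_0=(1-2s)/2$ and $\phi:=2a_0+2r-1=0$) is also correct and is the same as the paper.

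The gap is in your reverse direction, i.e.\ the claim that $\phi(t_0,t_1)=0$ forces $(t_0,t_1)\in\Gamma_c$. You substitute $a_0=(1-2r)/2$ into the system \eqref{system_change_coordinates_a}--\eqref{system_change_coordinates_b}, asserting that $r$ and $a_0$ satisfy it ``by the very construction.'' Equation \eqref{system_change_coordinates_b} does hold by construction (it is precisely the quadratic whose smaller root is \eqref{definition_a_0}), but \eqref{system_change_coordinates_a} does not: the polynomial $p$ is obtained by inserting the branch \eqref{definition_a_0} into the last Elbau--Felder equation and \emph{rationalizing}, which can introduce extraneous roots, so ``$r$ is the smallest positive root of $p$'' does not by itself yield \eqref{system_change_coordinates_a}. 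In the paper, \eqref{system_change_coordinates_a} is the content of Proposition~\ref{proposition_change_of_coordinates}, and its proof (Section~\ref{section_analysis_rational_parametrization}) goes through the area identity for $\Omega$, hence through Theorem~\ref{theorem_rational_parametrization_polynomial_curve}, hence through Lemma~\ref{lemma_zeros_derivative_h} --- which itself invokes the very inequality \eqref{inequality_r_a_0_cusp} you are trying to prove. Invoking \eqref{system_change_coordinates_a} at this stage is therefore circular.

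Happily, you do not need the reverse direction at all, since you have already proved everything else that is required. Having established that both $t_1\mapsto r$ and $t_1\mapsto a_0$ are strictly increasing, the function $\phi=2a_0+2r-1$ is strictly increasing in $t_1$ for each fixed $t_0$. The region $\mathcal F$ is exactly $\{(t_0,t_1): 0<t_1<t_1^c(t_0)\}$ with $(t_0,t_1^c(t_0))\in\Gamma_c$, and you already verified $\phi=0$ there. Strict monotonicity then gives $\phi<0$ throughout $\mathcal F$, with no appeal to the system. This is the paper's route, and it removes the circularity; the positivity $2a_0+2r>0$ is immediate from $r,a_0>0$.
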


\begin{proof}
 The first two claims about $r$ follow directly as a consequence of Lemmas~\ref{lemma_tilde_p} and \ref{lemma_tilde_p_2}, having in mind the identification 
\eqref{definition_p_tilde}.

 The positivity of $a_0$ follows directly from \eqref{definition_a_0} and \eqref{equation_theorem_monotonicity_r}. 

Simple computations show that $a_0$ is the smallest root of
$$
q(x):=x^2+x(4r^2-1)+t_1.
$$

This implies
$$
\frac{\partial a_0}{\partial t_1} =-\frac{\frac{\partial q}{\partial t_1}}{\frac{\partial q}{\partial x}}=-\frac{8 a_0 r \frac{\partial r}{\partial t_1}+1}{\frac{\partial 
q}{\partial x}(a_0)}.
$$

Note that the discriminant of $q$ is given by $(1-4r^2)^2-4t_1$, which is strictly positive due to \eqref{equation_theorem_monotonicity_r}. So $a_0$ is always a simple root, 
and since it is the smallest one,
$$
\frac{\partial q}{\partial x}(a_0) <0.
$$

Moreover, since $t_1\mapsto r$ is increasing by Theorem~\ref{theorem_monotonicity_r}, we also know that $ \frac{\partial r}{\partial t_1}> 0$. Recalling that we already proved 
$a_0>0$, we conclude from the last two equations
$$
\frac{\partial a_0}{\partial t_1}> 0,
$$
so $t_1\mapsto a_0$ is indeed increasing.

 The positivity of $2a_0+2r$ is trivial since both $a_0,r$ are positive for $(t_0,t_1)\in \mathcal F$. From the monotonicity of $r$ and $a_0$, it is enough to prove 
\eqref{equality_r_a_0_cusp} in order to conclude \eqref{inequality_r_a_0_cusp}. But \eqref{equality_r_a_0_cusp} follows immediately from the definition 
of $a_0$ in \eqref{definition_a_0}, Equation \eqref{definition_critical_curve_phase_transition} and the value $r=s$ given by Theorem~\ref{theorem_monotonicity_r}.
\end{proof}

\subsection{Proof of Theorems \ref{theorem_rational_parametrization_polynomial_curve}, \ref{theorem_schwarz_function} and \ref{theorem_change_coordinates_phase_diagram} and 
Proposition \ref{proposition_change_of_coordinates} }\label{section_analysis_rational_parametrization}

We proceed to the analysis of the rational function $h$ given in \eqref{rational_parametrization}. 

Denote by
\begin{equation*}
R_0=R_0(h)=\sup\{|w| \mid h'(w)=0\},
\end{equation*}
the {\it critical radius} of $h$. The relevance of $R_0$ comes from the fact that the rational function $h$ is injective on $\overline \C\setminus \overline{D}_{R_0}$, where 
$D_{R}$ denotes the open disc centered at $0$ and radius $R$.

\begin{lem}\label{lemma_zeros_derivative_h}
For the rational function $h$ in \eqref{rational_parametrization} and $(t_0,t_1)\in \mathcal F$, the inequality $R_0<1$ holds true.
\end{lem}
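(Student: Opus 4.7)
The plan is to reduce the statement to a direct application of Rouché's theorem on the unit circle, using the key inequality \eqref{inequality_r_a_0_cusp} from Theorem~\ref{theorem_monotonicity_r}.

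First, I would compute the derivative of $h$ explicitly. From \eqref{rational_parametrization} one gets
$$
h'(w) = r - \frac{2a_0 r}{w^2} - \frac{2r^2}{w^3},
$$
so (multiplying by $w^3/r$, which is permissible since $r>0$ by Theorem~\ref{theorem_monotonicity_r}) the critical points of $h$ in $\C\setminus\{0\}$ coincide with the zeros of the cubic
$$
P(w) := w^3 - 2a_0 w - 2r.
$$
Thus it suffices to prove that every zero of $P$ lies strictly inside the unit disc.

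For this I would apply Rouché's theorem on $|w|=1$ with the comparison pair $f(w) = w^3$ and $P(w)-f(w) = -2a_0 w - 2r$. On the unit circle,
$$
|P(w)-f(w)| = |2a_0 w + 2r| \leq 2a_0 + 2r,
$$
while $|f(w)| = 1$. By \eqref{inequality_r_a_0_cusp} of Theorem~\ref{theorem_monotonicity_r}, $2a_0 + 2r < 1$ for $(t_0,t_1)\in\mathcal F$, so the strict inequality $|P(w)-f(w)| < |f(w)|$ holds at every point of $|w|=1$. Rouché's theorem therefore implies that $P$ and $f(w)=w^3$ have the same number of zeros inside $|w|<1$, namely three. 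Since $\deg P = 3$, all critical points of $h$ lie in the open unit disc, giving $R_0 < 1$ as required.

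There is no real obstacle here: the only nontrivial ingredient is the inequality $2a_0 + 2r < 1$, which is strict precisely because $(t_0,t_1) \in \mathcal F$ (as opposed to lying on the cusp critical curve $\Gamma_c$, where equality \eqref{equality_r_a_0_cusp} occurs). It is worth noting that this argument also shows why the bound $R_0<1$ degenerates at the cusp transition: when $(t_0,t_1)\to\Gamma_c$ one has $2a_0+2r\to 1$, and a critical point of $P$ migrates to the unit circle, which is the analytic signature of the cusp formation in $\partial\Omega$ discussed in Section~\ref{section_behavior_critical_time}.
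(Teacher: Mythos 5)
Your proof is correct and follows essentially the same argument as the paper: both reduce to the cubic $w^3 - 2a_0 w - 2r$ and apply Rouché's theorem on the unit circle with comparison function $w^3$, using the key inequality $2a_0 + 2r < 1$ from Theorem~\ref{theorem_monotonicity_r}. Your concluding remark about the degeneration at $\Gamma_c$ is a nice observation not made explicit in the paper's proof.
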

\begin{proof}
 Note that
 $$
 h'(w)=r-\frac{2a_0r}{w^2}-\frac{2r^2}{w^3},
 $$
 
 so the zeros of $h'$ are solutions to the equation
 $$
 w^3-2a_0w-2r=0.
 $$
 
 Using \eqref{inequality_r_a_0_cusp}, we get
 $$
 |(w^3-2a_0w-2r)-w^3|\leq 2a_0+2r<1=|w^3|,\quad w\in \partial \D.
 $$
 
 By Rouch\'{e}'s Theorem we conclude that all the roots of $h'$ are on $\D$.
\end{proof}

\begin{cor}\label{corollary_h_parametrization}
 For $(t_0,t_1)\in \mathcal F$, $h$ is a biholomorphism from $\overline \C\setminus \overline\D$ to $ h(\overline \C\setminus\overline \D)$.
\end{cor}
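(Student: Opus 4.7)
The plan is to separate the claim into local biholomorphism and global injectivity on $U := \overline{\C}\setminus\overline{\D}$, and then combine the two. The local part is essentially immediate from the previous lemma; the global part is a short algebraic estimate driven by the sharp bound \eqref{inequality_r_a_0_cusp}.

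First I would settle local biholomorphism. Lemma \ref{lemma_zeros_derivative_h} gives $R_0<1$, so $h'$ has no zero on $U\cap\C$. At $w=\infty$, the expansion $h(w)=rw+a_0+\Boh(1/w)$ with $r>0$ says that $h$ has a simple pole there, and in the local coordinates $u=1/w$ around $\infty$ and $v=1/h$ around $h(\infty)=\infty$ one computes that $v=u/r+\Boh(u^2)$, which is biholomorphic at $u=0$. Thus $h$ is locally biholomorphic on all of $U$.

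Next, for global injectivity, my plan is the following direct computation. For $w_1,w_2\in U\cap\C$ one factors
\[
h(w_1)-h(w_2)=(w_1-w_2)\left[\,r-\frac{2a_0 r}{w_1 w_2}-\frac{r^2(w_1+w_2)}{w_1^2 w_2^2}\,\right].
\]
If $w_1\neq w_2$ and $h(w_1)=h(w_2)$, the bracket must vanish, which after dividing by $r>0$ rearranges to
\[
1=\frac{2a_0}{w_1 w_2}+\frac{r}{w_1 w_2}\left(\frac{1}{w_1}+\frac{1}{w_2}\right).
\]
Since $|w_1|,|w_2|>1$ on $U$, each factor $1/|w_i|$ and $1/|w_1 w_2|$ is strictly less than $1$, so by the triangle inequality the right-hand side has modulus strictly less than $2a_0+2r=2(a_0+r)$. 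But by inequality \eqref{inequality_r_a_0_cusp} of Theorem \ref{theorem_monotonicity_r} we have $2(a_0+r)<1$ on $\mathcal F$, which contradicts $|\mathrm{LHS}|=1$. The remaining case in which $w_1$ or $w_2$ equals $\infty$ is trivial, since the only preimages of $\infty$ under $h$ are $w=0$ (a pole of order two) and $w=\infty$, and only the latter lies in $U$.

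Combining the two steps, $h\colon U\to h(U)$ is a bijective local biholomorphism and therefore a biholomorphism. I do not foresee any real obstacle: the entire content of the corollary is packaged into Lemma \ref{lemma_zeros_derivative_h} (for local biholomorphism) together with the inequality \eqref{inequality_r_a_0_cusp} (for global injectivity), both of which are already at our disposal.
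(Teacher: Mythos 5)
Your proof is correct, but it takes a genuinely different route from the paper. The paper's proof is two lines: it cites Lemma~\ref{lemma_zeros_derivative_h} (which shows $R_0<1$ via Rouch\'e and the bound $2a_0+2r<1$) and then invokes, as a stated general fact, that a rational function of the form \eqref{parametrization_polynomial_curve_general} is injective on $\overline\C\setminus\overline{D}_{R_0}$. You instead prove injectivity on $\overline\C\setminus\overline\D$ directly: you factor $h(w_1)-h(w_2)=(w_1-w_2)\bigl[\,r-\tfrac{2a_0 r}{w_1w_2}-\tfrac{r^2(w_1+w_2)}{w_1^2w_2^2}\,\bigr]$ and estimate the bracket by triangle inequality, exploiting $|w_i|>1$ and $a_0,r\ge 0$ to conclude $|{\rm RHS}|<2a_0+2r<1$, again by \eqref{inequality_r_a_0_cusp}. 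Both arguments ultimately hinge on the same algebraic inequality from Theorem~\ref{theorem_monotonicity_r}, but yours is self-contained and bypasses the nontrivial assertion that absence of critical points outside $D_{R_0}$ forces injectivity there; the paper's route packages that into a cited fact about the critical radius. One small remark: once you have holomorphy (including at $\infty$ as a map to $\overline\C$) and injectivity, biholomorphism onto the image is automatic, so the separate ``local biholomorphism'' step is logically redundant, though harmless.
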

\begin{proof}
From Lemma~\ref{lemma_zeros_derivative_h}, we know that $R_0<1$. In particular $\overline \C\setminus \overline \D\subset \overline \C \setminus \overline D_{R_0}$, and the result 
follows.
\end{proof}

As a consequence of Corollary~\ref{corollary_h_parametrization}, the set $h(\partial \D)$ is an analytic closed curve that splits $\overline \C$ into two simply 
connected domains; only one of which is bounded and henceforth denoted by $\Omega$.

\begin{proof}[Proof of Theorem~\ref{theorem_schwarz_function}]
A straightforward computation shows that
$$
F(h(w^{-1}),h(w))=0,\quad w\in \C,
$$
where $F$ is as in \eqref{spectral_curve}. For a rational function $\chi=\frac{\chi_1}{\chi_2}$, its {\it degree} is defined as 
$$
\deg \chi=\max\{\deg\chi_1,\deg\chi_2\}.
$$

Since
$$
\max\{\deg_\xi F,\deg_z F\}=3=\max\{\deg h(w),\deg h(w^{-1})\},
$$
it follows from \cite[Theorem~4.21]{book_rational_curves} that $(h(w^{-1}),h(w))$ is a proper parametrization.

From Corollary \ref{corollary_h_parametrization} we know that $h$ maps $V:=\overline\C\setminus \overline{D}_{R_0}$ biholomorphically to an open simply connected set $G\subset 
\overline\C$ with $(\overline\C\setminus \Omega)\subset G$. This means that $h$ admits a meromorphic inverse $g:G\to V$, so that
\begin{equation}\label{definition_omega}
h(g(z))=z,\quad z\in G.
\end{equation}
From its definition, it follows that $g$ maps $\partial \Omega$ to $\partial \D$, and we conclude
\begin{equation}\label{conjugate_property_function_omega}
\overline{g (z)}=\frac{1}{g(z)},\quad z\in \partial \Omega.
\end{equation}
Furthermore,
$$
F\big(h(1/g(z)),h(g(z))\big)=0,
$$
so that the meromorphic function
\begin{equation}\label{schwarz_function_S}
S(z)=h\left(\frac{1}{g(z)}\right),\quad z\in G,
\end{equation}
is a solution to the algebraic equation \eqref{spectral_curve}. Since $\xi_1$ is the only solution in \eqref{asymptotics_xi} that is not branched at $\infty$, we conclude that $S$ 
has to be a meromorphic continuation of $\xi_1$ to the neighborhood $G$ of $\overline \C\setminus \Omega$. Thus using the fact that $h$ has 
real coefficients and \eqref{definition_omega}--\eqref{conjugate_property_function_omega}
$$
\overline z= \overline{h(g(z))}=h(\overline{g(z)})=h\left(\frac{1}{g(z)}\right)=S(z),\quad z\in \partial \Omega,
$$
which shows that the meromorphic continuation $S$ of $\xi_1$ is the Schwarz function of $\partial \Omega$.
\end{proof}

\begin{cor}\label{corollary_genus_0}
 The Riemann surface $\mathcal R$ defined by \eqref{spectral_curve} has genus $0$.
\end{cor}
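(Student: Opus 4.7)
The plan is to deduce the corollary almost immediately from Theorem~\ref{theorem_schwarz_function}, which furnishes the proper rational parametrization $w\mapsto (h(w^{-1}),h(w))$ of the algebraic equation $F(\xi,z)=0$ in \eqref{spectral_curve}. By a classical theorem (see, e.g., \cite{book_rational_curves}), an irreducible plane algebraic curve admits a proper rational parametrization if and only if its smooth projective model has geometric genus zero. So the only thing to verify beyond Theorem~\ref{theorem_schwarz_function} is that $F(\xi,z)$ is irreducible as a polynomial in $\C[\xi,z]$, after which the conclusion $g(\mathcal R)=0$ is immediate.

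First I would check irreducibility of $F$. The cleanest route is via monodromy at infinity: the three branches $\xi_1,\xi_2,\xi_3$ in \eqref{asymptotics_xi} have expansions involving $z^{1/2}$ for $\xi_2,\xi_3$, so encircling $z=\infty$ produces the transposition $(\xi_2\,\xi_3)$; meanwhile one can exhibit a finite branch point of the curve (e.g.\ $z_0$ from Theorem~\ref{theorem_singular_points_spectral_curve}) at which $\xi_1$ and one of $\xi_2,\xi_3$ are interchanged. The monodromy group acts transitively on $\{\xi_1,\xi_2,\xi_3\}$, so the three sheets belong to a single irreducible component, i.e.\ $F$ is irreducible. (Alternatively, one may verify irreducibility directly from the coefficient structure of \eqref{spectral_curve} using Gauss's lemma, eliminating a factorization of degrees $1+2$ in $\xi$.)

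Next I would invoke the parametrization. Since $F$ is irreducible and $(h(w^{-1}),h(w))$ is a proper rational parametrization, the function field $\C(\mathcal R)$ equals $\C(w)$, which has transcendence degree one and genus zero. Equivalently, the normalization of the affine curve $\{F=0\}$ together with its points over $\infty$ (determined in Theorem~\ref{theorem_singular_points_spectral_curve}) is biholomorphic to $\overline\C$ via $w\mapsto (h(w^{-1}),h(w))$, so $\mathcal R\cong \CP^1$ and hence $g(\mathcal R)=0$.

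The only nontrivial step is the irreducibility verification, but this is cheap because Theorem~\ref{theorem_singular_points_spectral_curve} already identifies enough branch points of $F$ for a monodromy argument; no further computation is required. I would therefore give the proof in just a few lines, citing \cite{book_rational_curves} for the rationality criterion already used in the proof of Theorem~\ref{theorem_schwarz_function}.
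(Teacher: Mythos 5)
Your proposal takes essentially the same route as the paper: Theorem~\ref{theorem_schwarz_function} supplies the proper rational parametrization $w\mapsto(h(w^{-1}),h(w))$, and rationality immediately forces genus zero. The paper's own proof is in fact just two sentences and says exactly this, citing \cite{book_rational_curves} for the fact that a biholomorphism with $\overline\C$ is inherited from a proper rational parametrization.

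The extra irreducibility discussion you include is not wrong, but note two things. First, it is essentially redundant: for the map to be a \emph{proper} parametrization in the sense the paper defines it (almost every solution pair $(\xi,z)$ is hit exactly once by some $w$), the zero set of $F$ must already be a single irreducible curve of degree $3$ in $\xi$; a parametrization can only cover one irreducible component, so properness subsumes irreducibility and nothing further needs checking. Second, and more importantly, the specific monodromy argument you sketch invokes the branch point $z_0$ from Theorem~\ref{theorem_singular_points_spectral_curve}, and that would be circular in the paper's logical ordering: the proof of Theorem~\ref{theorem_singular_points_spectral_curve} (via Theorem~\ref{theorem_discriminant_spectral_curve} and Lemma~\ref{lemma_zeros_discriminant_small_t_1}) explicitly relies on the genus-zero statement of Corollary~\ref{corollary_genus_0} together with Riemann--Hurwitz. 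If you want an independent irreducibility check, the clean way is to observe that the (projective) zero set of $F$ is the Zariski closure of the image of $\overline\C$ under the regular map $w\mapsto(h(w^{-1}),h(w))$ and is therefore irreducible, or simply to rely on the degree count $\max\{\deg_\xi F,\deg_z F\}=3=\max\{\deg h(w^{-1}),\deg h(w)\}$ already used in the proof of Theorem~\ref{theorem_schwarz_function}, which prevents the parametrization from being proper for any proper subvariety.
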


\begin{proof}
 From Theorem~\ref{theorem_schwarz_function}, we learn the rational parametrization $h$ defines a biholomorphism between $\mathcal R$ and $\overline 
\C$. Since the genus of $\overline \C$ is $0$, the same holds true for $\mathcal R$ 
\cite[pg.~90, Remark~(2)]{book_rational_curves}.
\end{proof}

\begin{proof}[Proof of Theorem~\ref{theorem_rational_parametrization_polynomial_curve}]
 After Corollary~\ref{corollary_h_parametrization}, it only remains to compute the area and harmonic moments \eqref{harmonic_moments}. The computations are very much the same as 
in \cite[pg.~1290]{bleher_kuijlaars_normal_matrix_model}. We include them here for completeness.
 
 Using Green's formula on $\Omega$,
 \begin{equation}\label{equation_area_omega}
 2i\area(\Omega)=2i\iint_\Omega dA(z)=\int_{\partial \Omega} \overline z \; dz.
 \end{equation}
 
 By \eqref{equation_schwarz_function},
 $$
 2i \area(\Omega)=\int_{\partial \Omega} \xi_1(z) \; dz.
 $$
 
We now deform the above integral to $z=\infty$ and use the expansion \eqref{asymptotics_xi} and the Residues Theorem to get that this last integral is equal to 
$2\pi i t_0$.
 
Having in mind the identity
 $$
 \frac{1}{2\pi i}\int_{\partial \Omega}\frac{\overline z}{(z-z_0)^k} \; dz=\frac{1}{2\pi i}\int_{\partial \Omega}\frac{\xi_1(z)}{(z-z_0)^k}\; dz,
 $$ 
the equalities in \eqref{harmonic_moments} follow in a similar fashion.
\end{proof}

\begin{proof}[Proof of Proposition~\ref{proposition_change_of_coordinates}]
 Equation~\eqref{system_change_coordinates_b} follows directly from identity \eqref{definition_a_0}. In fact, it was already used in the proof of 
Theorem~\ref{theorem_monotonicity_r}.

Recall Equation~\ref{equation_area_omega},
$$
2\pi it_0=2i\area(\Omega)=\int_{\partial\Omega}\overline z dz.
$$

We now change coordinates $z=h(w)$ in the integral above, and the formula becomes
\begin{align*}
2\pi i t_0 & = \int_{\partial \D} \overline{h(w)}h'(w)dw \\ 
           & = \int_{\partial\D}h(\overline w)h'(w)dw \\
           & = \int_{\partial \D} h(w^{-1})h'(w) dw
\end{align*}

Expanding the integrand $h(w^{-1})h'(w)$ and using Residues Theorem, we get
$$
2\pi i t_0 =2\pi i \res(h(w^{-1})h'(w),w=0)=2\pi i(-4 a_0^2 r^2-2 r^4+r^2),
$$
which is equivalent to \eqref{system_change_coordinates_a}
\end{proof}

Recall the curve $\gamma_c$ splitting our phase diagram $\mathcal F$ into two parts $\mathcal F_1$, $\mathcal F_2$, see \eqref{critical_curve_limiting_zero_distribution}. 
Lemma~\ref{lemma_zeros_derivative_h} assures us the critical points of $h$ are on the unit disc, but it is important for later to have a better control on the position of these 
points, as it is stated in the next two lemmas.

\begin{lem}\label{lemma_location_zeros_derivative_h}
The zeros $w_0,w_1,w_2$ of the function $h'$ satisfy
 \begin{itemize}
  \item For $(t_0,t_1)\in \mathcal F_1$,
  $$
  w_0\in (0,1),\quad \overline w_2=w_1,\quad w_1\in \D\setminus \R.
  $$
  
  \item For $(t_0,t_1)\in \mathcal F_2$,
  $$
  w_0\in (0,1),\quad -1<w_1<w_2<0.
  $$
  
  \item For $(t_0,t_1)\in \gamma_c$, $h'$ has a simple root $w_0\in (0,1)$ and a double root $w_1=w_2\in (-1,0)$.
  
 \end{itemize} 
 
Furthermore,
\begin{equation}\label{branched_solution_non_vanishing}
 h\left(\frac{1}{w_j}\right)\neq 0,\quad j=0,1,2.
\end{equation}

\end{lem}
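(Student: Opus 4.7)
The plan is to exploit the identity $h'(w)=r\,q(w)/w^3$, where $q(w):=w^3-2a_0w-2r$, so that the non-zero zeros of $h'$ are exactly the roots of the depressed cubic $q$. Its discriminant is
\begin{equation*}
\disc(q)=-4(-2a_0)^3-27(-2r)^2=4(8a_0^3-27r^2).
\end{equation*}
From the parametrisation \eqref{critical_curve_limiting_zero_distribution_a_r_plane} ($r=s^3$, $a_0=\tfrac{3}{2}s^2$) one checks $\disc(q)\equiv 0$ on $\gamma_c$; comparing with Figure~\ref{figure_phase_diagram_a_r_plane}, $\mathcal F_1$ lies below $\gamma_c$ (so $a_0<\tfrac{3}{2}r^{2/3}$ and $\disc(q)<0$) and $\mathcal F_2$ lies above (so $\disc(q)>0$). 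This immediately separates Case (i) (one real plus two non-real conjugate roots), Case (ii) (three distinct real roots), and Case (iii) (one simple plus one double real root).

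To locate the roots I would apply Vieta: $\sum w_j=0$ and $w_0w_1w_2=2r>0$. In Case (i), $w_1=\overline{w_2}$ gives $|w_1|^2>0$, forcing $w_0=2r/|w_1|^2>0$. In Case (ii), three distinct real roots with zero sum and positive product must be one positive and two negative; ordering yields $w_1<w_2<0<w_0$. Case (iii) is handled directly: substituting $r=s^3$, $a_0=\tfrac{3}{2}s^2$ into $q$ yields the explicit factorisation $q(w)=(w-2s)(w+s)^2$, so $w_0=2s$ and $w_1=w_2=-s$. In every case the strict containments $w_0\in(0,1)$ and $w_1,w_2\in\D$ come for free from Lemma~\ref{lemma_zeros_derivative_h}, which already places all critical points of $h$ inside the open unit disc.

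For \eqref{branched_solution_non_vanishing} the plan is a resultant computation. Setting $p(w):=w\,h(1/w)=r^2w^3+2a_0rw^2+a_0w+r$, the claim is $p(w_j)\neq 0$, i.e.\ $q$ and $p$ share no root. One Euclidean step gives
\begin{equation*}
p(w)-r^2q(w)=\widetilde p(w):=2a_0rw^2+a_0(1+2r^2)w+r(1+2r^2),
\end{equation*}
hence $\result(q,p)=\result(q,\widetilde p)=(2a_0r)^3\,q(\beta_1)q(\beta_2)$, where $\beta_1,\beta_2$ are the roots of $\widetilde p$. Using $\widetilde p(\beta)=0$ to reduce $\beta^2$ and then $\beta^3$ modulo $\widetilde p$ and collecting with $C:=1+2r^2$, a routine computation yields
\begin{equation*}
\result(q,p)=r^3\bigl[C^3-4a_0^2C^2+32a_0^3r^2\bigr].
\end{equation*}

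The decisive final step is to invoke the Elbau--Felder identity from Proposition~\ref{proposition_change_of_coordinates}, namely equation \eqref{system_change_coordinates_a}, which rearranges to $1+2r^2-4a_0^2=(t_0+4r^4)/r^2$. Substituting this into the bracket gives
\begin{equation*}
C^3-4a_0^2C^2+32a_0^3r^2=\frac{1}{r^2}\bigl[(1+2r^2)^2(t_0+4r^4)+32\,a_0^3 r^4\bigr],
\end{equation*}
which is a strictly positive sum of non-negative terms on $\mathcal F$, since there $r,t_0>0$ and $a_0\geq 0$ (Theorem~\ref{theorem_monotonicity_r}). Hence $\result(q,p)>0$ and \eqref{branched_solution_non_vanishing} follows. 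The main obstacle is the resultant calculation itself: the bracket has summands of opposite signs and its positivity is opaque until one re-expresses $1+2r^2-4a_0^2$ through \eqref{system_change_coordinates_a}; everything else is Vieta and polynomial-division bookkeeping.
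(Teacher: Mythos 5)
Your proof is fundamentally the same as the paper's: you work with the same auxiliary cubic $q(w)=w^3-2a_0w-2r$ (the paper's $\tilde h$), compute its discriminant $4(8a_0^3-27r^2)$, use the vanishing of this discriminant to identify $\gamma_c$, invoke Lemma~\ref{lemma_zeros_derivative_h} for the containment in $\D$, and establish \eqref{branched_solution_non_vanishing} through a resultant computation that produces exactly the paper's polynomial in $r$ and $a_0$ (up to an overall sign coming from the resultant convention). Your tactical variations are fine: Vieta (zero sum, product $2r>0$) in place of the paper's Descartes' rule for isolating the single positive root, an explicit factorisation $q(w)=(w-2s)(w+s)^2$ on $\gamma_c$, and the rewriting of $C-4a_0^2=1+2r^2-4a_0^2$ via \eqref{system_change_coordinates_a}. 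That last trick is correct but not actually needed: once one expands $C^3-4a_0^2C^2+32a_0^3r^2$ in powers of $r$, the constraint $0\le a_0<1/2$ from Theorem~\ref{theorem_monotonicity_r} makes every coefficient visibly positive, which is the paper's (shorter) argument.

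The one step that needs tightening is the assignment of the sign of $\disc(q)$ to $\mathcal F_1$ versus $\mathcal F_2$. You settle it by inspection of Figure~\ref{figure_phase_diagram_a_r_plane}, but that figure is precisely the content of Theorem~\ref{theorem_change_coordinates_phase_diagram}, whose proof in the paper explicitly cites the discriminant analysis carried out inside the proof of the present lemma. Relying on the figure is therefore circular. The paper closes the loop with a limiting argument: since $\disc(q)$ is continuous and has constant sign on each connected component $\mathcal F_1$, $\mathcal F_2$, one lets $t_1\to 0$ with $t_0$ fixed (so $a_0\to 0$, $r$ bounded away from $0$, hence $\disc<0$ on $\mathcal F_1$) and $t_0\to 0$ with $t_1$ fixed (so $r\to 0$, $a_0$ bounded away from $0$, hence $\disc>0$ on $\mathcal F_2$). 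Substituting that limiting argument, or a single-point evaluation in each region, for the figure reference would make your proof complete.
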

\begin{proof}
The zeros of $h'$ are the same as the zeros of the polynomial $\tilde h$ given by
 \begin{equation}\label{equation_zeros_derivative_h}
 \tilde h(w)=\frac{w^3}{r}h'(w)=w^3-2a_0w-2r
 \end{equation}
 
 For $t_1=0$, $\tilde h$ reduces to $w^3-2r$, which has three simple zeros, only one of them real. The discriminant of $\tilde h$ is given by
 \begin{equation}\label{aux_equation_2}
 \disc(\tilde h;w)=4(8a_0^3-27r^2)
 \end{equation}
 so $h'$ has a zero with multiplicity iff $a_0=\frac{3}{2}r^{2/3}$. Plugging this into \eqref{system_change_coordinates_a}--\eqref{system_change_coordinates_b}, we get that 
$t_0,t_1$ are given by 
\eqref{critical_curve_limiting_zero_distribution} for $s=r^{1/3}$. In particular, this is only possible for $0<r<1/8$. 

Furthermore, $\disc(\tilde h;w)$ does not change sign in each of the sets $\mathcal F_1,\mathcal F_2$. When we keep $t_0$ fixed and send $t_1\to 0$, we know from 
\eqref{definition_a_0} that $a_0\to 0$, whereas $r$ remains positive, so $\disc(\tilde h;w)<0$ on $\mathcal F_1$.  On another hand, if we keep $t_1$ fixed and send 
$t_0\to 0$, it follows from \eqref{system_change_coordinates_a}--\eqref{system_change_coordinates_b} that $a_0$ remains positive, whereas $r\to 0$, and hence $\disc(\tilde h;w)>0$ 
on $\mathcal F_2$. 

In virtue of \eqref{aux_equation_2}, the discussion above means that $\tilde h$ - and thus $h'$ - has exactly one real zero for $(t_0,t_1)\in \mathcal F_1$, and three real zeros 
for $(t_0,t_1)\in \mathcal F_2$.

From Lemma~\ref{lemma_zeros_derivative_h}, we already know that all the zeros of $h'$ belong to $\D$.
Since $a_0\geq 0,r>0$ (see Theorem~\ref{theorem_monotonicity_r}), from Descartes' Rule of sign we learn that $h'$ has exactly 
one positive real zero for $(t_0,t_1)\in \mathcal F$.

It only remains to prove \eqref{branched_solution_non_vanishing}. Suppose to the contrary that $h(w_j^{-1})=0$ for some $j$. This means that the polynomial
$$
\hat h(w)=w^2h(w^{-1})=r^2w^3+2a_0rw^2+a_0w+r
$$
has the common root $w_j$ with the polynomial $\tilde h$ in \eqref{equation_zeros_derivative_h}. But,
$$
\result(\tilde h, \hat h)=-r^3[1-4a_0^2+r^2(6-16a_0^2)+r^4(12-16a_0^2)+8r^6+32a_0^3r^2],
$$
and because $0<r<1/2$ and $0\leq a_0 <1/2$ (see Theorem~\ref{theorem_monotonicity_r}), it is not hard to see that the term between brackets above is always positive, thus 
the resultant above is nonzero and consequently $\tilde h$ and $\hat h$ do not have common roots. The proof is complete.
\end{proof}

\begin{proof}[Proof of Theorem \ref{theorem_change_coordinates_phase_diagram}]
Equation \eqref{cusp_critical_curve_positive_t1} follows directly from \eqref{equality_r_a_0_cusp}.

Equation~\eqref{aux_equation_2} and the arguments thereafter immediately show that in the coordinate system $(r,a_0)$, the critical curve $\gamma_c$ is described as in 
\eqref{critical_curve_limiting_zero_distribution_a_r_plane}. 
\end{proof}

\section{Geometry of the spectral curve. Proof of Theorem \ref{theorem_singular_points_spectral_curve}}\label{section_topology_spectral_curve}

An important role for the analysis of the spectral curve \eqref{spectral_curve} is played by the discriminant
\begin{equation}\label{discriminant_spectral_curve}
\mathcal D(z)=\disc(F(\xi,z);\xi),
\end{equation}

$\mathcal D$ is a polynomial of degree $9$ in $z$. It can be computed with the help of 
Mathematica, but its explicit expression is rather complicated to be dealt with directly. Its first coefficients are
$$
\mathcal D(z)=4z^9-4t_1z^8 + \left(4 B+16 t_1\right)z^7+\cdots,\quad z\in \C,
$$
where $B$ is as in \eqref{equation_B}.

Theorem~\ref{theorem_singular_points_spectral_curve} can be restated in terms of the discriminant $\mathcal D$.

\begin{thm}\label{theorem_discriminant_spectral_curve}
For $(t_0,t_1)\in \mathcal F$ the discriminant $\mathcal D$ in \eqref{discriminant_spectral_curve} has three double zeros $\hat z_0$, 
$\hat z_1$, $\hat z_2 \in \C\setminus \overline \Omega$ satisfying
\begin{equation}\label{properties_nodes}
\hat z_0>0,\quad \im \hat z_1<0,\quad \overline{\hat z}_2=\hat z_1.
\end{equation}

In addition, $\mathcal D$ always has a simple zero $z_0>0$, $z_0\in \Omega$. Its remaining zeros $z_1,z_2$ also belong to $\Omega$ and are located as follows.
\begin{enumerate}[label=(\roman*)]
\item For $(t_0,t_1)\in \mathcal F_1$,
$$
\im z_1<0,\quad \overline{z}_2=z_1.
$$

\item For $(t_0,t_1)\in \mathcal F_2$,
$$
z_2<z_1<z_0.
$$

\item For $(t_0,t_1)\in \gamma_c$,
$$
z_1=z_2<z_0
$$
that is, $\mathcal D$ has a double zero at $z_1=z_2$.
\end{enumerate}
\end{thm}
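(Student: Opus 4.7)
The strategy is to use the rational parametrization $(\xi,z)=(h(w^{-1}),h(w))$ of the spectral curve given by Theorem~\ref{theorem_schwarz_function}, which identifies the Riemann surface $\mathcal R$ with $\overline\C$. Under this identification, a $z$-value where $F(\cdot,z)=0$ has multiple $\xi$-roots arises in only two ways: either (a) the projection $w\mapsto h(w)$ has a critical point at some $w=w_j$, producing a branch point $z_j=h(w_j)$ and a simple zero of $\mathcal D$; or (b) two distinct $w_a\neq w_b$ map to the same $(\xi,z)$, producing a node $\hat z=h(w_a)=h(w_b)$ and a double zero of $\mathcal D$. The count $3\cdot 1+3\cdot 2=9=\deg\mathcal D$ will then exhaust all zeros.

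For the branch points I take $z_j:=h(w_j)$ where $w_0,w_1,w_2$ are the zeros of $h'$ provided by Lemma~\ref{lemma_location_zeros_derivative_h}. Assertions (i)--(iii) on $z_0,z_1,z_2$ (positivity, conjugation, real ordering, coalescence on $\gamma_c$) transfer directly from the known behavior of the $w_j$, using that $h$ has real coefficients and that $h(w)>0$ for $w>0$ (with a monotonicity check of $h$ on $(-1,0)$ in the one-cut case). For the nodes, writing $s=w_a+w_b$ and $p=w_aw_b$ and clearing denominators, the conditions $h(w_a)=h(w_b)$ and $h(1/w_a)=h(1/w_b)$ reduce to
\[
p^2-2a_0p-rs=0 \quad \text{and} \quad 1-2a_0p-rps=0.
\]
Eliminating $s$ yields $(p-1)\bigl(p^2+(1-2a_0)p+1\bigr)=0$. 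The root $p=1$ produces a real positive node $\hat z_0$; by Theorem~\ref{theorem_monotonicity_r} the quadratic factor has two complex-conjugate roots on the unit circle, yielding the conjugate pair $\hat z_1=\overline{\hat z}_2$.

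The inclusions $z_j\in\Omega$ and $\hat z_j\in\C\setminus\overline\Omega$ are obtained by identifying the third preimage under the degree-$3$ covering $h\colon\overline\C\to\overline\C$. For a branch point, the factorization $h(w)-h(w_j)=r(w-w_j)^2(w-w'')/w^2$ together with $w_j^3=2a_0w_j+2r$ yields $w''=-r/w_j^2$; the required bound $|w_j|^2>r$ is immediate for $w_0\in(0,1)$ (from $w_0^3\ge 2r$ and $w_0<1$), and in the three-cut case follows from the Vieta relation $w_0|w_1|^2=w_0w_1w_2=2r$ combined with $w_0<1$. For a node, Vieta applied to $rw^3+(a_0-\hat z)w^2+2a_0rw+r^2=0$ gives the third preimage $w_c=-r/p$, so $|w_c|=r<1$. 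Since $h$ is injective on $\partial\D$ (Theorem~\ref{theorem_rational_parametrization_polynomial_curve}) and $|w_aw_b|=|p|=1$, exactly one of $w_a,w_b$ lies outside $\overline\D$, so the preimage set of $\hat z$ has precisely one element outside $\overline\D$, placing $\hat z\in\C\setminus\overline\Omega$.

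The main obstacle is expected to be the analogous inequality $|w_j|^2>r$ in the one-cut case, where $w_1,w_2\in(-1,0)$ are real and the Vieta signs alone do not force the required bound; here a careful monotonicity analysis of $w^3-2a_0w-2r$ on $(-1,0)$ using Theorem~\ref{theorem_monotonicity_r} is needed to locate the roots and confirm the bound throughout $\mathcal F_2$, degenerating precisely on $\Gamma_c$ where $z_0$ meets $\partial\Omega$. The asserted inequality $\hat z_0>z_0$ will require a direct comparison of $h$ at $w_0$ and at the inverse pair $(w_a,1/w_a)$ determined by $p=1$. Finally, the critical case $(t_0,t_1)\in\gamma_c$ follows from the coalescence $w_1=w_2$: $h$ then has a triple-order ramification at this point, producing a local behavior $\xi(z)-\xi_0\sim(z-z_0)^{1/3}$ at $z_1=z_2$, which contributes the required double zero to $\mathcal D$.
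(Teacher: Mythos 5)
Your approach is a genuinely different route from the paper's. For identifying the nodes, the paper (Lemma~\ref{lemma_nodes}, Corollary~\ref{corollary_nodes}) sets the two partial derivatives of $F$ to zero and factors the difference into $\xi-z=0$ and $3\xi+3z+2z\xi+1+t_0-2t_1=0$, then tracks these through the auxiliary functions $g$ (Lemma~\ref{lemma_zeros_g}) and $f$ (Corollary~\ref{corollary_zeros_f}). Your reduction to the cubic $(p-1)\bigl(p^2+(1-2a_0)p+1\bigr)=0$ via symmetric functions of the two preimages is a cleaner encoding of the same data — it makes manifest the real node at $p=1$ and the conjugate pair, and is consistent with the paper's picture (at $t_1=0$ the quadratic factor is $p^2+p+1$ with roots $\omega,\omega^2$, matching the rotational symmetry). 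Your bounds $|w_c|=r/|p|<1$ for the node preimage and $|w''|=r/|w_j|^2$ for the extra preimage of a branch point are an elementary alternative to the paper's Schwarz-function argument for showing $\hat z_j\notin\overline\Omega$ and $z_j\in\Omega$.

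Two concerns. First, the multiplicity structure: your count $3\cdot 1 + 3\cdot 2 = 9 = \deg\mathcal D$ exhausts the zeros only after you verify that the three branch points and three nodes are pairwise distinct, that each $z_j$ is a simple (order-two) ramification point, and that each $\hat z_j$ is an ordinary double point. The paper secures all of this at once through the genus-$0$ Riemann--Hurwitz count (Corollary~\ref{corollary_genus_0}, Lemma~\ref{lemma_zeros_discriminant_small_t_1}) and a continuity argument from the explicit $t_1=0$ formulas; your degree argument is fine in principle but needs these checks spelled out (geometric placement does most of the work, e.g. $z_j\in\Omega$ vs. $\hat z_j\notin\overline\Omega$). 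Second, your acknowledged gap — the bound $|w_2|^2>r$ in the one-cut case — is real, and worth knowing that it tracks a genuine lacuna in the paper: the paper's own proof, via the Schwarz function, only establishes $z_0,z_1\in\Omega$ in $\mathcal F_2$ (see the last paragraph of the proof and Theorem~\ref{theorem_singular_points_spectral_curve}(ii)); the claim $z_2\in\Omega$ there is left unaddressed, since $z_2$ is a branch point between the $\xi_2$- and $\xi_3$-sheets and not of the Schwarz function $\xi_1$. Also note the degeneration you should watch on $\Gamma_c$ is $w_0\to\partial\D$ (so that $z_0=h(w_0)\to\partial\Omega$ because $|w_0|\to 1$, not because $|w''|\to 1$); your phrasing could suggest otherwise. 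Finally, the inequality $\hat z_0>z_0$ that you flag as a remaining task belongs to Theorem~\ref{theorem_singular_points_spectral_curve}, not to the present statement, so it is not required here.
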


Assuming Theorem~\ref{theorem_discriminant_spectral_curve}, we now prove Theorem~\ref{theorem_singular_points_spectral_curve}. The proof of 
Theorem~\ref{theorem_discriminant_spectral_curve} is given in Section~\ref{subsection_spectral_curve_t_positive}

\begin{proof}[Proof of Theorem~\ref{theorem_singular_points_spectral_curve}] The zeros $z_0,z_1,z_2$ of $\mathcal D$ given by 
Theorem~\ref{theorem_discriminant_spectral_curve} correspond to the branch points given by Theorem~\ref{theorem_singular_points_spectral_curve}. The branch point at $\infty$ 
follows from the asymptotics~\eqref{asymptotics_xi} for the solutions $\xi_1,\xi_2,\xi_3$. In virtue of Lemma~\ref{lemma_nodes} below, the double zeros $\hat z_0,\hat z_1,\hat 
z_2$ of $\mathcal D$ are singular points of \eqref{spectral_curve}. 
\end{proof}

The spectral curve \eqref{spectral_curve} can be seen as a (branched) three-sheeted cover $\mathcal R$ of the Riemann sphere $\overline \C$. The main goal of the rest 
of the present section is to describe the three sheets $\mathcal R_1$, $\mathcal R_2$, $\mathcal R_3$ of $\mathcal R$. As an analytic counterpart, we ultimately prove 
Theorem~\ref{theorem_discriminant_spectral_curve}. During this Section, $t_0$ is always considered to be a fixed parameter, and every deformation is taken with respect to the 
parameter $t_1$.

\subsection{The spectral curve for $t_1=0$}\label{subsection_spectral_curve_t=0}

We briefly discuss the case $t_1=0$ studied by Bleher and Kuijlaars \cite{bleher_kuijlaars_normal_matrix_model}, describing their results in a suitable form for our needs. Besides 
being instructive both to fix notation and keep in mind the main lines of the rest of the section, this particular case is also used later.

For $t_1=0$ the quantities $r,a_0,A,B$ appearing in Proposition \ref{proposition_definition_r}, \eqref{definition_a_0}, \eqref{equation_B} and \eqref{equation_A}, respectively, 
are explicitly given by
\begin{align}
 & r=\frac{\sqrt{1-\sqrt{1-8t_0} }}{2},\label{r_for_t=0}\\
 & a_0=0,  \nonumber \\
 & A=\frac{1+20t_0-8t_0^2-(1-8t_0)^{3/2}}{32},\label{A_for_t=0}\\
 & B=0. \nonumber
\end{align}

The spectral curve \eqref{spectral_curve} is invariant under rotation $(\xi,z)\mapsto (\omega^2 \xi,\omega z)$, $\omega=e^{2\pi i/3}$, and this symmetry is carried over to 
all related quantities. For instance, in this situation it is easy to see that $h(\omega w)=\omega h(w)$.

The discriminant $\mathcal D$ in \eqref{discriminant_spectral_curve} is a cubic polynomial in $z^3$, thus reflecting the aforementioned three-fold rotational symmetry. It has a 
simple real zero $z_0$ and a double real zero $\hat z_0>z_0$ given by
\begin{equation}\label{branchpoints_nodes_t_1=0-1}
z_0=\frac{3}{4}\left(1-\sqrt{1-8t_0}\right)^{2/3},\quad \hat z_0=\frac{3+\sqrt{1-8t_0}}{4},
\end{equation}
and the remaining zeros are
\begin{equation}\label{branchpoints_nodes_t_1=0-2}
z_j=\omega^{-j} z_0,\quad \hat z_j=\omega^{-j} \hat z_0,\quad j=1,2.
\end{equation}

In terms of the rational parametrization $h$, the branch points $z_0,z_1,z_2$ are obtained as
\begin{equation}\label{zeros_h'_t_1=0}
z_j=h(w_j),\quad j=0,1,2,
\end{equation}
where $w_0\in (0,1),w_j=\omega^{2j}w_0$, $j=1,2$, are the zeros of $h'$ as in Lemma~\ref{lemma_location_zeros_derivative_h}.

Regarding Theorem~\ref{theorem_limiting_support_zeros}, we are always in the three-cut situation and
\begin{equation}\label{star_t=0}
\Sigma_{*}=\Sigma_{*,0}\cup\Sigma_{*,1}\cup\Sigma_{*,2},\quad \Sigma_{*,j}=[0,z_j],\quad j=0,1,2.
\end{equation}

For 
\begin{equation}\label{sheet_structure_t=0_no_symmetry}
\mathcal R_1=\overline\C \setminus \Sigma_*,\quad \mathcal R_2=\overline \C \setminus ([-\infty,0]\cup \Sigma_{*,1}\cup \Sigma_{*,2}),\quad \mathcal R_3=\overline 
\C\setminus [-\infty,z_0],
\end{equation}
the Riemann Surface $\mathcal R$ is the resulting surface after gluing $\mathcal R_1$ to $\mathcal R_2$ along \newline ${\Sigma_{*,1}\cup\Sigma_{*,2}}$, $\mathcal R_1$ to 
$\mathcal R_3$ along $\Sigma_{*,0}$ and $\mathcal R_2$ to $\mathcal R_3$ along $[-\infty,0]$, always in the usual crosswise manner. 

Each function $\xi_j$ in \eqref{asymptotics_xi} has an analytic continuation to the whole sheet $\mathcal R_j$, and for the values of $z_0,z_1,z_2$ as above, they satisfy
\begin{align}
&\xi_1(z_0)=\frac{r^{2/3} \left(2 r^2+1\right)}{2^{1/3}}=\xi_3(z_0), \qquad \xi_1(z_j)=\omega^{-j}\xi_1(z_0)=\xi_2(z_j),\quad j=1,2, \label{branchpoints_nodes_t_1=0-3}\\
&\xi_1(\hat z_0)= \hat z_0 = \xi_3(\hat z_0), \qquad \xi_1(\hat z_j)=\omega^{-j}\hat z_j=\xi_2(\hat z_j),\quad j=1,2.   \label{branchpoints_nodes_t_1=0-4}
\end{align}

Careful readers may notice this sheet structure differs from the one given in \cite{bleher_kuijlaars_normal_matrix_model}, where the authors construct the sheets respecting the 
underlying three-fold symmetry. But for us it is more convenient to do it this way, because for $t_1\neq 0$ the symmetry is unavoidably broken. 

The preimage of a point $z\in \overline \C$ through the canonical projection $\pi:\mathcal R\to \overline \C$ on the sheet $\mathcal R_j$ is denoted by $z^{(j)}$, $j=1,2,3$. 
Equivalently, a point $z^{(j)}\in \mathcal R_j$ can be seen as the pair $(\xi_j(z),z)$ - we use both representations without further explanation. 

At the branch points of $\mathcal R$, two of the preimages coincide. More precisely, $\mathcal R$ is branched at the points
$$
z_0^{(1)}=z_0^{(3)},\quad z_1^{(1)}=z_1^{(2)},\quad z_2^{(1)}=z_2^{(2)},\quad \infty^{(2)}=\infty^{(3)},
$$
and for $z\in\C\setminus \{z_0,z_1,z_2\}$ the points $z^{(1)},z^{(2)},z^{(3)}$ are all distinct.

\subsection{The spectral curve for $t_1>0$. Proof of Theorem \ref{theorem_discriminant_spectral_curve}}\label{subsection_spectral_curve_t_positive}

We now focus on the case when $t_1$ is positive.

Consider the system of equations
\begin{align}
&F(\xi,z)= 0, \label{system_nodes_1a}\\
&\frac{\partial F}{\partial \xi}(\xi,z)= 3\xi^2-2z^2\xi-2t_1\xi-(1+t_0)z+B+t_1=0,\label{system_nodes_1b}\\
&\frac{\partial F}{\partial z}(\xi,z)= 3z^2-2z\xi^2-2t_1 z-(1+t_0)\xi+B+t_1=0. \label{system_nodes_1c} 
\end{align}

\begin{lem}\label{lemma_nodes}
 If $z\in \C$ is a zero of $\mathcal D$ but the point $z^{(j)}\in \mathcal R$ is not a branch point of $\mathcal R$, then $(\xi_j(z),z)$ satisfies the system 
\eqref{system_nodes_1a}--\eqref{system_nodes_1c}.
\end{lem}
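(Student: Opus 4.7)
The plan is to combine the algebraic fact that the discriminant condition gives a repeated root of $F(\cdot,z)$ with the analytic consequence that $\xi_j$ is holomorphic through $z$. The first equation \eqref{system_nodes_1a} is just the defining property of $\xi_j(z)$ as a root of $F(\cdot,z)$, so the substantive work is to establish \eqref{system_nodes_1b} and \eqref{system_nodes_1c}.

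On the algebraic side, $\mathcal{D}(z)=\disc(F(\xi,z);\xi)=0$ is equivalent to the statement that the cubic $F(\cdot,z)$ has at least one repeated root $\xi^{*}$, so both $F(\xi^{*},z)=0$ and $\frac{\partial F}{\partial \xi}(\xi^{*},z)=0$ hold. On the analytic side, since $z^{(j)}$ is not a branch point of $\mathcal{R}$, the function $\xi_j$ extends holomorphically to a disc $U$ around $z$ with $F(\xi_j(w),w)\equiv 0$ for all $w\in U$.

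The next step is to identify $\xi_j(z)$ with the repeated root $\xi^{*}$. In the intended application — the double zeros $\hat z_0,\hat z_1,\hat z_2$ of $\mathcal{D}$, at which by Theorem \ref{theorem_discriminant_spectral_curve} the surface $\mathcal R$ is to carry no ramification — the double root $\xi^{*}$ of $F(\cdot,\hat z_l)$ must be the common value of two distinct but unramified sheets (so that the normalization separates the node into two smooth points of $\mathcal{R}$), and the labeling of sheets is fixed so that for the indices $j$ participating in the node one has $\xi_j(\hat z_l)=\xi^{*}$. This is the standard algebraic-geometric picture of a node: a smooth preimage on the normalization whose image on the plane curve is a self-crossing. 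With this identification \eqref{system_nodes_1b} is immediate, since $\frac{\partial F}{\partial \xi}(\xi^{*},z)=0$ is exactly the statement that $\xi^{*}$ is a repeated root.

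For \eqref{system_nodes_1c}, I would differentiate the identity $F(\xi_j(w),w)\equiv 0$ on $U$ with respect to $w$, obtaining
\begin{equation*}
\frac{\partial F}{\partial \xi}(\xi_j(w),w)\,\xi_j'(w)+\frac{\partial F}{\partial z}(\xi_j(w),w)=0,
\end{equation*}
and evaluate at $w=z$; since \eqref{system_nodes_1b} makes the first term vanish, we conclude $\frac{\partial F}{\partial z}(\xi_j(z),z)=0$. The main technical point is the identification $\xi_j(z)=\xi^{*}$, which really encodes the principle that an unramified preimage of a discriminant zero must participate in a node singularity of the affine curve; once this is in place, the rest of the proof collapses into the single implicit-differentiation line above.
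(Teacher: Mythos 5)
Your proof follows essentially the same route as the paper's: both obtain \eqref{system_nodes_1c} from \eqref{system_nodes_1b} by a single implicit differentiation of the defining identity, with \eqref{system_nodes_1b} coming from the discriminant condition. The only cosmetic difference is that the paper differentiates the parametrized identity $F(h(w^{-1}),h(w))\equiv 0$ in $w$ and invokes $h'(w)\neq 0$ as the expression of the non-branch-point hypothesis, whereas you differentiate $F(\xi_j(w),w)\equiv 0$ directly using holomorphy of $\xi_j$ near $z$; the identification of $\xi_j(z)$ with the repeated root, which you rightly flag as the technical crux, is asserted with the same brevity in the paper.
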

\begin{proof}
 For a generic point $(\xi,z)=(h(w^{-1}),h(w))$ satisfying \eqref{system_nodes_1a} it is true
$$
-\frac{h'(w^{-1})}{w^2}\pder{F}{\xi} (h(w^{-1}),h(w))+h'(w)\pder{F}{z}(h(w^{-1}),h(w))=0.
$$

If $z^{(j)}$ is a zero of the discriminant $\mathcal D$, then $(\xi_j(z),z)$ satisfies \eqref{system_nodes_1b}, hence from the previous equation
$$
h'(w)\pder{F}{z}(h(w^{-1}),h(w))=0.
$$

Since $z^{(j)}$ is not a branch point, $h'(w)\neq 0$, and the Lemma follows.
\end{proof}

Subtracting Equation~\eqref{system_nodes_1c} from Equation~\eqref{system_nodes_1b}, we get
$$
(\xi-z)(3\xi+3z+ 2z\xi+1+t_0-2t_1)=0,
$$
and as a corollary of Lemma~\ref{lemma_nodes}

\begin{cor}\label{corollary_nodes}
 If $z\in \C$ is a zero of $\mathcal D$ but the point $z^{(j)}\in \mathcal R$ is not a branch point of $\mathcal R$, then the pair $(\xi_j(z),z)$ satisfies at least one of the 
equations below,
\begin{align}
& \xi-z=0, \label{equations_singular_points_1}\\
& 3\xi+3z+ 2z\xi+1+t_0-2t_1=0. \label{equations_singular_points_2}
\end{align}
\end{cor}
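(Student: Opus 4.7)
The plan is very short: apply Lemma~\ref{lemma_nodes} to obtain that the pair $(\xi_j(z),z)$ satisfies the full system \eqref{system_nodes_1a}--\eqref{system_nodes_1c}, and then exploit the $\xi \leftrightarrow z$ symmetry of $F$ (visible in \eqref{spectral_curve}) to factor the difference $\partial_\xi F - \partial_z F$.

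More concretely, since the spectral curve $F(\xi,z)$ is symmetric in its two variables, the vanishing locus of $\partial_\xi F - \partial_z F$ must be invariant under the involution $(\xi,z)\mapsto (z,\xi)$ and must contain the diagonal $\{\xi=z\}$. So I expect $\partial_\xi F - \partial_z F$ to be divisible by $(\xi-z)$, and the quotient to be a symmetric polynomial of low degree. Performing the subtraction using \eqref{system_nodes_1b}--\eqref{system_nodes_1c} and collecting terms, I would group
\[
3(\xi^2-z^2) + 2z\xi(\xi-z) + (1+t_0-2t_1)(\xi-z)
\]
and pull out the common factor $(\xi-z)$, leaving $3(\xi+z)+2z\xi+1+t_0-2t_1$. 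Since by Lemma~\ref{lemma_nodes} this difference vanishes at $(\xi_j(z),z)$, at least one of the two factors must be zero, which is precisely the dichotomy \eqref{equations_singular_points_1}--\eqref{equations_singular_points_2}.

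There is no real obstacle here: the whole content is the algebraic factorization enabled by the symmetry of $F$, plus a citation of Lemma~\ref{lemma_nodes}. The only thing to double-check is that the subtraction is done correctly and that no spurious terms involving $B$ or constants survive (the $B+t_1$ terms in \eqref{system_nodes_1b} and \eqref{system_nodes_1c} cancel each other, which is what makes the factorization clean).
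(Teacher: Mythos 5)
Your proposal is correct and follows exactly the paper's argument: the paper likewise invokes Lemma~\ref{lemma_nodes} and then subtracts \eqref{system_nodes_1c} from \eqref{system_nodes_1b}, observing that the $B+t_1$ terms cancel and the difference factors as $(\xi-z)(3\xi+3z+2z\xi+1+t_0-2t_1)$. The symmetry heuristic you add (antisymmetry of $\partial_\xi F-\partial_z F$ forcing divisibility by $\xi-z$) is a nice sanity check but not logically needed once the explicit computation is done.
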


Corollary~\ref{lemma_nodes} gives us an analytic tool for studying the dynamics of the singular points $\hat z_0,\hat z_1, \hat z_2$ when deforming $t_1$, namely through the 
system \eqref{equations_singular_points_1}--\eqref{equations_singular_points_2}. When $t_1=0$, $(\xi_1(\hat z_0),\hat z_0),(\xi_2(\hat z_0),\hat z_0)$ satisfy 
\eqref{equations_singular_points_1}, whereas $(\xi_1(\hat z_j),\hat z_j),(\xi_3(\hat z_j),\hat z_j)$, $j=1,2$, satisfy \eqref{equations_singular_points_2}, as can be verified by 
simply plugging the values \eqref{branchpoints_nodes_t_1=0-1},\eqref{branchpoints_nodes_t_1=0-2},\eqref{branchpoints_nodes_t_1=0-4} into 
\eqref{equations_singular_points_1}--\eqref{equations_singular_points_2}. 

Writing $(\xi,z)=(h(w^{-1}),h(w))$, equations~\eqref{equations_singular_points_1}--\eqref{equations_singular_points_2} respectively become
\begin{align}
& \left(w-\frac{1}{w}\right)g(w)=0, \label{equations_singular_points_w_1}\\
& f(w)=0, \label{equations_singular_points_w_2}
\end{align}
where
\begin{align}
 g(w) & = 1-2a_0-r\left(w+\frac{1}{w}\right), \label{equation_function_g} \\
 f(w) & = \widetilde f\left(w+\frac{1}{w}\right), \label{equation_polynomial_f_original}\\
 \widetilde f(w) & =  2r^3 w^3+3r^2(1+2a_0)w^2+r(3+8a_0+4a_0^2+4a_0r^2-6r^2)w \nonumber \\
      &  \quad + 6a_0+2a_0^2-4r^2-12a_0r^2+8a_0^2r^2+2r^4+1+t_0-2t_1. \label{equation_polynomial_f} 
\end{align}

\begin{lem}\label{lemma_zeros_g}
For $(t_0,t_1)\in \mathcal F$, the function $g$ in \eqref{equation_function_g} has exactly one zero $\hat w_0$ on $(1,+\infty)$ and exactly one zero $\hat w_0^{-1}$ on $(0,1)$.
\end{lem}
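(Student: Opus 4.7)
The plan is to reduce $g(w)=0$ to a quadratic equation and exploit the manifest symmetry $w \mapsto 1/w$ of $g$. Multiplying through by $w/r$, the equation $g(w)=0$ becomes
\begin{equation*}
w^2 - \frac{1-2a_0}{r}\, w + 1 = 0.
\end{equation*}
Since $r>0$, this is a genuine quadratic whose two roots have product equal to $1$; hence the roots come in a reciprocal pair $\{\hat w_0, \hat w_0^{-1}\}$, and it suffices to show that one of them is real and strictly larger than $1$.

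The key ingredient is the already established inequality $0 < 2a_0 + 2r < 1$ from Theorem \ref{theorem_monotonicity_r} (with strict inequality on the interior of $\mathcal F$). Rewritten as $1 - 2a_0 > 2r > 0$, this immediately yields two things. First, the sum of the roots $(1-2a_0)/r$ is strictly greater than $2$. Second, the discriminant factors as
\begin{equation*}
(1-2a_0)^2 - 4r^2 = \bigl((1-2a_0) - 2r\bigr)\bigl((1-2a_0)+2r\bigr) > 0,
\end{equation*}
so both roots are real. Combined with product equal to $1$, realness and positive sum force both roots to be positive.

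Finally, with two positive reciprocal roots $\hat w_0, \hat w_0^{-1}$ whose sum exceeds $2$, the AM--GM inequality rules out $\hat w_0 = \hat w_0^{-1} = 1$, so exactly one of them lies in $(1,+\infty)$ and the other, its reciprocal, lies in $(0,1)$. There is no genuine obstacle here beyond invoking the sharp inequality $2a_0+2r < 1$ from Theorem \ref{theorem_monotonicity_r}; everything else is a one-line computation with the quadratic. The only mild point to keep in mind is that strict inequality (not just $\leq$) is needed both for the discriminant to be positive and for the roots to be distinct from $1$, which is precisely what is guaranteed on $\mathcal F$ away from the critical curve $\Gamma_c$ where equality in \eqref{equality_r_a_0_cusp} holds.
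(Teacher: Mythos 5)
Your proof is correct and follows essentially the same route as the paper: both reduce the equation $g(w)=0$ to $w + w^{-1} = (1-2a_0)/r$, invoke the inequality $0<2a_0+2r<1$ from \eqref{inequality_r_a_0_cusp} to get $(1-2a_0)/r>2$, and exploit the reciprocal symmetry $w\leftrightarrow 1/w$. The paper phrases it as the bijectivity of $w\mapsto w+w^{-1}$ from $(1,\infty)$ to $(2,\infty)$, whereas you spell out the Vieta/discriminant argument for the equivalent quadratic; these are the same idea in slightly different clothing.
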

\begin{proof}
We notice that $w\mapsto w+1/w$ is a bijection from $(1,+\infty)$ to $(2,+\infty)$. Since $(1-2a_0)/r>2$ (see \eqref{inequality_r_a_0_cusp}), the equation
$$
w+\frac{1}{w}=\frac{1-2a_0}{r}
$$
has precisely one solution $\hat w_0$ on $(1,+\infty)$.
\end{proof}

For $t_1=0$, $g$ simplifies to
$$
g(w)=1-r\left(w+\frac{1}{w}\right),
$$
so the point $\hat w_0$ and its images $h(\hat w_0),h(\hat w_0^{-1})$ in this case are given by
\begin{equation}\label{aux_equation_3}
\hat w_0=\frac{\sqrt{1-4 r^2}+1}{2 r},\quad h(\hat w_0^{-1})=h(\hat w_0)=1-r^2=\hat z_0,
\end{equation}
where $\hat z_0$ is given in \eqref{branchpoints_nodes_t_1=0-1} and we used the explicit expression for $r$ in \eqref{r_for_t=0}.

\begin{lem}\label{lemma_zeros_tilde_f}
 For $(t_0,t_1)\in \mathcal F$, the polynomial $\widetilde f$ in \eqref{equation_polynomial_f} has exactly one root on $(-\infty,0)$ and no other real roots.
\end{lem}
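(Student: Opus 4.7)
\textbf{Plan for Lemma~\ref{lemma_zeros_tilde_f}.} The approach is to show that $\widetilde f$ is strictly increasing on $\R$ and takes a positive value at $w=0$; since $\widetilde f$ is cubic with positive leading coefficient, this forces exactly one real root, which must then lie in $(-\infty,0)$. First I would compute the constant term, using the relations \eqref{system_change_coordinates_a}--\eqref{system_change_coordinates_b} of Proposition~\ref{proposition_change_of_coordinates} to eliminate $t_0$ and $t_1$. A direct substitution collapses the constant in \eqref{equation_polynomial_f} to
\[
\widetilde f(0) \;=\; (1+2a_0)^2 - r^2\bigl(3+4a_0-4a_0^2\bigr).
\]
On $\mathcal F$, Theorem~\ref{theorem_monotonicity_r} yields $r^2<1/4$ and $a_0\in[0,1/2)$, so $3+4a_0-4a_0^2=3+4a_0(1-a_0)>0$ and one bounds
\[
\widetilde f(0) \;>\; (1+2a_0)^2 - \tfrac14\bigl(3+4a_0-4a_0^2\bigr) \;=\; \tfrac14 + 3a_0+5a_0^2 \;>\; 0.
\]

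Next I would show $\widetilde f'(w)>0$ for every $w\in\R$. Computing the discriminant of the quadratic $\widetilde f'$ with respect to $w$ and simplifying, one obtains the clean factorization
\[
\operatorname{disc}(\widetilde f';w) \;=\; 12\,r^4\,(3-2a_0)\bigl(4r^2-1-2a_0\bigr).
\]
Because $a_0<1/2$ gives $3-2a_0>0$, while $r^2<1/4$ and $a_0\geq 0$ give $4r^2-1-2a_0<0$, this discriminant is strictly negative on $\mathcal F$. Combined with the positive leading coefficient $6r^3$ of $\widetilde f'$ and the positive value $\widetilde f'(0)=r(3+8a_0+4a_0^2+4a_0r^2-6r^2)>0$ (using $6r^2<3/2$), this implies $\widetilde f'>0$ everywhere on $\R$.

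Combining the two steps, $\widetilde f$ is strictly increasing on $\R$, with $\widetilde f(0)>0$ and $\widetilde f(w)\to -\infty$ as $w\to-\infty$, so it has exactly one real root and this root lies in $(-\infty,0)$. The only real obstacle is the bookkeeping in the two algebraic simplifications: the collapse of the constant term via \eqref{system_change_coordinates_a}--\eqref{system_change_coordinates_b}, and the factorization of the discriminant of $\widetilde f'$ into a product of two factors whose signs are transparent. Both hinge crucially on the sharp bounds $r<1/2$ and $a_0<1/2$ furnished by Theorem~\ref{theorem_monotonicity_r}; without them neither positivity nor monotonicity would be evident from the raw expression \eqref{equation_polynomial_f}.
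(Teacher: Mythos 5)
Your proof is correct, and the overall structure (show $\widetilde f'$ has negative discriminant so $\widetilde f$ is strictly monotone, then show $\widetilde f(0)>0$) is the same as the paper's. Where you genuinely diverge is in handling the constant term. The paper bounds $\widetilde f(0)\geq 1+2r^4-4r^2+t_0-2t_1$ by dropping the manifestly nonnegative $a_0$-terms, then argues this remainder is decreasing in $t_1$ (so it is minimized on $\Gamma_c$), and finally uses the explicit parametrization of $\Gamma_c$ to check positivity there; this requires the monotonicity result $\partial r/\partial t_1>0$ and a separate polynomial estimate on $\Gamma_c$. You instead substitute $t_0=r^2-4a_0^2r^2-2r^4$ and $t_1=a_0-4r^2a_0-a_0^2$ from Proposition~\ref{proposition_change_of_coordinates}, which collapses the constant to the closed form $(1+2a_0)^2-r^2(3+4a_0-4a_0^2)$; the positivity then drops out of the elementary bounds $r^2<1/4$, $0\le a_0<1/2$ in one line. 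Your factorization $\disc(\widetilde f';w)=12r^4(3-2a_0)(4r^2-1-2a_0)$ also makes the sign analysis cleaner than the paper's unfactored form (which incidentally has a harmless typo, $12r^2$ in place of $12r^4$). Both arguments are sound; yours is shorter and avoids the deformation step entirely.
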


\begin{proof}
 The function $(\widetilde f)'$ is a polynomial of degree $2$, whose discriminant is given by
$$
\disc((\widetilde f)')=12r^2\left(12 r^2-3-4 a_0 \left(1+2 r^2-a_0\right)\right).
$$
 
Using the upper bound $r<1/2$ given by Theorem~\ref{theorem_monotonicity_r} and the inequality \eqref{inequality_r_a_0_cusp}, we get
$$
12r^2-3<0, \quad 1+2r^2-a_0>1-a_0\geq 2r+a_0>0,
$$
so $\disc((\widetilde f)')<0$ and hence $(\widetilde f)'$ has no real zeros. This implies that $\widetilde f$ has exactly one real root.

Moreover,
\begin{align}
\widetilde f(0) & = a_0^2 \left(8 r^2+2\right)+a_0 \left(6-12 r^2\right)+1+2 r^4-4 r^2+t_0-2 t_1 \nonumber \\
     & \geq 1+2 r^4-4 r^2+t_0-2 t_1, \label{aux_equation_1}
\end{align}
where we used $a_0\geq 0$ and $0<r<1/2$ (see Theorem~\ref{theorem_monotonicity_r}). The derivative of the right-hand side is 
given 
by
$$
\pder{}{t_1}(1+2 r^4-4 r^2+t_0-2 t_1)=-8r(1-r^2)\pder{r}{t_1}-2<0,
$$
where again we used Theorem~\ref{theorem_monotonicity_r}. This implies the minimum of the right-hand side in \eqref{aux_equation_1} is attained when $(t_0,t_1)\in \Gamma_c$, hence 
using \eqref{definition_critical_curve_phase_transition} with parameter $s=r$ given by Theorem~\ref{theorem_monotonicity_r},
$$
1+2 r^4-4 r^2+t_0-2 t_1>-4 r^4-4 r^3+2 r^2+\frac{1}{2}>0,\quad 0<r<\frac{1}{2}.
$$

Returning this conclusion back to \eqref{aux_equation_1}, we get $\widetilde f(0)>0$, so the only real root of $\widetilde f$ needs to be negative.
\end{proof}

As a consequence,
\begin{cor}\label{corollary_zeros_f}
 For $(t_0,t_1)\in \mathcal F$, the function $f$ in \eqref{equation_polynomial_f_original} has two simple zeros $\hat w_1$, $\hat w_2=\overline{\hat w_1}$, $\im \hat w_1<0$, on 
$\D\setminus \R$, two simple zeros $\hat w_1^{-1},\hat w_2^{-1}$ on $\C\setminus(\D\cup \R)$ and two zeros on $(-\infty,0)\cup \partial \D$, the latter not necessarily distinct.
\end{cor}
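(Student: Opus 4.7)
The plan is to exploit the substitution $\zeta = w + 1/w$ built into the definition $f(w) = \widetilde f(w + 1/w)$ and reduce everything to the information about the cubic $\widetilde f$ furnished by Lemma~\ref{lemma_zeros_tilde_f}. First I would observe that $w^{3}f(w)$ is a polynomial of degree $6$ (since $w^{3}(w+1/w)^{3} = (w^{2}+1)^{3}$), so $f$ has exactly six zeros in $\C\setminus\{0\}$ counted with multiplicity. The map $J \colon w \mapsto w + 1/w$ is a two-to-one branched cover of $\overline \C$ that identifies $w$ with $1/w$, with critical values $\pm 2$ attained at the critical points $\pm 1$. Its fiber over $\zeta$ consists of two distinct points $w,\, 1/w$ with $|w|<1<|1/w|$ when $\zeta \in \C \setminus [-2,2]$; two distinct complex-conjugate points on $\partial \D$ when $\zeta \in (-2,2)$; and the double point $\pm 1$ when $\zeta = \pm 2$.

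Next I would invoke Lemma~\ref{lemma_zeros_tilde_f}: $\widetilde f$ has exactly one real root $\zeta_{0} \in (-\infty,0)$ and, since $\widetilde f$ has real coefficients (see \eqref{equation_polynomial_f}), two complex-conjugate non-real roots $\zeta_{1},\, \overline{\zeta_{1}}$. (All three roots are automatically simple: a real cubic with two non-real roots cannot have a repeated root, since a repeated non-real root would produce four roots with multiplicity, while a repeated real root would force the other two to be real as well.) For the real root $\zeta_{0}$ I distinguish two subcases: if $\zeta_{0} \le -2$, the preimages under $J$ are negative reals in $(-\infty, 0)$; if $-2 \le \zeta_{0} < 0$, they are complex-conjugate points on $\partial \D$. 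In either case the two preimages of $\zeta_{0}$ lie in $(-\infty, 0) \cup \partial \D$, and they may coincide only when $\zeta_{0} = -2$, which accounts for the ``not necessarily distinct'' clause.

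For the non-real root $\zeta_{1}$, the inclusion $\zeta_{1} \notin \R \supset [-2,2]$ implies that its two preimages under $J$ avoid $\partial \D \cup \{\pm 1\}$, hence one lies in $\D$ and the other in $\C\setminus \overline \D$, and both are simple since $J$ is a local diffeomorphism off $\{\pm 1\}$. I define $\hat w_{1}$ to be the preimage of $\zeta_{1}$ inside $\D$ and note $\hat w_{1} \notin \R$, because a real $w$ with $|w|<1$ would give $w + 1/w \in \R$. The other preimage of $\zeta_{1}$ is $1/\hat w_{1} \in \C \setminus (\overline \D \cup \R)$. Using $J(\overline w) = \overline{J(w)}$ and the reality of the coefficients of $\widetilde f$, the two preimages of $\overline{\zeta_{1}}$ are $\overline{\hat w_{1}}$ (again inside $\D$) and $1/\overline{\hat w_{1}}$; setting $\hat w_{2} := \overline{\hat w_{1}}$ and choosing the labeling so that $\im \hat w_{1} < 0$ gives the required configuration. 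The argument is entirely elementary; the only subtlety is the bookkeeping ensuring all six zeros of $f$ are accounted for with correct multiplicities, which is automatic from the fact that $J$ has its only critical points at $\pm 1$ and that the three roots of $\widetilde f$ are simple.
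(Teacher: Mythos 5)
Your proposal is correct and follows essentially the same route as the paper, which simply cites Lemma~\ref{lemma_zeros_tilde_f} together with the $2$-to-$1$ correspondence $w\mapsto w+w^{-1}$ and omits the details. You have spelled out precisely the bookkeeping the paper leaves implicit (degree count for $w^3 f(w)$, the fiber structure of the Joukowski map, simplicity of the roots of $\widetilde f$), and all of it checks out.
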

\begin{proof}
 The Corollary is a direct consequence of Lemma~\ref{lemma_zeros_tilde_f} and the $2$-to-$1$ correspondence between the zeros of $f$ and the zeros of $\tilde f$ induced by the 
map 
$w\mapsto w+w^{-1}$.
\end{proof}

\begin{cor}
 The functions $f$, $g$ given by \eqref{equation_function_g}--\eqref{equation_polynomial_f} do not have a common root.
\end{cor}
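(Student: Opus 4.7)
The plan is to exploit the fact that both $f$ and $g$ depend on $w$ only through the combination $u=w+w^{-1}$. Indeed, by \eqref{equation_function_g} and \eqref{equation_polynomial_f_original},
\begin{equation*}
g(w)=1-2a_0-r u, \qquad f(w)=\widetilde f(u),\quad u=w+\frac{1}{w},
\end{equation*}
and the linear factor $1-2a_0-ru$ has a single zero, namely $u_0=(1-2a_0)/r$. Consequently $f$ and $g$ share a root if and only if $u_0$ is a root of the cubic $\widetilde f$. So the whole matter reduces to showing that $\widetilde f(u_0)\neq 0$ for all $(t_0,t_1)\in \mathcal F$.

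First I would substitute $u=u_0$ in the explicit formula \eqref{equation_polynomial_f} for $\widetilde f$. The powers of $r$ in the three leading monomials cancel precisely against the powers of $r$ in the denominator of $u_0$, producing a polynomial expression in $a_0$ and $r^2$ only, plus the contribution $t_0-2t_1$ coming from the constant term of $\widetilde f$. At this stage I would use Proposition~\ref{proposition_change_of_coordinates} to replace $t_0$ and $t_1$ by their closed forms in $(r,a_0)$, i.e.\ $t_0=r^2(1-4a_0^2)-2r^4$ and $t_1=a_0(1-4r^2)-a_0^2$. The remaining expression is a polynomial purely in $(r,a_0)$, and collecting terms by degree in $r^2$ I expect it to factor as
\begin{equation*}
\widetilde f(u_0)=(3-2a_0)^2(1-r^2).
\end{equation*}

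Once this factorization is obtained, the conclusion is immediate: by Theorem~\ref{theorem_monotonicity_r} we have $0<r<1/2$ and $0\leq a_0<1/2$ throughout $\mathcal F$, so both factors $(3-2a_0)^2$ and $(1-r^2)$ are strictly positive. Therefore $\widetilde f(u_0)>0$, which rules out any common zero of $f$ and $g$.

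The only step that carries some risk is the algebraic simplification leading to the factored form of $\widetilde f(u_0)$; the computation is routine but lengthy, and I would double-check it by evaluating at a convenient point such as $(t_0,t_1)$ with $t_1=0$ (where $a_0=0$ and the identity reduces to $\widetilde f(1/r)=9(1-r^2)$) to make sure no algebraic slip has occurred. No technical obstacle beyond bookkeeping is expected.
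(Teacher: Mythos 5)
Your proposal is correct, and the computation checks out: substituting $u_0=(1-2a_0)/r$ into $\widetilde f$ and using $t_0=r^2(1-4a_0^2)-2r^4$, $t_1=a_0(1-4r^2)-a_0^2$, the terms with no $r^2$ collapse (with $x=2a_0$) to $9-6x+x^2=(3-2a_0)^2$, while the $r^2$-terms contribute $-r^2(3-2a_0)(2p+q)=-r^2(3-2a_0)^2$, giving
\begin{equation*}
\widetilde f(u_0)=(3-2a_0)^2(1-r^2)>0
\end{equation*}
on $\mathcal F$ since $0<r<1/2$ and $0\le a_0<1/2$ by Theorem~\ref{theorem_monotonicity_r}. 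However, this is a genuinely different route from the paper's. The paper derives the corollary immediately from Lemma~\ref{lemma_zeros_g} and Corollary~\ref{corollary_zeros_f}: the zeros of $g$ sit on $(0,1)\cup(1,+\infty)$, whereas all zeros of $f$ are either non-real or lie on $(-\infty,0)\cup\partial\D$ — disjoint regions, hence no common root. The paper's argument is essentially free once those location results are established (and they are needed elsewhere anyway), but it is indirect. Your argument is self-contained and purely algebraic: it sidesteps the root-location analysis entirely, needing only the elementary bounds on $(r,a_0)$ and the change-of-coordinates identities, and even yields a quantitative lower bound on $|\widetilde f(u_0)|$. As a stand-alone proof of this specific corollary, yours is arguably cleaner; in context, the paper's one-liner is cheaper.
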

\begin{proof}
This result follows directly from Lemma~\ref{lemma_zeros_g} and Corollary~\ref{corollary_zeros_f}.
\end{proof}

For $t_0=0$, $f$ reduces to
\begin{multline*}
f(w)=2 r^4+2 r^3 w^3+\frac{2 r^3}{w^3}+3 r^2 w^2+\frac{3 r^2}{w^2}+2 r^2 \\ +\left(6 r^3+\left(3-6 r^2\right) r\right) w+\frac{6 r^3+\left(3-6 r^2\right) r}{w}+t_0+1
\end{multline*}
and the root $\hat w_1$ and the values $h(\hat w_1),h(\hat w_1^{-1})$ are given by
\begin{equation}\label{aux_equation_4}
\hat w_1=\omega^2 \hat w_0,\quad h(\hat w_1)=\omega^2 h(\hat w_0)=\omega^2\hat z_0=\hat z_1,\quad h(\hat w_1^{-1})=\omega h(\hat w_0^{-1})=\omega \hat z_0=\omega^{2} 
\hat z_1
\end{equation}
where we recall $\omega=e^{\frac{2\pi i}{3}}$, $\hat z_0,\hat z_1$ and $\hat w_0$ are given in \eqref{branchpoints_nodes_t_1=0-1}, \eqref{branchpoints_nodes_t_1=0-2} and
\eqref{aux_equation_3}. 

To get started to the deformation argument used for the proof of Theorem~\ref{theorem_discriminant_spectral_curve}, we state the following weak form of 
Theorem~\ref{theorem_discriminant_spectral_curve} as a Lemma.

\begin{lem}\label{lemma_zeros_discriminant_small_t_1}
 For $t_1>0$ small, the discriminant $\mathcal D$ has three simple zeros, which are branch points of $\mathcal R$, and three double zeros, which are not branch points of $\mathcal 
R$.
\end{lem}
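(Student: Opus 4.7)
The plan is a perturbation argument from the fully explicit $t_1 = 0$ picture described in Subsection \ref{subsection_spectral_curve_t=0}. At $t_1 = 0$, the discriminant $\mathcal{D}$ is a cubic in $z^3$ whose zero set consists of three simple zeros $z_0, z_1, z_2$ (the branch points given by \eqref{branchpoints_nodes_t_1=0-1}--\eqref{branchpoints_nodes_t_1=0-2}) and three double zeros $\hat z_0, \hat z_1, \hat z_2$ (the singular points), all six being mutually distinct; their multiplicities already sum to $3 + 2 \cdot 3 = 9 = \deg \mathcal{D}$. The coefficients of $\mathcal{D}$ are polynomials in $r, a_0, A, B, t_0, t_1$ and hence depend analytically on $t_1$ (by Theorem \ref{theorem_monotonicity_r}), so by the standard continuity of roots of polynomials, for $t_1 > 0$ sufficiently small the zeros of $\mathcal{D}$ lie in pairwise disjoint small disks around the six $t_1=0$ values, with total multiplicity inside each disk equal to $1$ near each $z_j$ and to $2$ near each $\hat z_j$.

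The crux is to show that, inside each disk around a $\hat z_j$, the multiplicity remains concentrated at a single point rather than splitting into two simple zeros. This is where Lemma \ref{lemma_zeros_g} and Corollary \ref{corollary_zeros_f} enter: valid for every $(t_0, t_1) \in \mathcal{F}$, they assert that the function $g$ from \eqref{equation_function_g} has a unique simple root $\hat w_0 \in (1, \infty)$, and that the function $f$ from \eqref{equation_polynomial_f} has exactly one pair of complex-conjugate simple roots $\hat w_1, \hat w_2 = \overline{\hat w_1}$ in $\D \setminus \R$. The corollary immediately following Corollary \ref{corollary_zeros_f} guarantees that $f$ and $g$ share no common root. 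Via \eqref{equations_singular_points_w_1}--\eqref{equations_singular_points_w_2} and the rational parametrization, the three values $\hat z_j := h(\hat w_j)$, $j=0,1,2$, are therefore three distinct singular points of the spectral curve, and at $t_1=0$ they coincide with the singular points listed in Subsection \ref{subsection_spectral_curve_t=0}. In particular each $\hat z_j$ is a node (two distinct preimages $\hat w_j$ and $\hat w_j^{-1}$, which remain distinct for small $t_1$), and in view of Lemma \ref{lemma_nodes} each $\hat z_j$ is a zero of $\mathcal{D}$ of multiplicity at least $2$.

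The proof then closes by a rigid degree count. The three singular points $\hat z_j$ contribute at least multiplicity $2+2+2 = 6$ to $\mathcal{D}$, while the three perturbed zeros near $z_0, z_1, z_2$ contribute at least $1+1+1 = 3$; since $\deg \mathcal{D} = 9$, both lower bounds must be saturated. Hence each $\hat z_j$ is exactly a double zero, each perturbed $z_j$ is exactly a simple zero, and there are no other zeros of $\mathcal{D}$. The three simple zeros must then be branch points rather than singular points, because by Lemma \ref{lemma_nodes} every non-branch zero of $\mathcal{D}$ satisfies \eqref{system_nodes_1a}--\eqref{system_nodes_1c}, and the set of such solutions has just been enumerated to be exactly $\{\hat z_0, \hat z_1, \hat z_2\}$.

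The main technical obstacle one could worry about is that a double zero $\hat z_j$ splits into two simple zeros under perturbation, or alternatively that it thickens into a higher-multiplicity cusp; both scenarios are ruled out by the tight degree count above, which is only possible because Lemma \ref{lemma_zeros_g} and Corollary \ref{corollary_zeros_f} provide a complete enumeration of singular points valid over all of $\mathcal{F}$, leaving no room for multiplicities to redistribute.
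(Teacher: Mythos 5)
Your approach and the paper's both begin from the explicit $t_1=0$ picture and close with a degree count, but they differ in the crucial intermediate step — how to rule out a double zero of $\mathcal D$ splitting under perturbation — and your version of that step has a genuine gap. The paper does not attempt to identify the double zeros directly; instead it invokes Corollary~\ref{corollary_genus_0} (genus zero) together with the Riemann--Hurwitz formula to conclude that $\mathcal R$ has exactly four branch points. One of these is $\infty$, and every simple zero of $\mathcal D$ is a branch point, so there can be at most three simple zeros; combined with the continuity lower bound of at least three, the count is forced. You instead try to deduce the result from Lemma~\ref{lemma_zeros_g}, Corollary~\ref{corollary_zeros_f} and Lemma~\ref{lemma_nodes}, claiming these give a ``complete enumeration of singular points'' that leaves no room for a double zero to split. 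That inference does not go through. Lemma~\ref{lemma_nodes} and Corollary~\ref{corollary_nodes} run in only one direction: they say that a \emph{non-branch} zero of $\mathcal D$ must satisfy equations \eqref{equations_singular_points_w_1}--\eqref{equations_singular_points_w_2}, and Lemma~\ref{lemma_zeros_g}/Corollary~\ref{corollary_zeros_f} enumerate the solutions to those equations. This constrains where non-branch zeros can be, but it says nothing whatsoever about branch zeros of $\mathcal D$. The scenario one must rule out — $\hat z_j$ splitting into two simple zeros of $\mathcal D$, each a simple branch point of $\mathcal R$ — produces no non-branch zeros near $\hat z_j$, so it is perfectly consistent with your enumeration. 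You need a bound on the number of branch points, and that is exactly what the paper's Riemann--Hurwitz step supplies and your argument omits.

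There is also a secondary error in the ``node'' justification. Your claim that ``each $\hat z_j$ is a node (two distinct preimages $\hat w_j$ and $\hat w_j^{-1}$)'' is false for $j=1,2$. For $\hat w_0$, equation~\eqref{equations_singular_points_1} gives $h(\hat w_0)=h(\hat w_0^{-1})$, so $\hat w_0$ and $\hat w_0^{-1}$ do parametrize the same point of the curve. But $\hat w_1,\hat w_2$ satisfy \eqref{equations_singular_points_2}, which is symmetric in $\xi,z$ but is \emph{not} the diagonal $\xi=z$; thus $w\mapsto (h(w^{-1}),h(w))$ sends $\hat w_j$ and $\hat w_j^{-1}$ to \emph{different} points of the curve, namely $(\xi,z)$ and $(z,\xi)$. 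At $t_1=0$ the second preimage of the node over $\hat z_j$ is $1/\overline{\hat w_j}$ (the Schwarz-reflection partner, as in the argument of Lemma~\ref{lemma_minimum_total_potential_2}), not $\hat w_j^{-1}$. More importantly, even with the correct node structure identified at $t_1=0$, establishing that the node persists (rather than being smoothed) for $t_1>0$ small is itself the content of the lemma; asserting it by citing the enumeration is circular, since zeros of $g$ and $f$ only satisfy the single relation $\partial_\xi F - \partial_z F = 0$ on the curve, not the full system $\partial_\xi F = \partial_z F = 0$. The clean way out is the paper's: bound the branch points via genus zero and Riemann--Hurwitz first, and only afterwards use the $w$-plane enumeration to \emph{locate} the double zeros (as is done in the proof of Theorem~\ref{theorem_discriminant_spectral_curve}).
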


\begin{proof}
As explained in Section~\ref{subsection_spectral_curve_t=0}, the result is true for $t_1=0$. As a general fact that follows by continuity arguments, for small perturbations of 
$t_1$, the multiplicity of a zero of $\mathcal D$ cannot increase. That is, simple zeros of $\mathcal D$ for $t_1=0$ stay simple for small $t_1$, and double zeros of $\mathcal D$ 
for $t_1=0$ either keep being double zeros or else split into two distinct simple zeros.
 
In particular, we get that for small perturbation of $t_1$, $\mathcal D$ still has {\it at least} three simple zeros.
 
 By Riemann-Hurwitz formula and the genus $0$ condition given by Corollary~\ref{corollary_genus_0}, we know that $\mathcal R$ has four branch points. One of those is the 
point $z=\infty$, see \eqref{asymptotics_xi}, and each simple zero of $\mathcal D$ is also a branch point. Hence, $\mathcal D$ must have {\it at most} three simple zeros, and by 
the remarks above it follows that $\mathcal D$ has exactly three simple zeros and exactly three double zeros for small perturbations of $t_1$.
\end{proof}

\begin{remark}\label{remark_zeros_discriminant}
 The discriminant $\mathcal D$ can be expressed as
 \begin{equation*}
 \mathcal D(z)=(\xi_1-\xi_2)^2(\xi_1-\xi_3)^2(\xi_2-\xi_3)^2,
 \end{equation*}
 where $\xi_j$'s are the solutions to \eqref{spectral_curve}. In particular, at least two of the $\xi_j$'s coincide at each zero of $\mathcal D$. By continuity from the case 
$t_1=0$, it follows that for $t_1$ small, the solution $\xi_1$ coincides with one of the other two solutions at each zero of $\mathcal D$.
\end{remark}

After this preparation, we can proceed to

\begin{proof}[Proof of Theorem~\ref{theorem_discriminant_spectral_curve}]
For $w_j, \hat w_j$, $j=0,1,2$, the points given by Lemmas~\ref{lemma_location_zeros_derivative_h}, \ref{lemma_zeros_g} and Corollary~\ref{corollary_zeros_f}, define
\begin{equation}\label{equation_definition_z_hat_z}
z_j=h(w_j),\quad \hat z_j=h(\hat w_j),\quad j=0,1,2.
\end{equation}

Our goal is to prove that these points satisfy the conclusions of Theorem~\ref{theorem_discriminant_spectral_curve}. As a first step we prove their geometric properties {\it 
(i)--(iii)}, and afterwards we prove that these points are zeros of $\mathcal D$ with the claimed multiplicities. At the end, we prove that $z_0,z_1,z_2\in \Omega$ in the 
three-cut case and $z_0,z_1\in \Omega$ in the one-cut case.

First note that $\hat z_0,\hat z_1,\hat z_2\in \C\setminus \overline \Omega$, because $|\hat w_j|>1$ and $h$ maps $\C\setminus \D$ to $\C\setminus \Omega$, see 
Corollary~\ref{corollary_h_parametrization} and also Lemma~\ref{lemma_zeros_g} and Corollary~\ref{corollary_zeros_f}.

We verify \eqref{properties_nodes}. For $t_1=0$, \eqref{properties_nodes} follows from \eqref{aux_equation_3} and \eqref{aux_equation_4}. Since $\hat w_0\in (1,+\infty)$ 
(Lemma~\ref{lemma_zeros_g}) and $h$ maps $(1,+\infty)$ to $(0,+\infty)\setminus\Omega$ (Corollary~\ref{corollary_h_parametrization}), it follows that $\hat 
z_0\in(0,+\infty)\setminus\Omega$, and in particular $z_0>0$. Because $\hat w_1$ is never real and $|\hat w_1|>1$, see Corollary~\ref{corollary_zeros_f}, the point $\hat z_1$ 
cannot be real neither, again due to Corollary~\ref{corollary_h_parametrization}. Thus by continuity with respect to $t_1$ we get $\im \hat z_1<0$. The equality $\hat 
z_2=\overline{\hat z_1}$ is trivial from \eqref{equation_definition_z_hat_z} and the definition of $\hat w_1,\hat w_2$ given by Corollary~\ref{corollary_zeros_f}.

We now prove that {\it (i)-(iii)} are satisfied by the points $z_0,z_1,z_2$. 

For $t_1=0$, {\it (i)} follows from \eqref{zeros_h'_t_1=0}. When we tune up $t_1$, the point $z_1$ cannot become real for $(t_0,t_1)\in \mathcal F_1$. Indeed, the value $w_j$ is a 
double zero of 
$$
 z_j=h(w).
$$

If $ z_1$ becomes real for $(t_0,t_1)\in \mathcal F_1$ - hence also $ z_2=\overline{z_1}$ - then the equation $z_1=h(w)$ has two distinct solutions $ w_1$, $w_2$ with 
multiplicity 
two, see Lemma~\ref{lemma_zeros_derivative_h}, which cannot occur because of the explicit form of $h$. By continuity, we get {\it (i)}.

When $(t_0,t_1)\in \gamma_c$, we know that $w_1=w_2\in (-1,0)$, see Lemma~\ref{lemma_zeros_derivative_h}. This automatically implies $z_1=z_2$. Choosing $s=1/4$ in 
\eqref{critical_curve_limiting_zero_distribution_a_r_plane}, we compute explicitly
$$
w_0=1/2, \quad w_1=w_2=\frac{1}{4},\quad z_0=\frac{111}{1024}>\frac{21}{256}=z_1=z_2,
$$
so for the respective choice of parameters $(t_0,t_1)\in \gamma_c$ {\it (iii)} holds true. Since $h'$ never has triple roots, see Lemma~\ref{lemma_zeros_derivative_h}, by 
continuity we conclude {\it (iii)} holds true for every choice of parameters $(t_0,t_1)\in \gamma_c$.

When $(t_0,t_1)$ enters $\mathcal F_2$, we learn from {\it (iii)} and continuity that $z_1,z_2<z_0$, so to get {\it (ii)} it only remains to prove that in this situation 
$z_2<z_1$, or equivalently $h(w_2)<h(w_1)$.

The function $w\mapsto h(w)$ goes to $-\infty$ when $w\to -\infty$. On the negative axis its derivative has two simple zeros $w_1<w_2$ and no others. This implies $h$ is 
increasing in $(-\infty,w_1)$ and decreasing in $(w_1,w_2)$, and hence $h(w_1)>h(w_2)$, so finally {\it(ii)} is proven.

For $(t_0,t_1)\in \mathcal F\setminus \gamma_c$, we now prove $z_0$, $z_1$, $z_2$ are simple zeros of $\mathcal D$ and $\hat z_0$, $\hat z_1$, $\hat z_2$ are double zeros 
of $\mathcal D$. 

The points $z_0,z_1,z_2$ are the only branch points of $\mathcal R$, so surely they are zeros of $\mathcal D$. The remaining zeros of $\mathcal D$ must be of multiplicity at least 
two, because they are not branch points. Since we already know the points $\hat z_1,\hat z_2,\hat z_3$ are pairwise distinct, a total counting of zeros (according to multiplicity) 
shows it is enough to prove $\hat z_1,\hat z_2,\hat z_3$ are always zeros of $\mathcal D$, and their multiplicity properties will follow.

For $t_1=0$, $\mathcal D$ is given by
\begin{multline*}
\mathcal D(z)=4z^9+(t_0^2+4A+12t_0-8)z^6 \\ +(4t_0^3+18At_0+12t_0^2-36A+12t_0+4)z^3-27A^2,
\end{multline*}
where $A$ is given in \eqref{A_for_t=0}. Using \eqref{aux_equation_3}, \eqref{aux_equation_4}, after a lengthy calculation it follows that $\hat z_j$ is a double zero of 
$\mathcal D$, $j=0,1,2$. 

From Lemma~\ref{lemma_zeros_discriminant_small_t_1}, we know $\mathcal D$ has three double zeros for $t_1$ small, and these are not branch points. Let 
$h(\tilde w)=\tilde z$ be one of these zeros. We can assume $h(\tilde w^{-1})=\xi_1(\tilde z)$, see Remark~\ref{remark_zeros_discriminant}, so in particular
\begin{equation}\label{aux_equation_5}
|\tilde w|>1.
\end{equation}

From Corollary~\ref{corollary_nodes} we know the pair $(\xi,\tilde z)=h((\tilde w^{-1}),h(\tilde w))$ satisfies one of 
Equations~\eqref{equations_singular_points_1}--\eqref{equations_singular_points_2}, and hence 
$\tilde w$ must satisfy at least one of equations \eqref{equations_singular_points_w_1}--\eqref{equations_singular_points_w_2}. Combining Equation~\eqref{aux_equation_5} with 
Lemma~\ref{lemma_zeros_g} and Corollary~\ref{corollary_zeros_f}, we thus get that $\tilde w_j$ must be one of the points $\hat w_0,\hat w_1,\hat w_2$, so $\tilde z$ must be one of 
the points $\hat z_1,\hat z_2,\hat z_3$. 

Although carried out for $t_1$ small, the argument above works as long as none of the points $z_j,\hat z_j$ pairwise coincide. But we already proved \eqref{properties_nodes}, 
{\it (i)-(iii)} are valid, so the only coalescence that can happen is $z_1=z_2$, and only for $(t_0,t_1)\in \gamma_c$. By continuity it means that in this case $\mathcal D$ has a 
unique simple zero $z_0$ and the other points $\hat z_0,\hat z_1,\hat z_2$, $z_1=z_2$ are double zeros of $\mathcal D$ - the double zero $z_1=z_2$ is still a branch point of 
$\mathcal R$. When we move beyond $\gamma_c$, the double zero $z_1=z_2$ 
splits into the two simple zeros $z_1,z_2$ and the point $z_1$ is still a simple zero. The genus $0$ constraint then guarantees that the remaining zeros should still be of 
multiplicity at least two, and by analytic continuation these must be given by $\hat z_0,\hat z_1,\hat z_2$.

We now verify that $z_0,z_1,z_2\in \Omega$ for $(t_0,t_1)\in \mathcal F_1$. For $t_1=0$ we know from \eqref{branchpoints_nodes_t_1=0-3} that
$$
\xi_1(z_0)=\xi_3(z_0),\qquad \xi_1(z_j)=\xi_2(z_j),\quad j=1,2,
$$
where $\xi_1,\xi_2,\xi_3$ are (analytic continuations of) the solutions to \eqref{spectral_curve} as in \eqref{asymptotics_xi}. Since we already proved that the points 
$z_0,z_1$ and $z_2$ do not pairwise coincide for $(t_0,t_1)\in \mathcal F_1$, we conclude that these equalities are valid for every choice $(t_0,t_1)\in \mathcal F_1$. Hence 
$\xi_1$ is branched at each of the points $z_0,z_1$ and $z_2$. But from Theorem~\ref{theorem_schwarz_function} we know that $\xi_1$ is the Schwarz function of $\partial \Omega$, 
hence $\xi_1$ is meromorphic on $\overline \C\setminus \Omega$ and consequently its branch points $z_0,z_1$ and $z_2$ have to belong to $\Omega.$

It only remains to prove that $z_0,z_1\in \Omega$ in the one-cut case $(t_0,t_1)\in \mathcal F_2$. When we cross $\gamma_c$, the branch point $z_0$ does not coalesce with any 
other zero of $\mathcal D$; thus by continuity we get that $\xi_1(z_0)=\xi_3(z_0)$ for every $(t_0,t_1)\in \mathcal F_2$. The discriminant $\mathcal D$ is a polynomial of degree 
$9$ with positive leading coefficient, and for $(t_0,t_1)\in \mathcal F_2$ we already know that its only real zeros are $z_2,z_1,z_0$ and $\hat z_0$, the first three of 
multiplicity one, and the last one of multiplicity two. Hence 
\begin{equation}\label{inequalities_discriminant}
\begin{aligned}
\mathcal D(z)<0, & \quad z\in (-\infty,z_2),\\
\mathcal D(z)>0, & \quad z\in (z_2,z_1), \\
\mathcal D(z)<0, & \quad z\in (z_1,z_0), \\
\mathcal D(z)>0, & \quad z\in (z_0,\hat z_0)\cup (\hat z_0,+\infty).
\end{aligned}
\end{equation}

We already know that $\xi_1(z_2)=\xi_3(z_2)$. The third inequality in \eqref{inequalities_discriminant} then implies that the boundary values of the analytic continuations 
of $\xi_1$ and $\xi_2$ are complex conjugate of each other in the interval $(z_1,z_0)$. Combining with the second inequality in \eqref{inequalities_discriminant} we get that 
$\xi_1(z_1)=\xi_3(z_1)$, that is, the function $\xi_1$ is branched at $z_1$ as well. Since we already know from Theorem~\ref{theorem_schwarz_function} that $\xi_1$ is the Schwarz 
function of $\partial \Omega$, the branch points of $\xi_1$ have to be in $\Omega$, that is, $z_0,z_1\in \Omega$, concluding the proof.
\end{proof}

\begin{remark}
We stress that the proof of Theorem~\ref{theorem_discriminant_spectral_curve} also shows that the points $z_j$ and $\hat z_j$, $j=0,1,2$, given by 
Theorem~\ref{theorem_singular_points_spectral_curve} can be obtained through the equalities
$$
h(w_j)=z_j,\quad h(\hat w_j)=\hat z_j,\qquad j=0,1,2,
$$
where $w_0$, $w_1$ and $w_2$ are the zeros of $h'$ as in Lemma~\ref{lemma_location_zeros_derivative_h}, and $\hat w_0$, $\hat w_1$ and $\hat w_2$ are given by Lemma 
\ref{lemma_zeros_g} and Corollary \ref{corollary_zeros_f}. We will use this fact extensively in the next sections.
\end{remark}

\subsection{Sheet structure for $\mathcal R$}\label{section_sheet_structure_general}

To construct the sheet structure of $\mathcal R$, we make use of the following proposition.

\begin{prop}\label{proposition_branch_points_xi_functions}
For $(t_0,t_1)\in \mathcal F$, the (meromorphic continuation of) the functions $\xi_1,\xi_2$ and $\xi_3$ in \eqref{asymptotics_xi} satisfy
$$
\xi_1(\hat z_0)=\xi_3(\hat z_0),\qquad \xi_1(\hat z_j)=\xi_2(\hat z_j),\quad j=1,2.
$$
Furthermore,
 \begin{enumerate}[label=(\roman*)]
\item for $(t_0,t_1)\in \mathcal F_1$,
$$
\xi_1(z_0)=\xi_3(z_0),\qquad \xi_1(z_j)= \xi_2(z_j),\quad j=1,2;
$$

\item for $(t_0,t_1)\in \mathcal F_2$,
$$
\xi_1(z_j)=\xi_3(z_j),\quad j=0,1, \qquad \xi_2(z_2)= \xi_3(z_3);
$$

\item for $(t_0,t_1)\in \gamma_c$,
$$
\xi_1(z_0)=\xi_3(z_0), \qquad \xi_1(z_1)=\xi_2(z_1)=\xi_3(z_1).
$$
\end{enumerate}
\end{prop}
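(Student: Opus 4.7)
The plan is to leverage two ingredients: the explicit pairings at $t_1=0$ given by \eqref{branchpoints_nodes_t_1=0-3}--\eqref{branchpoints_nodes_t_1=0-4}, and the fact established in Theorem~\ref{theorem_discriminant_spectral_curve} that throughout $\mathcal F$ the zeros of the discriminant $\mathcal D$ depend continuously on $(t_0,t_1)$ and that the only collision occurring in $\mathcal F$ is $z_1$ with $z_2$ on the critical curve $\gamma_c$. Since the identity of the pair of sheets coinciding at a given zero of $\mathcal D$ can only switch when the zero collides with another, the pairings are transported unambiguously along any continuous path in $\mathcal F$ that avoids $\gamma_c$.

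First I will handle the double zeros $\hat z_0,\hat z_1,\hat z_2$. These points are pairwise distinct and never coincide with any branch point throughout $\mathcal F$. At each such node, not being a branch point of $\mathcal R$, exactly one pair of $\xi_j$'s coincides: this is forced by the factorization $\mathcal D=\prod_{i<j}(\xi_i-\xi_j)^2$ together with the fact that a triple coincidence would create a branch point rather than a node. The identities $\xi_1(\hat z_0)=\xi_3(\hat z_0)$ and $\xi_1(\hat z_j)=\xi_2(\hat z_j)$ for $j=1,2$ hold at $t_1=0$ by \eqref{branchpoints_nodes_t_1=0-4} and therefore propagate by continuity to all of $\mathcal F$. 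An identical continuity argument, starting from \eqref{branchpoints_nodes_t_1=0-3} and using the fact that $z_0,z_1,z_2$ remain pairwise distinct throughout the three-cut region $\mathcal F_1$, yields statement (i); this was essentially already observed inside the proof of Theorem~\ref{theorem_discriminant_spectral_curve}.

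Statement (iii) follows from order counting at the critical collision. On $\gamma_c$, the point $z_1=z_2$ is simultaneously a double zero of $\mathcal D$ and a branch point of $\mathcal R$. A simple square-root ramification of only one pair would cause exactly one factor in $\mathcal D=\prod_{i<j}(\xi_i-\xi_j)^2$ to vanish to order $1$, contradicting that this is a double zero; the only remaining possibility is a cube-root type ramification joining all three sheets, $\xi_i-\xi_j\sim(z-z_1)^{1/3}$, which immediately forces $\xi_1(z_1)=\xi_2(z_1)=\xi_3(z_1)$. Crossing from $\gamma_c$ into $\mathcal F_2$, the branch point $z_0$ stays isolated from the colliding pair, so the continuity principle preserves $\xi_1(z_0)=\xi_3(z_0)$ throughout $\mathcal F_2$. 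The remaining pairings at $z_1$ and $z_2$ in (ii) then follow by combining the sign analysis of $\mathcal D$ on the real axis carried out in the proof of Theorem~\ref{theorem_discriminant_spectral_curve} (the inequalities \eqref{inequalities_discriminant}) with the Schwarz function property of $\xi_1$ from Theorem~\ref{theorem_schwarz_function}: the complex-conjugate pair on $(z_1,z_0)$ is forced to be $\{\xi_1,\xi_3\}$ by continuation from $z_0$, these two branches meet at $z_1$ producing $\xi_1(z_1)=\xi_3(z_1)$, and the only remaining possibility at $z_2$ is $\xi_2(z_2)=\xi_3(z_2)$.

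The main technical obstacle I anticipate is the bookkeeping of the labels $\xi_2$ versus $\xi_3$ across the branch at $z_1$ in the one-cut case. Since $\xi_2$ and $\xi_3$ are a priori distinguished only by their asymptotics \eqref{asymptotics_xi} at infinity, showing that \emph{the specific} branch whose continuation along the real axis forms a complex-conjugate pair with $\xi_1$ on $(z_1,z_0)$ is $\xi_3$ (and not $\xi_2$) requires tracking the three real branches along $(z_2,z_1)$, using the alternation of the sign of $\mathcal D$ on the real axis together with the normalizations at $+\infty$ to propagate labels from $\mathcal F_1$ through $\gamma_c$ into $\mathcal F_2$.
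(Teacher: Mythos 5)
Your approach is essentially the paper's: the authors verify $\xi_1(z_0)=\xi_3(z_0)$ in $\mathcal F_2$ explicitly (via the sign analysis of $\mathcal D$ on the real axis at the end of the proof of Theorem~\ref{theorem_discriminant_spectral_curve}) and then dismiss the rest as ``similar continuity arguments'' with no further detail, which is precisely the propagation-from-$t_1=0$ scheme you spell out. One small slip worth fixing: the justification ``a triple coincidence would create a branch point rather than a node'' is not correct as stated --- a triple root of $F(\cdot,\hat z_j)$ can, in principle, split into three unbranched analytic sheets --- but the needed conclusion (exactly one pair of $\xi$'s meets at a node) does follow by order counting: a triple coincidence would make all three factors $(\xi_i-\xi_j)^2$ vanish there, giving $\mathcal D$ a zero of order at least $6$, which contradicts $\hat z_j$ being a double zero; this is the same order-counting argument you correctly apply to $z_1=z_2$ on $\gamma_c$.
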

\begin{proof}
 The first equality in {\rm (ii)} was explicitly verified in the final part of the proof of Theorem~\ref{theorem_discriminant_spectral_curve}. The other properties claimed by the 
proposition follow from similar continuity arguments. We skip the details.
\end{proof}

Using Proposition~\ref{proposition_branch_points_xi_functions} we are ready to construct the sheet structure of the Riemann surface $\mathcal R$ associated with the 
algebraic equation \eqref{spectral_curve}. We see $\mathcal R$ as a branched three-sheeted cover of $\C$ and denote its sheets by $\mathcal R_1$, $\mathcal R_2$ and $\mathcal 
R_3$, 
so that
$$
\mathcal R=\mathcal R_1\cup\mathcal R_2\cup\mathcal R_3.
$$

The explicit construction of $\mathcal R_1,\mathcal R_2,\mathcal R_3$ is carried over below and depends on whether we are in the three-cut or 
one-cut case. In both situations, for $j=1,2,3$, the function $\xi_j$ in \eqref{asymptotics_xi} admits a meromorphic continuation to the whole sheet $\mathcal R_j$. As 
usual, these functions $\xi_1,\xi_2,\xi_3$ are regarded as branches of the same meromorphic function 
$$
\xi:\mathcal R \to \overline \C,\quad \xi\equiv \xi_j \mbox{ on } \mathcal R_j,
$$
which is the global solution to \eqref{spectral_curve}.

Moreover, given the sheet structure $\mathcal R=\mathcal R_1\cup\mathcal R_2\cup\mathcal R_3$, we denote the restriction of the canonical projection $\pi:\mathcal R\to \overline 
\C$ to $\mathcal R_j$ by $\pi_j$, $j=1,2,3$. With this notation, the function $\pi_j$ is invertible outside the branch cuts connecting the sheets, and its inverse 
$\pi_j^{-1}$ extends continuously to the branch cuts if one considers appropriate limiting boundary values. As at the end of Section~\ref{subsection_spectral_curve_t=0}, we denote 
by $p^{(j)}$ the inverse image of a point $p\in\overline\C$ through $\pi_j$. That is, $p^{(j)}$ denotes the point in $\mathcal R_j$ which is uniquely defined through the relation
$$
\{p^{(j)}\}=\pi_j^{-1}(\{p\}),\quad j=1,2,3.
$$

The point $p^{(j)}$ is also well defined at branch points. However, if $p$ belongs to the projection of the open arcs constituting the branch cuts of 
$\mathcal R_j$, then the set $\pi_j^{-1}(\{p\})$ contains two points on $\mathcal R_j$, one on each side of the branch cut. We denote these two points by $p^{(j)}_+, 
p^{(j)}_-\in\mathcal R_j$, labeled according to
$$
\{p^{(j)}_+ \}=\pi_{j+}^{-1}(\{p\}),\quad \{ p^{(j)}_- \}=\pi_{j-}^{-1}(\{p\}).
$$
In particular, if $p$ belongs to the branch cut connecting two sheets $\mathcal R_j$ and $\mathcal R_k$, then $p_\pm^{(j)}=p_\mp^{(k)}$.

\subsubsection{Sheet structure in the three-cut case}\label{section_sheet_structure_1}

Consider a Jordan arc $\gamma_0$ connecting $z_1$ to $z_2$ and intersecting $\R$ exactly once, say at the point $z_*$. Assume in addition $\gamma_0^*=\gamma_0$ and $z_*<z_0$. Set 
$$
\Sigma_*=\gamma_0\cup [z_*,z_0]
$$
and define
$$
\mathcal R_1=\overline \C\setminus\Sigma_*,\quad \mathcal R_2=\overline \C\setminus (\gamma_0\cup[-\infty,z_0]),\quad \mathcal R_3=\overline \C\setminus 
[-\infty,z_0].
$$

We construct the three-sheeted Riemann surface 
$$
\mathcal R=\mathcal R_1\cup\mathcal R_2\cup \mathcal R_3
$$
connecting $\mathcal R_1$ to $\mathcal R_2$ along $\gamma_0$, $\mathcal R_1$ to $\mathcal R_3$ along $[z_*,z_0]$ and 
$\mathcal R_2$ to $\mathcal 
R_3$ along $[-\infty,z_*]$, always in the usual crosswise manner. For $t_1=0$ and the choice $\gamma_0=[0,z_1]\cup [0,z_2]$, this is in agreement with the sheet 
structure for $t_1=0$ carried over in Section~\ref{subsection_spectral_curve_t=0}. This sheet structure is illustrated in Figure~\ref{figure_sheet_structure}.

The Riemann surface $\mathcal R$ is branched at the points
$$
z_0^{(1)}=z_0^{(3)},\quad z_1^{(1)}=z_1^{(2)},\quad z_2^{(1)}=z_2^{(2)},\quad \infty^{(2)}=\infty^{(3)}.
$$

\begin{figure}[t]
 \centering
\includegraphics[scale=1]{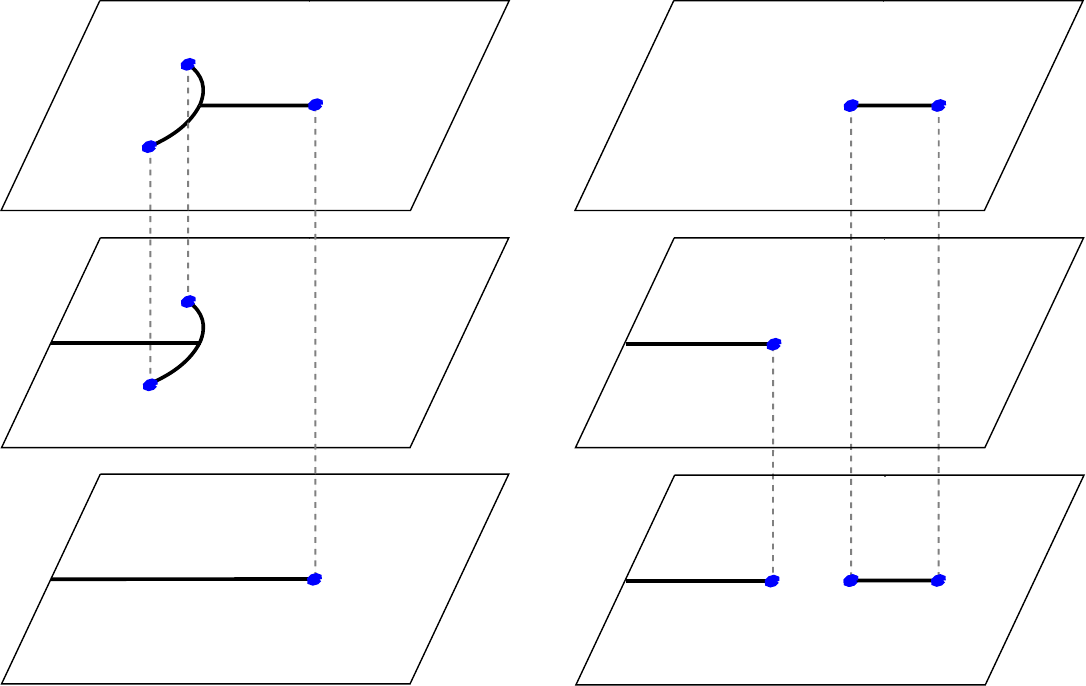}
\caption{Sheet structure of $\mathcal R$ for $(t_0,t_1)\in \mathcal F_1$ (left) and $(t_0,t_1)\in\mathcal F_2$ (right).}\label{figure_sheet_structure}
\end{figure}

We emphasize here the freedom in the choice of $\gamma_0$. This freedom is exploited later. We also remark that this sheet structure preserves the equalities
\begin{equation}\label{zeros_difference_xi_functions_F_1}
\begin{aligned}
\xi_1(\hat z_0)=\xi_3(\hat z_0),\quad \xi_1(\hat z_j)=\xi_2(\hat z_j),\ j=1,2, \\
\xi_1(z_0)=\xi_3(z_0),\quad \xi_1(z_j)=\xi_2(z_j),\ j=1,2,
\end{aligned}
\end{equation}
claimed by Proposition~\ref{proposition_branch_points_xi_functions}. In addition, the following 
properties hold true
\begin{align}
& \xi_2(x)<\xi_3(x)<\xi_1(x),\quad & x>\hat z_0,\label{equalities_inequalities_real_line_xi_functions_precritical_1}\\
& \xi_2(x)<\xi_1(x)<\xi_3(x),\quad & z_0<x<\hat z_0, \label{equalities_inequalities_real_line_xi_functions_precritical_2}\\
& \xi_{1\pm}(x)=\overline{\xi_{3\pm}(x)}=\xi_{3\mp}(x),\quad \im \xi_2(x)=0,\quad & z_*<x<z_0, \label{equalities_inequalities_real_line_xi_functions_precritical_3}\\
& \xi_{2\pm}(x)=\overline{\xi_{3\pm}(x)}=\xi_{3\mp}(x),\quad \im \xi_{1}(x)=0, \quad & x<z_*,\label{equalities_inequalities_real_line_xi_functions_precritical_4}
\end{align}
as it follows from the asymptotic behavior \eqref{asymptotics_xi} and an analysis of the sign of the discriminant $\mathcal D$ as in \eqref{inequalities_discriminant}. We skip the 
details.

Furthermore, from the construction of the Riemann surface, it also holds true
\begin{align*}
& \xi_{1\pm}(z) = \xi_{3\mp}(z),\quad z\in [z_*,z_0],\\
& \xi_{1\pm}(z) = \xi_{2\mp}(z), \quad z\in \gamma_0
\end{align*}
%

\subsubsection{Sheet structure in the one-cut case}\label{section_sheet_structure_2}

For $(t_0,t_1)\in \mathcal F_2$, we define
$$
\mathcal R_1=\overline \C\setminus [z_1,z_0],\quad \mathcal R_2=\overline \C \setminus [-\infty,z_2],\quad \mathcal R_3=\overline\C\setminus( [-\infty,z_2]\cup [z_1,z_0] ).
$$

In the present case, $\mathcal R_1$ is connected to $\mathcal R_3$ along $[z_1,z_0]$ and $\mathcal R_2$ is connected to $\mathcal R_3$ along $[-\infty,z_2]$, in the usual 
crosswise way. This sheet structure is shown in Figure~\ref{figure_sheet_structure}.

The branch points of $\mathcal R$ are given by
$$
z_0^{(1)}=z_0^{(3)},\quad z_1^{(1)}=z_1^{(3)},\quad z_2^{(2)}=z_2^{(3)},\quad \infty^{(2)}=\infty^{(3)},
$$
see Figure~\ref{figure_sheet_structure}. In the same spirit as in \eqref{zeros_difference_xi_functions_F_1}--\eqref{equalities_inequalities_real_line_xi_functions_precritical_4}, 
we also have the following equalities,
\begin{equation}\label{zeros_difference_xi_functions_F_2}
\begin{aligned}
\xi_1(\hat z_0)=\xi_3(\hat z_0),\quad \xi_1(\hat z_j)=\xi_2(\hat z_j),\ j=1,2, \\
\xi_2(z_2)=\xi_3(z_2),\quad \xi_1(z_j)=\xi_3(z_j),\ j=0,1,
\end{aligned}
\end{equation}
and the relations
\begin{equation}\label{equalities_inequalities_real_line_xi_functions_supercritical}
\begin{aligned}
& \xi_2(x)<\xi_3(x)<\xi_1(x),\quad & x>\hat z_0,\\
& \xi_2(x)<\xi_1(x)<\xi_3(x),\quad & z_0<x<\hat z_0, \\
& \xi_{1\pm}(x)=\overline{\xi_{3\pm}(x)}=\xi_{3\mp}(x),\quad \im \xi_2(x)=0,\quad & z_1<x<z_0, \\
& \xi_{2\pm}(x)=\overline{\xi_{3\pm}(x)}=\xi_{3\mp}(x),\quad \im \xi_{1}(x)=0, \quad & x<z_2.
\end{aligned}
\end{equation}

\section{Meromorphic quadratic differential on $\mathcal R$}\label{section_quadratic_differential}

Given any point $p\in \mathcal R$ which is not a branch point, we define the following function element in a neighborhood of $p$
\begin{equation}\label{definition_function_germ_Q}
Q(z)=
\left\{
\begin{aligned}
 & \xi_2(z)-\xi_3(z), & \mbox{ if } p\in\mathcal R_1, \\
 & \xi_1(z)-\xi_3(z), & \mbox{ if } p\in\mathcal R_2, \\
 & \xi_1(z)-\xi_2(z), & \mbox{ if } p\in\mathcal R_3.
\end{aligned}
\right.
\end{equation}

The function element $Q$ cannot be extend to a (single-valued) meromorphic function on the whole Riemann surface $\mathcal R$, but it admits an analytic extension along any path 
on $\mathcal R$. Hence given any path $\gamma$ on $\mathcal R$, it is meaningful to talk about contour integrals of $Q$ along $\gamma$. More important, the square $Q^2$ 
extends to a (single-valued!) meromorphic function on the whole Riemann surface $\mathcal R$, as it is shown in a general framework in \cite[Theorem~1.8]{martinez_silva}. Due to 
our explicit sheet structure, this can also be verified directly, but we skip the details. 

We are interested in the associated quadratic differential
\begin{equation}\label{quadratic_differential}
\varpi=-(Q(z))^2dz^2.
\end{equation}

Zeros and poles of $\varpi$ are the zeros and poles of $Q^2$, along with its multiplicities. Simple poles and zeros are called finite critical points, whereas poles of order at 
least $2$ are called infinite critical points. An arc $\gamma\subset\mathcal R$ is said to be an {\it arc of trajectory} of $\varpi$ if
\begin{equation}\label{definition_trajectory_integral}
\re \int^z \sqrt{-\varpi} = \re \int^z Q(s)ds=\const,\quad z\in \gamma.
\end{equation}

A trajectory is a maximal arc of trajectory, and it is called critical if it extends to a finite critical point of $\varpi$ on at least one of its ends. Two trajectories can only 
intersect at the critical points. The union of all critical trajectories of $\varpi$ is denoted by $\mathcal G=\mathcal G(\varpi)$ and is called the {\it critical graph} of 
$\varpi$.

The main goal of this Section is to describe the critical graph $\mathcal G$. As we will see later on, the critical graph plays a 
substantial role in the Riemann-Hilbert/Steepest Descent analysis carried over in Sections~\ref{section_riemann_hilbert_analysis_precritical} and 
\ref{section_riemann_hilbert_analysis_postcritical}. Some of its trajectories also encode Equation~\eqref{s_property}: Theorem~\ref{theorem_limiting_support_zeros} will follow 
almost immediately once we describe the critical graph $\mathcal G$.

To describe the trajectories of $\varpi$, we follow the methodology of the recent work \cite{martinez_silva} by Martínez-Finkelshtein and the second author. The main conclusion of 
this analysis is that the topology of the critical graph of $\varpi$ only depends on whether $(t_0,t_1)$ belongs to $\mathcal F_1$ or $\mathcal F_2$. 

In our setting, the analysis works as follows. We first describe the trajectories for $t_1=0$, for which the underlying rotational symmetry plays a fundamental role. It turns out 
that in this case $\mathcal R\setminus \mathcal G$ consists only of strip and half plane domains, and no short trajectories. When we increase $t_1$, the critical graph is 
deformed, and we control its dynamics by means of analyzing the widths of the strip domains, showing that they do not vanish on $\mathcal F_1$, and thus the critical graph is 
preserved for values of 
$(t_0,t_1)$ in this domain. When we cross $\gamma_c$, moving from $\mathcal F_1$ to $\mathcal F_2$, we are able to identify the phase transitions for the trajectories, and 
describe 
the critical graph for values of $(t_0,t_1)\in \mathcal F_2$ that are sufficiently close to $\gamma_c$. Once again the critical graph consists only of strip and half plane 
domains, 
but now there are also short trajectories. We again analyze the widths of the strip domains, and also the short trajectories, and prove that the critical graph remains unchanged 
on $\mathcal F_2$.

The standard references on quadratic differentials are the books by Strebel~\cite{strebel_book} and by Jenkins~\cite{jenkins_book}. We follow closely the aforementioned work 
\cite{martinez_silva}, where the reader can also find a discussion on the general theory of quadratic differentials in a form suitable for our needs. 

The present Section is organized in the following manner. In Sections~\ref{subsection_technical_computations_precritical} and 
\ref{subsection_technical_computations_supercritical} we derive some technical lemmas that are needed for the computation of the critical graph, first for the three-cut case 
$(t_0,t_1)\in \mathcal F_1$ and then for the one-cut case $(t_0,t_1)\in \mathcal F_2$. In Section~\ref{subsection_critical_graph_general_principles} we compute the 
zeros and poles of $\varpi$, and discuss some general principles that are used for the computation of the critical graph. Finally, in 
Sections~\ref{subsection_critical_graph_precritical} and \ref{subsection_critical_graph_supercritical} we derive the critical graph in the three-cut and one-cut cases, 
respectively. 

When describing the trajectories and dynamics of the critical graph of $\varpi$, instead of a precise formulation of the behavior of each trajectory we opt for a more ``reader 
friendly'' approach, with visual description and illustration of the results by a number of pictures. And of course, we always provide rigorous proofs of the results.

\subsection{Technical computations for the three-cut case}\label{subsection_technical_computations_precritical}

When $t_1=0$, the sheet structure constructed in Section~\ref{section_sheet_structure_1} is consistent with the sheet structure in 
Section~\ref{subsection_spectral_curve_t=0}. However, for the analysis of the trajectories of $\varpi$ when $t_1=0$, it is more convenient (although, strictly speaking, not 
necessary) to construct the sheets in a different way that better reflects the underlying discrete rotational symmetry.

According to Theorem~\ref{theorem_singular_points_spectral_curve}, $z_j,\hat z_j$, $j=1,2,3$, denote the branch points and the singular points of \eqref{spectral_curve}, 
respectively. In the case $t_1=0$, these points are explicitly given in \eqref{branchpoints_nodes_t_1=0-1}--\eqref{branchpoints_nodes_t_1=0-2}. 

We set 
\begin{equation}\label{definition_cuts_L_t=0}
L_j= [0,\infty e^{\frac{(2j+3)\pi i}{3}}], \quad j=0,1,2, \qquad L=\bigcup_{j=0}^2L_j,
\end{equation}
and recalling the set $\Sigma_*$ given explicitly for $t_1=0$ in \eqref{star_t=0}, we define
\begin{equation}\label{new_sheet_structure}
\widetilde{\mathcal R}_1=\overline \C\setminus \Sigma_*,\quad \widetilde{\mathcal R}_2=\overline\C\setminus \left( \Sigma_*\cup L \right),\quad \widetilde{\mathcal R}_3=\overline 
\C\setminus L.
\end{equation}

We then connect the sheets $\widetilde{\mathcal R}_1$ and $\widetilde{\mathcal R}_2$ along $\Sigma_*$ and the sheets $\widetilde{\mathcal R}_2$ and $\widetilde{\mathcal R_3}$ 
along $L$, always in the crosswise manner, and denote by $\widetilde{\mathcal R}$ the resulting three-sheeted Riemann surface,
$$
\widetilde{\mathcal R}=\widetilde{\mathcal R}_1\cup\widetilde{\mathcal R}_2\cup\widetilde{\mathcal R}_3.
$$

This construction can be compared to \eqref{sheet_structure_t=0_no_symmetry} through the identities
\begin{equation}\label{new_sheet_structure_2}
\widetilde {\mathcal R}_1=\mathcal R_1,
\qquad 
 \widetilde{\mathcal R}_2=
 \begin{cases}
  \mathcal R_3,& -\frac{\pi}{3}<\arg z<\frac{\pi}{3},\\
  \mathcal R_2, & \mbox{otherwise},
 \end{cases}
\qquad
 \widetilde{\mathcal R}_3=
 \begin{cases}
  \mathcal R_2,& -\frac{\pi}{3}<\arg z<\frac{\pi}{3},\\
  \mathcal R_3, & \mbox{otherwise}.
 \end{cases}
\end{equation}
That is, we interchange the sectors $-\pi/3<\arg z<\pi /3$ between the sheets $\mathcal R_2$ and $\mathcal R_3$. We refer the reader to Figure~\ref{figure_regluing_t_0} for a 
comparison of these sheets structures. 

\begin{figure}[t]
\centering
\includegraphics[scale=1]{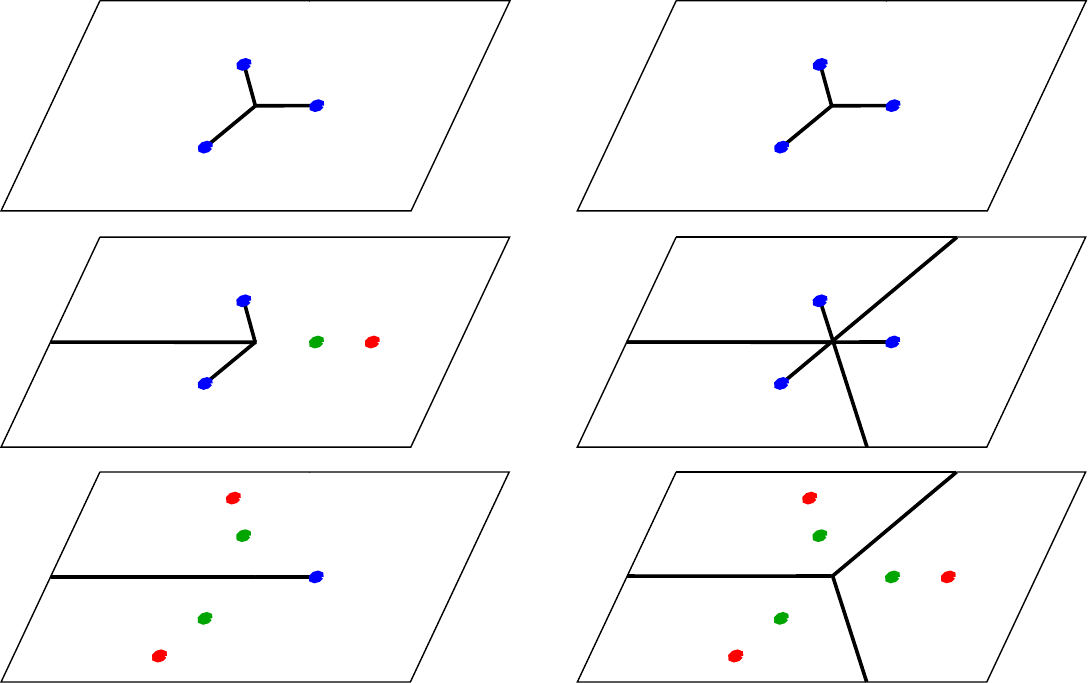}
\caption{For $t_1=0$, the cut structure and the zeros of $\varpi$ before (left) and after (right) the regluing $\mathcal R_j\mapsto \widetilde{\mathcal 
R}_j$.\label{figure_regluing_t_0}}
\end{figure}

Clearly this new sheet structure also affects the analytic continuation of the function germs in \eqref{asymptotics_xi}. Each function germ in \eqref{asymptotics_xi} admits 
an analytic continuation to the whole sheet $\widetilde{\mathcal R}_j$, and these analytic continuations satisfy the equalities
$$
\xi_1(z_j)=\xi_2(z_j),\quad \xi_1(\hat z_j)=\xi_2(\hat z_j),\quad j=0,1,2,
$$
and the identities
\begin{align}
& \xi_3(x)<\xi_2(x)<\xi_1(x),\quad & x>\hat z_0, \label{equalities_inequalities_real_line_xi_functions_t=0_1}\\
& \xi_3(x)<\xi_1(x)<\xi_2(x),\quad & z_0<x<\hat z_0, \label{equalities_inequalities_real_line_xi_functions_t=0_2}\\
& \xi_{1\pm}(x)=\overline{\xi_{2\pm}(x)}=\xi_{2\mp}(x),\quad \im \xi_3(x)=0,\quad & 0<x<z_0, \label{equalities_inequalities_real_line_xi_functions_t=0_3}\\
& \xi_{2\pm}(x)=\overline{\xi_{3\pm}(x)}=\xi_{3\mp}(x),\quad \im \xi_{1}(x)=0, \quad & x<0, \label{equalities_inequalities_real_line_xi_functions_t=0_4}
\end{align}
which are compatible with \eqref{zeros_difference_xi_functions_F_1}--\eqref{equalities_inequalities_real_line_xi_functions_precritical_4}, keeping in 
mind \eqref{new_sheet_structure_2}. During Section~\ref{section_quadratic_differential}, for $t_1=0$ we always use the analytic continuations of the functions $\xi_1,\xi_2,\xi_3$ 
in accordance to the sheet structure for $\widetilde{\mathcal R}$, unless otherwise stated. For $t_1>0$ we keep using the sheet structure constructed in 
Sections~\ref{section_sheet_structure_1} and \ref{section_sheet_structure_2}.

For $t_1=0$, define
\begin{equation}\label{definition_function_h_j}
h_j(x,y) =\int_{x}^y (\re\xi_{j+}(s)-\re\xi_{3+}(s))ds,\quad x,y\in \R, \quad j=1,2,
\end{equation}
where the integrals above are computed along the real axis. We emphasize that the functions $\xi_1$, $\xi_2$, $\xi_3$ in the expression above correspond to the 
sheet structure $\widetilde R_1,\widetilde R_2,\widetilde R_3$.

\begin{lem}\label{lemma_h_j_t=0}
Suppose that $t_1=0$. For $j=1,2$, the following properties hold true for $h_j$.
\begin{itemize} 
 \item[(i)] If $x,y\in [0,z_0]$, $x\neq y$, then $h_j(x,y)\neq 0$.
 \item[(ii)] If $h(x_j,y_j)=0$, for $x_j<y_j\leq 0$, $j=1,2$, then $(x_1,y_1)\cap (x_2,y_2)\neq \emptyset$.
\end{itemize}
\end{lem}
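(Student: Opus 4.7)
The plan is to reduce both parts to the trace identity $\xi_1 + \xi_2 + \xi_3 = s^2$, which holds because the spectral curve~\eqref{spectral_curve} specializes for $t_1 = B = 0$ to $\xi^3 + s^3 - s^2\xi^2 - (1+t_0)s\xi + A = 0$. I then analyze a single auxiliary quadratic in the variable $y = s^3$.

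For part~(i), I would first apply~\eqref{equalities_inequalities_real_line_xi_functions_t=0_3}: on $(0, z_0)$ the pair $\xi_{1+}, \xi_{2+}$ is complex conjugate and $\xi_3$ is real. The trace identity then yields
\begin{equation*}
\re\xi_{j+}(s) - \xi_3(s) = \frac{s^2 - 3\xi_3(s)}{2},\qquad j = 1, 2,
\end{equation*}
so it suffices to show this quantity is strictly positive on $[0, z_0]$. At $s = 0$ the spectral curve reduces to $\xi^3 + A = 0$, whose unique real root is $-A^{1/3}$ (with $A > 0$ by~\eqref{A_for_t=0}), giving the integrand $\tfrac{3}{2} A^{1/3} > 0$. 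At $s = z_0$, the equality $\xi_1 = \xi_2$ together with the trace identity gives $\xi_3(z_0) = z_0^2 - 2\xi_1(z_0)$, and~\eqref{equalities_inequalities_real_line_xi_functions_t=0_2} then forces $\xi_3(z_0) < z_0^2/3$. For an interior zero, substituting $\xi = s^2/3$ into the spectral curve reduces the equation $\xi_3(s) = s^2/3$ to $2y^2 - 9(2-t_0)y - 27A = 0$ with $y = s^3$. Its positive root $y_+$ satisfies $y_+ > 9(2-t_0)/2 > 135/16$, while~\eqref{branchpoints_nodes_t_1=0-1} gives $z_0^3 \leq 27/64$, so $y_+ > z_0^3$ and no interior zero exists. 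Continuity then forces the integrand to be strictly positive throughout $[0, z_0]$, whence $h_j(x, y) \neq 0$ for $x \neq y$.

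For part~(ii), note first that by~\eqref{equalities_inequalities_real_line_xi_functions_t=0_4} the branches $\xi_{2+}, \xi_{3+}$ are complex conjugates on $(-\infty, 0)$, so $\re\xi_{2+} \equiv \re\xi_{3+}$ and $h_2$ is identically zero on this half-line; the case $j = 2$ of the hypothesis is therefore automatic and carries no content, so the substance of~(ii) lives in the case $j = 1$. For $j = 1$, the same identity makes $\xi_1$ real, and combined with the trace identity gives
\begin{equation*}
\re\xi_{1+}(s) - \re\xi_{3+}(s) = \frac{3\xi_1(s) - s^2}{2}.
\end{equation*}
Setting $H(x) = \int_0^x \tfrac{3\xi_1(s) - s^2}{2}\,ds$, the hypothesis $h_1(x_j, y_j) = 0$ reads $H(x_j) = H(y_j)$. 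The critical equation $\xi_1(s) = s^2/3$ reduces, by the same substitution as above, to $2y^2 - 9(2-t_0)y - 27A = 0$ with $y = s^3$, which has exactly one negative root since the product of its roots is $-27A/2 < 0$. This produces a unique critical point $x_c \in (-\infty, 0)$ of $H$, and together with the boundary behavior $H'(0^-) = -\tfrac{3}{2}A^{1/3} < 0$ and $H'(s) \sim s^2 \to +\infty$ as $s \to -\infty$, we conclude that $H$ is strictly increasing on $(-\infty, x_c)$ and strictly decreasing on $(x_c, 0)$. Any equation $H(x_j) = H(y_j)$ with $x_j < y_j \leq 0$ thus forces $x_j < x_c < y_j$, so $x_c$ lies in both intervals, yielding the intersection claim.

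The main obstacle is really one of interpretation: because $h_2$ vanishes identically on the negative axis, the substantive content of~(ii) is the unimodality of the antiderivative $H$ in the $j = 1$ case. Once the trace identity is used to reduce each integrand to a linear expression in the single real branch ($\xi_3$ on $(0, z_0)$, $\xi_1$ on $(-\infty, 0)$), both parts are handled by the same elementary root-counting for the quadratic $2y^2 - 9(2-t_0)y - 27A = 0$.
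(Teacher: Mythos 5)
Your proof is correct and uses the same core device as the paper: the Vieta trace identity $\xi_1+\xi_2+\xi_3=z^2$ (for $t_1=0$) expresses the integrand linearly in the single real branch ($\xi_3$ on $(0,z_0)$, $\xi_1$ on $(-\infty,0)$), so that its vanishing is governed by the same quadratic in $z^3$ that the paper packages as the sextic $\phi$. The paper argues by contradiction via an intermediate point where the integrand vanishes, while you argue strict positivity of the integrand in~(i) and unimodality of the antiderivative $H$ in~(ii); both presentations reduce to the same root count for that quadratic.

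Your observation that $h_2\equiv 0$ on $(-\infty,0]$ is correct and worth making explicit: by~\eqref{equalities_inequalities_real_line_xi_functions_t=0_4} one has $\re\xi_{2+}=\re\xi_{3+}$ there, so the hypothesis ``$h_2(x_2,y_2)=0$'' would be vacuous and the conclusion of~(ii) could fail for disjoint intervals. This means the unsubscripted ``$h$'' in part~(ii) must be read as $h_1$, with $j$ merely indexing the two pairs — which is exactly how the paper's own proof and its later applications (where $h(x,y)=h(y,0)=0$ refers to one function and two pairs) use the statement. Your reading is the only consistent one, and it is a genuine (if minor) imprecision in the lemma's phrasing that you were right to flag.
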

\begin{proof}
We first prove (i). If $h_j(x,y)=0$ for $x,y\in [0,z_0]$, then we conclude that there exists $u_0$ between $x$ and $y$ for which
$$
\re \xi_{j+}(u_0)=\re \xi_{3+}(u_0).
$$

It follows from the first equation in \eqref{branchpoints_nodes_t_1=0-1} that $z_0<1$. In particular, this implies
\begin{equation}\label{aux_equation_6}
 0<u_0<z_0<1.
\end{equation}

From \eqref{equalities_inequalities_real_line_xi_functions_t=0_3} we know that $\overline{\xi_{1+}(u_0)}=\xi_{2+}(u_0)$, so
$$
\re \xi_{2+}(u_0)=\re \xi_{1+}(u_0)=\xi_3(u_0).
$$

For $t_1=0$, the coefficient of $\xi^2$ in \eqref{spectral_curve} is $-z^2$. Using Vieta's relations we hence conclude 
$$
u_0^2=\xi_{1+}(u_0)+\xi_{2+}(u_0)+\xi_3(u_0)=3\xi_3(u_0).
$$

Plugging in the pair $(\xi_3(u_0),u_0)=(u_0^2/3,u_0)$ back to \eqref{spectral_curve}, we see that $u_0$ must be a root of
\begin{equation*}
\phi(u)=\frac{2}{27}u^6-\left( \frac{2}{3}-t_0 \right)u^3-A=0,
\end{equation*}
where the coefficient $A$, given explicitly in \eqref{A_for_t=0}, is positive because $t_0\in (0,1/8)$. Due to the rotational symmetry $\phi(\omega u)=\phi(u)$, the 
polynomial $\phi$ has at most two real roots. Straightforward computations show
$$
\phi(0)=-A<0,\qquad \phi(1)=t_0-A-\frac{16}{27}<0.
$$
Since the degree of $\phi$ is even and its leading coefficient is positive, the inequalities above imply that $\phi$ does not have roots on the interval $[0,1]$, 
thus $u_0\in \R\setminus [0,1]$. 
But this is in contradiction with \eqref{aux_equation_6}.

To get (ii), we first note that arguments similar as the analysis above show that $\phi$ has a zero in each of the intervals $(x_1,y_1)$ and $(x_2,y_2)$. On the other hand, the 
inequality and comments above also show that $\phi$ has exactly one zero on $(-\infty,0)$, so this zero must belong to both intervals $(x_1,y_1)$ and $(x_2,y_2)$.
\end{proof}

For $(t_0,t_1)\in \mathcal F_1$ and recalling the definition of the function germ $Q$ in \eqref{definition_function_germ_Q} and the sheet structure described in 
Section~\ref{section_sheet_structure_2}, the quantities
\begingroup
\allowdisplaybreaks
\begin{align}
\tau_1 & = \re \int_{z_2^{(3)}}^{z_0^{(3)}}Q(s)ds =\re \int_{z_2}^{z_0}(\xi_1(s)-\xi_2(s))ds, \label{width_tau_1}\\
\tau_2 & = \re \int_{z_0^{(2)}}^{z_2^{(2)}}Q(s)ds =\re \int_{z_0}^{z_2}(\xi_1(s)-\xi_3(s))ds, \label{width_tau_2}\\
\tau_3 & = \re \int_{z_2^{(3)}}^{\hat z_2^{(3)}}Q(s)ds = \re \int_{z_2}^{\hat z_2}(\xi_1(s)-\xi_2(s))ds, \label{width_tau_3}\\
\tau_4 & = \re \int_{z_2^{(3)}}^{z_1^{(2)}}Q(s)ds = \re \int_{z_2}^{x_*}(\xi_1(s)-\xi_2(s))ds+\re \int_{x_{*}}^{z_1}(\xi_1(s)-\xi_3(s))ds, \label{width_tau_4}\\
\tau_5 & = \int_{z_0^{(2)}}^{\hat z_0^{(2)}}Q(s)ds = \int_{z_0}^{\hat z_0}(\xi_1(s)-\xi_3(s))ds, \label{width_tau_5}
\end{align}%
\endgroup
are of interest for what comes later. In the formulas above, the paths of integration are taken in $\C\setminus (-\infty,z_*]\cup\Sigma_*$ and $x_*$ is any point in the 
interval $(-\infty,z_*)$: the value $\tau_4$ does not depend on the precise choice of $x_*$, as it is indicated by the first integral defining it. 

What is important here is that $\tau_j$, $j=1,\hdots,5$, does not vanish for $(t_0,t_1)\in \mathcal F_1$. The analysis of these quantities is carried out in the 
Appendix~\ref{appendix_widths_precritical}.

\subsection{Technical computations for the one-cut case}\label{subsection_technical_computations_supercritical}

It is a simple observation that if you fix $(\tilde t_0,\tilde t_1)\in \gamma_c$, then any pair of the form $(\tilde t_0,t_1)\in \mathcal F$, with $t_1$ larger 
than $\tilde t_1$, actually belongs to $\mathcal F_2$, as can be seen in Figure~\ref{phase_diagram}. In other words, $\mathcal F_2$ consists of points in $\mathcal F$ of the form 
$(\tilde t_0,t_1)$, where $(\tilde t_0,\tilde t_1)\in \gamma_c$  for some $\tilde t_1$ and $t_1>\tilde t_1$. We use this fact without further mention.

Recall that $w_0,w_1,w_2$ denote the zeros of $h'$ (see Lemma~\ref{lemma_location_zeros_derivative_h}) and, moreover, $w_1=w_2$ for $(t_0,t_1)\in\gamma_c$. For 
$(t_0,t_1)\in\mathcal F_2$, denote additionally by $\tilde w_j$ the simple root of $h(w)-z_j=h(w)-h(w_j)$, $j=0,1,2$, and extend $\tilde w_j$ for values 
$(t_0,t_1)\in \gamma_c$ by continuity. 

When $(t_0,t_1)\in\gamma_c$, we recall that $r,a_0$ are given in terms of $s\in (0,1/2)$ by \eqref{critical_curve_limiting_zero_distribution_a_r_plane}, so that 
$$
h'(w)=r-\frac{2 a_0 r}{w^2}-\frac{2 r^2}{w^3}=-\frac{s^3 (2 s-w) (s+w)^2}{w^3}.
$$
In particular $w_0=2s$, so
$$
h(w)-h(w_0)=\frac{s^3 (2 s-w)^2 (s+4 w)}{4 w^2}.
$$

Summarizing, the last two equations tell us that when $(t_0,t_1)\in \gamma_c$, the quantities $w_0, w_1,w_2$ and $\tilde w_0$ mentioned above are 
given in terms of the parameter $s\in(0,1/2)$ by
\begin{equation}\label{critical_zeros_critical_curve}
w_1=-s=w_2,\quad \tilde w_0=-\frac{s}{4},\quad w_0= 2s.
\end{equation}

For the choice $a_0=\frac{1}{4}$, $r=\frac{1}{32}$, corresponding to the pair $(t_0,t_1)=(\frac{639}{524288},\frac{191}{1024})\in\mathcal F_2$, we compute numerically
\begin{equation*}
\begin{array}{lll}
z_2\approx 0.18352, & w_2\approx -0.12932, & \tilde w_2 \approx -1.86844, \\
z_1\approx 0.20797, & w_1\approx -0.63351, & \tilde w_1 \approx -0.07786, \\
z_0\approx 0.29599, & w_0\approx 0.76284,  & \tilde w_0 \approx -0.05370.
\end{array}
\end{equation*}
so for this choice
\begin{equation}\label{inequalities_w_j}
 \tilde w_2 < w_1 < w_2 <\tilde w_1 <\tilde w_0 < w_0.
\end{equation}

If $(t_0,t_1)\in \mathcal F_2$, then none of the points $w_j,\tilde w_j$'s can pairwise coincide. Indeed, we already know from Lemma~\ref{lemma_location_zeros_derivative_h} that 
the points $w_0,w_1$ and $w_2$ do not pairwise coincide. If $\tilde w_j=\tilde w_k$ (or also $\tilde w_j=w_k$) for some pair $j,k$, then consequently
$$
z_j=h(w_j)=h(\tilde w_j)=h(\tilde w_k)=h(w_k)=z_k,
$$
which cannot occur on $\mathcal F_2$ (see Theorem~\ref{theorem_discriminant_spectral_curve}). Hence by continuity we conclude that \eqref{inequalities_w_j} holds for every pair 
$(t_0,t_1)\in\mathcal F_2$.

\begin{lem}\label{lemma_G_3}
 Fix $(\tilde t_0,\tilde t_1)\in \gamma_c$. Then
 $$
 \lim_{t_1\to \tilde t_1+} \frac{\partial w_1}{\partial t_1}=-\infty,\quad \lim_{t_1\to \tilde t_1+} \frac{\partial w_2}{\partial t_1}=+\infty
 $$
\end{lem}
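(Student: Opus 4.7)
The plan is to perform a local Puiseux analysis of the polynomial $\tilde h(w) = w^3 - 2 a_0 w - 2 r$ from \eqref{equation_zeros_derivative_h}, whose roots are $w_0, w_1, w_2$, at the confluence point $w_1 = w_2 = -s$.

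First I fix $t_0 = \tilde t_0$ and regard $\tilde h$ as a function of $(w, t_1)$ in a neighborhood of $(-s, \tilde t_1)$. Using \eqref{critical_zeros_critical_curve} together with the expressions $r = s^3$, $a_0 = \tfrac{3}{2}s^2$ on $\gamma_c$ from \eqref{critical_curve_limiting_zero_distribution_a_r_plane}, one checks that $\tilde h(-s, \tilde t_1) = 0$ and $\partial_w \tilde h(-s, \tilde t_1) = 3s^2 - 2 a_0 = 0$, while $\partial_w^2 \tilde h(-s, \tilde t_1) = -6s \neq 0$ (no triple root occurs, since the third root $w_0 = 2s$ is distinct from $-s$; cf.\ Lemma~\ref{lemma_zeros_derivative_h}). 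Consequently, the Taylor expansion at $(-s, \tilde t_1)$ reads
\begin{equation*}
\tilde h(w, t_1) = -3s\,(w+s)^2 + \alpha\,(t_1 - \tilde t_1) + \text{higher-order terms},
\end{equation*}
where $\alpha := \partial_{t_1} \tilde h(-s, \tilde t_1) = 2s\, \partial_{t_1} a_0 - 2\, \partial_{t_1} r$ is evaluated at $(\tilde t_0, \tilde t_1)$.

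The crucial step is to show $\alpha > 0$. By Lemma~\ref{lemma_location_zeros_derivative_h}, the roots $w_1, w_2$ are real and distinct on $\mathcal F_2$ (where $t_1 > \tilde t_1$ with $t_0 = \tilde t_0$ fixed), whereas they form a complex-conjugate pair on $\mathcal F_1$ (where $t_1 < \tilde t_1$). The local branch $\{\tilde h(w, t_1) = 0\}$ near $(-s, \tilde t_1)$ must therefore undergo a real-to-complex bifurcation at $t_1 = \tilde t_1$. In view of the Morse-type expansion above, such a bifurcation forces $\alpha \neq 0$; moreover, the requirement that the asymptotic relation $(w+s)^2 = \tfrac{\alpha}{3s}(t_1 - \tilde t_1)(1 + o(1))$ admit two real solutions precisely when $t_1 > \tilde t_1$ (recall $s > 0$) pins down the sign $\alpha > 0$. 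One may alternatively verify this directly by implicit differentiation of the system \eqref{system_change_coordinates_a}--\eqref{system_change_coordinates_b}, but this detour is not needed.

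Finally, the implicit function theorem in the variable $v = w + s$ applied to the Morse expansion yields, for $t_1 > \tilde t_1$ small,
\begin{equation*}
w_{1,2}(t_1) = -s \mp \sqrt{\tfrac{\alpha}{3s}}\,\sqrt{t_1 - \tilde t_1}\,(1 + o(1)),
\end{equation*}
with signs chosen according to the convention $w_1 < w_2$. Differentiating gives
\begin{equation*}
\frac{\partial w_{1,2}}{\partial t_1} = \mp \frac{1}{2}\sqrt{\tfrac{\alpha}{3s}}\,(t_1 - \tilde t_1)^{-1/2}\,(1 + o(1)) \longrightarrow \mp \infty
\end{equation*}
as $t_1 \to \tilde t_1 +$, which is the claim. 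The main obstacle is the non-vanishing of $\alpha$, which we circumvent by invoking the known root-structure bifurcation across $\gamma_c$ rather than by an explicit computation of $\partial_{t_1} r$ and $\partial_{t_1} a_0$.
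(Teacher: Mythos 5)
Your strategy (a Morse/Puiseux expansion of $\tilde h(w)=w^3-2a_0 w-2r$ at the confluence point $w_1=w_2=-s$) is sound, and you have correctly identified that the whole burden of the proof rests on showing $\alpha:=\partial_{t_1}\tilde h(-s,\tilde t_1)>0$. But the argument you give for $\alpha\neq 0$ has a genuine gap: the existence of a real-to-complex bifurcation across $\gamma_c$ does \emph{not} force $\alpha\neq 0$. Consider a hypothetical local normal form $-3s\,v^2+\gamma\,\tau^3+\cdots$ with $\alpha=0$ and $\gamma>0$: the equation $v^2\sim (\gamma/3s)\tau^3$ still produces two real roots for $\tau>0$ and a complex-conjugate pair for $\tau<0$ (so the bifurcation occurs), yet $v\sim\pm c\,\tau^{3/2}$ has derivative $\partial_\tau v\to 0$, and the conclusion of the lemma would be false. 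In other words, knowing which side of $\gamma_c$ gives real versus complex roots is compatible with $\alpha=0$; the blow-up of the derivatives is precisely equivalent to $\alpha\neq 0$, so it cannot be used to \emph{establish} it without circularity.

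The "detour" you set aside is in fact unavoidable and is exactly what the paper carries out. The paper's proof writes $\partial_{t_1}w_1=-\partial_{t_1}h'(w_1)/h''(w_1)$, passes to the coordinates $(r,a_0)$ via \eqref{system_change_coordinates_a}--\eqref{system_change_coordinates_b}, computes the Jacobian $D(t_0,t_1)/D(r,a_0)$ and the combination $\partial_r t_0\,\partial_{a_0}h'-\partial_{a_0}t_0\,\partial_r h'$ explicitly, and then specializes to $\gamma_c$ using $r=s^3$, $a_0=\tfrac32 s^2$ to obtain closed-form expressions whose signs are manifest. This yields the strict inequality $\partial_{t_1}h'(w_1)<0$ on $\gamma_c$ (equivalently $\alpha>0$, since $\alpha=\tfrac{(-s)^3}{r}\partial_{t_1}h'(-s)$), combined with $h''(w_1)<0$ on $\mathcal F_2$ and $h''(w_1)\to 0$ as $t_1\to\tilde t_1$. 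To repair your proof, replace the bifurcation reasoning for $\alpha\neq 0$ with this explicit computation (or an equivalent direct evaluation of $\partial_{t_1}a_0$ and $\partial_{t_1}r$ along $\gamma_c$); once $\alpha>0$ is in hand, the rest of your Puiseux argument is correct and delivers the same conclusion as the paper's implicit-function-theorem argument.
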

\begin{proof}
 We deal with the equality for $w_1$. The case $w_2$ is analogous.
 Let $(\tilde t_0,t_1)\in\mathcal F_2$, so $t_1>\tilde t_1$. In this situation, $w_1$ is a simple zero of $h'$, so
 \begin{equation}\label{aux_equation_17}
 \frac{\partial w_1}{\partial t_1}=-\frac{\frac{\partial h'}{\partial t_1}(w_1)}{h''(w_1)}.
 \end{equation}
 
We see $(t_0,t_1)\mapsto (r,a_0)$ as a change of coordinates, so that from the chain rule 
\begin{equation}\label{change_coordinates_gradient}
\left(\frac{\partial h'}{\partial t_0},\frac{\partial h'}{\partial t_1}\right)=\left(\frac{\partial h'}{\partial r},\frac{\partial h'}{\partial a_0} \right)\frac{D (r,a_0)}{D 
(t_0,t_1)},
\end{equation}
where $D(r,a_0)/D(t_0,t_1)$ is the Jacobian of the change of coordinates. From the Inverse Function Theorem,
\begin{align*}
\frac{D(r,a_0)}{D(t_0,t_1)} & =\left(\frac{D(t_0,t_1)}{D(r,a_0)}\right)^{-1} \\
			    & = \det\left(\frac{D(t_0,t_1)}{D(r,a_0)}\right)^{-1}
			    \left(
			    \begin{array}{ll}
			     \phantom{-}\partial t_1/\partial a_0 & -\partial t_0/\partial a_0 \\
			     -\partial t_1/\partial r & \phantom{-}\partial t_0/\partial r
			    \end{array}
			    \right).
\end{align*}
We use this expression to compute the second component in \eqref{change_coordinates_gradient}, arriving at
\begin{equation}\label{deformation_h_t}
\frac{\partial h'}{\partial t_1}=\det\left(\frac{D(t_0,t_1)}{D(r,a_0)}\right)^{-1} \left( \frac{\partial t_0}{\partial r}\frac{\partial h'}{\partial a_0} -\frac{\partial 
t_0}{\partial a_0}\frac{\partial h'}{\partial r}\right).
\end{equation}
 
Using \eqref{system_change_coordinates_a}--\eqref{system_change_coordinates_b}, we compute explicitly 
\begin{align*}
 \frac{\partial t_0}{\partial r} & = 2 \left(1-4 a_0^2\right) r-8 r^3, & \frac{\partial t_0}{\partial a_0} &= -8 a_0 r^2, \\
 \frac{\partial t_1}{\partial r} & = -8a_0r, & \frac{\partial t_1}{\partial a_0} & = 1-2 a_0-4 r^2,
\end{align*}
and after a lengthy calculation
$$
\det\left(\frac{D(t_0,t_1)}{D(r,a_0)}\right)=2 r \left(8 a_0^3-4 a_0^2 \left(4 r^2+1\right)+a_0 \left(8 r^2-2\right)+\left(1-4 r^2\right)^2\right).
$$

Similarly,
$$
\frac{\partial t_0}{\partial r}\frac{\partial h'}{\partial a_0} -\frac{\partial 
t_0}{\partial a_0}\frac{\partial h'}{\partial r}=-\frac{4 r^2 \left(8 a r-2 a w^3-4 r^2 w+w\right)}{w^3}
$$

We are interested in the values of $\partial h'/\partial t_1$ for parameters on $\gamma_c$, so using \eqref{critical_curve_limiting_zero_distribution_a_r_plane} and 
\eqref{critical_zeros_critical_curve} in the last two equations, we get 
$$
\frac{D(t_0,t_1)}{D(r,a_0)}=2 s^3 \left(1-s^2\right)^3 \left(2 s^2+1\right)^2 \left(1-4 s^2\right)>0,\quad s\in (0,1/2)
$$
and
$$
\frac{\partial t_0}{\partial r}\frac{\partial h'}{\partial a_0} -\frac{\partial t_0}{\partial a_0}\frac{\partial h'}{\partial r}=-4 s^4 (1 - 15 s^4 - 4 s^6)<0, \quad s\in (0,1/2).
$$
In virtue of \eqref{deformation_h_t}, we conclude
$$
\frac{\partial h'}{\partial t_1}(w_1)<0,\quad (\tilde t_0,\tilde t_1)\in \gamma_c.
$$

Additionally, since $w_1$ is the smallest root of the continuous function $h'$ on $(-\infty,0)$ and $h'(w)\to r>0$ when $w\to -\infty$, we conclude
$$
h''(w_1)<0,\quad (t_0,t_1)\in\mathcal F_2.
$$

When $t_1$ approaches $\tilde t_1$, $w_1$ becomes a double root of $h'$, and hence $h''(w_1)$ approaches zero. The result then follows by combining the last two inequalities with 
\eqref{aux_equation_17}.
\end{proof}

Having in mind \eqref{system_change_coordinates_b}, set 
\begin{align*}
G(w) & = h\left(\frac{1}{w}\right)-\frac{t_1+h(w)^2}{3} \\
     & = \frac{2 r^2 w^2}{3}+\frac{4 a_0 r w}{3}+\frac{2 a_0}{3}+\frac{r(3-4a_0^2-2r^2)}{3 w} \\
     & \qquad -\frac{2a_0r^2(1+2a_0)}{3 w^2}-\frac{4 a_0 r^3}{3 w^3}-\frac{r^4}{3 w^4}, \quad w\in \C.
\end{align*}

When $(t_0,t_1)\in\gamma_c$ we use \eqref{critical_curve_limiting_zero_distribution_a_r_plane} and \eqref{critical_zeros_critical_curve} to compute
\begin{equation}\label{aux_equation_20}
G(w_1)=0=G(w_2),\quad G(\tilde w_0)=-\frac{3}{8}s^2 \left(7 s^6+12 s^4+8\right)<0.
\end{equation}

\begin{lem}\label{lemma_G_1}
 If $( \tilde t_0, \tilde t_1)\in \gamma_c$, then $G'(w)<0$ on $(-\infty,0)$.
\end{lem}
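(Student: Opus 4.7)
My plan is to reduce the claim to a polynomial positivity statement via the parametrization of $\gamma_c$. Using \eqref{critical_curve_limiting_zero_distribution_a_r_plane} with $s \in (0, 1/2)$, we have $h'(w) = s^3 (w-2s)(w+s)^2/w^3$ and $h'(1/w) = s^3(1-2sw)(1+sw)^2$, as already employed in the derivation of \eqref{aux_equation_20}. Direct differentiation yields
\begin{equation*}
G'(w) = -\frac{h'(1/w)}{w^2} - \frac{2 h(w) h'(w)}{3} = -\frac{s^3}{3 w^3} B(w),
\end{equation*}
where $B(w) := 3 w (1-2sw)(1+sw)^2 + 2 h(w)(w - 2s)(w+s)^2$. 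Since $-s^3/(3w^3) > 0$ for $w < 0$, the claim is equivalent to $B(w) < 0$ on $(-\infty, 0)$.

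To analyze $B$, I would split according to the sign of $h(w)$. The factored expression for $h'$ implies $h'(w) \geq 0$ on $(-\infty, 0)$ with equality only at $w = -s$, so $h$ is nondecreasing there; using $h(-\infty) = -\infty$ and $h(0^-) = +\infty$, there is a unique $w_0^* \in (-\infty, 0)$ with $h(w_0^*) = 0$, and the explicit evaluation $h(-1/s) = s^2(\tfrac{1}{2} - 3 s^4 + s^6) > 0$ on $(0, 1/2)$ shows $w_0^* < -1/s$. On $[w_0^*, 0)$ where $h(w) \geq 0$, both summands of $B$ are $\leq 0$ by direct sign analysis (using $3w<0$, $1 - 2sw > 0$, $w - 2s < 0$, together with the squared factors), and they cannot vanish simultaneously, since the first summand vanishes only at $w = -1/s$, where $h(-1/s) > 0$ keeps the second strictly negative; hence $B(w) < 0$ strictly throughout this range.

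The essential case is $w \in (-\infty, w_0^*)$, where $h(w) < 0$ and the summands of $B(w)$ have opposite signs, so one must show the negative first summand dominates. Here I would expand $T(w) := -w^2 B(w)$ into the explicit polynomial
\begin{equation*}
T(w) = 4 s^3 w^6 + 6 s^2 w^5 - (3 - 9 s^4 - 2 s^6) w^3 + 6 s^5 (1 + 3 s^2) w^2 + 18 s^8 w + 4 s^9,
\end{equation*}
reducing the task to proving $T(w) > 0$ for $w < 0$ and $s \in (0, 1/2)$. The substitution $w = -s v$ with $v > 0$, after extracting $s^3$, rewrites this as $3 v^3 + s^4 A(v) + s^6 B_1(v) > 0$ with $A(v) := -6 v^5 - 9 v^3 + 6 v^2$ and $B_1(v) := 4 v^6 - 2 v^3 + 18 v^2 - 18 v + 4$. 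Viewing the left-hand side as a cubic in $\sigma = s^2 \in (0, 1/4]$ with sole interior critical point $\sigma_* = -2A/(3B_1)$ (present only when $A$ and $B_1$ have opposite signs), a brief case analysis on the signs of $A(v)$ and $B_1(v)$ reduces positivity to checking two polynomial inequalities in $v$ alone: the endpoint inequality $64 H(v, 1/2) = 4 v^6 - 24 v^5 + 154 v^3 + 42 v^2 - 18 v + 4 > 0$, and the interior-critical-point inequality $3 v^3 - 4 A(v)^3/(27 B_1(v)^2) > 0$ on the (restricted) region where $\sigma_*(v) \in (0, 1/4]$ is a minimum. The main obstacle is this last one-variable polynomial verification, which is routine (via Sturm's algorithm or explicit factorization with computer algebra) but computationally tedious, in the same spirit as the analyses in Section~\ref{analysis_polynomial_p}.
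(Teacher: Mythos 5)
Your algebra through the polynomial is correct: the factorization $h'(w) = s^3(w-2s)(w+s)^2/w^3$ on $\gamma_c$, the identity $G'(w) = -\frac{s^3}{3w^3}B(w)$, and the expansion of $T(w) := -w^2B(w)$ all check out, and $T(w)$ agrees exactly with the polynomial $p(w)$ that appears in the paper's own proof. From there the two arguments genuinely diverge. The paper computes $\disc(p;w)$, verifies (numerically) that its degree-$30$ factor in $s$ has no zeros on $(0,1/2)$, factors $p$ explicitly at $s=\tfrac12$, and finishes by continuity of the roots in $s$. You instead substitute $w=-sv$, write $T(-sv) = s^3\bigl(3v^3 + s^4 A(v) + s^6 B_1(v)\bigr)$, and for each fixed $v>0$ analyze the cubic $f(\sigma) = 3v^3 + \sigma^2 A + \sigma^3 B_1$ on $\sigma = s^2 \in (0,\tfrac14]$. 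This is a legitimate alternative route — both ultimately defer to a computer-algebra verification of one-variable polynomial positivity, and your reduction has the advantage of isolating exactly where the hard inequality lives.

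There are two defects. First, a sign error: the value at the interior stationary point $\sigma_* = -2A/(3B_1)$ is $f(\sigma_*) = 3v^3 + \frac{4A^3}{27B_1^2}$, not $3v^3 - \frac{4A^3}{27B_1^2}$. This is not cosmetic: the only regime in which $\sigma_*$ is an interior minimum is $A<0$, $B_1>0$, and then $A^3<0$, so the correct expression $3v^3 + 4A^3/(27B_1^2)$ is a genuine competition of terms, whereas the version you wrote is trivially positive and would make the interior check vacuous — and hence the claimed reduction incorrect. Second, the proposal stops at a plan: the two one-variable checks (the endpoint polynomial $64H(v,\tfrac12)>0$ and the cleared interior inequality $81v^3B_1(v)^2 + 4A(v)^3 > 0$ on the set where $\sigma_*\in(0,\tfrac14)$) are never actually carried out. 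Since the paper's proof also outsources to a numerical check of a degree-$30$ polynomial, this is not a methodological objection, but be aware that the deferred verification \emph{is} the lemma. Finally, the preliminary case split on the sign of $h(w)$ (the interval $[w_0^*,0)$ analysis) is redundant: proving $T(w)>0$ for all $w<0$ already covers it, so that paragraph can be deleted.
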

\begin{proof}
On $\gamma_c$, the quantities $a_0$ and $r$ are explicitly given in terms of $s\in(0,1/2)$ by \eqref{critical_curve_limiting_zero_distribution_a_r_plane}. Substituting 
these values in the definition of $G$, we can rewrite
\begin{align*}
G'(w)& =\frac{s^3}{3w^5}p(w),\\
p(w)& =4s^3w^6 + 6s^2w^5+(2s^6+9s^4-3)w^3+6s^5(1+3s^2)w^2+18s^8w+4s^9.
\end{align*}

The discriminant of $p$ can be decomposed as
$$
\disc(p;w)=c\; (s^2-1)^3s^{24}(s^{30}+\hdots),
$$
where the polynomial between parentheses has rational coefficients and is symbolically computed with Mathematica. Its zeros are computed numerically and verified to 
not belong to $(0,1/2)$.

As a consequence, we conclude that $p$ - and hence $G'$ - does not have zeros with multiplicity on $(-\infty,0)$. For $s=1/2$,
$$
p(w) = \frac{(w-1) \left(64 w^5+256 w^4+256 w^3-52 w^2-10 w-1\right)}{1024}
$$
and the set of zeros (with nonnegative imaginary part) of the polynomial between parentheses above is numerically computed to be
$$
\{-4.14697+1.16757i , -0.144709+0.158865 i, 0.58336 \}.
$$
so in particular $p(w)$ has no negative zeros for $s=1/2$, and the same holds true for $G'$. Since $G'$ is never zero at $w=0$ and, as we just observed, $G'$ does not have zeros 
with multiplicity, by continuity of the zeros of $G'$ with respect of $s$ we get that $G'$ is never zero on $(-\infty,0)$. The result then follows from the extra observation that 
$G'(w)\to -\infty$ when $w\to -\infty$. 
\end{proof}

\begin{lem}\label{lemma_G_4}
 Fix $(\tilde t_0,\tilde t_1)\in\gamma_c$. There exists $\varepsilon >0$ such that for any pair $(\tilde t_0,t_1)\in\mathcal F_2$ with $0<t_1-\tilde t_1<\varepsilon$, the 
following inequalities hold true
 $$
 G(w_2)<0<G(w_1).
 $$
\end{lem}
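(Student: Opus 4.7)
The plan is to combine \eqref{aux_equation_20}, Lemma \ref{lemma_G_1}, and Lemma \ref{lemma_G_3} through a Taylor-type expansion of $G$ near the coalescence point. Throughout, fix $t_0 = \tilde t_0$ and denote by $G(w;t_1)$ the function $G$, emphasizing the smooth dependence on $t_1$ through $r(t_0,t_1)$ and $a_0(t_0,t_1)$. At $t_1 = \tilde t_1$ one has $w_1(\tilde t_1) = w_2(\tilde t_1) = -s$, and by \eqref{aux_equation_20}, $G(-s; \tilde t_1) = 0$.

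First I would write, for $t_1 > \tilde t_1$,
\begin{equation*}
G(w_j(t_1); t_1) = \int_{\tilde t_1}^{t_1} G'(w_j(u); u)\, w_j'(u)\, du \;+\; \int_{\tilde t_1}^{t_1} \partial_u G(w_j(u); u)\, du.
\end{equation*}
The second integral is $O(t_1 - \tilde t_1)$ by continuity of $\partial_u G$. Using that $w_j$ is a simple root of $h'$ for $u$ slightly above $\tilde t_1$ (hence smooth) and strictly monotonic by Lemma \ref{lemma_G_3}, I would change variables $v = w_j(u)$ in the first integral to obtain
\begin{equation*}
\int_{-s}^{w_j(t_1)} G'(v;\, u_j(v))\, dv,
\end{equation*}
where $u_j$ is the inverse. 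Since $v \to -s$ and $u_j(v) \to \tilde t_1$ as $t_1 \to \tilde t_1^+$, the integrand tends to $G'(-s; \tilde t_1)$, which is strictly negative by Lemma \ref{lemma_G_1}. Hence for $\varepsilon$ small enough this integral equals $G'(-s; \tilde t_1)\,(w_j(t_1) + s) + o(|w_j(t_1) + s|)$.

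Next I would compare the magnitudes of the two contributions. By the mean value theorem, $|w_j(t_1) + s| = |w_j'(u^*)|\,(t_1 - \tilde t_1)$ for some $u^* \in (\tilde t_1, t_1)$, and Lemma \ref{lemma_G_3} yields $|w_j'(u^*)| \to \infty$ as $t_1 \to \tilde t_1^+$. Therefore $t_1 - \tilde t_1 = o(|w_j(t_1) + s|)$, so the $\partial_u G$ integral is negligible compared with the $G'\,w_j'$ integral, and
\begin{equation*}
G(w_j(t_1); t_1) = G'(-s; \tilde t_1)\,\bigl(w_j(t_1) + s\bigr)\,\bigl(1 + o(1)\bigr).
\end{equation*}

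Finally, Lemma \ref{lemma_G_3} also pins down the signs: $w_1$ decreases from $-s$, so $w_1(t_1) + s < 0$, while $w_2$ increases from $-s$, so $w_2(t_1) + s > 0$. Combined with $G'(-s; \tilde t_1) < 0$ this gives $G(w_1; t_1) > 0 > G(w_2; t_1)$ for $0 < t_1 - \tilde t_1 < \varepsilon$. The main technical point will be the passage $t_1 - \tilde t_1 = o(|w_j(t_1) + s|)$, which isolates the dominant contribution; it follows cleanly from the blow-up rate of $w_j'$ via the mean value theorem, so no serious obstacle is expected beyond the Taylor bookkeeping.
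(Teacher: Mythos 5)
Your proposal is correct and follows essentially the same route as the paper: both arguments hinge on Lemma \ref{lemma_G_3} (the blow-up of $w_j'$), the sign of $G'$ near $w=-s$ (which the paper obtains directly from $h'(-1/s)>0$, equivalent to $G'(-s;\tilde t_1)<0$ since $h'(w_j)=0$), and the vanishing $G(-s;\tilde t_1)=0$ from \eqref{aux_equation_20}. The paper shows $\partial_{t_1}\bigl(G(w_j)\bigr)\to(-1)^{j+1}\infty$ and invokes monotonicity, while you integrate the same total derivative, isolate the dominant $G'\cdot w_j'$ contribution via the mean value theorem, and read off the signs --- a repackaging rather than a different method.
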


\begin{proof}
 We will prove
 \begin{equation}\label{aux_equation_18}
 \lim_{t_1\to \tilde t_1+}\frac{\partial }{\partial t_1}(G(w_1))=+\infty,\quad \lim_{t_1\to \tilde t_1+}\frac{\partial }{\partial t_1}(G(w_2))=-\infty.
 \end{equation}
 
 By continuity, it then follows that $t_1\mapsto G(w_1)$ ($t_1\mapsto G(w_2)$) is increasing (decreasing) for $t_1$ sufficiently close to $\tilde t_1$. Since 
$G(w_1)=0=G(w_2)$ for $t_1=\tilde t_1$ (see \eqref{aux_equation_20}), the result will follow.
 
 From the definition of $G$ and $j=1,2$,
 \begin{equation}\label{aux_equation_19}
 \frac{\partial }{\partial t_1}(G(w_j))=-\frac{1}{w_j^2}h'\left(\frac{1}{w_j}\right)\frac{\partial w_j}{\partial t_1}+\frac{\partial h}{\partial 
t_1}\left(\frac{1}{w_j}\right)-\frac{2}{3}h(w_j)\frac{\partial h}{\partial t_1}(w_j)-\frac{1}{3},
 \end{equation}
where we also used $h'(w_j)=0$. Additionally, we use \eqref{critical_zeros_critical_curve} to get 
$$
h'\left(\frac{1}{w_j}\right)\to h'\left(-\frac{1}{s}\right)=2s^9-3s^7+s^3>0,\quad \mbox{ as } t_1\to \tilde t_1.
$$

From Lemma~\ref{lemma_G_3} and the inequality above, we know that the first term on the right hand side of \eqref{aux_equation_19} goes to $(-1)^{j+1}\infty$ when $t_1\searrow 
\tilde t_1$. Since $w_j\to -s\neq 0$ in the limit $t_1\searrow \tilde t_1$, the remaining terms remain bounded, concluding the proof of \eqref{aux_equation_18}.
\end{proof}

\begin{prop}\label{proposition_G}
 Suppose $(\tilde t_0,\tilde t_1)\in\gamma_c$. There exists $\varepsilon>0$ such that for every choice $(\tilde t_0,t_1)\in\mathcal F_2$ with $0<t_1-\tilde t_1<\varepsilon$, the 
function $G$ 
has no zeros on the intervals $(-\infty,w_1]$ and $[w_2,\tilde w_0]$.
\end{prop}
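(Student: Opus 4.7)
My approach would combine the strict monotonicity of $G$ at $(\tilde t_0,\tilde t_1)\in\gamma_c$ from Lemma~\ref{lemma_G_1} with the sign information $G(w_2)<0<G(w_1)$ for $t_1$ slightly past $\tilde t_1$ from Lemma~\ref{lemma_G_4}, and then propagate these properties to a small neighborhood of $\tilde t_1$ via a uniform-in-$t_1$ continuity argument.

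First I would pin down the picture on $\gamma_c$. The explicit form of $G$ shows $G(w)\to +\infty$ as $w\to -\infty$ (dominant term $\tfrac{2}{3}r^2w^2$) and $G(w)\to -\infty$ as $w\to 0^-$ (dominant term $-r^4/(3w^4)$). Combined with Lemma~\ref{lemma_G_1} and the equality $G(-s)=0$ from \eqref{aux_equation_20}, this forces $-s$ to be the unique zero of $G$ on $(-\infty,0)$, with $G>0$ on $(-\infty,-s)$ and $G<0$ on $(-s,0)$.

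Next I would pick $M$ large and $\delta\in(0,s/8)$ small so that at $t_1=\tilde t_1$ we have $G(w)\geq 1$ on $(-\infty,-M]$ and $G(w)\leq -1$ on $[-\delta,0)$. Since $r$ and $a_0$ depend continuously (indeed smoothly) on $t_1$ and are bounded away from $0$ near $\tilde t_1$, both $G$ and $G'$ are continuous in $(w,t_1)$. Compactness of $[-M,-\delta]$ then produces $\varepsilon_1>0$ such that for every $|t_1-\tilde t_1|<\varepsilon_1$: (a) $G'(w)<0$ uniformly on $[-M,-\delta]$, (b) $G\geq 1/2$ on $(-\infty,-M]$, using that the quadratic term in $G$ dominates all lower-order contributions for $|w|$ large, and (c) $G\leq -1/2$ on $[-\delta,0)$, using the dominance of the quartic pole at $0$.

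Finally, by continuity $w_1,w_2\to -s$ and $\tilde w_0\to -s/4$ as $t_1\to\tilde t_1^+$, so shrinking $\varepsilon_1$ to some $\varepsilon\in(0,\varepsilon_1]$ I can assume $w_1,w_2,\tilde w_0\in(-M,-\delta)$ and, by Lemma~\ref{lemma_G_4}, also $G(w_2)<0<G(w_1)$. Strict decrease of $G$ on $[-M,-\delta]$ then yields $G(w)\geq G(w_1)>0$ for $w\in[-M,w_1]$, which with (b) gives $G>0$ on all of $(-\infty,w_1]$; symmetrically, on the interval $[w_2,\tilde w_0]$ (valid by \eqref{inequalities_w_j}) one has $G(w)\leq G(w_2)<0$. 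Hence $G$ vanishes on neither interval. The only mildly delicate aspect is securing the uniform estimates at the two asymptotic ends $w\to -\infty$ and $w\to 0^-$, but this follows immediately from the explicit leading behavior of $G$ together with the continuity of $(r,a_0)$ in $t_1$.
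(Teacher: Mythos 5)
Your proof is correct and follows essentially the same strategy as the paper: combine the strict monotonicity from Lemma~\ref{lemma_G_1} (propagated to nearby $t_1$) with the sign information from Lemma~\ref{lemma_G_4} to trap the single zero of $G$ in $(w_1,w_2)$. The only difference is that you spell out the uniformity at the non-compact ends $w\to-\infty$ and $w\to 0^-$ explicitly via the leading terms $\tfrac{2}{3}r^2w^2$ and $-r^4/(3w^4)$, whereas the paper simply invokes ``continuity'' to extend $G'<0$ to all of $(-\infty,\tilde w_0]$; your elaboration is a valid and slightly more careful way to justify that step.
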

\begin{proof}
Since $\tilde w_0<0$, it follows from continuity and Lemma~\ref{lemma_G_1} that
$$
 G'(w)<0, \quad w\in (-\infty,\tilde w_0], \quad 0<t_1-\tilde t_1<\varepsilon,
$$
so if $0<t_1-\tilde t_1<\varepsilon$, then $G$ is strictly decreasing on the interval $(-\infty,\tilde w_0]$, and as a consequence $G$ has 
at most one zero in this interval. From Lemma~\ref{lemma_G_4}, this zero has to be on the subinterval $(w_1,w_2)$, thus $G$ is never zero on 
$(-\infty,w_1]\cup[w_2,\tilde w_0]$.
\end{proof}

\begin{cor}\label{corollary_zeros_xi_G}
 Suppose $(\tilde t_0,\tilde t_1)\in\gamma_c$. There exists $\varepsilon>0$ such that for every choice $(\tilde t_0,t_1)\in\mathcal F_2$ with $0<t_1-\tilde t_1<\varepsilon$, the 
functions
 $$
 z\mapsto \xi_1(z)-\frac{t_1+z^2}{3},\quad z\mapsto \xi_2(z)-\frac{t_1+z^2}{3},
 $$
 do not vanish, respectively, on the intervals $(-\infty,z_2], [z_1,z_0]$.
\end{cor}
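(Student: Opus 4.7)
The plan hinges on the algebraic identity that, under the rational parametrization $(\xi,z)=(h(1/w),h(w))$ of the spectral curve, one has
$$
\xi-\frac{t_1+z^2}{3}=h\!\left(\frac{1}{w}\right)-\frac{t_1+h(w)^2}{3}=G(w),
$$
so non-vanishing of $\xi_j(z)-(t_1+z^2)/3$ on a real $z$-interval reduces to non-vanishing of $G$ on its image under the appropriate branch of $h^{-1}$. I would then show that for each target interval the relevant preimage lies inside $(-\infty,w_1]\cup[w_2,\tilde w_0]$ and invoke Proposition~\ref{proposition_G}.

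For $\xi_1$ on $(-\infty,z_2]$, I would first note from \eqref{equalities_inequalities_real_line_xi_functions_supercritical} that $\xi_1$ is real on this interval, so the associated $w$ is real. At $z=z_2$ the equation $h(w)=z_2$ has the double root $w_2$ (which is absorbed by the fused pair $\xi_2(z_2)=\xi_3(z_2)=h(1/w_2)$, per the sheet structure of Section~\ref{section_sheet_structure_2}) and the simple root $\tilde w_2$, so by continuity $\xi_1(z_2)=h(1/\tilde w_2)$. Since $h'$ has no roots on $(-\infty,w_1)$, the function $h$ restricts there to a strictly increasing bijection onto $(-\infty,z_1)$, and hence the $\xi_1$-preimage of $(-\infty,z_2]$ is exactly $(-\infty,\tilde w_2]$. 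By \eqref{inequalities_w_j}, $\tilde w_2<w_1$, so this preimage lies in $(-\infty,w_1]$, where Proposition~\ref{proposition_G} gives $G\neq 0$.

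The argument for $\xi_2$ on $[z_1,z_0]$ is parallel. Again \eqref{equalities_inequalities_real_line_xi_functions_supercritical} gives $\xi_2\in\mathbb{R}$ on this interval. At the endpoints, the sheet structure forces the fused values $\xi_1(z_j)=\xi_3(z_j)$ ($j=0,1$) to absorb the double roots $w_0,w_1$, whence $\xi_2(z_j)=h(1/\tilde w_j)$. Since $h'$ has no roots on $(w_2,0)$ and $h$ increases from $z_2$ at $w_2$ to $+\infty$ as $w\to 0^-$, the $\xi_2$-preimage of $[z_1,z_0]$ is the real interval $[\tilde w_1,\tilde w_0]$, which by \eqref{inequalities_w_j} sits inside $[w_2,\tilde w_0]$, and Proposition~\ref{proposition_G} again concludes. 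The only delicate point in the entire argument is the identification of the correct real branch of $h^{-1}$ on each sheet, which amounts to bookkeeping with \eqref{inequalities_w_j} and the monotonicity structure of $h$ on the real line already recorded in Section~\ref{section_algebraic_quantities}.
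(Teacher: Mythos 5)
Your proposal is correct and follows the same route as the paper's proof: conjugating by the rational parametrization $(\xi,z)=(h(1/w),h(w))$ to reduce the non-vanishing of $\xi_j(z)-(t_1+z^2)/3$ to that of $G$ on the intervals $(-\infty,\tilde w_2]$ and $[\tilde w_1,\tilde w_0]$, then applying the containments from \eqref{inequalities_w_j} and Proposition~\ref{proposition_G}. The paper states this tersely; you have merely fleshed out the bookkeeping (which branch of $h^{-1}$ is real, the monotonicity of $h$ on $(-\infty,w_1)$ and $(w_2,0)$) that underlies the claimed identification of preimages.
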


\begin{proof}
Under the mapping $z=h(w)$, the functions 
$$z\mapsto \xi_1(z)-\frac{t_1+z^2}{3},\quad  z\in (-\infty,z_2], \qquad z\mapsto 
\xi_2(z)-\frac{t_1+z^2}{3}, \quad z\in [z_1,z_0]
$$
transform to 
$$
w\mapsto G(w),\quad w\in (-\infty,\tilde w_2], \qquad w\mapsto G(w),\quad w\in [\tilde w_1,\tilde w_0]
$$
respectively, where we recall that $\tilde w_j$, $j=1,2$, is the simple root of $h(w)=z_j$. From \eqref{inequalities_w_j} we know $(-\infty,\tilde w_2]\subset (-\infty,w_1], 
[\tilde w_1,\tilde w_0]\subset [w_2,\tilde w_0]$, and the result 
follows from Proposition~\ref{proposition_G}.
\end{proof}

Similarly as it is done in \eqref{definition_function_h_j} for $t_1=0$, we now define
\begin{equation*}
h_j(x,y) =\int_{x}^y (\re\xi_{j+}(s)-\re\xi_{3+}(s))ds,\quad x,y\in \R, \quad j=1,2,\quad (t_0,t_1)\in\mathcal F_2.
\end{equation*}

The next result is the analogous of Lemma~\ref{lemma_h_j_t=0} for the one-cut case, and 
its proof is also similar.

\begin{lem}\label{lemma_zeros_h_j}
 Suppose $(\tilde t_0,\tilde t_1)\in \gamma_c$. There exists $\varepsilon>0$ such that for any pair $(\tilde t_0,t_1)\in\mathcal F$ with $0<t_1-\tilde t_1<\varepsilon$, the 
following properties hold true.
\begin{itemize}
 \item[(i)] If $x,y\in (-\infty,z_2]$, $x\neq y$, then $h_1(x,y)\neq 0$.
 \item[(ii)] If $x,y\in [z_1,z_0]$, $x\neq y$, then $h_2(x,y)\neq 0$.
\end{itemize}

\end{lem}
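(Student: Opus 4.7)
The proof will mirror the strategy used for Lemma~\ref{lemma_h_j_t=0} in the $t_1=0$ case, with the explicit algebraic argument (there based on rotational symmetry) replaced by an appeal to Corollary~\ref{corollary_zeros_xi_G}, which is already tailored to the present regime. The key observation is that on each of the two intervals under consideration exactly one of the three solutions $\xi_1,\xi_2,\xi_3$ is real and the other two form a complex conjugate pair, so Vieta's relations reduce the condition $\re\xi_{j+}(u_0)=\re\xi_{3+}(u_0)$ to a single real identity for the real branch.

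For part (i), the last line of \eqref{equalities_inequalities_real_line_xi_functions_supercritical} says that on $(-\infty,z_2)$ the branch $\xi_1$ is real-valued while $\xi_{2+}=\overline{\xi_{3+}}$. Writing the spectral curve \eqref{spectral_curve} as a monic cubic in $\xi$, Vieta's formulae yield $\xi_1+\xi_2+\xi_3=z^2+t_1$, whence on this interval
\[
\re\xi_{1+}(x)-\re\xi_{3+}(x)=\tfrac{1}{2}\bigl(3\xi_1(x)-x^2-t_1\bigr).
\]
If $h_1(x,y)=0$ for distinct $x,y\in(-\infty,z_2]$, continuity of the integrand together with the mean value theorem for integrals would furnish a point $u_0$ strictly between $x$ and $y$ at which the right-hand side vanishes, i.e.\ $\xi_1(u_0)-(t_1+u_0^2)/3=0$. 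This directly contradicts Corollary~\ref{corollary_zeros_xi_G} provided $t_1-\tilde t_1>0$ is small enough, proving (i).

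For part (ii), the argument is parallel. The third line of \eqref{equalities_inequalities_real_line_xi_functions_supercritical} now identifies $\xi_2$ as the real branch and gives $\xi_{1+}=\overline{\xi_{3+}}$ on $(z_1,z_0)$, so Vieta yields
\[
\re\xi_{2+}(x)-\re\xi_{3+}(x)=\tfrac{1}{2}\bigl(3\xi_2(x)-x^2-t_1\bigr).
\]
A zero of $h_2$ on $[z_1,z_0]$ thus produces a zero of $\xi_2(z)-(t_1+z^2)/3$ in the same interval, again contradicting Corollary~\ref{corollary_zeros_xi_G}.

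I do not anticipate a serious obstacle, since the substantive analytic work has already been absorbed into Corollary~\ref{corollary_zeros_xi_G} (and hence into Proposition~\ref{proposition_G} and Lemmas~\ref{lemma_G_1}, \ref{lemma_G_3} and \ref{lemma_G_4}). The only point to verify is that a single $\varepsilon>0$ can be chosen so that both (i) and (ii) hold simultaneously, but this is immediate because Corollary~\ref{corollary_zeros_xi_G} already provides one such $\varepsilon$ that controls both functions $\xi_1(z)-(t_1+z^2)/3$ and $\xi_2(z)-(t_1+z^2)/3$ on their respective intervals.
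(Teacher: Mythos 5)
Your argument is correct and is essentially identical to the paper's: both proofs reduce a hypothetical zero of $h_j(x,y)$ to the existence of $u_0 \in J_j$ at which $\xi_j(u_0) = (t_1+u_0^2)/3$ (via the conjugation relations \eqref{equalities_inequalities_real_line_xi_functions_supercritical} and Vieta's formula $\xi_1+\xi_2+\xi_3 = z^2+t_1$), and then invoke Corollary~\ref{corollary_zeros_xi_G} for a contradiction. You merely make the intermediate identity $\re\xi_{j+}(x)-\re\xi_{3+}(x)=\tfrac{1}{2}\bigl(3\xi_j(x)-x^2-t_1\bigr)$ explicit, where the paper argues from the equality $\re\xi_{j+}(u_0)=\re\xi_{3+}(u_0)$ directly; the remark about a single $\varepsilon$ working for both parts is also correct since Corollary~\ref{corollary_zeros_xi_G} supplies it.
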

\begin{proof}
 For simplicity, denote $J_1=(-\infty,z_2]$, $J_2=[z_1,z_0]$. If $h_j(x,y)=0$ for some points $x,y\in 
J_j$, $x\neq y$, then there exists a point $u_0\in J_j$ for which
 $$
 \re \xi_{j+}(u_0)= \re\xi_{3+}(u_0).
 $$
 
Using the equalities in \eqref{equalities_inequalities_real_line_xi_functions_supercritical} we conclude
 $$
 \re \xi_{1+}(u_0)= \re\xi_{2+}(u_0)=\re\xi_{3+}(u_0).
 $$

According to the sheet structure constructed in Section~\ref{section_sheet_structure_2}, the function $\xi_j$ is real on $J_j$ and continuous across $J_j$, so the equality above 
implies
$$
 \re \xi_{1+}(u_0)+ \re\xi_{2+}(u_0)+\re\xi_{3+}(u_0)=3\xi_j(u_0).
$$

On the other hand, the sum $\xi_1+\xi_2+\xi_3$ is equal to minus the coefficient of $\xi^2$ in \eqref{spectral_curve}, that is,
$$
\xi_j(u_0)=\frac{1}{3}(\re \xi_{1+}(u_0)+ \re\xi_{2+}(u_0)+\re\xi_{3+}(u_0))=\frac{t_1+u_0^2}{3}.
$$

This last equality implies that the function 
$$
z\mapsto \xi_j(z)-\frac{t_1+z^2}{3}
$$
is zero at the point $u_0\in J_j$, contradicting Corollary~\ref{corollary_zeros_xi_G}.
\end{proof}

In the same spirit as at the end of Section~\ref{subsection_technical_computations_supercritical}, for $(t_0,t_1)\in \mathcal F_2$ we introduce the quantities
\begin{align}
\tau_1 & = \re \int_{z_2^{(3)}}^{z_1^{(3)}}Q(s)ds = \re \int_{z_2}^{z_1}(\xi_1(s)-\xi_2(s))ds, \label{width_tau_1_post}\\
\tau_2 & = \re \int_{z_2^{(2)}}^{z_1^{(2)}}Q(s)ds = \re \int_{z_2}^{z_1}(\xi_1(s)-\xi_3(s))ds, \label{width_tau_2_post}\\
\tau_3 & = \re \int_{z_2^{(3)}}^{\hat z_2^{(3)}}Q(s)ds = \re \int_{z_2}^{\hat z_2}(\xi_1(s)-\xi_2(s))ds, \label{width_tau_3_post}\\
\tau_4 & = \re \int_{z_2^{(1)}}^{z_1^{(1)}}Q(s)ds = \re \int_{z_2}^{z_1}(\xi_2(s)-\xi_3(s))ds, \label{width_tau_4_post}\\
\tau_5 & = \re\int_{z_0^{(2)}}^{\hat z_0^{(2)}}Q(s)ds = \re\int_{z_0}^{\hat z_0}(\xi_1(s)-\xi_3(s))ds, \label{width_tau_5_post}\\
\tau_6 & = -\re\int_{z_1^{(1)}}^{z_0^{(1)}}Q(s)ds = \re\int_{z_1}^{z_0}(\xi_3(s)-\xi_2(s))ds, \label{width_tau_6_post}
\end{align}%

The analysis of these quantities is carried over in the Appendix~\ref{appendix_widths_supercritical}. The important fact for what comes later is that these quantities never vanish 
for $(t_0,t_1)\in\mathcal F_2$.

\subsection{Quadratic differential on the spectral curve: general principles}\label{subsection_critical_graph_general_principles}

For a point $p\in \overline\C$, recall that $p^{(j)}$ denotes its preimage under the canonical projection $\pi:\mathcal R \mapsto \overline \C$ that lies on 
the sheet $\mathcal R_j$. Additionally, the points $z_j,\hat z_j$, $j=0,1,2,$ are given by 
Theorem~\ref{theorem_discriminant_spectral_curve}.

We use Theorem~\ref{theorem_discriminant_spectral_curve}, equations \eqref{zeros_difference_xi_functions_F_1}, \eqref{zeros_difference_xi_functions_F_2} and the asymptotics 
\eqref{asymptotics_xi} to find all critical points of $\varpi$. For 
$(t_0,t_1)\in\mathcal F_1$ they are as follows
\begin{itemize}
 \item Double zeros at the branch points $z_0^{(1)}=z_0^{(3)}$, $z_1^{(1)}=z_1^{(2)}$, $z_2^{(1)}=z_2^{(2)}$;
 
 \item Double zeros at $\hat z_0^{(2)}$, $\hat z_1^{(3)}$, $\hat z_2^{(3)}$;
 
 \item Simple zeros at $z_0^{(2)}$, $z_1^{(3)}$, $z_2^{(3)}$;
 
 \item A pole of order $5$ at $\infty^{(1)}$ and a pole of order $14$ at $\infty^{(2)}=\infty^{(3)}$,
\end{itemize}
whereas for $(t_0,t_1)\in \mathcal F_2$ the critical points are given by
\begin{itemize}
 \item Double zeros at the branch points $z_0^{(1)}=z_0^{(3)}$, $z_1^{(1)}=z_1^{(3)}$, $z_2^{(2)}=z_2^{(3)}$;
 
 \item Double zeros at $\hat z_0^{(2)}$, $\hat z_1^{(3)}$, $\hat z_2^{(3)}$;
 
 \item Simple zeros at $z_0^{(2)}$, $z_1^{(2)}$, $z_2^{(1)}$;
 
 \item A pole of order $5$ at $\infty^{(1)}$ and a pole of order $14$ at $\infty^{(2)}=\infty^{(3)}$.
\end{itemize}

These critical points are shown in Figure~\ref{figure_critical_points_quadratic_differential}. From the general theory, it is known that there are $n+2$ trajectories emanating 
from a given zero of order $n$, and any two consecutive trajectories form an angle $2\pi/(n+2)$ at the zero.

\begin{figure}[t]
 \centering
 \begin{overpic}[scale=1]
{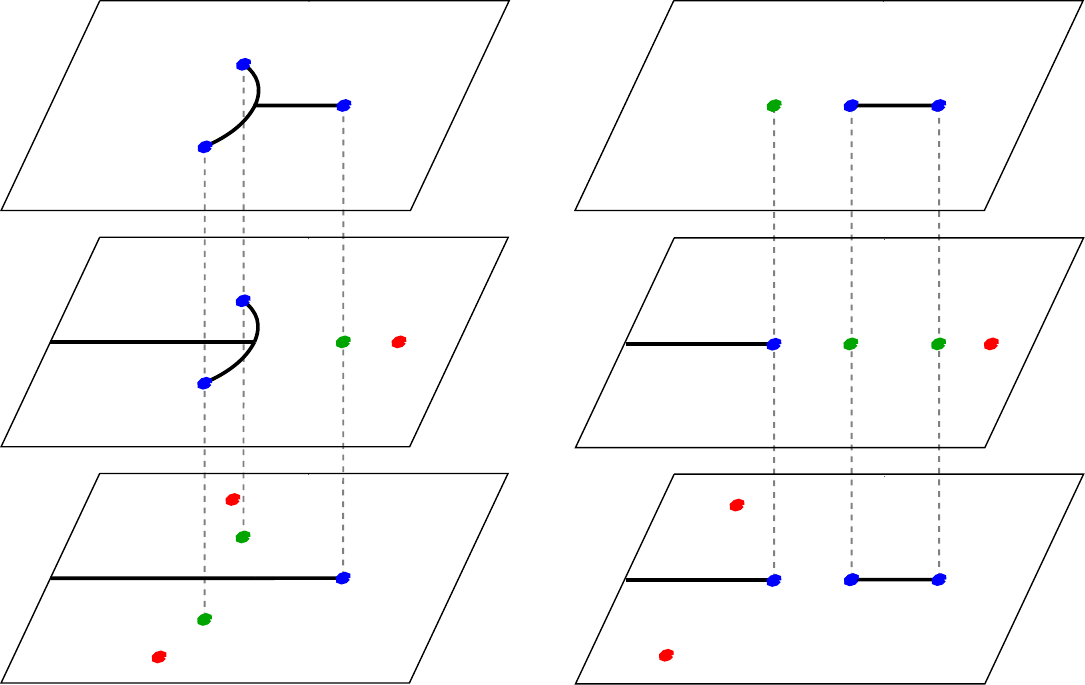}
 \put(33,53){$z_0^{(1)}$}
 \put(15,45){$z_1^{(1)}$}
 \put(23,58){$z_2^{(1)}$}
 \put(30.5,33){$z_0^{(2)}$}
 \put(13,25){$z_1^{(2)}$}
 \put(22,36.5){$z_2^{(2)}$}
 \put(33,9){$z_0^{(3)}$}
 \put(20,5){$z_1^{(3)}$}
 \put(23,13){$z_2^{(3)}$}
 \put(38,32){$\hat z_0^{(2)}$}
 \put(10,2){$\hat z_1^{(3)}$}
 \put(22.5,17){$\hat z_2^{(3)}$}
 \put(87.5,53){$z_0^{(1)}$}
 \put(78,54.5){$z_1^{(1)}$}
 \put(68,54){$z_2^{(1)}$}
 \put(86,32.5){$z_0^{(2)}$}
 \put(77.5,32.5){$z_1^{(2)}$}
 \put(71,32.5){$z_2^{(2)}$}
 \put(88,9){$z_0^{(3)}$}
 \put(77.5,11){$z_1^{(3)}$}
 \put(71,11){$z_2^{(3)}$}
 \put(92.5,31){$\hat z_0^{(2)}$}
 \put(62.5,2){$\hat z_1^{(3)}$}
 \put(69,16){$\hat z_2^{(3)}$}
 \end{overpic}
 \caption{The critical points of $\varpi$: on the left for $(t_0,t_1)\in\mathcal F_1$ and on the right for $(t_0,t_1)\in\mathcal 
F_2$. Blue dots represent the zeros that are also branch points, red dots represent the remaining double zeros, and green dots 
represent simple zeros. In addition, there are also poles at the points at infinity.}\label{figure_critical_points_quadratic_differential}
\end{figure}

It is time to fix some notation concerning the critical trajectories. From a given zero $p^{(j)}\in \mathcal R_j$ emanates a number of critical trajectories on $\mathcal 
R_j$, which we denote by $\gamma_0(p^{(j)}),\gamma_1(p^{(j)}),\hdots$; we convention that these trajectories are labeled in such a way that their canonical projections 
$\pi(\gamma_0(p^{(j)})),\pi(\gamma_1(p^{(j)})),\hdots$, are enumerated in the anti-clockwise direction, starting on the positive horizontal direction. This is well 
defined as long as there are no trajectories emanating along branch cuts, situation that will not occur. We also note that if $p^{(k)}=p^{(j)}$ is a branch point joining 
two 
sheets $\mathcal R_j$ and $\mathcal R_k$, the trajectories $\gamma_l(p^{(j)}), \gamma_l(p^{(k)})$ are different, because they emanate from different sheets. We refer the reader 
to Figure~\ref{figure_label_critical_trajectories} for an example.

Similar notation is adopted at the poles $\infty^{(1)}$, $\infty^{(2)}=\infty^{(3)}$. In this case, there are certain directions, henceforth called {\it critical 
directions} and denoted $\theta^{(k)}_{0}$, $\theta^{(k)}_{1},\hdots$, along which any trajectory extending to $\infty^{(k)}$ has to do so along one of the critical directions 
$\theta^{(k)}_{0},\theta^{(k)}_{1},\hdots$. These directions are easily computed to be given by the angles
\begin{equation*}
\theta^{(1)}_{j}=\frac{\pi}{3}+\frac{2\pi}{3}j,\quad j=0,1,2,\qquad \theta^{(2)}_{j}=\theta^{(3)}_{j}=\frac{\pi}{6}+\frac{\pi}{3}j,\quad j=0,\hdots,5.
\end{equation*}

Note that $\mathcal R$ is branched at $\infty^{(2)}=\infty^{(3)}$, so in the same spirit as for finite branch points, although the numerical values for $\theta^{(2)}_{j}$ and 
$\theta^{(3)}_{j}$ are the same, we refer to $\theta^{(2)}_{j}$ as the critical direction on the sheet $\mathcal R_2$, whereas $\theta^{(3)}_{j}$ as the critical direction on 
the sheet $\mathcal R_3$.

\begin{figure}[t]
\centering
 \begin{overpic}[scale=1]
  {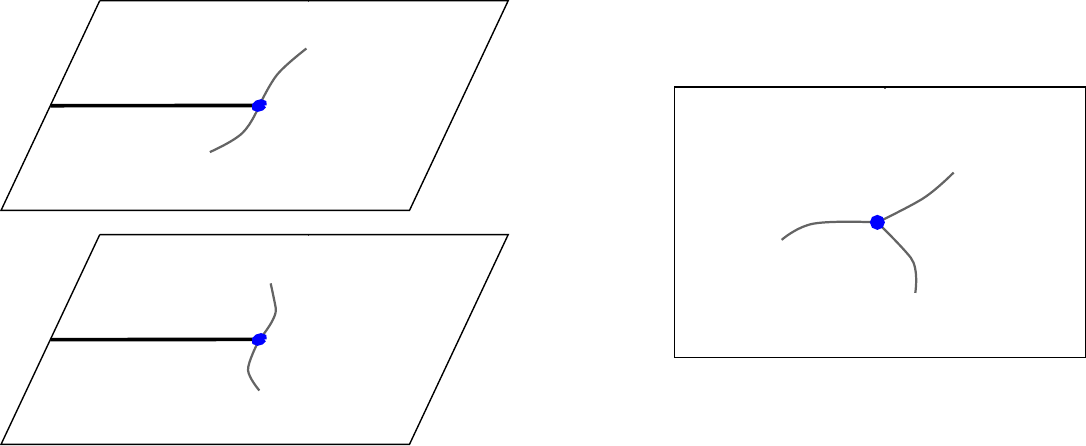}
  \put(79,18){$p$}
  \put(25,30){$p^{(1)}=p^{(2)}$}
  \put(25,9){$p^{(2)}=p^{(1)}$}
  \put(28,37){$\gamma_0(p^{(1)})$}
  \put(8,26){$\gamma_1(p^{(1)})$}
  \put(25,15){$\gamma_0(p^{(2)})$}
  \put(17,2){$\gamma_1(p^{(2)})$}
  \put(88,25){$\gamma_0(p)$}
  \put(69,21){$\gamma_1(p)$}
  \put(85,14){$\gamma_2(p)$}
 \end{overpic}
 \caption{Example of critical trajectories' labeling around critical points. On the left-hand side, around the double zero $p^{(1)}=p^{(2)}$ connecting two sheets of the Riemann 
surface. On the right, around the simple zero $p$ which is not a branch point.}\label{figure_label_critical_trajectories}
\end{figure}

Given $p\in \mathcal R$, we define the number $\eta(p)$ as follows.
\begin{itemize}
 \item If $p$ is a regular point of $\varpi$, then $\eta(p)=0$.
 \item If $p$ is a zero of $\varpi$, then $\eta(p)$ is its order as a zero.
 \item If $p$ is a pole of $\varpi$, then $\eta(p)$ is minus its order as a pole.
\end{itemize}

Clearly $\eta(p)>0$ iff $p$ is a zero, and $\eta(p)<0$ iff $p$ is a pole.

A domain $D\subset \mathcal R$ is called a {\it $\varpi$-polygon} if its boundary $\partial D$ is the union of a finite number of critical trajectories. For a $\varpi$-polygon 
$D$, we define the quantities
$$
\lambda(D)=\sum_{p\in D}\eta(p)
$$
and
$$
\kappa(D)=\sum_{p\in \partial D}\beta(p),
$$
where the summation for $\kappa$ is over all critical points on the boundary of $D$,
$$
\beta(p)=1-\theta(p)\frac{\eta(p)+2}{2\pi}
$$
and $\theta(p)=\theta(D;p)$ is the inner angle of $D$ at $p$.

If $D$ is simply connected, the formula
\begin{equation}\label{teichmuller_formula}
 \kappa(D)=2+\lambda(D)
\end{equation}
holds true. It is a consequence of the argument principle and is known as {\it Teichm\"{u}ller Lemma} or {\it Teichm\"{u}ller formula} \cite[Theorem~14.1]{strebel_book}.

Recall that $\mathcal G$ denotes the critical graph of $\varpi$ (see \eqref{definition_trajectory_integral} and the discussion thereafter). The critical graph $\mathcal G$ can be 
decomposed as a finite union
\begin{equation}\label{canonical_decomposition_critical_graph}
\mathcal G=\bigcup \mathcal U,
\end{equation}
where the sets $\mathcal U$'s are pairwise disjoint $\varpi$-polygons without critical points. For a quadratic 
differential without recurrent trajectories, we can classify the sets on the decomposition above into four distinct groups, namely: {\it strip } domains, {\it half plane} (or {\it 
end}) domains, {\it circle} domains and {\it ring} domains. This classification is known as the {\it Basic Structure Theorem} \cite[Theorem~3.5]{jenkins_book}. For our quadratic 
differential $\varpi$ in \eqref{quadratic_differential}, only the first two types of domain appear, and we explain their definition and basic properties next, in a form suitable 
for our needs.

A $\varpi$-polygon $\mathcal U$ in the decomposition \eqref{canonical_decomposition_critical_graph} is called a half plane domain if the primitive
\begin{equation}\label{primitive_uniformization}
\Upsilon(z)=\int^z Q(s)ds,\quad z\in \mathcal U
\end{equation}
is a conformal map from $\mathcal U$ to a half plane on $\C$ of the form
$$
\{z\in \C \; \mid \; \delta \re z>c \},
$$
for some real constant $c$ and some $\delta\in \{1,-1\}$. The boundary of a half plane domain contains exactly one infinite critical point $p$ and at least one finite critical 
point of $\varpi$. Locally at $p$, $\partial \mathcal U$ consists of two arcs of trajectories that end at $p$ along two consecutive critical directions. 

Conversely, any two consecutive critical directions at a given infinite critical point uniquely determine a half plane domain in the decomposition 
\eqref{canonical_decomposition_critical_graph}. 

In the same spirit, $\mathcal U$ is called a strip domain if the primitive $\Upsilon$ in \eqref{primitive_uniformization} is a conformal map from $\mathcal U$ to a strip on $\C$ 
of the form
$$
\{z\in \C \; \mid \; c_1<\re z <c_2\}
$$
for some real constants $c_1,c_2$. The boundary of $\mathcal U$ contains two infinite critical points $p_1,p_2$ (possibly $p_1=p_2$). The set $\partial U\setminus\{p_1,p_2\}$ has 
two connected components, and each of them contains at least one finite critical point. Moreover, the inner angle of $\partial U$ at each finite critical point is zero (that is, 
$\beta(p_1)=\beta(p_2)=1$). 

The positive number 
$$
\sigma(\mathcal U)=c_2-c_1
$$
is well defined regardless of the exact choice of the integration constant in \eqref{primitive_uniformization}, and it is called the {\it width} of the strip domain. For the 
quadratic differential $\varpi$ in \eqref{quadratic_differential}, it can be explicitly computed by
\begin{equation}\label{computation_width_parameter}
\sigma(\mathcal U)=|\Upsilon(q_1)-\Upsilon(q_2)|=\left|\re\int_{q_2}^{q_1}Q(s)ds \right|,
\end{equation}
where $q_1,q_2$ are any two points on different connected components of $\partial U\setminus \{p_1,p_2\}$, and the integral is computed along any path connecting $q_1$ and $q_2$ 
that lies entirely in $\overline{\mathcal U}$.

Recall that $\pi_j:\mathcal R_j\to \overline\C$ denotes the restriction of the canonical projection $\pi:\mathcal R\to \overline\C$ to the sheet $\mathcal R_j$. We follow 
\cite{martinez_silva} and list some general principles regarding trajectories of quadratic differentials. These principles will be extensively used later on.

\begin{enumerate}
	\item[\textbf{P.1}] The quadratic differential $\varpi$ does not have recurrent trajectories, as it follows from Jenkins' 
Three Poles Theorem. Consequently, any trajectory $\gamma$ of $\varpi$ has a well defined limiting point along its two directions (possibly the same, in case $\gamma$ is closed).

	\item[\textbf{P.2}] If $\gamma\subset \mathcal R_j$ is an arc of trajectory of $\varpi$, then the arc $\gamma^*=\pi_j^{-1}(\pi(\gamma)^*)$, obtained as the lift of the 
complex conjugate of $\pi(\gamma)$ to $\mathcal R_j$, is also an arc of trajectory.
	\item[\textbf{P.3}] If a $\varpi$-polygon does not have poles on its interior, then it has to have poles on its boundary. This is a consequence of 
\eqref{teichmuller_formula}. 

	\item[\textbf{P.4}] The function $Q^2$ in \eqref{quadratic_differential} is analytic on the parameters $(t_0,t_1)$. Consequently the trajectories of $\varpi$ change 
continuously (in any reasonable topology) with the parameters $(t_0,t_1)$. In our setting, it is enough to have the following observation.
Choose a critical point $p$ of $\varpi$, varying continuously with the parameter $(t_0,t_1)$. Fix $(\tilde t_0,\widetilde t_1)$ and assume that for small 
perturbations of $(\tilde t_0,\tilde t_1)$ the order of the critical point $p$ is 
preserved and that a given open set $U\subset \mathcal R$ does not contain critical points of $\varpi$. If for $(\tilde t_0,\tilde t_1)$ the critical 
trajectory emanating from $p$ along a given direction intersects the open set $U$, then the same holds true for small perturbations of $(\tilde t_0,\tilde t_1)$.

	\item[\textbf{P.5}] If for a given value $(\tilde t_0,\tilde t_1)$, a point $p$ belongs to the half plane domain for the 
pole $\infty^{(k)}$ determined by the 
critical angles $\theta^{(k)}_j,\theta^{(k)}_{j+1}$, then the same holds true for parameters on a small neighborhood of $(\tilde t_0,\tilde t_1)$. The 
point $p$ can depend continuously on $(t_0,t_1)$.

	\item[\textbf{P.6}] If for a given value $(\tilde t_0,\tilde t_1)$, an arc of trajectory emerging from a certain point $p$ intersects the real 
line at a {\it regular } point, then the same holds true for small perturbations of $(\tilde t_0,\tilde t_1)$. As before, the point $p$ can depend continuously on the parameters.
\end{enumerate} 

When $t_1=0$ we will make use of one more principle, which will be enunciated in Section~\ref{subsection_critical_graph_precritical}.

\subsection{Critical graph in the three-cut case}\label{subsection_critical_graph_precritical}

This section is devoted to the description of the critical graph in the three-cut case $(t_0,t_1)\in\mathcal F_1$. We will use extensively the principle {\bf P.1} without 
further mention. More precisely, in our situation this principle assures us that every trajectory has a limiting endpoint in its both directions, and our goal will be to, starting 
at a given critical trajectory emanating from a critical point, find its other endpoint.

\subsubsection{Critical graph for $t_1=0$}
 
 We first describe the critical graph for $t_1=0$. To this end, we use the sheet structure in \eqref{new_sheet_structure}. 
 
For $t_1=0$, the algebraic equation \eqref{spectral_curve} is invariant under the action
$$
(\xi,z)\mapsto (\omega^2 \xi,\omega z),
$$
where we recall that $\omega=e^{2\pi i/3}$. With regard to the sheet structure \eqref{new_sheet_structure}, this discrete rotational symmetry is reflected on the trajectories 
of $\varpi$, and it leads to the following principle.

\begin{enumerate}
 \item[\textbf{P.7}] Suppose $t_1=0$. If $\gamma\subset \widetilde{\mathcal R}_j$ is an arc of trajectory, then the arc $\tilde \gamma=\pi_j^{-1}(\omega\pi(\gamma))$, obtained 
as the lift of $\omega\pi(\gamma)$ to $\widetilde{\mathcal R}_j$, is also an arc of trajectory.
\end{enumerate}

Some of the trajectories of $\varpi$ are straightforward to describe. On the interval $(-\infty,0)$, the solutions $\xi_2$ and $\xi_3$ are complex conjugate of each other (see 
\eqref{equalities_inequalities_real_line_xi_functions_t=0_4}), and the same is 
true for $\xi_1$ and $\xi_2$ on $(0,z_0)$ (see \eqref{equalities_inequalities_real_line_xi_functions_t=0_3}). It thus follows from the definition 
of a trajectory in \eqref{definition_trajectory_integral} that
\begin{equation}\label{aux_equation_trajectories_2}
\gamma_1(z_0^{(3)})=\pi_1^{-1}(( -\infty ,0])\cup \pi_3^{-1}([0,z_0]),
\end{equation}
that is, the trajectory $\gamma_1(z_0^{(3)})$ starts to the left of $z_0^{(3)}$, moves to the sheet $\widetilde{\mathcal R}_1$ and goes to $\infty^{(1)}$ along the negative axis. 
Using the principle {\bf P.7}, we also get the trajectories $\gamma_0(z_1^{(3)})$ and $\gamma_2(z_2^{(3)})$. These are depicted in Figure~\ref{figure_description_trajectories_0}.

\begin{figure}[t]
 \centering
 \includegraphics[scale=1]{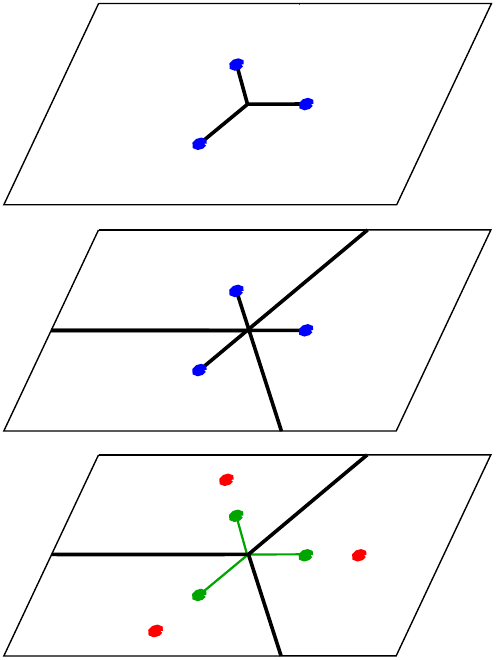}
 \caption{In thin lines the trivial trajectories of $\varpi$ in the large are depicted; these are the trajectories described by 
equations~\eqref{aux_equation_trajectories_2}--\eqref{aux_equation_trajectories_2}. In thick lines the branch cuts for the Riemann 
surface.}\label{figure_description_trajectories_0}
\end{figure}

We now focus on the trajectory $\gamma_0(z_0^{(1)})$. 

\begin{lem}\label{lemma_trajectory_1}
 The trajectory $\gamma_0(z_0^{(1)})$ goes to $\infty^{(1)}$ with angle $\theta^{(1)}_0$ and it is entirely contained in $\widetilde{\mathcal R}_1$
\end{lem}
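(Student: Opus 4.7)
The plan is to leverage the local branching at $z_0^{(1)}$ together with the conjugation symmetry \textbf{P.2} and the rotational symmetry \textbf{P.7} available at $t_1=0$. At the branch point $z_0^{(1)}=z_0^{(2)}$, a local uniformizer $\zeta$ with $z-z_0=\zeta^2$ exhibits $\varpi$ as having a zero of order four on $\mathcal R$, so four trajectories emanate from this point, split evenly between the two sheets glued there. On $\widetilde{\mathcal R}_1$ the two emanating trajectories project to opposite $z$-directions; a horizontal direction is forbidden because the branch cut $[0,z_0]\subset\Sigma_*$ lies on $\R$, and \textbf{P.2} then forces the two directions to be swapped by conjugation, hence to point vertically. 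The anti-clockwise labeling convention makes $\gamma_0(z_0^{(1)})$ the upward trajectory.

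Next, I would show that $\gamma_0(z_0^{(1)})$ stays in $\widetilde{\mathcal R}_1$ and terminates at a pole. Near $z_0^{(1)}$ the trajectory enters the upper half of sheet $1$; leaving sheet $1$ requires crossing $\Sigma_*=[0,z_0]\cup[0,z_1]\cup[0,z_2]$. A crossing of $[0,z_0]$ would have the trajectory meet $\R$ at a regular point, and \textbf{P.2} would then force it to coincide with its conjugate $\gamma_1(z_0^{(1)})$, impossible. Crossings of $[0,z_1]$ or $[0,z_2]$ are ruled out via Teichm\"uller's formula \eqref{teichmuller_formula} applied to the simply connected $\varpi$-polygon such a crossing would enclose together with the trivial trajectory \eqref{aux_equation_trajectories_2} and its $\omega$-rotates (from \textbf{P.7}): the polygon contains no pole, contradicting \textbf{P.3}. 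By \textbf{P.1}, $\gamma_0(z_0^{(1)})$ has a well-defined endpoint in $\widetilde{\mathcal R}_1$; the same Teichm\"uller argument rules out closure at $z_0^{(1)}$, while closure at $z_1^{(1)}$ or $z_2^{(1)}$ is excluded via Lemma \ref{lemma_h_j_t=0}(i) after using \textbf{P.7} to rotate the hypothetical short trajectory onto a segment inside $[0,z_0]$ (its real-part integral would then be forced to vanish, contradicting the lemma). Hence the endpoint is the pole $\infty^{(1)}$, the only pole of $\varpi$ on $\widetilde{\mathcal R}_1$.

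To pin down the arrival direction at $\infty^{(1)}$, I would note that by \textbf{P.2} the trajectory remains in the closed upper half of $\widetilde{\mathcal R}_1$, and hence arrives along a critical angle in $[0,\pi]$. The direction $\theta^{(1)}_2=5\pi/3$ lies in the lower half plane and is therefore excluded, while $\theta^{(1)}_1=\pi$ is already occupied by the trivial trajectory $\pi_1^{-1}((-\infty,0])$ bounding a half-plane domain of $\varpi$ at $\infty^{(1)}$; since $\gamma_0(z_0^{(1)})$ lies strictly on one side of this trivial trajectory throughout sheet $1$, it must form the other boundary arc of the adjacent half-plane domain, arriving at the next critical direction $\theta^{(1)}_0=\pi/3$. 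The main obstacle in executing this plan is making the Teichm\"uller-based exclusions fully rigorous, since one has to describe carefully how each hypothetical $\varpi$-polygon sits across sheets and which critical arcs form its boundary; the rotational symmetry \textbf{P.7} is essential in keeping the case analysis tractable.
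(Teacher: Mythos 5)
Your proposal contains a genuine gap in the step that rules out crossing the branch cut $[0,z_0]$. You claim that if $\gamma_0(z_0^{(1)})$ crosses $[0,z_0]$, P.2 forces it to coincide with its conjugate $\gamma_1(z_0^{(1)})$. But $[0,z_0]$ is where $\widetilde{\mathcal R}_1$ and $\widetilde{\mathcal R}_2$ are glued: a point $x_0\in(0,z_0)$ has two lifts $x_{0,+}^{(1)}$ and $x_{0,-}^{(1)}$, and $\gamma_0(z_0^{(1)})$ approaches from the $+$ side while the conjugate $\gamma_1(z_0^{(1)})$ approaches from the $-$ side. They do not meet at the same point of $\widetilde{\mathcal R}$, so they need not coincide, and both could continue onto sheet $2$. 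The paper excludes this possibility with a direct computation: if $\gamma_0(z_0^{(1)})$ hits $\pi_{1+}^{-1}([0,z_0])$ at a point projecting to $x_0$, then $\re\int_{x_0}^{z_0}(\xi_2-\xi_3)\,dz=0$, and deforming this integral onto $\R$ gives $h_2(x_0,z_0)=0$, contradicting Lemma~\ref{lemma_h_j_t=0}(i). The P.2-plus-P.3 argument you invoke works only for the regular part $(z_0,+\infty)$ of $\R$ on sheet $1$; the branch cut needs the analytic input from Lemma~\ref{lemma_h_j_t=0}.

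Beyond this error, your overall strategy is also considerably more elaborate than the paper's, and the Teichm\"uller step for $[0,z_1]$, $[0,z_2]$ is left vague. The paper does not need to exclude crossings of $[0,z_1]$ or $[0,z_2]$ at all: once $\gamma_0(z_0^{(1)})$ is shown not to cross $\pi_1^{-1}([0,+\infty))$, it is trapped on one side of the ray $\pi_1^{-1}([0,\infty e^{i\pi/3}))$, which is already known (via \eqref{aux_equation_trajectories_2} and P.7) to be an arc of the trivial trajectory $\gamma_0(z_1^{(3)})$, and distinct trajectories cannot cross at a non-critical point. The trajectory is therefore confined to the sector $0<\arg z<\pi/3$ on $\widetilde{\mathcal R}_1$; $\theta^{(1)}_0=\pi/3$ is the only critical direction there, and P.1 forces it to end at $\infty^{(1)}$ along that direction. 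Your approach would, if repaired, prove the same thing, but you would need to replace the flawed P.2 step by the $h_2$-integral argument and either carry out the Teichm\"uller polygon analysis rigorously or adopt the simpler ``barrier'' argument used in the paper.
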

\begin{proof}
We first claim that the trajectory $\gamma_0(z_0^{(1)})$ cannot intersect the interval $\pi^{-1}_1((z_0,+\infty))$. Indeed, to the contrary we use the principle {\bf P.2} to get 
the equality $\gamma_0(z_0^{(1)})=\gamma_1(z_0^{(1)})$. In particular, this implies that $\gamma_0(z_0^{(1)})$ is the boundary of a $\varpi$-polygon without 
poles on its closure, contradicting {\bf P.3}. 

The trajectory $\gamma_0(z_0^{(1)})$ cannot intersect the interval $\pi_{1+}^{-1}([0,z_0])$ neither. To see this, suppose it does, say at a point which projects to $x_0\in 
[0,z_0]$. It then follows from \eqref{definition_trajectory_integral} and the definition of $\varpi$ \eqref{quadratic_differential} that
\begin{equation}\label{aux_equation_7}
 \re \int_{x_0}^{z_0} (\xi_{2}(z)-\xi_3(z))dz  = \int_{x_0}^{z_0}(\re \xi_{2+}(x)-\re \xi_{3+}(x))dx =0,
\end{equation}
where the first integral is computed along $\pi(\gamma_0(z_0^{(1)}))$, and the second integral, obtained after deformation of the contour for the first one, is performed on 
$\R$. This is the same as saying that $h_2(x_0,z_0)=0$, contradicting Lemma~\ref{lemma_h_j_t=0} {\rm (i)}.

In summary, we proved that the trajectory $\gamma_0(z_0^{(1)})$ does not intersect the arc $\pi_1^{-1}([0,+\infty))$. Since it cannot intersect the segment $\pi_1^{-1}([0,\infty 
e^{\pi i/3})$, because this is an arc of the trajectory $\gamma_0(z_1^{(3)}))$, the only possibility left is that it goes to $\infty^{(1)}$ with angle $\theta^{(1)}_0$, as we want.
\end{proof}

Using the principles {\bf P.2} and {\bf P.7}, we also get the behavior of the trajectories $\gamma_1(z_0^{(1)})$, $\gamma_j(z_k^{(1)})$, $k=1,2$, $j=0,1$. The outcome we have 
so far is displayed in Figure~\ref{figure_description_trajectories_1}.

\begin{figure}[t]
 \centering
 \includegraphics[scale=1]{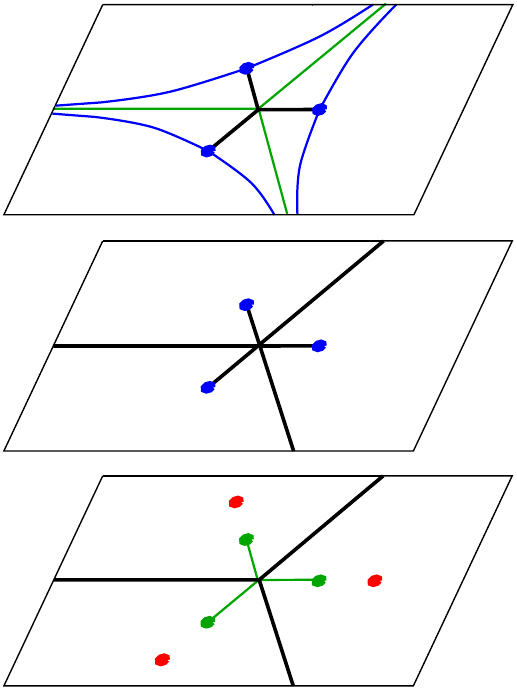}
 \caption{Trajectories of $\varpi$ in the large after equations~\eqref{aux_equation_trajectories_2}--\eqref{aux_equation_trajectories_2} and 
Lemma~\ref{lemma_trajectory_1}.}\label{figure_description_trajectories_1}
\end{figure}

Recall that $L_0,L_1$ and $L_2$ denote the segments connecting the sheets $\widetilde{\mathcal R}_2$ and $\widetilde{\mathcal R}_3$ (see \eqref{definition_cuts_L_t=0}).

\begin{lem}\label{lemma_trajectory_2}
The trajectory $\gamma_0(z_0^{(3)})$, intersects the cut $L_2$ exactly once, moves to the sheet $\widetilde{\mathcal R}_2$ and then 
extends to $\infty^{(2)}$ along the critical angle $\theta^{(2)}_{1}$.
\end{lem}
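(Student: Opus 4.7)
The plan mirrors Lemma~\ref{lemma_trajectory_1}. Since $z_0^{(3)}$ is a simple zero of $\varpi$ in the regluing $\widetilde{\mathcal R}$, three critical trajectories leave it at equal angular spacing $2\pi/3$; one is the arc $\pi_3^{-1}([0,z_0]) \subset \gamma_1(z_0^{(3)})$ emerging at angle $\pi$, so the remaining two emerge at $\pi/3$ and $5\pi/3$, and the anti-clockwise labelling fixes $\gamma_0(z_0^{(3)})$ as the one at $\pi/3$. By P.2 together with $z_0 \in \R$, the complex conjugate of $\gamma_0(z_0^{(3)})$ is the trajectory from $z_0^{(3)}$ emerging at angle $-\pi/3$, namely $\gamma_2(z_0^{(3)})$; since these are distinct trajectories, $\gamma_0(z_0^{(3)})$ is confined to the open upper half plane on the sheets it visits and cannot share any arc with $\gamma_1(z_0^{(3)})$.

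Next I rule out termination at a finite critical point. The candidates in the closed upper half plane are the branch point $z_2^{(1)}=z_2^{(2)}$, the simple zero $z_2^{(3)}$, and the double zero $\hat z_2^{(3)}$. For each, I close $\gamma_0(z_0^{(3)})$ together with the companion $\gamma_2(z_2^{(3)}) = \pi_3^{-1}([0,z_2]) \cup \pi_1^{-1}(L_1)$, produced from $\gamma_1(z_0^{(3)})$ by the rotation principle P.7, and with short segments of already-known trajectories, to form a simply connected $\varpi$-polygon in $\mathcal R$ whose interior contains no pole. Teichm\"uller's formula \eqref{teichmuller_formula} then fails (principle P.3), ruling out these endpoints. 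By P.1 the trajectory therefore extends to a pole.

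Starting on $\widetilde{\mathcal R}_3$, the trajectory either crosses the cut $L_2$ into $\widetilde{\mathcal R}_2$, or remains on $\widetilde{\mathcal R}_3$ and extends to $\infty^{(3)}$. A case analysis using P.7 and the half plane domains at $\infty^{(3)}$ eliminates the latter: under the rotational orbit, the trio $\gamma_0(z_0^{(3)}),\gamma_1(z_2^{(3)}),\gamma_2(z_1^{(3)})$ would occupy three rotated critical directions at $\infty^{(3)}$, leaving no room for the further half plane domains required by the critical graph. Hence $\gamma_0(z_0^{(3)})$ crosses $L_2$; uniqueness of the crossing follows because two crossings together with the intervening arc of $L_2$ would bound a poleless $\varpi$-polygon, again violating P.3. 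Once on $\widetilde{\mathcal R}_2$, the trajectory reaches $\infty^{(2)}$ at one of the critical angles $\theta^{(2)}_j$; matching the bounding half plane domain at $\infty^{(2)}$ with the rotated configurations at the other poles forces the arrival angle to be $\theta^{(2)}_1 = \pi/2$.

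The main obstacle will be the explicit construction of the $\varpi$-polygons in the second step and the verification that the ``stays-on-sheet-$3$'' scenario is incompatible with the rotational orbit in the third; both require careful bookkeeping with Teichm\"uller's formula and with the half plane domain structure at the poles $\infty^{(2)}=\infty^{(3)}$.
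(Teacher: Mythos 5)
Your proof takes a noticeably different route from the paper's, and while the opening observations (labelling of the three trajectories, conjugation symmetry) are sound, the argument has a concrete gap and also bypasses the technical input that the paper's proof actually relies on.

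The most serious problem is the uniqueness step. You write that ``two crossings together with the intervening arc of $L_2$ would bound a poleless $\varpi$-polygon, again violating P.3.'' But $L_2$ is a branch cut of the sheet structure, not a trajectory of $\varpi$ (generically $\re\int Q\,ds$ is not constant along it); a $\varpi$-polygon must be bounded by critical trajectories, so this region is not a $\varpi$-polygon and P.3 does not apply. The paper proves uniqueness of the crossing by a genuinely analytic argument: after reducing via \textbf{P.2}, \textbf{P.7} to the assertion that $\gamma_1(z_2^{(3)})$ crosses $L_0$ exactly once, two putative intersection points projecting to $x<y<0$ would force $h_1(x,y)=h_1(y,0)=0$, contradicting Lemma~\ref{lemma_h_j_t=0}\,(ii). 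That lemma is the essential input, and nothing in your proposal replaces it.

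The same omission affects your final step. You invoke a vague ``matching of half plane domains'' to pin the arrival angle to $\theta^{(2)}_1$. The paper instead has to rule out $\gamma_0(z_2^{(3)})$ meeting the positive real axis on $\widetilde{\mathcal R}_2$, and the subcase where that meeting point lies in $[0,z_0]$ is excluded precisely by Lemma~\ref{lemma_h_j_t=0}\,(i) (via a contour-deformation to a real integral of $\re\xi_{1+}-\re\xi_{3+}$); only the subcase $x>z_0$ is handled by a Teichm\"uller count. A purely topological/Teichm\"uller argument cannot substitute here, because nothing in the combinatorics of the critical graph forbids that crossing a priori --- one needs the monotonicity/sign information encoded in $h_1,h_2$.

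Two smaller issues: (a) when you assert the trajectory ``is confined to the open upper half plane on the sheets it visits,'' you are implicitly using that $\gamma_0(z_0^{(3)})\neq\gamma_2(z_0^{(3)})$, but this is exactly what must first be established by P.3 (a coincidence would create a bounded poleless $\varpi$-polygon). (b) Your list of candidate finite endpoints omits $\hat z_0^{(3)}$ (and, once the trajectory may be on sheet $2$, the branch points $z_0^{(1)}=z_0^{(2)}$ and $z_2^{(1)}=z_2^{(2)}$), and the construction of the ``closing'' $\varpi$-polygons is left unspecified, so that part cannot be checked as written. The paper sidesteps a finite-endpoint enumeration altogether by forcing the trajectory into a specific sector of $\widetilde{\mathcal R}_2$ and then reading off the only available critical direction.
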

\begin{proof}
We first prove that $\gamma_0(z_0^{(3)})$ moves to $\widetilde{\mathcal R}_2$.

The trajectory $\gamma_0(z_0^{(3)})$ cannot intersect $\pi_3^{-1}([z_0,+\infty))$, otherwise the principle {\bf P.2} tells us that $\gamma_0(z_0^{(3)})=\gamma_2(z_0^{(3)})$, and 
this trajectory then determines a $\varpi$-polygon without poles on its closure, contradicting the principle {\bf P.3}. So if we assume it does not move to $\widetilde{\mathcal 
R}_2$, it has to extend to $\infty^{(3)}$ with angle $\theta^{(3)}_{0}$, and as a consequence of the principle {\bf P.2} we further get that $\gamma_2(z_0^{(3)})$ ends up at 
$\infty^{(3)}$ with angle $\theta^{(5)}_{3}$. We then apply Teichmüller's formula \eqref{teichmuller_formula} to the $\varpi$-polygon determined by 
these two trajectories that contains $\hat z_0^{(3)}$. On the boundary there are two critical points, namely $z_0^{(3)}$ and $\infty^{(3)}$, and the sum on the left hand side of 
\eqref{teichmuller_formula} is equal to $2$. On the interior, there is only the double zero $\hat z_0^{(3)}$, so the sum on the right hand side of \eqref{teichmuller_formula} is 
$4$, a contradiction. 

It is a conclusion of the last paragraph that the trajectory $\gamma_0(z_0^{(3)})$ goes to $\widetilde{\mathcal R}_2$. Combining {\bf P.2} and {\bf P.7}, we further get  
that the trajectories 
$\gamma_2(z_0^{(3)}),\gamma_{j}(z_1^{(3)})$ and $\gamma_{j-1}(z_2^{(3)})$ also move to $\widetilde{\mathcal R}_2$, $j=1,2$.

We now prove that $\gamma_0(z_0^{(3)})$ intersects $L_2$ exactly once. Again in virtue of the principles {\bf P.2} and {\bf P.7}, it is enough to verify that $\gamma_1(z_2^{(3)})$ 
intersects $L_0$ exactly once.

We already showed that $\gamma_1(z_2^{(3)})$ intersects $L_0$ at least once, say at a point projecting to $x\in (-\infty,0)$. Suppose now that $\gamma_1(z_2^{(3)})$ intersects 
$L_0$ at another point, say projecting to $y\in (-\infty,0)$. For simplicity, assume $x<y$, the case $x>y$ is analogous. Proceeding as in \eqref{aux_equation_7}, we conclude that 
$$
h(x,y)=h(y,0)=0.
$$  
From Lemma~\ref{lemma_h_j_t=0} {\rm (ii)}, we learn that $(x,y)\cap (y,0)\neq \emptyset$, which is certainly not true. We also remark that a combination of the principles {\bf 
P.2} and {\bf P.7} yield that $\gamma_0(z_2^{(3)})$ intersects the cut $L_2$ exactly once.

As a final step, we prove that $\gamma_0(z_0^{(3)})$ extends to $\infty^{(2)}$ with critical angle $\theta^{(2)}_{1}$. Again due to {\bf P.2} and {\bf P.7}, it is enough 
to prove that $\gamma_0(z_2^{(3)})$ extends to $\infty^{(2)}$ with angle $\theta_1^{(2)}$. The latter fact follows if we prove that $\gamma_0(z_2^{(3)})$ stays in the sector of 
$\widetilde{\mathcal R}_2$ determined by the positive real axis and $L_2$. Since we already noticed that $\gamma_0(z_2^{(3)})$ intersects $L_0$ exactly once and then enters this 
sector, it is enough to verify that $\gamma_0(z_2^{(3)})$ does not intersect the positive real axis on $\widetilde{\mathcal R}_2$.

Suppose the latter happens, say at a certain point projecting to $x\in [0,\infty)$. There are two possibilities, namely either $0\leq x \leq z_0$ or $x>z_0$.

For the first situation, we note that in this case the integral
$$
\int_{0^{(3)}}^{x^{(2)}}Q(s)ds,
$$
computed along the contour $\gamma_2(z_2^{(3)})\cup\gamma_0(z_2^{(3)})$, is purely imaginary. Deforming this integral to the interval $\pi_{2+}^{-1}([0,x])$ we conclude
$$
\int_0^x(\re \xi_{1+}(s)-\re \xi_{3+}(s))ds=0,
$$
and the equation above contradicts Lemma~\ref{lemma_h_j_t=0} {\rm (i)}.

Now let us assume that $x>z_0$. Using {\bf P.2}, this assumption implies that $\gamma_0(z_2^{(3)})=\gamma_2(z_1^{(3)})$. The trajectories 
$\gamma_2(z_2^{(3)})$, $\gamma_0(z_2^{(3)})$, $\gamma_2(z_1^{(3)})$ then determine a $\varpi$-polygon whose only critical point in its interior is the double zero 
$z_0^{(1)}=z_0^{(2)}$ (this is the shaded domain in Figure~\ref{figure_description_trajectories_2}). The right hand side of \eqref{teichmuller_formula} is then $4$, whereas the 
left hand side is $2$, which is a contradiction.
\end{proof}

\begin{figure}[t]
 \includegraphics[scale=1]{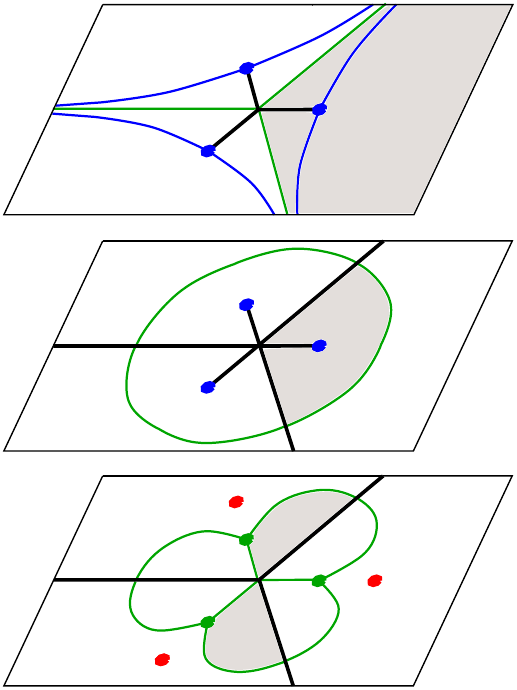}
 \caption{Figure for the hypothetical situation considered in the proof of Lemma~\ref{lemma_trajectory_2}. The shaded region is the $\varpi$-polygon contradicting Teichmüller's 
formula~\eqref{teichmuller_formula}.}\label{figure_description_trajectories_2}
\end{figure}

Again after an application of the principles {\bf P.2} and {\bf P.7}, we get the partial configuration in Figure~\ref{figure_description_trajectories_3}

\begin{figure}[t]
 \centering
 \includegraphics[scale=1]{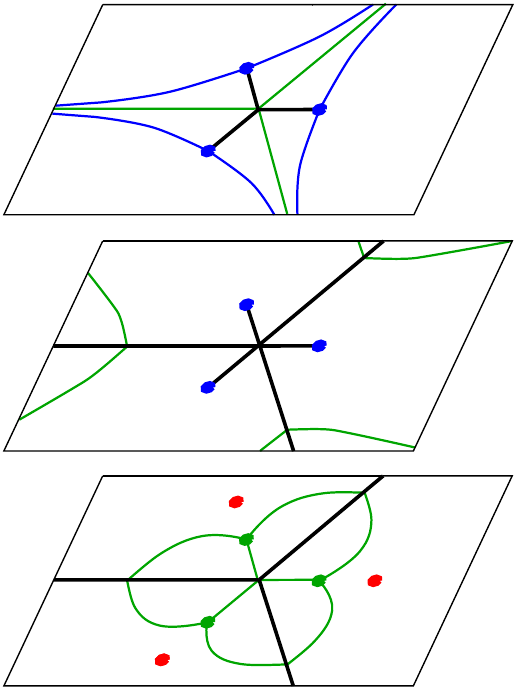}
 \caption{Trajectories of $\varpi$ in the large after Lemma~\ref{lemma_trajectory_2}. Compare with the previous stage in 
Figure~\ref{figure_description_trajectories_1}}\label{figure_description_trajectories_3}
\end{figure}

\begin{lem}\label{lemma_trajectory_3_1}
 The trajectories $\gamma_0(z_0^{(2)})$ and $\gamma_1(z_0^{(2)})$ extend to $\infty^{(2)}$ along critical angles 
$\theta^{(2)}_{0}$ and $\theta^{(2)}_{5}$, respectively.
\end{lem}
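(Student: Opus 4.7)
The proof follows the approach of Lemmas~\ref{lemma_trajectory_1} and \ref{lemma_trajectory_2}, combining the reflection principle \textbf{P.2}, the rotational symmetry \textbf{P.7} valid at $t_1=0$, Teichm\"{u}ller's formula \eqref{teichmuller_formula}, and the non-vanishing of the width parameters established in Appendix~\ref{appendix_widths_precritical}. The branch point $z_0^{(1)}=z_0^{(2)}$ is a double zero of $\varpi$ on the Riemann surface with four emerging critical trajectories; two of them project to $\widetilde{\mathcal R}_2$, namely $\gamma_0(z_0^{(2)})$ and $\gamma_1(z_0^{(2)})$. In the uniformizing coordinate $u=\sqrt{z-z_0}$ one expands $\varpi=-4u^2(\xi_1(z_0)-\xi_2(z_0))^2du^2+O(u^4 du^2)$, and since $\xi_1(z_0)-\xi_2(z_0)$ is real and nonzero by \eqref{equalities_inequalities_real_line_xi_functions_t=0_1}--\eqref{equalities_inequalities_real_line_xi_functions_t=0_2}, the four emerging critical trajectories project in the $z$-plane to the positive and negative horizontal rays from $z_0$, distributed across the two sheets meeting at the branch point.

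Next I plan to rule out that $\gamma_0(z_0^{(2)})$ terminates at a finite critical point. The candidates accessible from $\widetilde{\mathcal R}_2$ are the branch points $z_1^{(2)}, z_2^{(2)}$ and the double zero $\hat z_0^{(2)}$. A return to $z_j^{(2)}$ ($j=1,2$) would yield a $\varpi$-polygon violating Teichm\"{u}ller's formula, by the same type of argument used to conclude Lemma~\ref{lemma_trajectory_2}. A short trajectory to $\hat z_0^{(2)}$ would force $\tau_5=0$ from \eqref{width_tau_5}, contradicting the non-vanishing of the widths on $\mathcal F_1$ derived in Appendix~\ref{appendix_widths_precritical}. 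Since the real axis is not a critical direction at $\infty^{(2)}$, the trajectory must bend off the real axis and extend to $\infty^{(2)}$ along one of the six critical directions $\theta^{(2)}_j$.

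By Lemma~\ref{lemma_trajectory_2} together with its rotational images under \textbf{P.7}, the half-plane domains at $\infty^{(2)}$ corresponding to $\theta^{(2)}_1, \theta^{(2)}_3, \theta^{(2)}_5$ are already bounded by trajectories coming from $z_0^{(3)}, z_1^{(3)}, z_2^{(3)}$. The remaining three half-plane domains, at $\theta^{(2)}_0, \theta^{(2)}_2, \theta^{(2)}_4$, must be accounted for by trajectories emerging from the branch points $z_0^{(2)}, z_1^{(2)}, z_2^{(2)}$, interchanged by \textbf{P.7}. The emergence direction of $\gamma_0(z_0^{(2)})$ into the sector of $\widetilde{\mathcal R}_2$ bounded by the existing trajectories ending at $\theta^{(2)}_1$ and $\theta^{(2)}_5$ forces it to land at the intervening critical direction $\theta^{(2)}_0=\pi/6$. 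The conclusion for $\gamma_1(z_0^{(2)})$ will then follow from \textbf{P.2}: the complex conjugate of $\gamma_0(z_0^{(2)})$ is a distinct trajectory on $\widetilde{\mathcal R}_2$ emerging from $z_0^{(2)}$ and landing at the critical direction symmetric to $\theta^{(2)}_0$ about the real axis, namely $\theta^{(2)}_5=-\pi/6$, and the labeling convention identifies this trajectory as $\gamma_1(z_0^{(2)})$.

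The main obstacle I anticipate is ensuring that $\gamma_0(z_0^{(2)})$ does not prematurely cross a branch cut of $\widetilde{\mathcal R}_2$ (either a segment of $\Sigma_*$ or one of $L_0, L_1, L_2$) and end up on a different sheet, which would break the accounting of half-plane domains at $\infty^{(2)}$. Handling this requires a careful Teichm\"{u}ller-formula argument applied to the candidate $\varpi$-polygons bounded by the trajectory and the already-established trajectories from Lemma~\ref{lemma_trajectory_2} and its symmetric images, together with the geometric placement of the critical points in Figure~\ref{figure_description_trajectories_3}.
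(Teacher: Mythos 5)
Your proposal shares the right general toolbox---\textbf{P.2}, \textbf{P.3}, \textbf{P.7}, Teichm\"{u}ller's formula \eqref{teichmuller_formula}, the structure of half-plane domains at $\infty^{(2)}$, and the correct deduction of $\gamma_1(z_0^{(2)})$ from $\gamma_0(z_0^{(2)})$ via \textbf{P.2}. However, there is a genuine gap at the step you flag yourself as the main obstacle: ruling out that $\gamma_0(z_0^{(2)})$ crosses the branch cut $\pi_{2+}^{-1}([0,z_0])$ and leaves $\widetilde{\mathcal R}_2$. You propose handling this with ``a careful Teichm\"{u}ller-formula argument,'' but Teichm\"{u}ller's formula cannot do the job here: a trajectory that crosses $[0,z_0]$ onto $\widetilde{\mathcal R}_1$ could then escape to $\infty^{(1)}$ without ever closing up a bounded $\varpi$-polygon, so there is no finite polygon to which \eqref{teichmuller_formula} applies. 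The mechanism the paper actually uses for exactly this step is the deformation-of-integrals argument: if $\gamma_0(z_0^{(2)})$ intersected $\pi_{2+}^{-1}([0,z_0])$ at a point projecting to $x\in(0,z_0]$, then the trajectory condition \eqref{definition_trajectory_integral}, combined with deformation of the integration path onto the real line, forces $h_2(0,x)=0$, which contradicts Lemma~\ref{lemma_h_j_t=0}\,(i). That lemma---which reduces to showing that a certain explicit degree-six polynomial $\phi$ has no root in $[0,1]$---is the crucial ingredient and is entirely absent from your proposal. Once the trajectory is thus confined to $\widetilde{\mathcal R}_2\cup\widetilde{\mathcal R}_3$, \textbf{P.3} rules out $\gamma_0(z_0^{(2)})=\gamma_1(z_0^{(2)})$ (the resulting polygon would have no poles), and the conclusion follows by elimination. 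There is no need for the width parameter $\tau_5$ at this stage: the $\tau_j$'s are introduced only later, in Section~\ref{deformation_critical_graph_precritical}, to deform the critical graph from $t_1=0$ to general $(t_0,t_1)\in\mathcal F_1$.

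A smaller point: in your local expansion near the double zero $z_0^{(1)}=z_0^{(2)}$, the leading coefficient should involve $\xi_1(z_0)-\xi_3(z_0)$, not $\xi_1(z_0)-\xi_2(z_0)$; the latter vanishes, since $\xi_1$ and $\xi_2$ are precisely the two branches that coincide at this branch point of $\widetilde{\mathcal R}$.
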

\begin{proof}

The trajectory $\gamma_0(z_0^{(2)})$ cannot intersect $\pi_{2+}^{-1}([0,z_0])$ neither. Otherwise, if it does so at a point $x_+^{(2)}$, then 
$$
0=\int_0^{x_+^{(2)}}Q(s)ds=\int_0^x (\re\xi_{2+}(s)-\re\xi_{3+}(s))ds,
$$
where for the first integral we integrate over $\gamma_0(z_0^{(2)})$, and then we deform this contour of integration to the real line to get the second equality. Since $x\in 
(0,z_0]$, this is in contradiction with Lemma~\ref{lemma_h_j_t=0} {\rm (i)}. 

Having in mind {\bf P.2}, a consequence of the discussion above is that the trajectories $\gamma_0(z_0^{(2)})$ and $\gamma_1(z_0^{(2)})$ have to stay in the region on 
$\widetilde{\mathcal R}_2\cup \widetilde{\mathcal R}_3$ determined by the contour $\gamma_0(z_2^{(3)})\cup\gamma_2(z_2^{(3)})\cup \gamma_0(z_1^{(3)})\cup\gamma_2(z_1^{(3)})$. 
Additionally, the trajectory $\gamma_0(z_0^{(2)})$ cannot intersect $\pi_2^{-1}([z_0,\infty))$, otherwise using {\bf P.2} we conclude that 
$\gamma_0(z_0^{(2)})=\gamma_1(z_0^{(2)})$, and this trajectory then determines a $\varpi$-polygon without poles on its closure, contradicting {\bf P.3}. The only possibility left 
is the description claimed by the Lemma.
\end{proof}

Again using {\bf P.2} and {\bf P.7}, we translate the previous Lemma to the trajectories emanating from $z_1^{(2)}$, $z_2^{(2)}$. The outcome is displayed in Figure~
\ref{figure_description_trajectories_3_1}.

\begin{figure}[t]
 \centering
 \includegraphics[scale=1]{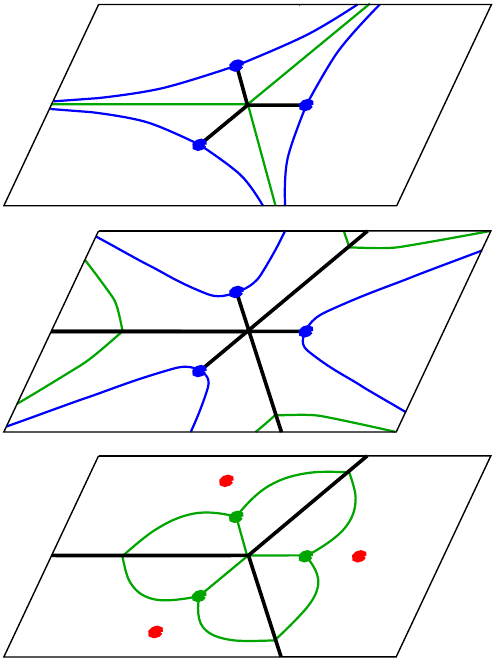}
 \caption{Trajectories of $\varpi$ in the large after Lemma~\ref{lemma_trajectory_3_1}. Compare with the previous stage in 
Figure~\ref{figure_description_trajectories_3}.}\label{figure_description_trajectories_3_1}
\end{figure}

\begin{lem}\label{lemma_trajectory_3}
 The trajectory $\gamma_0(\hat z_0^{(3)})$ never leaves $\widetilde{\mathcal 
R}_3$ and extends to $\infty^{(3)}$ with critical angle $\theta^{(3)}_{0}$.  The trajectory $\gamma_1(\hat z_0^{(3)})$ intersects the cut $L_0$ on $\widetilde{\mathcal R}_3$ 
exactly once, moves to the sheet $\widetilde{\mathcal R}_2$ and extends to $\infty^{(2)}$ along the critical direction $\theta^{(2)}_{1}$. 
\end{lem}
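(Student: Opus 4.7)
I begin with the local analysis at $\hat z_0^{(3)}$. This is a double zero of $\varpi$, so exactly four critical trajectories emanate from it at equally spaced angles (separation $\pi/2$). By principle~P.2, the set of emanation directions must be symmetric about the real axis, since $\hat z_0\in\R$, which forces this set to be either $\{0,\pi/2,\pi,3\pi/2\}$ or $\{\pi/4,3\pi/4,5\pi/4,7\pi/4\}$. The first option is excluded because \eqref{equalities_inequalities_real_line_xi_functions_t=0_1} gives $Q(x)=\xi_1(x)-\xi_2(x)\in\R\setminus\{0\}$ for $x>\hat z_0$, so the positive real direction is orthogonal to the trajectory foliation and cannot be an emanation direction. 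Hence the trajectories leave $\hat z_0^{(3)}$ along the diagonals, and the labeling convention identifies $\gamma_0(\hat z_0^{(3)})$ with the $\pi/4$-direction and $\gamma_1(\hat z_0^{(3)})$ with the $3\pi/4$-direction.

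For $\gamma_0(\hat z_0^{(3)})$, I would show it is trapped in the sector of $\widetilde{\mathcal R}_3$ bounded below by $(\hat z_0,+\infty)\subset\R$ and above by the cut $L_2$. First, $\gamma_0$ cannot cross the real axis at a regular point, because otherwise P.2 would yield a distinct trajectory $\gamma_0^*$ passing through the same regular point, whereas two distinct trajectories can only meet at critical points. Next, $\gamma_0$ cannot cross $L_2$: such a crossing at a projection $p\in L_2$ would give $\int_{\hat z_0^{(3)}}^{p^{(3)}}Q(s)\,ds\in i\R$, and deforming the contour to the real axis would produce an identity $h_2(x,y)=0$ with $x,y\in[0,z_0]$, contradicting Lemma~\ref{lemma_h_j_t=0}(i). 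Since $\gamma_0$ has no other escape route, it must reach $\infty^{(3)}$. The half-plane domain at $\infty^{(3)}$ bounded by the critical directions $\theta_5^{(3)}=-\pi/6$ and $\theta_0^{(3)}=\pi/6$ contains the positive real axis; its sheet-3 boundary must be the union $\gamma_0(\hat z_0^{(3)})\cup\gamma_3(\hat z_0^{(3)})$ (the latter being the complex conjugate of the former, guaranteed by P.2), which forces $\gamma_0$ to approach $\infty^{(3)}$ along $\theta_0^{(3)}$.

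For $\gamma_1(\hat z_0^{(3)})$ I would proceed by elimination. The same type of argument via Lemma~\ref{lemma_h_j_t=0}(i) rules out a crossing of $L_2$. Termination at any remaining finite critical point on $\widetilde{\mathcal R}_3$ is ruled out by Teichm\"{u}ller's formula \eqref{teichmuller_formula} or by P.3, as in the proofs of Lemmas~\ref{lemma_trajectory_2} and \ref{lemma_trajectory_3_1}. The trajectory also cannot extend to $\infty^{(3)}$: together with $\gamma_0(\hat z_0^{(3)})$ and an arc near $\infty^{(3)}$ it would bound a $\varpi$-polygon containing no pole, contradicting P.3. The only remaining possibility is a crossing of the cut $L_0$ (the negative real axis), moving $\gamma_1$ onto $\widetilde{\mathcal R}_2$. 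Uniqueness of the crossing follows from Lemma~\ref{lemma_h_j_t=0}(ii): a second crossing would yield two pairs $(x_1,y_1),(x_2,y_2)\subset(-\infty,0)$ with $h_1(x_j,y_j)=0$ and disjoint open interiors, violating the lemma. Once on $\widetilde{\mathcal R}_2$, $\gamma_1$ must end at $\infty^{(2)}$; matching against the already-described trajectories (Figure~\ref{figure_description_trajectories_3_1}) leaves only $\theta_1^{(2)}$ as an admissible approach direction.

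The main technical burden is concentrated in the two applications of Lemma~\ref{lemma_h_j_t=0}: non-crossing of $L_2$ and exact-single-crossing of $L_0$. Both rely on the rigidity provided by that lemma, whose proof depends crucially on the $\omega$-symmetry enjoyed at $t_1=0$. The remaining arguments --- exclusion of closed configurations, of terminations at finite critical points, and of competing critical directions at infinity --- follow the Teichm\"{u}ller/basic-structure template already used systematically in Lemmas~\ref{lemma_trajectory_1}--\ref{lemma_trajectory_3_1}, and present no essentially new difficulty.
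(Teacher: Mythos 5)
Your overall line of attack — local angular analysis at the double zero, sector-trapping for $\gamma_0(\hat z_0^{(3)})$, then elimination for $\gamma_1(\hat z_0^{(3)})$, with Lemma~\ref{lemma_h_j_t=0}, the principles \textbf{P.2}, \textbf{P.3} and Teichm\"uller's formula as the workhorses — is in the same spirit as the paper's proof, which also reduces everything to the deformation argument of Lemma~\ref{lemma_h_j_t=0} together with the count of critical directions at $\infty^{(3)}$. But there is a concrete step that does not hold as you wrote it.

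The claim that a crossing of $L_2$ by $\gamma_0(\hat z_0^{(3)})$ at $p^{(3)}$ would, upon deformation of $\int_{\hat z_0^{(3)}}^{p^{(3)}} Q\,ds$ to the real axis, produce $h_2(x,y)=0$ with $x,y\in[0,z_0]$ is not correct. First, $\hat z_0>z_0$, so any deformation to the real axis necessarily runs through $[z_0,\hat z_0]$; there $Q=\xi_1-\xi_2$ is real and of constant sign by \eqref{equalities_inequalities_real_line_xi_functions_t=0_2}, so this segment contributes a nonzero \emph{real} amount, not zero. Second, $p$ projects onto the ray $L_2$ at angle $\pi/3$, not onto $[0,z_0]$, so the other end of the deformed contour never lands in $[0,z_0]$ either. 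Third, on $\widetilde{\mathcal R}_3$ the integrand is $\xi_1-\xi_2$, and by \eqref{equalities_inequalities_real_line_xi_functions_t=0_3} its real part vanishes identically on $(0,z_0)$ (there $\xi_1$ and $\xi_2$ are complex conjugates), so an integral over $[0,z_0]$ cannot produce $h_2(x,y)$ anyway — $h_2$ involves $\re\xi_2-\re\xi_3$, not $\re\xi_1-\re\xi_2$. The quantity you would obtain from the honest deformation is a combination of the $[z_0,\hat z_0]$-contribution and a boundary-value integral along $L_2$, neither of which is controlled by Lemma~\ref{lemma_h_j_t=0}(i). The same objection applies to your parallel claim for $\gamma_1(\hat z_0^{(3)})$ not crossing $L_2$.

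The paper avoids the need for this single-trajectory statement. Instead it proves the weaker (but provable) fact that any \emph{pair} $\gamma_j(\hat z_0^{(3)})\cup\gamma_k(\hat z_0^{(3)})$ crosses each of $L_0$ and $L_2$ at most once, and that no such pair bounds a closed $\varpi$-polygon. Combined with the observation that only two critical directions at $\infty^{(3)}$ lie in the sector $-\pi/3<\arg z<\pi/3$, this forces by elimination exactly the configuration you are after: $\gamma_0,\gamma_3$ go to $\infty^{(3)}$ and $\gamma_1,\gamma_2$ cross cuts, with the exact-once crossing established by a parity argument combined with Lemma~\ref{lemma_h_j_t=0}(ii). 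If you want to keep your more direct ``non-crossing of $L_2$'' route, you would need a different mechanism — either transport the situation to $L_0$ via \textbf{P.7} and compare with the already-established trajectories out of $z_j^{(3)}$, or abandon the direct claim and adopt the pair-counting strategy. A smaller remark: the assertion that $\gamma_0^*$ is \emph{a priori} distinct from $\gamma_0$ also needs the closed-loop/\textbf{P.3} fallback that you only invoke later, since if $\gamma_0=\gamma_0^*$ the trajectory starts and ends at $\hat z_0^{(3)}$ and the contradiction comes from \textbf{P.3} rather than from the uniqueness of trajectories through a regular point.
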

\begin{proof}
An integral deformation argument similar to the one presented in the proof of Lemma~\ref{lemma_trajectory_2} shows that any contour of the form $\gamma_k(\hat z_0^{(3)})\cup 
\gamma_j(\hat z_0^{(3)})$, $j,k\in \{1,2,3,4\}$, $j\neq k$, intersects each of the cuts $L_0$ and $L_2$ at most once. Moreover, these contours cannot be bounded closed loops in 
$\widetilde {\mathcal R}_3$, otherwise we would get a contradiction to {\bf P.3}.

There are two critical directions in the sector $-\frac{\pi}{3}<\arg z<\frac{\pi}{3}$ in $\widetilde{\mathcal R}_3$, namely $\theta^{(3)}_{0}$ and $\theta^{(3)}_{5}$, so from the 
local behavior of trajectories near the pole $\infty^{(3)}$ we know that at most two trajectories from $\hat z_0^{(3)}$ can extend to $\infty^{(3)}$.

Combining the last two paragraphs, the only possibility left is that $\gamma_0(\hat z_0^{(3)})$ and $\gamma_3(\hat z_0^{(3)})$ stay in $\widetilde{\mathcal R}_3$ and go to 
$\infty^{(3)}$ along the critical directions $\theta^{(3)}_{0}$ and $\theta^{(3)}_{5}$, respectively, and the trajectories $\gamma_1(\hat z_2^{(3)})$ and $\gamma_2(\hat 
z_2^{(3)})$ intersect the cuts $L_0$ and $L_2$, respectively, and move to $\widetilde{\mathcal R}_2$.

Consequently, we use {\bf P.2} and {\bf P.7} to get that $\gamma_1(\hat z_1^{(3)})$ has to intersect the branch cut $L_0$ on $\widetilde{\mathcal R}_3$ and move to the sheet 
$\widetilde{\mathcal R}_3$. Similarly as before, we also get that $\gamma_1(\hat z_1^{(3)})$ intersects $L_0$ at exactly one point. Indeed, due to the constrained geometry we 
have, this trajectory has to end up at a critical point on $\widetilde{\mathcal R}_1\cup\widetilde{\mathcal R}_2$. Hence it has to cross $L_0$ an odd number of times. If 
$x_1<y_1$ are any two points of intersection of $\gamma_1(\hat z_1^{(3)})$, we get $h(x_1,y_1)=0$, as it follows from deformation of 
integrals as we did above. From Lemma~\ref{lemma_h_j_t=0} {\it (ii)} we then conclude that there can be at most two of such points $x_1,y_1$. Since the number of intersections is 
odd, we conclude that actually there is exactly one intersection point. 

Using again {\bf P.2} and {\bf P.7}, we translate the outcome of the previous paragraph to the trajectory $\gamma_0(\hat z_0^{(3)})$, concluding that it intersects $L_2$ 
exactly once, and thus it has to stay in $\widetilde{\mathcal R}_2$. Consequently, the only possibility left for its behavior in the large is that it has to extend to 
$\infty^{(2)}$ along the critical direction $\theta^{(2)}_1$, concluding the proof.
\end{proof}

Using the principles {\bf P.2} and {\bf P.7}, it is straightforward to get from Lemma~\ref{lemma_trajectory_3} the behavior of the trajectories emanating from the double zeros 
$\hat z_j^{(3)}$, $j=0,1,2$, hence concluding the description of the critical graph of $\varpi$ for $t_1=0$, as can be seen in Figure~\ref{figure_description_trajectories_5}. For 
later convenience, we also reverse the regluing $\widetilde{\mathcal R}_j\mapsto \mathcal R_j$, and the result is also displayed in Figure~\ref{figure_description_trajectories_5}.

\begin{figure}[t]
 \centering
 \includegraphics[scale=1]{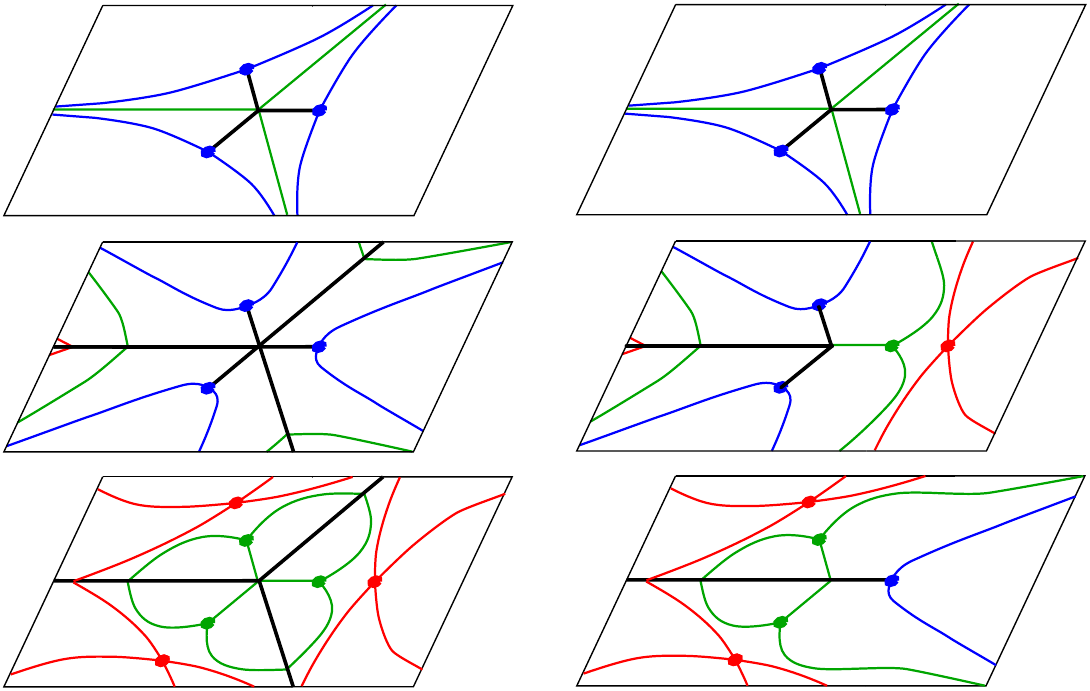}
 \caption{Critical graph of $\varpi$ for $t_1=0$ before (left) and after (right) the regluing $\widetilde{\mathcal R}_j\mapsto \mathcal 
R$.}\label{figure_description_trajectories_5}
\end{figure}

\subsubsection{Planar realization of the critical graph for $t_1=0$}\label{planar_critical_graph}

A planar realization of $\mathcal G$ for $t_1=0$ can be seen in Figure~\ref{figure_planar_graph}. It consists of a rectangle whose top and bottom are identified, and whose left 
and right rims correspond, respectively, to the poles $\infty^{(2)}=\infty^{(3)}$ and $\infty^{(1)}$. On each of these rims, there are a number of marked points, corresponding to 
the critical directions at the respective poles.

From its planar realization, it is easy to identify the strip and end domains of $\mathcal G$. We denote the strip domains by $\mathcal S_j$, $j=1,\cdots, 9$, labeling them 
according to Figure~\ref{figure_planar_graph}.

\begin{figure}[t]
 \centering
 \begin{overpic}[scale=1]
  {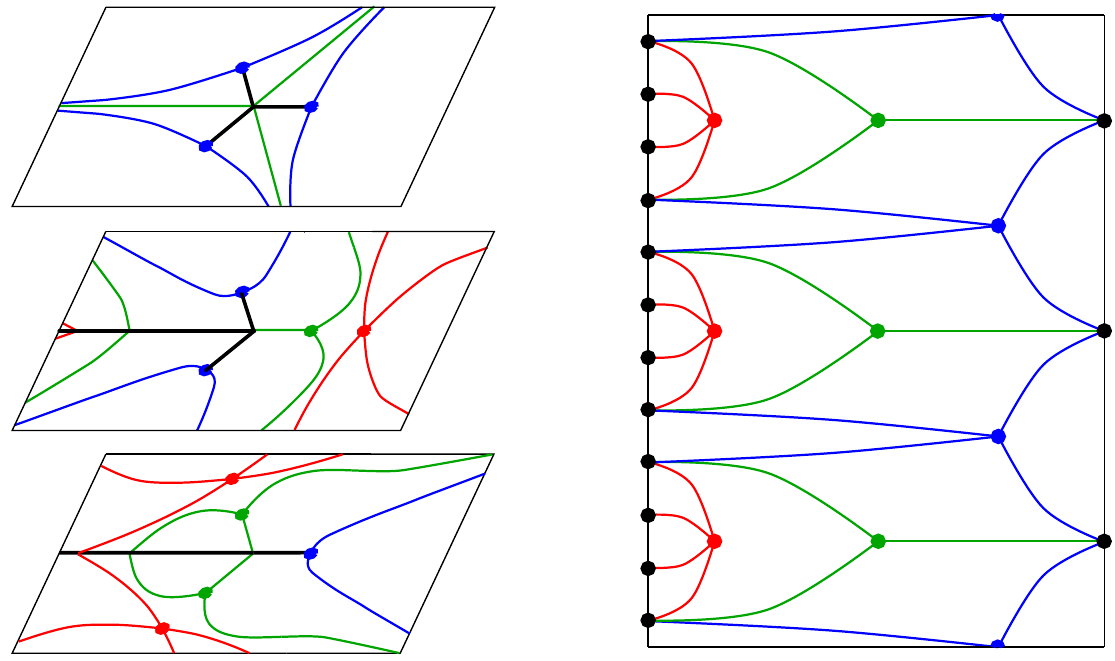}
  \put(52,55){$\theta^{(3)}_{0}$}
  \put(52,50.5){$\theta^{(3)}_{1}$}
  \put(52,46){$\theta^{(3)}_{2}$}
  \put(52,41){$\theta^{(2)}_{3}$}
  \put(52,36.5){$\theta^{(2)}_{4}$}
  \put(52,31.7){$\theta^{(2)}_{5}$}
  \put(52,26.9){$\theta^{(2)}_{0}$}
  \put(52,22){$\theta^{(2)}_{1}$}
  \put(52,17.5){$\theta^{(2)}_{2}$}
  \put(52,12.5){$\theta^{(3)}_{3}$}
  \put(52,8){$\theta^{(3)}_{4}$}
  \put(52,3.5){$\theta^{(3)}_{5}$}   
  \put(100.5,47.5){$\theta^{(1)}_{2}$}
  \put(100.5,28.5){$\theta^{(1)}_{1}$}
  \put(100.5,10){$\theta^{(1)}_{0}$}
  \put(67,47.5){$3$}
  \put(67,28.5){$5$}
  \put(67,10){$8$}
  \put(83,52){$1$}
  \put(83,33){$7$}
  \put(83,14){$4$}
  \put(83,43){$9$}
  \put(83,24){$2$}
  \put(83,5){$6$}
  \put(18,47.5){$\scriptstyle 7$}
  \put(19.5,50.5){$\scriptstyle 2$}
  \put(21,46){$\scriptstyle 9$}
  \put(24.5,47){$\scriptstyle 1$}
  \put(26,50.5){$\scriptstyle 6$}
  \put(23,51.5){$\scriptstyle 4$}
  \put(27,32){$\scriptstyle 2$}
  \put(23,25){$\scriptstyle 7$}
  \put(14,31){$\scriptstyle 4$}
  \put(13,27){$\scriptstyle 9$}
  \put(6,27){$\scriptstyle 3$}
  \put(8,31){$\scriptstyle 8$}
  \put(30,29){$\scriptstyle 5$}
  \put(25,12){$\scriptstyle 1$}
  \put(23,5){$\scriptstyle 6$}
  \put(21,14){$\scriptstyle 3$}
  \put(16,3.5){$\scriptstyle 8$}
  \put(17,10.5){$\scriptstyle 9$}
  \put(16,7){$\scriptstyle 4$}
 \end{overpic}
 \caption{On the left panel the critical graph of $\varpi$ in the three-cut case is displayed, and on the right panel its planar realization. The marked black dots on the rims of 
the rectangle correspond to the critical directions at the poles $\infty^{(1)}$ (right rim) and $\infty^{(2)}=\infty^{(3)}$ (left rim). In both pictures, the number $j$ indicates 
the 
strip domain $\mathcal S_j$.}\label{figure_planar_graph}
\end{figure}

\subsubsection{Deformation of the critical graph in the three-cut case}\label{deformation_critical_graph_precritical}

We now prove that the critical graph depicted in Figure~\ref{figure_planar_graph} is always valid in the three-cut case, that is, for $(t_0,t_1)\in\mathcal F_1$. According to 
the principle {\bf P.4}, this is the case as long as

\begin{enumerate}
 \item[\rm (i)] Existing zeros of $\varpi$ do not coalesce and there are no new zeros appearing. This is the case as the discussion at the beginning of 
Section~\ref{subsection_critical_graph_general_principles} assures us.

\item[\rm (ii)] No new domains appear. Taking into account {\rm (i)} above, this can only happen if a short trajectory changes its behavior. Since we do not have short 
trajectories for $t_1=0$, we are safe.

\item[\rm (iii)] The existing strip domains do not shrink. This amounts to showing that their widths do not vanish. For $j=1,\hdots,5$, the identity $\sigma(\mathcal 
S_j)=|\tau_j|$ follows directly from \eqref{computation_width_parameter} and the definition of $\tau_j$ given in \eqref{width_tau_1}--\eqref{width_tau_5}. Consequently, 
$\sigma(\mathcal S_j)\neq 0$ for $j=1,\cdots,5$ and $(t_0,t_1)\in\mathcal F_1$. Moreover, the symmetry under conjugation shows that $\sigma(\mathcal S_j)=\sigma(\mathcal 
S_{j-5})$ for $j=6,\hdots,9$, thus $\sigma(\mathcal S_j)\neq 0$ for $j=6,\hdots,9$ as well.
\end{enumerate}

The considerations {\rm (i)--(iii)} above hence show that the critical graph displayed in Figure~\ref{figure_planar_graph} is valid for every choice $(t_0,t_1)\in\mathcal F_1$.

\subsection{Critical graph in the one-cut case}\label{subsection_critical_graph_supercritical}

We now describe the critical graph in the one-cut case $(t_0,t_1)\in\mathcal F_2$. We first describe 
the critical graph of $\varpi$ for values of the parameter $(t_0,t_1)\in\mathcal F_2$ that are close to the critical curve $\gamma_c$ and then prove that 
the critical graph remains unchanged when we deform the parameters $(t_0,t_1)$ within $\mathcal F_2$.

\subsubsection{Critical graph in the one-cut case - short range}

When $(t_0,t_1)$ approaches the critical curve $\in\gamma_c$ from $\mathcal F_1$, the critical points $z_1^{(j)}$ and $z_2^{(j)}$, $j=1,2$, of $\varpi$ coalesce, thus the behavior 
of the critical trajectories emanating from these points can possibly (in fact, will) change. Choosing $(t_0,t_1)$ sufficiently close to $\gamma_c$, the remaining trajectories 
emanating from the critical points $z_0^{(1)}$, $z_0^{(2)}$, $\hat z_0^{(2)}$, $\hat z_1^{(3)}$, $\hat z_2^{(3)}$ inherit their behavior already described for parameters on 
$\mathcal F_1$. This is true because when we approach $\gamma_c$ from $\mathcal F_1$, each of such trajectories do not coalesce with other zeros of $\varpi$, and from principle 
{\bf P.4} they have to preserve their behavior. 

Taking into account the relations in the last two equations in \eqref{equalities_inequalities_real_line_xi_functions_supercritical}, we get 
$$
\gamma_1(z_2^{(1)})=\pi_1^{-1}((-\infty,z_2]), \quad \gamma_1(z_0^{(2)})=\pi_2^{-1}([z_1,z_0])=\gamma_0(z_1^{(2)}),
$$
so the starting configuration is the one displayed on Figure~\ref{figure_description_trajectories_6}, which at this moment we know it is valid whenever $(t_0,t_1)\in\mathcal F_2$ 
is chosen sufficiently close to $\gamma_c$.

\begin{figure}[t]
 \centering
 \includegraphics[scale=1]{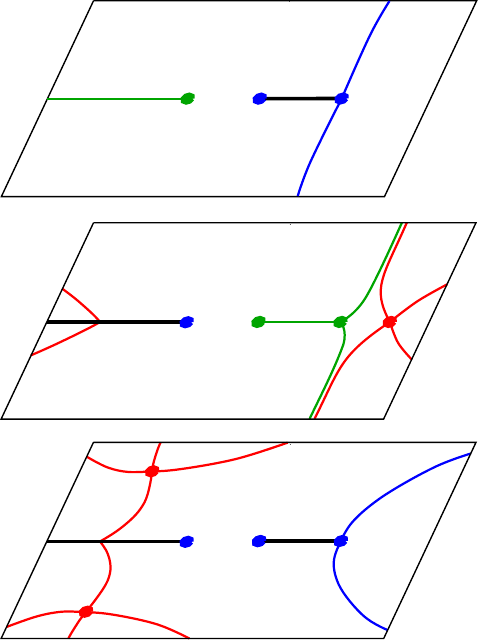}
 \caption{Trajectories that are preserved on the one-cut case when $(t_0,t_1)$ is close to $\gamma_c$.}\label{figure_description_trajectories_6}
\end{figure}

\begin{lem}\label{lemma_intersection_trajectories_postcritical}
 The trajectories $\gamma_0(z_1^{(1)}),\gamma_1(z_1^{(1)}), \gamma_0(z_1^{(3)})$ and $\gamma_1(z_1^{(3)})$ do not cross the branch cut $[z_1,z_0]$ connecting $\mathcal R_1$ and 
$\mathcal R_3$. 

The trajectories $\gamma_0(z_2^{(2)}),\gamma_1(z_2^{(2)}), \gamma_0(z_2^{(3)})$ and $\gamma_1(z_2^{(3)})$ do not cross the branch cut $(-\infty,z_2]$ connecting $\mathcal R_2$ and 
$\mathcal R_3$.
\end{lem}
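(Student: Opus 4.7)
The plan is to argue by contradiction in each of the two claims, using the same deformation-of-contour technique that appears repeatedly in the three-cut analysis (compare the proofs of Lemmas \ref{lemma_trajectory_2} and \ref{lemma_trajectory_3_1}).

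Consider first the four trajectories emanating from the double zero $z_1^{(1)}=z_1^{(3)}$. Since $\xi_1(z_1)=\xi_3(z_1)\in\mathbb R$ by \eqref{zeros_difference_xi_functions_F_2} and the third line of \eqref{equalities_inequalities_real_line_xi_functions_supercritical}, and $\xi_2(z_1)$ is also real, on each of $\mathcal R_1$ and $\mathcal R_3$ the germ $Q$ satisfies $Q(z_1)\in\mathbb R\setminus\{0\}$; the double-zero character of $\varpi$ at $z_1$ comes entirely from the branching of $\mathcal R$. Hence the local trajectory equation $\re(Q(z_1)(z-z_1))=\const$ forces the four trajectories to leave $z_1$ along the vertical direction, two on each sheet, and principle \textbf{P.2} pairs them by complex conjugation. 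By symmetry it suffices to treat a single trajectory $\gamma\subset\mathcal R_j$, $j\in\{1,3\}$, entering the open upper half plane.

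Suppose, for contradiction, that $\gamma$ meets the cut $[z_1,z_0]$ for the first time at a point projecting to $y\in(z_1,z_0]$. The portion of $\gamma$ from $z_1^{(j)}$ to this crossing, together with the segment $[z_1,y]$ traversed just above the real axis, bounds a relatively compact simply connected region $\Omega_0$ in the upper half plane of $\mathcal R_j$. Choosing the $Q$-branch cuts attached to the simple zeros of $\varpi$ (which in the one-cut case project to the real axis: $z_0^{(2)}$, $z_1^{(2)}$, $z_2^{(1)}$) to issue along the real axis, $Q$ is single-valued and analytic on $\Omega_0$; the only poles of $Q$ lie at infinity and are therefore outside $\overline{\Omega_0}$. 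The contour integral of $Q$ along $\gamma$ may thus be deformed to the upper side of $[z_1,y]$ without altering its value, and since $\gamma$ issues from a zero of $\varpi$,
\[
0=\re\int_{\gamma}Q(s)\,ds=\re\int_{z_1}^{y}Q_+(s)\,ds.
\]
On $\mathcal R_1$, $Q_+=\xi_2-\xi_{3+}$; on $\mathcal R_3$, $Q_+=\xi_{1+}-\xi_2$. The third line of \eqref{equalities_inequalities_real_line_xi_functions_supercritical} gives $\im\xi_2\equiv 0$ on $(z_1,z_0)$ and $\re\xi_{1+}=\re\xi_{3+}$ there, so in either case the above real part equals $\pm h_2(z_1,y)$, with $h_2$ as defined in Section \ref{subsection_technical_computations_supercritical}. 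Hence $h_2(z_1,y)=0$ with $z_1\neq y$ in $[z_1,z_0]$, contradicting Lemma \ref{lemma_zeros_h_j}~(ii).

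The second assertion will follow by a verbatim repetition, with $\xi_1$ and $\xi_2$ switching roles: one uses $Q=\xi_1-\xi_3$ on $\mathcal R_2$ and $Q=\xi_1-\xi_2$ on $\mathcal R_3$ together with the fourth line of \eqref{equalities_inequalities_real_line_xi_functions_supercritical} (namely $\im\xi_1\equiv 0$ and $\xi_{2\pm}=\overline{\xi_{3\pm}}$ on $(-\infty,z_2)$), and the contradiction reduces to $h_1(x,z_2)=0$ for some $x\in(-\infty,z_2)$, which is ruled out by Lemma \ref{lemma_zeros_h_j}~(i). The main obstacle is administrative rather than conceptual: one must carefully verify, using the sheet structure of Section \ref{section_sheet_structure_2}, that the $Q$-branch cuts and the enclosed region $\Omega_0$ can be arranged consistently so that the contour deformation is carried out in a $Q$-analytic piece of $\mathcal R_j$. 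Once the integral identity is established, the contradiction against Lemma \ref{lemma_zeros_h_j} is immediate.
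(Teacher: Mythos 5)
Your argument is correct and follows the paper's own proof closely: both suppose a crossing of the indicated branch cut, deform the integral of $Q$ along the trajectory to the real segment, and reduce to the vanishing of $h_2$ (respectively $h_1$) between two distinct points, which contradicts Lemma~\ref{lemma_zeros_h_j}. The extra material on the local behavior of $Q$ at the branch point and on arranging the branch cuts for $Q$ is a more explicit version of what the paper leaves implicit, but the key mechanism is identical.
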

\begin{proof}
 We deal with $\gamma_0(z_1^{(1)})$. The result for the remaining trajectories follow analogously.
 
 Suppose $\gamma_0(z_1^{(1)})$ crosses the branch cut $[z_1,z_0]$ at a point projecting to $x_0$. Deforming the integral over $\gamma_0(z_1^{(1)})$ to the real line, we 
conclude
 $$
 h_2(z_1,x_0)=\int_{z_1}^{x_0} \re(\xi_{2}(x)-\xi_{3+}(x))dx=0,
 $$
 but this cannot occur, as it follows from Lemma~\ref{lemma_zeros_h_j} {\rm (ii)} and our assumption that $(t_0,t_1)$ is sufficiently close to $\gamma_c$.
\end{proof}

\begin{lem}
The trajectories $\gamma_0(z_1^{(1)})$ and $\gamma_0(z_2^{(1)})$ extend to $\infty^{(1)}$ along the critical angle $\theta^{(1)}_{0}$. 

The trajectories $\gamma_1(z_1^{(1)})$ and $\gamma_2(z_2^{(1)})$ extend to $\infty^{(1)}$ along the critical angle $\theta^{(1)}_{2}$. 
\end{lem}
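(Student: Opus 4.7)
The strategy is to mirror the arguments of Lemmas~\ref{lemma_trajectory_1}--\ref{lemma_trajectory_3_1}, replacing the role of Lemma~\ref{lemma_h_j_t=0} by Lemma~\ref{lemma_zeros_h_j} and invoking the nonvanishing of the widths $\tau_j$ from Appendix~\ref{appendix_widths_supercritical} to exclude short trajectories, while using conjugation symmetry (P.2) to halve the work. The first task is to determine the local angles of emanation. At the simple zero $z_2^{(1)}$, three trajectories leave at angles separated by $2\pi/3$; one of them, $\gamma_1(z_2^{(1)})=\pi_1^{-1}((-\infty,z_2])$, lies along the negative real axis, so by P.2 the remaining two emerge at angles $\pi/3$ and $5\pi/3$, which by the CCW labeling are $\gamma_0(z_2^{(1)})$ and $\gamma_2(z_2^{(1)})$ respectively. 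At the double zero $z_1^{(1)}$, which is the branch point joining $\mathcal R_1$ and $\mathcal R_3$, a local analysis in the uniformizer $t$ with $t^2=z-z_1$ shows that exactly two trajectories emanate on sheet $\mathcal R_1$; their projected $z$-angles form an antipodal pair $\{\phi,\phi+\pi\}$, and invariance of this pair under conjugation forces either $\{\phi,\phi+\pi\}=\{0,\pi\}$ or $\{\pi/2,3\pi/2\}$. The first possibility is excluded by observing that the cut $[z_1,z_0]$ is not a horizontal trajectory on sheet $\mathcal R_1$: using the relations \eqref{equalities_inequalities_real_line_xi_functions_supercritical}, for $x\in(z_1,z_0)$ the three roots of $F(\xi,x)=0$ are $\xi_2(x)\in\R$ together with a conjugate pair $a\pm bi$ with $b\neq 0$, so the function element $Q=\xi_2-\xi_3$ takes the value $\xi_2(x)-(a+bi)$ above the cut, which is not real. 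Hence $\gamma_0(z_1^{(1)})$ emerges vertically upward and $\gamma_1(z_1^{(1)})$ vertically downward.

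By P.2 it suffices to track $\gamma_0(z_1^{(1)})$ and $\gamma_0(z_2^{(1)})$ in the upper half of $\mathcal R_1$; the claims for $\gamma_1(z_1^{(1)})$ and $\gamma_2(z_2^{(1)})$ at angle $\theta^{(1)}_2=5\pi/3$ then follow by complex conjugation. Lemma~\ref{lemma_intersection_trajectories_postcritical} prevents them from crossing the cut $[z_1,z_0]$, so they stay trapped on sheet $\mathcal R_1$ until they reach a critical point. The only finite critical points of $\varpi$ accessible from the upper half of sheet $\mathcal R_1$ without crossing the cut are $z_2^{(1)}$, $z_0^{(1)}$, and the starting points themselves. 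A short trajectory from $z_1^{(1)}$ to $z_2^{(1)}$ would force $\tau_4=0$ via \eqref{width_tau_4_post}, and a short trajectory from either point to $z_0^{(1)}$ (along a path looping above the cut) would force $\tau_6=0$ via \eqref{width_tau_6_post}; both contradict the nonvanishing results of Appendix~\ref{appendix_widths_supercritical}. Since two distinct trajectories can meet only at critical points, both $\gamma_0(z_1^{(1)})$ and $\gamma_0(z_2^{(1)})$ must extend to $\infty^{(1)}$. Among the critical directions $\{\theta^{(1)}_0,\theta^{(1)}_1,\theta^{(1)}_2\}=\{\pi/3,\pi,5\pi/3\}$, approach along $\theta^{(1)}_1=\pi$ would produce, together with $\gamma_1(z_2^{(1)})=\pi_1^{-1}((-\infty,z_2])$, a simply connected $\varpi$-polygon on sheet~1 whose critical-point count violates Teichm\"uller's formula \eqref{teichmuller_formula}; hence the only consistent target is $\theta^{(1)}_0=\pi/3$.

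The main obstacle I anticipate lies in the local analysis at the branch point $z_1^{(1)}$: ruling out the alternative emanation configuration $\{\phi,\phi+\pi\}=\{0,\pi\}$ requires a careful identification of the local branches of $Q$ across the cut $[z_1,z_0]$ using the sheet-gluing conventions of Section~\ref{section_sheet_structure_2} together with the reality structure recorded in \eqref{equalities_inequalities_real_line_xi_functions_supercritical}. Once this is settled, the remaining obstructions are routine applications of Teichm\"uller's formula and the width-nonvanishing results, following the same template as in the three-cut case.
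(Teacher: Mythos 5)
Your overall strategy is closely aligned with the paper's: confine the trajectories to $\mathcal R_1$ via Lemma~\ref{lemma_intersection_trajectories_postcritical}, use conjugation (P.2) to halve the work, and invoke Teichm\"uller's formula to pin down the final critical direction. The paper does not, however, compute the local emanation angles at $z_1^{(1)}$ as a separate first step, nor does it need to rule out short trajectories via the width parameters $\tau_4,\tau_6$: it simply observes that if $\gamma_0(z_1^{(1)})$ returns to $\pi_1^{-1}([z_2,z_1])$ then by P.2 it must coincide with its own conjugate, producing a bounded $\varpi$-polygon without poles and contradicting P.3. So you take a somewhat heavier route, but the extra steps you add are, in principle, legitimate alternatives.

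There is, however, a genuine gap in your determination of the local angles at $z_1^{(1)}$. To exclude the configuration $\{\phi,\phi+\pi\}=\{0,\pi\}$ (i.e.\ emanation along the real axis) you argue that for $x\in(z_1,z_0)$ the function element $Q=\xi_2-\xi_3$ is ``not real'' above the cut. But that is not the relevant condition: along a horizontal segment one has $ds\in\R$, so the trajectory condition $Q\,ds\in i\R$ requires $\re Q=0$, i.e.\ $Q$ purely imaginary. A complex number with nonzero imaginary part (your $\xi_2(x)-(a+bi)$ with $b\neq0$) can perfectly well have vanishing real part, so ``$Q$ not real'' does not exclude a horizontal emanation. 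What you actually need is $\re Q\neq 0$ on a punctured neighborhood of $z_1$; the cleanest way to see this is to evaluate directly at the branch point itself: at $z=z_1$ one has $\xi_1(z_1)=\xi_3(z_1)\in\R$ and $\xi_2(z_1)\in\R$ by \eqref{equalities_inequalities_real_line_xi_functions_supercritical}, hence $Q(z_1^{(1)})=\xi_2(z_1)-\xi_3(z_1)\in\R\setminus\{0\}$, and the local model $\varpi\sim -4Q(z_1)^2 t^2\,dt^2$ in the uniformizer $t$ immediately yields projected $z$-angles $\pm\pi/2$. (Alternatively, Corollary~\ref{corollary_zeros_xi_G} gives $\re Q_+=(3\xi_2-z^2-t_1)/2\neq0$ on $(z_1,z_0)$, which also rules out the horizontal option, but the evaluation at $z_1$ is more direct.) Your invocation of $\tau_4$ and $\tau_6$ to exclude short trajectories also needs more care — one has to check that the hypothetical short trajectory and the real segment used to define $\tau_j$ yield the same integral, which requires tracking the monodromy of $Q$ around its odd-order critical points on $\mathcal R_1$ — whereas the paper's P.3 closed-loop argument sidesteps these bookkeeping questions entirely.
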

\begin{proof}
 From Lemma~\ref{lemma_intersection_trajectories_postcritical}, we know that $\gamma_0(z_1^{(1)})$ must stay on the sheet $\mathcal R_1$, so either $\gamma_0(z_1^{(1)})$ 
intersects $\pi_1^{-1}([z_2,z_1])$ or it goes to $\infty^{(1)}$. If the former occurs, then we use {\bf P.2} to get that $\gamma_0(z_1^{(1)})=\gamma_3(z_1^{(1)})$, and 
consequently it determines a bounded $\varpi$ polygon without poles on its closure, contradicting {\bf P.3}. Hence we conclude that $\gamma_0(z_1^{(1)})$ extends to 
$\infty^{(1)}$, 
either along $\theta^{(1)}_{0}$ or $\theta^{(1)}_{1}$. 

Consequently, the conclusion above forces $\gamma_0(z_2^{(1)})$ to stay in $\mathcal R_1$, so it also has to go to $\infty^{(1)}$ either along $\theta^{(1)}_{0}$ or 
$\theta^{(1)}_{1}$. We then apply \eqref{teichmuller_formula} to the $\varpi$-polygon $D\subset \mathcal R_1$ determined by the trajectories $\gamma_0(z_2^{(1)})$ and 
$\gamma_1(z_2^{(1)})$. There are no critical points on $D$, so the right-hand side of \eqref{teichmuller_formula} is equal to $2$. Moreover, a simple computation shows that 
$\beta(z_2^{(1)})=0$, so
$$
\beta(\infty^{(1)})=2-\beta(z_2^{(1)})=2,
$$
and then $\theta(\infty^{(1)},D)=\frac{\pi}{3}$. Consequently, $\gamma_0(z_2^{(1)})$ extends to $\infty^{(1)}$ along $\theta^{(1)}$, and the same has to hold for 
$\gamma_0(z_1^{(1)})$. 

Finally, we get the behavior of the trajectories $\gamma_1(z_1^{(1)})$ and $\gamma_2(z_2^{(1)})$ by simply applying the principle {\bf P.2}.
\end{proof}

Very similar arguments as for the previous proof also work to describe the behavior of the trajectories emanating from $z_1^{(j)}$, $z_2^{(j)}$, $j=2,3$. We skip the details. 

The final outcome is the critical graph displayed in Figure~\ref{figure_planar_graph_2}, where we remind the reader that we are assuming $(t_0,t_1)\in\mathcal F_2$ is 
sufficiently close to the critical curve $\gamma_c$. In this figure, the planar version of the critical graph (as explained in Section~\ref{planar_critical_graph}) is also 
displayed. We denote the strip domains by $\mathcal S_j$, $j=1,\hdots,8$, labeled as displayed in Figure~\ref{figure_planar_graph_2}.

\begin{figure}[t]
 \centering
 \begin{overpic}[scale=1]
  {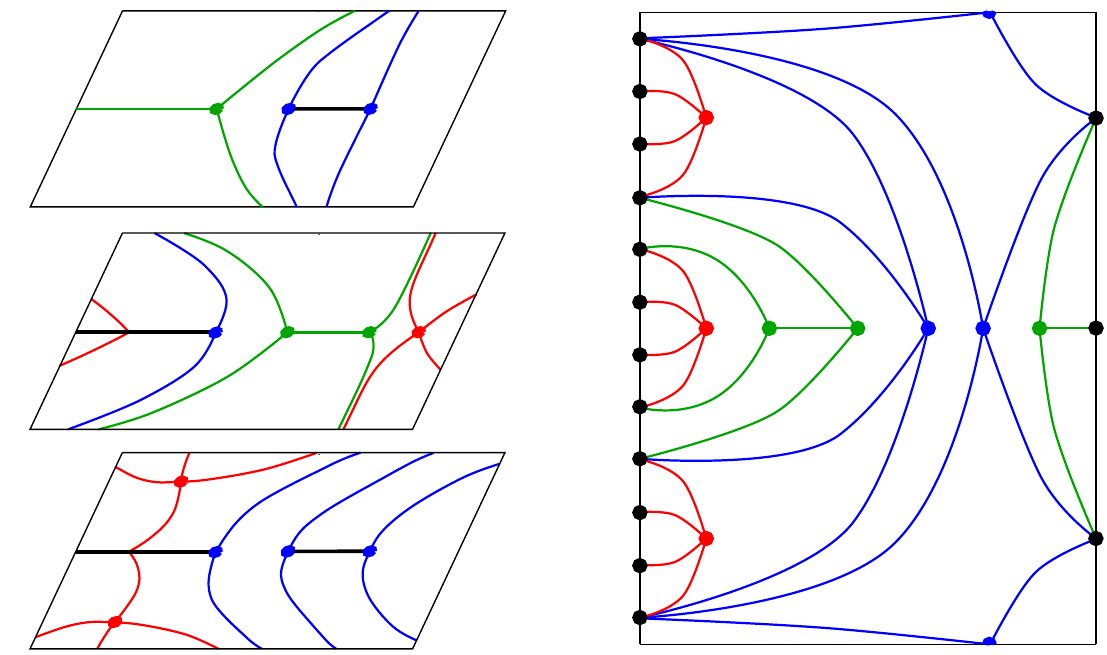}
  \put(52,55){$\theta^{(3)}_{0}$}
  \put(52,50.5){$\theta^{(3)}_{1}$}
  \put(52,46){$\theta^{(3)}_{2}$}
  \put(52,41){$\theta^{(2)}_{3}$}
  \put(52,36.5){$\theta^{(2)}_{4}$}
  \put(52,31.7){$\theta^{(2)}_{5}$}
  \put(52,26.9){$\theta^{(2)}_{0}$}
  \put(52,22){$\theta^{(2)}_{1}$}
  \put(52,17.5){$\theta^{(2)}_{2}$}
  \put(52,12.5){$\theta^{(3)}_{3}$}
  \put(52,8){$\theta^{(3)}_{4}$}
  \put(52,3.5){$\theta^{(3)}_{5}$}   
  \put(100.5,47.5){$\theta^{(1)}_{2}$}
  \put(100.5,28.5){$\theta^{(1)}_{1}$}
  \put(100.5,10){$\theta^{(1)}_{0}$}
  \put(30,51){$\scriptstyle 6$}
  \put(22,48.5){$\scriptstyle 4$}
  \put(27,45){$\scriptstyle 7$}
  \put(13,32){$\scriptstyle 8$}
  \put(12,25){$\scriptstyle 3$}
  \put(35,28.5){$\scriptstyle 5$}
  \put(22,28.5){$\scriptstyle 2$}
  \put(17,12){$\scriptstyle 3$}
  \put(15,5){$\scriptstyle 8$}
  \put(22,9){$\scriptstyle 1$}
  \put(32,12){$\scriptstyle 7$}
  \put(29,6){$\scriptstyle 6$}
  \put(70,45){$3$}
  \put(87,47){$7$}
  \put(70,12){$8$}
  \put(87,7){$6$}
  \put(65.5,28.5){$5$}
  \put(79.5,28.5){$2$}
  \put(85.5,28.5){$1$}
  \put(91,28.5){$4$}
 \end{overpic}
 \caption{On the left the critical graph of $\varpi$ for $(t_0,t_1)\in\mathcal F_2$, and on the right its planar realization. The marked black dots on the rims of 
the rectangle correspond to the critical directions at the poles $\infty^{(1)}$ (right rim) and $\infty^{(2)}=\infty^{(3)}$ (left rim). In both pictures, the number $j$ indicates 
the strip domain $\mathcal S_j$.}\label{figure_planar_graph_2}
\end{figure}

\subsubsection{Deformation of the critical graph in the one-cut case}

We now prove that the critical displayed in Figure~\ref{figure_planar_graph_2} is always valid in the one-cut case, that is, for the whole range $(t_0,t_1)\in \mathcal R_2$.
Similarly as in Section~\ref{deformation_critical_graph_precritical}, we use the principle {\bf P.4} to conclude that this is the case as long as

\begin{enumerate}
 \item[(i)] The order of every critical point of $\varpi$ is preserved and no new critical points appear. This is indeed the case, as discussed at the beginning of 
Section~\ref{subsection_critical_graph_general_principles}.

\item[(ii)] No new domains appear. Taking into account {\rm (i)} above, this can only occur if a short trajectory changes its behavior. In the present situation, the only critical 
trajectory is $\gamma_0(z_1^{(2)})=\gamma_2(z_0^{(2)})$, which lives on the real line, so from principle {\bf P.2} it is clear that this trajectory does not change its behavior.

\item[(iii)] The strip domains do not shrink. As before, it is enough to verify that the widths $\sigma(\mathcal S_j)$, $j=1,\hdots, 8$, do not vanish for $(t_0,t_1)\in\mathcal 
F_2$. For $j=1,\hdots,6$, these are given by $\sigma(\mathcal S_j)=|\tau_j|$, where $\tau_j$ is as in \eqref{width_tau_1_post}--\eqref{width_tau_6_post}, so they do not vanish. 
Due to the symmetry under conjugation, the remaining widths satisfy $\sigma(\mathcal S_{j})=\sigma(\mathcal S_{j-6})$, $j=7,8$, so these do not vanish as well.
\end{enumerate}

From {\rm (i)}--{\rm (iii)} above, we finally conclude that the critical graph depicted in Figure~\ref{figure_planar_graph_2} is valid for $(t_0,t_1)\in \mathcal R_2$.

\section{Proof of Theorems \ref{thm_equilibrium_measure}, \ref{theorem_density_eigenvalues}, 
\ref{theorem_limiting_support_zeros} and  \ref{theorem_balayage_relation}}\label{section_proof_s_property}

In Section~\ref{section_sheet_structure_general} the Riemann surface $\mathcal R$ was constructed as a three-sheeted cover of $\overline\C$ with sheets 
$\mathcal R_1$, $\mathcal R_2$, $\mathcal R_3$. Up to this moment, the cut $\gamma_0$ used in the construction of $\mathcal R_1$ and $\mathcal R_2$ in the three-cut case 
(carried out in Section~\ref{section_sheet_structure_1})
was quite arbitrary, but in what follows it is important to choose it in an optimal way.

From the analysis of the quadratic differential carried over in Section~\ref{subsection_critical_graph_precritical} (see in particular Figure~\ref{figure_planar_graph}), we know 
that the arc of trajectory $\gamma_0(z_1^{(3)})\cap \mathcal R_3$ connects $z_1^{(3)}$ to a point $z_*^{(3)}$, where $z_*\in (-\infty, z_0)$. We then 
define $\Sigma_{*,1}$ to be the contour on the complex plane obtained by the projection of this arc of trajectory, that is,
\begin{equation}\label{projection_property_sigma_1_three_cut}
\Sigma_{*,1}=\pi\left( \gamma_0(z_1^{(3)})\cap \mathcal R_3\right).
\end{equation}
In this way $\Sigma_{*,1}$ is an arc with endpoints $z_1$ and $z_*$, and it is contained on the lower half plane. We additionally set
\begin{equation}\label{projection_property_sigma_2_three_cut}
\Sigma_{*,2}=(\Sigma_{*,1})^*=\pi\left( \gamma_2(z_2^{(3)})\cap \mathcal R_3\right),
\end{equation}
so that $\Sigma_{*,2}$ is an arc with endpoints $z_2$ and $z_*$ that is contained on the upper half plane. Furthermore, denote
\begin{equation}\label{projection_property_sigma_0_three_cut}
\Sigma_{*,0}=[z_*,z_0],
\end{equation}
where $z_*$ is the point of intersection of $\Sigma_{*,1}$ with the real axis as above.

Following the construction carried out in Section~\ref{section_sheet_structure_1}, we then set 
$$
\gamma_0=\Sigma_{*,1}\cup \Sigma_{*,2}
$$
and
\begin{equation}\label{special_cut}
\Sigma_*=\Sigma_{*,0}\cup \Sigma_{*,1}\cup \Sigma_{*,2}
\end{equation}
so that the cut for the sheet $\mathcal R_1$ is simply given by
$$
\gamma_0\cup [z_*,z_0]=\Sigma_*.
$$

Note also that with this sheet structure, we can further characterize the interval $\Sigma_{*,0}$ by
$$
\Sigma_{*,0}=\pi \left( \gamma_1(z_0^{(2)}) \cap \mathcal R_2\right).
$$

For the rest of this paper, whenever we refer to the sheet structure $\mathcal R_1\cup\mathcal R_2\cup\mathcal R_3$ for $\mathcal R$ in the three-cut case, we always 
assume the cut used in the sheet $\mathcal R_1$ to be given by $\Sigma_*$ as in \eqref{special_cut}. Furthermore, we orient the arcs of $\Sigma_{*}$ outwards, that is, 
$\Sigma_{*,j}$ is oriented from $z_*$ to $z_j$, $j=0,1,2$.

In the one-cut case, we keep denoting
$$
\Sigma_*=[z_1,z_0].
$$
In this case it follows from the analysis in Section~\ref{subsection_critical_graph_supercritical} that $\Sigma_*$ can be alternatively expressed through the identity
\begin{equation}\label{projection_property_sigma_one_cut}
\Sigma_*=\pi\left( \gamma_0(z_1^{(2)}) \right)=\pi\left( \gamma_1(z_0^{(2)}) \right),
\end{equation}
see in particular Figure~\ref{figure_planar_graph_2}.

Theorem~\ref{theorem_schwarz_function} assures us that the function $w\mapsto h(w)$ is a conformal map from $\overline \C$ to $\mathcal R$. Standard arguments on conformal 
mappings allow us to recover the inverse image of each of the sheets $\mathcal R_1$, $\mathcal R_2$ and $\mathcal R_3$ under $h$. The outcome can be seen in 
Figure~\ref{w_partition}.

\begin{figure}[t]
\begin{overpic}[scale=1]
{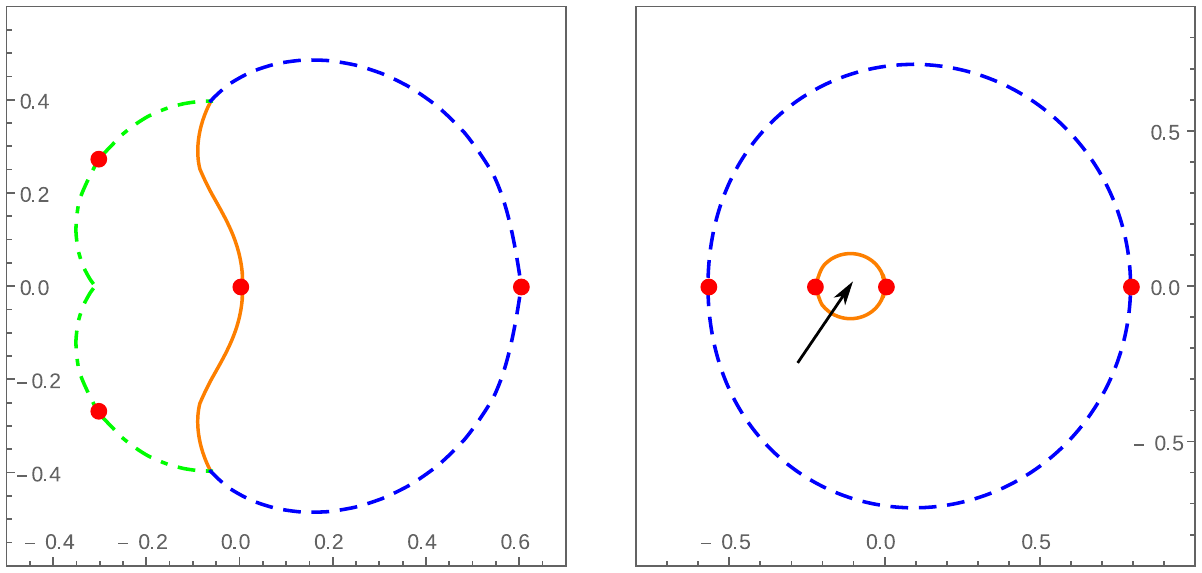}
\put(10,43){$\scriptstyle h^{-1}(\mathcal R_1)$}
\put(25,25){$\scriptstyle h^{-1}(\mathcal R_3)$}
\put(9,25){$\scriptstyle h^{-1}(\mathcal R_2)$}
\put(21,23){$\scriptstyle 0$}
\put(9,33){$\scriptstyle w_2$}
\put(8.5,14.5){$\scriptstyle w_1$}
\put(40,23){$\scriptstyle w_0$}
\put(60,43){$\scriptstyle h^{-1}(\mathcal R_1)$}
\put(80,25){$\scriptstyle h^{-1}(\mathcal R_3)$}
\put(63,15){$\scriptstyle h^{-1}(\mathcal R_2)$}
\put(75,23){$\scriptstyle 0$}
\put(60,23){$\scriptstyle w_1$}
\put(64.5,23){$\scriptstyle w_2$}
\put(90.5,23){$\scriptstyle w_0$}
\end{overpic}
\caption{The partition of the $w$-plane into the inverse images of the sheets $\mathcal R_1$, $\mathcal R_2$ and $\mathcal R_3$ under the conformal map $h$. The dots are the 
points $w_0,w_1,w_2$ and $0$, which are the inverse images of the branch points $z_0$, $z_1$, $z_2$ and $\infty^{(2)}$, respectively. In the left panel the three-cut case 
$\mathcal F_1$ is considered: the dashed line is the inverse image of the cut $\Sigma_{*,0}$ connecting $\mathcal R_1$ and $\mathcal R_3$, the dot-dashed line is the inverse image 
of the cut $\Sigma_{*,1}\cup\Sigma_{*,2}$ connecting $\mathcal R_1$ and $\mathcal R_2$ and the solid line is the inverse image of the cut $(-\infty,z_*)$ connecting $\mathcal R_2$ 
and $\mathcal R_3$. In the right panel the one-cut case $\mathcal F_2$ is considered: the dashed line is the inverse image of the cut $(z_1,z_0)$ connecting $\mathcal R_1$ and 
$\mathcal R_3$ and the solid line is the inverse image of the cut $(-\infty,z_2)$ connecting $\mathcal R_2$ and $\mathcal R_3$. Numerical output for the choices 
$r=1/20$ and $a_0=1/10$ (three-cut) and $a_0=1/4$ (one-cut).}\label{w_partition}
\end{figure}

With these definitions, the set $\Sigma_*$ satisfies all the geometric properties claimed by Theorem~\ref{theorem_limiting_support_zeros}, except that we still 
have to prove the inclusion \eqref{inclusion_star_Omega}. Furthermore, the function $\xi_1$ in \eqref{asymptotics_xi} admits a meromorphic continuation (that we keep 
denoting by $\xi_1$) to the whole sheet $\mathcal R_1=\overline\C\setminus \Sigma_*$. 

Recalling \eqref{schwarz_function_S} {\it et seq.}, we also know that the function germ $\xi_1$ in \eqref{asymptotics_xi} admits another meromorphic continuation $S$ to a 
neighborhood $G$ of $\overline\C\setminus \Omega$, which is also the Schwarz function of $\Omega$. However, we emphasize that we do not know yet whether 
\eqref{inclusion_star_Omega} is valid, and consequently we cannot be sure if $\xi_1\equiv S$ in a full neighborhood of $\overline\C\setminus \Omega$.

Hence, for a moment we reserve the notation $S$ for the meromorphic continuation of $\xi_1$ as the Schwarz function as above, and we use $\xi_1$ to denote the solution to 
\eqref{spectral_curve} that is meromorphic in $\overline\C\setminus \Sigma_*$. Once we obtain \eqref{inclusion_star_Omega}, we can then conclude that these two meromorphic 
continuations coincide in a full neighborhood of $\overline\C\setminus \Omega$.

\begin{proof}[Proof of Theorem~\ref{theorem_limiting_support_zeros}] 
We now prove all the conclusions of Theorem~\ref{theorem_limiting_support_zeros}, except for the inclusion \eqref{inclusion_star_Omega}.

From the construction of the set $\Sigma_*$ as above, it follows that $\xi_1$ is meromorphic on $\overline\C\setminus \Sigma_*$, with pole only at $\infty$. Hence $\xi_1$ is 
analytic in $\C\setminus \Sigma_*$.

Suppose first that $(t_0,t_1)\in \mathcal F$. From the sheet structure constructed in Section \ref{section_sheet_structure_1} (and the cut $\Sigma_*$ as in \eqref{special_cut}), 
it holds true
\begin{align*}
\xi_{1-}(s)=\xi_{3+}(s), & \quad s\in \Sigma_{*,0}, \\
\xi_{1-}(s)=\xi_{2+}(s), & \quad s\in \Sigma_{*,1}\cup\Sigma_{*,2}.
\end{align*}
Since $\Sigma_{*,0}$, $\Sigma_{*,1}$ and $\Sigma_{*,2}$ are obtained as projections of arcs of the trajectories $\gamma_1(z_0^{(2)})$, $\gamma_{0}(z_1^{(3)})$ and 
$\gamma_{2}(z_2^{(3)})$, respectively (see \eqref{projection_property_sigma_1_three_cut}--\eqref{projection_property_sigma_0_three_cut}), we get 
\begin{align*}
(\xi_{1+}(s)-\xi_{1-}(s))ds & =(\xi_{1+}(s)-\xi_{3+}(s))ds\in i\R, \quad s\in \Sigma_{0,*}, \\
(\xi_{1+}(s)-\xi_{1-}(s))ds & =(\xi_{1+}(s)-\xi_{2+}(s))ds\in i\R, \quad s\in \Sigma_{1,*}\cup\Sigma_{2,*}.
\end{align*}
This proves \eqref{s_property} in the three-cut case. Similarly, we use \eqref{projection_property_sigma_one_cut} to conclude that \eqref{s_property} is satisfied 
in  the one-cut case as well. Furthermore, from its construction it immediately follows that $\Sigma_*$ satisfies the properties claimed in {\rm (i)} (in the three-cut case) and 
{\rm (ii)} (in the one-cut case).

To prove that $\mu_*$ defined in \eqref{limiting_measure_zeros} is a probability measure, first note that \eqref{s_property} automatically implies that it is a real measure. In 
the three-cut case, its density does not vanish on the arcs $\Sigma_{*,0}$, $\Sigma_{*,1}$ and $\Sigma_{*,2}$ of its support, so this measure has constant sign on each of these 
arcs. For $t_1=0$, we know that this density is positive in each of these arcs \cite{bleher_kuijlaars_normal_matrix_model}, so by continuity we conclude that this measure is 
positive for any pair $(t_0,t_1)\in \mathcal F_1$ as well. The total mass of $\mu_*$ is given by
$$
\frac{1}{2\pi i t_0} \int_{\Sigma_*} (\xi_{1-}(z)-\xi_{1+}(z))dz= \frac{1}{2\pi i t_0} \oint_\gamma \xi_1(s) ds,
$$
where $\gamma$ is a contour positively oriented and encircling $\Sigma_*$. We deform $\gamma$ to infinity and use the expansion \eqref{asymptotics_xi} to get 
$$
\frac{1}{2\pi i t_0} \int_{\Sigma_*} (\xi_{1-}(z)-\xi_{1+}(z))dz=-\frac{1}{t_0} \res(\xi_1,\infty)=1,
$$
so we conclude that in the three-cut case $\mu_*$ is indeed a probability measure.

To conclude that $\mu_*$ is positive in the one-cut case as well, note that the density of $\mu_*$ is continuous and does not vanish in the interval $\Sigma_*$, hence $\mu_*$ has 
constant sign along this interval. Calculations very similar as above show that the total mass of $\mu_*$ is $1$, in particular $\mu_*$ has to be positive, and we then conclude 
that $\mu_*$ is a probability measure, as we want.
\end{proof}

We remind that to conclude the proof of Theorem~\ref{theorem_limiting_support_zeros} we still have to verify the inclusion \eqref{inclusion_star_Omega}. To do so, we first 
have to prove some auxiliary results, which are inspired from a similar analysis carried out by Balogh, Bertola, Lee and McLaughlin \cite{balogh_bertola_et_al_op_planar_measure}.

Recall that the principal value of the Cauchy transform $C^\lambda$ of a finite and compactly supported measure $\lambda$ is defined by
$$
C^{\lambda}(z)=\lim_{\varepsilon\to 0}\int_{|s-z|\geq \varepsilon} \frac{d\lambda(s)}{s-z},\quad z\in \C.
$$
$C^{\lambda}$ is analytic on the open sets of $\C\setminus\supp\lambda$ and satisfies the identity
\begin{equation}\label{identity_cauchy_transform_potential}
2\frac{\partial U^{\lambda}}{\partial z}(z)=C^{\lambda}(z), \quad z\in \C\setminus\supp\lambda,
\end{equation}
where $U^\lambda$ is the potential of $\lambda$ as in \eqref{definition_log_potential}.

\begin{lem}
For $(t_0,t_1)\in \mathcal F$ and $V$ the cubic polynomial \eqref{generic_cubic_potential}, the Cauchy transform of the measure $\mu_0$ in \eqref{definition_planar_measure} 
satisfies
\begin{equation}\label{cauchy_transform_planar_measure}
C^{\mu_0}(z)=
\begin{dcases}
-\frac{1}{t_0}(\overline z -V'(z)), & z\in \overline{\Omega},\\
-\frac{1}{t_0}(S(z) -V'(z)), & z\in \C\setminus{\Omega}.
\end{dcases}
\end{equation}
whereas the Cauchy transform of $\mu_*$ in \eqref{limiting_measure_zeros} satisfies
\begin{equation}\label{cauchy_transform_mother_body}
C^{\mu_*}(z)= -\frac{1}{t_0}(\xi_1(z) -V'(z)),\quad z\in \C\setminus \Sigma_*.
\end{equation}
\end{lem}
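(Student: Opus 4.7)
The plan is to reduce both identities to contour integrals that can be evaluated by residues using the asymptotic expansion \eqref{asymptotics_xi}.

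For \eqref{cauchy_transform_planar_measure}, I would start from the Cauchy--Pompeiu formula applied to $f(s)=\bar s$, for which $\partial f/\partial \bar s=1$. This yields, for any $z\in\C$ not on $\partial\Omega$,
\begin{equation*}
\iint_\Omega \frac{dA(s)}{s-z} \;=\; -\pi \bar z\,\chi_\Omega(z)\;+\;\frac{1}{2i}\int_{\partial\Omega}\frac{\bar s}{s-z}\,ds .
\end{equation*}
By Theorem~\ref{theorem_schwarz_function} we may replace $\bar s$ by $S(s)$ on $\partial\Omega$, and then split $S(s)=V'(s)+R(s)$, where by \eqref{asymptotics_xi} the remainder $R(s)=S(s)-V'(s)=t_0/s+\boh(s^{-2})$ is meromorphic on $\overline\C\setminus\Omega$ and vanishes at $\infty$. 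Cauchy's formula applied to the entire function $V'$ contributes $\pi V'(z)\chi_\Omega(z)$; deforming the boundary contour for the $R$-piece to a large circle, the integral vanishes when $z\in\Omega$ (because the integrand is $\boh(s^{-2})$ at $\infty$ and $R(s)/(s-z)$ has no poles in $\overline\C\setminus\Omega$) and picks up a residue $-R(z)=V'(z)-S(z)$ when $z\in\C\setminus\overline\Omega$. Multiplying by $1/(\pi t_0)$ and assembling the three cases $z\in\Omega$, $z\in\C\setminus\overline\Omega$ gives \eqref{cauchy_transform_planar_measure}.

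For \eqref{cauchy_transform_mother_body}, the starting point is the identity
\begin{equation*}
C^{\mu_*}(z)\;=\;\frac{1}{2\pi i t_0}\int_{\Sigma_*}\frac{\xi_{1-}(s)-\xi_{1+}(s)}{s-z}\,ds\;=\;\frac{1}{2\pi i t_0}\oint_{\gamma}\frac{\xi_1(s)}{s-z}\,ds ,
\end{equation*}
where $\gamma$ is a small positively oriented loop around $\Sigma_*$ avoiding $z$; the second equality uses that $\xi_1$ is analytic on $\C\setminus\Sigma_*$ (Theorem~\ref{theorem_limiting_support_zeros}) and that the two sides of $\Sigma_*$ combine into a jump. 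I would then deform $\gamma$ to a large circle $C_R$, picking up the residue at $s=z$ (which contributes $-2\pi i\,\xi_1(z)$ because the deformation runs outward). The integral over $C_R$ is evaluated by Laurent expansion at infinity: from \eqref{asymptotics_xi},
\begin{equation*}
\frac{\xi_1(s)}{s-z}=s+z+\frac{z^2+t_1}{s}+\boh(s^{-2}),
\end{equation*}
so $\oint_{C_R}\xi_1(s)(s-z)^{-1}ds\to 2\pi i(z^2+t_1)=2\pi i V'(z)$ as $R\to\infty$. Combining these yields $\oint_\gamma \xi_1(s)(s-z)^{-1}ds = 2\pi i(V'(z)-\xi_1(z))$, and dividing by $2\pi i t_0$ gives \eqref{cauchy_transform_mother_body}.

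There is no real obstacle beyond careful bookkeeping: the delicate points are getting the orientations and residue signs right when deforming contours through $\Omega$ or around $\Sigma_*$, and verifying that the ``tail'' $S-V'$ is genuinely analytic on all of $\overline\C\setminus\Omega$ (including at $\infty$) so that contour deformation to $\infty$ is justified. Both steps are standard once the meromorphic structure of $\xi_1$ and $S$ provided by Theorems~\ref{theorem_schwarz_function} and \ref{theorem_limiting_support_zeros} is in hand.
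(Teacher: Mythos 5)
Your proof is correct and takes essentially the same route as the paper: both replace $\bar s$ by the Schwarz function $S(s)$ on $\partial\Omega$ and evaluate the boundary integral by deforming to infinity and collecting residues, and both handle $C^{\mu_*}$ by pulling the jump of $\xi_1$ into a closed loop around $\Sigma_*$ and deforming outward. The only cosmetic difference is that you state the Cauchy--Pompeiu formula once with a $\chi_\Omega$ factor and split $S=V'+R$, whereas the paper treats $z\in\Omega$ (via the Cauchy--Green formula) and $z\in\C\setminus\overline\Omega$ (via Green's theorem and the residues at $z$ and $\infty$) as two separate cases before invoking continuity across $\partial\Omega$.
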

\begin{proof}
 Suppose $z\in \C\setminus \overline\Omega$. Using Green's Theorem and the definition of the Schwarz function $S$, we can write
 \begin{align*}
  C^{\mu_0}(z) & = \frac{1}{\pi t_0}\iint_\Omega \frac{dA(s)}{s-z} \\
	       & = \frac{1}{2\pi i t_0}\int_{\partial \Omega} \frac{S(s)}{s-z}ds,
 \end{align*}
where $\partial \Omega$ is oriented counterclockwise. The Schwarz function $S$ is analytic on $\C\setminus \Omega$, so we can deform $\partial \Omega$ to $\infty$ in order to 
conclude
\begin{align*}
C^{\mu_0}(z) & = -\frac{1}{t_0}\res\left(\frac{S(s)}{s-z},s=z\right) - \frac{1}{t_0}\res\left( \frac{S(s)}{s-z},s=\infty \right)\\
	     & = -\frac{1}{t_0}\left(S(z)-V'(z)\right),
\end{align*}
where we computed the residue at $\infty$ using the fact that $S$ is an analytic continuation of the function germ $\xi_1$ in \eqref{asymptotics_xi}. This is enough to prove the 
second equality in \eqref{cauchy_transform_planar_measure} for $z\in \C\setminus{\overline \Omega}$. Using continuity, this extends to $\partial \Omega$ as well.

Suppose now that $z\in \Omega$. Using the Cauchy-Green formula,
\begin{align*}
 \overline z & =\frac{1}{2\pi i}\int_{\partial \Omega}\frac{\overline s}{s-z}ds-\frac{1}{\pi}\iint_{\Omega}\frac{dA(s)}{s-z} \\
	     & = \frac{1}{2\pi i}\int_{\partial \Omega}\frac{S(s)}{s-z}ds-t_0 C^{\mu_0}(s).
\end{align*}
Proceeding as before, we can compute the contour integral on the right-hand side above, concluding that
$$
\overline z = V'(z)-t_0C^{\mu_0}(s),
$$
which is equivalent to the first equation in \eqref{cauchy_transform_planar_measure}.

Finally, \eqref{cauchy_transform_mother_body} follows again by a residue calculation, having in mind the identity
$$
C^{\mu_*}(z)=\frac{1}{2\pi i t_0}\int_{\Sigma_*} \frac{\xi_{1-}(s)-\xi_{1+}(s)}{s-z}ds=\frac{1}{2\pi i t_0}\int_{\gamma} \frac{\xi_1(s)}{s-z}ds,
$$
where we recall that $\Sigma_*$ is oriented outwards, and $\gamma$ is any positively oriented contour encircling $\Sigma_*$ and for which $z$ is on the exterior region of $\gamma$.

\end{proof}

Recalling \eqref{generic_cubic_potential}, denote by
$$
\mathcal U_0(z)=2U^{\mu_0}(z)+\mathcal V(z), \quad z\in \C,
$$
the total potential of $\mu_0$, and by
\begin{equation}\label{definition_total_potential_mother_body}
\mathcal U_*(z)=2U^{\mu_*}(z)+\mathcal V(z),\quad z\in \C,
\end{equation}
the total potential of $\mu_*$.

\begin{proof}[Proof of Theorem~\ref{thm_equilibrium_measure}]
Since the partial derivatives of $U^{\mu_0}$ are continuous and $C^{\mu_0}$ is absolutely convergent in $\C$, the identity \eqref{identity_cauchy_transform_potential} for $\mu_0$ 
is actually valid on $\Omega$ as well, and the first equation in \eqref{cauchy_transform_planar_measure} then says
$$
\frac{\partial \mathcal U_0}{\partial z}(z)=0,\quad z\in \Omega.
$$
This identity suffices to conclude that $\mathcal U_0$ is constant, say $l$, on $\overline\Omega$, which is the same as \eqref{variational_equality_planar_measure}.

The second equality in \eqref{cauchy_transform_planar_measure} provides an harmonic extension $\widetilde{\mathcal U}_0$ of $\mathcal U_0$ to a neighborhood of $\partial \Omega$. 
More concretely, there is a neighborhood $G$ of $\C\setminus \Omega$ and a constant $c$ such that the harmonic function
$$
\widetilde{\mathcal U}_0(z)=-\frac{1}{t_0}\re \int^z S(s)ds+\frac{|z|^2}{t_0}+c,\quad z\in G,
$$
coincides with $\mathcal U_0$ in $\C\setminus \Omega$. Note also that the primitive above does not depend on the path of integration chosen within $G$, because the residue of $S$ 
at $\infty$ is real.

We will prove that 
\begin{equation}\label{equation_local_minimum}
\widetilde{\mathcal U}_{0}(z_0+\varepsilon \eta)=l+\frac{2\varepsilon^2}{t_0^2}+\mathcal O(\varepsilon^3), \quad \mbox{ as } \varepsilon\to 0,
\end{equation}
which is enough to conclude \eqref{variational_inequality_planar_measure}.

On $\partial \Omega$, we know that
$$
\frac{\partial \widetilde{\mathcal {U}}_0}{\partial z}(z)=-\frac{1}{t_0}(S(z)-\overline z)=0,
$$
so the (real) gradient of $\widetilde{\mathcal U}_{0}$ vanishes on $\partial \Omega$. Furthermore, the Laplacian of $\widetilde{\mathcal U}_{0}$ is
$$
\Delta\widetilde{\mathcal U}_{0}(z)=4\frac{\partial^2 \widetilde{\mathcal U}_{0}}{\partial z\partial \overline z}(z)=\frac{4}{t_0},\quad z\in \partial \Omega,
$$
and the determinant of the Hessian $H(\widetilde{\mathcal U}_{0})$ of $\widetilde{\mathcal U}_{0}$ is
$$
\det H(\widetilde{\mathcal U}_{0})=4\left( \left(\frac{\partial^2 \widetilde{\mathcal U}_{0}}{\partial z\partial \overline z}\right)^2-\frac{\partial^2 
\widetilde{\mathcal U}_{0}}{\partial z^2}\frac{\partial^2 \widetilde{\mathcal U}_{0}}{\partial \overline{z}^2} \right)=\frac{4}{t_0}(1-|S'(z)|^2),
$$
and the latter vanishes on $\partial \Omega$ because $|S'(z)|=1$ there. 

Hence we conclude that the eigenvalues of $H(\widetilde{\mathcal U}_{0})$ are $4/t_0$ and $0$. Taking into account that $\widetilde{\mathcal U}_{0}$ is constant along 
$\partial\Omega$, we see that the tangent vector to $\partial \Omega$ is an eigenvector for the eigenvalue $0$, and consequently $\eta$ is an eigenvector associated to $4/t_0$. 
This is enough to get
\eqref{equation_local_minimum}.
\end{proof}

\begin{proof}[Proof of Theorem~\ref{theorem_density_eigenvalues}]
The conditions \eqref{variational_equality_planar_measure}--\eqref{variational_inequality_planar_measure} are enough to conclude that $\mu_0$ is the equilibrium measure of $D$ in 
the external field $\mathcal V$ \cite[Theorem~I.3.3]{Saff_book}. The result is then an immediate consequence of 
\cite[Theorem~4.1]{elbau_felder_density_eigenvalues_normal_matrices}.
\end{proof}

To prove the inclusion \eqref{inclusion_star_Omega}, we need two more lemmas.

\begin{lem}\label{lemma_minimum_total_potential}
Suppose $(t_0,t_1)\in \mathcal F$. The total potential $\mathcal U_*$ does not have a local minimum on $\Sigma_*$.
\end{lem}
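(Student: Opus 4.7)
The plan rests on the identity
\begin{equation*}
\frac{\partial \mathcal U_*}{\partial z}(z) = \frac{\bar z - \xi_1(z)}{t_0}, \qquad z \in \C \setminus \Sigma_*,
\end{equation*}
obtained from \eqref{cauchy_transform_mother_body}, the relation $2\partial_z U^{\mu_*} = C^{\mu_*}$, and $\partial_z \mathcal V = (\bar z - V'(z))/t_0$. In particular $\Delta \mathcal U_* = 4/t_0 > 0$ off $\Sigma_*$.

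I would proceed by contradiction. Suppose $z_0 \in \Sigma_*$ is a local minimum of $\mathcal U_*$, and first consider the generic case where $z_0$ is a smooth interior point of some arc of $\Sigma_*$---that is, neither a branch-point endpoint $z_j$ nor, in the three-cut case, the triple point $z_*$. At such a $z_0$ the boundary values $\xi_{1\pm}(z_0)$ are distinct by the construction of $\Sigma_*$ via the trajectories of $\varpi$ in Section~\ref{section_quadratic_differential}, and hence the density $\rho(z_0)$ of $\mu_*$ with respect to arclength, obtained from \eqref{limiting_measure_zeros}, is strictly positive. The standard jump formula for outward normal derivatives of the logarithmic potential of an arclength measure, together with the trivial cancellation $\partial_{\nu_+}\mathcal V + \partial_{\nu_-}\mathcal V = 0$ for the smooth external field, yields
\begin{equation*}
\frac{\partial \mathcal U_*}{\partial \nu_+}(z_0) + \frac{\partial \mathcal U_*}{\partial \nu_-}(z_0) = -4\pi\rho(z_0) < 0,
\end{equation*}
where $\nu_\pm$ denote the unit normals pointing into the two local connected components of $\C \setminus \Sigma_*$ at $z_0$. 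For $z_0$ to be a local minimum each term on the left must be $\geq 0$, which contradicts the strict negativity of their sum.

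The exceptional points of $\Sigma_*$ are then excluded separately. At a branch-point endpoint $z_j$ the square-root expansion $\xi_1(z) - \xi_1(z_j) = O(\sqrt{z-z_j})$ provided by the spectral curve structure in Section~\ref{section_topology_spectral_curve} gives the continuous limit $\partial_z \mathcal U_*(z) \to (\bar z_j - \xi_1(z_j))/t_0$ as $z \to z_j$; a Puiseux expansion of $\mathcal U_*$ in $\sqrt{z - z_j}$ then produces an explicit direction of strict decrease at $z_j$ except in the degenerate case $\xi_1(z_j) = \bar z_j$, which is ruled out by combining $z_j \in \Omega$ from Theorem~\ref{theorem_singular_points_spectral_curve} with the Schwarz-function characterization of $\xi_1$ on $\C \setminus \overline\Omega$ from Theorem~\ref{theorem_schwarz_function}. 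At the triple point $z_*$ one applies the generic jump-formula contradiction at smooth points on each of the three arcs arbitrarily close to $z_*$ and passes to the limit, using the explicit sheet structure of $\mathcal R$ from Section~\ref{section_sheet_structure_1} to control the limiting directional derivatives into each of the six sectors cut out by $\Sigma_*$. The main obstacle will be precisely this last analysis at $z_*$: transferring the strict negativity of the sum of normal derivatives from the smooth arcs to the singular point $z_*$ while carefully accounting for the local branching of $\xi_1$.
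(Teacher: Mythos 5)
Your overall strategy matches the paper's: compute $\partial_z \mathcal U_* = (\bar z - \xi_1(z))/t_0$ away from $\Sigma_*$, then split into three cases (smooth interior points, branch-point endpoints, and in the three-cut case the triple point $z_*$) and show in each that there is a direction of strict decrease. At smooth interior points your argument is exactly the paper's: the sum of the two inward normal derivatives is proportional to minus the density of $\mu_*$, which is strictly positive.

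At the branch points $z_j$ your proposal actually sharpens the paper's stated reason. You correctly identify that the condition needed is $\xi_1(z_j)\neq \bar z_j$ — the one-sided gradient of the smooth part of $\mathcal U_*$ at $z_j$ is $\tfrac{2}{t_0}(\bar z_j - \xi_1(z_j))$, so the degenerate case is precisely $\xi_1(z_j)=\bar z_j$, not $\xi_1(z_j)=0$. However, the way you propose to rule this out ("the Schwarz-function characterization of $\xi_1$ on $\C\setminus\overline\Omega$") is not tight: the Schwarz identity $\xi_1=\bar z$ holds only on $\partial\Omega$ and says nothing by itself about values of $\xi_1$ inside $\Omega$. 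The clean argument is: since $h$ has real coefficients, $\Omega$ is invariant under conjugation, so $z_j\in\Omega$ from Theorem~\ref{theorem_singular_points_spectral_curve} gives $\bar z_j\in\Omega$; on the other hand $\xi_1(z_j)=h(w_j^{-1})$ with $|w_j|<1$ (Lemma~\ref{lemma_zeros_derivative_h}), so $|w_j^{-1}|>1$ and hence $\xi_1(z_j)\in\C\setminus\overline\Omega$ by Theorem~\ref{theorem_rational_parametrization_polynomial_curve}. These two facts are incompatible with $\xi_1(z_j)=\bar z_j$, closing the gap.

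Your treatment of the triple point $z_*$ is the weak spot and you are right to flag it. The paper's argument here is quite concrete: it first shows the opening angle $\theta$ between $\Sigma_{*,0}$ and $\Sigma_{*,2}$ exceeds $\pi/2$ (if $\theta=\pi/2$ the S-property \eqref{s_property} would force $\xi_{1+}(z_*)-\xi_{1-}(z_*)\in\R$, hence all three $\xi_k(z_*)$ real, contradicting the sign of the discriminant on $(-\infty,z_0)$; combined with $\theta=3\pi/2$ at $t_1=0$ and continuity, one gets $\theta>\pi/2$). This guarantees that $z_*\pm i\varepsilon$ lie in $\C\setminus\Sigma_*$ for small $\varepsilon$, and then the same normal-derivative computation, carried out along $(z_*,z_0)$ and passed to the limit, yields the strict decrease. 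Your plan to "pass to the limit along each arc" is morally the same but needs exactly this angle control to justify that the limiting perturbation directions stay off $\Sigma_*$. Also, $\Sigma_*$ meets $z_*$ in three arcs and cuts a punctured neighborhood into three sectors, not six; the extra cut $(-\infty,z_*]$ belongs to the sheet structure of $\mathcal R_2,\mathcal R_3$ and does not affect $\xi_1$ or $\mathcal U_*$.
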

\begin{proof}
Suppose first $z\notin \{z_*,z_0,z_1,z_2\}$. With respect to the outward orientation of $\Sigma_*$, denote by $n_+$ and $n_-$ the normal 
vectors of $\Sigma_*$ at $z$ pointing to the positive and negative sides of $\Sigma_*$, respectively. Combining \eqref{identity_cauchy_transform_potential} and 
\eqref{cauchy_transform_mother_body}, it follows that 
\begin{align*}
\frac{\partial \mathcal U_*}{\partial n_\pm}(z) & = 2\re \left( n_\pm \frac{\partial \mathcal{U}_{*\pm}}{\partial z}(z) \right) \\
						& = \pm 2\re \left( n_+ \left(C^{\mu_*}_{\pm}(z)+\frac{1}{t_0}(\overline z - V'(z)) \right)\right) \\
						& = \pm \frac{2}{t_0}\re \Big(n_+\left(\overline z -\xi_{1\pm}(z)\right)\Big).
\end{align*}
The vector tangent to $\Sigma_*$ along its positive direction is $\tau_+=-in_+$, and the equalities above then imply
\begin{align}
\frac{\partial \mathcal U_*}{\partial n_+}(z)+\frac{\partial \mathcal U_*}{\partial n_-}(z) & = \frac{2}{t_0}\re \Big(n_+\big(\xi_{1-}(z)-\xi_{1+}(z)\big)\Big) \nonumber \\
    & =- \frac{2}{t_0}\im \Big(\tau_+ \big(\xi_{1-}(z)-\xi_{1+}(z)\big)\Big).\label{normal_derivative_equals_imaginary}
\end{align}
Since the measure $\mu_*$ is positive, we learn from \eqref{limiting_measure_zeros} and the last equality that
$$
\frac{\partial \mathcal U_*}{\partial n_+}(z)+\frac{\partial \mathcal U_*}{\partial n_-}(z)<0,
$$
so at least one of these directional derivatives is negative, thus $z$ cannot be a local minimum. 

Suppose now $z\in \{z_0,z_1,z_2\}$, say $z=z_j$. Using \eqref{cauchy_transform_mother_body} and the relation \eqref{identity_cauchy_transform_potential},
\begin{equation}\label{potential_mother_body_near_branch_points}
U^{\mu_*}(u)=-\frac{1}{t_0}\re\int_{z_j}^u \xi_1(s)ds + \frac{1}{t_0}\re(V(u)-V(z_j)),\quad u\in G\setminus \Sigma_*,
\end{equation}
where $G$ is a sufficiently small neighborhood of $z$ and the path of integration lies in $G\setminus \Sigma_*$. On $G$ the function $\xi_1$ admits an 
expansion of the form
$$
\xi_1(u)=\xi_1(z_j)+\Boh((u-z_j)^{1/2}),\quad u\in G\setminus \Sigma_*.
$$
Using this expansion in \eqref{potential_mother_body_near_branch_points}, we then get
\begin{align*}
\mathcal U_*(u) & = -\frac{2}{t_0}\re \big(\xi_1(z_j)u\big) +\frac{|u|^2}{t_0}+c+\Boh((u-z_j)^{3/2}), \\
		& = -\frac{2}{t_0}\re u \re\xi_1(z_j)+\frac{2}{t_0}\im u\im \xi_1(z_j) \\ & 
		\qquad +\frac{(\re u)^2+(\im u)^2}{t_0} +c+\Boh((u-z_j)^{3/2}), \quad u\in G\setminus \Sigma_*
\end{align*}
for some constant $c$. From \eqref{branched_solution_non_vanishing} we know that $\xi_1(z_j)=h(w_j^{-1})\neq 0$, so the leading contribution in the formula above is linear in $\re 
u$ and $\im u$, and consequently $\mathcal U_*$ cannot have a local minimum at $z_j$.

It remains to verify that $z=z_*$ cannot be a local minimum in the three-cut case $(t_0,t_1)\in \mathcal F_1$. Consider the angle $\theta$ between $\Sigma_{*,2}$ and 
$\Sigma_{*,0}$ on the upper half plane, so that $\theta\leq \pi$. Assume for the moment that $\theta>\pi/2$. In this case, for $\epsilon>0$ sufficiently small, the point 
$z_* +\varepsilon i$ is in between $\Sigma_{*,0}$ and $\Sigma_{*,2}$, and using continuity we can conclude
\begin{equation}\label{identity_intersection_point_positive_side}
\mathcal U_*(z_*+\varepsilon i)=\lim_{\delta\to 0^+}\mathcal U_*(z_*+\varepsilon i+\delta)=U_*(z_*)+\varepsilon \lim_{\delta\to 0^+} \frac{\partial \mathcal 
U_*}{\partial n_+}(z_*+\delta)+\mathcal O(\varepsilon^2),
\end{equation}
where $n_+=i$ is the normal to $\Sigma_{*,0}$, and the error term is uniform in $\delta$. Similarly,
\begin{equation}\label{identity_intersection_point_negative_side}
\mathcal U_*(z_*-\varepsilon i)=\lim_{\delta\to 0^+}\mathcal U_*(z_*-\varepsilon i+\delta)=U_*(z_*)+\varepsilon \lim_{\delta\to 0^+} \frac{\partial \mathcal 
U_*}{\partial n_-}(z_*+\delta)+\mathcal O(\varepsilon^2),
\end{equation}

Proceeding as in \eqref{normal_derivative_equals_imaginary} (and having in mind that in the present case $\tau_+=1$),
\begin{align*}
\lim_{\delta\to 0^+} \left(\frac{\partial \mathcal U_*}{\partial n_+}(z_*+\delta)+\frac{\partial \mathcal U_*}{\partial n_-}(z_*+\delta)\right)
& = -\frac{2}{t_0} \lim_{\delta\to 0^+}\im \big(\xi_{1-}(z_*+\delta)- \xi_{1+}(z_*+\delta) \big).
\end{align*}
As before, we use that the density of $\mu_*$ is positive on $\Sigma_{*,0}$ and does not vanish on $z_*$ to conclude that the limit above is strictly negative. Taking into account 
\eqref{identity_intersection_point_positive_side}--\eqref{identity_intersection_point_negative_side}, this is enough to conclude that $z_*$ is not a point of minimum.

It only remains to prove the inequality $\theta>\pi/2$. When $t_1=0$, then $\theta=3\pi/2$ and we are done. Since $\theta$ varies continuously with $t_1$, it is enough to prove 
that $\theta\neq \pi/2$.

To the contrary, suppose $\theta=\pi/2$. In this case, the vector tangent to $\Sigma_{*,2}$ converges to $i$ as $z\to z_*$ along $\Sigma_{*,2}$, so that from \eqref{s_property} we 
learn
$$
(\xi_{1+}(z_*)-\xi_{1-}(z_*))\in\R,
$$
where the $\pm$ boundary values are with respect to $\Sigma_{*,2}$. But in this case $\xi_{1+}(z_*)\in \R$ as well, so we conclude that $\xi_{1-}(z_*)\in \R$, and consequently all 
the solutions to \eqref{spectral_curve} are real for $z=z_*$. But this cannot occur, because $z_*<z_0$ and we know from Theorem~\ref{theorem_discriminant_spectral_curve} that the 
discriminant of \eqref{spectral_curve} is negative on the interval $(-\infty,z_0)$.
\end{proof}

\begin{lem}\label{lemma_minimum_total_potential_2}
 The total potential $\mathcal U_*$ does not have a local minimum on $\Omega$.
\end{lem}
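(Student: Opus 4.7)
My plan is a direct critical-point analysis: I will reduce to the algebraic equation $\xi_1(z)=\overline z$ and show it has no solution in $\Omega\setminus\Sigma_*$, by a simple disjointness argument based on the rational parametrization $h$.

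First I would establish strict subharmonicity on $\Omega\setminus\Sigma_*$. Since $\mu_*$ is supported on $\Sigma_*$, its potential $U^{\mu_*}$ is harmonic on $\Omega\setminus\Sigma_*$, while a direct computation (using that $V$ is holomorphic, so $\Delta(V+\overline V)=0$) gives $\Delta\mathcal V=4/t_0$. Hence $\Delta\mathcal U_*=4/t_0>0$ there, and $\mathcal U_*$ is smooth. A local minimum $z_0\in\Omega\setminus\Sigma_*$ would therefore be a critical point, and combining \eqref{identity_cauchy_transform_potential} with \eqref{cauchy_transform_mother_body} one obtains
\begin{equation*}
\frac{\partial\mathcal U_*}{\partial z}(z)=\frac{1}{t_0}\bigl(\overline z-\xi_1(z)\bigr),\qquad z\in\C\setminus\Sigma_*,
\end{equation*}
so $z_0$ would satisfy $\xi_1(z_0)=\overline{z_0}$.

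Next, the main step: showing this equation has no solution in $\Omega\setminus\Sigma_*$. Using the parametrization of Theorem~\ref{theorem_schwarz_function}, I write $z_0=h(w_0)$ with $w_0\in h^{-1}(\mathcal R_1)$, so that on sheet $\mathcal R_1$ the spectral curve gives $\xi_1(z_0)=h(w_0^{-1})$. Since $z_0$ lies strictly inside the bounded domain $\Omega$, the bijection of Corollary~\ref{corollary_h_parametrization} forces $w_0\in\D\setminus\{0\}$ (the exclusion $w_0\neq 0$ is because $h(0)=\infty$ projects to sheets $\mathcal R_2=\mathcal R_3$, not $\mathcal R_1$). Consequently $|w_0^{-1}|>1$, and a second application of Corollary~\ref{corollary_h_parametrization} gives $\xi_1(z_0)=h(w_0^{-1})\in\C\setminus\overline\Omega$. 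On the other hand, $h$ has real coefficients, so $\overline{z_0}=\overline{h(w_0)}=h(\overline{w_0})$ and $\partial\Omega=h(\partial\D)$ is invariant under conjugation; the bounded component $\Omega$ is therefore likewise conjugation-invariant, and $\overline{z_0}\in\Omega$. Putting the two sides together, $\xi_1(z_0)=\overline{z_0}$ would place the common value in $(\C\setminus\overline\Omega)\cap\Omega=\emptyset$, a contradiction.

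Combining the absence of critical points in $\Omega\setminus\Sigma_*$ with Lemma~\ref{lemma_minimum_total_potential} (which rules out local minima on $\Sigma_*$, hence in particular on $\Sigma_*\cap\Omega$) yields the lemma. I do not foresee a serious obstacle: the argument rests entirely on facts already in place---the Cauchy transform identity \eqref{cauchy_transform_mother_body}, the bijectivity of $h\colon\overline\C\setminus\overline\D\to\overline\C\setminus\overline\Omega$ from Corollary~\ref{corollary_h_parametrization}, and the conjugation symmetry of $\Omega$ inherited from $h$ having real coefficients.
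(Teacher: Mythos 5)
Your proposal is correct and reaches the conclusion by a route genuinely different from (and arguably simpler than) the paper's. Both proofs start identically: a local minimum $p\in\Omega\setminus\Sigma_*$ forces $\partial\mathcal U_*/\partial z=0$, which via \eqref{identity_cauchy_transform_potential} and \eqref{cauchy_transform_mother_body} reduces to $\xi_1(p)=\overline p$. Where you diverge is in ruling this out. The paper works algebraically: writing $p=h(w)$, $\overline p=h(w^{-1})$, and setting $\widetilde w=1/\overline w$, it uses the real-coefficient symmetry to show that the two distinct preimages $w\neq\widetilde w$ both lie over the same affine pair $(\overline p,p)$; this makes $p$ a node of the spectral curve, i.e.\ a double zero of the discriminant, hence $p=\hat z_j$ for some $j$, and then Theorem~\ref{theorem_singular_points_spectral_curve} gives $\hat z_j\notin\overline\Omega$. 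Your argument is purely geometric: since $p\in\Omega$ and the degree-$3$ map $h$ carries $\{|w|>1\}$ biholomorphically onto $\C\setminus\overline\Omega$, every preimage of $p$ lies in $\D\setminus\{0\}$; thus $w_0^{-1}$ is outside $\overline\D$ and $\xi_1(p)=h(w_0^{-1})\in\C\setminus\overline\Omega$, while $\overline p\in\Omega$ by the conjugation symmetry of $\Omega$, giving an immediate contradiction. Your route avoids any appeal to the discriminant or to Theorem~\ref{theorem_singular_points_spectral_curve}, trading it for the Schwarz-reflection property that $\xi_1$ maps the interior of $\Omega$ to its exterior. Two small remarks: the subharmonicity computation you lead with is not actually needed (a $C^2$ function has vanishing gradient at any interior local minimum regardless of its Laplacian), and the exclusion $w_0\neq0$ follows more directly from $h(0)=\infty\neq p$ than from the sheet-assignment of $\infty^{(2)}=\infty^{(3)}$.
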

\begin{proof}
 From Lemma~\ref{lemma_minimum_total_potential}, we know that $\mathcal U_*$ does not attain a minimum on $\Sigma_*$, so if $p\in \Omega$ is a point of minimum of $\mathcal U_*$, 
then the gradient of $\mathcal U_*$ should vanish at $p$. This means that
\begin{equation}\label{zero_gradient_total_potential_mother_body}
0=\frac{\partial \mathcal U_*}{\partial z}(p)=C^{\mu_*}(p)+\frac{1}{t_0}(\overline p - V'(p)) = \frac{1}{t_0}(\overline p-\xi_1(p)),
\end{equation}
where for the first equality we used the definition of $\mathcal U_*$ given in \eqref{definition_total_potential_mother_body} together with the identity 
\eqref{identity_cauchy_transform_potential}, and for the last equality we used \eqref{cauchy_transform_mother_body}. That is, we conclude that the point $p$ where $\mathcal U_*$ 
attains its minimum should satisfy $\xi_1(p)=\overline p$. Let $w$ be such that
\begin{equation}\label{schwarz_points_1}
h(w)=p,\qquad h\left(\frac{1}{w}\right)=\xi_1(p)=\overline p.
\end{equation}
We know from Theorem~\ref{theorem_rational_parametrization_polynomial_curve} that $h$ maps $\partial \D$ to $\partial \Omega$. Since $p\in \Omega$, we learn from this that 
$|w|\neq 
1$, so the point $\tilde w :=1/\overline w$ is different from $w$. Furthermore, because the rational function $h$ has real coefficients,
\begin{equation}\label{schwarz_points_2}
h\left( \tilde w \right)=\overline{h( w^{-1})}=\overline{\xi_1(p)}=p=h(w),
\end{equation}
and also
\begin{equation}\label{schwarz_points_3}
h\left( \frac{1}{\tilde w} \right)=\overline{h(w)}=\overline{p}=\xi_1(p)=h\left(\frac{1}{w}\right).
\end{equation}

Since $\tilde w\neq w$, the pairs $(h(w^{-1}),h(w))$ and $(h({\tilde{w}}^{-1}),h(\tilde w))$ represent distinct points on $\mathcal R$. In virtue 
of the equalities \eqref{schwarz_points_1}--\eqref{schwarz_points_3}, we conclude that $p$ is a zero of the discriminant of \eqref{spectral_curve}. But $p\notin \Sigma_*$, so in 
particular it cannot be a branch point. Thus $p=\hat z_j$ for some $j$, and from Theorem~\ref{theorem_singular_points_spectral_curve} we learn that $p$ cannot be on 
$\Omega$. Hence \eqref{zero_gradient_total_potential_mother_body} cannot hold on $\Omega\setminus\Sigma_*$, and the proof is complete.
\end{proof}

\begin{proof}[Proof of \eqref{inclusion_star_Omega}]
Fix $t_0$. When $t_1=0$, the set $\Sigma_*$ is explicitly given by \eqref{star_t=0}, and \eqref{inclusion_star_Omega} is valid in this case. Suppose now that 
\eqref{inclusion_star_Omega} is not valid for some value of $t_1$. Continuity arguments show that for the smallest of such value, say $\tilde t_1$, it holds true
\begin{equation}\label{partial_inclusion_star_omega}
\Sigma_*\subset \overline{\Omega}\qquad \mbox{and} \qquad \partial \Omega\cap \Sigma_*\neq \emptyset.
\end{equation}

Recalling that $\xi_1$ and $S$ coincide in a neighborhood of $\infty$, the inclusion in \eqref{partial_inclusion_star_omega} implies $\xi_1\equiv S$ on the whole set 
$\C\setminus \Omega$. Thus, still for the given pair $(t_0,\tilde t_1)$, the second identity in \eqref{cauchy_transform_planar_measure} combined with 
\eqref{cauchy_transform_mother_body} then says that
\begin{equation}\label{equality_cauchy_transforms}
C^{\mu_0}(z)=C^{\mu_*}(z),\quad z\in \C\setminus \Omega.
\end{equation}
Taking into account \eqref{identity_cauchy_transform_potential}, we further get
\begin{equation*}
\mathcal U_*(z)-\mathcal U_0(z)=c,\quad z\in \C\setminus \Omega,
\end{equation*}
for some constant $c$. This constant is equal to
\begin{align*}
c & =\lim_{z\to\infty}\left(U^{\mu_*}(z)-U^{\mu_0}(z)\right)\\
  & = \lim_{z\to\infty}\left(\iint \log\left|1-\frac{s}{z}\right|d\mu_0(s)-\int \log\left|1-\frac{s}{z}\right|d\mu_*(s)\right)=0.
\end{align*}
In particular, it follows from Theorem~\ref{thm_equilibrium_measure} that
\begin{equation}\label{minimum_potential_mother_body_1}
\mathcal U_*(z)>l,\quad \mbox{ for }z\in \C\setminus \overline{\Omega} \mbox{ sufficiently close to } \partial\Omega,
\end{equation}
and
\begin{equation}\label{minimum_potential_mother_body_2}
\mathcal U_*(z)=l,\quad z\in \partial \Omega.
\end{equation}
The function $\mathcal U_*$ is continuous on $\overline\Omega$, so it has a minimum in this set. Using Lemma~\ref{lemma_minimum_total_potential_2} and 
\eqref{minimum_potential_mother_body_2}, we thus conclude
\begin{equation}\label{minimum_potential_mother_body_3}
\mathcal U_*(z)>l,\quad z\in \Omega.
\end{equation}

From \eqref{partial_inclusion_star_omega}, we know that there exists $p\in \partial \Omega\cap \Sigma_*$. From 
\eqref{minimum_potential_mother_body_1}--\eqref{minimum_potential_mother_body_3} we learn that $p$ is a local minimum of $\mathcal U_*$, but this is in 
contradiction with Lemma~\ref{lemma_minimum_total_potential}.

Hence \eqref{inclusion_star_Omega} always holds true, and the proof of Theorem~\ref{theorem_limiting_support_zeros} is complete.
\end{proof}

\begin{proof}[Proof of Theorem~\ref{theorem_balayage_relation}] Having in mind \eqref{inclusion_star_Omega}, we now know that $\xi_1\equiv S$ on 
$\C\setminus \Omega$. Proceeding as in \eqref{equality_cauchy_transforms}--\eqref{minimum_potential_mother_body_3}, we conclude
\begin{align}
 & \mathcal U_*(z) = \mathcal U_0(z), \quad z\in \C\setminus\Omega,  \label{identity_total_potentials}\\
 & \mathcal U_*(z) > l, \qquad z\in \Omega, \label{inequality_total_potentials}
\end{align}
where $l$ is the constant from Theorem~\ref{thm_equilibrium_measure}. From the definition of $\mathcal U_0$ and $\mathcal U_*$ it follows that \eqref{identity_total_potentials} is 
equivalent to \eqref{mother_body_equation}. Furthermore, by Theorem~\ref{thm_equilibrium_measure} we know that $\mathcal U_0\equiv l$ on $\Omega$, and 
\eqref{inequality_total_potentials} then becomes \eqref{mother_body_inequality}.
\end{proof}

\section{Riemann-Hilbert analysis in the three-cut case}\label{section_riemann_hilbert_analysis_precritical}

In this section we perform the Riemann-Hilbert/Steepest Descent analysis for the multiple orthogonal polynomial $P_{n,n}$ (given in Definition~\ref{definition_mop}) in the 
three-cut case $(t_0,t_1)\in\mathcal F_1$.

The analysis is similar to the one presented by Bleher and Kuijlaars in \cite{bleher_kuijlaars_normal_matrix_model} for $t_1=0$. The main difference 
here is that we construct the $g$-functions only with the help of the $\xi$-functions, without relying on any vector equilibrium problem. Although our $g$-functions 
could also be given in terms of the solution to a vector equilibrium problem, we do not elaborate in this direction.

For $j=0,1,2$, we extend $\Sigma_{*,j}$ given in Section~\ref{section_proof_s_property} to a contour $\Sigma_{j}$ in the following way. For $j=0$, simply set $\Sigma_j=[z_*,\hat 
z_0]$. For $j=2$, we take $\Sigma_2\setminus\Sigma_{*,2}$ to be an arc from $z_2$ to $\hat z_2$ contained in the projection of the strip domain $\mathcal S_3$ of the 
quadratic differential $\varpi$, that is,
\begin{equation}\label{projection_property_extension_sigma}
 \Sigma_2\setminus(\Sigma_{*,2}\cup \{\hat z_2\})\subset \pi(\mathcal S_3),
\end{equation}
where we recall that the strip domain $\mathcal S_3$ is labeled as in Figure \ref{figure_planar_graph}. Lastly we extend $\Sigma_{*,1}$ imposing the conjugation property
\begin{equation}\label{symmetry_relations_extension_sigma_*}
\Sigma_1\setminus \Sigma_{*,1}=(\Sigma_2\setminus \Sigma_{*,2})^*.
\end{equation}
In this way,
\begin{equation}\label{projection_property_extension_sigma_2}
 \Sigma_1\setminus(\Sigma_{*,1}\cup \{\hat z_1\})\subset \pi(\mathcal S_8),
\end{equation}
where the strip domain $\mathcal S_8$ is given as in Figure \ref{figure_planar_graph}. We refer the reader to Figure \ref{figure_projection_property}.

\begin{figure}[t]
\centering
 \begin{overpic}[scale=1]
 {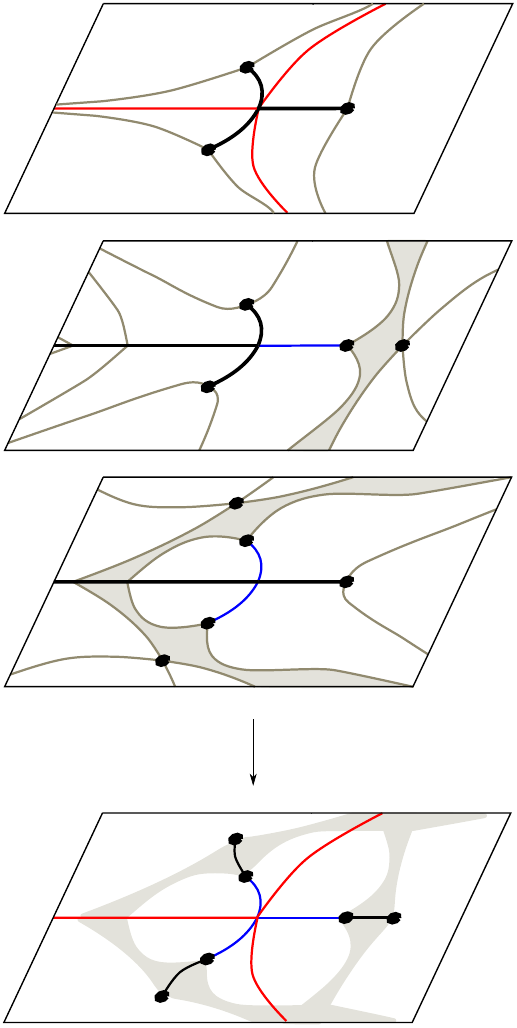}
 \put(50,10){$\C$}
 \put(26,27){$\pi$}
 \put(50,43){$\mathcal R_3$}
 \put(50,66){$\mathcal R_2$}
 \put(50,89){$\mathcal R_1$}
 \put(28,11){$\scriptstyle{\Sigma_{*}}$}
 \put(32,16.5){$\scriptstyle{L}$}
 \end{overpic}
 \caption{Illustration of properties \eqref{projection_property_extension_sigma} and \eqref{definition_L_2}. The trajectories highlighted on $\mathcal R$ in blue and red are 
projected onto $\C$ to the contours $\Sigma_*$, $L$, also respectively represented in blue and red. $\Sigma_*$ consists of the union of the three trajectories projected from 
$\mathcal R_2$, $\mathcal R_3$, whereas $L$ consists of the pieces projected from $\mathcal R_1$. In addition, we extend $\Sigma_*$ to $\Sigma$ by constraining $\Sigma\setminus 
\Sigma_*$ to lie within the shaded region on $\C$, which consists of the projections of the gray strip domains on $\mathcal R$. The arcs of $\Sigma\setminus \Sigma_*$ are depicted 
in black.}\label{figure_projection_property}
\end{figure}

The arc $\Sigma_j$ is oriented from $z_*$ to $\hat z$. We then set 
$$
\Sigma=\Sigma_0\cup\Sigma_1\cup\Sigma_2,
$$
and consider the multiple orthogonal polynomial $P_{n,n}$ in Definition~\ref{definition_mop} with this choice of $\Sigma$.

\subsection{Multiple orthogonality in terms of Airy functions}

The multiple orthogonality conditions \eqref{multiple_orthogonality_conditions} can be stated in terms of solutions to the Airy equation $y''=zy$. The Airy function $\ai$ is the 
special solution to the Airy equation determined by the asymptotic behavior
\begin{align*}
 \ai(z)  & = \frac{z^{-1/4}}{2\sqrt \pi} e^{-\frac{2}{3}z^{3/2}}(1+\Boh(z^{-3/2})), \\
 \ai'(z) & = -\frac{z^{1/4}}{2\sqrt \pi} e^{-\frac{2}{3}z^{3/2}}(1+\Boh(z^{-3/2})),
\end{align*}
valid when $z\to \infty, -\pi< \arg z<\pi$. It admits the integral representation
$$
\ai(z)=\frac{1}{2\pi i}\int_{\Gamma_0} e^{\frac{1}{3}s^3-zs}ds,\quad z\in\C,
$$
where $\Gamma_0$ is a contour as in Section~\ref{section_associated_MOP}, see \eqref{definition_infinities} {\it et seq}. For $\Gamma_1$, $\Gamma_2$, $\omega$ also given as in 
Section~\ref{section_associated_MOP}, set 
\begin{equation}\label{definition_functions_airy_1}
y_j(z)=\omega^{j}\ai(\omega^{j}z)=\frac{1}{2\pi i}\int_{\Gamma_j}e^{\frac{1}{3}s^3-zs},\quad z\in\C,\quad j=0,1,2.
\end{equation}
Note that the integral representations above, in combination with contour deformation, immediately imply that
\begin{equation}\label{equation_sum_airy_functions}
y_0(z)+y_1(z)+y_2(z)=\ai(z)+\omega \ai(\omega z) + \omega^2\ai(\omega^2 z)=0.
\end{equation}

Using the functions $y_0,y_1$ and $y_2$ in \eqref{definition_functions_airy_1} and setting $c_n=(n/t_0)^{2/3}$, the conditions \eqref{multiple_orthogonality_conditions} can 
be rewritten as
\begin{equation}\label{multiple_orthogonality_conditions_airy}
\begin{aligned}
\sum_{l=0}^2\int_{ \Sigma_l} P_{n,n}(z)z^k e^{\frac{n}{t_0}V(z)}y_l(c_n(z-t_1))dz=0, & \quad k=0,\hdots, \left\lceil \frac{n}{2}  \right\rceil -1, \\
\sum_{l=0}^2\int_{ \Sigma_l} P_{n,n}(z)z^k e^{\frac{n}{t_0}V(z)}y'_l(c_n(z-t_1))dz=0, & \quad k=0,\hdots, \left\lfloor \frac{n}{2} \right\rfloor -1.
\end{aligned}
\end{equation}

\subsection{The Riemann-Hilbert problem $Y$}\label{section_rhp_Y}

For simplicity of presentation, we assume for now on $n$ even. The case $n$ odd can be treated similarly, after appropriate (and non essential) modifications.

Consider the following Riemann-Hilbert problem for $Y$.

\begin{itemize}
 \item $Y:\C\setminus \Sigma\to \C^{3\times 3}$ is analytic;
 
 \item $Y_+(z)=Y_-(z)J_Y(z)$, $z\in \Sigma$, where
 $$
 J_Y(z)=
 \begin{pmatrix}
  1 & e^{\frac{n}{t_0}V(z)}y_j(c_n(z-t_1)) & e^{\frac{n}{t_1}V(z)}y'_j(c_n(z-t_1)) \\
  0 & 1 & 0 \\
  0 & 0 & 1
 \end{pmatrix}, \quad z\in \Sigma_j;
 $$
 
 \item $Y(z)=(I+\Boh(z^{-1}))
 \begin{pmatrix}
  z^n & 0 & 0 \\
  0 & z^{-\frac{n}{2}} & 0 \\
  0 & 0 & z^{-\frac{n}{2}}
 \end{pmatrix},\quad z\to \infty;
 $
 
 \item $Y(z)=
 \begin{pmatrix}
  \Boh(1) & \Boh(\log(z-\hat z_j)) & \Boh(\log(z-\hat z_j)) \\
  \Boh(1) & \Boh(\log(z-\hat z_j)) & \Boh(\log(z-\hat z_j))\\
  \Boh(1) & \Boh(\log(z-\hat z_j)) & \Boh(\log(z-\hat z_j))
 \end{pmatrix}, \quad z\to \hat z_j, \quad j=0,1,2.
$

\item $Y(z)$ is remains bounded when $z\to z_*$.
\end{itemize}

Here and in what follows, $\Sigma$ is oriented outwards, that is, towards $\infty$, see Figure~\ref{figure_contour_y}.

It turns out that the polynomial $P_{n,n}$ uniquely exists if, and only if, the Riemann Hilbert problem above has a solution. In such a case, the polynomial $P_{n,n}$ is recovered 
through
\begin{equation}\label{P_nn_Y}
P_{n,n}=Y_{1,1}.
\end{equation} 
As one of the consequences of our analysis, we get that for sufficiently large $n$, the Riemann-Hilbert problem above has a solution, and thus the polynomial $P_{n,n}$ exists for 
$n$ sufficiently large.

\begin{figure}[t]
 \begin{overpic}[scale=1]
  {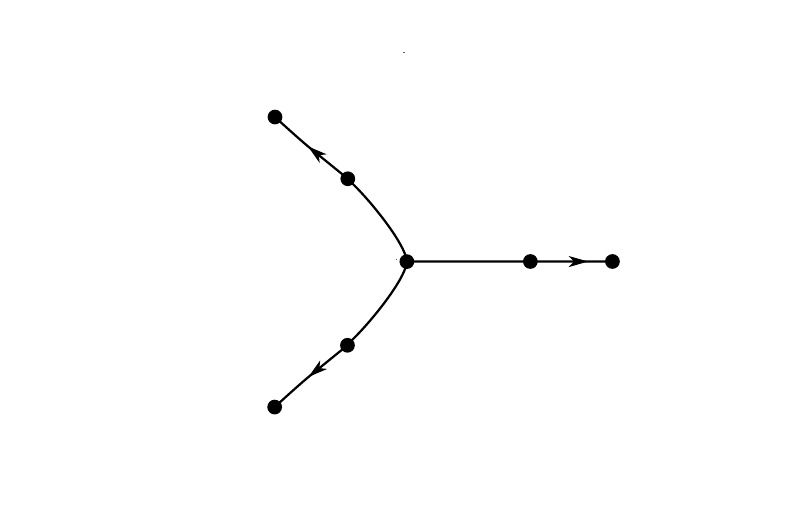}
  \put(79.5,32){$\hat z_0$}
  \put(31,11){$\hat z_1$}
  \put(30,52){$\hat z_2$}
  \put(65,35){$z_0$}
  \put(45,19){$z_1$}
  \put(45,45){$z_2$}
  \put(46,32){$z_*$}
  \put(55,34){$\Sigma$}
 \end{overpic}
 \caption{Contour $\Sigma$ for the RHP for $Y$.}\label{figure_contour_y}
\end{figure}

\subsection{First transformation: $Y\mapsto X$}\label{section_transformation_Y_X}

The first transformation, essentially the same as in \cite{bleher_kuijlaars_normal_matrix_model}, has the goal of reducing the jump matrix of $Y$ to nontrivial $2\times 2$ 
blocks. Define
\begin{align*}
 y_3(z) & = 2\pi i(\omega^2 \ai(\omega z)-\omega \ai(\omega^2z)),\\
 y_4(z) & = \omega y_3(\omega z), \\
 y_5(z) & = \omega^2 y_3(\omega^2 z).
\end{align*}
Using the identities $1-\omega=i\sqrt{3}\omega^2$ and $1-\omega^2=-i\sqrt{3}\omega$, in combination with \eqref{equation_sum_airy_functions}, we also get 
\begin{align*}
y_3( z) & = \frac{2\pi}{\sqrt{3}}((1-\omega)\ai (\omega z)+(1-w^2)\omega^2 \ai(\omega^2 z)) \\
	& = \frac{2\pi}{\sqrt{3}}(-\omega\ai(\omega z)-\omega^2\ai(\omega^2 z) +\ai(\omega z)+\ai(\omega^2 z))\\
	& = \frac{2\pi}{\sqrt{3}} (\ai(z) + \ai(\omega z) + \ai(\omega^2 z)) 
\end{align*}
This last identity immediately implies that $y_3(\omega z)=y_3(z)$, and consequently $y_4(z)=\omega y_3(z)$ and $y_5(z)=\omega^2y_3(z)$.

Additionally, set 
\begin{equation}\label{definition_L_0_projection}
L_0=(-\infty,z_*]=\pi(\mathcal R_1\cap \gamma_1(z_0^{(2)}))
\end{equation}
and
\begin{equation}\label{definition_L_2}
 L_2=\pi(\mathcal R_1\cap \gamma_0(z_1^{(3)})), \quad L_1=L_2^*=\pi(\mathcal R_1\cap \gamma_2(z_2^{(3)})),
\end{equation}
where $\pi:\mathcal R\to \overline \C$ is the canonical projection as before, see Figure~\ref{figure_projection_property}.
$L_1$ and $L_2$ are unbounded analytic arcs with a common finite endpoint $z_*\in\R$, and they extend to $\infty$ along the angles $\frac{\pi}{3}$ and 
$-\frac{\pi}{3}$, respectively. We set the orientation on $L_j$ to be outwards, that is, from $z_*$ to $\infty$.

\begin{remark}
 For ease of presentation, in what follows we assume that $L_j\cap \Sigma_{*}=\{z_*\}$. If $L_j$ intersects $\Sigma_{*}$ in more points, it is still possible to carry out the 
Riemann-Hilbert analysis along the same lines as we present here, with appropriate and non essential modifications. Numerical experiments indicate that the condition $L_j\cap 
\Sigma_{*}=\{z_*\}$ always holds true anyway.
\end{remark}

The union $L=L_0\cup L_1 \cup L_2$ divides the plane into three domains, henceforth denoted $G_0$, $G_1$, $G_2$, where $G_j$ is uniquely defined through the condition that 
$\hat z_j$ is contained in $G_j$, see Figure~\ref{figure_contour_x}.

\begin{figure}[t]
 \begin{overpic}[scale=1]
  {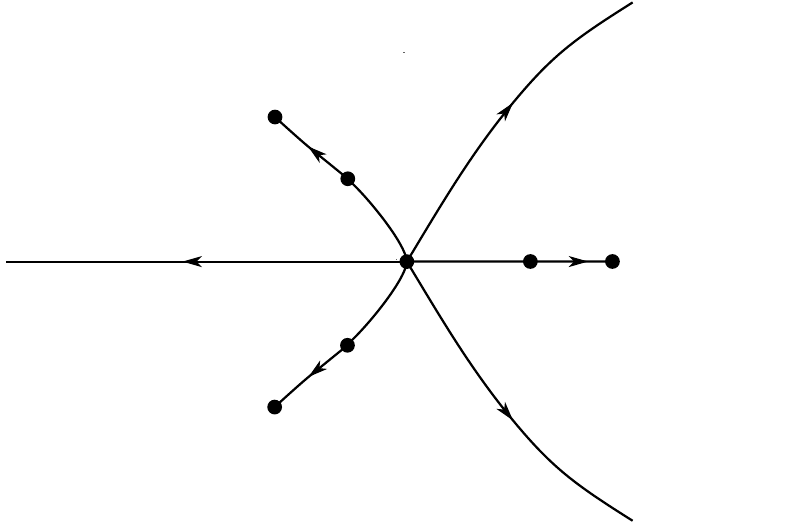}
  \put(79.5,32){$\hat z_0$}
  \put(31,11){$\hat z_1$}
  \put(30,52){$\hat z_2$}
  \put(65,35){$z_0$}
  \put(45,19){$z_1$}
  \put(45,45){$z_2$}
  \put(80,50){$G_0$}
  \put(10,50){$G_2$}
  \put(10,10){$G_1$}
  \put(10,34.5){$L_0$}
  \put(69,10){$L_2$}
  \put(65,58.5){$L_1$}
 \end{overpic}
 \caption{Contour $\Sigma\cup L$, $L=L_0\cup L_1\cup L_2$, for the RHP's for $X$ and $T$, and the sectors $G_0$, $G_1$, $G_2$.}\label{figure_contour_x}
\end{figure}

We make the transformation
\begin{multline}\label{transformation_X}
X(z)=
\begin{pmatrix}
 1 & 0 & 0 \\
 0 & \frac{c_n^{-1/4}}{\sqrt{2\pi}} & 0 \\
 0 & 0 & i\frac{c_n^{1/4}}{\sqrt{2\pi}}
\end{pmatrix}
Y(z)
\begin{pmatrix}
 1 & 0 & 0 \\
 0 &  y'_{j+3}(c_n(z-t_1)) & -y'_{j}(c_n(z-t_1)) \\
 0 & -y_{j+3}(c_n(z-t_1))  & y_{j}(c_n(z-t_1)
\end{pmatrix} \\
\times
\begin{pmatrix}
 1 & 0 & 0 \\
 0 & 1 & 0 \\
 0 & 0 & -2\pi i
\end{pmatrix}, \quad z\in G_j, \quad j=0,1,2.
\end{multline}

It follows as in \cite[pp.~1297--1301]{bleher_kuijlaars_normal_matrix_model} that $X$ satisfies the following RHP 
\begin{itemize}
 \item $X:\C\setminus (\Sigma\cup L)\to \C^{3\times 3}$ is analytic;
 
 \item $X_+(z)=X_-(z)J_X(z)$, $z\in \Sigma\cup L$, where $J_X$ is given by
 $$
 J_X(z)=
 \begin{pmatrix}
  1 & 0 & 0 \\
  0 & \omega^2 & 1 \\
  0 & 0 & \omega
 \end{pmatrix},
 \quad z\in L;
\qquad 
 J_X(z)=
 \begin{pmatrix}
  1 & e^{\frac{n}{t_0}V(z)} & 0 \\
  0 & 1 & 0 \\
  0 & 0 & 1
 \end{pmatrix},
 \quad z\in \Sigma.
$$

 \item $X$ has the same endpoint conditions as $Y$ at $z_*,\hat z_j$, $j=0,1,2$.

\item $
X(z)=(I+\Boh(z^{-1}))A(z)Q(z), \quad z\to \infty,
$

\noindent where
\begin{equation}\label{definition_asymptotic_matrix_A}
A(z)=
\begin{pmatrix}
 1 & 0 & 0 \\
 0 & z^{1/4} & 0 \\
 0 & 0 & z^{-1/4}
\end{pmatrix}
\times
\begin{cases}
\begin{pmatrix}
 1 & 0 & 0 \\
 0 & \frac{1}{\sqrt 2} & -\frac{i}{\sqrt 2} \\
 0 & -\frac{i}{\sqrt 2} & \frac{1}{\sqrt 2}
\end{pmatrix},
& z\in G_0, \\
\begin{pmatrix}
 1 & 0 & 0 \\
 0 & -\frac{i}{\sqrt 2} & -\frac{1}{\sqrt 2} \\
 0 & \frac{1}{\sqrt 2} & \frac{i}{\sqrt 2}
\end{pmatrix},
& z\in G_1, \\
\begin{pmatrix}
 1 & 0 & 0 \\
 0 & \frac{i}{\sqrt 2} & \frac{1}{\sqrt 2} \\
 0 & -\frac{1}{\sqrt 2} & -\frac{i}{\sqrt 2}
\end{pmatrix},
& z\in G_2. \\
\end{cases}
\end{equation}
\noindent and
$$
Q(z)=
\begin{cases}
\begin{pmatrix}
 z^n & 0 & 0 \\
 0 & z^{-\frac{n}{2}}e^{\frac{2n}{3t_0}(z-t_1)^{3/2}} & 0 \\
 0 & 0 & z^{-\frac{n}{2}}e^{-\frac{2n}{3t_0}(z-t_1)^{3/2}}
\end{pmatrix} 
& 
z\in G_0, \\
\begin{pmatrix}
 z^n & 0 & 0 \\
 0 & z^{-\frac{n}{2}}e^{-\frac{2n}{3t_0}(z-t_1)^{3/2}} & 0 \\
 0 & 0 & z^{-\frac{n}{2}}e^{\frac{2n}{3t_0}(z-t_1)^{3/2}}
\end{pmatrix} 
& 
z\in G_1\cup G_2,
\end{cases}
$$
\end{itemize}

\subsection{Second transformation: $X\mapsto T$}\label{section_transformation_X_T}

The second transformation has the goal of removing the $n$-dependence from the asymptotics of $X$. 

Introduce the $g$-functions
\begin{equation}\label{g_functions}
\begin{aligned}
g_1(z) & = \int_{z_0}^z \xi_1(s)ds + c_1 ,    & \quad z\in \C\setminus (\Sigma_*\cup L_0),  \\
g_2(z) & = \int_{z_*}^z \xi_2(s)ds+c_2,  & \quad z\in \C\setminus (\Sigma_{*,1}\cup \Sigma_{*,2}\cup L_0), \\
g_3(z) & = \int_{z_0}^z \xi_3(s)ds+c_3, & \quad  z\in \C\setminus(\Sigma_{*,0}\cup L_0),
\end{aligned}
\end{equation}
where for $g_2$ the path of integration starts along $(z_*,+\infty)$ and
\begin{equation}\label{definition_constants_c_j_1}
c_2=-\pi it_0 -\int_{\Sigma_{*,2}}(\xi_{1+}(s)-\xi_{2+}(s))ds, \quad c_1=c_3=\int_{z_*}^{z_0}\xi_{3-}(s)ds.
\end{equation}

The constant $c_2$ can be alternatively expressed as
\begin{equation}\label{definition_constants_c_j_2}
c_2=-\pi i t_0 +2\pi i t_0\mu_*(\Sigma_{*,2})=-\pi i t_0 +2\pi i t_0\mu_*(\Sigma_{*,1})=-\pi i t_0 \mu_*(\Sigma_{*,0}),
\end{equation}
where $\mu_*$ is given by Theorem~\ref{theorem_limiting_support_zeros}. In particular, $c_2$ is purely imaginary.
 
We could as well express one of the $g$-functions in terms of the other two through $\sum \xi_j=V'$, but we found more convenient to work with three $g$-functions 
instead.

The asymptotics \eqref{asymptotics_xi} can be rewritten as
\begin{align*}
\xi_1(z) & = V'(z)+\frac{t_0}{z}  + \Boh(z^{-2}), \\
\xi_2(z) & = -(z-t_1)^{1/2}-\frac{t_0}{2z} +\Boh(z^{-3/2}), \qquad z\to \infty \\
\xi_3(z) & = (z-t_1)^{1/2}-\frac{t_0}{2z} +\Boh(z^{-3/2}), 
\end{align*}
which in turn give
\begin{equation}\label{asymptotics_g_functions}
\begin{aligned}
g_1(z) & = V(z)+l_1+t_0 \log z + \Boh(z^{-1}), \\
g_2(z) & = -\frac{2}{3}(z-t_1)^{3/2}+l_2-\frac{t_0}{2}\log z +\Boh(z^{-1/2}), \qquad z\to \infty \\
g_3(z) & = \frac{2}{3}(z-t_1)^{3/2}+l_3-\frac{t_0}{2}\log z +\Boh(z^{-1/2}), 
\end{aligned}
\end{equation}
for some constants $l_1,l_2,l_3$.

\begin{lem}\label{lemma_constants_c_j}
 For the constants $l_2,l_3,c_2$ and $c_3$ as in \eqref{g_functions}--\eqref{asymptotics_g_functions}, it is valid
 \begin{equation}\label{equality_variational_constants}
 l_3=l_2,\qquad c_2=i\im c_3.
 \end{equation}
\end{lem}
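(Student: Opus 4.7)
The proof splits into two independent identities.

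For the identity $c_2 = i\im c_3$, the strategy is to verify separately that $\re c_2 = 0$ and $\im c_2 = \im c_3$. The first equality is immediate from the alternative expression $c_2 = -i\pi t_0 \mu_*(\Sigma_{*,0})$ recorded in \eqref{definition_constants_c_j_2}, which is purely imaginary since $\mu_*$ is a real (positive) measure. For the second, the key input is the conjugation relation $\xi_{3-}(x) = \overline{\xi_{3+}(x)}$ for $x \in (z_*, z_0)$, which is contained in \eqref{equalities_inequalities_real_line_xi_functions_precritical_3} (there it is stated that $\xi_{1\pm} = \overline{\xi_{3\pm}} = \xi_{3\mp}$, which yields precisely $\xi_{3-} = \overline{\xi_{3+}}$). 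Taking imaginary parts of $c_3 = \int_{z_*}^{z_0}\xi_{3-}(s)\,ds$ gives $\im c_3 = -\int_{z_*}^{z_0}\im\xi_{3+}(s)\,ds$. On the other hand, on $\Sigma_{*,0}$ the density of $\mu_*$ equals $(\xi_{1-} - \xi_{1+})/(2\pi i t_0) = (\xi_{3+}-\xi_{3-})/(2\pi i t_0) = \im\xi_{3+}/(\pi t_0)$, hence $\mu_*(\Sigma_{*,0}) = \frac{1}{\pi t_0}\int_{z_*}^{z_0}\im\xi_{3+}\,ds$, and therefore $\im c_3 = -\pi t_0\mu_*(\Sigma_{*,0}) = \im c_2$.

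For the identity $l_3 = l_2$ I would first dispose of the imaginary parts separately: by \eqref{equalities_inequalities_real_line_xi_functions_precritical_1}, both $\xi_2$ and $\xi_3$ are real on $(z_0, +\infty)$, so $g_3 - c_3$ and $g_2 - c_2$ are real there. Since $\tfrac{2}{3}(z-t_1)^{3/2}$ and $\log z$ are real for large positive $z$, the asymptotics \eqref{asymptotics_g_functions} force $\im l_j = \im c_j$ for $j=2,3$; combined with Part 1 this gives $\im l_3 = \im l_2$, and it remains to prove $\re l_3 = \re l_2$.

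For the real part I would exploit that $\infty^{(2)} = \infty^{(3)}$ is a branch point of order two of $\mathcal R$. In the local uniformizer $\tilde w$ defined by $\tilde w^2 = 1/z$ (with $\tilde w > 0$ on sheet 3 and $\tilde w < 0$ on sheet 2), the expansions \eqref{asymptotics_xi} combine into a single analytic expansion $\xi(\tilde w) = \tilde w^{-1} - \tfrac12 t_1 \tilde w - \tfrac12 t_0\tilde w^2 + \Boh(\tilde w^3)$. Integrating term by term against $dz = -2\tilde w^{-3}d\tilde w$ yields a primitive
\[
G(\tilde w) \;=\; \frac{2}{3\tilde w^3} - \frac{t_1}{\tilde w} + t_0\log\tilde w + (\text{analytic in }\tilde w),
\]
whose only non-single-valuedness comes from the logarithm. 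Matching this with \eqref{asymptotics_g_functions}, writing $g_3 = G - G(z_0^{(3)}) + c_3$ on sheet 3 and $g_2 = G - G(z_*^{(2)}) + c_2$ on sheet 2, and using the branch choice $\log\tilde w = \log|\tilde w| + i\pi$ on sheet 2, the identity $l_3 = l_2$ reduces to the single equation
\[
G(z_0^{(3)}) - G(z_*^{(2)}) \;=\; c_3 - c_2 - i\pi t_0
\]
on $\mathcal R$. The left-hand side is $\int_{z_*^{(2)}}^{z_0^{(3)}} \xi\,dz$ along a suitable path on $\mathcal R$; a convenient choice starts on the $+$ side of $\Sigma_{*,2}$ on sheet~2, crosses the branch point $z_2^{(1)}=z_2^{(2)}$ onto sheet~1, and ends at $z_0^{(1)} = z_0^{(3)}$ along the $+$ side of $\Sigma_{*,0}$. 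The jump relations $\xi_{1\pm}|_{\Sigma_{*,2}} = \xi_{2\mp}|_{\Sigma_{*,2}}$ and $\xi_{3\mp}|_{\Sigma_{*,0}} = \xi_{1\pm}|_{\Sigma_{*,0}}$, combined with the definitions \eqref{definition_constants_c_j_1} and \eqref{definition_constants_c_j_2} (which express $\int_{\Sigma_{*,2}}(\xi_{1+}-\xi_{2+})\,ds$ in terms of $\mu_*(\Sigma_{*,2})$), then convert the contour integral into the required right-hand side.

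The main obstacle will be the careful bookkeeping of sheet crossings and boundary values along the chosen path, together with fixing the branch of $\log\tilde w$ at $\infty^{(2,3)}$ compatibly with a cut on $\mathcal R$ from $\infty^{(1)}$ to $\infty^{(2,3)}$ that is consistent with the analytic domains of $g_1, g_2, g_3$; once this is arranged, both halves of Lemma~\ref{lemma_constants_c_j} are direct computations.
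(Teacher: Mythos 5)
Your Part~1 argument (that $c_2 = i\im c_3$) is essentially the paper's: the paper also isolates $\re c_2 = 0$ from \eqref{definition_constants_c_j_2} and then computes $\im c_3$ using the conjugation relation in \eqref{equalities_inequalities_real_line_xi_functions_precritical_3} together with $\xi_{1\pm} = \xi_{3\mp}$ on $\Sigma_{*,0}$ to match it against $\mu_*(\Sigma_{*,0})$. No concern there.

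For $l_3 = l_2$ you take a genuinely different route. The paper picks a point $p$ on $L_0$, where sheet~$2$ and sheet~$3$ meet, and evaluates $g_{3+}(p) - g_{2-}(p)$ twice: once from the large-$z$ asymptotics of both $g$-functions (this yields $l_3 - l_2 + \pi i t_0 + \Boh(p^{-1/2})$) and once by contour deformation and the jump identities on $\Sigma_{*,2}$ and $\Sigma_{*,0}$ (this yields exactly $\pi i t_0$). Your idea instead uniformizes near the branch point $\infty^{(2)} = \infty^{(3)}$, packages both $g_2$ and $g_3$ into one primitive $G$ of $\xi\,dz$, and reduces $l_3 = l_2$ to a single contour-integral identity on $\mathcal R$. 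The reduction step is correct modulo branch choice: with $\log\tilde w = \log|\tilde w| + i\pi$ on sheet~2, one indeed gets $l_3 - l_2 = \bigl[G(z_*^{(2)}) - G(z_0^{(3)})\bigr] + c_3 - c_2 - i\pi t_0$.

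The gap is in the claim that your specific path realizes this. Observe that $c_3 - c_2 - i\pi t_0 = c_3 + \int_{\Sigma_{*,2}}(\xi_{1+}-\xi_{2+})\,ds$, and that $\xi_1\,dz$ has residue $-t_0$ at $\infty^{(1)}$ and $\xi\,dz$ has residue $t_0$ at $\infty^{(2,3)}$, so a change of path can only shift $G(z_0^{(3)})-G(z_*^{(2)})$ by integer multiples of $2\pi i t_0$. Now trace your path: along $\Sigma_{*,2}^{+}$ on sheet~2 the integrand is $\xi_{2+}$; passing through the branch point $z_2$ lands you on sheet~1 on the same geometric side of $\Sigma_{*,2}$, hence the return leg contributes $\int_{z_2}^{z_*}\xi_{1+}\,ds$; and the final leg along $\Sigma_{*,0}^{+}$ gives $\int_{z_*}^{z_0}\xi_{1+}\,ds = c_3$ by $\xi_{1+}=\xi_{3-}$. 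The total is $c_3 - \int_{\Sigma_{*,2}}(\xi_{1+}-\xi_{2+})\,ds$, which differs from the target by $2\int_{\Sigma_{*,2}}(\xi_{1+}-\xi_{2+})\,ds = -4\pi i t_0\,\mu_*(\Sigma_{*,2})$. This is \emph{not} a multiple of $2\pi i t_0$ for generic parameters, so it cannot be explained as a homotopy or residue issue; either the path has the wrong side of $\Sigma_{*,2}$, or the branch of $\log\tilde w$ must be cut and chosen differently from what your description implies, or the crossing at $z_2$ delivers the path to the opposite edge. You flag the sheet-crossing bookkeeping as a mere technicality to be ``arranged,'' but it is in fact the crux: the whole identity hinges on getting a single sign right, and the description as written produces the wrong one. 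The paper's route through $L_0$ avoids these branch ambiguities entirely, because there the two sheets being compared are glued along a single cut and the comparison is made pointwise.
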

\begin{proof}
 From the asymptotics \eqref{asymptotics_g_functions},
\begin{align*}
 g_{3+}(z)-g_{2-}(z) & = \frac{2}{3}((z-t_1)^{3/2}_+ + (z-t_1)^{3/2}_-)+l_3-l_2 \\
		     & \quad -\frac{t_0}{2} (\log z)_+ +\frac{t_0}{2}(\log z)_- +\Boh(z^{-1/2}) \\
		     & = l_3-l_2+ \pi i t_0  +\Boh(z^{-1/2}), \quad z\in L_0
\end{align*}

On another hand, from the definition of $g_2$ and $g_3$,
\begin{align*}
 g_{3+}(z)-g_{2-}(z) & = c_3-c_2+\int_{z_0}^{z_*}\xi_{3-}(s)ds-\int_{z_*}^{z_2}(\xi_{1+}(s)-\xi_{2+}(s))ds, \\ 
		     & = -c_2 +\int_{z_*}^{z_2}(\xi_{1-}(s)-\xi_{1+}(s))ds = \pi i t_0,\quad z\in L_0,
\end{align*}
where for the second equality we used the jump condition $\xi_{1\pm}=\xi_{2\mp}$ in $\Sigma_{*,2}$, which follows from the sheet structure constructed in 
Section~\ref{section_sheet_structure_1}.

To get the second equality in \eqref{equality_variational_constants}, just note that
\begin{align*}
 2i\im c_3=c_3+\overline{c_3} & = \int_{z_*}^{z_0}(\xi_{3-}(s)-\overline{\xi_{3-}(s)})ds \\
			      & = \int_{z_*}^{z_0}(\xi_{1+}(s)-\xi_{1-}(s))ds \\
			      & = -2\pi i t_0\mu_{*}(\Sigma_{*,0})=2c_2,
\end{align*}
where for the second equality we used the jump equalities in \eqref{equalities_inequalities_real_line_xi_functions_precritical_3} and for the last equality we used 
\eqref{definition_constants_c_j_2}.

\end{proof}

The second transformation is given by
\begin{multline}\label{transformation_T}
T(z)=
\begin{pmatrix}
  e^{\frac{n}{t_0}l_1} & 0 & 0 \\
  0 & e^{\frac{n}{t_0}l_2} & 0 \\
  0 & 0 & e^{\frac{n}{t_0}l_2}
 \end{pmatrix}
X(z)
\\
\times
\begin{cases}
 \begin{pmatrix}
  e^{\frac{n}{t_0}(V(z)-g_1(z))} & 0 & 0 \\
  0 & e^{-\frac{n}{t_0}g_3(z)} & 0 \\
  0 & 0 & e^{-\frac{n}{t_0}g_2(z)}
 \end{pmatrix},
 &
 z\in G_0 \\
 \begin{pmatrix}
  e^{\frac{n}{t_0}(V(z)-g_1(z))} & 0 & 0 \\
  0 & e^{-\frac{n}{t_0}g_2(z)} & 0 \\
  0 & 0 & e^{-\frac{n}{t_0}g_3(z)}
 \end{pmatrix},
 &
 z\in G_1\cup G_2  
\end{cases}
\end{multline}

It follows that $T$ is the solution to the following RHP,

\begin{itemize}
 \item $T:\C\setminus(\Sigma\cup L)\to \C^{3\times 3}$ is analytic;

 \item $T_+(z)=T_-(z)J_T(z)$, $z\in \Sigma\cup L$, where $J_T$ is given by
\begin{multline*}
J_T(z)= \\
 \begin{cases}
 \begin{pmatrix}
  e^{\frac{n}{t_0}(g_{1-}(z)-g_{1+}(z))} & e^{\frac{n}{t_0}(g_{1-}(z)-g_{3+}(z))} & 0 \\
  0 & e^{\frac{n}{t_0}(g_{3-}(z)-g_{3+}(z))} & 0 \\
  0 & 0 & e^{\frac{n}{t_0}(g_{2-}(z)-g_{2+}(z))}
 \end{pmatrix},
 & z\in \Sigma_0, \\
  \begin{pmatrix}
  e^{\frac{n}{t_0}(g_{1-}(z)-g_{1+}(z))} & e^{\frac{n}{t_0}(g_{1-}(z)-g_{2+}(z))} & 0 \\
  0 & e^{\frac{n}{t_0}(g_{2-}(z)-g_{2+}(z))} & 0 \\
  0 & 0 & e^{\frac{n}{t_0}(g_{3-}(z)-g_{3+}(z))}
 \end{pmatrix},
 & z\in \Sigma_1\cup \Sigma_2, \\
  \begin{pmatrix}
  e^{\frac{n}{t_0}(g_{1-}(z)-g_{1+}(z))} & 0 & 0 \\
  0 &  \omega^2e^{\frac{n}{t_0}(g_{2-}(z)-g_{2+}(z))} & e^{\frac{n}{t_0}(g_{2-}(z)-g_{3+}(z))} \\
  0 & 0 & \omega e^{\frac{n}{t_0}(g_{3-}(z)-g_{3+}(z))}
 \end{pmatrix},
 & z\in  L_0, \\
 \begin{pmatrix}
  1 & 0 & 0 \\
  0 & \omega^2 e^{-\frac{n}{t_0}(g_{2}(z)-g_{3}(z))} & 1 \\
  0 & 0 & \omega e^{-\frac{n}{t_0}(g_{3}(z)-g_{2}(z))}
 \end{pmatrix},
 & z\in L_1, \\
 \begin{pmatrix}
  1 & 0 & 0 \\
  0 & \omega^2 e^{-\frac{n}{t_0}(g_{3}(z)-g_{2}(z))} & 1 \\
  0 & 0 & \omega e^{-\frac{n}{t_0}(g_{2}(z)-g_{3}(z))}
 \end{pmatrix},
 & z\in L_2;
 \end{cases}
\end{multline*}
 
 \item $T$ satisfies the same endpoint conditions as $X$ when $z\to z_*,\hat z_j$, $j=0,1,2$.

 \item $T(z)=(I+\Boh(z^{-1}))A(z)$, as $z\to \infty$.
\end{itemize}
 
Our next goal is to simplify the jump matrix $J_T$ further. For this purpose it is convenient to introduce the functions
\begin{equation}\label{definition_Phi}
\begin{aligned}
\Phi_0(z) & =\frac{1}{t_0} \int_{z_0}^z (\xi_1(s)-\xi_3(s))ds, \\
\Phi_1(z) & =\frac{1}{t_0} \int_{z_1}^z (\xi_1(s)-\xi_2(s))ds, \qquad z\in \C\setminus(\Sigma_*\cup L_0),\\
\Phi_2(z) & =\frac{1}{t_0} \int_{z_2}^z (\xi_1(s)-\xi_2(s))ds. \\
\end{aligned}
\end{equation}
and also
\begin{equation}\label{definition_Psi}
\begin{aligned}
 \Psi_j(z)  & =\frac{1}{t_0} \int_{z_*}^{z}(\xi_2(s)-\xi_3(s))ds, \quad z\in \C\setminus (\Sigma_*\cup L_0),\quad j=0,1, \\
 \Psi_2(z)  & =\frac{1}{t_0} \int_{z_*}^{z}(\xi_3(s)-\xi_2(s))ds, \quad z\in \C\setminus (\Sigma_*\cup L_0).
\end{aligned}
\end{equation}
where the paths of integration for $\Psi_0$, $\Psi_1$ and $\Psi_2$ are as follows.
\begin{enumerate}
 \item[$\bullet$] If $\im z\geq 0$, then the path of integration for $\Phi_0$ emanates from $z_*$ in the sector between $L_0$ and $\Sigma_2$ on the upper half plane. If $\im z<0$, 
then the path of integration for $\Phi_0$ emanates from $z_*$ in the sector between $L_0$ and $\Sigma_1$ on the lower half plane.

\item[$\bullet$] For $j=1,2$, the path of integration for $\Psi_j$ emanates from $z_*$ in the sector between $L_j$ and $\Sigma_{j+1}$ on $G_{j+1}$.
\end{enumerate}

The main properties of the functions $\Phi_j$ and $\Psi_j$ are collected in the next proposition.

\begin{prop}\label{proposition_properties_g_phi_psi}
The functions $g_j, \Phi_j, \Psi_j$, $j=0,1,2$, satisfy
\begin{enumerate}[label=(\Alph*)]
 
 \item \label{sigma_0} For $z\in \Sigma_{*,0}$,
 \begin{center}
 \begin{enumerate}[label=(\roman*)]
  \item $\displaystyle{ g_{1+}(z)-g_{1-}(z)=t_0 \Phi_{0+}(z),}$
  \item $\displaystyle{ g_{2+}(z)-g_{2-}(z)=0, }$
  \item $\displaystyle{ g_{3+}(z)-g_{3-}(z)= t_0 \Phi_{0-}(z), }$
  \item $\displaystyle{ g_{3+}(z)-g_{1-}(z)=0, }$
  \item \label{real_part_phi_0} $\displaystyle{ \re \Phi_{0\pm}(z)=0, }$
  \item $\displaystyle{ \Phi_{0+}(z)+\Phi_{0-}(z)=0, }$
  \item $\displaystyle{ \Psi_{1+}(z)+\Psi_{2-}(z)=\Phi_{0+}(z)+\frac{2c_2}{t_0}. }$
 \end{enumerate}
\end{center}
 
\item \label{sigma_1_2} For $j=1,2$ and $z\in \Sigma_{*,j}$,
 \begin{center}
 \begin{enumerate}[label=(\roman*)]
  \item $\displaystyle{ g_{1+}(z)-g_{1-}(z)=t_0 \Phi_{j+}(z), }$
  \item $\displaystyle{ g_{2+}(z)-g_{2-}(z)=t_0 \Phi_{j-}(z), }$
  \item $\displaystyle{ g_{3+}(z)-g_{3-}(z)=0, }$
  \item $\displaystyle{ g_{2+}(z)-g_{1-}(z)=(-1)^{j+1}\pi i t_0, }$
  \item \label{real_part_phi_1_2} $\displaystyle{ \re \Phi_{j\pm}=0, }$
  \item $\displaystyle{ \Phi_{j+}(z)+\Phi_{j-}(z)=0, }$
  \item $\displaystyle{ \Phi_{j-1,-}(z)+(-1)^{j+1}\Psi_{j+1,+}(z)=\Phi_{j+}(z)-\frac{c_2}{t_0}-\pi i. } $
 \end{enumerate}
\end{center} 

\item For $z\in \Sigma_0\setminus \Sigma_{j,0}$,
$$
g_3(z)-g_1(z)=-t_0\Phi_0(z).
$$

\item \label{jump_relation_extension_sigma} For $j=1,2$, and $z\in \Sigma_j\setminus \Sigma_{*,j}$,
$$
g_2(z)-g_1(z)=-t_0\Phi_j(z)+(-1)^{j+1}\pi it_0.
$$

\item \label{sign_phi_extension_sigma_star}$\re \Phi_j$ is negative on $\Sigma_j\setminus \Sigma_{*,j}$, $j=0,1,2$.

\item \label{monotonicity_phi} For $j=0,1,2$, the functions $\im\Phi_{j+}$ and $\im\Phi_{j-}$ are decreasing and increasing, respectively, along the orientation of 
$\Sigma_{*.j}$.

 \item \label{properties_L_0}For $z\in L_0$,
 \begin{center}
 \begin{enumerate}[label=(\roman*)]
 \item \label{jumps_psi_0} $\displaystyle{ \Psi_{0+}(z) + \Psi_{0-}(z)= 0 , }$
 \item \label{real_part_psi_0} $\displaystyle{ \re \Psi_{0\pm}(z)=0, }$
 \item $\displaystyle{ g_{1+}(z) - g_{1-}(z) = -2\pi i t_0, }$
 \item $\displaystyle{ g_{2+}(z) - g_{2-}(z) = t_0\Psi_{0+}(z) + 2c_2 + 2\pi it_0, }$
 \item $\displaystyle{ g_{3+}(z) - g_{3-}(z) = t_0\Psi_{0-}(z) - 2c_2, }$
 \item $\displaystyle{ g_{3+}(z) - g_{2-}(z) = \pi i t_0. }$
 \item $\displaystyle{ \Phi_{2-}(z)-\Phi_{1+}(z) = \Psi_{0+}(z)+\frac{2c_2}{t_0} + 2\pi i. }$
 \end{enumerate}
\end{center}

 \item \label{properties_L_1_2} For $j=1,2$, and $z\in L_j$,
 \begin{center}
 \begin{enumerate}[label=(\roman*)]
 \item \label{jumps_psi_1_2} $\displaystyle{ \Psi_{j+}(z)=\Psi_{j-}(z), }$
 \item \label{real_part_psi_1_2} $\displaystyle{ \re \Psi_j(z)=0, }$
 \item $\displaystyle{ g_3(z)-g_2(z)=(-1)^jt_0\Psi_j(z)+(-1)^{j+1}c_2 }$
 \item $\displaystyle{ \Phi_{j-1}(z)-\Phi_{j+1}(z)=\Psi_{j}(z)-\frac{c_2}{t_0}+\pi i. }$
 \end{enumerate}
\end{center}

 \item \label{monotonicity_psi_j} For $j=0,1,2$, the function $\im\Psi_{j+}$ is decreasing along the orientation of $L_j$.
 
\end{enumerate}
\end{prop}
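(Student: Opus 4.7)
The plan is to verify each item by combining three ingredients: the sheet structure of $\mathcal R$ described in Section~\ref{section_sheet_structure_1}, the critical graph of the quadratic differential $\varpi$ constructed in Section~\ref{section_quadratic_differential}, and the identification of constants recorded in Lemma~\ref{lemma_constants_c_j}. Throughout, the identity $\xi_1+\xi_2+\xi_3=V'$, which holds because $-V'$ is the coefficient of $\xi^2$ in \eqref{spectral_curve}, will be used to rewrite linear combinations of $\Phi_j$, $\Psi_j$.

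First I would dispatch the jump relations for the $g$-functions, namely items (i)--(iv) in (A), (B), (G), and (C), (D). Each follows from direct computation: from $g_j(z)=\int_{z_0\,\text{or }z_*}^z \xi_j(s)ds + c_j$ and the sheet-structure jumps $\xi_{1\pm}=\xi_{3\mp}$ on $\Sigma_{*,0}$, $\xi_{1\pm}=\xi_{2\mp}$ on $\Sigma_{*,1}\cup\Sigma_{*,2}$, and $\xi_{2\pm}=\xi_{3\mp}$ on $L_0$, one reads off $g_{j+}-g_{j-}$ as an integral of the jump, which is exactly $t_0$ times the corresponding $\Phi_k$ or $\Psi_k$. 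The ``off-diagonal'' identities such as $g_{2+}-g_{1-}=(-1)^{j+1}\pi it_0$ in (B)(iv) and $g_{3+}-g_{2-}=\pi it_0$ in (G)(vi) are obtained by combining the jump with Lemma~\ref{lemma_constants_c_j}, which identifies $c_3-c_2=l_3-l_2+\pi it_0$ and recalls \eqref{definition_constants_c_j_2}.

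Second I would handle the real-part--vanishing statements (A)(v)--(vi), (B)(v)--(vi), (G)(i)--(ii), (I)(i)--(ii). These are precisely the content of the trajectory condition \eqref{definition_trajectory_integral} for $\varpi$: by construction in Section~\ref{section_proof_s_property} (see \eqref{projection_property_sigma_1_three_cut}--\eqref{projection_property_sigma_0_three_cut}), $\Sigma_{*,j}$ is the projection of an arc of trajectory of $\varpi$ on which $(\xi_1-\xi_k)dz\in i\mathbb R$, so $\re\Phi_{j\pm}=0$ and $\Phi_{j+}+\Phi_{j-}=0$ follows from $\xi_{1\pm}=\xi_{k\mp}$. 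Analogously, $L_0,L_1,L_2$ are projections of trajectories (by \eqref{definition_L_0_projection}--\eqref{definition_L_2}) on which $(\xi_2-\xi_3)dz\in i\mathbb R$, giving (G)(i)--(ii) and (H)(i)--(ii); note that on $L_0$ the function $\xi_2-\xi_3$ has a jump since we cross the branch cut, but on $L_1,L_2$ it does not. The monotonicity statements (F) and (I) come from the positivity of $d\mu_*$ on $\Sigma_{*,j}$ given by Theorem~\ref{theorem_limiting_support_zeros}: rewriting $\mu_*$'s density as $\tfrac{1}{2\pi t_0}\tfrac{d}{ds}\im(\Phi_{j-}-\Phi_{j+})$ along the orientation of $\Sigma_{*,j}$ and using $\Phi_{j+}+\Phi_{j-}=0$ gives the required monotonicity of both $\im\Phi_{j+}$ and $\im\Phi_{j-}$; the corresponding assertion for $\im\Psi_{j+}$ on $L_j$ follows by an analogous computation, using that the jump $\xi_{2+}-\xi_{3+}$ (or $\xi_2-\xi_3$) on $L_0$ (or $L_1,L_2$) has positive imaginary tangential component along the prescribed orientation, which can be checked at $t_1=0$ by the explicit formulas in Section~\ref{subsection_spectral_curve_t=0} and propagated by continuity via principle \textbf{P.4}.

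Third, for the sign statement (E), the extended pieces $\Sigma_j\setminus\Sigma_{*,j}$ lie inside projections of specific strip domains of $\varpi$ by \eqref{projection_property_extension_sigma} and \eqref{projection_property_extension_sigma_2}. On these strip domains $\re\Phi_j$ is a nonconstant harmonic function whose boundary value equals $0$ on $\Sigma_{*,j}$; since $\Phi_j$ has no other critical points in the strip (the finite critical points of $\varpi$ appear on the boundary), $\re\Phi_j$ is strictly monotone across the strip. The sign is pinned down by checking it at one interior point, which can be done by evaluating at $\hat z_j$ (where $\xi_1(\hat z_j)=\xi_{j+1}(\hat z_j)$ and a straightforward expansion gives the negative sign), or alternatively by continuity from $t_1=0$ using principle \textbf{P.4}.

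The main obstacle will be the mixed identities (A)(vii), (B)(vii), (G)(vii), (H)(iv), which express $\Psi_j$ in terms of $\Phi_k$ modulo the constant $c_2$. The strategy is the same in each case: deform the integration path for the left-hand side, using $\xi_1+\xi_2+\xi_3=V'$ to write e.g.\ $\xi_2-\xi_3=-(\xi_1-\xi_3)+(\xi_1-\xi_2)-(\xi_1-\xi_3)$ wait -- more carefully, $\xi_2-\xi_3=(\xi_1-\xi_3)-(\xi_1-\xi_2)$, so $t_0\Psi_j$ and $t_0\Phi_k$ can be related up to a constant depending only on the choice of base point. That constant equals $c_2$ (modulo $\pi it_0$ corresponding to branch-cut crossings) because of the identification $c_2=-\pi it_0\mu_*(\Sigma_{*,0})$ in \eqref{definition_constants_c_j_2}, which in turn encodes the period $-2\pi it_0=\oint\xi_1\,ds$ read off from the residue of $\xi_1$ at $\infty$ in \eqref{asymptotics_xi}. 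Careful bookkeeping of which sheet the deformation passes through, and which of the jump discontinuities $\xi_{1\pm}=\xi_{2\mp}$, $\xi_{1\pm}=\xi_{3\mp}$, $\xi_{2\pm}=\xi_{3\mp}$ it crosses on the way from the base point of $\Psi_j$ (which is $z_*$) to the base points $z_0$, $z_1$, $z_2$ of $\Phi_k$, will produce the additive constants $\tfrac{2c_2}{t_0}$, $-\tfrac{c_2}{t_0}-\pi i$, $\tfrac{2c_2}{t_0}+2\pi i$ and $-\tfrac{c_2}{t_0}+\pi i$ stated in the proposition. This is the step where sign and constant errors are easiest to make, and it will have to be verified one identity at a time.
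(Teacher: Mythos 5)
Your plan matches the paper's proof in its essential structure: jumps of the $g$-functions from the sheet structure combined with Lemma~\ref{lemma_constants_c_j}, real-part--vanishing from the trajectory condition and the projection properties \eqref{projection_property_sigma_1_three_cut}--\eqref{projection_property_sigma_0_three_cut}, \eqref{definition_L_0_projection}--\eqref{definition_L_2}, monotonicity (F) from $\Phi_{j\pm}(z)=\pm 2\pi i\mu_*(\gamma_z)$ and positivity of $\mu_*$, and the mixed identities by path deformation and bookkeeping of the jumps and the constants. The paper itself is terse on the mixed identities (``we skip the details''), so your extra care there is fine.

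Two points where your route differs from the paper's and one small gap are worth flagging. For (E), your strip-domain argument applies to $j=1,2$, since $\Sigma_j\setminus\Sigma_{*,j}$ is constrained by \eqref{projection_property_extension_sigma} and \eqref{projection_property_extension_sigma_2} to lie in $\pi(\mathcal S_3)$ and $\pi(\mathcal S_8)$, but it does not cover $j=0$: there $\Sigma_0\setminus\Sigma_{*,0}=(z_0,\hat z_0]$ sits on the real line, where it is the projection of a critical trajectory of $\varpi$ rather than an interior arc of a strip. The paper handles this case by the elementary ordering $\xi_1(x)<\xi_3(x)$ on $(z_0,\hat z_0)$ from \eqref{equalities_inequalities_real_line_xi_functions_precritical_2}, which gives $\Phi_0<0$ directly. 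You should add this (or an equivalent) as a separate case. Secondly, for (E), $j=1,2$, and for (I), the paper pins down the sign by the asymptotics \eqref{asymptotics_xi} as the trajectory approaches $\infty$ along the known critical angle, rather than by evaluating at an interior point or appealing to continuity from $t_1=0$; both are valid, but evaluating $\Phi_j$ at $\hat z_j$ is not immediate (the integrand vanishing at $\hat z_j$ tells you nothing directly about the sign of the integral), so the asymptotic route the paper takes is cleaner than that particular fallback. The ``continuity from $t_1=0$'' alternative does work since $\re\Phi_j$ is nonvanishing on the interior of the strip and depends continuously on the parameters, provided you first verify the sign explicitly at $t_1=0$.
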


\begin{proof}
To see that the conditions \ref{sigma_0}--\ref{real_part_phi_0} and \ref{sigma_1_2}--\ref{real_part_phi_1_2} are true, note that from the sheet structure constructed in 
Section~\ref{section_sheet_structure_1}, it follows that
$$
\Phi_{j\pm}(z)=\int_{z_j}^z(\xi_{1\pm}(s)-\xi_{1\mp}(s))ds,\quad z\in \Sigma_*.
$$
Since $\Sigma_*$ satisfies \eqref{s_property}, the right-hand side above has to be purely imaginary, leading to \ref{sigma_0}--\ref{real_part_phi_0} and 
\ref{sigma_1_2}--\ref{real_part_phi_1_2}.

Similarly, to get \ref{properties_L_0}--\ref{real_part_psi_0} and \ref{properties_L_1_2}--\ref{real_part_psi_1_2} we note that 
\eqref{definition_L_0_projection}--\eqref{definition_L_2} say that $L_j$ coincides with the projection of a trajectory on the first sheet of the quadratic differential 
\eqref{quadratic_differential}. From the definition of $\varpi$ in \eqref{definition_function_germ_Q}--\eqref{quadratic_differential} and of its trajectories 
\eqref{definition_trajectory_integral}, and taking also into account that $z_*\in L$, we thus get
$$
\re \int^z_{z_*} (\xi_{2+}(s)-\xi_{3+}(s))ds=0,\quad z\in L,
$$
which is enough to conclude \ref{properties_L_0}--\ref{real_part_psi_0} and \ref{properties_L_1_2}--\ref{real_part_psi_1_2}.

The remaining conditions claimed in \ref{sigma_0}--\ref{jump_relation_extension_sigma} and \ref{properties_L_0}--\ref{properties_L_1_2} follow in a straightforward manner, once 
one 
has in mind the sheet structure for the spectral curve $\mathcal R$ (see 
Sections~\ref{section_sheet_structure_1} and the beginning of Section~\ref{section_proof_s_property}), and also 
equations~\eqref{definition_constants_c_j_1}--\eqref{definition_constants_c_j_2} and \eqref{equality_variational_constants}. We skip the details for these computations.

Recalling that $\xi_1<\xi_3$ on $(z_0,\hat z_0)$, see \eqref{equalities_inequalities_real_line_xi_functions_precritical_2}, we get \ref{sign_phi_extension_sigma_star} for $j=0$. 

Note that the analytic continuation of the function $\Phi_2$ (that we keep denoting by $\Phi_2$) coincides (up to a multiplicative real factor) with the primitive $\Upsilon$ in 
\eqref{primitive_uniformization} on the strip domain $\mathcal U=\mathcal S_3$. In particular, $\Phi_2$ maps $\mathcal S_3$ to a vertical strip on $\C$ whose one of the boundary 
components is the imaginary axis, and consequently the sign of $\re\Phi_2$ is constant on $\mathcal S_3$. The trajectory $\gamma_0(z_2^{(3)})$ is contained in one of the 
components of $\partial \mathcal S_3$, is mapped by $\Phi_2$ to the imaginary axis and extends to $\infty$ along the angle $\theta_{0}^{(3)}=\pi/6$. Using the expansion 
\eqref{asymptotics_xi} it follows that
$$
\Phi_2(z)=\frac{z^3}{3t_0}+\Boh(z) = i\frac{|z|^3}{3t_0}+\Boh(z),\quad \mbox{ as } z\to\infty \mbox{ along } \gamma_0(z_2^{(3)}),
$$
thus $\gamma_0(z_2^{(3)})$ is mapped by $\Phi_2$ to $i\R_+$. Hence, because $\Phi_2$ is conformal, we conclude that the left-hand side of $\gamma_0(z_2^{(3)})$ in the 
orientation from $z_2^{(3)}$ to $\infty$ (that is, $\mathcal S_3$) is mapped by $\Phi_2$ to the left-hand side of $i\R_+$ in the natural orientation. Since the sign of $\re 
\Phi_2$ is constant on $\mathcal S_3$, this is enough to conclude that 
$\re\Phi_2(z)<0$ on $\mathcal S_3$. In virtue of \eqref{projection_property_extension_sigma_2}, we conclude \ref{sign_phi_extension_sigma_star} for $j=2$. Finally, 
\ref{sign_phi_extension_sigma_star} for $j=1$ follows in a similar fashion, or also noticing the symmetry relations $\overline{\Phi_2(z)}=\Phi_1(\overline z)$ and 
\eqref{symmetry_relations_extension_sigma_*}.

For \ref{monotonicity_phi}, denote by $\gamma_z$ the sub arc of $\Sigma_{*,j}$ from $z$ to $z_j$. It follows from \eqref{definition_Phi} that we can write
$$
\Phi_{j\pm}(z)=\frac{1}{t_0}\int_{z_j}^z(\xi_{1\pm}(s)-\xi_{1\mp}(s))ds=\pm 2\pi i\mu_*(\gamma_z).
$$
The measure $\mu_*$ is positive, so $\mu_*(\gamma_z)$ is decreasing along $\Sigma_{*,j}$, and \ref{monotonicity_phi} follows from the equation above.

We now proceed to prove \ref{monotonicity_psi_j}. We already know that
\begin{equation}\label{imaginary_part_psi_j_L_j}
\im\Psi_{j+}(z)=\Psi_{j+}(z),\quad z\in L_j,
\end{equation}
as it follows from \ref{properties_L_0}--\ref{real_part_psi_0} and \ref{properties_L_1_2}--\ref{real_part_psi_1_2}. The derivative of $\Psi_{j+}$ is, up to a sign, equal to 
$\xi_{2+}-\xi_{3+}$, which does not vanish along $L_j$. Combining with \eqref{imaginary_part_psi_j_L_j} we learn that $\im\Psi_{j+}$ is monotone along $L_j$. Taking into account 
that $L_j$ extends to $\infty$ with angle $\pi -2j\pi/3$ and using the asymptotics \eqref{asymptotics_xi}, we learn
\begin{align*}
\Psi_{j+}(z) & = \epsilon\frac{4}{3}z^{3/2}+\Boh(z^{1/2}) \\
	     & = -\frac{4i}{3}|z|^{3/2}+\Boh(z^{1/2}),\quad \mbox{ as } z\to \infty \mbox{ along } L_j,
\end{align*}
where $\epsilon=-1$ for $j=0,1$ and $\epsilon=1$ for $j=2$. In virtue of the previous comments, this is enough to conclude \ref{properties_L_0}--\ref{real_part_psi_0} and 
\ref{properties_L_1_2}--\ref{real_part_psi_1_2}.
\end{proof}

The jump matrix $J_T$ can then be rewritten as
$$
 J_T(z)=
 \begin{cases}
 \begin{pmatrix}
  e^{-n\Phi_{j+}(z)} & 1 & 0 \\
  0 & e^{-n\Phi_{j-}(z)} & 0 \\
  0 & 0 & 1
 \end{pmatrix},
 & z\in \Sigma_{*,j}, \ j=0,1,2, \\
 \begin{pmatrix}
  1 & e^{n\Phi_j(z)} & 0 \\
  0 & 1 & 0 \\
  0 & 0 & 1
 \end{pmatrix},
 & z\in \Sigma_j\setminus \Sigma_{*,j},\ j=0,1,2, \\
 \begin{pmatrix}
 1 & 0 & 0 \\
 0 & \omega^2 e^{-n(\Psi_{0+}(z)+\alpha_0)} & 1 \\
 0 & 0 & \omega e^{-n(\Psi_{0-}(z)-\alpha_0)}
 \end{pmatrix},
 & z\in L_0, \\
 \begin{pmatrix}
  1 & 0 & 0 \\
  0 & \omega^2 e^{-n(\Psi_{j}(z)-\alpha_j)} & 1 \\
  0 & 0 & \omega e^{n(\Psi_{j}(z)-\alpha_j)}
 \end{pmatrix},
 & z\in L_j, \quad j=1,2. \\
 \end{cases}
 $$
 where we set 
 $$
 \alpha_0=\frac{2c_2}{t_0},\quad \alpha_1=\alpha_2=\frac{c_2}{t_0}.
 $$
 
\begin{remark}
 Our second transformation $X\mapsto T$ should be compared with the sequence of transformations $X\mapsto V\mapsto U\mapsto T$ in \cite{bleher_kuijlaars_normal_matrix_model}.
\end{remark}
 
\subsection{Opening of lenses: $T\mapsto S$}

Based on the properties of the functions $\Phi_j$, $\Psi_j$, we now open lenses around the contours $\Sigma_*$, $L$. We denote by
$$
\mathscr{S}=\mathscr{S}^+\cup \mathscr{S}^-
$$
the (open) lens around the contour $\Sigma_{*}$, with the convention that $\mathscr{S}^+$ and $\mathscr{S}^-$ are the parts of $\mathscr{S}$ lying on the upper 
and lower sides of $\Sigma_{*}$, respectively. Furthermore, $\partial \mathscr{S}^{+}$ and $\partial\mathscr S^-$ denote the parts of the boundary of $\mathscr S$ lying on the 
upper and lower sides of $\Sigma_{*}$, respectively. In addition, we assume that $\partial\mathscr{S}$ intersects $\Sigma_{*,j}$ only at the endpoint $z_j$ 
and, moreover, $\partial\mathscr{S}^\pm$ is chosen so that it intersects the contour $L_{j}$, $j=0,1,2$, at a point other than $z_*$. We also set $\mathscr S_j^{\pm}$ and 
$\partial \mathscr S_j^\pm$ to be the parts of $\mathscr S^\pm$ and $\partial \mathscr{S}^\pm$ on the $\pm$-sides of $\Sigma_{*,j}$, $j=0,1,2$, respectively. We refer the reader 
to Figure~\ref{figure_opening_lenses} for a depiction of this lens.

We claim that the lens $\mathscr S$ can be chosen so that
\begin{equation}\label{re_phi_j_positive}
 \re\Phi_j(z)>0,\quad z\in \partial\mathscr{S}_j^\pm\setminus \{z_j\},\quad j=0,1,2.
\end{equation}
To see this, we use Proposition~\ref{proposition_properties_g_phi_psi} \ref{monotonicity_phi} and the Cauchy-Riemann equations to get that $\re \Phi_j$ is increasing in both 
normal directions to $\Sigma_{*,j}$. Taking into account that $\re\Phi_{j\pm}=0$ along $\Sigma_{*,j}$ (see Proposition~\ref{proposition_properties_g_phi_psi} 
\ref{sigma_0}--\ref{real_part_phi_0} and \ref{sigma_1_2}--\ref{real_part_phi_1_2}) and reducing $\mathscr S$ if necessary, \eqref{re_phi_j_positive} follows.

In the very same spirit, we construct the lens
$$
\mathscr{L}=\mathscr{L}^+\cup\mathscr{L}^-
$$
around $L$, where $\mathscr L^\pm$ denotes the part of $\mathscr L$ on the $\pm$-side of $L$, and as before we use $\partial \mathscr L^\pm$ to denote the part of the boundary of 
$\mathscr L$ that is on the $\pm$-side of $L$, and $\mathscr L_j^\pm$, $\partial \mathscr L_j^\pm$ to denote the respective parts of $\mathscr L^\pm$, $\partial \mathscr L^\pm$ 
on the $\pm$-sides of $L_j$, $j=0,1,2$. We additionally assume that $\partial \mathscr L_j^\pm$ does not intersect $L_j$ and for some $\varepsilon >0$ small,
$$
\arg z \to \pi -\frac{2\pi i}{3}j \pm \varepsilon, \quad \mbox{ as } z\to \infty \mbox{ along } \partial \mathscr L_j^\pm.
$$
In an analogous manner as in \eqref{re_phi_j_positive}, we claim that $\mathscr L$ can be chosen so that 
\begin{equation}\label{re_psi_j_positive}
\begin{aligned}
\re \Psi_{0}(z)>0, & \quad z\in \partial \mathscr L_0^\pm, \\
\pm\re \Psi_j(z)>0, & \quad z\in \partial\mathscr L_j^\pm, \ j=1,2.
\end{aligned}
\end{equation}
Indeed, similarly as before we combine Cauchy-Riemann equations and Proposition~\ref{proposition_properties_g_phi_psi} \ref{monotonicity_psi_j} to conclude that $\re \Psi_{j}$ 
is increasing in the direction normal to $L_j$ (pointing towards the positive side of $L_j$). Taking into account that $\re \Psi_{j+}=0$ along $L_j$ (see 
\ref{properties_L_0}--\ref{real_part_psi_0} and \ref{properties_L_1_2}--\ref{real_part_psi_1_2}), $\Psi_j$ is continuous along $L_j$ for $j=1,2$ (see 
\ref{properties_L_1_2}--\ref{jumps_psi_1_2}) and reducing $\mathscr L$ if necessary, this leads to the inequalities in \eqref{re_psi_j_positive} on $\partial\mathscr L_j^\pm$, 
$j=1,2$, and also on $\partial \mathscr L_0^+$. Finally, the inequality for $\partial \mathscr L_0^-$ then follows from the inequality on $\partial \mathscr L_0^+$ and the jump 
condition \ref{properties_L_0}--\ref{jumps_psi_0}.

The lips $\partial \mathscr S_j^\pm$, $\partial \mathscr L_j^\pm$ of the lenses are oriented outwards, that is, towards $\infty$, see Figure~\ref{figure_opening_lenses}.

\begin{figure}[t]
 \begin{overpic}[scale=1]
  {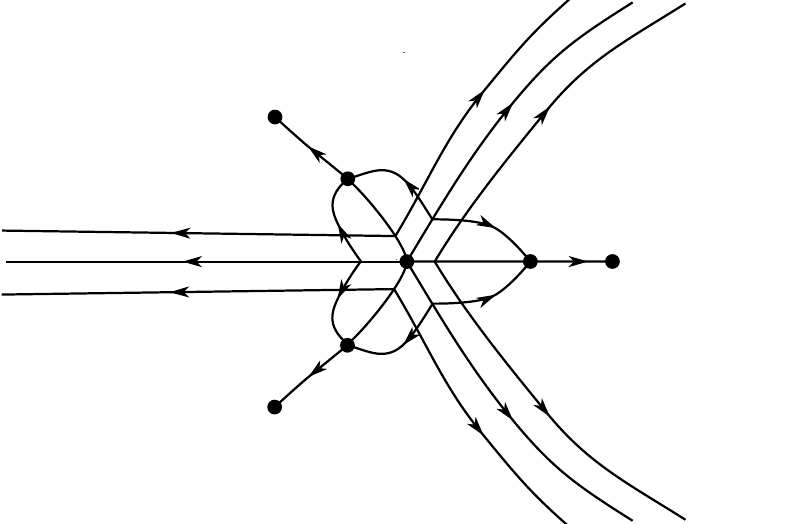}
  \put(80.5,0){$\scriptstyle \mathscr L_2^+$}
  \put(72,0){$\scriptstyle \mathscr L_2^-$}
  \put(2,34){$\scriptstyle \mathscr L_0^-$}
  \put(2,29.8){$\scriptstyle \mathscr L_0^+$}
  \put(79.5,64){$\scriptstyle \mathscr L_1^-$}
  \put(71,64){$\scriptstyle \mathscr L_1^+$}
  \put(45.5,42){$\scriptscriptstyle{\mathscr S_2^-}$}
  \put(42,37.5){$\scriptscriptstyle{ \mathscr S_2^+}$}
  \put(57.5,34.5){$\scriptscriptstyle{\mathscr S_0^+}$}
  \put(58,30){$\scriptscriptstyle{\mathscr S_0^-}$}
  \put(42,26){$\scriptscriptstyle{\mathscr S_1^-}$}
  \put(46,23){$\scriptscriptstyle{\mathscr S_1^+}$}
 \end{overpic}
 \caption{Lenses $\mathscr S$, $\mathscr L$ determining the contour $\Sigma_S$.}\label{figure_opening_lenses}
\end{figure}

We are finally ready to open lenses. Set 
\begin{equation}\label{transformation_S_1}
S(z)=T(z),\quad z\in \C\setminus (\mathscr S\cup \mathscr L),
\end{equation}
and
\begin{multline}\label{transformation_S_2}
S(z)=T(z)\times \\
\begin{cases}
 \begin{pmatrix}
  1 & 0 & 0 \\
  \mp e^{-n\Phi_j}(z) & 1 & 0 \\
  0 & 0 & 1 
 \end{pmatrix},
 & z\in \mathscr S_j^\pm\setminus \mathscr L, \\
   \begin{pmatrix}
  1 & 0 & 0 \\
  0 & 1 & 0 \\
  0 & \mp \omega^{\mp } e^{-n(\Psi_0(z)\pm\alpha_0)} & 1
 \end{pmatrix},
 & z\in \mathscr L_0^\pm\setminus \mathscr S, \\
  \begin{pmatrix}
  1 & 0 & 0 \\
  0 & 1 & 0 \\
  0 & \mp \omega^{\mp } e^{\mp n(\Psi_j(z)-\alpha_j)} & 1
 \end{pmatrix},
 & z\in \mathscr L_j^\pm\setminus \mathscr S, \ j\neq 0. \\
 \begin{pmatrix}
 1 & 0 & 0 \\
 \pm e^{-n\Phi_{\pm 1}(z)} & 1 & 0 \\
 \omega^\pm e^{-n(\Psi_0(z)+\Phi_{\pm 1}(z)\pm \alpha_0)} & \mp \omega^\mp e^{-n(\Psi_0(z)\pm \alpha_0 )} & 1 \\
\end{pmatrix},
& z\in \mathscr L_0^\pm \cap \mathscr S_{\pm 1}^\mp \\
 \begin{pmatrix}
  1 & 0 & 0 \\
  \pm e^{-n\Phi_{j\pm 1}(z)} & 1 & 0 \\
  \omega^\pm e^{\mp n(\Psi_j(z)\pm \Phi_{j\pm 1}(z)-\alpha_j)} & \mp \omega^\mp e^{\mp n (\Psi_j(z)-\alpha_j)} & 1 \\
 \end{pmatrix},
& z\in \mathscr L_j^\pm \cap \mathscr S_{j\pm 1}^\mp,\ j\neq 0 
\end{cases}
\end{multline}
where $\omega^+=\omega$, $\omega^-=\omega^{-1}=\omega^2$ and all indices are understood modulo $3$.

Denote
$$
\Gamma_S=\Sigma\cup L \cup \partial \mathscr S\cup \partial \mathscr L.
$$

The matrix $S$ satisfies the following RHP.

\begin{itemize}
 \item $S:\C\setminus \Gamma_S \to \C^{3\times 3}$ is analytic;
 
 \item $S_+(z)=S_-(z)J_S(z)$, $z\in \Gamma_S$, where the jump matrix $J_S$ is given by
 
 \begin{equation*}
 J_S(z)= 
 \begin{cases}
  \begin{pmatrix}
   0  & 1 & 0 \\
   -1 & 0 & 0 \\
   0  & 0 & 1
  \end{pmatrix},
  & z\in \Sigma_{*} \\
  \begin{pmatrix}
   1 &  0 & 0 \\
   0 &  0 & 1 \\
   0 & -1 & 0 
  \end{pmatrix},
  & z\in L \\
  \begin{pmatrix}
   1 & e^{n\Phi_j(z)} & 0 \\
   0 & 1  & 0 \\
   0 & 0 & 1
  \end{pmatrix},
  & z\in \Sigma_{j}\setminus \Sigma_{*,j},  \\
  \begin{pmatrix}
   1 & 0 & 0 \\
   e^{-n \Phi_j(z)} & 1 & 0 \\
   0 & 0 & 1
  \end{pmatrix},
  & z\in \partial\mathscr S_j\setminus \mathscr L, \\
   \begin{pmatrix}
   1 & 0 & 0 \\
   0 & 1 & 0 \\
   0 & \omega^\mp e^{-n(\Psi_0(z)\pm\alpha_0)}& 1
  \end{pmatrix},
  & z\in \partial \mathscr L_0^\pm\setminus\mathscr S,
  \\
    \begin{pmatrix}
   1 & 0 & 0 \\
   0 & 1 & 0 \\
   0 & \omega^\mp e^{\mp n(\Psi_j(z)+\alpha_j)}& 1
  \end{pmatrix},
  & z\in \partial \mathscr L_j^\pm\setminus\mathscr S, \ j\neq 0, \\
  \begin{pmatrix}
   1 & 0 & 0 \\
   e^{-n\Phi_{\pm 1}(z)} & 1 & 0 \\
   \mp e^{-n(\Psi_0(z)+\Phi_{\pm 1}\pm\alpha_0)} & 0 & 1 
  \end{pmatrix},
  & z\in \partial \mathscr{S}_{\pm 1}^{\mp}\cap \mathscr L \\
  \begin{pmatrix}
   1 & 0 & 0 \\
   e^{-n\Phi_{j\pm 1}(z)} & 1 & 0 \\
   \mp e^{\mp n(\Psi_j(z)\pm \Phi_{j\pm 1}(z)-\alpha_j)} & 0 & 1
  \end{pmatrix},
  & z\in \partial \mathscr{S}_{j\pm 1}^\mp \cap \mathscr L, \ j\neq 0, \\
  \end{cases}
  \end{equation*}
and
 \begin{multline*}
 J_S(z)= \\
 \begin{cases}
  \begin{pmatrix}
   1 & 0 & 0 \\
   0 & 1 & 0  \\
   \mp \omega^\pm e^{-n(\Psi_0(z)+\Phi_{\pm 1}(z)\pm \alpha_j)} & \omega^{\mp}e^{-n(\Psi_0(z)\pm\alpha_0)} & 1
  \end{pmatrix},
  & z\in \partial \mathscr{L}_0^\pm \cap \mathscr S \\
  \begin{pmatrix}
   1 & 0 & 0 \\
   0 & 1 & 0 \\
   \mp\omega^\pm e^{\mp n(\Psi_j(z)\pm \Phi_{j\pm 1}(z)-\alpha_j)} & \omega^\mp e^{\mp n(\Psi_j(z)-\alpha_j)} & 1
  \end{pmatrix},
  & z\in \partial \mathscr{L}_j^\pm \cap \mathscr S,\ j\neq 0,
 \end{cases}
 \end{multline*}
 
 \item $S$ has the same endpoint behavior as $X$ when $z\to z_*,\hat z_j$, $j=0,1,2$,
 
 \item $S(z)=(I+\Boh(z^{-1}))A(z)$, as $z\to \infty$.
\end{itemize}

The jump conditions above can be verified directly from the definition of $S$, once one has in hands 
Lemma~\ref{lemma_constants_c_j} and Proposition~\ref{proposition_properties_g_phi_psi}. The remaining conditions on the RHP above follow directly from the RHP for $T$. We skip the 
details.

\subsection{The global parametrix}\label{section_global_parametrix}

As we will see in a moment, the jump matrix $J_S$ converges to the identity matrix on $\Gamma_S\setminus(\Sigma_*\cup L)$. Hence, neglecting the jumps on
$\Gamma_S\setminus(\Sigma_*\cup L)$, we are led to the Riemann-Hilbert problem for $M$, commonly called the {\it global parametrix}.

\begin{itemize}
 \item $M:\C\setminus(\Sigma_*\cup L)\to \C^{3\times 3}$ is analytic;
 
 \item $M_+(z)=M_-(z)J_M(z)$, $z\in \Sigma_*\cup L$, where
 \begin{equation}\label{jump_matrix_global_parametrix}
 J_M(z)=
 \begin{cases}
  \begin{pmatrix}
   0 & 1 & 0 \\
   -1 & 0 & 0 \\
   0 & 0 & 1
  \end{pmatrix},
  & z\in \Sigma_*, \\
   \begin{pmatrix}
   1 & 0 & 0 \\
   0 & 0 & 1 \\
   0 & -1 & 0
  \end{pmatrix},
  & z\in L;
 \end{cases}
 \end{equation}

 \item $M(z)=\Boh((z- z_j)^{-1/4})$ as $z\to z_j$, $j=0,1,2$;
 
 \item $M$ remains bounded as $z\to z_*$.

 \item $M(z)=(I+\Boh(z^{-1}))A(z)$ as $z\to\infty$;
\end{itemize}

We postpone the construction of the parametrix to Section \ref{section_construction_global_parametrix}.

\subsection{The local parametrices}\label{section_local_parametrix}

Denote by
$$
D_\delta(z_j)=\{z\in\C \mid |z-z_j|<\delta \},\quad j=0,1,2,
$$
the disk of radius $\delta>0$ around $z_j$ and set 
\begin{equation}\label{D_neighborhood}
D_\delta=\bigcup_{j=0}^2 D_\delta(z_j).
\end{equation}

For $\delta>0$ sufficiently small, we search for a matrix $P$, called the {\it local parametrix}, solution to the following RHP.

\begin{itemize}
 \item $P:D_\delta\setminus \Gamma_S\to \C^{3\times 3}$ is analytic;
 
 \item $P_+(z)=P_-(z)J_S(z)$, $z\in  \Gamma_S\cap D_\delta$;
 
 \item $P(z)=(I+\Boh(n^{-1}))M(z)$, as $n\to \infty$ uniformly for $z\in\partial D_\delta$, where $M$ is the global parametrix constructed in 
Section~\ref{section_global_parametrix};
\end{itemize}

Note that the non trivial jumps for $P$ only come on the upper left $2\times 2$ corner of $J_S$, so this is essentially a $2\times 2$ RHP.

As we are in the three-cut case $(t_0,t_1)\in \mathcal F_1$, the function $\Phi_j$ has order of vanishing $3/2$ at $z_j$, that is, 
$$
\Phi_j(z)=\const \times (z-z_j)^\frac{3}{2}(1+\Boh(z-z_j)^{1/2}), \quad \mbox{ as } z\to z_j,\quad j=0,1,2,
$$
and the local parametrix can be constructed out of Airy functions, see for instance \cite{deift_book}. We skip this construction here.

\subsection{Final transformation: $S\mapsto R$}\label{section_transformation_S_R}

We arrived at the final step of our analysis. For $D_\delta$ as in \eqref{D_neighborhood} and $M$ and $P$ the global and local parametrices considered in 
Sections~\ref{section_global_parametrix} and \ref{section_local_parametrix}, respectively, we make the final transformation
\begin{equation}\label{transformation_R}
R(z)=
\begin{cases}
 S(z)M(z)^{-1}, & z\in \C \setminus (\Gamma_S\cup D_\delta),\\
 S(z)P(z)^{-1}, & z\in D_\delta \setminus \Gamma_S.
\end{cases}
\end{equation}

Since the jumps of $M$ and $P$ coincide with the jumps of $S$ on $\Sigma_{*}\cup L$ and $\Gamma_S \cap D_\delta$, respectively, it follows that $R$ satisfies a RHP on the contour
$$
\Gamma_R=\partial D_\delta \cup \left(\Gamma_S\setminus (D_\delta\cup L\cup \Sigma_*) \right),
$$
where each piece of $\partial D_\delta$ is oriented in the clockwise direction, see Figure~\ref{figure_contour_r}. More precisely,
\begin{itemize}
 \item $R:\C\setminus \Gamma_R\to \C^{3\times 3}$ is analytic;
 
 \item $R_+(z)=R_-(z)J_R(z)$, $z\in \Gamma_R$, where
 $$
 J_R(z)=
 \begin{cases}
 M(z)J_S(z)M(z)^{-1}, & z\in \Gamma_R\setminus \partial D_\delta, \\
 P(z)M(z)^{-1}, & z\in \partial D_\delta.
 \end{cases}
 $$
 
 \item $R(z)=I+\Boh(z^{-1}), \quad z\to \infty$.
\end{itemize}

\begin{figure}[t]
 \includegraphics[scale=1]{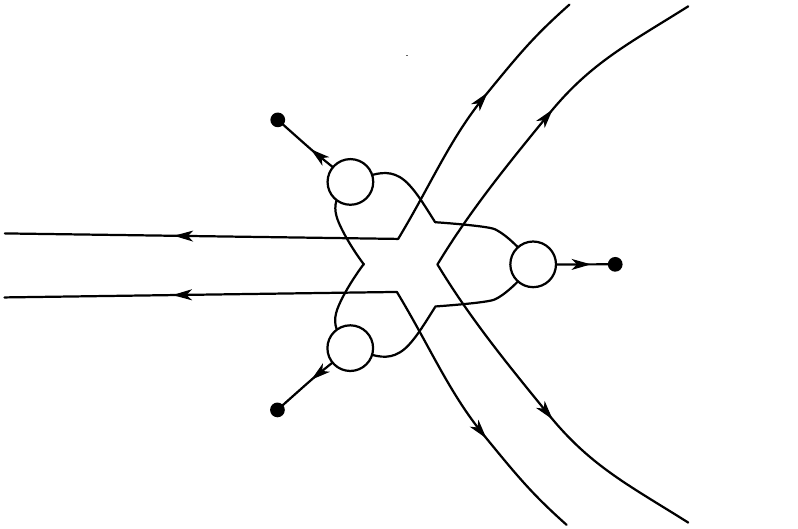}
 \caption{Contour $\Sigma_R$ for the jump of the RHP for $R$.}\label{figure_contour_r}
\end{figure}

It follows from \eqref{re_phi_j_positive}, \eqref{re_psi_j_positive} and the definition of the jump $J_S$ that for some positive constant $c$,
$$
J_R(z)=I+\Boh(e^{-nc}),\quad z\in \Gamma_R\setminus\partial D_\delta,
$$
whereas from the RHP for $P$ it follows that
$$
J_R(z)=I+\Boh(n^{-1}), \quad z\in \partial D_\delta,
$$
where the implicit terms in the last two formulas above are uniform in $z$. As a consequence \cite{deift_book}, we conclude that for $n$ large enough the RHP for $R$ is uniquely 
solvable and
\begin{equation}\label{estimate_R}
R(z)=I+\Boh\left( \frac{1}{n(1+|z|)} \right),\quad n\to\infty,
\end{equation}
uniformly on $\C\setminus \Gamma_R$.

Thus we can invert the series of transformations performed
$$
Y \mapsto X \mapsto T \mapsto S \mapsto R
$$
to conclude that the RHP for $Y$ is uniquely solvable and, moreover, translate \eqref{estimate_R} into asymptotic information about $Y$ (see 
Section~\ref{section_proof_theorem_limiting_counting_measure} below for an example) concluding the steepest descent analysis.

\section{Riemann-Hilbert analysis in the one-cut case}\label{section_riemann_hilbert_analysis_postcritical}

We proceed to the Riemann-Hilbert/Steepest Descent analysis in the one-cut case $(t_0,t_1)\in\mathcal F_2$. We do not give much details, and mostly highlight the main 
differences comparing to the three-cut case carried out in Section~\ref{section_riemann_hilbert_analysis_precritical}. The focus is on the jumps and 
parametrices, the remaining aspects of the steepest descent analysis are the same as in the three-cut case.

Following Theorem~\ref{theorem_limiting_support_zeros}, for $(t_0,t_1)\in \mathcal F_2$ we denote $\Sigma_*=[z_1,z_0]$. The first step is to define the contours $\Sigma$ and $L$ 
in the same spirit as \eqref{projection_property_extension_sigma} and \eqref{definition_L_2}. As before, these are defined taking into account the critical graph of the quadratic 
differential $\varpi$. The parts of $\Sigma$ and $L$ lying on the real line are defined by 
$$
\Sigma_0=[z_2,\hat z_0],\quad L_0=(-\infty,z_2]=\pi\left( \gamma_1(z_2^{(1)})\right).
$$
To construct $\Sigma_1$, $\Sigma_2$, $L_1$ and $L_2$, we rely on the critical graph of $\varpi$. For $\pi:\mathcal R\to\overline \C$ the canonical projection and $\mathcal S_3$ 
the strip domain determined by the condition that $\hat z_2^{(3)}$ and $z_2^{(3)}$ are the critical points on its boundary (see Figure~\ref{figure_planar_graph_2}), we consider an 
oriented contour $\Sigma_2$ from $z_2$ to $\hat z_2$, contained in the upper half plane, and satisfying
\begin{equation}\label{projection_property_extension_sigma_postcritical}
\Sigma_2\setminus\{z_2,\hat z_2\}\subset \pi\left( \mathcal S_3\cap \mathcal R_3 \right),
\end{equation}
and set $\Sigma_1=(\Sigma_2)^*$, $\Sigma=\Sigma_0\cup\Sigma_1\cup\Sigma_2$. Furthermore, define
\begin{equation}\label{definition_L_2_postcritical}
L_1=\pi\left( \gamma_0(z_2^{(1)})\right),\quad L_2=(L_1)^*=\pi\left( \gamma_2(z_2^{(1)}) \right),
\end{equation}
and then set $L=L_0\cup L_1\cup L_2$, see Figure~\ref{figure_projection_property_postcritical}. Choosing $\Sigma_1$ and $\Sigma_2$ appropriately, we can also be sure that 
$L\cap \Sigma=\{z_2\}$.

\begin{figure}[t]
\centering
 \begin{overpic}[scale=1]
 {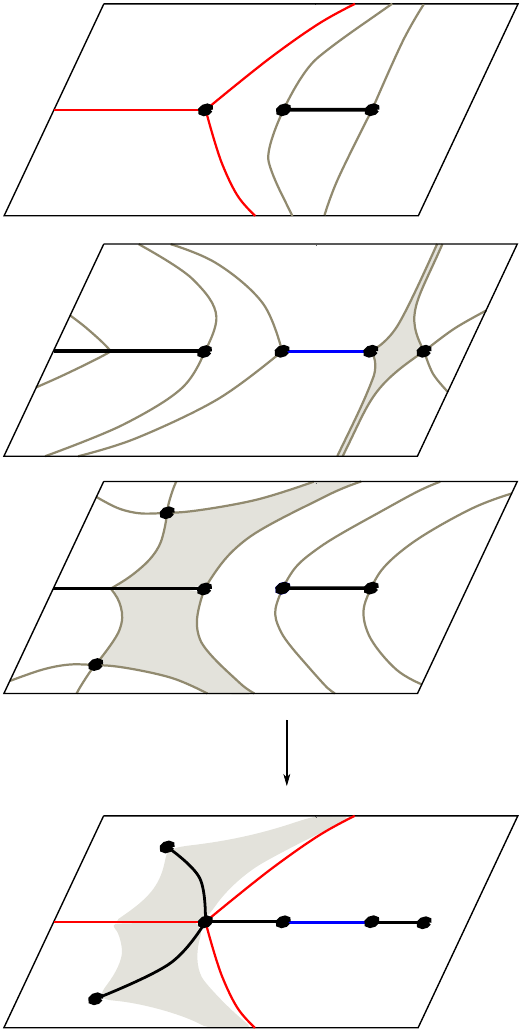}
 \put(50,10){$\C$}
 \put(29,26){$\pi$}
 \put(50,43){$\mathcal R_3$}
 \put(50,66){$\mathcal R_2$}
 \put(50,89){$\mathcal R_1$}
 \put(30,11.5){$\scriptstyle{\Sigma_{*}}$}
 \put(30,16.5){$\scriptstyle{L}$}
 \end{overpic}
 \caption{Illustration of properties \eqref{projection_property_extension_sigma_postcritical} and \eqref{definition_L_2_postcritical}. The trajectories highlighted on $\mathcal R$ 
in blue and red are projected onto $\C$ to the contours $\Sigma_*$, $L$, also respectively represented in blue and red. $\Sigma_*$ is the interval in blue color, whereas $L$ 
consists of the pieces projected from $\mathcal R_1$. In addition to the pieces on the real line, we extend $\Sigma_*$ to $\Sigma$ by constraining $\Sigma\setminus 
(\Sigma_*\cup\R)$ to lie within the shaded region on $\C$, which consists of the projections of the gray strip domains on $\mathcal R$. The arcs of $\Sigma\setminus \Sigma_*$ 
on the complex plane are depicted in black.}\label{figure_projection_property_postcritical}
\end{figure}

For this choice of $\Sigma$, we consider the diagonal sequence of multiple orthogonal polynomials $(P_{n,n})$ in Definition~\ref{definition_mop}. As before, such 
polynomials can be alternatively characterized by \eqref{multiple_orthogonality_conditions_airy}. Furthermore, assuming $n$ even as before, $P_{n,n}$ is alternatively described by 
the Riemann-Hilbert problem $Y$ given in Section~\ref{section_rhp_Y}.

As in the three-cut situation, the contour $L$ defined above splits the complex plane into three regions $G_0,G_1,G_2$, where $G_j$ contains the point $\hat z_j$, $j=0,1,2$. 
The first transformation $Y\mapsto X$ is exactly the same as in Section~\ref{section_transformation_Y_X}, see Figure~\ref{figure_contour_x_postcritical} for a display of the jump 
contours.

\begin{figure}[t]
 \begin{overpic}[scale=1]
  {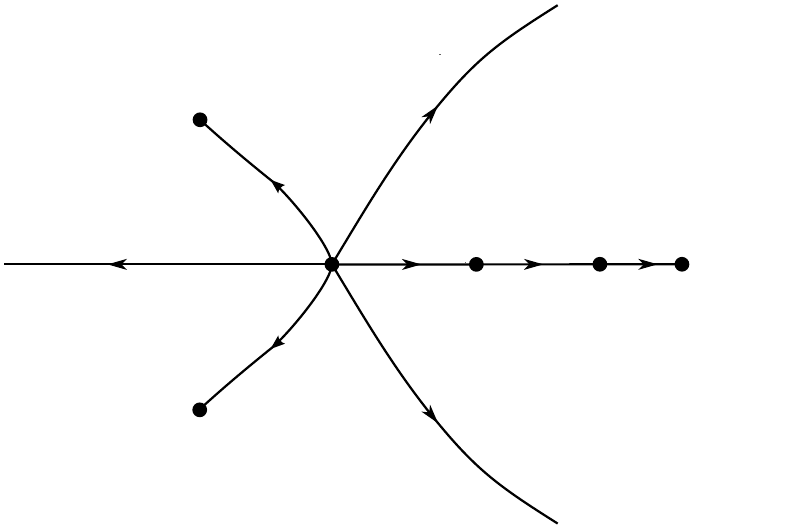}
  \put(87,32){$\hat z_0$}
  \put(25,11){$\hat z_1$}
  \put(25,52.5){$\hat z_2$}
  \put(74,35){$z_0$}
  \put(59,35){$z_1$}
  \put(43,34){$z_2$}
  \put(90,50){$G_0$}
  \put(10,60){$G_2$}
  \put(10,0){$G_1$}
  \put(10,34.5){$L_0$}
  \put(60,10){$L_2$}
  \put(65,58.5){$L_1$}
 \end{overpic}
 \caption{Contour $\Sigma\cup L$ for the RHP's for $X$ and $T$, and the sectors $G_0$, $G_1$, $G_2$.}\label{figure_contour_x_postcritical}
\end{figure}

The second transformation $X\mapsto T$ is also similar as for the three-cut case. The only difference is concerned the starting points of integration in the definition of 
the $g$-functions.

More precisely, we define
\begin{equation}\label{g_functions_postcritical}
\begin{aligned}
g_1(z) & = \int_{z_0}^z \xi_1(s)ds+c_1 , & \quad z\in \C\setminus (-\infty,z_0),  \\
g_2(z) & = \int_{z_2}^z \xi_2(s)ds+c_2,  & \quad z\in \C\setminus (-\infty,z_2), \\
g_3(z) & = \int_{z_0}^z \xi_3(s)ds+c_3,  & \quad  z\in \C\setminus (-\infty,z_0),
\end{aligned}
\end{equation}
where
$$
c_1=c_3=\int_{z_2}^{z_0}\xi_{3-}(s)ds,\quad c_2=-\pi i t_0.
$$

As in \eqref{asymptotics_g_functions}, $g_1,g_2$ and $g_3$ admit the asymptotic expansion
\begin{align*}
g_1(z) & = V(z)+l_1+t_0 \log z + \Boh(z^{-1}),\nonumber \\
g_2(z) & = -\frac{2}{3}(z-t_1)^{3/2}+l_2-\frac{t_0}{2}\log z +\Boh(z^{-1/2}), \qquad z\to \infty \\
g_3(z) & = \frac{2}{3}(z-t_1)^{3/2}+l_3-\frac{t_0}{2}\log z +\Boh(z^{-1/2}), \nonumber
\end{align*}
and the proof of Lemma~\ref{lemma_constants_c_j} carries over without any essential modification, leading to $l_2=l_3$ and $c_2=i\im c_3$.

For the $g$-functions as in \eqref{g_functions_postcritical}, we make the transformation $X\mapsto T$ as in \eqref{transformation_T}. The resulting RH-problem characterizing 
$T$ is similar to the one presented in Section~\ref{section_transformation_X_T}. The jump contour is given by $\Sigma\cup L$, see Figure~\ref{figure_contour_x_postcritical} and, 
after simplifications, its jump matrix $J_T$ reduces to
\begin{multline*}
J_T(z)= \\
\begin{cases}
\begin{pmatrix}
 e^{\frac{n}{t_0}(g_{1-}(z)-g_{1+}(z))} & 1 & 0 \\
 0 & e^{\frac{n}{t_0}(g_{3-}(z)-g_{3+}(z))} & 0 \\
 0 & 0 & 1
\end{pmatrix},
& z\in (z_1,z_0), \\
\begin{pmatrix}
1 & e^{\frac{n}{t_0}(g_{1-}(z)-g_{3+}(z))} & 0 \\
0 & 1 & 0 \\
0 & 0 & 1
\end{pmatrix},
& z\in (z_2,z_1)\cup (z_0,\hat z_0), \\
\begin{pmatrix}
 1 & e^{\frac{n}{t_0}(g_{1}(z)-g_{2}(z))} & 0 \\
 0 & 1 & 0 \\
 0 & 0 & 1
\end{pmatrix},
& z\in \Sigma_1\cup\Sigma_2, \\
\begin{pmatrix}
 1 & 0 & 0 \\
 0 & \omega^2 e^{\frac{n}{t_0}(g_{2-}(z)-g_{2+}(z))} & 1 \\
 0 & 0 & \omega e^{\frac{n}{t_0}(g_{3-}(z)-g_{3+}(z))}
\end{pmatrix},
& z\in L_0, \\
\begin{pmatrix}
 1 & 0 & 0 \\
 0 & \omega^2 e^{\frac{n}{t_0}(g_{3}(z)-g_{2}(z))} & 1 \\
 0 & 0 & \omega e^{\frac{n}{t_0}(g_{2}(z)-g_{3}(z))}
\end{pmatrix},
& z\in L_1, \\
\begin{pmatrix}
 1 & 0 & 0 \\
 0 & \omega^2 e^{\frac{n}{t_0}(g_{2}(z)-g_{3}(z))} & 1 \\
 0 & 0 & \omega e^{\frac{n}{t_0}(g_{3}(z)-g_{2}(z))}
\end{pmatrix},
& z\in L_2.
\end{cases}
\end{multline*}

Analogously to \eqref{definition_Phi}, \eqref{definition_Psi}, we now consider
\begin{equation}\label{definition_Phi_postcritical}
\begin{aligned}
\Phi_0(z) & =\frac{1}{t_0} \int_{z_0}^z (\xi_1(s)-\xi_3(s))ds, &\quad z\in \C\setminus(-\infty,z_0), \\
\Phi_1(z) & =\frac{1}{t_0} \int_{z_1}^z (\xi_1(s)-\xi_3(s))ds, &\quad z\in \C\setminus((-\infty,z_2)\cup (z_1,+\infty)),\\
\Phi_2(z) & =\frac{1}{t_0} \int_{z_2}^z (\xi_1(s)-\xi_2(s))ds+\Phi_1(z_2), &\quad z\in \C\setminus((-\infty,z_2)\cup (z_1,+\infty)),
\end{aligned}
\end{equation}
and also
$$
\Psi(z) = \frac{1}{t_0}\int_{z_2}^z(\xi_2(s)-\xi_3(s))ds,\quad z\in \C\setminus((-\infty,z_2)\cup (z_1,+\infty)).
$$

The jump matrix $J_T$ is then expressed in terms of these functions as
$$
J_T(z)= 
\begin{cases}
\begin{pmatrix}
 e^{-n\Phi_{0+}(z)} & 1 & 0 \\
 0 & e^{-n\Phi_{0-}(z)} & 0 \\
 0 & 0 & 1
\end{pmatrix},
& z\in (z_1,z_0), \\
\begin{pmatrix}
1 & e^{n\Phi_{0}(z)} & 0 \\
0 & 1 & 0 \\
0 & 0 & 1
\end{pmatrix},
& z\in (z_0,\hat z_0), \\
\begin{pmatrix}
1 & e^{n\Phi_{1}(z)} & 0 \\
0 & 1 & 0 \\
0 & 0 & 1
\end{pmatrix},
& z\in (z_2,z_1), \\
\begin{pmatrix}
 1 & e^{n\Phi_{2}(z)}  & 0 \\
 0 & 1 & 0 \\
 0 & 0 & 1
\end{pmatrix},
& z\in \Sigma_1\cup\Sigma_2, \\
\begin{pmatrix}
 1 & 0 & 0 \\
 0 & \omega^2 e^{-n\Psi_+(z)}  & 1 \\
 0 & 0 & \omega e^{-\Psi_-(z)}
\end{pmatrix},
& z\in L_0, \\
\begin{pmatrix}
 1 & 0 & 0 \\
 0 & \omega^2 e^{-n\Psi(z)} & 1 \\
 0 & 0 & \omega e^{n\Psi(z)}
\end{pmatrix},
& z\in L_1, \\
\begin{pmatrix}
 1 & 0 & 0 \\
 0 & \omega^2 e^{n\Psi(z)} & 1 \\
 0 & 0 & \omega e^{-n\Psi(z)}
\end{pmatrix},
& z\in L_2.
\end{cases}
$$

The jump matrices above are in a suitable form for the opening of lenses. We open the lens $\mathscr S$ around $(z_1,z_0)$ and denote by $\mathscr S^{\pm}$ 
the part of $\mathscr S$ on the $\pm$-side of $(z_1,z_0)$, and by $\partial \mathscr S^\pm$ the component of the boundary of $\partial \mathscr S$ on the $\pm$-side of 
$(z_1,z_0)$. Similarly, $\mathscr L_j^\pm$ denotes the part of $\mathscr L$ on the $\pm$-side of $L_j$, and $\partial L_j^\pm$ denotes the component of the boundary of $\mathscr 
L$ on the $\pm$-side of $L_j$. Additionally, we open the lens $\mathscr L$ in such a way that it does not intersect $\Sigma$, see Figure~\ref{figure_opening_lenses_postcritical}.

The functions $\Phi_0$ and $\Phi_1$ satisfy
\begin{equation}\label{decaying_conditions_postcritical_1}
\Phi_0(z)<0,\quad z\in (z_0,\hat z_0],\qquad \Phi_1(z)<0, \quad z\in [z_2,z_1).
\end{equation}

Moreover, due to the construction of $\Sigma_1,\Sigma_2$ and $L$ as in equations~\eqref{projection_property_extension_sigma_postcritical}--\eqref{definition_L_2_postcritical}, we 
can be sure that (after reducing the lenses if necessary)
\begin{equation}\label{decaying_conditions_postcritical_2}
\begin{aligned}
& \re \Phi_0(z)>0,\quad z\in \partial \mathscr S^\pm \setminus \{z_0,z_1\}, \\
& \re \Phi_2(z)<0,\quad z\in \Sigma_1\cup\Sigma_2, \\
& \re \Psi(z)>0, \quad z\in \partial \mathscr L_0^\pm\setminus\{z_2\}, \\
& \pm \re \Psi(z)>0,\quad z\in \partial \mathscr L_1^\pm\setminus\{z_2\}, \\
& \mp \re \Psi(z)>0,\quad z\in \partial \mathscr L_2^\pm\setminus\{z_2\}.
\end{aligned}
\end{equation}
These conditions will assure the jumps for the next transformation have the right decaying properties. We stress that due to the constant $\Phi_1(z_2)$ in the definition of 
$\Phi_2$ in \eqref{definition_Phi_postcritical}, the strict inequality $\re\Phi_2<0$ also holds true at the endpoint $z_2$ of $\Sigma_1$ and $\Sigma_2$.

We then set 
$$
S(z)=T(z),\quad z\mbox{ outside the lenses } \mathscr S\cup \mathscr L,
$$
and on the lenses
$$
S(z)=T(z)\times
\begin{cases}
 \begin{pmatrix}
  1 & 0 & 0 \\
  \mp e^{-n\Phi_0(z)} & 1 & 0 \\
  0 & 0 & 1
 \end{pmatrix},
 & z\in \mathscr S^\pm, \\
 \begin{pmatrix}
  1 & 0 & 0 \\
  0 & 1 & 0 \\
  0 & \mp \omega^\mp e^{-n\Psi(z)} & 1 \\
 \end{pmatrix},
 & z\in \mathscr L_0^\pm \\
 \begin{pmatrix}
  1 & 0 & 0 \\
  0 & 1 & 0 \\
  0 & \mp \omega^\mp e^{\mp n\Psi(z)} & 1 \\
 \end{pmatrix},
 & z\in \mathscr L_1^\pm,\\
  \begin{pmatrix}
  1 & 0 & 0 \\
  0 & 1 & 0 \\
  0 & \mp \omega^\mp e^{\pm n\Psi(z)} & 1 \\
 \end{pmatrix},
 & z\in \mathscr L_2^\pm.
\end{cases}
$$

\begin{figure}[t]
 \begin{overpic}[scale=1]
{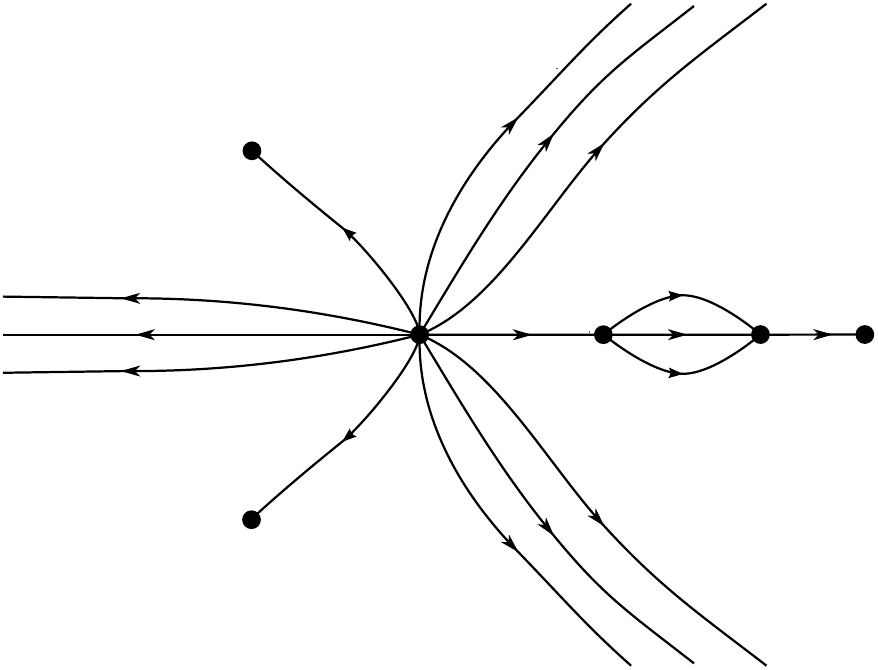}
\put(76,44){$\scriptstyle \partial\mathscr S^+$}
\put(76,31){$\scriptstyle \partial\mathscr S^-$}
\put(10,44){$\scriptstyle \partial\mathscr L_0^-$}
\put(10,31){$\scriptstyle \partial\mathscr L_0^+$}
\put(52,10){$\scriptstyle \partial\mathscr L_2^-$}
\put(74,12){$\scriptstyle \partial\mathscr L_2^+$}
\put(54.5,65){$\scriptstyle \partial\mathscr L_1^+$}
\put(77,65){$\scriptstyle \partial\mathscr L_1^-$}
 \end{overpic}
 \caption{The boundary components of the lenses $\mathscr S$, $\mathscr L$ determining the contour $\Gamma_S$ for $S$.}\label{figure_opening_lenses_postcritical}
\end{figure}

Then $S$ satisfies a Riemann-Hilbert problem on the contour $\Gamma_S$ shown in Figure~\ref{figure_opening_lenses_postcritical}. The jump matrix $J_S$ coincides with $J_T$ 
outside the lenses, and on the remaining parts of $\Gamma_S$ it is given by
$$
J_S(z)=
\begin{cases}
\begin{pmatrix}
 0 & 1 & 0 \\
 -1 & 0 & 0 \\
 0 & 0 & 1
\end{pmatrix},
& z\in (z_1,z_0) \\
\begin{pmatrix}
 1 & 0 & 0 \\
 0 & 0 & 1 \\
 0 & -1 & 0
\end{pmatrix},
& z\in L \\
\begin{pmatrix}
 1 & 0 & 0 \\
 e^{-n\Phi_0(z)} & 1 & 0\\
 0 & 0 & 1
\end{pmatrix},
& z\in \partial \mathscr S^{\pm},\\
\begin{pmatrix}
 1 & 0 & 0 \\
 0 & 1 & 0 \\
 0 & \omega^\mp e^{-n\Psi(z)} & 1 
\end{pmatrix},
& z\in \partial \mathscr L_0^\pm \\
\begin{pmatrix}
 1 & 0 & 0 \\
 0 & 1 & 0 \\
 0 & \omega^\mp e^{\mp n\Psi(z)} & 1
\end{pmatrix},
& z\in \partial \mathscr L_1^\pm \\
\begin{pmatrix}
 1 & 0 & 0 \\
 0 & 1 & 0 \\
 0 & \omega^\mp e^{\pm n\Psi(z)} & 1
\end{pmatrix},
& z\in \partial \mathscr L_2^\pm
\end{cases}
$$

The next step is the construction of the parametrices. In virtue of \eqref{decaying_conditions_postcritical_1}--\eqref{decaying_conditions_postcritical_2}, the jump matrix 
$J_S$ is exponentially small on the lipses of the lenses as well as in $\Sigma\setminus[z_1,z_0]$, as long as we stay away from the endpoints $z_0,z_1,z_2$. Near $z_2$, the jumps 
for $S$ on $\Sigma$ are still exponentially small, so for the local parametrix near $z_2$ we only have to take into account the jumps coming from $L$ and $\partial \mathscr L$.

More concretely, 

\begin{enumerate}
 \item[$\bullet$] The RHP for the global parametrix $M$ is essentially the same as in Section~\ref{section_global_parametrix}, having in mind that $\Sigma_*=(z_1,z_0)$. We refer 
to Section \ref{global_parametrix_one_cut} for details.
 
 \item[$\bullet$] The local parametrices near $z_0,z_1$ are constructed out of Airy functions in exactly the same way as in Section~\ref{section_local_parametrix}. A little more 
care should be taken for the parametrix near $z_2$. As we already observed, the jumps for $S$ on $\Sigma$ near $z_2$ are exponentially small, so we neglect them for the 
construction of the local parametrix near $z_2$. Hence the jump condition on $D_\delta(z_2)$ becomes
$$
P_+(z)=P_-(z)J_S(z),\quad z\in  D_\delta(z_2)\cap (L\cup \partial \mathscr L),
$$
see Figure~\ref{figure_local_parametrix_postcritical} for the jump contours of $P$ near $z_2$. The remaining RHP is 
essentially $2\times 2$. Although there are nine rays emanating from $z_2$ instead of the usual four rays, this parametrix is still constructed out of Airy functions, see for 
instance \cite{kuijlaars_tovbis_supercritical_normal_matrix_model}.
\end{enumerate}

\begin{figure}[t]
\centering
 \begin{overpic}[scale=1]
  {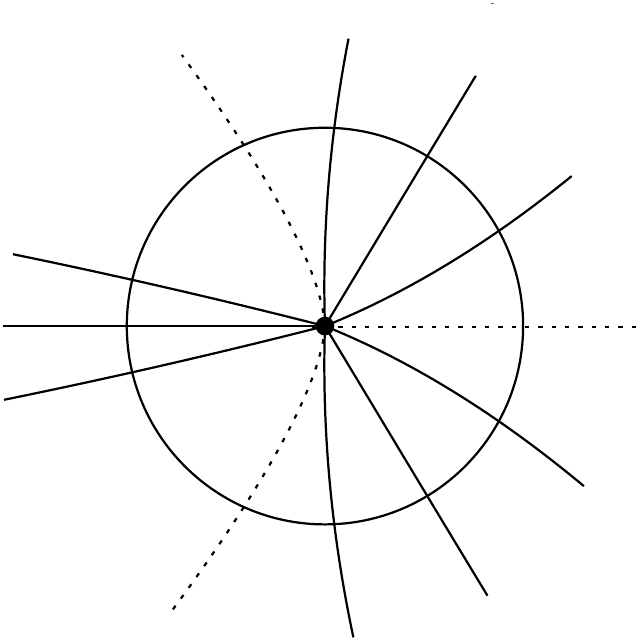}
  \put(90,51){$[z_2,z_1]$}
  \put(30,90){$\Sigma_2$}
  \put(30,5){$\Sigma_1$}
  \put(55,93){$\partial\mathscr L_1^+$}
  \put(90,70){$\partial\mathscr L_1^-$}
  \put(90,25){$\partial\mathscr L_2^+$}
  \put(56,0){$\partial\mathscr L_2^-$}
  \put(0,33){$\partial\mathscr L_0^+$}
  \put(0,62){$\partial\mathscr L_0^-$}
  \put(0,51){$L_0$}
  \put(75,85){$L_1$}
  \put(75,10){$L_2$}
  \put(45,45){$\scriptstyle z_2$}
 \end{overpic}
 \caption{Blow up of the jump contours for $S$ near $z_2$. The solid lines represent $L$ and $\partial\mathscr L$, so these are contours for the jumps of the local parametrix $P$. 
The dashed lines represent $\Sigma$, and since $J_S$ is exponentially small in these contours, they are not taken into account for the construction of $P$ near $z_2$, so $P$ is 
analytic across these contours.}\label{figure_local_parametrix_postcritical}
\end{figure}

The final transformation $S\mapsto R$ is similar as in Section~\ref{section_transformation_S_R}, equation \eqref{transformation_R}. The contour $\Sigma_R$ for $R$ is displayed in 
Figure~\ref{figure_contour_r_postcritical}. 

\begin{figure}[t]
\centering
 \begin{overpic}[scale=1]
  {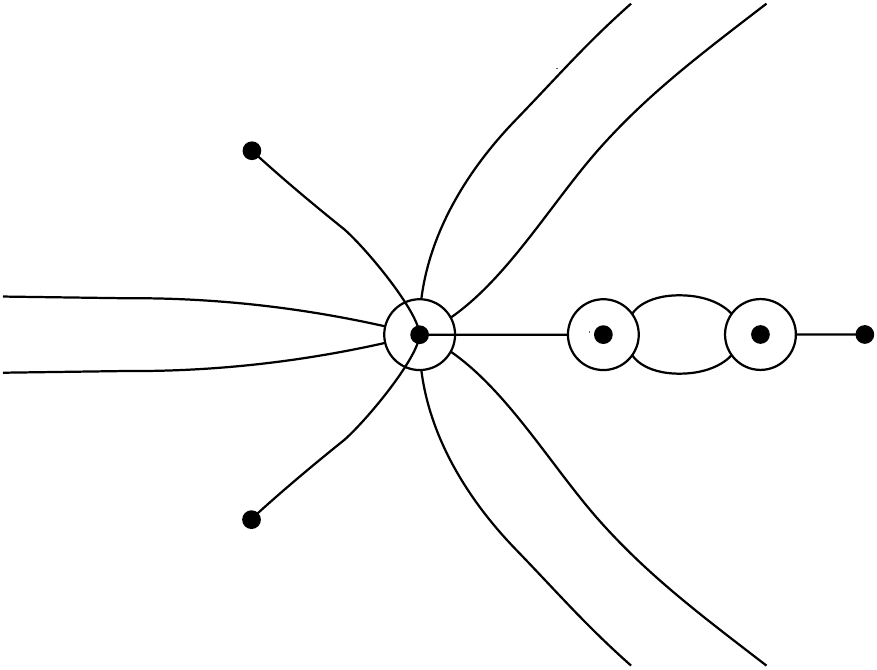}
\put(76,44){$\scriptstyle \partial\mathscr S^+$}
\put(76,31){$\scriptstyle \partial\mathscr S^-$}
\put(10,44){$\scriptstyle \partial\mathscr L_0^-$}
\put(10,31){$\scriptstyle \partial\mathscr L_0^+$}
\put(52,10){$\scriptstyle \partial\mathscr L_2^-$}
\put(74,12){$\scriptstyle \partial\mathscr L_2^+$}
\put(54.5,65){$\scriptstyle \partial\mathscr L_1^+$}
\put(77,65){$\scriptstyle \partial\mathscr L_1^-$}
\put(35,54){$\scriptstyle \Sigma_2$}
\put(35,26){$\scriptstyle \Sigma_1$}
\put(48,39.5){$\scriptstyle z_2$}
\put(68,39.5){$\scriptstyle z_1$}
\put(86,39.5){$\scriptstyle z_0$}
 \end{overpic}
 \caption{Contours for the jumps of $R$.}\label{figure_contour_r_postcritical}
\end{figure}

As in the three-cut case, it turns out that the jump matrix $J_R$ is close to the identity as $n\to\infty$: on the lipses of the lenses $\mathscr{S}$ and $\mathscr{L}$, this is 
true because of \eqref{decaying_conditions_postcritical_1}--\eqref{decaying_conditions_postcritical_2}, whereas on the boundary of $D_\delta$, this is true from the 
construction of the local parametrix. We only have to be careful about the jumps inside $D_\delta(z_2)$ that are not canceled by the local parametrix, which are given by 
$$
J_R(z)=P(z)J_S(z)P(z)^{-1}=P(z)J_T(z)P(z)^{-1},\quad z\in\Sigma\cap D_\delta(z_2).
$$
Since $P$ is bounded near $z_2$, the second inequality in \eqref{decaying_conditions_postcritical_2} together with the identity above assure us that 
$J_R$ is exponentially small for $z\in\Sigma\cap D_\delta(z_2)$. 

As the final outcome, we get that the jump matrix $J_R$ satisfies
$$
J_R(z)=I+\Boh(n^{-1}),\quad n\to\infty,
$$
uniformly in $J_R$, and the analysis is concluded in a similar fashion as in Section~\ref{section_transformation_S_R}.


\section{Construction of the global parametrix}\label{section_construction_global_parametrix}

In this section we prove the existence of the global parametrix in the three-cut and one-cut cases. We also construct its first row explicitly.

It is convenient to perform a regluing of the sheets forming the Riemann surface $\mathcal R$, in much the same spirit as used for $t_1=0$ in Section 
\ref{subsection_technical_computations_precritical}.

To do so, recall the definition of the contours $L_1$ and $L_2$ given in \eqref{definition_L_2} and \eqref{definition_L_2_postcritical} in the three-cut and one-cut cases, 
respectively, which defined the sector $G_0$ containing the point $z_0$, as shown in Figures \ref{figure_contour_x} and \ref{figure_contour_x_postcritical}.

We construct a new Riemann surface
$$
\widetilde{\mathcal R}= \widetilde{\mathcal R}_1\cup \widetilde{\mathcal R}_2\cup \widetilde{\mathcal R}_3,
$$
obtained from the original surface $\mathcal R$ after interchanging the sectors $\pi^{-1}(G_0)\cap \mathcal R_2$ and $\pi^{-1}(G_0)\cap \mathcal R_3$. Thus the sheets 
$\widetilde{\mathcal R}_1$ and $\widetilde{\mathcal R}_2$ are connected crosswise along $\Sigma_*$ and the sheets $\widetilde{\mathcal R}_2$ and $\widetilde{\mathcal R}_3$ are 
connected crosswise along $L$. In the three-cut case, the branch points of $\widetilde R$ are 
$$
z_j^{(1)}=z_{j}^{(2)},\ j=0,1,2,\qquad \infty^{(2)}=\infty^{(3)},
$$
whereas in the one-cut case the branch points are 
$$
z_j^{(1)}=z_j^{(2)},\ j=0,1, \qquad z_2^{(2)}=z_2^{(3)},\qquad \infty^{(2)}=\infty^{(3)}.
$$

We also denote
\begin{equation}\label{partition_reglued_surface}
\widetilde{\mathcal R}_{j,k}=\pi^{-1}(G_k)\cap \mathcal R_j,\quad j=1,2, \; k=0,1,2,
\end{equation}
and refer to Figures \ref{figure_regluing_three_cut} and \ref{figure_regluing_one_cut} for a depiction of the regluing and the sets \eqref{partition_reglued_surface} in the 
three-cut and one-cut cases, respectively.

\begin{figure}[t]
 \centering
 \begin{overpic}[scale=1]
 {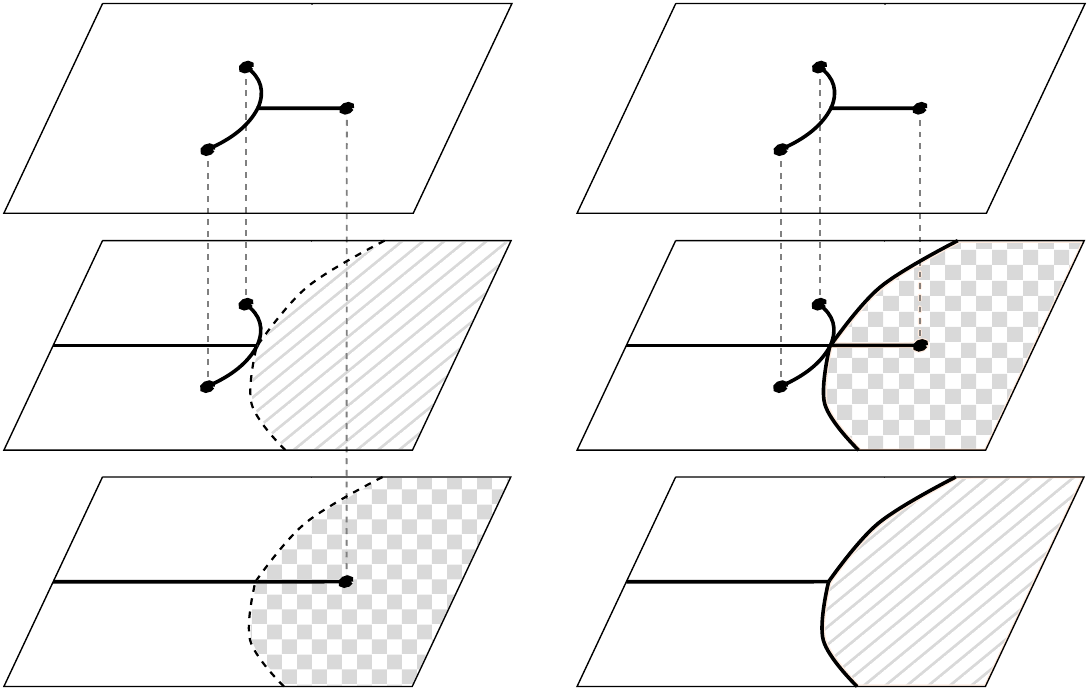}
 \put(0,55){$\mathcal R_1$}
 \put(0,34){$\mathcal R_2$}
 \put(0,13){$\mathcal R_3$}
 \put(100,55){$\widetilde{\mathcal R}_1$}
 \put(100,34){$\widetilde{\mathcal R}_2$}
 \put(100,13){$\widetilde{\mathcal R}_3$}
 \put(82,25){$\widetilde{\mathcal R}_{2,0}$}
 \put(64,36){$\widetilde{\mathcal R}_{2,2}$}
 \put(58,24){$\widetilde{\mathcal R}_{2,1}$}
 \put(82,4){$\widetilde{\mathcal R}_{3,0}$}
 \put(64,15){$\widetilde{\mathcal R}_{3,2}$}
 \put(58,3){$\widetilde{\mathcal R}_{3,1}$}
 \end{overpic}
 \caption{The sheet structure for $\mathcal R$ (left panel) and $\mathcal{\widetilde R}$ (right panel) in the three-cut case. The dashed lines on $\mathcal 
R_2$ and $\mathcal R_3$ are the (preimages through $\pi$ of) the curves $L_1$ and $L_2$. They bound the shaded areas, which are 
interchanged between the sheets to create the new sheets $\mathcal{\widetilde R}_2$ and $\mathcal{\widetilde R}_3$. On the right panel we also distinguish the set 
$\mathcal{\widetilde R}_{j,k}$'s defined in \eqref{partition_reglued_surface}. }\label{figure_regluing_three_cut}
\end{figure}

\begin{figure}[t]
 \centering
 \begin{overpic}[scale=1]
 {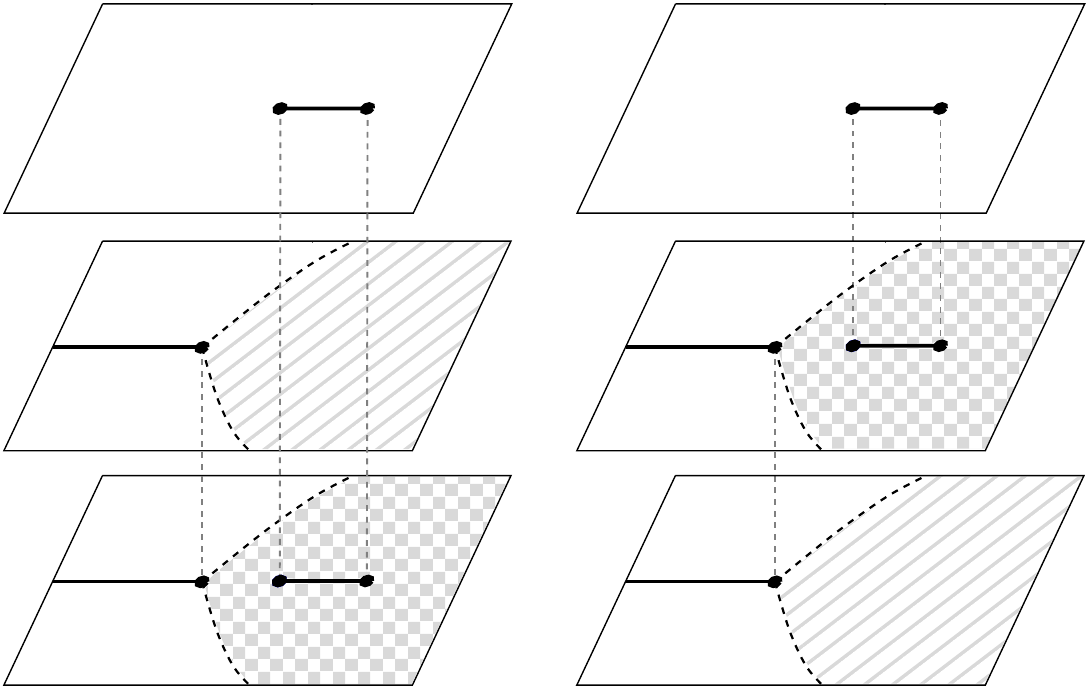}
 \put(0,55){$\mathcal R_1$}
 \put(0,34){$\mathcal R_2$}
 \put(0,13){$\mathcal R_3$}
 \put(100,55){$\widetilde{\mathcal R}_1$}
 \put(100,34){$\widetilde{\mathcal R}_2$}
 \put(100,13){$\widetilde{\mathcal R}_3$}
 \put(82,25){$\widetilde{\mathcal R}_{2,0}$}
 \put(64,36){$\widetilde{\mathcal R}_{2,2}$}
 \put(58,24){$\widetilde{\mathcal R}_{2,1}$}
 \put(82,4){$\widetilde{\mathcal R}_{3,0}$}
 \put(64,15){$\widetilde{\mathcal R}_{3,2}$}
 \put(58,3){$\widetilde{\mathcal R}_{3,1}$}
 \end{overpic}
 \caption{The sheet structure for $\mathcal R$ (left panel) and $\mathcal{\widetilde R}$ (right panel) in the one-cut case. The dashed lines on $\mathcal 
R_2$ and $\mathcal R_3$ are the (preimages through $\pi$ of) the curves $L_1$ and $L_2$. They bound the shaded areas, which are 
interchanged between the sheets to create the new sheets $\mathcal{\widetilde R}_2$ and $\mathcal{\widetilde R}_3$. On the right panel we also distinguish the set 
$\mathcal{\widetilde R}_{j,k}$'s defined in \eqref{partition_reglued_surface}. }\label{figure_regluing_one_cut}
\end{figure}

It is also convenient to denote by $\Sigma_{*,+}^{(k)}$ and $\Sigma_{*,-}^{(k)}$ the positive and negative sides of the cut $\Sigma_*$ on the sheet $\widetilde{\mathcal R}_k$. 
Ditto for the other quantities $\Sigma_{*,j,\pm}^{(k)}$, $L_{\pm}^{(k)}$ and $L_{j,\pm}^{(k)}$. In particular, note that
$$
\Sigma_{*,\pm}^{(1)}=\Sigma_{*,\mp}^{(2)},\quad L_{\pm}^{(2)}=L_{\mp}^{(3)}.
$$
We orient each arc of $\Sigma_{*}^{(k)}$ and $L^{(k)}$ according to the orientation induced from their projection $\Sigma_{*}$ and $L$. Thus, for instance, the 
positive side of $\Sigma_{*,+}^{(1)}$ lies on the sheet $\widetilde{\mathcal R}_1$, whereas the negative side of $\Sigma_{*,+}^{(1)}$ lies on the sheet $\widetilde{\mathcal R}_2$.

\subsection{The inverse of the rational parametrization}

According to Theorem \ref{theorem_schwarz_function}, the rational function $h$ induces the bijection \eqref{rational_function_bijection} between $\overline \C$ and $\mathcal R$, 
and consequently between $\overline \C$ and $\widetilde{\mathcal R}$. This means that there exist three meromorphic functions
\begin{equation}\label{inverse_rational_parametrization_psi1}
\psi_j:\widetilde{\mathcal R}_j\to \overline \C,\quad j=1,2,3,
\end{equation}
for which 
\begin{equation}\label{inverse_rational_parametrization}
\psi:\widetilde{\mathcal R}\to \overline \C,\quad \restr{\psi}{\widetilde{\mathcal R}_j}=\psi_j,\; j=1,2,3,
\end{equation}
is the inverse of $h$. 

Set 
\begin{equation}\label{definition_w_partition}
\begin{aligned}
& \mathcal W_j = \psi(\widetilde{\mathcal R}_j), \quad  j=1,2,3,\\
& \mathcal W_{j,k} = \psi(\widetilde{\mathcal R}_{j,k}), \quad j=2,3,\; k=0,1,2,
\end{aligned}
\end{equation}
and also
\begin{equation}\label{definition_jumps_w_plane}
\begin{aligned}
 & \Xi=\psi(\Sigma_{*,+}^{(1)}),   \\
 & \Lambda_j=\psi(L^{(2)}_{j,+}),   \quad j=0,1,2, \\ 
 & \Lambda=\psi(L_+^{(2)})=\Lambda_0\cup\Lambda_1\cup\Lambda_2.
\end{aligned}
\end{equation}
In the three-cut case, we also define
\begin{equation}\label{definition_jumps_w_plane_three_cut}
\Xi_j=\psi(\Sigma_{*,j,+}^{(1)}), \; j=0,1,2,
\end{equation}
so that $\Xi=\Xi_1\cup\Xi_2\cup\Xi_3$.

Using basic properties of conformal maps, the sets \eqref{definition_w_partition}--\eqref{definition_jumps_w_plane_three_cut} can be described in the $w$-plane. The outcome for 
\eqref{definition_w_partition} can be seen in Figure \ref{figure_mother_body_w_plane_reglued}. The sets 
\eqref{definition_jumps_w_plane}--\eqref{definition_jumps_w_plane_three_cut} are displayed in Figures \ref{figure_mother_body_w_plane_jumps} and 
\ref{figure_mother_body_w_plane_jumps_one_cut} in the three-cut and one cut-cases, respectively.

\begin{figure}[t]
\begin{minipage}[c]{0.5\textwidth}
\centering
  \begin{overpic}[scale=1]
  {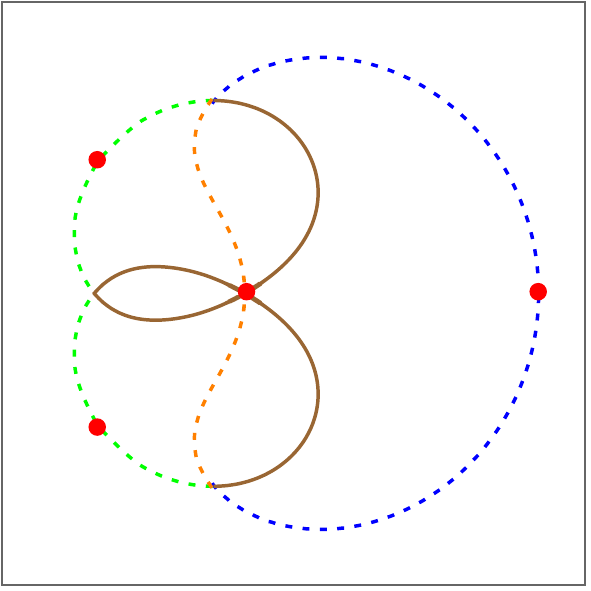}
  \put(85,85){$\scriptstyle \mathcal W_1$}
  \put(65,50){$\scriptstyle \mathcal W_{2,0}$}
  \put(20,63){$\scriptstyle \mathcal W_{2,2}$}
  \put(20,35){$\scriptstyle \mathcal W_{2,1}$}
  \put(20,49){$\scriptstyle \mathcal W_{3,0}$}
  \put(40,70){$\scriptstyle \mathcal W_{3,1}$}
  \put(40,29){$\scriptstyle \mathcal W_{3,2}$}
 \end{overpic}
\end{minipage}%
\begin{minipage}[c]{0.5\textwidth}
\centering
\begin{overpic}[scale=1]
{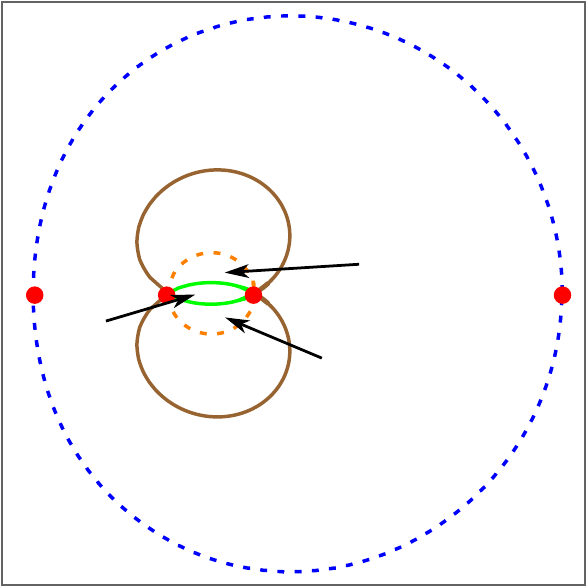}
\put(85,85){$\scriptstyle \mathcal W_1$}
\put(65,70){$\scriptstyle \mathcal W_{2,0}$}
\put(62,54){$\scriptstyle \mathcal W_{2,2}$}
\put(56,37){$\scriptstyle \mathcal W_{2,1}$}
\put(12,42){$\scriptstyle \mathcal W_{3,0}$}
\put(33,62){$\scriptstyle \mathcal W_{3,1}$}
\put(33,34){$\scriptstyle \mathcal W_{3,2}$}
 \end{overpic}
 \end{minipage}
 \caption{The partition of the $w$-plane into the sets $\mathcal W_{j,k}$'s in the three-cut (left panel) and one-cut (right panel) cases. The dashed lines display the inverse 
images of the cuts for the original Riemann surface $\mathcal R$ (compare with Figure \ref{w_partition}), and the solid lines display the new cuts arising after the regluing that 
defines $\mathcal{\widetilde R}$. Numerical output for the choices $r=1/20$ and $a_0=1/10$ (three-cut) and $a_0=1/4$ (one-cut).}\label{figure_mother_body_w_plane_reglued}
 \end{figure}

\begin{remark}
The function $\psi_1$ in \eqref{inverse_rational_parametrization_psi1} is analytic in $\C\setminus \Sigma_*$ and maps $\C\setminus \Omega$ conformally to $\C\setminus \D$. Since 
the point $\infty^{(1)}$ on $\mathcal R_1$ corresponds to the point $\infty$ on the $w$-plane through $\psi$, we automatically get that
$$
\lim_{z\to\infty}\psi_1(z)=\infty.
$$
Furthermore, using the Implicit Function Theorem,
$$
\lim_{z\to\infty} \psi_1'(z)=\lim_{w\to\infty} \frac{1}{h'(w)}=\frac{1}{r}.
$$
This shows that $\psi_1$ in \eqref{inverse_rational_parametrization_psi1} coincides with the function $\psi_1$ appearing in Theorem \ref{thm_wave_factor}.
\end{remark}

\subsection{Construction of the global parametrix in the three-cut case}\label{section_global_parametrix_three_cut}

In \cite{bleher_kuijlaars_normal_matrix_model}, the global parametrix is constructed for $t_1=0$ using meromorphic differentials. In this section we reproduce their arguments 
to construct the parametrix in the general three-cut case.

On the Riemann surface $\widetilde{\mathcal R}$, consider the meromorphic differential $\eta$, defined by the condition that it has simple poles at each of the branch 
points $z_0^{(1)},z_1^{(1)},z_2^{(1)}$ and $\infty^{(2)}$, with residues
\begin{equation}\label{residues_condition}
\res(\eta,z_j^{(1)})=-\frac{1}{2},\quad \re(\eta,\infty^{(2)})=\frac{3}{2},
\end{equation}
and no other poles. Since the sum of residues is zero, such an $\eta$ exists. It is also unique, because the genus of $\widetilde R$ is zero.

The set $\Sigma^{(1)}_{*,+}\cup L^{(2)}_+$ is connected and consists of a finite union of analytic arcs. Its image through the inverse $\psi$ in 
\eqref{inverse_rational_parametrization} is the set $\Xi\cup\Lambda$, which can be geometrically described with standard arguments in conformal mapping, and is displayed in Figure 
\ref{figure_mother_body_w_plane_jumps}. Since $\overline \C\setminus (\Xi\cup \Lambda)$ and  $\widetilde{\mathcal R}\setminus (\Sigma^{(1)}_{*,+}\cup L^{(2)}_+)$ are conformally 
equivalent, it readily follows from Figure \ref{figure_mother_body_w_plane_jumps} that the domain $\widetilde{\mathcal R}\setminus (\Sigma^{(1)}_{*,+}\cup 
L^{(2)}_{+})\subset \widetilde{\mathcal R}$ is simply connected and does not contain poles of $\eta$. In particular, this implies that the  
function
$$
u(p)=\int_{\infty^{(1)}}^p \eta, \quad p\in\widetilde{\mathcal R}\setminus (\Sigma^{(1)}_{*,+}\cup L^{(2)}_{+}),
$$
where the integration goes along any path that does not cross $\Sigma^{(1)}_{*,+}\cup L^{(2)}_{+}$, is well defined and analytic. 

\begin{figure}[t]
\centering
  \begin{overpic}[scale=1]
  {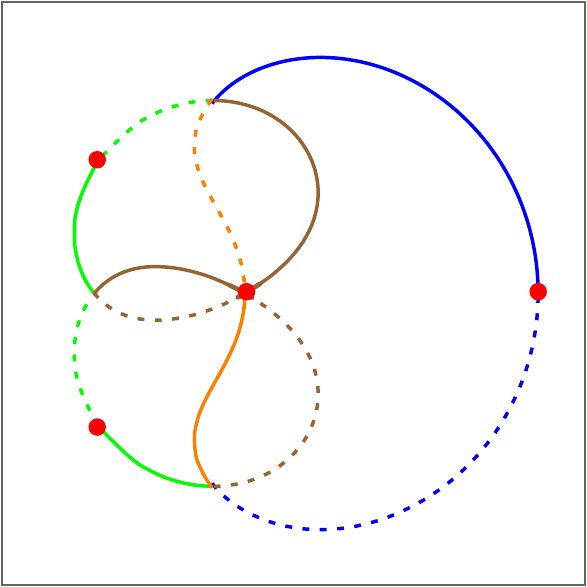}
  \put(80,80){$ \Xi_0$}
  \put(5,63){$\Xi_2$}
  \put(20,15){$\Xi_1$}
  \put(24,56){$\Lambda_1$}
  \put(55,70){$\Lambda_2$}
  \put(29,34){$\Lambda_0$}
 \end{overpic}
 \caption{In solid lines the sets $\Lambda_j$ and $\Xi_j$, $j=0,1,2$, are displayed; these are the image through $\psi$ of the jumps for $m$. In dashed lines, the remaining images 
through $\psi$ of the cuts of $\widetilde{\mathcal R}$ are shown. The solid lines also correspond to the jump contours for $f$ in the three-cut case. Numerical output for $r=1/20$ 
and $a_0=1/10$.}\label{figure_mother_body_w_plane_jumps}
 \end{figure}

From Figure \ref{figure_mother_body_w_plane_jumps} and conformal equivalence, it follows that for a given $p\in \Sigma_{*,+}^{(1)}\cup 
L_+^{(2)}$, we can write the difference $u_+(p)-u_-(p)$ as an integral over a closed contour going around exactly one of the branch points 
$z_j^{(1)}$. Consequently we learn from \eqref{residues_condition} and the Residues Theorem that
\begin{equation}\label{half_period_conditions}
u_+(p)-u_-(p)=\pm\pi i,\quad p\in \Sigma^{(1)}_{*,+}\cup L^{(2)}_{+}.
\end{equation}
We thus define 
$$
m(p)=e^{u(p)},\quad p\in \widetilde{\mathcal R}\setminus \Sigma^{(1)}_{*,+}\cup L^{(2)}_{+}.
$$
Note that \eqref{half_period_conditions} implies that 
\begin{equation}\label{jump_conditions_m_three_cut}
m_+(p)=-m_-(p),\quad p\in \Sigma^{(1)}_{*,+}\cup L^{(2)}_{+}.
\end{equation}
Set
$$
m_k=\restr{m}{\widetilde{\mathcal R}_k},\quad k=1,2,3.
$$
The functions $m_1$, $m_2$ and $m_3$ are analytic in $\C\setminus\Sigma_*$, $\C\setminus(\Sigma_*\cup L)$ and $\C\setminus L$, respectively. Combining with the
condition \eqref{jump_conditions_m_three_cut} we immediately get that
\begin{equation}\label{jumps_first_row}
(m_{1+}(z),m_{2+}(z),m_{3+}(z))=(m_{1-}(z),m_{2-}(z),m_{3-}(z))J_M(z),\quad z\in \Sigma_*\cup L.
\end{equation}
In addition, \eqref{residues_condition} gives
\begin{equation}\label{local_asymptotics_mk_1}
m_k(z)=\Boh\left((z-z_j)^{-1/4}\right),\quad z\to z_j,\quad k=1,2,
\end{equation}
and also
\begin{equation}\label{local_asymptotics_mk_2}
m_k(z)=\Boh(z^{-3/4}),\quad z\in \infty, \quad k=2,3.
\end{equation}
Furthermore, since $u(\infty^{(1)})=0$, we also have 
\begin{equation}\label{local_asymptotics_mk_3}
m_1(z)=1+\Boh(z^{-1}),\quad z\to\infty.
\end{equation}

In summary, \eqref{jumps_first_row}--\eqref{local_asymptotics_mk_3} tell us that the row vector $(m_1,m_2,m_3)$ satisfies the 
conditions for the first row of $M$. 

To construct the remaining rows of $M$, consider a basis $f_1\equiv 1,f_2,f_3 $ of the vector space of functions analytic on $\mathcal{\widetilde R}\setminus\{\infty^{(2)}\}$, 
with at most a double pole at $\infty^{(2)}$. Denote $f_{j,k}=\restr{f_j}{\widetilde{\mathcal R}_k}$ and consider the auxiliary matrix
\begin{equation}\label{definition_prefactor_B}
B=
\begin{pmatrix}
 m_1 		& m_2 		& m_3 \\
 m_1 f_{2,1}	& m_2 f_{2,2}	& m_3 f_{2,3} \\
 m_1 f_{3,1}	& m_2 f_{3,2}	& m_3 f_{3,3}
\end{pmatrix}.
\end{equation}
Using \eqref{jumps_first_row}, we learn
\begin{equation}\label{jumps_matrix_B}
B_+(z)=B_-(z)J_M(z), \quad z\in \Sigma_*\cup L.
\end{equation}
Furthermore, it follows from the analyticity of the $f_{j,k}$'s near finite points, and also the local behavior \eqref{local_asymptotics_mk_1}--\eqref{local_asymptotics_mk_3}, 
that $B$ satisfies the endpoint conditions for $M$.

Because $\det J_M=1$, we also learn from \eqref{jumps_matrix_B} that $\det B$ is entire. Furthermore, a simple analysis of its entries shows that 
$B(z)=\Boh(z^{1/4})$ as $z\to \infty$. Since the functions $f_1,f_2$ and $f_3$ are linearly independent, this is enough to show that $\det B$ is equal to a non-zero constant. In 
particular, $B$ is always invertible.

By inspection one can see that the function $A$ in \eqref{definition_asymptotic_matrix_A} satisfies $A_+=A_-J_M$ on $L$, and using \eqref{jumps_matrix_B} we thus conclude that 
$B A^{-1}$ is analytic on $\C\setminus \Sigma_*$. As $A(z)=\Boh(z^{1/4})$ and $B(z)=\Boh(z^{1/4})$, we see that $BA^{-1}$ is bounded near $\infty$, and thus admits a series 
expansion of the form
$$
(BA^{-1})(z)=C + \Boh(z^{-1}),\quad z\to \infty,
$$
for some constant matrix $C$, which is non-singular because $\det A,\det B \neq 0$. We already observed that $B$ satisfies \eqref{jumps_matrix_B} and also the endpoint 
conditions for $M$. It thus finally follows that
$$
M(z)=C^{-1}B(z),\quad z\in \C\setminus (\Sigma_*\cup L)
$$
is the desired global parametrix.

\subsection{Construction of the global parametrix in the one-cut case}\label{global_parametrix_one_cut}

The Riemann-Hilbert problem for the global parametrix in the one-cut case assumes the following form.

\begin{itemize}
 \item $M:\C\setminus(\Sigma_*\cup L)\to \C^{3\times 3}$ is analytic;
 
 \item $M_+(z)=M_-(z)J_M(z)$, $z\in \Sigma_*\cup L$, where $J_M$ is defined as in \eqref{jump_matrix_global_parametrix}.
 
 \item $M(z)=\Boh((z-z_j)^{-1/4})$ as $z\to z_j$, $j=0,1,2$.
 
 \item $M(z)=(I+\Boh(z^{-1}))A(z)$, as $z\to \infty$.
\end{itemize}

Following the ideas carried out in Section \ref{section_global_parametrix_three_cut}, we start the construction of $M$ from its first row. 

As in Section \ref{section_global_parametrix_three_cut}, there exists a meromorphic differential $\eta$ on $\widetilde{\mathcal R}$ uniquely defined through the conditions that it 
has simple poles at the branch points $z_0^{(1)}$, $z_1^{(1)}$, $z_2^{(2)}$ and $\infty^{(2)}$, with residues
\begin{equation}\label{residue_conditions_one_cut}
\res(\eta, z_0^{(1)})=\res(\eta, z_1^{(1)})=\res(\eta, z_2^{(2)})=-\frac{1}{2},\quad \res(\eta, \infty^{(2)})=\frac{3}{2},
\end{equation}
and no other poles.

We then consider the function
$$
u(p)=\int_{\infty^{(1)}}^p \eta,\quad p\in \widetilde{\mathcal R}\setminus (\Sigma_{*,+}^{(1)}\cup L_{0,+}^{(2)}).
$$
The image of $\Sigma_{*,+}^{(1)}\cup L_{0,+}^{(2)}$ through $\psi$ is the set $\Xi\cup \Lambda_0$, which is shown in the left panel of Figure 
\ref{figure_mother_body_w_plane_jumps_one_cut}. From this figure and conformal equivalence, it easily follows that the set $\widetilde{\mathcal R}\setminus (\Sigma_{*,+}^{(1)}\cup 
L_{0,+}^{(2)})$ is not simply connected, and thus $u(p)$ depends on the path of integration chosen. However, in virtue of \eqref{residue_conditions_one_cut}, it follows after a 
residue calculation that the value $u(p)$ is well defined modulo $2\pi i$. Having this in mind, it also holds true
\begin{equation}\label{jumps_u_one_cut}
u_+(p)-u_-(p)=\pi i \mod 2\pi i,\quad p\in \Sigma_{*,+}^{(1)}\cup L_{0,+}^{(2)}.
\end{equation}
We then define
$$
\widetilde m(p)=e^{u(p)},\quad p\in \widetilde{\mathcal R}\setminus \Sigma_{*,+}^{(1)}\cup L_{0,+}^{(2)},
$$
and also
$$
m(p)=
\begin{cases}
 -\widetilde m(p), & p\in \mathcal W_{2,2}\cup \mathcal W_{3,1} \\
 \widetilde m(p), & \mbox{elsewhere}.
\end{cases}
$$

\begin{figure}[t]
\begin{minipage}[c]{0.5\textwidth}
\centering
  \begin{overpic}[scale=1]
  {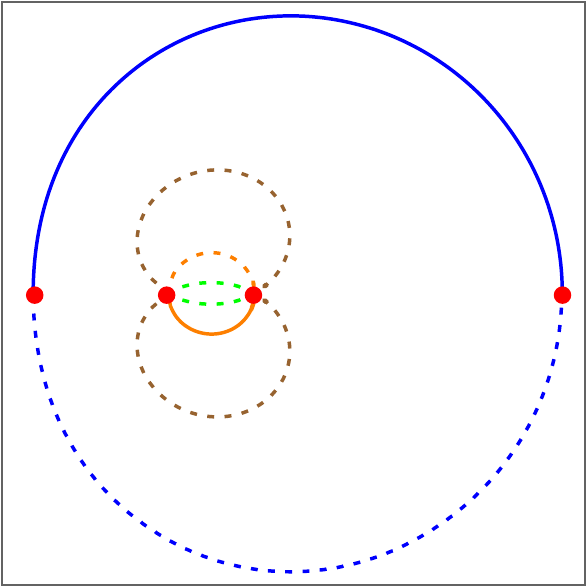}
  \put(83,83){$ \Xi$}
  \put(29,39){$\Lambda_0$}
 \end{overpic}
\end{minipage}%
\begin{minipage}[c]{0.5\textwidth}
\centering
\begin{overpic}[scale=1]
{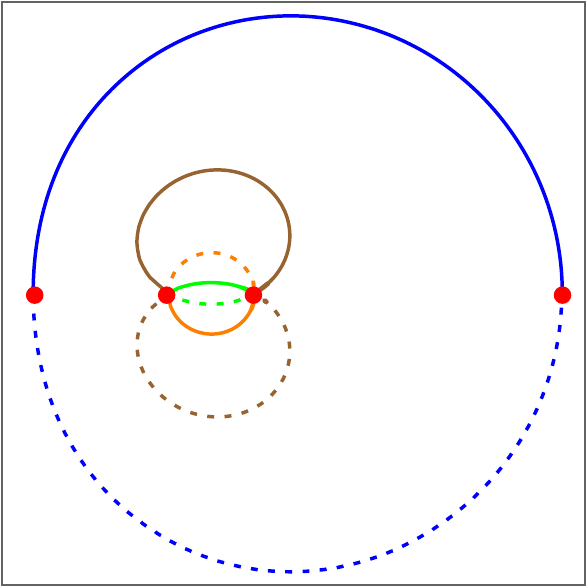}
\put(83,83){$ \Xi$}
\put(29,39){$\Lambda_0$}
\put(35,53){$\Lambda_1$}
\put(42,71){$\Lambda_2$}
 \end{overpic}
 \end{minipage}
 \caption{The solid lines are the images through $\psi$ of the jumps for $\tilde m$ and $m$ in the left and right panels, respectively. The dashed lines are the images of the 
remaining cuts. The solid lines in the right panel also correspond to the jump contours for $f$ in the one-cut case. Numerical output for $r=1/20$ and 
$a_0=1/4$.}\label{figure_mother_body_w_plane_jumps_one_cut}
 \end{figure}

The functions $\tilde m$ and $m$ are analytic on $\widetilde{\mathcal R}\setminus (\Sigma_{*,+}^{(1)}\cup L_{0,+}^{(2)})$ and $\widetilde{\mathcal R}\setminus 
(\Sigma_{*,+}^{(1)}\cup 
L_+^{(2)})$, respectively. Furthermore, from \eqref{jumps_u_one_cut}
$$
\widetilde m_+(p)=-\widetilde m_-(p),\quad p\in \Sigma_{*,+}^{(1)}\cup L_{0,+}^{(2)},
$$
and consequently, from the definition of $m$ we get 
\begin{equation}\label{jump_conditions_m_one_cut}
m_+(p)=-m_-(p),\quad p\in \Sigma_{*,+}^{(1)}\cup L_{+}^{(2)}.
\end{equation}
We refer the reader to Figure \ref{figure_mother_body_w_plane_jumps_one_cut} for a display of the jump contours for $\widetilde m$ and $m$ in the $w$-plane.

Further setting
$$
m_j=\restr{m}{\widetilde{\mathcal R}_j},\quad j=1,2,3,
$$
it follows in the same way as we did in Section \ref{section_global_parametrix_three_cut}, equations \eqref{residues_condition}--\eqref{local_asymptotics_mk_3}, that the first row 
of the global parametrix $M$ is $(m_1,m_2,m_3)$. The remaining rows of $M$ can be obtained in exactly the same way as we did in Section \ref{section_global_parametrix}, equation 
\eqref{definition_prefactor_B} {\it et seq}. We skip the details.

\subsection{Explicit construction of the first row}

In Sections \ref{section_global_parametrix_three_cut} and \ref{global_parametrix_one_cut}, we constructed the first row of the global parametrix in terms of the function $m$, 
which, in virtue of \eqref{jump_conditions_m_three_cut}--\eqref{local_asymptotics_mk_3} (see also \eqref{jump_conditions_m_one_cut}) is unique solution to the following 
Riemann-Hilbert problem.
\begin{itemize}
 \item $m:\widetilde{\mathcal R}\setminus(\Sigma_{*,+}^{(1)}\cup L_+^{(2)} )\to \C$ is analytic;
 
 \item $m_+(z)=-m_-(z)$, $\; z\in \Sigma_{*,+}^{(1)}\cup L_+^{(2)}$;
 
 \item If $z_j$ is a branch point for the sheet $\widetilde{\mathcal R}_k$, then
 \begin{equation}\label{asymptotics_local_coordinate_1}
 m_k(z)=\Boh((z-z_k)^{-1/4}), \quad z \to z_j,
 \end{equation}
 and as $z\to \infty$,
 \begin{equation}\label{asymptotics_local_coordinate_2}
 m_1(z)=1+\Boh(z^{-1}),\quad m_k(z)=\Boh(z^{-3/4}), \; k=2,3.
 \end{equation}
 
 \item In the three-cut case, $m_k(z)$ remains bounded as $z\to z_*$.
 \end{itemize}

It turns out that the Riemann-Hilbert problem above can be solved explicitly with the help of the rational parametrization $h$ and its inverse $\psi$ in 
\eqref{inverse_rational_parametrization}. If we seek for $m$ of the form
\begin{equation}\label{explicit_expression_first_row}
f(w)=m(h(w)),
\end{equation}
then it follows that $f$ should satisfy the following scalar Riemann-Hilbert problem.
\begin{itemize}
 \item $f:\C\setminus (\Xi\cup \Lambda)\to \C$ is analytic;
 
 \item $f_+(w)=-f_-(w)$, $w\in \Xi\cup \Lambda$;
 
 \item $f(w)\to 1$ as $w\to \infty$ and $f(w)\to 0$ as $w\to 0$.
 
 \item $f(w)=\Boh((w-w_j)^{-1/2})$ as $w\to w_j$, $j=0,1,2$.
\end{itemize}

We remark that the $1/4$-blow-ups for $m$ become $1/2$-blow-ups because the points $z_0, z_1$ and $z_2$ are branch points.

The jumps for $f$ are shown in Figure \ref{figure_mother_body_w_plane_jumps} and in the right panel of Figure \ref{figure_mother_body_w_plane_jumps_one_cut} for the three-cut and 
one-cut cases, respectively. 

Thus the natural choice for $f$ is
$$
f(w)=\left(\frac{w^3}{(w-w_0)(w-w_1)(w-w_2)}\right)^{1/2},\quad w\in \C\setminus(\Xi\cup \Lambda),
$$
where the branch of the square root is uniquely determined by the condition that $f(w)\to 1$ as $w\to \infty$ and with branch cuts on $\Xi\cup \Lambda$. 

We know that $w_0,w_1$ and $w_2$ are the zeros of $h'$ (see Lemma \ref{lemma_location_zeros_derivative_h}), and consequently of the monic polynomial $\tilde h(w)=w^3 r^{-1}h'(w)$ 
as in \eqref{equation_zeros_derivative_h}. This means that
$$
\frac{w^3}{r}h'(w)=(w-w_0)(w-w_1)(w-w_2),
$$
which expresses that
\begin{equation}\label{explicit_expression_f}
f(w)=\left( \frac{r}{h'(w)} \right)^{1/2}.
\end{equation}
From the Inverse Function Theorem, we know that $h'(w)=1/\psi'(z)$, where $\psi$ is given in \eqref{inverse_rational_parametrization} and $w=\psi(z)$. Returning back to 
\eqref{explicit_expression_f} and using \eqref{explicit_expression_first_row}, we thus get that $m$ is given by
$$
m(z)=\sqrt{r\psi'(z)},\quad z\in \widetilde{\mathcal R}\setminus (\Sigma_{*,+}^{(1)}\cup L_+^{(2)}).
$$
Recalling \eqref{inverse_rational_parametrization}, we finally arrive at the expressions for the first line $(m_1,m_2,m_3)$ of $M$, namely
\begin{equation}\label{explicit_first_row}
m_j(z)=\sqrt{r \psi'_j(z)}, \quad j=1,2,3,
\end{equation}
where the branch cuts for the square root of $\psi_1$, $\psi_2$ and $\psi_3$ are determined from the ones in \eqref{explicit_expression_f}. In particular, the 
branch cut for $m_1$ is taken on $\Sigma_*$.

\section{Proof of Theorems~\ref{theorem_limiting_counting_measure} and \ref{thm_wave_factor}}\label{section_proof_theorem_limiting_counting_measure}

We now prove Theorems \ref{theorem_limiting_counting_measure} and \ref{thm_wave_factor}. The arguments are valid both in the three-cut and one-cut cases.

\begin{proof}[Proof of Theorem \ref{thm_wave_factor}]

Unfolding the transformations in the Riemann-Hilbert analysis, we get in particular
\begin{align*}
 P_{n,n}(z) & = Y_{1,1}(z) \\
	    & = X_{1,1}(z) \\
	    & = T_{1,1}(z)e^{\frac{n}{t_0}(g_1(z)-V(z)-l_1)},\quad z\in \C\setminus \Sigma,
\end{align*}
we refer to \eqref{P_nn_Y}, \eqref{transformation_X} and \eqref{transformation_T} for the three-cut case, and remind that these transformations are the same in the one-cut case 
(with the appropriate definition of the function $g_1$). For any fixed compact $K\subset \C\setminus \Sigma_*$, we can reduce the lens $\mathscr S$ and 
the set $D_\delta$ in such a way that
$$
K\cap (\overline{\mathscr S} \cup \overline{D_\delta})=\emptyset,
$$
and in this case it follows further that $T_{1,1}=S_{1,1}$ on $K$ (see for instance \eqref{transformation_S_1} and \eqref{transformation_S_2}), so
\begin{equation}\label{pre_asymptotics_P_nn}
P_{n,n}(z)=S_{1,1}(z)e^{-\frac{n}{t_0}(g_1(z)-V(z)-l_1)},\quad z\in K.
\end{equation}

From \eqref{transformation_R} and the estimate \eqref{estimate_R}, we know that as $n\to\infty$
\begin{align*}
S_{1,1}(z) & = R_{1,1}(z)M_{1,1}(z) + R_{1,2}(z)M_{2,1}(z) + R_{1,3}(z)M_{3,1}(z) \\
	   & = (1+\Boh(n^{-1}))M_{1,1}(z), \quad z\in K.
\end{align*}
where for the last equality we also used that the first column of $M$ remains bounded away from $\Sigma_*$, which is a direct consequence of the RHP satisfied by $M$.
Also note that the implicit term above is uniform on $K$. Returning this last equation into 
\eqref{pre_asymptotics_P_nn}, we conclude
\begin{equation}\label{asymptotics_P_nn}
P_{n,n}(z)=(1+\Boh(n^{-1}))M_{1,1}(z)e^{\frac{n}{t_0}(g_1(z)-V(z)-l_1)}
\end{equation}
uniformly on the compact $K\subset \C\setminus \Sigma_*$. 

From \eqref{explicit_first_row} and from the definition of $g_1$ in \eqref{g_functions} and \eqref{g_functions_postcritical}, it 
immediately follows that
$$
M_{1,1}(z)=\sqrt{r\psi_1'(z)},\quad g_1(z)=G(z)+c_1,
$$
where $G$ is as in \eqref{definition_factor_G}. Returning this information back to \eqref{asymptotics_P_nn}, we get \eqref{uniform_asymptotics_polynomial} for the constant 
$c=c_1-l_1$.
\end{proof}

\begin{proof}[Proof of Theorem \ref{theorem_limiting_counting_measure}]
Since $\psi_1$ is the inverse of $h$ on $\widetilde{\mathcal R}_1$, the derivative $\psi_1'$ does not vanish on $\C\setminus\Sigma_*$, so from 
\eqref{uniform_asymptotics_polynomial} we conclude that the zeros of $P_{n,n}$ accumulate on the star $\Sigma_*$ in the large $n$ limit.

Suppose now that $\mu_{n_k}\stackrel{*}{\to}\nu$, where $(\mu_{n_k})$ is a subsequence of the sequence of zero counting measures $(\mu_n)$ defined in \eqref{counting_measures}. 
The zeros of $P_{n,n}$ accumulate on $\Sigma_{*}$, so we must have
\begin{equation*}\label{support_arbitrary_limiting_measure}
 \supp \nu\subset \Sigma_*.
\end{equation*}

For any $z\in \C\setminus \Sigma_*$, it follows from \eqref{asymptotics_P_nn} that
\begin{multline*}
U^{\nu}(z)=-\int \log|s-z|d\nu(s)\\ =-\lim_{k\to\infty} \frac{1}{n_k}\log|P_{n_k,n_k}(z)|=-\frac{1}{t_0}\re(g_1(z)-V(z)-l_1).
\end{multline*}
Having in mind \eqref{identity_cauchy_transform_potential} and \eqref{cauchy_transform_mother_body}, this last identity implies that
\begin{equation*}
C^{\nu}(z)=-\frac{1}{t_0}(g_1'(z)-V'(z))=-\frac{1}{t_0}(\xi_1(z)-V'(z))=C^{\mu_*}(z),\quad z\in \C\setminus \Sigma_*.
\end{equation*}
Using the same arguments as in \eqref{equality_cauchy_transforms} {\it et seq.}, we thus conclude
$$
U^{\nu}(z)=U^{\mu_*}(z),\quad z\in \C\setminus \Sigma_*.
$$
Since $\Sigma_*$ has planar Lebesgue measure zero, the above equation says that the potential of the measures $\nu$ and $\mu_*$ coincide a.e. in $\C$. From the Unicity 
Theorem~\cite[Theorem~II.2.1]{Saff_book} we get $\nu=\mu_*$, concluding the proof.
\end{proof}


\appendix


\section{Analysis of the width parameters}\label{appendix_widths}

In the appendix, we analyze the parameters $\tau_j$'s that were used in Section~\ref{section_quadratic_differential}.
To do so, we need some preliminary lemmas.

\begin{lem}
 For $(t_0,t_1)\in \mathcal F$, it is valid
 \begin{equation}\label{lower_bound_w_1_1}
 w_0>r^{1/3}.
 \end{equation}
 
 Additionally, for $(t_0,t_1)\in \mathcal F_2$,
 \begin{equation}\label{lower_bound_w_1_r}
 |w_1|<r^{1/3},
 \end{equation}
 and 
 \begin{equation}\label{lower_bound_w_1_r_2}
 w_1>-\frac{2a_0}{3r},
 \end{equation}
 and consequently,
 \begin{equation}\label{lower_bound_w_1_2}
 |w_1|<w_0.
 \end{equation}
\end{lem}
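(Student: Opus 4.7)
The plan is to reduce everything to the cubic $\tilde h(w) = w^3 - 2a_0 w - 2r$ from \eqref{equation_zeros_derivative_h}, whose zeros are precisely $w_0, w_1, w_2$. Since $\tilde h$ has no quadratic term, the three roots satisfy $w_0 + w_1 + w_2 = 0$. This alone settles \eqref{lower_bound_w_1_2}: in $\mathcal F_2$ both $w_1, w_2 < 0$ by Lemma~\ref{lemma_location_zeros_derivative_h}, so $|w_1| + |w_2| = w_0$ and hence $|w_1| < w_0$ strictly.

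For \eqref{lower_bound_w_1_1}, I would evaluate $\tilde h(r^{1/3}) = r - 2a_0 r^{1/3} - 2r = -r - 2a_0 r^{1/3}$, which is strictly negative since $r > 0$ and $a_0 \geq 0$ by Theorem~\ref{theorem_monotonicity_r}. Lemma~\ref{lemma_location_zeros_derivative_h} identifies $w_0$ as the unique positive root of $\tilde h$, and combined with $\tilde h(0) = -2r < 0$ and $\tilde h(w) \to +\infty$ as $w \to +\infty$, the intermediate value theorem forces $r^{1/3} < w_0$.

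For \eqref{lower_bound_w_1_r}, the natural evaluation is at $w = -r^{1/3}$, which gives $\tilde h(-r^{1/3}) = 2r^{1/3}\bigl(a_0 - \tfrac{3}{2}r^{2/3}\bigr)$. By Theorem~\ref{theorem_change_coordinates_phase_diagram}, the critical curve $\gamma_c$ in the $(r, a_0)$-plane is parametrized as $r = s^3, a_0 = \tfrac{3}{2}s^2$, i.e. $a_0 = \tfrac{3}{2}r^{2/3}$, and $\mathcal F_2$ is the side where $a_0 > \tfrac{3}{2}r^{2/3}$. Hence $\tilde h(-r^{1/3}) > 0$ in $\mathcal F_2$. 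Combined with the sign pattern of the cubic on the negative axis (positive on $(w_1, w_2)$ and negative elsewhere, as forced by $\tilde h(0) < 0$ and $\tilde h(w) \to -\infty$ as $w \to -\infty$), this locates $-r^{1/3}$ strictly between $w_1$ and $w_2$, which yields the desired comparison between $|w_1|$ (equivalently, its counterpart for $|w_2|$) and $r^{1/3}$.

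Finally, \eqref{lower_bound_w_1_r_2} combines the trivial estimate $w_1 > -1$ from Lemma~\ref{lemma_location_zeros_derivative_h} with the inequality $\tfrac{2a_0}{3r} > 1$. The latter follows from the characterization $a_0 > \tfrac{3}{2}r^{2/3}$ of $\mathcal F_2$ together with $r < 1/2$ (Theorem~\ref{theorem_monotonicity_r}): one obtains $\tfrac{2a_0}{3r} > r^{-1/3} > 2^{1/3} > 1$, hence $-\tfrac{2a_0}{3r} < -1 < w_1$. No substantial obstacle is anticipated, since every claim reduces to evaluating $\tilde h$ at a well-chosen point and exploiting the absence of the quadratic term in $\tilde h$.
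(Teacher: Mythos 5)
Your proof is correct throughout and in one place actually repairs a gap in the paper's own argument. Some remarks.

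For \eqref{lower_bound_w_1_1} and \eqref{lower_bound_w_1_r_2} your reasoning is essentially the paper's: evaluate at the test point, read off the sign, and use $a_0>\tfrac32 r^{2/3}$ on $\mathcal F_2$ together with $r<1/2$. (Your computation $\tilde h(r^{1/3})=-r-2a_0r^{1/3}$ is in fact cleaner than the paper's version of the same evaluation.)

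For \eqref{lower_bound_w_1_r}, your computation $\tilde h(-r^{1/3})=2r^{1/3}\bigl(a_0-\tfrac32 r^{2/3}\bigr)>0$ on $\mathcal F_2$ and the sign-pattern analysis locating $-r^{1/3}$ strictly inside $(w_1,w_2)$ are both correct, but you should state the conclusion explicitly: $w_1<-r^{1/3}<w_2<0$, hence $|w_1|>r^{1/3}$ and $|w_2|<r^{1/3}$. That is the \emph{opposite} of the printed inequality $|w_1|<r^{1/3}$, and the opposite is what the paper's own proof derives and what the later use of \eqref{lower_bound_w_1_r} in the $\tau_6$ estimate requires (there $\tfrac{r+w_1^3}{3w_1^3}>0$ needs $w_1^3<-r$). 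The statement therefore has a sign misprint; you have the right inequality, but your closing phrase "yields the desired comparison" is too vague to flag that you are proving $>$ and not $<$.

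For \eqref{lower_bound_w_1_2}, your Vieta argument is a genuine and valuable departure from the paper. Since $\tilde h$ has no $w^2$ term, $w_0+w_1+w_2=0$, and $w_1,w_2<0$ on $\mathcal F_2$ forces $w_0=|w_1|+|w_2|>|w_1|$. The paper asserts that \eqref{lower_bound_w_1_2} "trivially follows from \eqref{lower_bound_w_1_1}--\eqref{lower_bound_w_1_r}", which only makes sense with the misprinted direction of \eqref{lower_bound_w_1_r}; with the corrected inequality $|w_1|>r^{1/3}$, the pair $w_0>r^{1/3}$ and $|w_1|>r^{1/3}$ gives no comparison between $w_0$ and $|w_1|$. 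Your route through the sum of roots bypasses the issue entirely and is the right way to establish \eqref{lower_bound_w_1_2}.
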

\begin{proof}
Recall that $w_0$ is the unique positive solution to
$$
h'(w)=0,
$$
so $h'(w)<0$ for positive $w$ only if $w<w_0$. Simple calculations then show
 $$
 h'(r^{1/3})=r-2a_0r^{1/3}-2r^{2/3}\leq r-2r^{2/3}<0,\quad 0<r<\frac{1}{2}.
 $$
giving us \eqref{lower_bound_w_1_1}.

For the second inequality, we recall that $w_1$ is the smallest (negative) root of $h'$ and, furthermore, $h'(w)\to r>0$ as $w\to-\infty$, so that $h(w)>0$ on the interval 
$(-\infty,w_1)$. Simple computations show that
$$
h'(-r^{1/3})=r\left(1-\frac{2a_0}{r^{2/3}}\right).
$$
For fixed $a_0$, the function $r\mapsto 1-2a_0/r^{2/3}$ is increasing, so it attains its maximum when $r$ is chosen so that the corresponding pair $(t_0,t_1)$ belongs to the 
critical curve $\gamma_c$. Using \eqref{critical_curve_limiting_zero_distribution_a_r_plane}, we see that this maximum is 
$$
1-\frac{3s^2}{s^2}=-2<0,
$$
thus we get that $h'(-r^{1/3})<0$. Since we already observed that $h'$ is positive on $(-\infty,w)$, this is enough to conclude that $w_1<-r^{1/3}<0$, which is equivalent to 
\eqref{lower_bound_w_1_r}. 

To get \eqref{lower_bound_w_1_r_2}, we note that the function 
$$
a_0\mapsto -\frac{2a_0}{3r}
$$
is decreasing, so it attains its maximum value along $\gamma_c$. Recalling that $w_1>-1$ (see Lemma~\ref{lemma_zeros_derivative_h}) and using 
\eqref{critical_curve_zeros_r_a_plane_negative_t}, we get 
$$
-\frac{2a_0}{3r}<-\frac{2 (3s^2/2)}{s^3}=-\frac{1}{s}<-1<w_1.
$$

Finally, the inequality \eqref{lower_bound_w_1_2} trivially follows from \eqref{lower_bound_w_1_1}--\eqref{lower_bound_w_1_r}.
\end{proof}

\begin{lem}
Suppose $(t_0,t_1)\in\mathcal F_2$. Then
\begin{equation}\label{lower_bound_sum_inverse_roots}
\frac{1}{w_1}+\frac{1}{w_2}>-1
\end{equation}
where $w_0$ and $w_1$ are the zeros of $h'$ as in \eqref{lemma_zeros_derivative_h}.
\end{lem}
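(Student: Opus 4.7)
The plan is to reduce the statement to an inequality purely in $w_0$ and $r$ by applying Vieta's formulas, and then to verify the resulting comparison by evaluating $\tilde h$ at an explicit point.

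First I would recall from \eqref{equation_zeros_derivative_h} that $w_0, w_1, w_2$ are the roots of the monic cubic
$$\tilde h(w) = w^3 - 2a_0 w - 2r.$$
Since the coefficient of $w^2$ vanishes and the constant term equals $-2r$, Vieta's formulas give
$$w_0 + w_1 + w_2 = 0, \qquad w_0 w_1 w_2 = 2r,$$
so that $w_1 + w_2 = -w_0$ and $w_1 w_2 = 2r/w_0$. Combining these,
$$\frac{1}{w_1} + \frac{1}{w_2} = \frac{w_1 + w_2}{w_1 w_2} = -\frac{w_0^2}{2r}.$$
Hence the inequality \eqref{lower_bound_sum_inverse_roots} is equivalent to the simple comparison $w_0^2 < 2r$, i.e.\ $w_0 < \sqrt{2r}$.

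To establish this comparison, I would exploit the fact that $w_0$ is the largest real root of $\tilde h$, so that by the intermediate value theorem and the positivity of the leading coefficient, $w_0 < \sqrt{2r}$ iff $\tilde h(\sqrt{2r}) > 0$. A direct computation yields
$$\tilde h(\sqrt{2r}) = (2r)^{3/2} - 2a_0\sqrt{2r} - 2r = 2\sqrt{2r}(r - a_0) - 2r,$$
so the task reduces to showing the polynomial inequality
$$\sqrt{2r}(r - a_0) > r \quad \text{on } \mathcal F_2.$$
Squaring (after checking the sign of $r - a_0$), this becomes a purely algebraic condition in the parameters $(r,a_0)$.

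The main obstacle will be verifying this reduced inequality throughout $\mathcal F_2$. The natural way is to work in the $(r,a_0)$-plane using Theorem \ref{theorem_change_coordinates_phase_diagram}, which characterizes the image of $\mathcal F_2$ as the region bounded by the $a_0$-axis, the curve $a_0 = (1-2r)/2$, and the curve $a_0 = (3/2)r^{2/3}$ with $r = s^3$. One would combine the relations \eqref{system_change_coordinates_a}--\eqref{system_change_coordinates_b} with the bounds $0 < r < 1/2$, $0 < a_0 < 1/2$, and $2r + 2a_0 < 1$ coming from Theorem \ref{theorem_monotonicity_r}, reducing everything to a polynomial inequality that can be checked on the boundary curves of $\mathcal F_2$ and then propagated inward via monotonicity (similar to the deformation arguments used for the other width parameters $\tau_j$ in the appendix). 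A good check on whether the direction of the inequality has been correctly set up is provided by the boundary case $(t_0,t_1) \in \gamma_c$, where $w_1 = w_2 = -s$, $w_0 = 2s$, $r = s^3$, giving $w_0^2 = 4s^2$ and $2r = 2s^3$, so the sign of $w_0^2 - 2r$ on $\gamma_c$ fixes the direction of the comparison one is aiming for.
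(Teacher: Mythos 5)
Your Vieta computation is clean and correct:
$$
\frac{1}{w_1}+\frac{1}{w_2}=\frac{w_1+w_2}{w_1w_2}=-\frac{w_0^2}{2r},
$$
so the displayed inequality \eqref{lower_bound_sum_inverse_roots} is equivalent to $w_0^2<2r$. But if you actually execute the sanity check you propose at the end, it fails: on $\gamma_c$ one has $w_0=2s$, $r=s^3$, so $w_0^2-2r=4s^2-2s^3=2s^2(2-s)>0$ for all $s\in(0,1/2)$. More generally, throughout $\mathcal F_2$ one has $a_0\geq\frac{3}{2}r^{2/3}>r$ (the $\gamma_c$ boundary is $a_0=\frac{3}{2}r^{2/3}$ and $\mathcal F_2$ lies on the large-$a_0$ side), so $\tilde h(\sqrt{2r})=2\sqrt{2r}(r-a_0)-2r<0$, hence $w_0>\sqrt{2r}$, hence $\frac{1}{w_1}+\frac{1}{w_2}<-1$. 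In other words, the inequality as typeset is \emph{false}, and your own reduction is what demonstrates it.

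What has happened is that the lemma statement contains a typo: it should read $\frac{1}{w_0}+\frac{1}{w_1}>-1$, as already suggested by the trailing clause ``where $w_0$ and $w_1$ are the zeros of $h'$'' and as confirmed by the way the lemma is used in \eqref{aux_equation_23}, which needs a lower bound on $\frac{1}{w_0}+\frac{1}{w_1}$. For that quantity the Vieta shortcut does not collapse to a single symmetric function in $w_0$, so a different argument is needed. The paper's proof differentiates in the parameter $a_0$: since $h'(w_j)=0$ identically, one has $\partial w_j/\partial a_0=2r/(w_j^2 h''(w_j))$, whence
$$
\frac{\partial}{\partial a_0}\Bigl(\frac{1}{w_0}+\frac{1}{w_1}\Bigr)=-\Bigl(\frac{1}{3rw_0+2a_0}+\frac{1}{3rw_1+2a_0}\Bigr)<0,
$$
where positivity of both denominators follows from Lemma~\ref{lemma_location_zeros_derivative_h} and the bound \eqref{lower_bound_w_1_r_2}. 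Thus, for fixed $r$, the function is decreasing in $a_0$ and its minimum over $\mathcal F_2$ is attained on $\Gamma_c$, where $h'$ factors explicitly and one computes $w_0=1$, $w_1=\frac{1}{2}(-1-\sqrt{1-8s})$, giving $\frac{1}{w_0}+\frac{1}{w_1}=1-\frac{2}{1+\sqrt{1-8s}}>-1$. Your strategy is a good one for the quantity you targeted, but that quantity is not the one the paper needs; the corrected statement requires the monotonicity-plus-boundary argument above or an alternative that handles $\frac{1}{w_0}+\frac{1}{w_1}$ directly.
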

\begin{proof}
From the explicit expression of $h$ in \eqref{rational_parametrization}, we trivially have
\begin{equation}\label{aux_equation_31}
h''(w)=\frac{4a_0r}{w^3}+\frac{6r^2}{w^4}=\frac{2r}{w^4}(3rw+2a_0)
\end{equation}
Since $h'(w_j)=0$, the chain rule gives us
\begin{equation*}
 \frac{\partial w_j}{\partial w}=-\frac{\frac{\partial h'}{\partial a_0}(w_j)}{h''(w_j)}=\frac{2r}{w_j^2 h''(w_j)}.
\end{equation*}
Using \eqref{aux_equation_31}, we thus get 
\begin{align}
\frac{\partial}{\partial a_0}\left(\frac{1}{w_0}+\frac{1}{w_1}\right) & = -\left(\frac{1}{w_0^2}\frac{\partial w_0}{\partial a_0}+ \frac{1}{w_1^2}\frac{\partial w_1}{\partial 
a_0}\right) \nonumber \\
		  & = -\left( \frac{1}{3r w_0 +2a_0} + \frac{1}{3rw_1+2a_0} \right). \label{aux_equation_32}
\end{align}
We know that $w_0>0$ and also $3rw_1+2a_0>0$, as it follows from Lemma~\ref{lemma_location_zeros_derivative_h} and \eqref{lower_bound_w_1_r_2}, respectively. From 
\eqref{aux_equation_32} we thus conclude that the function
$$
a_0\mapsto \frac{1}{w_0}+\frac{1}{w_1}
$$
is decreasing, so it attains its minimum along the critical curve $\Gamma_c$. On $\Gamma_c$, it follows from \eqref{cusp_critical_curve_positive_t1} that
$$
h'(w)=\frac{s (w-1) \left(2 s+w^2+w\right)}{w^3},
$$
where $s\in (0,1/8)$, so that in this case
$$
w_0=1,\quad w_1=\frac{1}{2} \left(-1-\sqrt{1-8 s}\right),
$$
and consequently for every choice of parameters in $\mathcal F_2$, it holds true
$$
\frac{1}{w_0}+\frac{1}{w_1}> 1-\frac{2}{1+\sqrt{1-8 s}}>-1,
$$
as we want.
\end{proof}

\subsection{Width parameters in the three-cut case}\label{appendix_widths_precritical}

Recall that the non vanishing of the parameters $\tau_j$, $j=1,2,3,4,5$, introduced in \eqref{width_tau_1}--\eqref{width_tau_5}, were used in Section 
\ref{deformation_critical_graph_precritical} to prove that the critical graph of the quadratic differential $\varpi$ remains unchanged in $\mathcal F_1$. We now verify that these 
quantities do not vanish. 

\begin{prop}
For $(t_0,t_1)\in \mathcal F_1$, we have $\tau_5<0$.
\end{prop}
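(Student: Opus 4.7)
The plan is to evaluate $\tau_5$ directly along the real segment $[z_0, \hat z_0]$ and exploit the sign information on $\xi_1 - \xi_3$ already established in Section~\ref{section_sheet_structure_1}.

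First I would observe that by Theorem~\ref{theorem_singular_points_spectral_curve} we have $0 < z_0 < \hat z_0$, and both points lie on the real axis. Recall from \eqref{width_tau_5} that the path of integration is taken in $\C \setminus ((-\infty,z_*] \cup \Sigma_*)$; since $\Sigma_{*,0} = [z_*, z_0]$ terminates precisely at $z_0$, the closed interval $[z_0, \hat z_0]$ lies entirely in this domain, so we are free to integrate $\tau_5$ along the straight segment on the real axis. On this segment, according to the sheet structure of Section~\ref{section_sheet_structure_1}, both $\xi_1$ and $\xi_3$ admit analytic continuations (they are meromorphic on $\mathcal R_1 = \overline \C \setminus \Sigma_*$ and $\mathcal R_3 = \overline\C \setminus (-\infty,z_0]$, respectively) and take real values there.

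Next I would invoke the inequality
\begin{equation*}
\xi_2(x) < \xi_1(x) < \xi_3(x), \quad z_0 < x < \hat z_0,
\end{equation*}
from \eqref{equalities_inequalities_real_line_xi_functions_precritical_2}. Since the integrand $\xi_1 - \xi_3$ is continuous and real on $[z_0,\hat z_0]$, vanishes at the endpoints by virtue of \eqref{zeros_difference_xi_functions_F_1} ($\xi_1(z_0) = \xi_3(z_0)$ and $\xi_1(\hat z_0) = \xi_3(\hat z_0)$), and is strictly negative on the open interval $(z_0,\hat z_0)$, we immediately conclude
\begin{equation*}
\tau_5 = \re \int_{z_0}^{\hat z_0} (\xi_1(s) - \xi_3(s))\,ds = \int_{z_0}^{\hat z_0} (\xi_1(x) - \xi_3(x))\,dx < 0.
\end{equation*}

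There is essentially no obstacle here: the argument is a direct application of the real inequalities recorded in Section~\ref{section_sheet_structure_1} together with the geometric fact that $\hat z_0 > z_0 > 0$. The only minor point requiring care is verifying that the path can indeed be taken along the real segment within the prescribed domain of integration, which follows at once from $\Sigma_* \cap (z_0,\hat z_0] = \emptyset$ and $(-\infty,z_*] \cap [z_0,\hat z_0] = \emptyset$ (the latter because $z_* < z_0$).
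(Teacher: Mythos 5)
Your argument is correct and is essentially the paper's proof, which simply observes that $\tau_5<0$ follows directly from the inequality $\xi_1(x)<\xi_3(x)$ on $(z_0,\hat z_0)$ recorded in \eqref{equalities_inequalities_real_line_xi_functions_precritical_2}. You spell out the path-deformation and endpoint justifications more explicitly, but the underlying idea is the same.
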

\begin{proof}
Follows directly from \eqref{equalities_inequalities_real_line_xi_functions_precritical_2}.
\end{proof}

\begin{prop}\label{proposition_tau_1}
 For $(t_0,t_1)\in \mathcal F_1$, we have $\tau_1>0$.
\end{prop}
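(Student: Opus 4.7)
The plan is to prove $\tau_1>0$ on all of $\mathcal F_1$ by combining an explicit computation at the reference point $t_1=0$ with a non-vanishing argument that uses the rational parametrization $h$.

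First, I would verify that $\tau_1$ depends continuously on $(t_0,t_1)\in\overline{\mathcal F_1}$. The branch points $z_0,z_1,z_2$ depend continuously on the parameters by Theorem~\ref{theorem_discriminant_spectral_curve}, the sheet structure of $\mathcal R$ constructed in Section~\ref{section_sheet_structure_1} (with the cut $\Sigma_*$ as chosen in Section~\ref{section_proof_s_property}) varies continuously, and the integration contour from $z_2$ to $z_0$ in $\C\setminus((-\infty,z_*]\cup\Sigma_*)$ may be chosen uniformly in a small neighborhood of any parameter point; hence $\tau_1(t_0,t_1)$ is continuous.

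Second, I would compute $\tau_1$ at the reference point $t_1=0$, using the rotational symmetry of Section~\ref{subsection_spectral_curve_t=0}. In that case $z_0$ is given by \eqref{branchpoints_nodes_t_1=0-1} and $z_2=\omega^{-2}z_0$, and under the sheet structure constructed in \eqref{new_sheet_structure_2} the solutions satisfy $\xi_j(\omega z)=\omega\xi_{\sigma(j)}(z)$ for a cyclic permutation $\sigma$. A natural choice of path from $z_2$ to $z_0$ in the pre-chosen sector, combined with the explicit expressions \eqref{A_for_t=0}--\eqref{branchpoints_nodes_t_1=0-2}, reduces $\tau_1\big|_{t_1=0}$ to an integral over the real interval $[0,z_0]$ via symmetry; using \eqref{branchpoints_nodes_t_1=0-3} and the standard inequality $\re\xi_1(x)<\re\xi_2(x)$ on $(0,z_0)$ (see \eqref{equalities_inequalities_real_line_xi_functions_t=0_3}) the answer is seen to be strictly positive.

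Third, and this is the main obstacle, I would show $\tau_1$ cannot vanish on $\mathcal F_1$; together with Step~1 and Step~2 this yields $\tau_1>0$ on $\mathcal F_1$ by continuity and connectedness. The cleanest way is to pull the integral back to the $w$-plane through $s=h(w)$. By Theorem~\ref{theorem_schwarz_function} and the partition in Figure~\ref{w_partition}, one has $\xi_1\circ h(w)=h(1/w)$ and $\xi_2\circ h(w)=h(1/\sigma(w))$, where $\sigma$ is the involution that exchanges the preimages of the sheets $\mathcal R_1$ and $\mathcal R_2$. This rewrites $\tau_1$ as the real part of $\int_{w_2}^{w_0}(h(1/w)-h(1/\sigma(w)))\,h'(w)\,dw$ along an appropriate path in the $w$-plane. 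The integrand is rational in $w$ and $\sigma(w)$, and one can then use Theorem~\ref{theorem_monotonicity_r} together with \eqref{lower_bound_w_1_1}--\eqref{lower_bound_w_1_2} to force the real part to have a definite sign on $\mathcal F_1$ — the hardest part being to uniformly control the contribution of $\sigma(w)$ (which is not globally rational in $r$ and $a_0$) along the deformed contour.
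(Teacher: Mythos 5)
Your plan is logically sound in outline (continuity of $\tau_1$, sign at a reference point, non-vanishing), but the crucial simplification that makes the computation tractable is missing, and without it Step~3 does not go through. The issue is the introduction of the involution $\sigma$.

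When you pull the integrals back to the $w$-plane, you do not need $\sigma$ at all, because the $w$-plane provides a single global chart for the whole Riemann surface $\mathcal R$ (Theorem~\ref{theorem_schwarz_function}), and on this chart the global solution $\xi$ is uniformly given by $w\mapsto h(1/w)$ regardless of which sheet one is on. Thus $\int_{z_2}^{z_0}\xi_1\,ds$ pulls back to $\int_{w_2}^{w_0}h(1/w)h'(w)\,dw$ (path in $h^{-1}(\mathcal R_1)$), \emph{and} $\int_{z_2}^{z_0}\xi_2\,ds$ pulls back to $\int_{w_2}^{\tilde w_0}h(1/w)h'(w)\,dw$ with the \emph{same rational integrand} but ending at the other preimage $\tilde w_0$ of $z_0$ (the simple root of $h(w)-z_0$), since $z_2$ is the branch point shared by $\mathcal R_1,\mathcal R_2$ and $z_0$ on $\mathcal R_2$ lifts to $\tilde w_0$. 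The difference collapses to
\[
\tau_1=\re\int_{\tilde w_0}^{w_0} h\!\left(\tfrac{1}{w}\right)h'(w)\,dw,
\]
a single integral of an explicit rational function with an elementary primitive (plus a logarithm), and with $\tilde w_0=-r/w_0^2$. Positivity then follows from a direct algebraic estimate using $h'(w_0)=0$ (so $w_0^3=2a_0w_0+2r$) together with $w_0^3>r$ and $1-4a_0^2-2r^2>0$ — no control of a non-rational $\sigma(w)$ is needed, and the separate $t_1=0$ baseline and continuity argument become unnecessary. As written, your formula $\int_{w_2}^{w_0}(h(1/w)-h(1/\sigma(w)))h'(w)\,dw$ reintroduces precisely the hard object you correctly flag as problematic, and you offer no way to dispose of it; that is the gap.
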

\begin{proof}
The rational parametrization $(\xi,z)=(h(w^-1),h(w))$ given by Theorem \ref{theorem_rational_parametrization_polynomial_curve} induces the change of variables $s=h(w)$, 
$\xi_j=h(w^{-1})$, from which it follows that
$$
\int_{z_2}^{z_0}\xi_1\; ds=\int^{w_0}_{w_2}h\left(\frac{1}{w}\right)h'(w)dw,\quad \int_{z_2}^{z_0}\xi_2\; ds=\int^{\tilde w_0}_{w_2}h\left(\frac{1}{w}\right)h'(w)dw,
$$
where $w_2, w_0, \tilde w_0$ satisfy
\begin{align}
& z_2=h(w_2), && \xi_1(z_2)=h(w_2^{-1})=\xi_2(z_2),  \\
& z_0=h(w_0)=h(\tilde w_0), && \xi_1(z_0)=h(w_0^{-1}), \qquad \xi_2(z_0)=h(\tilde w_0^{-1}). \label{aux_equation_8}
\end{align}

Hence,
\begin{equation}\label{last_conclusion}
\tau_1=\re \int_{z_2}^{z_0}(\xi_1(s)-\xi_2(s))ds=\re \int_{\tilde w_0}^{w_0}h\left(\frac{1}{w}\right)h'(w)dw.
\end{equation}

We can further simplify the integral above in the following way,
\begin{align*}
 \int_{\tilde w_0}^{w_0}h\left(\frac{1}{w}\right)h'(w)dw & =  \int_{\tilde w_0}^{w_0}\left(h\left(\frac{1}{w}\right)-a_0\right)h'(w)dw + a_0 \int_{\tilde w_0}^{w_0}h'(w)dw \\
	      & =  \int_{\tilde w_0}^{w_0}\left(h\left(\frac{1}{w}\right)-a_0\right)h'(w)dw,
\end{align*}
where in the last step we used the first equation in \eqref{aux_equation_8}. 

We use the definition of $h$ in \eqref{rational_parametrization} to compute explicitly the integral above, arriving at
\begin{equation}\label{aux_equation_9}
\int_{\tilde w_0}^{w_0}h\left(\frac{1}{w}\right)h'(w)dw = F(w_0)-F(\tilde w_0)+r^2(1-4 a_0^2 -2 r^2)(\log w_0-\log \tilde w_0),
\end{equation}
where
$$
F(w)=\frac{r^3}{3}w^3 + r^2 a_0 w^2 - 2 r^3 a_0 w + \frac{4r^3 a_0 }{w} +\frac{r^2 a_0}{w^2} + \frac{2 r^3}{3 w^3}.
$$

The next step is to express $\tilde w_0$ in terms of $w_0$. The equation
$$
h(w)-z_0=\frac{r}{w^2}(w^3+\frac{a_0-z_0}{r}w^2+2a_0 w +r)=0
$$
has $w_0$ as a solution with double multiplicity and $\tilde w_0$ as a simple solution, that is
$$
w^3+\frac{a_0-z_0}{r}w^2+2a_0 w +r=(w-w_0)^2(w-\tilde w_0).
$$

This gives us the relation
\begin{equation}\label{aux_equation_10}
\tilde w_0=-\frac{r}{w_0^2}.
\end{equation}

After some calculations, we are thus reduced to
\begin{multline*}
F(w_0)-F(\tilde w_0) = \frac{r+w_0^3}{3w_0^6}\left(2w_0^9-3a_0w_0^7+r(r^2-2)w_0^6 +15a_0r^2w_0^5 \right. \\ \left. +3a_0r(1-2r^2)w_0^4 + r^2(2-r^2)w_0^3 
-3a_0r^3w_0^2 +r^5\right)
\end{multline*}

The expression $h'(w_0)=0$ gives us additionally $w_0^3=2a_0w_0+2r$. Replacing every multiple power of $3$ in the expression under brackets above, we get
\begin{multline}\label{aux_equation_21}
F(w_0)-F(\tilde w_0)=\frac{r+w_0^3}{3w_0^6}\left( (30a_0^2r^2 +4a_0^3)w_0^3+ra_0(r^2(12-8a_0)+10a_0)w_0^2\right. \\ \left. +6a_0r^2(5-r^2)w_0 +12r^3+3r^5\right).
\end{multline}

Recalling Theorem \ref{theorem_monotonicity_r}, we know that $0<a_0,r<1$, so the expression between parentheses above is a polynomial in $w_0$ with positive coefficients. Because 
$w_0>0$ (Lemma \ref{lemma_location_zeros_derivative_h}), we finally conclude 
\begin{equation}\label{aux_equation_11}
F(w_0)-F(\tilde w_0)>0.
\end{equation}

We now take care of the log terms in \eqref{aux_equation_9}. From the definition of $a_0$ in \eqref{definition_a_0}, 
$$
0\leq a_0 \leq \frac{1-4r^2}{2}<\frac{1}{2},
$$
and this gives us $4a_0^2 \leq 2a_0$. Having also in mind $r<1$,
\begin{equation}\label{aux_equation_22}
1-4a_0^2-2r^2>1-2r-2a_0>0,
\end{equation}
where in the last step we used \eqref{inequality_r_a_0_cusp}. Using also \eqref{aux_equation_10}, we get
$$
\re \big( (1-4 a_0^2 -2 r^2)(\log w_0-\log \tilde w_0)\big) = (1-4 a_0^2 -2 r^2) \log\frac{w_0^3}{r}>0,
$$
because $w_0^3>r$, see \eqref{lower_bound_w_1_1}. Plugging this last equation and \eqref{aux_equation_11} into \eqref{aux_equation_9}, and having in mind \eqref{last_conclusion}, 
we arrive at the desired result.
\end{proof}

\begin{remark}\label{remark_relation_simple_double_solution}
 Note that \eqref{aux_equation_10} and \eqref{aux_equation_21} also hold if we replace $w_0$ and $\tilde w_0$ by $w_j$ and $\tilde w_j$, respectively, where $w_j$ is a zero of 
$h'(w)=0$ and $\tilde w_j$ is the simple zero of $h(w)-h(w_j)=0$.
\end{remark}

\begin{remark}\label{remark_critical_curve_negative_t1}
The keen reader might notice that a combination of \eqref{aux_equation_9}--\eqref{aux_equation_21} establishes the equivalence between 
\eqref{implicit_equation_critical_curve_negative_t} and \eqref{critical_curve_zeros_r_a_plane_negative_t}. In fact, the mother body phase transition 
determined by $\gamma_c^-$ corresponds to the vanishing of $\tau_1$. Unlike for $t_1>0$, the transition across $\gamma_c^-$ does not 
correspond to the coalescence of critical points of the quadratic differential $\varpi$ (or, equivalently, of the points $z_j$ and $\hat z_j$ given by 
Theorem~\ref{theorem_singular_points_spectral_curve}). Instead, it corresponds to the shrinking of the domains $\mathcal S_1$ and $\mathcal 
S_6$ in Figure~\ref{figure_planar_graph}.
\end{remark}

\begin{prop}\label{proposition_tau_2_tau_3}
 For $(t_0,t_1)\in \mathcal F_1$, we have
 \begin{align}
\tau_2 & >0,\label{inequality_tau_2}\\
\tau_3 & <0. \label{inequality_tau_3}
 \end{align}
\end{prop}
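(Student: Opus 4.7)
The plan is to emulate the proof of Proposition \ref{proposition_tau_1}, using the substitution $s = h(w)$ to recast both $\tau_2$ and $\tau_3$ as contour integrals in the $w$-plane of $h(1/w)h'(w)$, whose primitive was explicitly computed in \eqref{aux_equation_9}.

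For $\tau_2 = \re\int_{z_0}^{z_2}(\xi_1 - \xi_3)\,ds$: at the branch point $z_0$ joining sheets $\mathcal R_1$ and $\mathcal R_3$, both $\xi_1$ and $\xi_3$ correspond to the double root $w_0$ of $h(w) = z_0$; at the branch point $z_2$ joining sheets $\mathcal R_1$ and $\mathcal R_2$, $\xi_1$ corresponds to the double root $w_2$ while $\xi_3$ corresponds to the simple root $\tilde w_2 = -r/w_2^2$ of $h(w) = z_2$ (see Remark \ref{remark_relation_simple_double_solution}). This gives
\[
\tau_2 = \re \int_{\tilde w_2}^{w_2} h(1/w)\, h'(w)\, dw.
\]
For $\tau_3 = \re\int_{z_2}^{\hat z_2}(\xi_1 - \xi_2)\,ds$: both $\xi_1$ and $\xi_2$ start at $w_2$ at the branch point $z_2$; at the node $\hat z_2$, the two $\mathcal R$-preimages lie on $\mathcal R_1$ and $\mathcal R_2$ and are distinct roots of $f$ in \eqref{equation_polynomial_f_original}, namely $\hat w_2 \in \D$ (on sheet $\mathcal R_2$, from Corollary \ref{corollary_zeros_f}) and a second root $\hat w_2^\ast \in \C\setminus\overline\D$ (on sheet $\mathcal R_1$, since the restriction of $h^{-1}$ to $\mathcal R_1$ maps $\C\setminus\overline\Omega$ into $\C\setminus\overline\D$). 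Hence
\[
\tau_3 = \re \int_{\hat w_2}^{\hat w_2^\ast} h(1/w)\, h'(w)\, dw.
\]

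In both expressions, since the two endpoints map under $h$ to the same $z$-value, the auxiliary term $a_0\int h'(w)\,dw$ vanishes as in the proof of Proposition \ref{proposition_tau_1}, and the primitive \eqref{aux_equation_9} reduces each integral to $F(w_b)-F(w_a)+r^2(1-4a_0^2-2r^2)\log(w_b/w_a)$. The defining relations $w_2^3 = 2a_0 w_2 + 2r$ (from $h'(w_2)=0$) and $f(\hat w_2)=0$ can then be used to eliminate high powers of $w_2$, $\hat w_2$, $\hat w_2^\ast$, producing closed-form expressions analytic in $(r,a_0)$.

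The main obstacle is the sign extraction. In the proof of Proposition \ref{proposition_tau_1} both endpoints were real, and $F(w_0)-F(\tilde w_0)$ decomposed as a polynomial in $w_0$ with positive coefficients (see \eqref{aux_equation_21}), from which $\tau_1>0$ followed immediately; here, however, $w_2$ and $\hat w_2^\ast$ are complex, so term-by-term positivity is unavailable. I would therefore proceed by a hybrid strategy: first verify $\tau_2 > 0$ and $\tau_3 < 0$ at $t_1 = 0$ using the rotational $\omega$-symmetry together with the explicit formulas \eqref{r_for_t=0}--\eqref{A_for_t=0}; then rule out vanishing of $\tau_2$ or $\tau_3$ anywhere on the connected set $\mathcal F_1$ by reducing $\tau_j=0$ to an algebraic condition on $(r,a_0)$ and applying a resultant-based nonvanishing argument in the spirit of Lemmas \ref{lemma_discriminant_p}--\ref{lemma_tilde_p_2}. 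Continuity on $\mathcal F_1$ then propagates the signs throughout.
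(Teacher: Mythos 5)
Your change of variables $s = h(w)$, the primitive of $h(1/w)h'(w)$, and the identification of the $w$-plane endpoints (modulo the innocuous fact that $\re H(\hat w_2^\ast) = \re H(\hat w_2^{-1})$ since $\hat w_2^\ast = \overline{\hat w_2^{-1}}$ and $H$ has real coefficients) reproduce exactly the set-up the paper uses. The paper, however, does \emph{not} give an analytic proof from this point: it explicitly states that Proposition~\ref{proposition_tau_2_tau_3} could not be verified analytically, and the signs of $\tau_2$ and $\tau_3$ are established by evaluating the closed-form expressions numerically on a fine $(r,a_0)$-grid.

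The step in your plan that does not go through is the resultant-based non-vanishing argument. That device works in Lemmas~\ref{lemma_discriminant_p}--\ref{lemma_tilde_p_2}, and in the proof of Proposition~\ref{proposition_tau_1}, because the relevant quantities there either are algebraic in $(r,a_0)$ or (in the case of $\tau_1$) have real endpoints, so the logarithm in \eqref{aux_equation_9} contributes $\log(w_0^3/r)$ with both the prefactor $r^2(1-4a_0^2-2r^2)$ and the argument manifestly positive, and the remaining algebraic part \eqref{aux_equation_21} has positive coefficients. For $\tau_2$ and $\tau_3$ the endpoints $w_2$, $\tilde w_2$, $\hat w_2$ are genuinely complex, so after taking real parts one is left with expressions of the schematic form $P(r,a_0) + Q(r,a_0)\log R(r,a_0)$ where $P,Q,R$ are algebraic but $\log R$ is not. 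The locus $\tau_j = 0$ is therefore transcendental, not algebraic, and cannot be reduced to a resultant computation: resultants rule out common roots of \emph{polynomials}, and no such polynomial reformulation of $\tau_j = 0$ exists. This is precisely the obstruction that forced the authors to numerics, and your proposal, as written, does not overcome it — it only correctly identifies where the obstacle lies.
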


We have not been able to verify Proposition~\ref{proposition_tau_2_tau_3} analytically, so we verified it numerically as explained next.

We start with \eqref{inequality_tau_2}. As in the proof of Proposition~\ref{proposition_tau_1}, we perform the change of variables $z=h(w)$, $\xi_j=h(w^{-1})$, and arrive at
\begin{equation}\label{aux_equation_12}
\int_{z_0}^{z_2}\xi_1(s)ds=H(w_2)-H(w_0), \quad \int_{z_0}^{z_2}\xi_3(s)ds=H(\tilde w_2)-H(w_0),
\end{equation}
where $w_0,w_2$ are as in Lemma~\ref{lemma_location_zeros_derivative_h}, $\tilde w_2$ is the simple zero of $h(w)-z_2$, so alternatively given by
\begin{equation}\label{aux_equation_13}
\tilde w_2=-\frac{r}{w_2^2},
\end{equation}
see Remark~\ref{remark_relation_simple_double_solution}, and $H(w)$ is the primitive of $h'(w)h(w^{-1})$, explicitly given by
\begin{multline}\label{primitive_change_variables}
H(w)=\frac{r^3}{3}w^3+a_0 r^2 w^2+ a_0r\left(1-2 r^2\right) w  \\ 
+\frac{ 2 a_0^2r+4 a_0 r^3}{w}+\frac{2 a_0 r^2}{w^2}+ \frac{2 r^3}{3 w^3}-r^2 \left(4a_0^2+2 r^2-1\right) \log w
\end{multline}

In the expression above, we choose the main branch of the logarithm - actually the branch chosen is not important, because at the end we will be only interest in the real 
part of $H$.
Taking the difference between the two expressions in \eqref{aux_equation_12}, the integral in Equation~\eqref{inequality_tau_2} gets the form
\begin{align}
\tau_2=\re\int_{z_0}^{z_2}(\xi_1(s)-\xi_3(s))ds &  = \re H(w_2)-\re H(\tilde w_2) \nonumber \\
					 & =\re H(w_2)-\re H\left(-\frac{r}{w_2^2}\right). \label{expression_tau_2_H}
\end{align}

Note that the right hand side of \eqref{expression_tau_2_H} is given only in terms of $a_0,r$. We then use \eqref{expression_tau_2_H} for numerical computation of the integral as 
follows.

For given $r,a_0$, we first solve
$$
h'(w)=0,
$$
pick $w_2$ as the only solution with positive imaginary part (see Lemma~\ref{lemma_location_zeros_derivative_h}), compute $\tilde w_2$ through \eqref{aux_equation_13} and 
finally get the difference $\re H(w_2)-\re H(\tilde w_2)$. By varying $r\in (0,1/2)$ and $a_0\in (0,\alpha)$, where
$$
\alpha=\alpha(r)=\min\{ 3/2r^{2/3}, (1-2r)/2 \}=
\begin{cases}
 \frac{3}{2}r^{2/3},& r\leq \frac{1}{8},\\
 \frac{1-2r}{2}, & \frac{1}{8}<r<\frac{1}{2},
\end{cases}
$$
we are sure to be covering every possible choice $(t_0,t_1)\in \mathcal F_1$ (see Proposition \ref{proposition_change_of_coordinates}).

With this idea in mind, we evaluated \eqref{inequality_tau_2} numerically with Mathematica in $300$-digit precision for the range
\begin{equation}\label{range_numerical_computation}
r=\frac{1}{2}\frac{j}{5000},\quad a_0=\frac{\alpha(r)}{5000}k,\quad j,k=1,\hdots,5000,
\end{equation}
verifying that in this case $\tau_2>0$.

For \eqref{inequality_tau_3} we proceed similarly as before to get
\begin{equation*}
\tau_3=\re\int_{z_2}^{\hat z_2} (\xi_1(s)-\xi_2(s))ds=\re H(\hat w_2)-\re H(\hat w_2^{-1})
\end{equation*}
where $\hat w_2$ is the parameter on the $w$-plane for which $\xi_1(\hat z_2)=h(\hat w_2^{-1})$, $\xi_2(\hat z_2)=h(\hat w_2)$. Recalling Corollary~\ref{corollary_zeros_f}, $\hat 
w_2$ is alternatively characterized as the only zero of the function $f$ appearing in \eqref{equation_polynomial_f_original}--\eqref{equation_polynomial_f} that belongs to 
$\{w\in\C \mid \im w>0, |w|>1\}$. Note that the coefficient $t_0-2t_1$ of $f$ in \eqref{equation_polynomial_f_original}--\eqref{equation_polynomial_f} can be written only in terms 
of $r,a_0$ with the help of the system \eqref{system_change_coordinates_a}--\eqref{system_change_coordinates_b}. 

So the numerical procedure here is to find all the zeros of $f$, select $\hat w_2$ and then compute the left hand side of \eqref{inequality_tau_3} through \eqref{aux_equation_10}. 
\eqref{inequality_tau_3} was again evaluated in the range \eqref{range_numerical_computation} and $300$-digit precision, and we verified that in this case $\tau_3<0$.

The outcome of the numerical evaluation of $\tau_2$ and $\tau_3$ for several choices of $r$ can be seen in 
Figures~\ref{figure_tau_2_three_cut_1}--\ref{figure_tau_2_three_cut_3} and Figures~\ref{figure_tau_3_three_cut_1}--\ref{figure_tau_3_three_cut_3}, respectively.

\begin{prop}
 For $(t_0,t_1)\in \mathcal F_1$ it is valid
 $$
\tau_4>0.
 $$
\end{prop}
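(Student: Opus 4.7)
The strategy follows the template of Propositions \ref{proposition_tau_1} and \ref{proposition_tau_2_tau_3}: convert the two integrals defining $\tau_4$ into contour integrals in the $w$-plane using the rational parametrization $(\xi,z)=(h(w^{-1}),h(w))$ from Theorem \ref{theorem_schwarz_function}, then obtain a closed form in $r$ and $a_0$, and finally verify positivity. Concretely, I first identify the $w$-preimages of the endpoints on $\mathcal R$: the point $z_2^{(3)}$ corresponds to the simple root $\tilde w_2=-r/w_2^2$ of $h(w)=z_2$, the point $z_1^{(2)}$ corresponds to the double root $w_1$ of $h(w)=z_1$ (see Remark \ref{remark_relation_simple_double_solution}), while the real intermediate point $x_*\in(-\infty,z_*)$ lies on $L_0$ and has $w$-preimages on the common boundary of $h^{-1}(\mathcal R_2)$ and $h^{-1}(\mathcal R_3)$ (cf.\ Figure \ref{w_partition}). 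Under $s=h(w)$, each branch $\xi_j(s)$ of the spectral curve is represented by $h(1/w)$ evaluated at the preimage lying in $h^{-1}(\mathcal R_j)$, so each of the two integrals in the decomposition of $\tau_4$ becomes $\int h(1/w)h'(w)\,dw$ along an arc in the appropriate region of the $w$-plane.

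Next, invoking the primitive $H$ of $h(1/w)h'(w)$ from \eqref{primitive_change_variables}, I would obtain an identity of the form
$$\tau_4 \;=\; \re\bigl(H(w_1)-H(\tilde w_2)\bigr) \;-\; \re\bigl(H(w^{(2)}_*)-H(w^{(3)}_*)\bigr),$$
where $w^{(j)}_*$ denotes the $w$-preimage of $x_*$ on the $j$-th sheet. Because $H$ contains the term $-r^2(4a_0^2+2r^2-1)\log w$, the second real part reduces to a purely logarithmic contribution governed by the ratio of $w^{(2)}_*$ and $w^{(3)}_*$; this contribution, in turn, can be written entirely in $r$ and $a_0$ by exploiting that $x_*\mapsto (w^{(2)}_*,w^{(3)}_*)$ traces out the $w$-images of $L_0$, so that the $x_*$-dependence of the two pieces cancels (as is guaranteed by the path-independence noted in the definition of $\tau_4$). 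Simplifying using $\tilde w_2=-r/w_2^2$ and the cubic relation $w^3-2a_0 w-2r=0$ satisfied by $w_1,w_2$, this yields an explicit formula for $\tau_4$ as an algebraic-plus-log function of $(r,a_0)$ alone.

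Finally, the positivity $\tau_4>0$ for $(t_0,t_1)\in\mathcal F_1$ would be certified numerically, exactly as in Proposition \ref{proposition_tau_2_tau_3}. Using the change of coordinates of Proposition \ref{proposition_change_of_coordinates}, one evaluates the closed form over a fine grid
$$r=\tfrac{1}{2}\cdot\tfrac{j}{5000},\qquad a_0=\tfrac{\alpha(r)}{5000}\,k,\qquad j,k=1,\dots,5000,$$
in high-precision arithmetic, thereby covering the entire region $\mathcal F_1$ up to continuity. Continuity then closes the argument after one verifies the boundary behavior: along $\Gamma_c$ (where $2a_0+2r=1$) and along $\gamma_c$ (where $a_0=\tfrac{3}{2}r^{2/3}$), explicit simplifications are available, and the $t_1=0$ limit reduces to a situation already handled in Section \ref{subsection_critical_graph_precritical}, where $\tau_4>0$ is known because the critical graph there has no short trajectories.

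The main obstacle will be, as in the earlier propositions, the absence of an analytic proof of positivity of the resulting expression: the explicit formula contains both algebraic and transcendental (logarithmic) parts whose signs do not combine in an obvious way. A secondary technical point is the careful bookkeeping of the branch of $\log w$ picked up along the $w$-path as it crosses between $h^{-1}(\mathcal R_2)$ and $h^{-1}(\mathcal R_3)$; getting this monodromy correct is essential for the resulting closed-form expression to be $x_*$-independent, and hence computable and stable under the numerical verification.
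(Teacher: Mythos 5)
Your proposal goes down the route of a new closed-form computation in the $w$-plane followed by an independent numerical verification, in the style of Proposition~\ref{proposition_tau_2_tau_3}. This is workable in principle (given the stated $x_*$-independence of $\tau_4$), but it misses a much shorter argument, which is what the paper actually does: one shows $\tau_4=\tau_2$, so that positivity of $\tau_4$ follows for free from the already-established inequality $\tau_2>0$. Concretely, since $\res_{\infty}\xi_1=-t_0$ is real, the real part of $\int_{z_2}^{z_1}\xi_1$ is path-independent (any closed loop around $\Sigma_*$ contributes a purely imaginary $\pm 2\pi i t_0$). This lets one rewrite
\[
\re\int_{z_2}^{x_*}\xi_1\,ds+\re\int_{x_*}^{z_1}\xi_1\,ds=\re\int_{z_2}^{z_1}\xi_1\,ds.
\]
Separately, deforming the path of $\int_{z_2}^{x_*}\xi_2+\int_{x_*}^{z_1}\xi_3$ across $\Sigma_*$, using $\xi_{1\pm}=\xi_{2\mp}$ on $\Sigma_{*,1}\cup\Sigma_{*,2}$ and $\xi_{1\pm}=\xi_{3\mp}$ on $\Sigma_{*,0}$, gives $\int_{z_2}^{x_*}\xi_2+\int_{x_*}^{z_1}\xi_3=\int_{z_2}^{z_0}\xi_1+\int_{z_0}^{z_1}\xi_3$. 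Subtracting and cancelling the $\xi_1$ piece over $[z_2,z_0]$ yields $\tau_4=\re\int_{z_0}^{z_1}(\xi_1-\xi_3)\,ds$, which by conjugation symmetry equals $\re\int_{z_0}^{z_2}(\xi_1-\xi_3)\,ds=\tau_2>0$.

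Your route buys nothing over this: it still ends in a numerical certificate, and indeed a redundant one, since $\tau_2>0$ has already been checked numerically. It also takes on extra bookkeeping risk: the monodromy of the $\log w$ term in $H$ across the boundary between $h^{-1}(\mathcal R_2)$ and $h^{-1}(\mathcal R_3)$ is exactly where a sign error could invalidate the $x_*$-independence you rely on. One further caution: your appeal to the $t_1=0$ critical graph only gives $\tau_4\neq0$ there (the quantity $|\tau_4|$ is the width of a strip domain, which is positive by definition); it does not by itself pin down the sign of $\tau_4$, nor does it propagate that sign through $\mathcal F_1$ without the independent nonvanishing you would still need to establish.
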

\begin{proof}
From \eqref{asymptotics_xi} we see that the residue at infinity of $\xi_1$ is purely real imaginary, thus 
 \begin{equation}\label{aux_equation_14}
 \re \int_{z_2}^{z_*}\xi_1(s)ds+\re\int_{z_*}^{z_1}\xi_1(s)ds=\re \int_{z_2}^{z_1}\xi_1(s)ds,
 \end{equation}
 where, as in \eqref{width_tau_4}, $x_*<z_*$ and on both sides of \eqref{aux_equation_14} the paths of integration are taken in $\C\setminus((-\infty,z_*]\cup \Sigma_*)$.
 
 For the remaining integral in \eqref{width_tau_4}, we deform the path of integration across $\Sigma_*$ to get 
 \begin{equation}\label{aux_equation_15}
 \int_{z_2}^{z_*}\xi_2(s)ds+\int_{z_*}^{z_1}\xi_3(s)ds=\int_{z_2}^{z_0}\xi_1(s)ds+\int_{z_0}^{z_1}\xi_3(s)ds.
 \end{equation}
 
 Combining \eqref{aux_equation_14} and \eqref{aux_equation_15}, we obtain
\begin{align*}
\tau_4=\re\int_{z_2}^{z_*}(\xi_1(s)-\xi_2(s))ds+\re\int_{z_*}^{z_1}(\xi_1(s)-\xi_3(s))ds
& = \re \int_{z_0}^{z_1}(\xi_1(s)-\xi_3(s))ds \\
& = \re \int_{z_0}^{z_2}(\xi_1(s)-\xi_3(s))ds \\
& = \tau_2
\end{align*}
where for the third equality we used the symmetry under conjugation. From \eqref{inequality_tau_2} we get the desired result.
\end{proof}

\subsection{Width parameters in the one-cut case}\label{appendix_widths_supercritical}

We now proceed to the analysis of the $\tau_j$'s in \eqref{width_tau_1_post}--\eqref{width_tau_6_post}.

\begin{prop}
 For $(t_0,t_1)\in\mathcal F_2$, the quantities $\tau_1$, $\tau_2$, $\tau_4$ and $\tau_5$, given respectively by \eqref{width_tau_1_post}, \eqref{width_tau_2_post}, 
\eqref{width_tau_4_post} and 
\eqref{width_tau_5_post}, are never zero.
\end{prop}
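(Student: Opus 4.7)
The plan is to reduce each of the four quantities to an integral of a real-valued function of constant nonzero sign, using the sheet structure of Section~\ref{section_sheet_structure_2}, the relations in \eqref{equalities_inequalities_real_line_xi_functions_supercritical}, and the sign analysis of the discriminant $\mathcal{D}$ carried out in the proof of Theorem~\ref{theorem_discriminant_spectral_curve}.

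First I would dispatch $\tau_5$. In the one-cut case, the path of integration in \eqref{width_tau_5_post} is the real interval $(z_0,\hat z_0)$, and along this interval both $\xi_1$ and $\xi_3$ are real with $\xi_1(x)<\xi_3(x)$ by the second line of \eqref{equalities_inequalities_real_line_xi_functions_supercritical}. Hence the integrand is strictly negative and $\tau_5<0$; no further work is required.

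Next I would handle $\tau_1,\tau_2,\tau_4$ simultaneously. All three are integrals of differences $\xi_j(s)-\xi_k(s)$ along the real interval $(z_2,z_1)$, which is contained in the complement of both branch cuts $(-\infty,z_2]$ and $[z_1,z_0]$ of the sheet structure from Section~\ref{section_sheet_structure_2}. Consequently each $\xi_j$ admits a real-analytic continuation to an open neighborhood of $(z_2,z_1)$, and the three values $\xi_1(x),\xi_2(x),\xi_3(x)$ are all real for $x\in(z_2,z_1)$ (this last statement follows by continuity from the adjacent intervals: as one crosses $z_2$, the complex-conjugate pair $\{\xi_{2+},\xi_{3+}\}$ of \eqref{equalities_inequalities_real_line_xi_functions_supercritical} becomes a pair of real branches, and analogously for $\{\xi_{1},\xi_{3}\}$ crossing $z_1$). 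Moreover, the inequalities \eqref{inequalities_discriminant} (which apply unchanged in the one-cut case) give $\mathcal{D}(x)>0$ on $(z_2,z_1)$, so no two of the $\xi_j$'s can coincide on this open interval. Since the branches are continuous and pairwise distinct there, each difference $\xi_j-\xi_k$ has constant nonzero sign on $(z_2,z_1)$, and the integrals defining $\tau_1,\tau_2,\tau_4$ are real and nonzero.

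The slight subtlety, and the only place where care is needed, is to confirm that the integrands $\xi_1-\xi_2$, $\xi_1-\xi_3$, $\xi_2-\xi_3$ actually match the boundary values coming from the prescribed paths in \eqref{width_tau_1_post}--\eqref{width_tau_4_post}: these paths are interpreted on the respective sheets $\mathcal{R}_3,\mathcal{R}_2,\mathcal{R}_1$, and one must check that collapsing them onto the real interval $(z_2,z_1)$ produces the differences written above with the correct signs. This is a bookkeeping exercise that follows directly from the cut structure in Figure~\ref{figure_sheet_structure}; once it is verified, the proposition is complete.
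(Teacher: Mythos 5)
Your proposal is correct and follows essentially the same strategy as the paper's proof: deform each integral to the real interval $(z_2,z_1)$ (for $\tau_1,\tau_2,\tau_4$) or $(z_0,\hat z_0)$ (for $\tau_5$), then observe that the relevant difference $\xi_j-\xi_k$ is real, continuous, and nonvanishing there, so it has constant sign. You supply a bit more detail than the paper does — in particular, the paper cites only \eqref{equalities_inequalities_real_line_xi_functions_supercritical}, which does not cover $(z_2,z_1)$, whereas you correctly invoke the discriminant sign inequalities \eqref{inequalities_discriminant} to justify that all three $\xi_j$ are real and pairwise distinct on that interval — but the underlying argument is the same.
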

\begin{proof}
 Each of the integrals can be deformed to either one of the intervals $[z_2,z_1]$ or $[z_0,\hat z_0]$, where the respective integrand $\xi_j-\xi_k$ is real, continuous and never 
zero (see \eqref{equalities_inequalities_real_line_xi_functions_supercritical}), and hence does not change sign. 
\end{proof}

\begin{prop}
 For $(t_0,t_1)\in\mathcal F_2$, the quantity $\tau_6$ given in \eqref{width_tau_6_post} is strictly negative.
\end{prop}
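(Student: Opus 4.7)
The approach mirrors Propositions~\ref{proposition_tau_1} and~\ref{proposition_tau_2_tau_3}, adapted to the one-cut geometry: reduce $\tau_6$ to a real integral via the Vieta relation for the spectral curve, and then pin down the sign by a combination of Corollary~\ref{corollary_zeros_xi_G} and, if necessary, numerical verification on the $(r,a_0)$-plane.

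Using the sheet structure from Section~\ref{section_sheet_structure_2}, the contour in the second form of~\eqref{width_tau_6_post} can be deformed to either boundary side of the cut $[z_1,z_0]$. On $(z_1,z_0)$ the boundary values satisfy $\xi_{1\pm}=\overline{\xi_{3\pm}}$ while $\xi_2$ is real, so $\re\xi_{3+}=\re\xi_{1+}$, and the Vieta sum $\xi_1+\xi_2+\xi_3=z^2+t_1$ (read off the $\xi^2$-coefficient in~\eqref{spectral_curve}) yields $\re\xi_{3+}(x)=(x^2+t_1-\xi_2(x))/2$. Consequently
$$\tau_6=\int_{z_1}^{z_0}\frac{x^2+t_1-3\xi_2(x)}{2}\,dx,$$
so it suffices to determine the sign of this real integral.

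The integrand equals $-\tfrac{3}{2}\bigl(\xi_2(x)-(x^2+t_1)/3\bigr)$, and by Corollary~\ref{corollary_zeros_xi_G} the bracket does not vanish on $[z_1,z_0]$ for $(t_0,t_1)$ in a neighborhood of $\gamma_c$ on the $\mathcal F_2$ side; hence it has constant sign there. That sign is fixed by evaluation at $z_0$: since $\xi_1(z_0)=\xi_3(z_0)$ and the ordering $\xi_2<\xi_1<\xi_3$ on $(z_0,\hat z_0)$ from~\eqref{equalities_inequalities_real_line_xi_functions_supercritical} extends by continuity to $\xi_2(z_0)\le\xi_1(z_0)$, Vieta forces a strict inequality between $\xi_2(z_0)$ and $(z_0^2+t_1)/3$, pinning down $\tau_6$ in this regime. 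To propagate the non-vanishing throughout $\mathcal F_2$, the plan is to extend Corollary~\ref{corollary_zeros_xi_G} globally by adapting the analysis of the function $G$ from Lemmas~\ref{lemma_G_1}--\ref{lemma_G_4}; failing a clean global extension, the fallback is the numerical scheme of Proposition~\ref{proposition_tau_2_tau_3}: change variables $s=h(w)$, apply the identities $\tilde w_j=-r/w_j^2$ and $w_j^3=2a_0 w_j+2r$, and use the primitive~\eqref{primitive_change_variables} together with the factorization~\eqref{aux_equation_21} to express
$$\tau_6=\bigl[F(w_0)-F(\tilde w_0)\bigr]-\bigl[F(w_1)-F(\tilde w_1)\bigr]+3r^2(1-4a_0^2-2r^2)\log\frac{w_0}{|w_1|}$$
as a closed-form function of $(r,a_0)$, then evaluate it to high precision on a dense grid covering the image of $\mathcal F_2$ under $(t_0,t_1)\mapsto(r,a_0)$ (cf.\ Theorem~\ref{theorem_change_coordinates_phase_diagram} and Figure~\ref{figure_phase_diagram_a_r_plane}).

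The main obstacle is the globalization of Corollary~\ref{corollary_zeros_xi_G} from a neighborhood of $\gamma_c$ to all of $\mathcal F_2$: Lemmas~\ref{lemma_G_1}--\ref{lemma_G_4} are perturbative at the critical curve, and extending them throughout $\mathcal F_2$ would require a separate treatment of $G$ without the coalescing-zero structure at $\gamma_c$ to anchor the computation. A secondary obstacle is that the closed-form expression above combines a positive logarithmic term and a positive contribution $F(w_0)-F(\tilde w_0)$ from Proposition~\ref{proposition_tau_1} against the indeterminate-sign term $-[F(w_1)-F(\tilde w_1)]$ (whose prefactor $(r+w_1^3)/(3w_1^6)$ in~\eqref{aux_equation_21} need not be positive for $w_1<0$), so the balance does not admit an elementary inequality. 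For this reason the high-precision numerical verification route, identical in spirit to the one used for $\tau_2$ and $\tau_3$, is likely the most efficient path to a rigorous sign determination.
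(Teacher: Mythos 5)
Your Vieta reduction $\tau_6=\int_{z_1}^{z_0}\frac{x^2+t_1-3\xi_2(x)}{2}\,dx$ is correct and exhibits $\tau_6$ as (up to sign) $h_2(z_1,z_0)$; together with Corollary~\ref{corollary_zeros_xi_G} this indeed gives $\tau_6\neq 0$, but only for $(t_0,t_1)$ in a one-sided neighborhood of $\gamma_c$, since Lemmas~\ref{lemma_G_1}--\ref{lemma_G_4} and hence Corollary~\ref{corollary_zeros_xi_G} are perturbative at $\gamma_c$ and the paper never globalizes them. You correctly flag this, but it means your first branch proves strictly less than the Proposition. (Also note: with your formula the integrand is positive at $z_0$, since $\xi_1(z_0)=\xi_3(z_0)>\xi_2(z_0)$ and Vieta force $z_0^2+t_1>3\xi_2(z_0)$, so your route gives $\tau_6>0$ rather than the stated $\tau_6<0$; this is an orientation issue in passing from the integral on $\mathcal R_1$ to the projected one, and should be resolved rather than left as ``pinning down the sign.'' For the deformation argument only $\tau_6\neq 0$ is needed, but the Proposition as stated asserts a sign.)

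The genuine gap is in your assessment of the second branch. You assert that the closed-form balance in the $(r,a_0)$-plane ``does not admit an elementary inequality'' and retreat to numerics, but the paper's proof of this very Proposition \emph{is} that elementary inequality. The ingredients you did not find are two crude estimates in the Appendix valid on all of $\mathcal F_2$: first, $w_1<-r^{1/3}$ (the computation behind \eqref{lower_bound_w_1_r} shows $h'(-r^{1/3})<0$ with $h'>0$ on $(-\infty,w_1)$, so $w_1<-r^{1/3}$; the displayed inequality has its direction reversed, but the conclusion the proof actually yields is $w_1<-r^{1/3}$), whence $r+w_1^3<0$ and, since $w_1^3<0$, the prefactor $\frac{r+w_1^3}{3w_1^3}$ is \emph{positive} (the paper's $q$ absorbs a factor of $w^3$, so it works with $3w_1^3$, not your $3w_1^6$, in the denominator); and second, $\frac{1}{w_0}+\frac{1}{w_1}>-1$ (equation~\eqref{lower_bound_sum_inverse_roots}, obtained by monotonicity in $a_0$ and evaluation along $\Gamma_c$), which is exactly what makes $q(w_0)-q(w_1)>0$ via \eqref{aux_equation_25}--\eqref{aux_equation_23}. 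Feeding these into the split
\begin{equation*}
\frac{r+w_0^3}{3w_0^3}q(w_0)-\frac{r+w_1^3}{3w_1^3}q(w_1)=\left(\frac{r+w_0^3}{3w_0^3}-\frac{r+w_1^3}{3w_1^3}\right)q(w_0)+\frac{r+w_1^3}{3w_1^3}\bigl(q(w_0)-q(w_1)\bigr)
\end{equation*}
produces two manifestly positive summands, while the log term has a definite sign from $|w_1|<w_0$ and $1-4a_0^2-2r^2>0$. So the paper's argument is fully analytic and global, and no numerics are required; the concrete step missing from your plan is this decomposition together with the sign of $r+w_1^3$.
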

\begin{proof}
Recall that $w_0,w_1$ and $w_2$ are the zeros of $h'$ (see Lemma~\ref{lemma_location_zeros_derivative_h}) and $\tilde w_j$ is the simple solution to $h(w)-h(w_j)=0$ (see Remark 
\ref{remark_relation_simple_double_solution}). Proceeding in a similar manner as for Proposition~\ref{proposition_tau_1} (see in particular 
\eqref{last_conclusion}--\eqref{aux_equation_21}), we get
\begin{align}
\tau_6 & = \re \int_{w_0}^{w_1}h\left(\frac{1}{w}\right)h'(w)dw-\re\int_{\tilde w_0}^{\tilde w_1}h\left(\frac{1}{w}\right)h'(w)dw \nonumber\\
    & = \frac{r+w_1^3}{3w_1^3}q(w_1)-\frac{r+w_0^3}{3w_0^3}q(w_0) + 3r^2(1-4a_0r^2-2r^2)\log\left|\frac{w_1}{w_0}\right|,\label{width_tau_6_post_eq_1}
\end{align}
where here $q$ is given by
$$
q(w)=\frac{12r^3+3r^5}{w^3}+\frac{6a_0r^2(5-r^2)}{w^2}+\frac{a_0r((12-8a_0)r^2+10a_0)}{w}+30a_0^2r^2+4a_0^3.
$$

From \eqref{lower_bound_w_1_2} and \eqref{aux_equation_22},
\begin{equation}\label{aux_equation_29}
3r^2(1-4a_0r^2-2r^2)\log\left|\frac{w_1}{w_0}\right|<0.
\end{equation}

To deal with the first two terms on the right hand side of \eqref{width_tau_6_post_eq_1}, rewrite
\begin{multline}\label{aux_equation_28}
\frac{r+w_0^3}{3w_0^3}q(w_0)-\frac{r+w_1^3}{3w_1^3}q(w_1) = \\ \left(\frac{r+w_0^3}{3w_0^3}-\frac{r+w_1^3}{3w_1^3}\right)q(w_0)+\frac{r+w_1^3}{3w_1^3}(q(w_0)-q(w_1))
\end{multline}

From the rough estimate $0<a_0,r<1$ (see Theorem \ref{theorem_monotonicity_r}) it follows that the coefficients of $q$ are positive, thus
\begin{equation}\label{aux_equation_26}
q(w_0)>0,
\end{equation}
because $w_0>0$, see Lemma~\ref{lemma_location_zeros_derivative_h}. Furthermore, from \eqref{lower_bound_w_1_r},
\begin{equation}\label{aux_equation_30}
 \frac{r+w_1^3}{3w_1^3}>0.
\end{equation}
Clearly,
\begin{equation}\label{aux_equation_27}
\frac{r+w_0^3}{3w_0^3}-\frac{r+w_1^3}{3w_1^3}=\frac{r(w_0^3-w_1^3)}{w_0^3(-w_1)^3}>0,
\end{equation}
where for the last conclusion we used $w_1<0<w_0$, see Lemma~\ref{lemma_location_zeros_derivative_h}.
Summarizing, a combination of \eqref{aux_equation_29}--\eqref{aux_equation_27} shows that the right-hand side of \eqref{width_tau_6_post_eq_1} is negative if we can 
prove that
\begin{equation}\label{inequality_tau_6_q_w_1_w_0}
q(w_0)-q(w_1)>0.
\end{equation}

To see that \eqref{inequality_tau_6_q_w_1_w_0} holds true, write
\begin{multline}\label{aux_equation_25}
q(w_0)-q(w_1)=(12r^3+3r^5)\left(\frac{1}{w_0^3}-\frac{1}{w_1^3}\right) +\left(\frac{1}{w_0}-\frac{1}{w_1}\right)
\\ \times \left( 6a_0r^2(5-r^2)\left(\frac{1}{w_0}+\frac{1}{w_1}\right) + a_0r((12-8a_0)r^2+10a_0)\right)
\end{multline}

Using again $w_1<0<w_0$,
\begin{equation}\label{aux_equation_24}
\frac{1}{w_0^3}-\frac{1}{w_1^3},\frac{1}{w_0}-\frac{1}{w_1}>0.
\end{equation}
In addition, using \eqref{lower_bound_sum_inverse_roots}
\begin{multline}\label{aux_equation_23}
6a_0r^2(5-r^2)\left(\frac{1}{w_0}+\frac{1}{w_1}\right) + a_0r((12-8a_0)r^2+10a_0) 
\\  
\begin{aligned}
 & \geq  -6a_0r^2(5-r^2)+ a_0r((12-8a_0)r^2+10a_0) \\
 & = a_0r[a_0 \left(10-8 r^2\right)+6 r \left(r^2+2 r-5\right)]
\end{aligned}
\end{multline}

The term between brackets on the right-hand side above is increasing with $a_0$, so it attains its minimum along the critical curve $\gamma_c$. Using 
\eqref{critical_curve_limiting_zero_distribution_a_r_plane}, we get 
$$
a_0 \left(10-8 r^2\right)+6 r \left(r^2+2 r-5\right)=3 s^2 \left(2 s^7-4 s^6+4 s^4-10 s+5\right)>0,
$$
thus the left-hand side of \eqref{aux_equation_23} is positive as well. Combining this with \eqref{aux_equation_25}--\eqref{aux_equation_24}, we conclude 
\eqref{inequality_tau_6_q_w_1_w_0}, and the proof is complete.
\end{proof}

\begin{prop}\label{proposition_width_postcritical_3}
 The width $\tau_3$ in \eqref{width_tau_3_post} is strictly negative.
\end{prop}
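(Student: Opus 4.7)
The plan is to reduce $\tau_3$ to an explicit function of the two parameters $(r,a_0)$ via the rational parametrization $h$, and then verify the sign by high-precision numerical sampling over the full parameter region $\mathcal F_2$, in exactly the same spirit as was done for $\tau_2$ and $\tau_3$ in the three-cut case (Proposition~\ref{proposition_tau_2_tau_3}). First I would change variables $s = h(w)$ in
$$\tau_3 = \re \int_{z_2}^{\hat z_2}(\xi_1(s) - \xi_2(s))\,ds,$$
writing each piece as a difference of values of the primitive $H$ of $h(w^{-1})h'(w)$ introduced in \eqref{primitive_change_variables}, exactly as in \eqref{expression_tau_2_H}. The two lifts at the endpoints are determined by the sheet structure: at the lower endpoint $z_2 = h(w_2)$, with $w_2$ the real root of $h'$ provided by Lemma~\ref{lemma_location_zeros_derivative_h}, the one-cut branching $\xi_2(z_2)=\xi_3(z_2)$ forces the $\xi_1$-lift to be the simple root $\tilde w_2 = -r/w_2^2$ of $h(w)-z_2 = 0$ (see Remark~\ref{remark_relation_simple_double_solution}), while the $\xi_2$-lift is $w_2$ itself; at the upper endpoint $\hat z_2 = h(\hat w_2)$, with $\hat w_2$ the zero of the sextic $f$ from \eqref{equation_polynomial_f_original}--\eqref{equation_polynomial_f} satisfying $|\hat w_2|>1$ and $\im\hat w_2 > 0$ (using the convention of \eqref{equation_definition_z_hat_z} and Corollary~\ref{corollary_zeros_f}), the node coincidence $\xi_1(\hat z_2)=\xi_2(\hat z_2)$ identifies the two relevant lifts as $\hat w_2$ and its node companion.

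Second, I would use $h'(w_2)=0$ together with the explicit form of $H$ to carry out the cancellations of the polynomial pieces (as in \eqref{aux_equation_9}--\eqref{aux_equation_21}), yielding an expression of the form
$$\tau_3 = \re\bigl(H(\hat w_2) - H(\hat w_2^{\text{node}})\bigr) \;-\; \re\bigl(H(\tilde w_2) - H(w_2)\bigr),$$
which is a real-analytic function of $(r,a_0)$ alone once the coefficient $t_0 - 2t_1$ in $f$ is eliminated by the change of coordinates \eqref{system_change_coordinates_a}--\eqref{system_change_coordinates_b} and Theorem~\ref{theorem_change_coordinates_phase_diagram} is used to describe $\mathcal F_2$ in the $(r,a_0)$-plane (i.e.\ the region bounded by the images of $\Gamma_c$ and $\gamma_c$, equivalently $a_0<\tfrac{3}{2}r^{2/3}$ together with $2a_0+2r<1$).

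Third, I would verify numerically, using the same approach and a grid at least as fine as the one in \eqref{range_numerical_computation} with several hundred digits of precision, that the resulting function is strictly negative on the whole one-cut region. Continuity in $(r,a_0)$ together with the fact that by Theorems~\ref{theorem_discriminant_spectral_curve} and~\ref{theorem_singular_points_spectral_curve} none of the points $z_j,\hat z_j$ coalesce on the interior of $\mathcal F_2$ ensures that the sign cannot change between grid points. The main obstacle is precisely that, unlike the analytic proofs of non-vanishing available for $\tau_1$, $\tau_2$, $\tau_4$, $\tau_5$ (all of which reduce by contour deformation to integrals over a real interval where the integrand has constant sign) and for $\tau_6$ (whose cancellations were driven by $w_1<0<w_0$ and the inequalities \eqref{lower_bound_w_1_2}, \eqref{lower_bound_sum_inverse_roots}), the endpoint $\hat z_2$ here is genuinely non-real and $\hat w_2$ is a complex algebraic function of $(r,a_0)$ with no closed-form, so the logarithmic contribution of $H$ at $\hat w_2$ and at its node companion does not separate into obviously monotone pieces along $\mathcal F_2$; certification of the sign therefore must be delegated to the numerical step, just as was done for the analogous quantities $\tau_2$ and $\tau_3$ in the three-cut case.
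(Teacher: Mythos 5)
Your approach is essentially the same as the paper's: change variables $s=h(w)$ so that $\tau_3$ becomes a linear combination of values of the primitive $H$ from \eqref{primitive_change_variables} evaluated at the $w$-lifts of $z_2$ (namely $w_2$ and $\tilde w_2 = -r/w_2^2$) and of $\hat z_2$ (namely $\hat w_2$ and $1/\hat w_2$), then certify the sign numerically over a grid in $\mathcal F_2$ at high precision, exactly as the paper does in \eqref{tau_3_post_numerics}. Your observation that $\tau_3$ is the one width parameter in the one-cut case that resists an analytic sign argument, and the reason you give (the endpoint $\hat z_2$ and its lift are genuinely complex algebraic functions of $(r,a_0)$ with no useful cancellation or monotonicity), also matches the paper's implicit reasoning.

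One correction: your description of $\mathcal F_2$ in the $(r,a_0)$-plane has the $\gamma_c$ inequality reversed. Theorem~\ref{theorem_change_coordinates_phase_diagram} and Figure~\ref{figure_phase_diagram_a_r_plane} (and the paper's own grid description just below \eqref{tau_3_post_numerics}) give $\mathcal F_2 = \{(r,a_0):\, 0<r<1/8,\; \tfrac{3}{2}r^{2/3}<a_0<\tfrac{1-2r}{2}\}$; your condition $a_0<\tfrac{3}{2}r^{2/3}$ together with $2a_0+2r<1$ instead parametrizes the three-cut region $\mathcal F_1$. If implemented as written, the numerical sweep would sample the wrong region. The rest of the plan (the cancellations driven by $h'(w_2)=0$, the identification of lifts via Remark~\ref{remark_relation_simple_double_solution} and Corollary~\ref{corollary_zeros_f}, and the high-precision grid evaluation) is sound and mirrors the paper's proof.
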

As for the Proposition~\ref{proposition_width_postcritical_3}, we verified that $\tau_6<0$ numerically as explained next.

Proceeding as in \eqref{aux_equation_12}--\eqref{expression_tau_2_H} we get 
$$
\int_{z_2}^{\hat z_2}(\xi_1(s)-\xi_2(s))ds=H(\hat w_2)-H\left(\frac{1}{\hat w_2}\right)+H(w_2)-H\left( -\frac{r}{w_2} \right),
$$
where $w_2$ and $\hat w_2$ are given by Lemma~\ref{lemma_location_zeros_derivative_h} and Corollary~\ref{corollary_zeros_f}, respectively (see also 
Remark~\ref{remark_relation_simple_double_solution}), and the function $H$ is given in \eqref{primitive_change_variables}. Thus
\begin{equation}\label{tau_3_post_numerics}
\tau_3=\re H(\hat w_2)-\re H\left(\frac{1}{\hat w_2}\right)+H(w_2)-H\left( -\frac{r}{w_2} \right).
\end{equation}
We use this last expression to verify that $\tau_3<0$ for 
$$ 0<r<\frac{1}{8}, \quad \frac{3}{2}r^{2/3}<a_0<\frac{1-2r}{2},
$$ 
which corresponds to $(t_0,t_1)\in\mathcal F_2$ (see Proposition~\ref{proposition_change_of_coordinates}). 
We evaluated \eqref{tau_3_post_numerics} for
$$
r=\frac{1}{8}\frac{j}{5000},\quad a_0=\frac{3r^{2/3}}{2}\frac{5000-k}{5000}+\frac{1-2r}{2}\frac{k}{5000},\quad j,k=1,\hdots,5000,
$$
using Mathematica with $300$-digit precision and verified that $\tau_3<0$. The outcome for several values of $r$ and the whole corresponding range of $a_0$ can be seen in 
Figures~\ref{figure_tau_3_one_cut_1}--\ref{figure_tau_3_one_cut_2}.

\begin{figure}[t]
  \begin{overpic}[scale=1]
  {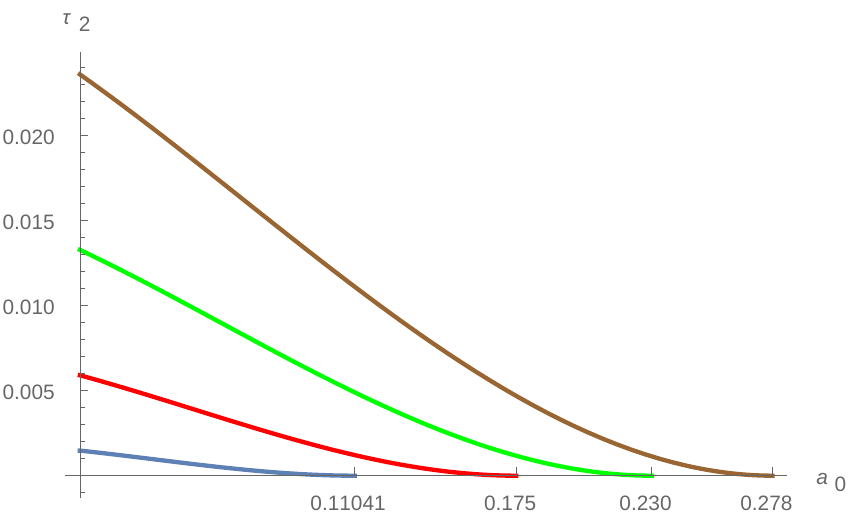}
 \end{overpic}
 \caption{Plot of $\tau_2$ as a function of $a_0$ in the three-cut case for $r=1/50,2/50,3/50,4/50$ (from bottom to top). For these choices of $r$, the extremal values of 
$a_0$ are attained in the critical line $\gamma_c$, so that $a_0$ ranges from $0$ to the correspond critical value $3r^{2/3}/2$, in the present case given by 
$0.1105...,0.1754...,0.2298...$ and $0.2784...$, respectively.}\label{figure_tau_2_three_cut_1}
\end{figure}

\begin{figure}[t]
  \begin{overpic}[scale=1]
  {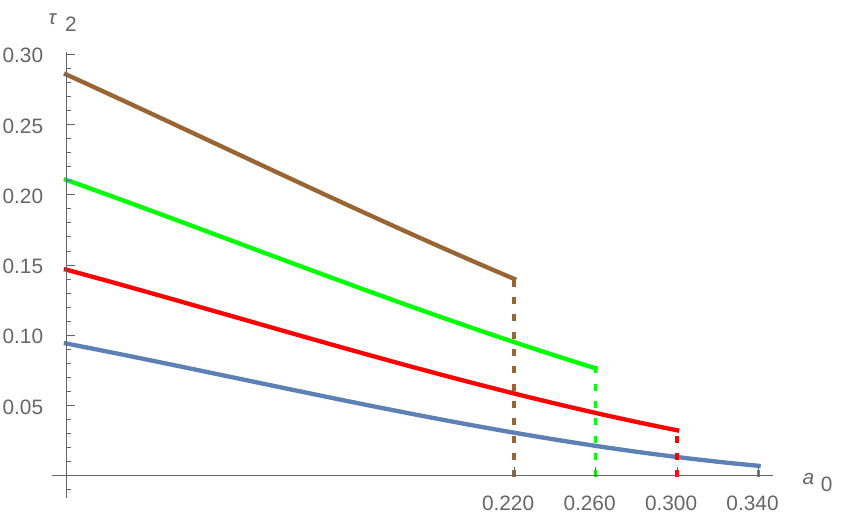}
 \end{overpic}
 \caption{Plot of $\tau_2$ as a function of $a_0$ in the three-cut case for $r=8/50,10/50,12/50,14/50$ (from bottom to top). For these choices of $r$, the extremal values of 
$a_0$ are attained in the critical line $\Gamma_c$, so that $a_0$ ranges from $0$ to the correspond critical value $(1-2r)/2$, in the present case given by 
$0.34,0.3,0.26$ and $0.22$, respectively.}\label{figure_tau_2_three_cut_2}
\end{figure}

\begin{figure}[t]
  \begin{overpic}[scale=1]
  {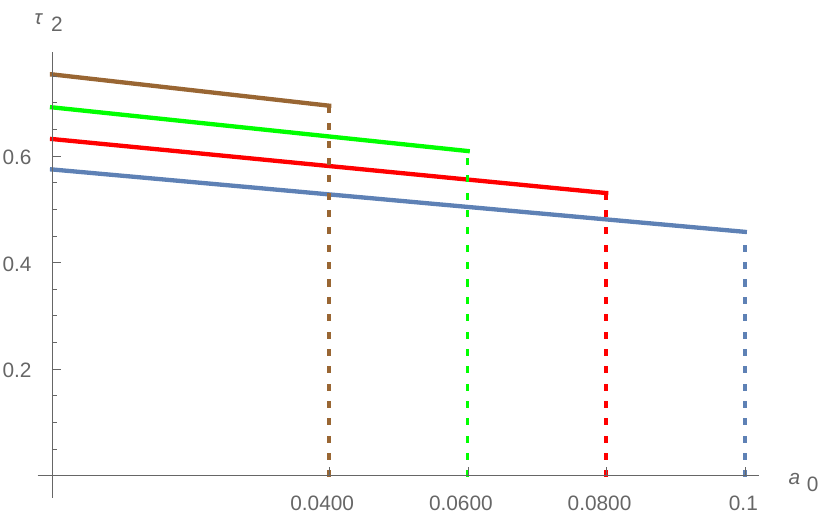}
 \end{overpic}
 \caption{Plot of $\tau_2$ as a function of $a_0$ in the three-cut case for $r=20/50,21/50,22/50,23/50$ (from bottom to top). For these choices of $r$, the extremal values of 
$a_0$ are attained in the critical line $\Gamma_c$, so that $a_0$ ranges from $0$ to the correspond critical value $(1-2r)/2$, in the present case given by 
$0.1,0.08,0.06$ and $0.04$, respectively.}\label{figure_tau_2_three_cut_3}
\end{figure}

\begin{figure}[t]
  \begin{overpic}[scale=1]
  {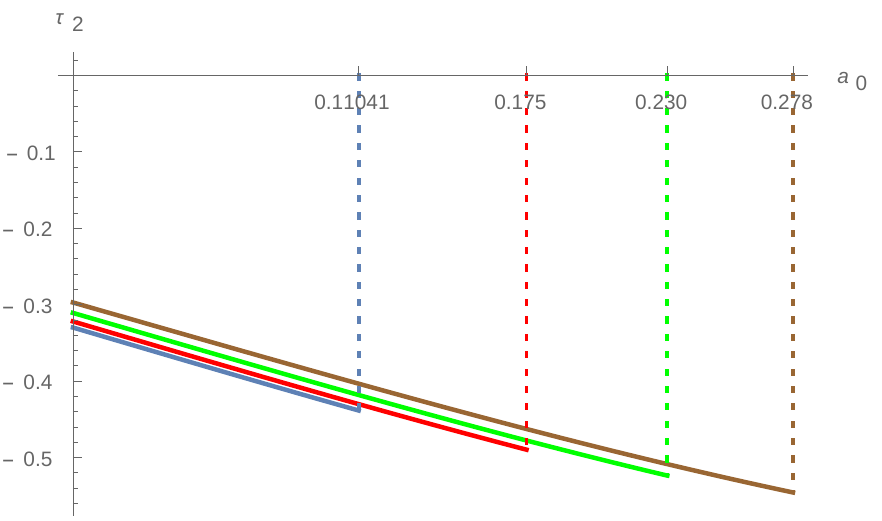}
 \end{overpic}
 \caption{Plot of $\tau_3$ as a function of $a_0$ in the three-cut case for $r=1/50,2/50,3/50,4/50$ (from bottom to top). For these choices of $r$, the extremal values of 
$a_0$ are attained in the critical line $\gamma_c$, so that $a_0$ ranges from $0$ to the correspond critical value $3r^{2/3}/2$, in the present case given by 
$0.1105...,0.1754...,0.2298...$ and $0.2784...$, respectively.}\label{figure_tau_3_three_cut_1}
\end{figure}

\begin{figure}[t]
  \begin{overpic}[scale=1]
  {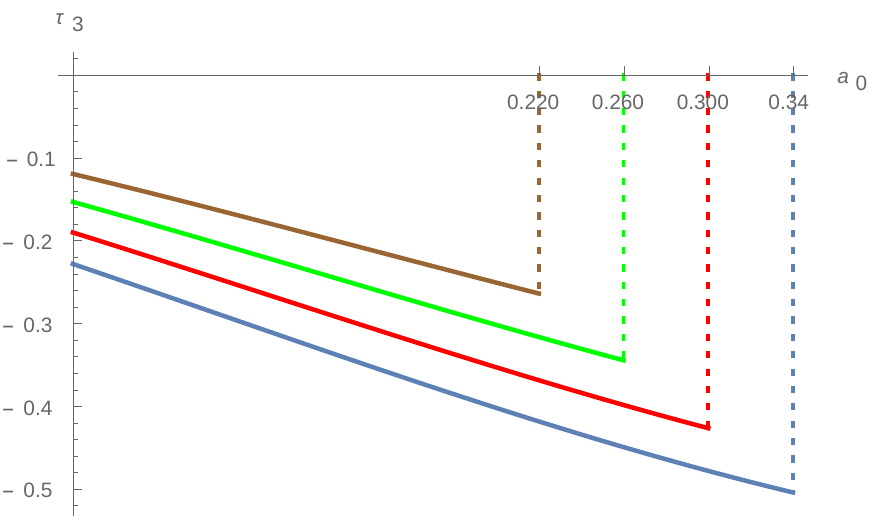}
 \end{overpic}
 \caption{Plot of $\tau_3$ as a function of $a_0$ in the three-cut case for $r=8/50,10/50,12/50,14/50$ (from bottom to top). For these choices of $r$, the extremal values of 
$a_0$ are attained in the critical line $\Gamma_c$, so that $a_0$ ranges from $0$ to the correspond critical value $(1-2r)/2$, in the present case given by 
$0.34,0.3,0.26$ and $0.22$, respectively.}\label{figure_tau_3_three_cut_2}
\end{figure}

\begin{figure}[t]
  \begin{overpic}[scale=1]
  {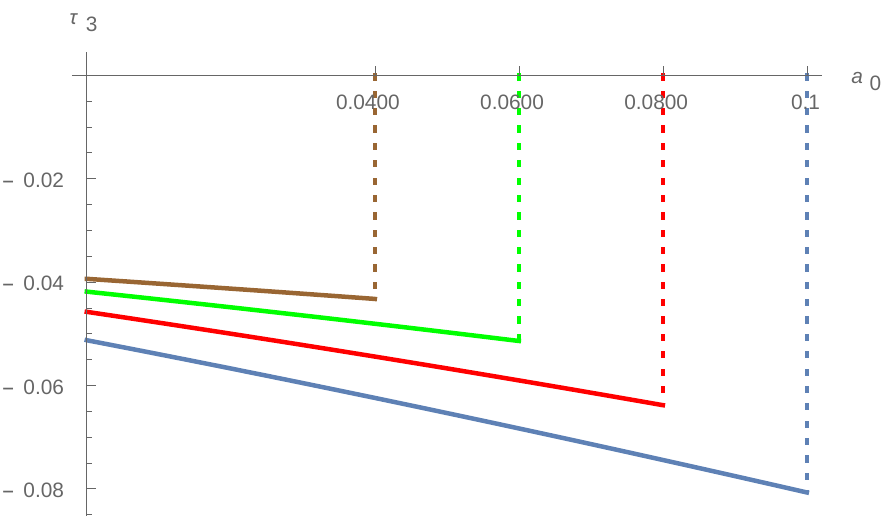}
 \end{overpic}
 \caption{Plot of $\tau_3$ as a function of $a_0$ in the three-cut case for $r=20/50,21/50,22/50,23/50$ (from bottom to top). For these choices of $r$, the extremal values of 
$a_0$ are attained in the critical line $\Gamma_c$, so that $a_0$ ranges from $0$ to the correspond critical value $(1-2r)/2$, in the present case given by 
$0.1,0.08,0.06$ and $0.04$, respectively.}\label{figure_tau_3_three_cut_3}
\end{figure}

\begin{figure}[t]
  \begin{overpic}[scale=1]
  {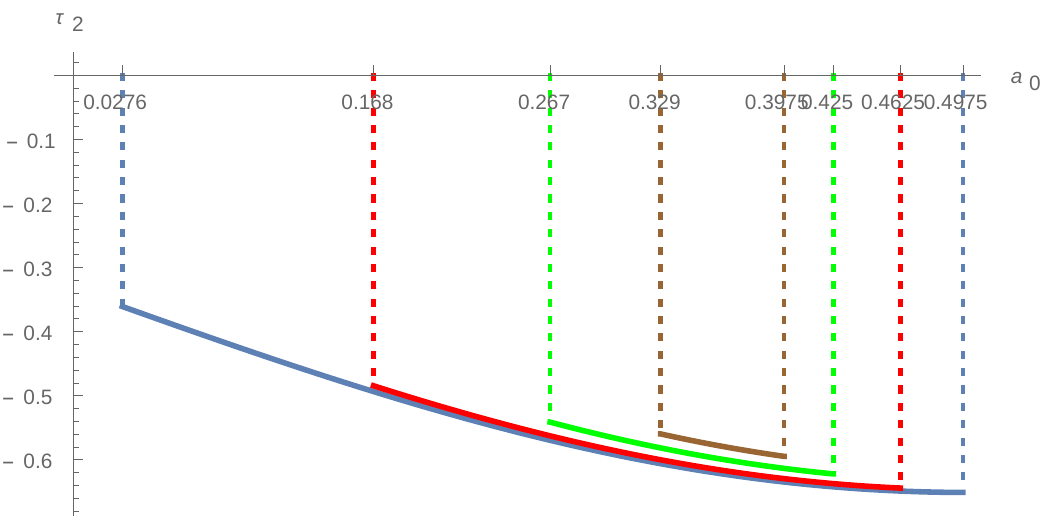}
 \end{overpic}
 \caption{Plot of $\tau_3$ as a function of $a_0$ in the one-cut case for $r=1/400,15/400,30/400,41/400$ (from bottom to top). For these choices of $r$, the minimal 
value for $a_0$ is along $\gamma_c$, thus given by $3r^{2/3}/3$, whereas the maximal value for $a_0$ is along $\Gamma_c$, hence corresponding to $(1-2r)/2$. In the present case, 
the minimal values are $0.0276..., 0.1680...,0.2667...$ and $0.32852...$, respectively, whereas the maximal values are $0.4975,0.4625,0.425$ and $0.3975$, 
respectively.}\label{figure_tau_3_one_cut_1}
\end{figure}

\begin{figure}[t]
  \begin{overpic}[scale=1]
  {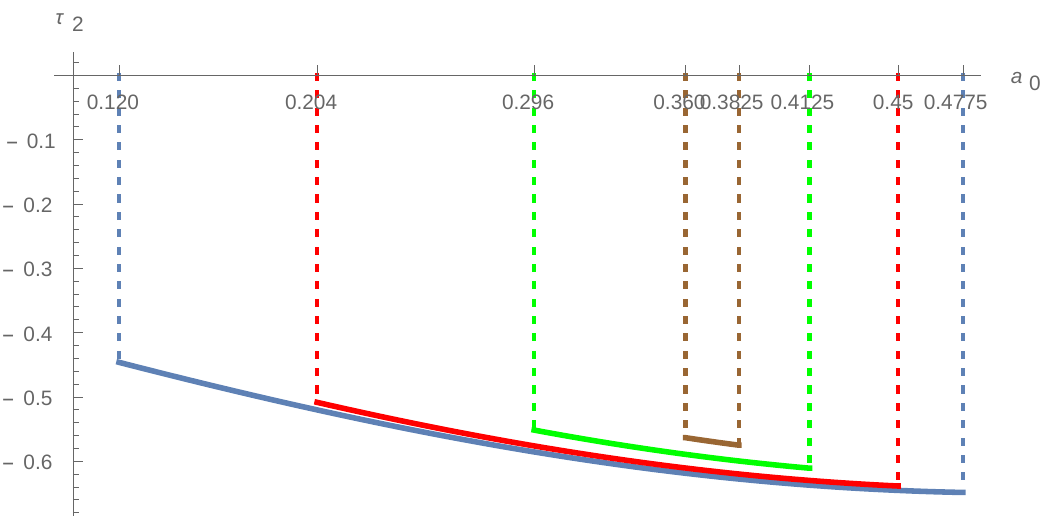}
 \end{overpic}
 \caption{Plot of $\tau_3$ as a function of $a_0$ in the one-cut case for $r=9/400,20/400,35/400,47/400$ (from bottom to top). For these choices of $r$, the minimal 
value for $a_0$ is along $\gamma_c$, thus given by $3r^{2/3}/3$, whereas the maximal value for $a_0$ is along $\Gamma_c$, hence corresponding to $(1-2r)/2$. In the present case, 
the minimal values are $0.1195..., 0.2035...,0.2956...$ and $0.3598...$, respectively, whereas the maximal values are $0.4775,0.45,0.4125$ and $0.3825$, 
respectively.}\label{figure_tau_3_one_cut_2}
\end{figure}

\clearpage


\section*{Acknowledgements}
We thank F.~Balogh, B.~Gustafsson, A.~Kuijlaars, S.-Y.~Lee, R.~Riser and E.~Saff for useful discussions. 

The first author was partially supported by the National Science Foundation project DMS-1265172. 

The second author was supported by FWO Flanders project G.0934.13 and the FP7 IRSES grant RIMMP Random and Integrable Models in Mathematical Physics. He 
gratefully acknowledges the hospitality of the Department of Mathematical Sciences, Indiana University Purdue University Indianapolis, where a substantial part of this work was 
carried out during his visits in April-2014 and October-2014.


\bibliographystyle{amsart}

\end{document}